\documentclass[11pt,letterpaper]{article}

\usepackage{etoolbox}

\usepackage[margin=1in]{geometry}

\usepackage{amsmath}
\usepackage{amssymb}
\usepackage{amsthm}
\usepackage{bbm}
\usepackage{thm-restate}
\newtheorem{theorem}{Theorem}
\newtheorem{lemma}[theorem]{Lemma}

\newtheorem{definition}[theorem]{Definition}

\newtheorem{corollary}[theorem]{Corollary}

\newtheorem{claim}[theorem]{Claim}

\usepackage{url}

\usepackage[table]{xcolor}
\usepackage{array}

\usepackage{setspace}

\usepackage{wrapfig}

\usepackage[pdfpagelabels,bookmarks=false]{hyperref}
\hypersetup{colorlinks, linkcolor=darkblue, citecolor=darkgreen, urlcolor=darkblue}

\makeatletter
\newcommand{\labeltarget}[1]{\Hy@raisedlink{\hypertarget{#1}{}}}
\makeatother

\usepackage{times}
\usepackage{upgreek}

\usepackage[inline]{enumitem}
\usepackage{moreenum}

\usepackage{mathtools}
\usepackage[super]{nth}

\usepackage{bm}

\usepackage{dsfont}

\usepackage{xspace}

\usepackage{sectsty}
\allsectionsfont{\boldmath}

\setlist[enumerate]{nosep,topsep=0em}
\setlist[enumerate,1]{label=(\roman*), leftmargin=2.2em}
\setlist[enumerate,2]{label=(\alph*)}
\setlist[itemize]{nosep,topsep=0.1em}

\usepackage[textsize=footnotesize, color=blue!30!white]{todonotes}

\usepackage{xfrac}

\usepackage{tikz}
\usetikzlibrary{calc}
\usetikzlibrary{shapes.geometric}
\usetikzlibrary{arrows}
\usetikzlibrary{decorations.pathreplacing, decorations.pathmorphing}

\usepackage{tcolorbox}
\tcbuselibrary{skins}

\usepackage{float}
\usepackage{graphicx}
\graphicspath{{graphics/}}
\makeatletter
\newcommand\appendtographicspath[1]{%
  \g@addto@macro\Ginput@path{#1}%
}
\makeatother

\usepackage[margin=10pt,font=small,labelfont=bf,skip=-5pt]{caption}
\usepackage{subcaption}

\usepackage[vlined,ruled,algo2e]{algorithm2e}
\setlength{\algomargin}{0.2em}
\SetAlCapHSkip{0.2em}
\SetCustomAlgoRuledWidth{0.95\textwidth}

\makeatletter
\patchcmd{\@algocf@start}{%
  \begin{lrbox}{\algocf@algobox}%
}{%
  \rule{0.025\textwidth}{\z@}%
  \begin{lrbox}{\algocf@algobox}%
  \begin{minipage}{0.95\textwidth}%
}{}{}
\patchcmd{\@algocf@finish}{%
  \end{lrbox}%
}{%
  \end{minipage}%
  \end{lrbox}%
}{}{}
\makeatother

\usepackage{mdframed}

\definecolor{darkblue}{rgb}{0,0,0.38}
\definecolor{darkred}{rgb}{0.6,0,0}
\definecolor{darkgreen}{rgb}{0.1,0.35,0}

\DeclareMathOperator{\poly}{poly}
\DeclareMathOperator{\conv}{conv}

\DeclareMathOperator{\supp}{supp}
\DeclareMathOperator{\topv}{top}
\DeclareMathOperator{\bottomv}{bottom}

\newcommand\OPT{\ensuremath{\mathrm{OPT}}}
\renewcommand{\epsilon}{\varepsilon}

\newcommand\apxfac{\ensuremath{1.393}\xspace}

\def\cupp{\stackrel{.}{\cup}}
\def\bigcupp{\stackrel{.}{\bigcup}}

\makeatletter
\let\@@pmod\pmod
\DeclareRobustCommand{\pmod}{\@ifstar\@pmods\@@pmod}
\def\@pmods#1{\mkern8mu({\operator@font mod}\mkern 6mu#1)}
\makeatother

\makeatletter
\let\@@mod\mod
\DeclareRobustCommand{\mod}{\@ifstar\@mods\@@mod}
\def\@mods#1{\mkern8mu{\operator@font mod}\mkern 6mu#1}
\makeatother

\def\Ascr{\mathcal{A}}
\def\Bscr{\mathcal{B}}
\def\Cscr{\mathcal{C}}

\def\Fscr{\mathcal{F}}

\def\Iscr{\mathcal{I}}
\def\Lscr{\mathcal{L}}
\def\Mscr{\mathcal{M}}

\def\Rscr{\mathcal{R}}
\def\Sscr{\mathcal{S}}

\def\Wscr{\mathcal{W}}

\makeatletter
\def\@fnsymbol#1{\ensuremath{\ifcase#1\or *\or %
\ddagger\or
    \mathsection\or \mathparagraph\or \|\or **\or \dagger\dagger
    \or \ddagger\ddagger \else\@ctrerr\fi}}
\makeatother

\title{Bridging the Gap Between Tree and Connectivity Augmentation: Unified and Stronger Approaches\thanks{
This project received funding from Swiss National Science Foundation grant 200021\_184622 and the European Research Council (ERC) under the European Union's Horizon 2020 research and innovation programme (grant agreement No 817750).

A short version of this article appeared at the 53rd ACM Symposium on Theory of Computing (STOC), 2021.
}} 
\author{
Federica Cecchetto\thanks{
Department of Mathematics, ETH Zurich, Zurich, Switzerland.
Email: \href{mailto:federica.cecchetto@ifor.math.ethz.ch}%
{federica.cecchetto@ifor.math.ethz.ch}.
}
\and
Vera Traub\thanks{
Department of Mathematics, ETH Zurich, Zurich, Switzerland.
Email: \href{mailto:vera.traub@ifor.math.ethz.ch}%
{vera.traub@ifor.math.ethz.ch}.
}
\and
Rico Zenklusen\thanks{
Department of Mathematics, ETH Zurich, Zurich, Switzerland.
Email: \href{mailto:ricoz@ethz.ch}%
{ricoz@ethz.ch}.}
}

\date{}

\begin{document}

\maketitle
\thispagestyle{empty}
\addtocounter{page}{-1}

\begin{tikzpicture}[overlay, remember picture, shift = {(current page.south east)}]
\begin{scope}[shift={(-1.5,1.8)}]
\def\hd{1.5}
\node at (-2*\hd,0) {\includegraphics[height=0.35cm]{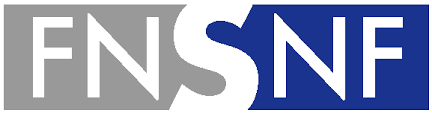}};
\node at (-\hd,0) {\includegraphics[height=0.7cm]{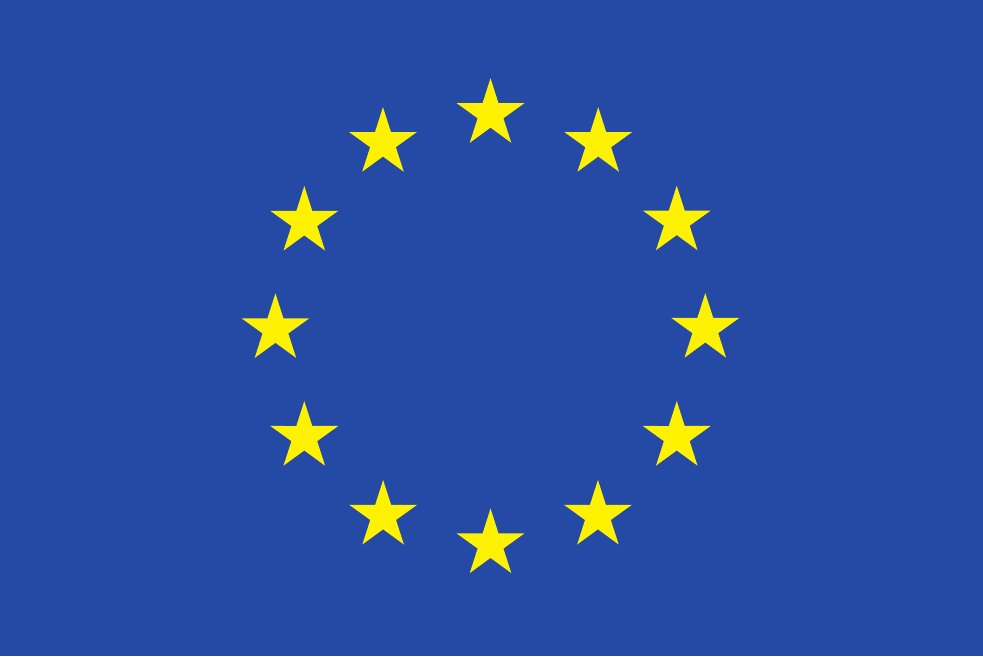}};
\node at (-0.2*\hd,0) {\includegraphics[height=0.8cm]{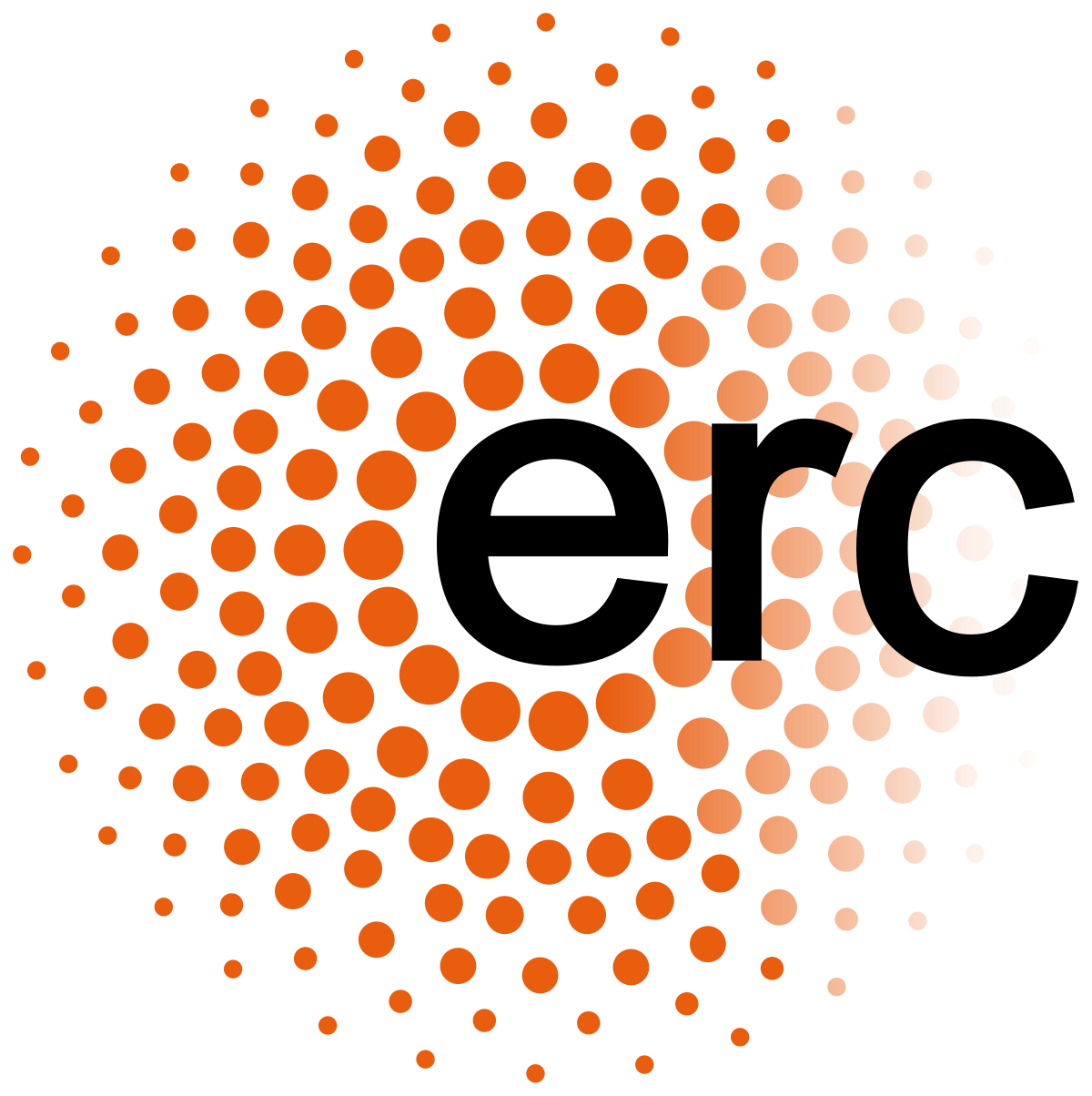}};
\end{scope}
\end{tikzpicture}

\begin{abstract}
We consider the Connectivity Augmentation Problem (CAP), a classical problem in the area of Survivable Network Design. It is about increasing the edge-connectivity of a graph by one unit in the cheapest possible way. More precisely, given a $k$-edge-connected graph $G=(V,E)$ and a set of extra edges, the task is to find a minimum cardinality subset of extra edges whose addition to $G$ makes the graph $(k+1)$-edge-connected. If $k$ is odd, the problem is known to reduce to the Tree Augmentation Problem (TAP)---i.e., $G$ is a spanning tree---for which significant progress has been achieved recently, leading to approximation factors below $1.5$ (the currently best factor is $1.458$). However, advances on TAP did not carry over to CAP so far. Indeed, only very recently, Byrka, Grandoni, and Ameli (STOC 2020) managed to obtain the first approximation factor below $2$ for CAP by presenting a $1.91$-approximation algorithm based on a method that is disjoint from recent advances for TAP.

We first bridge the gap between TAP and CAP, by presenting techniques that allow for leveraging insights and methods from TAP to approach CAP. We then introduce a new way to get approximation factors below $1.5$, based on a new analysis technique. Through these ingredients, we obtain a $\apxfac$-approximation algorithm for CAP, and therefore also for TAP. This leads to the currently best approximation result for both problems in a unified way, by significantly improving on the above-mentioned $1.91$-approximation for CAP and also the previously best approximation factor of $1.458$ for TAP by Grandoni, Kalaitzis, and Zenklusen (STOC 2018). Additionally, a feature we inherit from recent TAP advances is that our approach can deal with the weighted setting when the ratio between the largest to smallest cost on extra links is bounded, in which case we obtain approximation factors below $1.5$.
\end{abstract}

\section{Introduction}

The edge-connectivity of a graph is a key measure of its redundancy and reliability. Not surprisingly, there has been intensive work on various problems around how to increase the edge-connectivity of a graph in the most economical way. One of the arguably most elementary problems in this context is the Connectivity Augmentation Problem (CAP), which is about increasing the edge-connectivity of a graph by one unit through the addition of a smallest subset of extra edges out of a given set of possibilities.
Formally, we are given an undirected $k$-edge-connected graph $G=(V,E)$ together with a subset of potential extra edges $L\subseteq \binom{V}{2}$, which we call \emph{links} to distinguish them from the edges $E$, and the task is to find a minimum cardinality set $F\subseteq L$ such that $(V,E\cup F)$ is $(k+1)$-edge-connected.%
\footnote{We always treat $L$ and $E$ as being disjoint. In particular, if $F\subseteq L$ contains a link $\ell\in F$ between two endpoints of $G$ that are already connected by an edge, then there are two parallel edges between these endpoints in $(V,E\cup F)$.}

Clearly, for a set $F\subseteq L$ to increase $G$'s edge-connectivity from $k$ to $k+1$ when added to it, $F$ needs to cover all $k$-cuts of $G$, which are the minimum cuts of $G$ assuming that $G$ is not already $(k+1)$-edge-connected. In other words, for each min cut in $G$, there must be a link in $F$ crossing it. Classical results on the structure of minimum cuts then imply that the case of odd $k$ reduces to $k=1$ and the one for even $k$ reduces to $k=2$ (see~\cite{dinitz_1976_structure} and also~\cite{cheriyan_1992_2-coverings,khuller_1993_approximation}). Moreover, these reductions are approximation preserving. The $k=1$ case is a heavily studied special case known as the Tree Augmentation Problem (TAP). Significant progress has recently been achieved regarding the study of its approximability through the development of a rich set of techniques (see%
~\cite{%
adjiashvili_2018_beating,%
cheriyan_2018_approximating_a,%
cheriyan_2018_approximating_b,%
cheriyan_2008_integrality,%
cohen_2013_approximation,%
even_2009_approximation,%
fiorini_2018_approximating,%
frederickson_1981_approximating,%
grandoni_2018_improved,%
khuller_1993_approximation,%
kortsarz_2016_simplified,%
kortsarz_2018_lp-relaxations,%
nagamochi_2003_approximation,%
nutov_2017_tree%
}), leading to the currently best approximation factor of $1.458$~\cite{grandoni_2018_improved}.
If $k=1$, we can assume that $G$ is a (spanning) tree because all $2$-edge-connected components can be contracted without changing the problem, which explains the naming of TAP. Similarly, for $k=2$, one can contract any pair of vertices that are $3$-edge-connected, which leads to a cactus graph (or simply \emph{cactus}), i.e., a connected graph where each edge is in a unique cycle.%
\footnote{The typical reduction employed to reduce CacAC for a $k$-edge-connected graph $G$ with $k$ even to the case $k=2$ is based on the fact that min cuts can be represented by a cactus. This leads to an equivalent $k=2$ instance that is already a cactus and there is no need to contract vertex sets that are $3$-edge-connected.

Moreover, we highlight that, depending on the literature, a cactus is sometimes defined to be a connected graph where every edge is in \emph{at most} one cycle instead of being in \emph{exactly} one cycle.
} 
Due to this, the $k=2$ case is typically called the \emph{Cactus Augmentation Problem} (CacAP). See Figure~\ref{fig:CacAP_example} for an example instance of CacAP. CacAP includes TAP as a special case, because any TAP instance can be transformed into an equivalent CAP instance by adding for each tree edge one parallel edge to obtain a cactus.
Hence, connectivity augmentation is equivalent (also in terms of approximability) to CacAP.

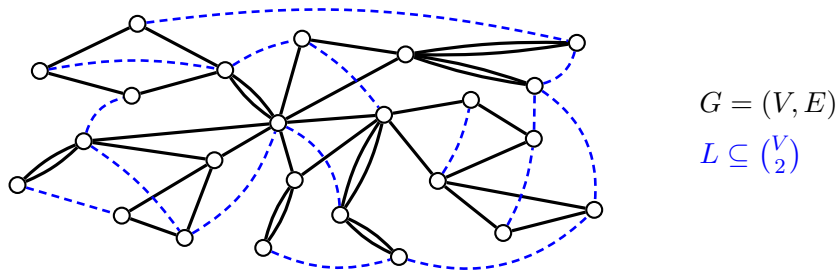
\begin{figure}[h!]
\begin{center}
\begin{tikzpicture}[scale=0.55]

\tikzset{
lks/.style={line width=1pt, blue, densely dashed},
ts/.style={every node/.append style={font=\scriptsize}}
}

\begin{scope}[every node/.style={thick,draw=black,fill=white,circle,minimum size=6, inner sep=2pt}]
\node  (1) at (17.06,-4.94) {};
\node  (2) at (17.46,-6.32) {};
\node  (3) at (19.62,-4.74) {};
\node  (4) at (16.70,-7.96) {};
\node  (5) at (18.56,-7.16) {};
\node  (6) at (19.98,-8.18) {};
\node  (7) at (20.92,-6.34) {};
\node  (8) at (23.24,-5.32) {};
\node  (9) at (21.72,-4.38) {};
\node (10) at (22.50,-7.60) {};
\node (11) at (24.70,-7.04) {};
\node (12) at (12.36,-5.38) {};
\node (13) at (15.52,-5.84) {};
\node (14) at (10.76,-6.44) {};
\node (15) at (13.28,-7.18) {};
\node (16) at (14.80,-7.72) {};
\node (17) at (20.14,-3.28) {};
\node (18) at (17.64,-2.90) {};
\node (19) at (23.26,-4.04) {};
\node (20) at (24.28,-3.00) {};
\node (21) at (15.78,-3.66) {};
\node (22) at (13.66,-2.54) {};
\node (23) at (11.30,-3.68) {};
\node (24) at (13.52,-4.28) {};
\end{scope}

\begin{scope}[very thick]
\draw[bend left=10] (21) to  (1);
\draw (18) --  (1);
\draw (13) --  (1);
\draw  (1) --  (2);
\draw  (2) --  (3);
\draw  (1) -- (12);
\draw  (1) -- (17);
\draw[bend left=10]  (1) to (21);
\draw  (3) --  (1);
\draw  (9) --  (3);

\draw  (2) to[bend left=10] (4);
\draw  (2) to[bend right=10] (4);

\draw  (3) to[bend left=10] (5);
\draw  (3) to[bend right=10] (5);

\draw  (5) to[bend left=10]  (6);
\draw  (5) to[bend right=10] (6);

\draw  (3) --  (7);
\draw  (7) --  (8);
\draw  (8) --  (9);
\draw  (7) -- (10);
\draw (10) -- (11);
\draw (11) --  (7);
\draw (14) to[bend left=10] (12);
\draw (12) to[bend left=10] (14);
\draw (16) -- (13);
\draw (12) -- (13);
\draw (13) -- (15);
\draw (15) -- (16);
\draw (17) -- (18);

\draw (17) to[bend left=5] (19);
\draw (19) to[bend left=5] (17);

\draw (17) to[bend left=5] (20);
\draw (20) to[bend left=5] (17);
\draw (24) -- (21);
\draw (21) -- (22);
\draw (22) -- (23);
\draw (23) -- (24);
\end{scope}

\begin{scope}[lks]
\draw (14) to (15);
\draw (16) to[bend right=20] (1);
\draw (4) to[bend right=20] (6);
\draw (12) to[bend left] (24);
\draw (22) to[bend left=10] (20);
\draw (23) to[bend left=10] (21);
\draw (7) to[bend right=10] (9);
\draw (10) to[bend right=10] (8);
\draw (19) to[bend right] (20);
\draw (19) to[bend left] (11);
\draw (8) to (19);
\draw (1) to[bend left] (5);
\draw (6) to[bend right] (11);
\draw (18) to[bend left=10] (3);
\draw (21) to[bend left=10] (18);
\draw (12) to[bend left=10] (16);

\end{scope}

\begin{scope}[shift={(27,-4.5)}]%
\def\ll{30mm} %
\def\vs{12mm} %

\node[right] at (0,0) {$G=(V,E)$};
\node[right,blue] at (0,-\vs) {$L\subseteq \binom{V}{2}$};
\end{scope}

\end{tikzpicture}
 \end{center}
\caption{Example of a CacAP instance. The links are drawn as dashed blue lines.}\label{fig:CacAP_example}
\end{figure}

Unfortunately, progress on CacAP (or equivalently CAP) has been very limited compared to TAP. Until very recently, the best approximation factor was $2$, achievable through a variety of techniques~\cite{goemans_1994_improved,jain_2001_factor,khuller_1993_approximation}. This factor is attainable even in the weighted setting where links have non-negative costs and the goal is to increase the edge connectivity by $1$ by adding a minimum cost set of links. However, even for (unweighted) CacAP , a better-than-2 approximation has only been discovered very recently by Byrka, Grandoni, and Ameli~\cite{byrka_2020_breaching}, who presented a $1.91$-approximation through a nice connection to Steiner Tree and by tailoring the analysis of the currently strongest approximation algorithm for Steiner Tree~\cite{byrka_2013_steiner} (see also~\cite{goemans_2012_matroids}) to the particular structure of the resulting Steiner Tree instances.\footnote{Prior to this, only for the much simpler subclass of CacAP on cycle instances, i.e., when the underlying graph is a single cycle, a better-than-2 approximation was known, namely a $(1.5+\varepsilon)$-approximation~\cite{galvez_2019_cycle}.}

A key hurdle why progress on CacAP was arguably significantly slower than progress on TAP, is that it was highly unclear how to leverage the rich set of recent TAP techniques to approach the more general CacAP, as emphasized in~\cite{byrka_2020_breaching}. Thus, the only progress below factor $2$ for CacAP follows a new approach disconnected from recent techniques developed for TAP. Nevertheless, this leaves the question open whether significantly stronger approximation guarantees for CacAP can be achieved, maybe even below $1.5$ as for TAP, by identifying new ways to leverage some of the ideas used in TAP to attack CacAP. The main contribution of this paper is to answer this question affirmatively by presenting several new techniques that allow for attacking CacAP with tools developed for TAP and with new ingredients that allow for stronger approximation factors for both CacAP and TAP through a unified approach. The strength of our approach is emphasized by the fact that we obtain even for the much more general CacAP, an approximation factor that beats the best prior approximation factor of $1.458$ for the more restrictive and heavily studied TAP.

\subsection{Our results}

The following theorem is our main result.
\begin{theorem}\label{thm:main}
There is a $\apxfac$-approximation algorithm for the Connectivity Augmentation Problem (CAP).
\end{theorem}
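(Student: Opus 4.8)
By the classical results on the structure of minimum cuts cited above, it suffices to give a \apxfac-approximation for the Cactus Augmentation Problem (CacAP): the odd-$k$ case reduces to TAP (a special case of CacAP) and the even-$k$ case reduces to CacAP, and both reductions preserve approximation guarantees. So I would fix a cactus $G=(V,E)$ with link set $L$. The minimum cuts of a cactus are precisely the pairs of edges lying on a common cycle (together with any cut edges), and a link covers such a cut iff its endpoints lie on opposite sides of it. Rooting the cactus and contracting appropriately turns this into a covering problem on a ``tree of cycles'': after replacing each link by a bounded number of up-links whose coverage behaves path-like, the task becomes to select a minimum-size set of up-links covering every cut. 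The textbook primal--dual or LP-rounding argument gives a factor-$2$ solution here, which serves as the starting point and the baseline that everything below must beat.

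\textbf{Relative greedy combined with local search.} To go below $2$, and then below $1.5$, I would follow the template developed for TAP. Fix an (unknown) optimum $F^*$ and process it implicitly via \emph{components}: minimal structured sub-solutions, each spanning a connected contracted piece of bounded size. In a relative-greedy phase, repeatedly add to the current partial solution the component minimizing the ratio of its size to the number of previously-uncovered $F^*$-cuts it newly covers, accounting for already-bought links by contraction; a set-cover-style telescoping bound then charges this phase by roughly $(1+\ln 2)\,|F^*|$. Since that is still above $1.5$, I would, as in Grandoni--Kalaitzis--Zenklusen, combine this with a \emph{local-search} phase that repeatedly performs size-decreasing swaps replacing a batch of links by a cheaper shadow-closed component of bounded size; a local optimum can be charged against $F^*$ via a token argument, and the ``small'' components of $F^*$ then become essentially free for the greedy phase.

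\textbf{The new analysis, and the main obstacle.} The genuinely new --- and hardest --- part is the analysis that fuses the two phases to land strictly below $1.5$, in fact at \apxfac. Here I would introduce a single free threshold separating ``large'' from ``small'' $F^*$-components, maintain a potential that simultaneously tracks the greedy progress and the slack created by local search, and argue that in \emph{every} regime --- whether $F^*$ has few or many large components --- the final solution has size at most $\apxfac\cdot|F^*|$; optimizing the threshold then produces the constant \apxfac. Two points are delicate. First, the charging scheme must account well for \emph{large} $F^*$-components, which earlier analyses effectively paid for too generously; this is where I expect the new idea to be needed. Second, one must verify that the whole apparatus --- components, shadows, contraction, token charging --- still goes through over a cactus / tree of cycles rather than a plain tree, since a single link may interact with several cycles at once, so the ``up-link'' reformulation of the first step has to be robust enough to feed both phases. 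I expect the bulk of the effort, and the main risk of the approach, to lie in this combined analysis rather than in any individual reduction.
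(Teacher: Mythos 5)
Your proposal takes a genuinely different route from the paper (the paper reduces to $O(1)$-wide instances via round-or-cut, runs two LP-based ``backbone'' procedures --- one using bundle constraints and an FPT subroutine, one using $\{0,\sfrac{1}{2}\}$-Chv\'atal--Gomory cuts over a laminar family extracted from a minimal dual solution of a bidirected cut LP --- and then improves the first via a probabilistic ``stack'' analysis of removable cross-links). However, as written your proposal has two genuine gaps, so it does not constitute a proof.

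First, the opening reduction is false as stated: for CacAP one \emph{cannot} replace a link by a bounded number of up-links covering the same $2$-cuts. This works for TAP because the cuts form a laminar family indexed by tree edges, but in a cactus a link $\{v,w\}$ whose endpoints sit on the same cycle covers cuts determined by \emph{pairs} of edges of that cycle, and no collection of up-links reproduces exactly that coverage (this is precisely why the paper passes to a bidirected LP, in which each $2$-cut must be \emph{entered} by a directed link, and proves integrality of that LP by uncrossing rather than by total unimodularity). Since both your greedy phase and your local-search phase are built on top of this ``up-link/path-like'' reformulation, the failure propagates: the notions of component, shadow, and contraction that make the token-charging argument work for TAP do not transfer without a replacement for this step, and you correctly flag this as a risk but do not resolve it. Second, the quantitative heart of the theorem --- why the fused greedy/local-search analysis lands at $\apxfac$ rather than at $1+\ln 2\approx 1.69$ or at some unspecified constant below $1.5$ --- is entirely absent. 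You describe a potential function and a threshold to be optimized, but give no inequality whose optimization yields $\apxfac$; declaring that ``in every regime the final solution has size at most $\apxfac\cdot|F^*|$'' is the statement to be proved, not an argument for it. (For comparison, the paper's constant emerges from an explicit optimization problem, Equation~\eqref{eq:final_optimization_problem}, balancing the guarantees of its two rounding procedures via the functions $g$ and $\mathrm{gain}$, verified numerically.) As it stands, the proposal identifies where the difficulty lies but supplies neither of the two ingredients needed to overcome it.
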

The above result simultaneously improves on several prior results in a unified way. First, we significantly improve the previously best approximation factor of $1.91$ for CacAP~\cite{byrka_2020_breaching}. Moreover, our results also improve on algorithms developed for special cases, including the previously best $1.458$-approximation algorithm for TAP~\cite{grandoni_2018_improved}, a $(\sfrac{3}{2}+\varepsilon)$-approximation algorithm for CacAP on cycles by G\'alvez, Grandoni, Ameli, and Sornat~\cite{galvez_2019_cycle}, and a very recent $\sfrac{5}{3}$-approximation algorithm of Nutov~\cite{nutov_2017_tree} for the special case of CAP where all links connect minimal deficient sets (these are also called leaf-to-leaf instances).
However, equally importantly, we believe that the techniques we introduce present new ways to attack these problems, which we hope will have further consequences in the future. We provide an overview of the techniques in Section~\ref{sec:overview}.

Analogous to some recent results for TAP~\cite{adjiashvili_2018_beating,fiorini_2018_approximating,nutov_2017_tree,grandoni_2018_improved}, being able to leverage TAP techniques for CAP allows us to extend our approach to weighted CAP. In particular, we obtain better-than-$1.5$ approximations for CAP whenever the ratio between largest to smallest cost is bounded by a constant. (An analogous result was previously known for TAP~\cite{grandoni_2018_improved_arxiv}.) To the best of our knowledge, this is the first progress on weighted CAP on a non-trivial class of instances that improves on classical $2$-approximations~\cite{khuller_1993_approximation,goemans_1994_improved,jain_2001_factor}. We expand on this extension to the weighted setting in Section~\ref{sec:weighted}.

\subsection{Further related results}
Recent work of Nutov~\cite{nutov_2020_approximation} exhibits the following black-box reduction linking approximation algorithms for Steiner Tree to CAP: Any $\alpha$-approximation algorithm for Steiner Tree allows for obtaining a $(1 + \ln(4-x) + \varepsilon)$-approximation for CAP, where $x$ is the unique solution to $1 + \ln(4-x) = \alpha + (\alpha - 1)x$. This leads to a $1.942$-approximation for CAP when using the currently best $(\ln(4)+\epsilon)$-approximation for Steiner Tree.

Moreover, a well-known problem class related to CAP are problems dealing with designing $2$-edge-connected subgraphs under different assumptions. The arguably most canonical problem of this type is known as the \emph{$2$-edge-connected spanning subgraph problem} ($2$-ECSS). In its unweighted version, one is given an undirected graph and the task is to find a $2$-edge-connected subgraph with a smallest number of edges that spans all vertices. The currently best known approximation factor for (unweighted) $2$-ECSS is $\sfrac{4}{3}$~\cite{sebo_2014_shorter,hunkenschroder_2019_approximation}. Interestingly, for weighted $2$-ECSS, where edges have non-negative costs and the goal is to find a minimum cost $2$-edge-connected spanning subgraph, no approximation factor better than $2$ is known. The factor 2 can be achieved through a variety of elegant techniques~\cite{khuller_1994_biconnectivity}, including some very general classical techniques like primal-dual methods~\cite{goemans_1995_general} and iterative rounding~\cite{jain_2001_factor} (see also~\cite{williamson_2011_design,lau_2011_iterative}). Intriguingly, even for the special case of $2$-ECSS with edge costs of only $0$ and $1$, which is known as the \emph{Forest Augmentation Problem} (FAP), no better-than-$2$ approximation is known. TAP can be interpreted as a special case of FAP, where the $0$-edges form a connected subgraph. Another interesting ``extreme'' case of FAP, somewhat antipodal to TAP, is when the $0$-edges form a matching. This problem is known as the~\emph{Matching Augmentation Problem}, and, very recently, better-than-$2$ approximations have been obtained for it~\cite{cheriyan_2020_matching,cheriyan_2020_improved} with the currently best approximation factor being \sfrac{5}{3}~\cite{cheriyan_2020_improved}.

Finally, we remark that for the weighted version of the Tree Augmentation Problem, a better-than-$2$ approximation has been discovered~\cite{traub_2021_better} very recently. However, for weighted CAP, no approximation algorithm with a factor better than $2$ is known.

\subsection{Organization of paper}

We start with an overview of our techniques in Section~\ref{sec:overview}. Section~\ref{sec:reduction} presents a key bridge between techniques introduced for TAP and CacAP (or equivalently CAP) that allows for reducing general CacAP instances to so-called $O(1)$-wide instances. One component to complete this boils down to solving a particular type of rectangle hitting problem, which we discuss in Section~\ref{sec:heavy_cut_covering}. Section~\ref{sec:cg} presents a linear programming relaxation and crucial rounding procedure for $O(1)$-wide CacAP instances which already leads to a $(\sfrac{3}{2}+\varepsilon)$-approximation algorithm for CacAP. In Section~\ref{sec:stack_analysis}, we present a novel way to obtain factors significantly below $1.5$, leading to our main result, Theorem~\ref{thm:main}. Finally, Sections~\ref{sec:weighted} and~\ref{sec:fpt} contain a discussion of  how our results extend to weighted CacAP with bounded weights, and an FPT algorithm for weighted CacAP that we need in our derivations, respectively.

\section{Overview of our approach}\label{sec:overview}

As mentioned, one of our key goals is to introduce new methods that allow for leveraging recent advances for TAP in the context of CacAP. Indeed, through the new bridges between CacAP and TAP, we are able to design CacAP algorithms that, on a high level, follow closely recent TAP approaches (see~\cite{adjiashvili_2018_beating,fiorini_2018_approximating,grandoni_2018_improved}), even though transferring prior TAP techniques to CacAP faces significant hurdles, as already emphasized in prior work.
We first briefly outline this high-level approach, which consists of three main parts. We then expand on these three main parts in Sections~\ref{sec:overviewKWide},~\ref{sec:overviewRounding}, and~\ref{sec:overviewBetterApprox}, respectively.
Throughout this paper, we denote a CacAP instance as a tuple $(G=(V,E),L)$, where $G=(V,E)$ is the underlying cactus and $L\subseteq \binom{V}{2}$ the set of links that can be added.

\medskip

The first main part is a black-box reduction to a particular type of structured instance, which we call \emph{$k$-wide}, using an approach for TAP inspired by Adjiashvili~\cite{adjiashvili_2018_beating}.
The idea is to reduce to instances consisting of many parts of small complexity that share a single vertex $r$. We measure the complexity of a CacAP instance in terms of the number of degree two vertices it contains, which we call \emph{terminals}.
\begin{definition}[terminal of CacAP instance]
A vertex of a CacAP instance is a \emph{terminal} if it has degree two.
\end{definition}
The number of terminals of a CacAP instance is indeed a parameter that impacts the complexity of the problem. More precisely, as discussed later,  the following statement, which even holds for weighted CacAP, can readily be derived from results by Basavaraju, Fomin, Golovach, Misra, Ramanujan, and Saurabh~\cite{basavaraju_2014_parameterized}. (For completeness, a proof can be found in Section~\ref{sec:fpt}.)
\begin{lemma}\label{lem:fpt_terminals}
Weighted CacAP can be solved in time $3^{|T|} \cdot \poly(|V|)$, where $|T|$ and $|V|$ are the number of terminals and vertices, respectively, of the instance.
\end{lemma}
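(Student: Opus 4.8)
The plan is to reduce weighted CacAP to a parameterized problem solved by Basavaraju, Fomin, Golovach, Misra, Ramanujan, and Saurabh~\cite{basavaraju_2014_parameterized}, where the parameter is controlled by the number of terminals. Recall that for a cactus $G=(V,E)$, increasing the edge-connectivity by one means covering all $2$-cuts of $G$. The $2$-cuts of a cactus are in bijection with pairs of edges lying on a common cycle (for each cycle $C$, choosing two of its edges disconnects $G$), together with the bridges (each bridge is its own $1$-cut, corresponding to a pair of parallel edges if we think of it that way). Equivalently, one can work with the standard reduction of CacAP to a covering problem on the cactus structure. The key point is that the \emph{deficient sets} (the vertex sets inducing a cut that must be covered) form a laminar-like family closely tied to the cycle/bridge structure, and the ``interesting'' extremal sets among them correspond precisely to terminals.

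First I would make precise how $|T|$ bounds the combinatorial complexity of the instance that actually matters. A cactus on $n$ vertices has $O(n)$ edges, but after suppressing non-terminal structure — contracting paths through degree-two vertices is not quite right since terminals \emph{are} the degree-two vertices, so instead one argues that every vertex of degree $\ge 3$ is a cut vertex or lies on at least two cycles, and a counting argument on the block-cut / cycle structure shows that the number of cycles and bridges, and hence the number of ``branching'' vertices, is $O(|T|)$. Concretely, in a cactus each cycle contains at least one vertex that is not a cut vertex of $G$ and has degree two in $G$ — i.e., a terminal (unless the cycle shares vertices with many other blocks, in which case a charging argument still yields the bound) — and each bridge contributes similarly. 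This gives that, up to contracting irrelevant parts, the instance is described by $O(|T|)$ structural pieces.

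The main technical content then is to invoke~\cite{basavaraju_2014_parameterized}, which gives an algorithm for (weighted) connectivity augmentation / cactus covering running in time $2^{O(p)}\poly(n)$, or more precisely with the stated $3^{p}\poly(n)$ dependence, where $p$ is the relevant structural parameter; combining this with the bound $p = O(|T|)$ — and tracking constants carefully so that the base of the exponent is exactly $3$ as claimed — yields a $3^{|T|}\cdot\poly(|V|)$ algorithm. One has to be slightly careful that the reduction is weight-preserving and that the links $L \subseteq \binom{V}{2}$, which may connect arbitrary vertices (not just terminals), are handled correctly: a link is classified by which pieces its two endpoints fall into, and links with both endpoints in the same non-terminal piece are either useless or can be shortcut, so the effective link set is again controlled by the structural parameter.

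The step I expect to be the main obstacle is making the charging argument tight enough to get the clean bound with base $3$ (rather than some larger constant) — i.e., correctly matching the parameter used in~\cite{basavaraju_2014_parameterized} to the number of terminals $|T|$ of the CacAP instance, and verifying that the reduction introduces no extra multiplicative blow-up in the exponent. Everything else (the correspondence between $2$-cuts and cycle-edge pairs, weight preservation, polynomial bookkeeping) is routine. Since the excerpt defers the full proof to Section~\ref{sec:fpt}, I would at this point simply cite~\cite{basavaraju_2014_parameterized} for the parameterized algorithm and record that the parameter is $O(|T|)$, leaving the constant-chasing to the later section.
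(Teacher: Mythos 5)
There is a genuine gap: you never identify the actual reduction, and the route you sketch in its place would not give the stated bound. The paper's proof uses a specific combinatorial lemma of Basavaraju et al.\ which constructs an auxiliary graph $H$ on vertex set $T\cup L$ such that $F\subseteq L$ is a feasible CacAP solution if and only if $H[F\cup T]$ is connected. This turns weighted CacAP into \emph{Node-Weighted Steiner Tree} whose terminal set is exactly $T$ (the links, carrying their costs, are the Steiner nodes), and the $3^{|T|}\cdot\poly(|V|)$ running time is then just the Dreyfus--Wagner dynamic program adapted to node weights. In particular the parameter is exactly $|T|$, not ``$O(|T|)$,'' and the base $3$ has nothing to do with the cycle/bridge structure of the cactus --- it is the base of the classical Steiner-tree DP.

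Your proposal instead hinges on a charging argument showing that the number of cycles, bridges, and branching vertices is $O(|T|)$, followed by invoking an FPT algorithm with running time $3^{p}\poly(n)$ in some structural parameter $p=O(|T|)$. Even granting that charging argument (which you do not carry out, and which is complicated by the fact that in this paper a cactus is $2$-edge-connected, so there are no bridges), a $3^{p}$ algorithm with $p=O(|T|)$ yields only $c^{|T|}$ for some unspecified constant $c$, not $3^{|T|}$; you acknowledge this yourself by flagging ``constant-chasing'' as the main obstacle and then deferring it. That obstacle is an artifact of the wrong reduction: once one sees that the CacAP terminals become, one-for-one, the Steiner terminals of the auxiliary instance, there is no slack to chase. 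The statements about deficient sets forming a laminar-like family and about shortcutting links within ``non-terminal pieces'' also play no role in the correct argument.
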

Terminals naturally extend the notion of leaves in TAP; indeed when modeling TAP as CacAP, the leaves in TAP are precisely the terminals of the corresponding CacAP instance. Now, a $k$-wide instance is defined as follows, where, for a vertex $r\in V$, the graph $G-r$ is the subgraph of $G=(V,E)$ induced by $V\setminus \{r\}$.
\begin{definition}[$k$-wide CacAP]
Let $k\in \mathbb{Z}_{\geq 1}$. A CacAP instance $(G=(V,E),L)$ is \emph{$k$-wide} if there is a vertex $r$ such that each connected component of $G-r$ contains at most $k$ terminals of $G$. We call the vertex $r$ a \emph{$k$-wide center} of $G$, or simply a \emph{center}. Moreover, for each $W\subseteq V\setminus \{r\}$ that is the vertex set of a connected component of $G-r$, we call $G[W\cup \{r\}]$ an $r$-principal subcactus of $G$.\footnote{Our notion of $k$-wide CacAP instances maps to the $k$-wide notion used in~\cite{grandoni_2018_improved} for the special case of TAP instances. Moreover, we remark that the center may also be a terminal, which corresponds precisely to the case where there is a single principal subcactus.}
\end{definition}
See Figure~\ref{fig:k-wide} for an exemplification of the above notions. When there is no danger of ambiguity, we often simply talk about \emph{principal subcacti} of $G$ without explicitly specifying the center.

\begin{figure}[!ht]
\begin{center}
\begin{tikzpicture}[scale=0.45]

\pgfdeclarelayer{bg}
\pgfdeclarelayer{fg}
\pgfsetlayers{bg,main,fg}

\tikzset{
  prefix node name/.style={%
    /tikz/name/.append style={%
      /tikz/alias={#1##1}%
    }%
  }
}

\tikzset{root/.style={fill=white,minimum size=13}}

\tikzset{lks/.style={line width=1pt, blue, densely dashed}}
\tikzset{crossl/.append style={red}}
\tikzset{upl/.append style={green!70!black}}

\tikzset{q1/.append style={}}

\tikzset{q2/.append style={}}

\tikzset{q3/.append style={}}

\tikzset{q4/.append style={}}

\tikzset{
term/.style={fill=black!20, rectangle, minimum size=10},
termg/.style={fill=black!20, rectangle, minimum size=10},
tf/.append style={font=\scriptsize\color{black}},
}

\newcommand\cac[2][]{

\begin{scope}[prefix node name=#1]

\tikzset{npc/.style={#2}}

\begin{scope}[every node/.append style={thick,draw=black,fill=white,circle,minimum size=12, inner sep=2pt}]
\node[root]  (1) at (18.68,-1.64) {r};
\node  (2) at (16.80,-5.50) {};
\node  (3) at (19.08,-5.16) {};
\node[term]  (4) at (16.66,-9.60) {};
\node  (5) at (18.56,-7.16) {};
\node[term]  (6) at (18.24,-8.74) {};
\node  (7) at (20.44,-6.76) {};
\node[term]  (8) at (22.28,-5.68) {};
\node[term]  (9) at (21.00,-3.58) {};
\node[term] (10) at (20.40,-9.36) {};
\node[term] (11) at (22.36,-9.32) {};

\begin{scope}[npc]
\node (12) at (12.24,-4.60) {};
\node (13) at (14.20,-4.64) {};
\node[termg] (14) at (11.24,-6.32) {};
\node[termg] (15) at (12.68,-7.72) {};
\node[termg] (16) at (14.80,-7.42) {};
\node (17) at (24.88,-4.72) {};
\node[termg] (18) at (27.60,-4.64) {};
\node[termg] (19) at (24.44,-7.64) {};
\node[termg] (20) at (26.80,-7.76) {};
\node (21) at (8.04,-4.56) {};
\node[termg] (22) at (6.16,-5.36) {};
\node[termg] (23) at (6.08,-7.28) {};
\node[termg] (24) at (8.00,-6.56) {};
\end{scope}
\end{scope}

\begin{scope}[every node/.append style={font=\scriptsize}]

\foreach \i/\t in {2/,3/,4/tf,5/,6/tf,7/,8/tf,9/tf,10/tf,11/tf,12/,13/,14/tf,15/tf,16/tf,17/,18/tf,19/tf,20/tf,21/,22/tf,23/tf,24/tf} {
\pgfmathparse{int(\i-1)}
\node[\t] at (\i) {$\pgfmathresult$};
}
\end{scope}

\begin{scope}[very thick]

\draw  (1) --  (2);
\draw  (2) --  (3);
\draw  (3) --  (1);
\draw  (9) --  (3);
\draw  (2) to[bend left=15] (4);
\draw  (4) to[bend left=15] (2);
\draw  (3) to[bend left=25] (5);
\draw  (5) to[bend left=25] (3);
\draw  (5) to[bend left=25] (6);
\draw  (6) to[bend left=25] (5);
\draw  (3) --  (7);
\draw  (7) --  (8);
\draw  (8) --  (9);
\draw  (7) -- (10);
\draw (10) -- (11);
\draw (11) --  (7);

\begin{scope}[npc]
\draw  (1) to[bend left=2] (21);
\draw  (1) -- (12);
\draw  (1) -- (17);
\draw (21) to[bend left=2]  (1);
\draw (18) --  (1);
\draw (13) --  (1);
\draw (12) to[bend left=15] (14);
\draw (14) to[bend left=15] (12);
\draw (16) -- (13);
\draw (12) -- (13);
\draw (13) -- (15);
\draw (15) -- (16);
\draw (17) -- (18);
\draw (17) to[bend left=13] (19);
\draw (19) to[bend left=13] (17);
\draw (17) to[bend left=13] (20);
\draw (20) to[bend left=13] (17);
\draw (24) -- (21);
\draw (21) -- (22);
\draw (22) -- (23);
\draw (23) -- (24);
\end{scope}

\end{scope}

\end{scope}

}%

\begin{scope}

\tikzset{
grayout/.style={
every node/.append style={draw=black!30}, draw=black!30
},
inlink/.style={q1,blue},
crlink/.style={q1,red}
}

\begin{scope}[xshift=-1cm]
\cac[]{}
\end{scope}

\begin{scope}[lks]

\begin{scope}[crossl]

\begin{scope}[q1]
\draw (3) to (13);
\draw (4) to[bend left=20] (21);
\draw (6) to (15);
\draw (8) to (19);
\draw (9) to (17);
\draw (9) to (18);
\draw (14) to (21);
\end{scope}

\begin{scope}[q2]
\draw (10) to[bend left=25] (23);
\end{scope}

\end{scope}

\begin{scope}[blue]

\begin{scope}[q1]
\draw[upl] (1) to[bend right=40] (4);
\draw[upl] (1) to[out=202,in=100,out looseness=2, in looseness=0.7] (14);
\draw[upl] (2) to (4);
\draw (5) to (10);
\draw[upl] (7) to[bend right] (10);
\draw[upl] (7) to[bend left=20] (11);
\draw (8) to[bend right] (9);
\draw[upl] (17) to (19);
\draw (19) to (20);
\end{scope}

\begin{scope}[q2]
\draw[upl] (3) to[bend left=35] (6);
\draw (12) to (15);
\draw[upl] (21) to[out=170, in =115,looseness=1.7] (23);
\end{scope}

\begin{scope}[q3]
\draw (8) to (11);
\draw (18) to (20);
\end{scope}

\begin{scope}[q4]
\draw (15) to[bend left] (16);
\draw (22) to (24);
\end{scope}

\end{scope}

\end{scope}
\end{scope}%

\begin{scope}[shift={(29,-3)}]%
\def\ll{30mm} %
\def\vs{12mm} %

\begin{scope}[every node/.style={circle,inner sep=0pt,minimum size=5pt}]
\node[fill=red] (rd) at (0,0) {};
\node[fill=blue] (bd) at (0,-\vs) {};
\node[fill=green!70!black] (gd) at (0,-2*\vs) {};

\node at (rd.east)[right=2pt] {cross-links};
\node (ilt) at (gd.east)[right=2pt] {up-links};

\path (bd) -| coordinate (batop) (ilt.east);
\path (gd) -| coordinate (babot) (ilt.east);

\coordinate (btop) at ($(batop)+(1ex,2ex)$);
\coordinate (bbot) at ($(babot)+(1ex,-2ex)$);

\draw[decorate, decoration={brace,mirror,amplitude=5pt},line width=0.8pt] (bbot) -- node[right=2.5mm] {in-links} (btop);

\end{scope}

\end{scope}

\end{tikzpicture}
 \end{center}
\caption{A $6$-wide CacAP instance with center $r$ and $4$ principal subcacti. The terminals are drawn as gray squares. The dashed lines represent the links, which can be partitioned into cross-links and in-links (with up-links being a sub-class of the in-links) as highlighted by the different colors and explained before we state Lemma~\ref{lem:bundle_rounding}.}\label{fig:k-wide}
\end{figure}

The key result of the first step is the following, which shows that it suffices to consider $O(1)$-wide CacAP instances to obtain approximation algorithms for general CacAP, at the expense of an arbitrarily small constant loss in the approximation factor.\footnote{An analogous theorem for TAP was presented in~\cite{grandoni_2018_improved}.}
\begin{theorem}\label{thm:main_reduction}
Let $\alpha \ge 1$ and $\varepsilon > 0$.
Given an $\alpha$-approximation algorithm $\Ascr$ for $\frac{64(8+3\varepsilon)}{\varepsilon^2}$-wide CacAP instances, there is an $\alpha \cdot(1+\epsilon)$-approximation algorithm $\Bscr$ for (unrestricted) CacAP that calls $\Ascr$ at most polynomially many times and performs further operations taking polynomial time.
\end{theorem}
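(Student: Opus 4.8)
## Proof proposal for Theorem~\ref{thm:main_reduction}

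<br>The plan is to transplant the ``wide-instance'' reduction of Adjiashvili~\cite{adjiashvili_2018_beating} and of Grandoni--Kalaitzis--Zenklusen~\cite{grandoni_2018_improved} from trees to cacti. Fix an optimal solution $\OPT$ of the given CacAP instance $(G,L)$. I would build a laminar family of vertex sets by recursively cutting $G$ along carefully chosen separator vertices; each leaf $W$ of this family has a designated separator $r_W$, and from $W$ I form a CacAP instance by contracting all of $V\setminus(W\cup\{r_W\})$ into $r_W$. The algorithm $\Bscr$ enumerates over polynomially many candidate decompositions (each will depend only on a small, polynomially enumerable amount of data about $\OPT$), runs $\Ascr$ on every $k$-wide instance produced with $k=\frac{64(8+3\varepsilon)}{\varepsilon^2}$, translates the returned links back to links of $L$, takes their union over all pieces of a decomposition, and outputs the cheapest feasible union encountered. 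It then suffices to establish: (i) each instance produced is $k$-wide with $r_W$ as wide-center, i.e.\ every component of ``$W$ minus $r_W$'' has at most $k$ terminals; (ii) for every decomposition, the union of feasible solutions to its pieces is feasible for $G$; and (iii) for at least one enumerated decomposition, $\OPT$ induces feasible solutions on the pieces whose sizes sum to at most $(1+\varepsilon)\,|\OPT|$, so that the union of the solutions returned by $\Ascr$ has size at most $\alpha(1+\varepsilon)\,|\OPT|$.

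Items (i) and (ii) I would obtain from the block structure of a cactus: its cycles form a tree-like skeleton, and every minimum cut of $G$ is a pair of edges of a single cycle. The recursive cutting is designed so that (a) every cycle of $G$ lies inside one piece together with that piece's separator vertex, and (b) every component of ``piece minus separator'' carries at most $k$ terminals. Property (a) yields feasibility: a minimum cut of $G$ survives unchanged in the unique piece-instance containing it, so covering each piece-instance's minimum cuts covers all of $G$. It also transports $\OPT$ to the pieces: a link $\ell=(u,v)\in\OPT$ with $u,v$ in different pieces $W_1,W_2$ becomes, after contraction, a ``shortened'' link $(u,r_{W_1})$ in $W_1$'s instance and $(r_{W_2},v)$ in $W_2$'s instance, and one checks that $(u,r_{W_1})$ covers exactly those minimum cuts inside $W_1$ that $\ell$ covers, and symmetrically for $W_2$. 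Hence $\OPT$ induces feasible solutions on all pieces of total size $|\OPT|$ plus the number of piece-crossing links of $\OPT$ (each such link is duplicated into exactly its two pieces); conversely, since $\Ascr$'s output on a piece has size at most $\alpha$ times the size of the $\OPT$-induced solution there, translating all outputs back to $L$ and deduplicating yields a feasible solution of $G$ of size at most the sum of $\Ascr$'s output sizes.

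The heart of the proof, and what I expect to be the main obstacle, is item (iii): choosing the separators so that at most $\varepsilon\,|\OPT|$ links of $\OPT$ are piece-crossing, \emph{while} every component of every ``piece minus separator'' stays below $k$ terminals. As in the tree case, I would cut the skeleton at the depth classes congruent to a random offset $a$ among $\Theta(1/\varepsilon)$ evenly spaced values, so that any link spanning few depth classes is piece-crossing with probability only $O(\varepsilon)$; within each resulting slab I would cut further whenever a subtree would exceed $k$ terminals. The width bound $k=\Theta(1/\varepsilon^2)$ is exactly the price of this two-level scheme — slabs span $\Theta(1/\varepsilon)$ depth classes, and within a slab one can tolerate pieces a further $\Theta(1/\varepsilon)$ factor larger before the extra within-slab cuts become unaffordable — and the constant is pinned down by a charging argument: a link of $\OPT$ is paid for either by the $O(\varepsilon)$-probability event that a random cut-depth hits its (short) span, or, if its span is long, by the many distinct minimum cuts a long skeleton segment forces $\OPT$ to cover; in particular the pair of edges incident to each degree-two vertex is a minimum cut, so $|\OPT|\geq|T|/2$, which anchors such charges. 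Derandomizing by enumerating the $\Theta(1/\varepsilon)=O(1)$ offsets together with the $\poly(|V|)$ choices needed to root the skeleton keeps the number of calls to $\Ascr$ polynomial.

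Two points specific to cacti, absent in TAP, deserve care. First, a ``cut'' of the skeleton at a cycle must be realized by contracting a portion of that cycle into the separator rather than by deleting a tree edge, and one must check that this preserves the cactus structure of each piece and does not change which minimum cuts of $G$ the piece must cover; this is where the precise form of the contraction matters. Second, long cycles might look threatening, but an $n$-vertex cycle all of whose vertices are terminals already forces $|\OPT|=\Omega(n)$ by the degree-two argument above, so cutting such cycles into bounded pieces costs only an $O(\varepsilon)$ fraction of $\OPT$ — the same phenomenon that makes the charging argument go through.
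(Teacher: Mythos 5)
Your high-level plan follows the paper's (split into $k$-wide pieces, run $\Ascr$, recombine), but it glosses over exactly the two CacAP-specific obstacles that the paper's proof is built around, and your item (ii) is false as stated. To make the pieces $O(1)$-wide you cannot only cut at vertices: a long cycle with a terminal hanging off each of its vertices has no vertex whose removal leaves components with few terminals, so you are forced to split along $2$-cuts $C\in\mathcal{C}_G$ that sever a cycle. Once a cycle is severed, a $2$-cut $W$ of $G$ that \emph{crosses} $C$ (its two boundary edges lie on the same cycle as $\delta_E(C)$, one inside $C$ and one outside) is not a $2$-cut of either sub-instance, and the union of feasible solutions to $\mathcal{I}_C$ and $\mathcal{I}_{V\setminus C}$ can leave $W$ uncovered --- see Figure~\ref{fig:combining_example}. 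No choice of ``the precise form of the contraction'' fixes this; the paper instead proves (Proposition~\ref{prop:combine_split_sol}) that up to $|\delta_L(C)\cap F_C|-1$ \emph{extra} links must be added to repair feasibility, and keeping that affordable forces one to find piece-solutions using only $O(1/\varepsilon)$ links of $\delta_L(C)$, which a black-box call to $\Ascr$ does not provide (hence the enumeration over small subsets $S\subseteq\delta_L(s)$ in Lemma~\ref{lem:algo_unsplittable}).

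The second gap is that your charging argument for item (iii) only controls links that are cut by the \emph{random-offset} slab boundaries; the additional, deterministic within-slab cuts you introduce to enforce the terminal bound can each be crossed by $\Omega(|\OPT|)$ links, and then duplicating them across the two sub-instances destroys the $(1+\varepsilon)$ bound. In TAP this is repaired by first buying a cheap link set covering all ``heavy'' cuts, which works because that covering problem is again a TAP instance with integrality gap $2$; for CacAP the analogous covering problem is \emph{not} a CacAP instance, and the paper has to introduce a separate rectangle-hitting argument (Definition~\ref{def:x_heavy_cut_covering}, Theorem~\ref{thm:covering_heavy}) together with a round-or-cut framework over a fractional point $x$ to make the lightness/amortization argument go through. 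Your proposal contains no substitute for this step, so the claim that some enumerable decomposition has at most $\varepsilon|\OPT|$ piece-crossing links is unsubstantiated.
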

On a high level, we follow a recent approach of~\cite{adjiashvili_2018_beating} for TAP that successively splits a general instance into $O(1)$-wide ones. However, we face two additional challenges.
First, contrary to TAP, there are new hurdles to make sure that good splittings are always possible, which we address by introducing a well-chosen rectangle hitting problem, whose solution can be leveraged to preprocess the instance.
Second, contrary to TAP, merging solutions of split sub-instances does not immediately lead to a solution to the original instance. We therefore define a particular type of solutions to sub-instances that allows for cheap merging, and show that such solutions can always be found via the oracle $\mathcal{A}$.
See Section~\ref{sec:overviewKWide} for details on how we prove Theorem~\ref{thm:main_reduction}.

The reduction to $O(1)$-wide instances heavily exploits that the instance is unweighted. (Actually, it suffices that the largest to smallest cost ratio is small.) However, some of our results for $O(1)$-wide instances also apply to weighted instances without modification; we thus present them in the weighted setting.

\medskip

Again in the spirit of prior TAP techniques, the second main part of our approach is about designing two procedures to find solutions for weighted $O(1)$-wide CacAP instances with different guarantees depending on the type of links. We call these procedures \emph{backbone procedures}.
By returning the better solution of the two backbone procedures, we obtain a $1.5$-approximation for $O(1)$-wide CacAP, which, due to Theorem~\ref{thm:main_reduction}, immediately implies a $(1.5+\varepsilon)$-approximation for CacAP, and thus also for CAP. Note that this already significantly improves on the previously best $1.91$-approximation~\cite{byrka_2020_breaching}.
We later improve one of the backbone procedures to obtain the claimed $\apxfac$-approximation.

To express the guarantees of the backbone procedures in terms of different link types, we extend three well-known link types from TAP to CacAP in a canonical way, namely cross-links, in-links, and up-links. More precisely, let $k\in \mathbb{Z}_{\geq 1}$ and let $(G=(V,E),L)$ be a $k$-wide CacAP instance with center $r$. The links $L$ are partitioned into \emph{cross-links} and \emph{in-links}, denoted by $L_{\mathrm{cross}}$ and $L_{\mathrm{in}}$, respectively, as follows.
\emph{In-links} are all links with both endpoints in the same principal subcactus, and \emph{cross-links} are all remaining links.
Hence, cross-links have endpoints in different principal subcacti and do not have the center as one of their endpoints. (Note that the center appears in all principal subcacti.)
Hence, all links incident to the center are in-links. Finally, \emph{up-links}, denoted by $L_{\mathrm{up}}\subseteq L_{\mathrm{in}}$, are all in-links $\ell \in L_{\mathrm{in}}$ such that one endpoint of $\ell$ is an ancestor of the other one, where $u\in V$ being an \emph{ancestor} of $v\in V$ means that $u$ lies on every $v$-$r$ path in $G$ (in which case we call $v$ a \emph{descendant} of $u$). 
See Figure~\ref{fig:k-wide} for an illustration.
For simplicity, we often state results on $k$-wide instances that refer to cross-links, in-links, and up-links without specifying the center. In such cases these results hold for any $k$-wide center of the instance.

The first backbone procedure extends an approach introduced by Adjiashvili~\cite{adjiashvili_2018_beating} for TAP that is based on so-called bundle constraints. The extension readily follows from Lemma~\ref{lem:fpt_terminals}.
We denote by $\OPT\subseteq L$ an optimal solution to the problem under discussion, which is a weighted CacAP instance in this case.
\begin{lemma}\label{lem:bundle_rounding}
For any weighted $k$-wide CacAP instance $\mathcal{I}=(G=(V,E),L)$ with link costs $c\in\mathbb{R}_{\geq 0}^L$, we can compute in time $3^k \poly(|V|)$ a CacAP solution $F\subseteq L$ with $c(F) \leq c(\OPT) + c(\OPT\cap L_{\mathrm{cross}})$.
\end{lemma}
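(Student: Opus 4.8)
The plan is to reduce, inside each principal subcactus, to an instance with only $O(k)$ terminals, so that Lemma~\ref{lem:fpt_terminals} can be invoked to solve it exactly, and then to pay the extra additive term $c(\OPT\cap L_{\mathrm{cross}})$ to "simulate" the cross-links that leave the subcactus. Concretely, fix a $k$-wide center $r$ and let $G_1,\dots,G_p$ be the $r$-principal subcacti of $G$. The key observation is that a set $F\subseteq L$ is a feasible CacAP solution if and only if it covers every $2$-cut of the cactus $G$, and every such $2$-cut either lies entirely within one principal subcactus (after possibly identifying $r$ with one side) or separates $r$ from some terminals in a way that is "internal" to a single $G_i$; in either case the cut lives inside one subcactus. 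So feasibility decomposes over the principal subcacti, provided that for each $G_i$ we also allow ourselves the cross-links with at least one endpoint in $G_i$, contracted appropriately onto $r$.

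The main steps I would carry out are as follows. \textbf{(i)} For each principal subcactus $G_i$, build a sub-instance $\mathcal{I}_i$ on $G_i$ whose links are the in-links of $G_i$ together with, for each cross-link $\ell=\{u,v\}$ with $u\in V(G_i)$, a "half-link" $\{u,r\}$ of the same cost $c(\ell)$; here we use that a cross-link, when restricted to the cuts it can cover inside $G_i$, behaves exactly like the up-link from $u$ to $r$ (since from $G_i$'s point of view the other endpoint is "at" the center $r$). \textbf{(ii)} Observe that $\mathcal{I}_i$ has at most $k$ terminals: the terminals of $G_i$ are exactly the terminals of $G$ lying in the corresponding component of $G-r$, of which there are at most $k$ by $k$-wideness, and adding links does not create terminals since terminals are defined purely by the degree-two condition on the graph $G_i$, which is unchanged. \textbf{(iii)} Apply Lemma~\ref{lem:fpt_terminals} to each $\mathcal{I}_i$ to obtain, in time $3^k\poly(|V|)$, an optimal solution $F_i$; lifting each half-link $\{u,r\}$ back to its original cross-link $\ell$, let $F=\bigcup_i F_i$ (with the lifts). \textbf{(iv)} Feasibility: every $2$-cut of $G$ is a $2$-cut internal to some $G_i$ (with $r$ playing the role of the "outside"), and $F_i$ covers it, using the fact that a lifted cross-link covers exactly the cuts its half-link covered. \textbf{(v)} Cost bound: take $\OPT$ and for each $i$ let $\OPT_i'$ be the set of half-links/in-links obtained by taking $\OPT$'s in-links of $G_i$ plus, for each cross-link of $\OPT$ with an endpoint in $G_i$, the corresponding half-link; then $\OPT_i'$ is feasible for $\mathcal{I}_i$, so $c(F_i)\le c(\OPT_i')$. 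Summing over $i$, each in-link of $\OPT$ is counted once while each cross-link of $\OPT$ has two endpoints and may be counted in up to two sub-instances, giving $c(F)\le \sum_i c(\OPT_i') \le c(\OPT) + c(\OPT\cap L_{\mathrm{cross}})$.

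The step I expect to be the main obstacle is \textbf{(iv)}, the feasibility argument: I need to pin down precisely the correspondence between $2$-cuts of the whole cactus $G$ and $2$-cuts "internal to a principal subcactus," and verify that a cross-link $\{u,v\}$ and its half-link $\{u,r\}$ cross exactly the same internal cuts of $G_i$ — intuitively clear because any $G_i$-internal cut separates $u$ from $r$ iff it separates $u$ from $v$ (since $v$ lies on the $r$-side within $G$), but making this rigorous requires the structure of cactus cuts and a careful treatment of which side $r$ falls on. A secondary subtlety in \textbf{(v)} is checking that $\OPT_i'$ is genuinely feasible for $\mathcal{I}_i$, i.e., that $\OPT$'s coverage of $G$'s cuts, when we throw away the parts of $\OPT$ outside $G_i$ and replace cross-links by half-links, still covers all of $\mathcal{I}_i$'s cuts; this again reduces to the same cut-correspondence claim. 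Once that correspondence is established the counting is routine.
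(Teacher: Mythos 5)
Your proposal is correct and follows essentially the same route as the paper's proof: solve each principal subcactus independently via Lemma~\ref{lem:fpt_terminals} after replacing the outside endpoint of each cross-link by the center (the paper calls this half-link a shadow), take the union, and charge each in-link of $\OPT$ once and each cross-link twice. The cut-correspondence worry you flag in step (iv) is handled in the paper by exactly the argument you sketch, so there is no gap.
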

\begin{proof}
For each principal subcactus we compute an optimal CacAP solution through Lemma~\ref{lem:fpt_terminals} to the CacAP sub-instance corresponding to that subcactus only, i.e., only $2$-cuts within this principal subcactus need to be covered.
Notice that these sub-instances are indeed instances of CacAP: The only difference to a regular CacAP instance is that cross-links with one endpoint in the considered sub-cactus have the other endpoint outside of it. However, for such links we can simply replace the endpoint outside the considered sub-cactus by the center of the $k$-wide instance. (Such a replacement link is also called a \emph{shadow} of the original link; an analogous concept has been used in the context of TAP and we will expand on this later.)

Finally, we return the union $F\subseteq L$ of all solutions. $F$ is clearly a feasible CacAP solution and the running time bound follows from Lemma~\ref{lem:fpt_terminals}. Finally, the claimed cost guarantee on $F$ holds because $F$ is not more costly than returning for each subcactus the links of $\OPT$ with at least one endpoint in the subcactus (excluding the center), which leads to a cost of $c(\OPT) + c(\OPT \cap L_{\mathrm{cross}})$ because each in-link appears in one sub-instance and each cross-link in two.
\end{proof}
The second backbone procedure, which is stated below, is the CacAP-counterpart to an elegant LP-based technique introduced by Fiorini, Gro\ss, K\"onemann, and Sanit\`a~\cite{fiorini_2018_approximating}, based on Chv\'atal-Gomory cuts.
It introduces a polytope $P_{\mathrm{cross}}$ that is a relaxation of the convex hull $P_{\mathrm{CacAP}}(\mathcal{I})$ of all actual solutions, i.e.,
\begin{equation*}
P_{\mathrm{CacAP}}(\mathcal{I})\coloneqq \conv(\{\chi^F : F\subseteq L, (V,E\cup F) \text{ is $3$-edge-connected}\})\enspace.\footnote{$\chi^F\in \{0,1\}^L$ denotes the characteristic vector of $F$, i.e., $\chi^F_{\ell}=1$ for $\ell\in F$, and $\chi^F_\ell=0$ otherwise.}
\end{equation*}
The introduced polytope $P_{\mathrm{cross}}$ is \emph{solvable}, which means that we can efficiently separate over it. 
\begin{restatable}{lemma}{lemcrosslinkrounding}\label{lem:cross-link_rounding}
Given a weighted $k$-wide CacAP instance $\mathcal{I}=(G=(V,E),L)$ with link costs $c\in \mathbb{R}^L_{\geq 0}$, there is a solvable polytope $P_{\mathrm{cross}}\supseteq P_{\mathrm{CacAP}}(\mathcal{I})$ with facet complexity bounded by $\poly(|V|)$ such that for any $x\in P_{\mathrm{cross}}$ we can efficiently obtain a solution $F\subseteq L$ satisfying $c(F) \leq c^T x + \sum_{\ell\in L_{\mathrm{in}}\setminus L_{\mathrm{up}}} c_\ell x_\ell$.\footnote{%
We highlight that Lemma~\ref{lem:cross-link_rounding} has no requirements on the parameter $k$, and hence applies to any instance, even if it is not $O(1)$-wide. Nevertheless, we use the notion of $k$-wide instances in its statement because the cost guarantee on $F$ depends on the different link types, which we introduced in the context of $k$-wide instances, and because we typically use the statement in such contexts.
}
\end{restatable}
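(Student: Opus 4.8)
This is the Chvátal–Gomory / Fiorini–Groß–Könemann–Sanità approach adapted from TAP to CacAP. The starting point is the natural covering LP: for every $2$-cut $C$ of the cactus $G$, we need $x(\delta_L(C)) \ge 1$, together with $x\ge 0$. But this LP by itself is too weak. Following \cite{fiorini_2018_approximating}, I would strengthen it with Chvátal–Gomory cuts of the following type. In a cactus, the $2$-cuts are in bijection with pairs of edges on a common cycle (plus the trivial ones given by tree edges in the tree-like part); they are organized along the principal subcacti hanging from the center $r$. For a carefully chosen family of $2$-cuts whose ``cross-link side'' behaves consistently — concretely, cuts that separate a subtree-like region of a single principal subcactus from the rest — one sums up several of the covering inequalities, divides by $2$, and rounds up, obtaining a valid inequality for $P_{\mathrm{CacAP}}(\mathcal I)$ in which cross-links get coefficient that is halved relative to in-links. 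The polytope $P_{\mathrm{cross}}$ is the original covering LP intersected with all these CG cuts; since each CG cut is derived from polynomially many covering inequalities each of polynomial encoding size, the facet complexity is $\poly(|V|)$, and separation reduces to separating the covering inequalities (a min-cut computation in the cactus) plus checking the explicitly-listed CG cuts, so $P_{\mathrm{cross}}$ is solvable.

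\textbf{The rounding.} Given $x\in P_{\mathrm{cross}}$, the goal is a feasible $F\subseteq L$ with $c(F)\le c^Tx + \sum_{\ell\in L_{\mathrm{in}}\setminus L_{\mathrm{up}}} c_\ell x_\ell$. I would first split the task per principal subcactus. Within one principal subcactus $G[W\cup\{r\}]$, replace every cross-link incident to $W$ by its shadow at the center $r$ (as in the proof of Lemma~\ref{lem:bundle_rounding}); now every link lies inside the subcactus, and an up-link/in-link distinction is available. The key combinatorial fact — the CacAP analogue of the tree case — is that the set of in-links restricted to a principal subcactus, after shadowing, together with the cactus structure, admits an integral covering formulation whose LP is exact on the relevant face: covering the $2$-cuts of a single subcactus by in-links and up-link-shadows is essentially a laminar/interval covering problem. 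Concretely, the cuts of a subcactus form a structure on which the constraint matrix of ``in-links plus shadowed cross-links vs. cuts'' is (close to) an interval matrix, hence totally unimodular, so one can round the restriction of $x$ (suitably scaled on the coordinates that have been merged) to an integral solution whose cost is bounded by the LP value on that subcactus. Doing this per subcactus and taking the union gives feasibility for all of $G$. The cost bookkeeping is where the two terms come from: an up-link lies in a single subcactus and is charged once, at rate $c_\ell x_\ell$; an in-link that is not an up-link may need its ``two sides'' accounted for within the subcactus, contributing the extra $c_\ell x_\ell$; a cross-link has its shadow appearing in exactly two subcacti, but the CG cuts are exactly what guarantee that $x$ already ``pays'' for both appearances — i.e.\ the halved coefficient of cross-links in the CG cuts compensates the double counting, so cross-links contribute only $c_\ell x_\ell$ total and do \emph{not} appear in the error term. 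Summing, $c(F)\le \sum_\ell c_\ell x_\ell + \sum_{\ell\in L_{\mathrm{in}}\setminus L_{\mathrm{up}}} c_\ell x_\ell = c^Tx + \sum_{\ell\in L_{\mathrm{in}}\setminus L_{\mathrm{up}}} c_\ell x_\ell$, as claimed.

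\textbf{The main obstacle.} I expect the hard part to be pinning down the exact family of Chvátal–Gomory cuts for a \emph{cactus} (rather than a tree) and proving that the shadowed, single-subcactus covering problem rounds integrally with the stated cost. In TAP the corresponding statement rests on the laminar structure of tree cuts and a clean interval/network-matrix argument; in a cactus one must handle the cycles — a $2$-cut picks \emph{two} edges on a cycle, so the ``which side is which'' bookkeeping that makes cross-links behave uniformly is genuinely more delicate, and one has to argue that within a principal subcactus the relevant constraint system is still TU (or admits an equivalent TU reformulation after contracting $3$-edge-connected blocks and orienting cycles). Verifying that shadows do not create infeasibility — that covering shadows suffices to cover the original cross-cuts — and that the CG-cut coefficients line up precisely with the double-counting of shadows across the two subcacti they land in, is the crux; everything else (solvability, facet complexity, the per-subcactus decomposition) is routine once that is in place.
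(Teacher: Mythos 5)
You have correctly identified the general framework (strengthen the cut LP with $\{0,\sfrac{1}{2}\}$-CG cuts \`a la Fiorini et al.\ and round), and you correctly flag the cycles as the crux, but the route you propose through that crux would fail, and the paper takes a genuinely different one. Two concrete problems. First, your per-subcactus rounding rests on the claim that, after shadowing cross-links to the center, the single-subcactus covering problem is ``essentially a laminar/interval covering problem'' with a (close to) TU constraint matrix. It is not: a principal subcactus is itself a cactus, two $2$-cuts arising from the same cycle cross each other, so the relevant cut family is a crossing family rather than a laminar one, and the link-versus-cut incidence matrix is not an interval or network matrix. The paper obtains integrality not from TU but from a \emph{bidirected} LP (each non-up-link replaced by two oppositely directed copies, cuts covered only by entering links), whose extreme points are shown integral by combinatorial uncrossing --- and it explicitly notes that this LP's constraint matrix is \emph{not} totally unimodular. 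Second, your cost accounting for cross-links is unsubstantiated: if you round per subcactus and take the union, each cross-link's $x$-mass is counted in the two subcacti it touches, which yields the bound $c^Tx + \sum_{\ell\in L_{\mathrm{cross}}} c_\ell x_\ell$ of Lemma~\ref{lem:bundle_rounding} --- exactly the \emph{other} backbone guarantee. Saying that ``the CG cuts compensate the double counting'' names the desired conclusion, not a mechanism.

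What the paper actually does is different in structure. It computes a \emph{minimal} optimal solution $y^*$ of the dual of the bidirected LP with laminar support, sets $\mathcal{L}\coloneqq\supp(y^*)$, and defines $P_{\mathrm{cross}} = P_{\mathrm{cut}}\cap P^{\mathcal{L}}_{\mathrm{CG}}$, i.e.\ the CG cuts are imposed only for this single laminar family (a TAP instance, so the Fiorini et al.\ machinery applies verbatim for separation and lossless rounding). Given $x\in P_{\mathrm{cross}}$, it rounds $x$ \emph{globally} to a set $F^{\mathcal{L}}$ covering only the cuts in $\mathcal{L}$ at cost $c^Tx+\sum_{\ell\in L_{\mathrm{in}}\setminus L_{\mathrm{up}}} c_\ell x_\ell$, keeps the cross-links $R=F^{\mathcal{L}}\cap L_{\mathrm{cross}}$, and then \emph{completes} $R$ to a feasible CacAP solution at cost at most $\sum_{C:\,\delta_L(C)\cap R=\emptyset} y^*_C$, which is in turn bounded by the cost of the in-link part of $F^{\mathcal{L}}$. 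That completion bound (Lemma~\ref{lem:complete_cross_links}) is the technical heart of the proof and crucially exploits the \emph{minimality} of $y^*$ --- an ingredient absent from your outline. Without this dual-improvement argument there is no way to justify covering only one laminar family, nor to bound the cost of repairing the remaining uncovered $2$-cuts.
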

The \emph{facet complexity} of a polytope is an upper bound on the encoding length of each of its facets. A polynomially-bounded facet complexity together with a separation oracle allows for efficiently optimizing any linear function over the polytope via the Ellipsoid Method. (We refer the interested reader to~\cite{groetschel_1993_geometric} for additional information.)
Hence, the following is an immediate consequence of Lemma~\ref{lem:cross-link_rounding}.
\begin{corollary}\label{cor:cross-link_rounding}
For any weighted $k$-wide CacAP instance $(G=(V,E),L)$ with link costs $c\in \mathbb{R}_{\geq 0}^L$, we can efficiently compute a solution $F\subseteq L$ with $c(F)\leq c(\OPT)+c(\OPT\cap (L_{\mathrm{in}}\setminus L_{\mathrm{up}}))$.
\end{corollary}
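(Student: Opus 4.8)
The plan is to derive Corollary~\ref{cor:cross-link_rounding} directly from Lemma~\ref{lem:cross-link_rounding} by exhibiting a single point of $P_{\mathrm{cross}}$ whose cost reflects $\OPT$ and then invoking the rounding guarantee. First I would recall that $\OPT\subseteq L$ is, by definition, a feasible CacAP solution, so its characteristic vector $\chi^{\OPT}$ lies in $P_{\mathrm{CacAP}}(\mathcal{I})$. Since Lemma~\ref{lem:cross-link_rounding} provides a solvable polytope $P_{\mathrm{cross}}$ with $P_{\mathrm{CacAP}}(\mathcal{I})\subseteq P_{\mathrm{cross}}$ and with facet complexity bounded by $\poly(|V|)$, we have $\chi^{\OPT}\in P_{\mathrm{cross}}$. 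This is the only real content: the corollary is essentially Lemma~\ref{lem:cross-link_rounding} specialized to the integral point $x=\chi^{\OPT}$.

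Next I would apply the efficient rounding procedure guaranteed by Lemma~\ref{lem:cross-link_rounding} to $x=\chi^{\OPT}$, obtaining in polynomial time a CacAP solution $F\subseteq L$ with
\[
c(F)\;\le\; c^T\chi^{\OPT}+\sum_{\ell\in L_{\mathrm{in}}\setminus L_{\mathrm{up}}} c_\ell\,\chi^{\OPT}_\ell\;=\;c(\OPT)+c\bigl(\OPT\cap(L_{\mathrm{in}}\setminus L_{\mathrm{up}})\bigr),
\]
where the equality uses that $\chi^{\OPT}_\ell=1$ precisely for $\ell\in\OPT$ and $0$ otherwise, so the first term is $\sum_{\ell\in\OPT}c_\ell=c(\OPT)$ and the second term sums $c_\ell$ exactly over those $\ell\in\OPT$ that are in-links but not up-links, i.e., over $\OPT\cap(L_{\mathrm{in}}\setminus L_{\mathrm{up}})$. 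This is exactly the claimed bound. The only subtlety worth a sentence is that we do not actually know $\OPT$; however, Lemma~\ref{lem:cross-link_rounding} applies to \emph{any} $x\in P_{\mathrm{cross}}$, and the corollary only asserts the \emph{existence} of an efficiently computable $F$ meeting the stated cost bound — in practice one would instead optimize $c^Tx+\sum_{\ell\in L_{\mathrm{in}}\setminus L_{\mathrm{up}}}c_\ell x_\ell$ over $P_{\mathrm{cross}}$ (possible efficiently via the Ellipsoid Method, using solvability and the polynomial facet complexity) to get a point at least as good as $\chi^{\OPT}$, and then round that point.

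The proof is therefore a two-line corollary of the preceding lemma, and I do not anticipate a genuine obstacle. The one point requiring care is making the argument constructive/efficient rather than merely existential: since we cannot plug in $\chi^{\OPT}$ as an algorithm, I would phrase the efficient version as minimizing the linear objective $c^Tx+\sum_{\ell\in L_{\mathrm{in}}\setminus L_{\mathrm{up}}}c_\ell x_\ell$ over $P_{\mathrm{cross}}$ — which is legitimate because $P_{\mathrm{cross}}$ is solvable with $\poly(|V|)$ facet complexity — obtaining an optimal $x^\ast$ with objective value at most that of $\chi^{\OPT}$, and then apply the rounding of Lemma~\ref{lem:cross-link_rounding} to $x^\ast$. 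This yields the same cost bound while being polynomial-time, completing the proof.
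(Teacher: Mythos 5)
Your proposal is correct and follows essentially the same route as the paper: minimize the modified objective $c^Tx+\sum_{\ell\in L_{\mathrm{in}}\setminus L_{\mathrm{up}}}c_\ell x_\ell$ over $P_{\mathrm{cross}}$ via the Ellipsoid Method, use $\chi^{\OPT}\in P_{\mathrm{CacAP}}(\mathcal{I})\subseteq P_{\mathrm{cross}}$ to bound the optimal LP value, and then apply the rounding of Lemma~\ref{lem:cross-link_rounding}. You also correctly identified and resolved the one subtlety (that one cannot plug in $\chi^{\OPT}$ directly), exactly as the paper does.
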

\begin{proof}
By Lemma~\ref{lem:cross-link_rounding}, we can use the Ellipsoid Method to efficiently obtain an optimal solution $x^*$ of $\min\{c^T x + \sum_{\ell\in L_{\mathrm{in}}\setminus L_{\mathrm{up}}} c_\ell x_\ell : x\in P_{\mathrm{cross}}\}$ (see~\cite[Theorem~6.4.9]{groetschel_1993_geometric}) together with a CacAP solution $F\subseteq L$ satisfying $c(F) \leq c^T x^* + \sum_{\ell\in L_{in}\setminus L_{up}}c_\ell x^*_\ell$. The solution $F$ has the desired cost guarantee because $P_{\mathrm{cross}} \supseteq P_{\mathrm{CacAP}}(G)$ implies $c^T x^* + \sum_{\ell\in L_{\mathrm{in}}\setminus L_{\mathrm{up}}} c_\ell x^*_\ell \leq c(\OPT) + c(\OPT\cap (L_{\mathrm{in}}\setminus L_{\mathrm{up}}))$.
\end{proof}

Finally, the two backbone procedures presented above, together with Theorem~\ref{thm:main_reduction}, already imply a significant improvement in terms of approximation factor for CacAP, and thus also for CAP.

\begin{corollary}
For any $\epsilon >0$, there is a $(1.5+\epsilon)$-approximation algorithm for CacAP.
\end{corollary}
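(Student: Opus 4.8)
The plan is to combine the two backbone procedures with the reduction to $O(1)$-wide instances in the obvious way. First I would argue that for unweighted $O(1)$-wide CacAP we get a $1.5$-approximation simply by returning the cheaper of the solutions produced by Lemma~\ref{lem:bundle_rounding} and Corollary~\ref{cor:cross-link_rounding}. Fix a $k$-wide instance with $k = \frac{64(8+3\varepsilon')}{\varepsilon'^2}$ for a suitable $\varepsilon' > 0$ to be chosen, and an optimal solution $\OPT$. Partition $\OPT$ into $\OPT\cap L_{\mathrm{cross}}$, $\OPT\cap (L_{\mathrm{in}}\setminus L_{\mathrm{up}})$, and $\OPT\cap L_{\mathrm{up}}$. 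The bundle procedure gives a solution of cost at most $|\OPT| + |\OPT\cap L_{\mathrm{cross}}|$, and the Chv\'atal--Gomory procedure gives one of cost at most $|\OPT| + |\OPT\cap(L_{\mathrm{in}}\setminus L_{\mathrm{up}})|$. Taking the minimum of the two is at most $|\OPT|$ plus the minimum of $|\OPT\cap L_{\mathrm{cross}}|$ and $|\OPT\cap(L_{\mathrm{in}}\setminus L_{\mathrm{up}})|$, which is at most $|\OPT| + \tfrac12\big(|\OPT\cap L_{\mathrm{cross}}| + |\OPT\cap(L_{\mathrm{in}}\setminus L_{\mathrm{up}})|\big) \le |\OPT| + \tfrac12|\OPT| = 1.5\,|\OPT|$, where the last inequality uses that $L_{\mathrm{cross}}$ and $L_{\mathrm{in}}\setminus L_{\mathrm{up}}$ are disjoint subsets of $L$. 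Since $k$ is a constant (depending only on $\varepsilon'$), the running time $3^k\poly(|V|)$ of the bundle procedure and the polynomial running time of the LP-based procedure are both polynomial, so this is a polynomial-time $1.5$-approximation for $O(1)$-wide CacAP.

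Next I would invoke Theorem~\ref{thm:main_reduction} with $\alpha = 1.5$ and $\varepsilon = \varepsilon'$: using the above algorithm $\Ascr$ for $\frac{64(8+3\varepsilon')}{\varepsilon'^2}$-wide instances, we obtain an algorithm $\Bscr$ for unrestricted CacAP with approximation factor $1.5\cdot(1+\varepsilon')$ that runs in polynomial time (it calls $\Ascr$ polynomially often and does polynomial further work). Given the target $\epsilon > 0$ from the corollary statement, choose $\varepsilon' > 0$ small enough that $1.5\cdot(1+\varepsilon') \le 1.5 + \epsilon$; for instance $\varepsilon' = \tfrac{\epsilon}{1.5+\epsilon}$ works, or more crudely $\varepsilon' = \tfrac{2\epsilon}{3}$. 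This yields a $(1.5+\epsilon)$-approximation for CacAP. Finally, since CAP reduces to CacAP in an approximation-preserving way (the odd case reduces to TAP, the even case to CacAP, both reductions preserving approximation factors, as recalled in the introduction), the same factor holds for CAP, completing the proof.

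There is essentially no obstacle here: everything is a direct bookkeeping combination of results already established earlier in the excerpt. The only points requiring a line of care are (i) checking that $L_{\mathrm{cross}}$ and $L_{\mathrm{in}}\setminus L_{\mathrm{up}}$ are genuinely disjoint so that $\min(a,b)\le\tfrac12(a+b)\le\tfrac12|\OPT|$ is valid — this is immediate from the definitions, since $L_{\mathrm{cross}}$ and $L_{\mathrm{in}}$ already partition $L$; and (ii) confirming that the value of $k$ dictated by Theorem~\ref{thm:main_reduction} is a constant for fixed $\varepsilon'$, so that the bundle procedure's $3^k$ factor is absorbed into $\poly(|V|)$ — again immediate. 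The harder content is entirely in Theorem~\ref{thm:main_reduction}, Lemma~\ref{lem:bundle_rounding}, and Lemma~\ref{lem:cross-link_rounding}, all of which we are permitted to cite as black boxes; this corollary merely records their combination.
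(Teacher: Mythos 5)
Your proposal is correct and follows essentially the same route as the paper: run both backbone procedures on each $O(1)$-wide instance, observe that the sum of their guarantees is at most $3\,|\OPT|$ because $L_{\mathrm{cross}}$ and $L_{\mathrm{in}}\setminus L_{\mathrm{up}}$ are disjoint subsets of $L$, take the better of the two, and plug the resulting $1.5$-approximation into Theorem~\ref{thm:main_reduction}. Your extra care in rescaling $\varepsilon'$ so that $1.5(1+\varepsilon')\le 1.5+\epsilon$ is a minor detail the paper leaves implicit.
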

\begin{proof}
By Theorem~\ref{thm:main_reduction}, it suffices to provide a $1.5$-approximation algorithm for $O(1)$-wide CacAP instances. This is obtained by efficiently computing the CacAP solutions $F_1, F_2$ guaranteed by Lemma~\ref{lem:bundle_rounding} and Corollary~\ref{cor:cross-link_rounding}, respectively, and returning the better of the two. We recall that these guarantees are $c(F_1)\leq c(\OPT) + c(\OPT\cap L_{\mathrm{cross}})$ and $c(F_2)\leq c(\OPT) + c(\OPT\cap (L_{\mathrm{in
}}\setminus L_{\mathrm{up}})))$. This indeed leads to the desired guarantee because $c(F_1)+c(F_2)\leq 2c(\OPT) + c(\OPT\setminus L_{\mathrm{up}}) \leq 3c(\OPT)$.
\end{proof}

The key technical novelty in this part is our derivation of Lemma~\ref{lem:cross-link_rounding}. In particular, we develop a novel approach based on a dual-improvement argument.
More precisely, we start with a dual solution of a weak LP that has integrality gap $2$. This dual solution reveals a subset of constraints from which we derive stronger Chv\'atal-Gomory constraints. Moreover, we then show that those constraints allow us to select cross-links that lead to a residual problem with a reduced dual value allowing us to complete the selected cross-links cheaply.

\smallskip

In the last main part of our approach, we present a new technique to obtain approximation factors significantly below $1.5$ for CacAP and thus also for CAP. The only prior related technique is a method based on so-called rewirings presented in~\cite{grandoni_2018_improved} for TAP reaching an approximation factor of $1.458$. With our new technique, we obtain a substantial strengthening of Lemma~\ref{lem:bundle_rounding}, which enables achieving approximation factors below $1.4$ even for the much more general CacAP.
Whereas our method is inspired by the algorithm used in~\cite{grandoni_2018_improved} for TAP, our main contribution lies in a novel analysis approach based on assigning cross-links to terminals to build up stacks.
We then define a stack problem that allows us to obtain a strong lower bound on the improvement compared to Lemma~\ref{lem:bundle_rounding}. We expand on this in Section~\ref{sec:overviewBetterApprox}.

\subsection{Reduction to $k$-wide instances}\label{sec:overviewKWide}

We now provide an overview of some main ingredients used to prove our reduction theorem, Theorem~\ref{thm:main_reduction}. To obtain a black-box reduction, as stated in the theorem, we use a round-or-cut approach together with an extension of Adjiashvili's~\cite{adjiashvili_2018_beating} splitting idea for TAP. More precisely, given a point $x\in [0,1]^L$, we first use $x$ to break the original CacAP instance $\mathcal{I}=(G=(V,E),L)$ into independent CacAP sub-instances $\mathcal{I}_1,\ldots, \mathcal{I}_q$ that are all $k$-wide for $k=\frac{64(8+3\epsilon)}{\epsilon^2}$ and such that one of the following holds:
\begin{enumerate}
\item\label{item:reduction_round} Either we can, based on the $\alpha$-approximation $\mathcal{A}$ for $k$-wide instances, obtain for each sub-instance $\mathcal{I}_i$ with $i\in [q]$ a CacAP solution such that the union of all those solutions (together with further links fixed during the splitting process) leads to a $(1+\varepsilon) \alpha$-approximation for $\mathcal{I}$, as desired;

\item\label{item:reduction_sep} Or $\mathcal{A}$ can be used to certify that the restriction of $x$ to at least one of the sub-instances is not a convex combination of solutions. In this case we show that we can separate $x$ from the convex combination $P_{\mathrm{CacAP}}(\mathcal{I})$ of all solutions of the original instance $\mathcal{I}$.
\end{enumerate}
The above then implies Theorem~\ref{thm:main_reduction} by the round-or-cut framework. In short, assume that we know the cardinality $|\OPT|$ of an optimal solution $\OPT\subseteq L$ to instance $\mathcal{I}$, which can be guessed upfront.\footnote{One can also use a more efficient binary search procedure to obtain a good bound on the optimal value.} Then one can use the Ellipsoid Method to find a point in $P_{\mathrm{CacAP}}(\mathcal{I})\cap \{y\in [0,1]^L: y(L)=|\OPT|\}$ by using the above procedure as separation oracle. Indeed, given a vector $x\in[0,1]^L$ with $x(L)=|\OPT|$, either point~\ref{item:reduction_round} applies, in which case we obtain a solution with the desired guarantees and are done. Or point~\ref{item:reduction_sep} applies, in which case we return a valid separating hyperplane.
Because the Ellipsoid Method runs in polynomial time and $P_{\mathrm{CacAP}}(\mathcal{I})\cap \{y\in [0,1]^L: y(L)=|\OPT|\} \neq \emptyset$, point~\ref{item:reduction_round} must apply in some call to the separation oracle throughout the Ellipsoid Method.

\medskip

We now sketch how we break the instance into $k$-wide sub-instances such that always~\ref{item:reduction_round} or~\ref{item:reduction_sep} can be achieved. Let $\mathcal{C}_G\subseteq 2^V$ be all $2$-cuts of $G$. For convenience (and later use), we fix an arbitrary root $r\in V$ and define the cuts in $\mathcal{C}_G$ such that they do not contain $r$, i.e.,
\begin{equation*}
\mathcal{C}_G\coloneqq \left\{C\subseteq V\setminus \{r\} : |\delta_E(C)|=2\right\}\enspace.\footnote{We denote by \emph{cut} a vertex set $C\subseteq V$ with $\emptyset \neq C \neq V$. Depending on the literature, a cut sometimes refers to the edge set $\delta(C)$. The definition we use helps simplifying our exposition because we have two types of edges, the ones in $E$ and the links, and often use the same cut, i.e., vertex set, to discuss both types.}
\end{equation*}

We aim at splitting the instance at well-chosen cuts $C\in \mathcal{C}_G$. Splitting at $C$ leads to two sub-instances $\mathcal{I}_C$ and $\mathcal{I}_{V\setminus C}$, where $\mathcal{I}_C$ is the instance obtained from $\mathcal{I}$ by contracting all vertices except for $C$, and $\mathcal{I}_{V\setminus C}$ is the instance obtained from $\mathcal{I}$ by contracting $C$. See Figure~\ref{fig:splitting_example} for an example.
\begin{figure}[!ht]
\begin{center}
\begin{tikzpicture}[scale=0.33]

\tikzset{
  prefix node name/.style={%
    /tikz/name/.append style={%
      /tikz/alias={#1##1}%
    }%
  }
}

\tikzset{q2/.style={line width=1.5pt, dash pattern=on 3pt off 2pt on \the\pgflinewidth off 2pt}}

\tikzset{q4/.style={line width=2.5pt}}

\tikzset{node/.style={thick,draw=black,fill=white,circle,minimum size=6, inner sep=2pt}}

\newcommand\leftpart[2][]{
\begin{scope}[prefix node name=#1]

\begin{scope}[every node/.append style=node]
\node (1) at (13,13) {};
\node  (2) at (10,10) {};
\node  (3) at (12,6) {};
\node  (4) at (9,7) {};
\node  (5) at (7,4.5) {};
\node  (6) at (10.5,4) {};
\end{scope}

\begin{scope}[very thick]
\draw (1) --(2) --(3);
\draw (3) --(4) -- (5) -- (6) --(3);
\end{scope}

\begin{scope}[blue]
\draw[q4] (4) -- (6);
\draw[q2, bend left=35] (5) to (1);
\end{scope}

\end{scope}
}%

\newcommand\rightpart[2][]{
\begin{scope}[prefix node name=#1]

\begin{scope}[every node/.append style=node]
\node  (7) at (18,12) {};
\node  (8) at (18,7) {};
\node  (9) at (17,15) {};
\node (10) at (21,9) {};
\node (11) at (21,5) {};
\end{scope}

\begin{scope}[very thick]
\draw (7) -- (8);
\draw  (7) to[bend left=15] (9);
\draw  (9) to[bend left=15] (7);
\draw (8) -- (11) -- (10) -- (8);
\end{scope}

\begin{scope}[blue]
\draw[q2, bend left] (8) to (10);
\draw[q2, bend left=35] (10) to (11);
\end{scope}

\end{scope}
}%

\begin{scope}
\leftpart[o-]{}
\rightpart[o-]{}
\end{scope}

\draw[bend right=10,line width=3, gray, opacity=0.5] (15,1) to (15,16);
\node[left] (o-C) at (14.5,2) {$C$};
\node[right] (o-V-C) at (15.5,2) {$V\setminus C$};

\begin{scope}[very thick]
\draw (o-7) -- (o-1);
\draw (o-8) -- (o-3);
\end{scope}

\begin{scope}[q2,red]
\draw (o-2) -- (o-8);
\draw (o-1) -- (o-9);
\draw[bend left=35] (o-2) to (o-9);
\draw[bend left=45] (o-8) to (o-5);
\draw[bend left=25] (o-11) to (o-6);
\end{scope}

\begin{scope}[shift={(21,0)}]
\leftpart[s-]{}
\node[node,fill=black]  (cl) at (15,9) {};
\node[left] (o-C) at (14.5,2) {$\mathcal{I}_{C}$};
\end{scope}

\begin{scope}[shift={(26,0)}]
\rightpart[s-]{}
\node[node,fill=black]  (cr) at (15,9) {};
\node[right] (o-V-C) at (16,2) {$\mathcal{I}_{V\setminus C}$};
\end{scope}

\begin{scope}[very thick]
\draw (s-1) --(cl);
\draw (s-3) --(cl);
\end{scope}

\begin{scope}[very thick]
\draw (s-7) --(cr);
\draw (s-8) --(cr);
\end{scope}

\begin{scope}[q2,red]
\draw (s-2) -- (cl);
\draw[bend left=35] (s-2) to (cl);
\draw[bend left=95] (cl) to (s-5);
\draw[bend left=15] (cl) to (s-6);
\draw[bend left=35] (s-1) to (cl);
\end{scope}

\begin{scope}[q2,red]
\draw[bend left=35] (cr) to (s-8);
\draw[bend left=35] (cr) to (s-9);
\draw[bend left=35] (s-8) to (cr);
\draw[bend left=45] (s-11) to (cr);
\draw (cr) -- (s-9);
\end{scope}

\begin{scope}[yshift=+13cm]
\draw[q2] (49,1.6) -- +(2,0) node[right] {$0.5$};
\draw[q4] (49,0) -- +(2,0) node[right] {$1$};
\end{scope}

\end{tikzpicture} \end{center}
\caption{Example of splitting a CacAP instance $\mathcal{I}$ (together with a point $x\in [0,1]^L$) along a cut $C\in \mathcal{C}_G$ into the sub-instances $\mathcal{I}_C$ and $\mathcal{I}_{V\setminus C}$.
The red links are those crossing the cut $C$ and are present in both $\mathcal{I}_C$ and $\mathcal{I}_{V\setminus C}$.}\label{fig:splitting_example}
\end{figure}

Notice that, as exemplified in Figure~\ref{fig:splitting_example}, given a point $x\in [0,1]^L$, the splitting of the instance leads to sub-instances where $x$ naturally splits into $x_{C}, x_{V\setminus C}\in [0,1]^L$, obtained from $x$ by simply setting the values on all contracted links to $0$. Ideally, one would like to solve the sub-instances independently and then combine the solutions to one of the original instance while only suffering a negligible loss through the splitting in terms of the objective.
Even though this general plan is a quite direct generalization of Adjiashvili's approach for TAP, it comes with important additional challenges in CacAP which, in the special case of TAP, either do not even exist or are trivial to address.

Observe that the links in $\delta_L(C)$ appear in both sub-instances after splitting, whereas all other links only appear in one of the two sub-instances. Consequently, the total $x$-value on links of both sub-instances together after splitting is by $x(\delta_L(C))$ higher than before splitting. To make sure that this increase is small, we only split on so-called \emph{$x$-light} cuts, which are cuts $C\in \mathcal{C}_G$ with $x(\delta_L(C))\leq \sfrac{16}{\varepsilon}$. For this we want to reduce the instance $\mathcal{I}$ first to one with only $x$-light cuts. To this end, we compute a small link set covering all $x$-heavy $2$-cuts, i.e., the cuts in $\mathcal{C}_G$ that are not $x$-light.
We call such a link set a \emph{cheap $x$-heavy cut covering}.
\begin{definition}\label{def:x_heavy_cut_covering}
A link set $L_H\subseteq L$ is a \emph{cheap $x$-heavy cut covering} if
\begin{enumerate}
\item $\delta_L(C)\cap L_H\neq\emptyset \; \forall C\in \mathcal{C}_G$ with $x(\delta_L(C))>\frac{16}{\varepsilon}$\enspace, and
\item $|L_H| \leq \frac{\varepsilon}{2} \cdot x(L)$\enspace.
\end{enumerate}
\end{definition}
By including a cheap $x$-heavy cut covering $L_H$ in the solution, we obtain, as we explain formally in Section~\ref{sec:heavy_cut_covering}, a reduced CacAP instance without $x$-heavy cuts.
In case of TAP, finding such a link set $L_H$ is easy, because covering all heavy cuts is itself again a TAP instance in which each cut is heavily covered by the solution $x$. The existence of a set $L_H$ with the above properties then follows immediately by the fact that the integrality gap of the canonical LP relaxation for TAP (the cut-LP) is bounded by $2$. (See~\cite{adjiashvili_2018_beating} for details.) However, contrary to TAP, the problem of covering all $x$-heavy cuts in a CacAP instance does not reduce to another CacAP instance. Nevertheless, we show that it can be reduced to (constructively) bounding the integrality gap of the canonical relaxation of a natural rectangle hitting problem. We then provide a rounding method for the rectangle hitting problem that allows for efficiently constructing a cheap $x$-heavy cut covering.
Hence, we can assume from now on that all cuts in $\mathcal{C}_G$ are $x$-light.

Similar to TAP, to make sure that, after splitting, the (constant) increase in terms of total LP cost is small compared to $|\OPT|$, we only split at cuts $C\in \mathcal{C}_G$ that are \emph{big}, i.e., $|C\cap T| > \sfrac{k}{2}=\Theta(\sfrac{1}{\epsilon^2})$, where $T\subseteq V$ are all terminals of $G$. This allows us to amortize the increase in $x$-value due to splitting against the number of links needed to solve the sub-instance $\mathcal{I}_{C}$. This amortization works globally because we successively split on inclusion-wise minimal big $x$-light cuts.

Another key difference to TAP is how to combine solutions $F_{C}$, $F_{V\setminus C} \subseteq L$ to the sub-instances $\mathcal{I}_C$ and $\mathcal{I}_{V\setminus C}$, respectively, to obtain a solution for the original instance before splitting. In TAP, $F_C \cup F_{V\setminus C}$ is a solution of the original instance. However, this is in general not true for CacAP as highlighted in Figure~\ref{fig:combining_example}.

\begin{figure}[!ht]
\begin{center}
\begin{tikzpicture}[scale=0.33]

\tikzset{
  prefix node name/.style={%
    /tikz/name/.append style={%
      /tikz/alias={#1##1}%
    }%
  }
}

\tikzset{link/.style={line width=1.5pt}}

\tikzset{node/.style={thick,draw=black,fill=white,circle,minimum size=6, inner sep=2pt}}

\newcommand\leftpart[2][]{
\begin{scope}[prefix node name=#1]

\begin{scope}[every node/.append style=node]
\node (1) at (13,13) {};
\node  (2) at (10,10) {};
\node  (3) at (12,6) {};
\node  (4) at (9,7) {};
\node  (5) at (7,4.5) {};
\node  (6) at (10.5,4) {};
\end{scope}

\begin{scope}[very thick]
\draw (1) --(2) --(3);
\draw (3) --(4) -- (5) -- (6) --(3);
\end{scope}

\begin{scope}[orange, link]
\draw (4) -- (6);
\end{scope}

\end{scope}
}%

\newcommand\rightpart[2][]{
\begin{scope}[prefix node name=#1]

\begin{scope}[every node/.append style=node]
\node  (7) at (18,12) {};
\node  (8) at (18,7) {};
\node  (9) at (17,15) {};
\node (10) at (21,9) {};
\node (11) at (21,5) {};
\end{scope}

\begin{scope}[very thick]
\draw (7) -- (8);
\draw  (7) to[bend left=15] (9);
\draw  (9) to[bend left=15] (7);
\draw (8) -- (11) -- (10) -- (8);
\end{scope}

\begin{scope}[purple, link]
\draw[bend left] (8) to (10);
\end{scope}

\end{scope}
}%

\begin{scope}
\leftpart[o-]{}
\rightpart[o-]{}
\end{scope}

\draw[bend right=10,line width=3, gray, opacity=0.5] (15,1) to (15,16);
\node[left] (o-C) at (14.5,2) {$C$};
\node[right] (o-V-C) at (15.5,2) {$V\setminus C$};
\draw[bend left=10,line width=3, blue, opacity=0.5] (5,11.5) to (23,9.5);

\begin{scope}[very thick]
\draw (o-7) -- (o-1);
\draw (o-8) -- (o-3);
\end{scope}

\begin{scope}[link]
\draw[orange] (o-2) -- (o-8);
\draw[bend left=20, purple] (o-1) to (o-9);
\draw[bend right=20, orange] (o-1) to (o-9);
\draw[bend left=45,orange] (o-8) to (o-5);
\draw[bend left=25, purple] (o-11) to (o-6);
\end{scope}

\begin{scope}[shift={(-28,0)}]
\leftpart[s-]{}
\node[node,fill=black]  (cl) at (15,9) {};
\node[left] (o-C) at (14.5,2) {$\mathcal{I}_{C}$};
\end{scope}

\begin{scope}[shift={(-23,0)}]
\rightpart[s-]{}
\node[node,fill=black]  (cr) at (15,9) {};
\node[right] (o-V-C) at (16,2) {$\mathcal{I}_{V\setminus C}$};
\end{scope}

\begin{scope}[very thick]
\draw (s-1) --(cl);
\draw (s-3) --(cl);
\draw (s-7) --(cr);
\draw (s-8) --(cr);
\end{scope}

\begin{scope}[link,orange]
\draw (s-2) -- (cl);
\draw[bend left=35] (s-1) to (cl);
\draw[bend left=95] (cl) to (s-5);
\end{scope}

\begin{scope}[link,purple]
\draw[bend left=20] (cr) to (s-9);
\draw[bend left=45] (s-11) to (cr);
\end{scope}

\end{tikzpicture} \end{center}
\caption{Feasible solutions of the sub-instances $\mathcal{I}_C$ and $\mathcal{I}_{V\setminus C}$.
Their union (right picture) is not a feasible solution of the original instance $\Iscr$ because the 2-cut shown in blue is not covered. }\label{fig:combining_example}
\end{figure}

Nevertheless, we show that one can bound the number of links that need to be added to $F_C\cup F_{V\setminus C}$ to obtain a solution as follows.
\begin{restatable}{proposition}{propcombinesplitsol}\label{prop:combine_split_sol}
Given a feasible CacAP instance $\mathcal{I}=(G=(V,E),L)$, a $2$-cut $C\in \mathcal{C}_G$, and solutions $F_C, F_{V\setminus C} \subseteq L$ to $\mathcal{I}_C$ and $\mathcal{I}_{V\setminus C}$, respectively, one can efficiently compute a link set $F\subseteq L$ such that
\begin{enumerate}
\item $F_C \cup F_{V\setminus C} \cup F$ is a CacAP solution to $\mathcal{I}$, and
\item $|F|\leq |\delta_L(C)\cap F_C|-1$.
\end{enumerate}
\end{restatable}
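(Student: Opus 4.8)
The plan is to understand precisely how feasibility can fail when we take the union $F_C\cup F_{V\setminus C}$, and then repair it by carefully rerouting links that $F_C$ uses to cross $C$. Recall that $\mathcal{I}_C$ is obtained by contracting $V\setminus C$ to a single vertex (call it $c$), so the links in $\delta_L(C)\cap F_C$ become links incident to $c$ in $\mathcal{I}_C$; similarly $\delta_L(C)\cap F_{V\setminus C}$ become links incident to the contracted vertex in $\mathcal{I}_{V\setminus C}$. A $2$-cut $D\in\mathcal{C}_G$ of the original instance that is not covered by $F_C\cup F_{V\setminus C}$ must, since $F_C$ covers all $2$-cuts of $\mathcal{I}_C$ and $F_{V\setminus C}$ all those of $\mathcal{I}_{V\setminus C}$, be a cut that is ``split'' by $C$ in the sense that $D$ intersects both $C$ and $V\setminus C$ nontrivially. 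First I would use the standard structure of $2$-cuts in a cactus — namely that the $2$-cuts form a cross-free (laminar-like) family once rooted, and that two crossing $2$-cuts $C, D$ interact in a very controlled way — to argue that the uncovered cuts $D$ have a simple form relative to $C$: essentially $D\cap C$ is itself a $2$-cut of $\mathcal{I}_C$ (after contraction it becomes $D\cap C$ together with $c$, or its complement), and it is covered in $\mathcal{I}_C$ only by a link of $\delta_L(C)\cap F_C$ crossing into the contracted side.

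Next, the key idea: orient/order the links of $\delta_L(C)\cap F_C$ by where their ``outside'' endpoint would have to land, and observe that in $\mathcal{I}_C$ these links all go to the single vertex $c$, so covering the nested family of cuts $\{D\cap C : D \text{ uncovered}\}$ within $C$ is done, in $F_C$, by these $|\delta_L(C)\cap F_C|$ links. The repair set $F$ will consist of original links (or their shadows) chosen to re-cover, on the $V\setminus C$ side, whatever these crossing links were responsible for. The arithmetic $|F|\le |\delta_L(C)\cap F_C|-1$ should come from the fact that the cuts $D$ that need repair, restricted appropriately, form a chain, and one of the $|\delta_L(C)\cap F_C|$ crossing links — the ``outermost'' one relative to the root $r$, which we may assume lies on the $V\setminus C$ side — is automatically compatible with $F_{V\setminus C}$ and needs no replacement, saving exactly one. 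Concretely, I would: (i) list the crossing links $\ell_1,\dots,\ell_m\in\delta_L(C)\cap F_C$ sorted by the nesting of the cuts of $C$ they help cover; (ii) for each $\ell_i$ with $i\ge 2$ pick one link (a shadow of $\ell_i$ extended, or a link of $F_{V\setminus C}$, or a freshly chosen link) that together with $F_C\cup F_{V\setminus C}$ covers the uncovered cuts ``between'' $\ell_{i-1}$ and $\ell_i$; (iii) check $\ell_1$ needs no replacement.

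The main obstacle I expect is step (i)–(ii): making the rerouting argument clean requires a precise description of which $2$-cuts of $\mathcal{I}$ survive as $2$-cuts after contracting $C$ versus after contracting $V\setminus C$, and showing that the ``defect'' cuts $D$ (covered in neither sub-instance) are exactly those whose restriction to one side is covered only by a link crossing $C$. In a cactus this is governed by the cycle structure: $C$ and $D$ crossing means both are ``intervals'' on a common cycle or on a path between cycles, and the interaction is genuinely two-dimensional, unlike in TAP where cuts along a root-leaf path are totally ordered. I would handle this by passing to the cactus-tree / $3$-edge-connected-component representation, where $2$-cuts correspond to edges or to pairs of edges of a single cycle, reducing the crossing analysis to a finite case distinction (cut edge vs. cut edge, cut edge vs. cycle pair, cycle pair vs. cycle pair), and in each case exhibiting the chain structure of defect cuts and the one link that is saved. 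The cost bound and the polynomial running time then follow since $F$ is constructed by at most $m-1 \le |L|$ explicit choices, each found by a local search over links covering a specified cut.
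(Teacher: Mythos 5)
Your overall route is the same as the paper's: you observe that any $2$-cut left uncovered by $F_C\cup F_{V\setminus C}$ must cross $C$, that these defect cuts form a chain, that one repair link per defect cut suffices, and that the count should be charged against $\delta_L(C)\cap F_C$. However, the decisive step --- why the number of defect cuts is at most $|\delta_L(C)\cap F_C|-1$ --- is not actually argued. Your accounting (``each crossing link $\ell_i$ with $i\ge 2$ gets one replacement; the outermost $\ell_1$ is automatically compatible with $F_{V\setminus C}$ and is saved'') is not a proof: there is no reason a priori why exactly one crossing link should be ``compatible,'' nor why the replacements should be in bijection with the remaining crossing links. The correct argument is a pigeonhole on the \emph{gaps} of the chain, not on the links: writing the defect cuts (normalized so that each contains both endpoints of one fixed edge $e_1\in\delta_E(C)$) as $W_1\subsetneq\cdots\subsetneq W_m$ and setting $W_0=\emptyset$, $W_{m+1}=V$, one shows via submodularity that $(W_{i+1}\setminus W_i)\cap C$ is a nonempty $2$-cut of $G$ contained in $C$ for every $i\in\{0,\dots,m\}$; since $F_C$ must cover it and no link of $F_C$ lies in $\delta_L(W_i)\cup\delta_L(W_{i+1})$ (these cuts are uncovered), the covering link lies in $\delta_L(C)\cap L[W_{i+1}\setminus W_i]$, and these $m+1$ links are pairwise distinct. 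This yields $|\delta_L(C)\cap F_C|\ge m+1$, i.e., the ``$-1$'' comes from a chain of $m$ sets having $m+1$ gaps, not from one distinguished link being spared.

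Two smaller points. The chain property of the defect cuts is asserted rather than proved; the needed fact is that if two uncovered cuts crossed, the four resulting $2$-cuts would include one contained in $C$ or in $V\setminus C$, which would then be covered by $F_C$ or $F_{V\setminus C}$ --- a contradiction. This follows from the standard lemma that two crossing $2$-cuts of a $2$-edge-connected graph partition it into four $2$-cuts (by submodularity), so your instinct to invoke the structure of cactus cuts is right, but the detour through the cactus-tree / $3$-edge-connected-component representation and a case analysis is unnecessary. Finally, no ``rerouting'' or shadow replacement of the crossing links of $F_C$ is needed: one simply adds, for each defect cut $W_i$, an arbitrary link of $L$ covering it (which exists by feasibility of $\mathcal{I}$).
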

To control the cost of merging solutions, we look for solutions in the sub-instances with constant-size $|\delta_L(C)\cap F_C|$. Finally, we show that once the successive splitting procedure stops, in which case we broke the original CacAP instance into sub-instances that we call \emph{unsplittable}, we are left with $k$-wide instances only. 

\begin{restatable}{lemma}{lemunsplittable}\label{lem:unsplittable}
Every unsplittable CacAP instance is $k$-wide.
\end{restatable}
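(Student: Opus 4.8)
The plan is first to unpack what \emph{unsplittable} means: since the construction preprocesses the instance so that every cut in $\mathcal{C}_G$ is $x$-light, and since we only split at a big cut $C$ when both resulting sub-instances are strictly smaller (i.e.\ when $V\setminus C\neq\{r\}$), an instance is unsplittable exactly when there is no big cut $C\in\mathcal{C}_G$ with $V\setminus C\neq\{r\}$. Equivalently, every $C\in\mathcal{C}_G$ with $|C\cap T|>k/2$ satisfies $C=V\setminus\{r\}$. This leaves two cases: (a) the instance has no big cut at all; (b) $V\setminus\{r\}\in\mathcal{C}_G$, it is big, and it is the only big cut. Note that $V\setminus\{r\}\in\mathcal{C}_G$ forces $|\delta_E(r)|=2$.

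The one structural fact about cacti I will use is: for any vertex $v$ of a cactus $G$, every connected component $W$ of $G-v$ satisfies $|\delta_E(W)|=2$. To see this, recall that in a cactus any two distinct cycles share at most one vertex (otherwise some edge would lie in two cycles); hence for two distinct cycles $Z_1,Z_2$ through $v$ the sets $Z_1-v$ and $Z_2-v$ lie in different components of $G-v$, since a path between them together with arcs of $Z_1$ and $Z_2$ would form a cycle sharing an edge with $Z_1$. Consequently each component of $G-v$ contains $Z-v$ for exactly one cycle $Z$ through $v$, and its edge boundary is precisely the two edges of $Z$ at $v$. In case (a) this settles the claim immediately: take $r$ as the center; each component of $G-r$ lies in $\mathcal{C}_G$, hence is not big, hence has at most $k/2\le k$ terminals.

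For case (b), $r$ has degree $2$ and lies on a unique cycle $Z=(r=a_0,a_1,\dots,a_p,a_{p+1}=r)$, so $G-r$ is connected with vertex set $W=V\setminus\{r\}$. If $p=1$ (so $Z$ is a pair of parallel edges), I take $a_1$ as the center: the components of $G-a_1$ are $\{r\}$ (a single terminal) together with the components of $G-\{r,a_1\}$, each of which, by the structural fact applied at $a_1$, is a proper cut in $\mathcal{C}_G$ and hence has at most $k/2$ terminals. If $p\ge 2$, I take $r$ as the center and bound $|W\cap T|$. Let $H_i$ be the set of vertices of $W$ separated from $r$ by $a_i$; these partition $W$, and using the cactus structure (no chords of $Z$, and every branch attaches to $Z$ at a single vertex) one checks that each prefix $H_1\cup\dots\cup H_i$ and each suffix $H_i\cup\dots\cup H_p$ lies in $\mathcal{C}_G$, its boundary being two edges of $Z$. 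The prefix $H_1\cup\dots\cup H_{p-1}$ and the suffix $H_2\cup\dots\cup H_p$ are both proper (their complements contain $a_p$, resp.\ $a_1$, in addition to $r$), hence not big, so each has at most $k/2$ terminals; since their union is all of $W$, we get $|W\cap T|\le k$, and $r$ is a valid center.

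I expect the only real work to be bookkeeping: verifying the edge-boundary claims for the sets $H_i$ and their prefixes/suffixes from the cactus structure, and phrasing the ``$V\setminus C\neq\{r\}$'' side condition so that it correctly characterizes unsplittability. The conceptual point, small but worth isolating, is that the threshold $|C\cap T|>k/2$ defining a big cut is exactly what provides the factor-two slack: in case (b) the set $W$ is covered by two non-big cuts (a long prefix and a long suffix), which yields $|W\cap T|\le k$ as required for $k$-wideness.
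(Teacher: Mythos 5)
Your proof is correct and follows essentially the same route as the paper's: the same three-way case split (all components of $G-r$ are proper $2$-cuts and hence small; $r$ lies on a cycle of length at least $3$, where $V\setminus\{r\}$ is covered by two proper $2$-cuts --- your long prefix and suffix are exactly the paper's sets $V_1$ and $V_2$; and $r$ lies on a $2$-cycle, where the unique neighbour of $r$ is taken as the center). The cactus facts you defer to ``bookkeeping'' (each component of $G-v$ has edge boundary equal to the two edges of one cycle at $v$) are precisely what the paper verifies explicitly.
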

Moreover, we do the successive splitting in a way assuring that the sum of the $x$-values of all sub-instances is close to $x(L)$. One can now use $\mathcal{A}$ to obtain an $\alpha$-approximation for each sub-instance. If for any sub-instance, the $\alpha$-approximate solution is more than a factor of $\alpha$ higher than the total $x$-value, then this implies that the total $x$-value for the sub-instance is strictly less than the value of an optimal solution to the sub-instance.
We can convert this into a separating hyperplane to separate $x$ from $P_{\mathrm{CacAP}}(\mathcal{I})$.
We use this idea to show that we can always either return a separating hyperplane, or simply return the union of solutions to the sub-instances (together with the links used to cover all $x$-heavy cuts).

Section~\ref{sec:reduction} formalizes and proves the above ideas and statements, and shows that they indeed allow for obtaining the desired black-box reduction.

\subsection{Proof plan for backbone approach leading to Lemma~\ref{lem:cross-link_rounding}}\label{sec:overviewRounding}

Fiorini, Gro\ss, K\"onemann, and Sanit\`a~\cite{fiorini_2018_approximating} were able to show Lemma~\ref{lem:cross-link_rounding} for the special case of TAP. They presented an elegant proof based on strengthening a canonical relaxation by adding all $\{0,\sfrac{1}{2}\}$-Chv\'atal-Gomory (CG) cuts and showing integrality of a resulting description through results on binet matrices. A $\{0,\sfrac{1}{2}\}$-CG cut of a polyhedron $\{x\in \mathbb{R}^n : Ax \geq b\}$ with $A\in \mathbb{R}^{m\times n}$ and $b\in \mathbb{R}^m$ is any constraint of type
\begin{equation*}
\lambda^T A x \geq \left\lceil \lambda^T b\right\rceil\enspace,
\end{equation*}
where $\lambda\in \{0,\sfrac{1}{2}\}^m$ such that $A^T \lambda \in \mathbb{Z}^{n}$. It is not hard to observe that any such constraint is valid for any integer point $x\in \mathbb{Z}^n$ fulfilling $Ax \geq b$.
Crucial steps of the approach of~\cite{fiorini_2018_approximating} are specific to TAP, which includes the design of an efficient separation procedure for the introduced CG cuts and also a proof that the resulting constraint matrix is a binet matrix. Indeed, it is unclear how/whether one could efficiently separate over all $\{0,\sfrac{1}{2}\}$-CG cuts of the canonical relaxation of a CacAP instance $\mathcal{I}=(G=(V,E),L)$, which is given by
\begin{equation*}
P_{\mathrm{cut}}\coloneqq \left\{x\in \mathbb{R}_{\geq 0}^L\colon x(\delta_L(C)) \geq 1 \; \forall C\in \mathcal{C}_G\right\}\enspace,
\end{equation*}
and which we call the \emph{cut relaxation} in analogy to common terminology used in the context of TAP. (We recall that $\mathcal{C}_G \coloneqq \{C\subseteq V\setminus \{r\} : |\delta_E(C)| = 2\}$ where $r\in V$ is the center according to which we define the different link types.)
Nevertheless, we show how a carefully selected subset of CG cuts for $P_{\mathrm{cut}}$ over which efficient separation is possible, together with an appropriately chosen rounding procedure, allows for obtaining the desired polytope $P_{\mathrm{cross}}$ with the properties described in Lemma~\ref{lem:cross-link_rounding}.
One way to obtain a well-structured subset of constraints of $P_{\mathrm{cut}}$ for which we can efficiently separate over all $\{0,\sfrac{1}{2}\}$-CG cuts, is by considering constraints of $P_{\mathrm{cut}}$ corresponding to a laminar sub-family of $\mathcal{C}_G$, as this reduces to TAP. More precisely, let $\mathcal{L}\subseteq \mathcal{C}_G$
a laminar family. Then the relaxation
\begin{equation*}
P_{\mathrm{cut}}^{\mathcal{L}}\coloneqq
 \left\{x\in \mathbb{R}_{\geq 0}^L \colon x(\delta_L(C))\geq 1 \;\forall C\in \mathcal{L}\right\}
\end{equation*}
of $P_{\mathrm{cut}}$ is the cut relaxation of a TAP problem. Therefore, we can separate over all $\{0,\sfrac{1}{2}\}$-CG cuts of $P_{\mathrm{cut}}^{\mathcal{L}}$ using the procedure presented in~\cite{fiorini_2018_approximating}. Let $P_{\mathrm{CG}}^{\mathcal{L}}\subseteq P_{\mathrm{cut}}^{\mathcal{L}}$ be the polytope obtained by adding to $P_{\mathrm{cut}}^{\mathcal{L}}$ all of its $\{0,\sfrac{1}{2}\}$-CG cuts.
Of course, there can be exponentially many possibilities to choose a laminar family $\mathcal{L}$, and there may be many  further $\{0,\sfrac{1}{2}\}$-CG cuts of $P_{\mathrm{cut}}$ that cannot be obtained from any such laminar family. However, we show that there exists one laminar family $\mathcal{L}$ leading to $\{0,\sfrac{1}{2}\}$-CG cuts that are strong enough to obtain a polytope $P_{\mathrm{cross}}$ with the desired properties.
\begin{restatable}{proposition}{propgoodlaminar}\label{prop:overview_good_laminar}
One can efficiently compute a laminar family $\mathcal{L}\subseteq \mathcal{C}_G$ such that $P_{\mathrm{cross}}\coloneqq P_{\mathrm{cut}}\cap P_{\mathrm{CG}}^{\mathcal{L}}$ fulfills the properties of Lemma~\ref{lem:cross-link_rounding}.
\end{restatable}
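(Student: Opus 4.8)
The plan is to reduce the construction of $P_{\mathrm{cross}}$ to a suitable laminar family of $2$-cuts and then to invoke the TAP machinery of~\cite{fiorini_2018_approximating} as a black box on that family. Recall the three link types (cross-links, in-links, up-links) with respect to the fixed center $r$. First I would fix, inside each $r$-principal subcactus, a natural ``tree-like'' laminar structure on the $2$-cuts of $G$: the $2$-cuts of a cactus, restricted to those not containing $r$, are not all nested, but the ones that are \emph{crossed only by up-links and cross-links} (equivalently, those that behave like the principal cuts of the cactus hanging below a vertex) do form a laminar family. Concretely, for each vertex $v$ I would take the ``downward'' $2$-cut $C_v$ consisting of all descendants of $v$ in the chosen rooted orientation; these sets are laminar by construction, and a link $\ell$ is in $\delta_L(C_v)$ iff $\ell$ has exactly one endpoint among the descendants of $v$. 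Let $\mathcal{L}$ be the family of all such $C_v$ (pruning duplicates and trivial cuts); this is the laminar family I would use, and it is computable in polynomial time.

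The next step is to argue that $P_{\mathrm{cut}}^{\mathcal{L}}$ is literally the cut relaxation of a TAP instance: contract each $2$-edge-connected block of $G$ and orient toward $r$, so that the laminar family $\mathcal{L}$ becomes exactly the family of subtrees of the resulting tree $T'$, and each link of $L$ becomes a link on $T'$ (its shadow). Then the separation procedure and the binet-matrix integrality result of~\cite{fiorini_2018_approximating} apply verbatim to $P_{\mathrm{cut}}^{\mathcal{L}}$, yielding $P_{\mathrm{CG}}^{\mathcal{L}}$ with the claimed efficient separation and polynomially bounded facet complexity; intersecting with the (also polynomially separable, via min-cut) polytope $P_{\mathrm{cut}}$ preserves solvability and facet complexity, and clearly $P_{\mathrm{cross}} = P_{\mathrm{cut}} \cap P_{\mathrm{CG}}^{\mathcal{L}} \supseteq P_{\mathrm{CacAP}}(\mathcal{I})$ since every integral CacAP solution satisfies all cut constraints and hence all their $\{0,\sfrac12\}$-CG cuts. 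The genuinely new part, and the one I expect to be the main obstacle, is the rounding guarantee: given $x \in P_{\mathrm{cross}}$, produce $F \subseteq L$ with $c(F) \le c^T x + \sum_{\ell \in L_{\mathrm{in}} \setminus L_{\mathrm{up}}} c_\ell x_\ell$. Here I would follow the dual-improvement strategy flagged in the overview: start from the weak cut-LP dual (integrality gap $2$), use its support to pick out a family of cuts from $\mathcal{L}$, read off the corresponding $\{0,\sfrac12\}$-CG constraints that $x$ satisfies, and use these to select a set of cross-links so that the residual instance (the remaining uncovered $2$-cuts, after contracting the selected cross-links) has dual value dropped by enough that it can be completed within budget $c^T x$; the extra $\sum_{\ell \in L_{\mathrm{in}} \setminus L_{\mathrm{up}}} c_\ell x_\ell$ term pays for the in-links that are not up-links, which are precisely the links that the laminar/tree structure cannot ``charge twice.''

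The delicate points I anticipate are (i) verifying that the laminar family $\mathcal{L}$ of downward cuts really does capture \emph{all} $2$-cuts of a cactus not containing $r$ up to the behavior of cross-links — i.e., that every $2$-cut either lies in $\mathcal{L}$ or is covered by the same links as some cut in $\mathcal{L}$ together with a cross-link argument, so that feasibility for $P_{\mathrm{cut}}$ is not lost when we restrict CG cuts to $\mathcal{L}$; and (ii) making the dual-improvement bookkeeping tight, i.e., showing the selected cross-links decrease the residual dual optimum by at least their cost and that the residual problem is again a CacAP (or TAP-reducible) instance whose optimum is bounded by the residual $c^T x$ via the integrality gap $2$ argument applied after the CG strengthening has already ``bought down'' the gap. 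I would structure the formal proof as: (1) define $\mathcal{L}$ and prove laminarity and the cut-capture claim; (2) identify $P_{\mathrm{cut}}^{\mathcal{L}}$ with a TAP cut-relaxation and import the separation/integrality results of~\cite{fiorini_2018_approximating}; (3) establish solvability and facet-complexity bounds for $P_{\mathrm{cross}}$ and the inclusion $P_{\mathrm{cross}} \supseteq P_{\mathrm{CacAP}}(\mathcal{I})$; (4) carry out the dual-improvement rounding to get $F$ with the stated cost bound — the heart of the argument.
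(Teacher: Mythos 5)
Your high-level architecture matches the paper's (choose a laminar family $\Lscr$, import the separation and lossless-rounding machinery of Fiorini et al.\ for $P_{\mathrm{CG}}^{\Lscr}$, and use a dual-improvement argument to pay for completing the selected cross-links), and your observations that $P_{\mathrm{cross}}\supseteq P_{\mathrm{CacAP}}(\Iscr)$ is automatic and that feasibility is not lost by restricting CG cuts to $\Lscr$ (since the full $P_{\mathrm{cut}}$ is intersected in anyway) are correct. However, there is a genuine gap in the choice of $\Lscr$, and it is not a cosmetic one: you take $\Lscr$ to be the family of all ``downward'' cuts $C_v$ (descendants of $v$), whereas the paper takes $\Lscr=\supp(y^*)$ for the \emph{minimal} optimal solution $y^*$ (with laminar support) of the dual of the bidirected cut LP~\eqref{eq:dir_cut_dual}. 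This choice is what makes the rounding work. The paper's rounding first bidirects the in-links and applies the lossless TAP rounding over $\Lscr$ to get a set $F^{\Lscr}$ of cost at most $c^Tx+\sum_{\ell\in L_{\mathrm{in}}\setminus L_{\mathrm{up}}}c_\ell x_\ell$ that covers every cut in $\Lscr$; since $F^{\Lscr}$ need not be a CacAP solution, it then keeps only $R=F^{\Lscr}\cap L_{\mathrm{cross}}$ and invokes Lemma~\ref{lem:complete_cross_links}, which says $R$ can be completed at cost at most $\sum_{C:\,\delta_L(C)\cap R=\emptyset}y^*_C$. Because $F^{\Lscr}$ covers all of $\supp(y^*)$, every such cut is entered by some directed in-link of $F^{\Lscr}$, and dual feasibility of $y^*$ then charges the completion cost against $c(\vec F_{\mathrm{in}})$. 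Both halves of this charging collapse if $\Lscr$ is not the support of $y^*$: for the downward-cut family there is no analogue of Lemma~\ref{lem:complete_cross_links} bounding the completion cost of an arbitrary cross-link set, and no reason the in-link mass of the TAP rounding pays for it. Moreover, Lemma~\ref{lem:complete_cross_links} itself hinges on the \emph{minimality} of $y^*$ (Definition~\ref{def:minimal}), a notion your sketch does not identify; the existence and polynomial-time computability of such a $y^*$ (Lemma~\ref{lem:solving_simple_dual}) is a separate ingredient you would need. Finally, a smaller issue: replacing each link by ``its shadow'' on a contracted tree, as you suggest, is lossy for CacAP (the paper explicitly notes one cannot split a CacAP link into two up-links covering the same cuts); the correct move is to keep cross-links intact and only bidirect in-links.
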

Because the polytope $P_{\mathrm{cross}}$ was obtained from $P_{\mathrm{cut}}$ and $\{0,\sfrac{1}{2}\}$-CG cuts of $P_{\mathrm{cut}}^{\mathcal{L}}$, it has facet complexity bounded by $\poly(|V|)$, and, as discussed, we can efficiently separate over it. 
Hence, it remains to show how to find the laminar family $\mathcal{L}$ and how to round any point $x\in P_{\mathrm{cross}}$ to a solution $F\subseteq L$ with $c(F)\leq c^T x + \sum_{\ell\in L_{\mathrm{in}}\setminus L_{\mathrm{up}}} c_\ell x_\ell$, where $c\in \mathbb{R}^L_{\geq 0}$ are the given costs on the links.
To this end, we introduce a new LP-based $2$-approximation algorithm for CacAP from which we can derive both a laminar family $\mathcal{L}$ as required by Proposition~\ref{prop:overview_good_laminar} and a way to round solutions in the resulting polytope $P_{\mathrm{cross}}$.

The LP we use is inspired by a classical LP-based $2$-approximation for weighted TAP, which replaces each link that is not an up-link by two up-links covering the same cuts. 
(This approach can be interpreted as a rephrasing, in terms of an LP, of a $2$-approximation of \cite{khuller_1993_approximation} based on splitting links into up-links.)
More precisely, a non-up-link $\ell=\{w,v\}$ is replaced by $\{w,u\}$ and $\{v,u\}$, where $u$ is the least common ancestor of $w$ and $v$.
This transformation will increase the cost of an optimal solution by a factor of at most two and, moreover, the canonical linear program for weighted TAP, known as the \emph{cut LP}, is integral if only up-links are present.
(This is the LP that minimizes $c^T x$ over points $x$ in the polytope $P_{\mathrm{cut}}$ as defined above.)
For CacAP, it is however not possible to replace each link by two up-links covering the same cuts.

We provide a different viewpoint on the above-described $2$-approximation for weighted TAP that extends to CacAP. Instead of splitting a non-up-link $\{w,v\}$ into two links, we replace it by two directed links $(w,v)$ and $(v,w)$, both having the same cost as $\{w,v\}$. Let
\begin{equation*}
\vec{L}\coloneqq \bigcup_{\{v,w\}\in L}\left\{(v,w),(w,v)\right\}
\end{equation*}
be the resulting set of directed links. We then consider the following linear program where a $2$-cut can only be covered by links entering it:
\begin{equation}\label{eq:dir_cut_lp}
\min \left\{ \sum_{\ell \in \vec{L}} c_\ell x_\ell \colon
 x(\delta^-_{\vec{L}}(C))  \geq 1 \;\forall C\in \mathcal{C}_G, 
x \in \mathbb{R}^{\vec{L}}_{\geq 0}\right\}\enspace.
\end{equation}
One can observe that in the special case of weighted TAP, the above directed LP indeed corresponds to splitting each link into two up-links, where each of the two directions of an original link corresponds to one of the two up-links.
Passing to directed settings to approximate undirected augmentation problems is a classical idea that has been used in related contexts (see, e.g.,  \cite{frederickson_1981_approximating,khuller_1993_approximation,khuller_1994_biconnectivity}).

Moreover, standard combinatorial uncrossing arguments that exploit that $\mathcal{C}_G$ is an intersecting family, imply integrality of the directed LP. (Contrary to TAP, the constraint matrix of the directed LP is not totally unimodular.)
\begin{restatable}{lemma}{lembidirectedintergal}\label{lem:bidirected_integral}
The linear program~\eqref{eq:dir_cut_lp} is integral for any weighted CacAP instance.
\end{restatable}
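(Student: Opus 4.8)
The plan is to prove integrality of the directed cut LP~\eqref{eq:dir_cut_lp} via the standard polyhedral-combinatorics recipe: show that the constraint system defining the polyhedron $Q \coloneqq \{x \in \mathbb{R}^{\vec L}_{\geq 0} : x(\delta^-_{\vec L}(C)) \geq 1 \ \forall C \in \mathcal{C}_G\}$ is \emph{box-TDI}-like by an uncrossing argument on the dual, or more directly, that every vertex of $Q$ is integral. I would argue the latter: take an arbitrary vertex $x^*$ of $Q$, and let $\mathcal{F} \coloneqq \{C \in \mathcal{C}_G : x^*(\delta^-_{\vec L}(C)) = 1\}$ be the family of tight $2$-cuts together with (implicitly) the nonnegativity constraints that are tight. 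Since $x^*$ is a vertex, the tight constraints span $\mathbb{R}^{\vec L}$. The key is to replace $\mathcal{F}$ by a \emph{laminar} subfamily $\mathcal{L}$ whose tight constraints, together with the tight nonnegativity constraints, still span $\mathbb{R}^{\vec L}$; then the constraint matrix restricted to the support of $x^*$ becomes a network matrix (an interval/laminar incidence structure on directed links), which is totally unimodular, forcing $x^*$ to be integral.

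\textbf{Key steps.} First I would establish the uncrossing lemma: for $C, D \in \mathcal{F}$ that \emph{cross} (i.e.\ $C \cap D$, $C \setminus D$, $D \setminus C$ all nonempty), both $C \cap D$ and $C \cup D$ lie in $\mathcal{C}_G$ — this uses that $\mathcal{C}_G$ is an intersecting family of $2$-cuts of the cactus $G$, which follows from submodularity of the cut function $|\delta_E(\cdot)|$ together with the fact that $2$ is the minimum cut value so $|\delta_E(C \cap D)|, |\delta_E(C \cup D)| \geq 2$ and submodularity gives equality; one also checks $r \notin C \cap D$ and $r \notin C \cup D$ so these sets are genuine members of $\mathcal{C}_G$. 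Next, the crucial \emph{directed} counting identity: for the in-degree function $x \mapsto x(\delta^-_{\vec L}(C))$ one has, for any $x \geq 0$,
\begin{equation*}
x(\delta^-_{\vec L}(C)) + x(\delta^-_{\vec L}(D)) \;\geq\; x(\delta^-_{\vec L}(C \cap D)) + x(\delta^-_{\vec L}(C \cup D)),
\end{equation*}
because a directed link contributing to the right-hand side contributes at least as much to the left. Hence if $C, D \in \mathcal{F}$ then $x^*(\delta^-_{\vec L}(C \cap D)) = x^*(\delta^-_{\vec L}(C \cup D)) = 1$, so $C \cap D, C \cup D \in \mathcal{F}$, and moreover $\chi_{\delta^-_{\vec L}(C)} + \chi_{\delta^-_{\vec L}(D)} = \chi_{\delta^-_{\vec L}(C \cap D)} + \chi_{\delta^-_{\vec L}(C \cup D)}$ on the support of $x^*$ (every inequality above must hold with equality there). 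This last equality of incidence vectors on $\supp(x^*)$ lets me run the standard maximal-laminar-subfamily argument: pick $\mathcal{L} \subseteq \mathcal{F}$ laminar and maximal such that $\{\chi_{\delta^-_{\vec L}(C)} : C \in \mathcal{L}\}$ is linearly independent over $\supp(x^*)$; if some tight $C \in \mathcal{F}$ is not in the span, uncross it against $\mathcal{L}$ to derive a contradiction with maximality. Thus $\mathcal{L}$ plus tight nonnegativity constraints determine $x^*$ uniquely.

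\textbf{Finishing and the main obstacle.} Once $\mathcal{L}$ is laminar, I would show the $\supp(x^*) \times \mathcal{L}$ incidence matrix $M$ with $M_{\ell, C} = 1$ iff $\ell$ enters $C$ is (up to the usual operations) a network matrix: orient/root the laminar family as a forest; a directed link $\ell = (v, w)$ enters exactly those laminar sets $C$ that contain $w$ but not $v$, which form a contiguous path-segment in the laminar forest — precisely the network-matrix structure. Hence $M$ is totally unimodular, the vertex $x^*$ restricted to $\supp(x^*)$ solves $M y = \mathbf{1}$, and integrality follows. I expect the \emph{main obstacle} to be the honest verification that $\mathcal{C}_G$ is an intersecting family and that uncrossing stays within $\mathcal{C}_G$ (the cactus structure and the "min-cut value $= 2$" are what make this work, but one must be careful that $C \cap D$ and $C \cup D$ avoid $r$ and genuinely have $|\delta_E| = 2$ rather than $> 2$ — this is where submodularity plus minimality is invoked), together with pinning down the exact network-matrix representation so that total unimodularity can be cited cleanly; the directed in-degree inequality and the laminar-span argument are routine once these are in place.
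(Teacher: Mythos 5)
Your proposal is correct and follows essentially the same route as the paper's proof: take an extreme point, uncross the tight $2$-cut constraints using the submodularity of $|\delta_E(\cdot)|$ on the cactus together with the supermodularity-type inequality for directed in-degrees, pass to a maximal laminar family of tight sets spanning the support, and conclude integrality from total unimodularity of the resulting laminar incidence matrix. The only (welcome) difference is that you spell out why that matrix is a network matrix, whereas the paper simply asserts its total unimodularity.
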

Even though the integrality gap of LP~\eqref{eq:dir_cut_lp} is~$2$, and a direct rounding of solutions of~\eqref{eq:dir_cut_lp} is thus of little interest, we show that LP~\eqref{eq:dir_cut_lp} allows for deriving a crucial upper bound on the cost needed to augment a given set of cross-links to a CacAP solution. We derive this result through a particular, well-structured solution to the dual of~\eqref{eq:dir_cut_lp}, which is stated below:
\begin{equation}\label{eq:dir_cut_dual}
\max \Bigg\{ \sum_{C\in \mathcal{C}_G} y_C \colon 
\sum_{\substack{{C\in \mathcal{C}_G}: \ell \in \delta^-_{\vec{L}}(C)}} y_C \leq c_\ell \;\forall \ell\in \vec{L}, 
y \in \mathbb{R}_{\geq 0}^{\mathcal{C}_G} \Bigg\}\enspace.
\end{equation}
In particular, standard combinatorial uncrossing techniques allow for deriving that there is an optimal solution $y\in \mathbb{R}_{\geq 0}^{\mathcal{C}_G}$ whose support $\supp(y)$ is laminar. However, this is not sufficient for our purposes, where we need an additional property that we call minimality. Minimality means that we cannot move dual weight from any set to a strictly smaller one; it is formally defined as follows.
\begin{definition}\label{def:minimal}
A solution $y\in \mathbb{R}^{\mathcal{C}_G}_{\geq 0}$ to~\eqref{eq:dir_cut_dual} is \emph{minimal} if for any $\epsilon > 0$ and any two sets $C_1,C_2 \in \mathcal{C}_G$ with $C_1\subsetneq C_2$, the point $y-\epsilon \cdot \chi^{\{C_2\}} + \epsilon \cdot \chi^{\{C_1\}}$ is not a feasible solution to the dual problem~\eqref{eq:dir_cut_dual}.
\end{definition}
Again, combinatorial uncrossing techniques imply the existence of optimal solutions to~\eqref{eq:dir_cut_dual} that are minimal and have laminar support. Even though we do not need this property later, our proof techniques show that there is a unique minimal optimal solution $y^*$ to~\eqref{eq:dir_cut_dual} with laminar support. The following result is key in our approach and shows how $y^*$ can be used to bound the cost to complete a set of cross-links to a CacAP solution.
\begin{restatable}{lemma}{lemcompletecrosslinks}\label{lem:complete_cross_links}
Let $R \subseteq L_{\mathrm{cross}}$. Then one can efficiently compute a set $F\subseteq L$ such that
\begin{enumerate}
\item $R \cup F$ is a CacAP solution, and
\item $c(F) \leq
\underset{\substack{C\in \mathcal{C}_G:\\\delta_L(C)\cap R=\emptyset}}{\sum} y^*_C$,
where $y^*$ is the optimal solution of~\eqref{eq:dir_cut_dual} that is minimal and has laminar support.
\end{enumerate}
\end{restatable}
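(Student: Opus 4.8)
\emph{Proof plan.} Set $\mathcal{C}' \coloneqq \{C \in \mathcal{C}_G : \delta_L(C) \cap R = \emptyset\}$, so that the claimed bound reads $c(F) \le \sum_{C \in \mathcal{C}'} y^*_C$. Since every cut in $\mathcal{C}_G \setminus \mathcal{C}'$ is crossed by a link of $R$ while no cut of $\mathcal{C}'$ is, the set $R \cup F$ is a CacAP solution if and only if $F$ crosses every cut in $\mathcal{C}'$; equivalently, $\mathcal{C}'$ is precisely the family of $2$-cuts of the (still $2$-edge-connected) graph $(V, E \cup R)$, and we must add links of $L$ to make this graph $3$-edge-connected. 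The first step is to observe that $\mathcal{C}'$ is again an intersecting family: if $C_1, C_2 \in \mathcal{C}'$ cross, then $\delta_L(C_1 \cap C_2)$ and $\delta_L(C_1 \cup C_2)$ are contained in $\delta_L(C_1) \cup \delta_L(C_2)$, hence disjoint from $R$, so $C_1 \cap C_2, C_1 \cup C_2 \in \mathcal{C}'$. Consequently, exactly as for Lemma~\ref{lem:bidirected_integral}, the directed covering linear program
\begin{equation*}
\min\left\{ \sum_{\ell \in \vec{L}} c_\ell x_\ell \colon x(\delta^-_{\vec{L}}(C)) \ge 1 \ \forall C \in \mathcal{C}', \ x \in \mathbb{R}^{\vec{L}}_{\ge 0} \right\}
\end{equation*}
is integral. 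Hence it suffices to construct, in polynomial time, an integral point $z'$ of this program with $c^T z' \le \sum_{C \in \mathcal{C}'} y^*_C$: taking $F$ to be the set of undirected links underlying $\supp(z')$ then finishes the proof, since each directed link has the cost of its undirected counterpart, so $c(F) \le c^T z'$, while $F$ crosses every cut of $\mathcal{C}'$.

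To build $z'$ I would start from an integral optimum $z^*$ of the \emph{full} directed program~\eqref{eq:dir_cut_lp}, which exists by Lemma~\ref{lem:bidirected_integral} and may be taken $\{0,1\}$-valued. Complementary slackness with $y^*$ gives two handles: every $C \in \mathcal{L} \coloneqq \supp(y^*)$ is entered by exactly one directed link of $z^*$, and every $\ell = (a,b) \in \supp(z^*)$ has a tight dual constraint $c_\ell = \sum_{C \in \mathcal{L} : \ell \in \delta^-_{\vec{L}}(C)} y^*_C$. Since $\mathcal{L}$ is laminar, the $\mathcal{L}$-cuts entered by such an $\ell$ (those containing $b$ but not $a$) form a nested chain, so $c_\ell$ decomposes as the sum of $y^*$-weights along a chain of the laminar forest. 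The plan is then to process the laminar forest top-down and trim $z^*$: directed links of $z^*$ whose chain lies entirely outside $\mathcal{C}'$ are dropped, while a link whose chain meets both $\mathcal{C}'$ and its complement is replaced by \emph{shadows} --- one per maximal run of consecutive $\mathcal{C}'$-cuts along its chain --- chosen so that each shadow's dual constraint is tight for exactly the $y^*$-weights of the $\mathcal{C}'$-cuts it newly enters. Summing these contributions, each cut of $\mathcal{C}' \cap \mathcal{L}$ then pays only its own $y^*_C$, and cuts of $\mathcal{C}'$ with $y^*_C = 0$ are covered for free, so $c^T z' \le \sum_{C \in \mathcal{C}'} y^*_C$. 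As with the shadow replacements already used in Lemma~\ref{lem:bundle_rounding}, the shadows either lie in $L$ or can be added to the instance without changing it.

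The crucial point --- and the reason the hypothesis that $y^*$ is \emph{minimal}, and not merely laminarly supported, is indispensable --- is that the required tight shadows exist. Concretely, whenever we wish to cut a chain of $z^*$ off below some cut $C_t \in \mathcal{L}$, retaining only a lower sub-chain ending at a consecutive cut $C_s \subsetneq C_t$, we need a directed link entering $C_s$ (and the cuts below it) but not $C_t$ whose dual constraint is tight; this is precisely what minimality delivers, since the statement that dual weight cannot be moved from the larger cut $C_t$ to the smaller cut $C_s$ means that some such link is tight. I expect the main obstacle to be carrying out this trimming so that (i) every cut of $\mathcal{C}'$ --- not only those in $\mathcal{L}$ --- remains crossed after the replacements, (ii) no $y^*$-weight gets charged to two different selected links, and (iii) the chosen shadows are genuinely available; all three hinge on combining the laminar structure of $\supp(y^*)$ with minimality and careful bookkeeping along the forest. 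The remaining verifications --- that $z^*$ may be taken $\{0,1\}$-valued, that $y^*$, $z^*$ and the trimming are computable in polynomial time, and the passage from the directed solution $z'$ back to the undirected set $F$ --- are routine.
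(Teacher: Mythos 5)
Your framing of the problem is the same as the paper's: you pass to the residual instance, observe that the uncovered cuts $\mathcal{C}'$ still form an intersecting family so that the directed cut LP over $\mathcal{C}'$ is integral (this is exactly how the paper uses Lemma~\ref{lem:bidirected_integral}), and reduce everything to certifying that this LP has optimum value at most $\sum_{C\in\mathcal{C}'} y^*_C$. Where you diverge is in how that certificate is produced. The paper argues on the \emph{dual} side: it shows that the restriction of $y^*$ to $\mathcal{C}'$ is an \emph{optimal} dual solution for the residual instance (a contradiction argument with two cases, exchanging dual weight between a minimal violating set $W$ and sets crossing or containing it, which is where minimality of $y^*$ is spent), and then simply solves the residual LP and invokes strong duality. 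You instead try to exhibit a cheap \emph{primal} solution by trimming an integral optimum $z^*$ of the full LP using shadows, and this is where your argument has a genuine gap.

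Concretely: minimality of $y^*$ applied to consecutive cuts $C_s\subsetneq C_t$ of $\mathcal{L}$ only guarantees a \emph{tight} directed link $\ell'$ entering $C_s$ but not $C_t$; its head can be an arbitrary vertex of $C_s$, so nothing forces $\ell'$ to enter the cuts of the chain \emph{below} $C_s$ that the discarded link $\ell$ was entering. By complementary slackness each cut of $\mathcal{L}$ is entered by exactly one link of $z^*$, so those lower cuts (which lie in $\mathcal{C}'$, since $\mathcal{C}'$ is downward closed under inclusion when $R\subseteq L_{\mathrm{cross}}$ --- a fact the paper proves and uses but your sketch does not note) become uncovered after the swap. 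Symmetrically, $\ell'$ may enter cuts of $\mathcal{L}$ in a different branch of the laminar forest inside $C_s$, whose $y^*$-weight is already charged to their own retained links, so the tightness of $\ell'$ no longer translates into ``$c_{\ell'}$ equals the $y^*$-weight of the cuts it newly covers,'' and the bound $c^Tz'\le\sum_{C\in\mathcal{C}'}y^*_C$ can fail through double charging. These are exactly your obstacles (i) and (ii), and they are not bookkeeping details: repairing them seems to require an inductive exchange argument over the laminar forest that is essentially the paper's dual-optimality proof in disguise. As written, the proposal does not constitute a proof.
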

In words, a set of cross-links $R\subseteq L_{\mathrm{cross}}$ can be completed to a CacAP solution at cost no higher than the total $y^*$-load on $2$-cuts not crossed by $R$. We reiterate that, for this result to hold, we crucially exploit that $y^*$ is minimal.
A key feature of Lemma~\ref{lem:complete_cross_links} is that it reduces the completion cost to how $R$ covers a single laminar family
\begin{equation*}
\mathcal{L} \coloneqq \supp(y^*)\enspace.
\end{equation*}
However, how cross-links can cover a single laminar family has been well-understood in the context of TAP---which is precisely the problem of covering a single laminar family---since the work of Fiorini, Gro\ss, K\"onemann, and Sanit\`a~\cite{fiorini_2018_approximating}.
We leverage this to prove Proposition~\ref{prop:overview_good_laminar}.

\subsection{Going below approximation guarantee $1.5$ through stack analysis}\label{sec:overviewBetterApprox}

To obtain approximation factors below $1.5$, we strengthen the first backbone procedure, i.e., the one based on bundle constraints and whose guarantee is stated in Lemma~\ref{lem:bundle_rounding}. We restrict ourselves to unweighted $k$-wide instances of the Cactus Augmentation Problem in this part. Let $q\in \mathbb{Z}_{\geq 1}$ be the number of principal subcacti. Whereas our algorithm is strongly inspired by the rewiring technique introduced in~\cite{grandoni_2018_improved} for TAP, our main contribution lies in a novel and significantly stronger analysis approach. We first sketch the algorithm.

Notice that the way we showed Lemma~\ref{lem:bundle_rounding}, a solution to each principal subcactus was obtained independently and the union of these solutions was returned. To improve on this, we use that after merging solutions for each principal subcactus, some cross-links $\ell$ originally used for one subcactus $i\in [q]\coloneqq \{1,\ldots, q\}$ may not be needed anymore because cross-links from solutions of principal subcacti other than $i$ may render $\ell$ redundant such that it can be deleted while maintaining feasibility.
To make sure that a significant fraction of cross-links can be deleted due to such synergies, we only consider a well-structured type of solution in each principal subcactus, which we call $L_{\mathrm{cross}}$-minimal, and which is inspired by the shadow-minimality used in~\cite{grandoni_2018_improved}. In particular,  an $L_{\mathrm{cross}}$-minimal link set $F\subseteq L$ has the property that there are no two vertices $u,v \in V$ with $u$ being an ancestor of $v$ and such that $u$ and $v$ are both endpoints of cross-links in $F$. In other words, the endpoints of $F\cap L_{\mathrm{cross}}$ do not have any ancestry relationships.%
\footnote{We provide the precise definition of $L_{\mathrm{cross}}$-minimality in Section~\ref{sec:stack_analysis}. In the context of TAP, $L_{\mathrm{cross}}$-minimality can be seen to be a weaker form of shadow-minimality, which requires that if any link in $F$ gets replaced by a shadow, i.e., a shorter link, then strictly fewer cuts are covered. In TAP, shadow-minimality is much easier to deal with than in CacAP, because, contrary to CacAP, it reduces to a pairwise notion of minimality. This is why we introduce $L_{\mathrm{cross}}$-minimality, which has crucial properties we need while allowing efficient optimization over $L_{\mathrm{cross}}$-minimal solutions in each principal subcactus in an $O(1)$-wide instance.}

Our approach is based on the polytope $P_{\mathrm{bundle}}^{\min} = P_{\mathrm{bundle}}^{\mathrm{min}}(G,L)\subseteq [0,1]^L$ consisting of all points $x\in [0,1]^L$ such that for each principal subcactus $i\in [q]$, the restriction $x_i$ of $x$ to only the links involved in subcactus $i$ is a convex combination of $L_{\mathrm{cross}}$-minimal solutions. See Figure~\ref{fig:bundle_cross_min_ex} for an example.
As we show, one can efficiently separate over $P_{\mathrm{bundle}}^{\mathrm{min}}$. Our overall approach then first obtains an optimal solution $x^*$ to the LP $\min\{1^T x : x \in P_{\mathrm{bundle}}^{\mathrm{min}} \cap P_{\mathrm{cross}}\}$, where $P_{\mathrm{cross}}$ is the polytope from Lemma~\ref{lem:cross-link_rounding}, which is needed for the other backbone procedure. We then return the better of two rounding procedures, either the one guaranteed by Lemma~\ref{lem:cross-link_rounding} or the procedure we now outline and which strengthens Lemma~\ref{lem:bundle_rounding}.

\begin{figure}[!ht]
\begin{center}
\begin{tikzpicture}[scale=0.35]

\tikzset{
  prefix node name/.style={%
    /tikz/name/.append style={%
      /tikz/alias={#1##1}%
    }%
  }
}

\tikzset{root/.style={fill=white,minimum size=13}}

\tikzset{q1/.style={line width=1pt, dash pattern=on \pgflinewidth off 1pt}}

\tikzset{q2/.style={line width=1.5pt, dash pattern=on 3pt off 2pt on \the\pgflinewidth off 2pt}}

\tikzset{q3/.style={line width=2pt, dash pattern=on 7pt off 2pt}}

\tikzset{q4/.style={line width=2.5pt}}

\newcommand\cac[2][]{

\begin{scope}[prefix node name=#1]

\tikzset{npc/.style={#2}}

\begin{scope}[every node/.append style={thick,draw=black,fill=white,circle,minimum size=6, inner sep=2pt}]
\node[root]  (1) at (18.68,-1.64) {r};
\node  (2) at (16.80,-5.50) {};
\node  (3) at (19.08,-5.16) {};
\node  (4) at (16.66,-9.60) {};
\node  (5) at (18.56,-7.16) {};
\node  (6) at (18.24,-8.74) {};
\node  (7) at (20.44,-6.76) {};
\node  (8) at (22.28,-5.68) {};
\node  (9) at (21.00,-3.58) {};
\node (10) at (20.40,-9.36) {};
\node (11) at (22.36,-9.32) {};

\begin{scope}[npc]
\node (12) at (12.24,-4.60) {};
\node (13) at (14.20,-4.64) {};
\node (14) at (11.24,-6.32) {};
\node (15) at (12.68,-7.72) {};
\node (16) at (14.80,-7.42) {};
\node (17) at (24.88,-4.72) {};
\node (18) at (27.60,-4.64) {};
\node (19) at (24.44,-7.64) {};
\node (20) at (26.80,-7.76) {};
\node (21) at (8.04,-4.56) {};
\node (22) at (6.16,-5.36) {};
\node (23) at (6.08,-7.28) {};
\node (24) at (8.00,-6.56) {};
\end{scope}
\end{scope}

\begin{scope}[very thick]

\draw  (1) --  (2);
\draw  (2) --  (3);
\draw  (3) --  (1);
\draw  (9) --  (3);
\draw  (2) to[bend left=15] (4);
\draw  (4) to[bend left=15] (2);
\draw  (3) to[bend left=25] (5);
\draw  (5) to[bend left=25] (3);
\draw  (5) to[bend left=25] (6);
\draw  (6) to[bend left=25] (5);
\draw  (3) --  (7);
\draw  (7) --  (8);
\draw  (8) --  (9);
\draw  (7) -- (10);
\draw (10) -- (11);
\draw (11) --  (7);

\begin{scope}[npc]
\draw  (1) to[bend left=2] (21);
\draw  (1) -- (12);
\draw  (1) -- (17);
\draw (21) to[bend left=2]  (1);
\draw (18) --  (1);
\draw (13) --  (1);
\draw (12) to[bend left=15] (14);
\draw (14) to[bend left=15] (12);
\draw (16) -- (13);
\draw (12) -- (13);
\draw (13) -- (15);
\draw (15) -- (16);
\draw (17) -- (18);
\draw (17) to[bend left=13] (19);
\draw (19) to[bend left=13] (17);
\draw (17) to[bend left=13] (20);
\draw (20) to[bend left=13] (17);
\draw (24) -- (21);
\draw (21) -- (22);
\draw (22) -- (23);
\draw (23) -- (24);
\end{scope}

\end{scope}

\end{scope}

}%

\begin{scope} %

\begin{scope}[scale=1.2,xshift=-1cm]
\cac[]{}
\end{scope}

\begin{scope}

\begin{scope}[red]

\begin{scope}[q1]
\draw (2) to (14);
\draw (3) to (13);
\draw (4) to (19);
\draw (4) to[bend left=20] (21);
\draw (5) to (16);
\draw (6) to (15);
\draw (7) to (14);
\draw (8) to (19);
\draw (9) to (13);
\draw (9) to (17);
\draw (9) to (18);
\draw (14) to (21);
\end{scope}

\begin{scope}[q2]
\draw (10) to[bend left=25] (23);
\end{scope}

\end{scope}

\begin{scope}[blue]

\begin{scope}[q1]
\draw (1) to[bend right=40] (4);
\draw (1) to[out=202,in=100,out looseness=2, in looseness=0.7] (14);
\draw (2) to (4);
\draw (5) to (6);
\draw (5) to (10);
\draw (7) to[bend right] (10);
\draw (7) to[bend left=20] (11);
\draw (8) to[bend right] (9);
\draw (17) to (19);
\draw (19) to (20);
\end{scope}

\begin{scope}[q2]
\draw (3) to[bend left=35] (6);
\draw (12) to (15);
\draw (21) to[out=170, in =115,looseness=1.7] (23);
\end{scope}

\begin{scope}[q3]
\draw (8) to (11);
\draw (18) to (20);
\end{scope}

\begin{scope}[q4]
\draw (15) to[bend left] (16);
\draw (22) to (24);
\end{scope}

\end{scope}

\end{scope}
\end{scope}%

\begin{scope}[shift={(35,-2)}]%
\def\ll{30mm} %
\def\vs{12mm} %

\begin{scope}[every node/.style={circle,inner sep=0pt,minimum size=5pt}]
\node[fill=red] (rd) at (0,0) {};
\node[fill=blue] (bd) at (0,-\vs) {};

\node at (rd.east)[right=2pt] {cross-links};
\node at (bd.east)[right=2pt] {in-links};
\end{scope}

\begin{scope}[yshift=-4cm]
\draw[q1] (0,0) -- +(\ll,0) node[right] {$0.25$};
\draw[q2,yshift=-\vs] (0,0) -- +(\ll,0) node[right] {$0.5$};
\draw[q3,yshift=-2*\vs] (0,0) -- +(\ll,0) node[right] {$0.75$};
\draw[q4,yshift=-3*\vs] (0,0) -- +(\ll,0) node[right] {$1$};
\end{scope}

\end{scope}

\begin{scope}[shift={(-10,-12)}] %

\tikzset{
grayout/.style={
every node/.append style={draw=black!30}, draw=black!30
},
inlink/.style={q1,blue},
crlink/.style={q1,red}
}

\begin{scope}[shift={(0,-1)}]
\cac[c1-]{grayout}
\end{scope}

\begin{scope}[crlink]
\draw (c1-3) to (c1-13);
\draw (c1-4) to[bend left=20] (c1-21);
\end{scope}

\begin{scope}[inlink]
\draw (c1-3) to[bend left=35] (c1-6);
\draw (c1-5) to (c1-10);
\draw (c1-8) to[bend right] (c1-9);
\draw (c1-8) -- (c1-11);
\end{scope}

\begin{scope}[shift={(24.5,-1)}]
\cac[c2-]{grayout}
\end{scope}

\begin{scope}[crlink]
\draw (c2-4) to (c2-19);
\draw (c2-7) to (c2-14);
\draw (c2-9) to (c2-13);
\end{scope}

\begin{scope}[inlink]
\draw (c2-3) to[bend left=35] (c2-6);
\draw (c2-8) -- (c2-11);
\draw (7) to[bend right] (10);
\end{scope}

\begin{scope}[shift={(0,-12)}]
\cac[c3-]{grayout}
\end{scope}

\begin{scope}[crlink]
\draw (c3-5) to (c3-16);
\draw (c3-8) to (c3-19);
\draw (c3-9) to (c3-18);
\draw (c3-10) to[bend left=25] (c3-23);
\end{scope}

\begin{scope}[inlink]
\draw (c3-1) to[bend right=40] (c3-4);
\draw (c3-5) to (c3-6);
\draw (c3-7) to[bend left=20] (c3-11);
\end{scope}

\begin{scope}[shift={(24.5,-12)}]
\cac[c4-]{grayout}
\end{scope}

\begin{scope}[crlink]
\draw (c4-2) to (c4-14);
\draw (c4-6) to (c4-15);
\draw (c4-9) to (c4-17);
\draw (c4-10) to[bend left=25] (c4-23);
\end{scope}

\begin{scope}[inlink]
\draw (c4-2) -- (c4-4);
\draw (c4-8) -- (c4-11);
\end{scope}

\end{scope}

\end{tikzpicture}
 \end{center}

\caption{The top picture depicts a point $x\in P_{\mathrm{bundle}}^{\min}(G,L)$. The four graphs below highlight how the restriction of $x$ to links involved in the third principal subcactus can be written as a convex combination of four $L_{\mathrm{cross}}$-minimal solutions for this principal subcactus, each with a coefficient of $\sfrac{1}{4}$.}\label{fig:bundle_cross_min_ex}
\end{figure}

We describe a randomized rounding procedure for any point $x \in P_{\mathrm{bundle}}^{\mathrm{min}}$, which can be derandomized as we show later. First we obtain a solution for each principal subtcactus $i\in [q]$ independently as follows. We write the restriction of $x$ to the subcactus $i$ as a convex combination of $L_{\mathrm{cross}}$-minimal solutions and return one of these solutions at random with probability equal to its coefficient in the convex combination. Hence, for the third principal subcactus of the example in Figure~\ref{fig:bundle_cross_min_ex}, we can use the highlighted convex combination and pick uniformly at random one of the four illustrated solutions (because all coefficients in the convex combination are $\sfrac{1}{4}$).
By taking the (multi-)union of one randomly sampled solution $F_i\subseteq L$ for each principal subcactus $i\in [q]$, we obtain a solution $F\coloneqq \cup_{i=1}^q F_i$. Notice that because each cross-link contributes to two principal subcacti, it can be sampled by each one of them or even by both.
Figure~\ref{fig:bundle_rounding_sample} shows an example solution $F$ for the instance and the fractional point depicted in Figure~\ref{fig:bundle_cross_min_ex}. 

\begin{figure}[t]
\begin{center}
\begin{tikzpicture}[scale=0.5]

\tikzset{
  prefix node name/.style={%
    /tikz/name/.append style={%
      /tikz/alias={#1##1}%
    }%
  }
}

\tikzset{
grayout/.style={
every node/.append style={draw=black!30}, draw=black!30
}
}

\tikzset{root/.style={fill=white,minimum size=13, font=\normalsize}}

\tikzset{crlink/.style={red,stealth-,line width=1pt}}
\tikzset{inlink/.style={blue,line width=1pt}}

\newcommand\cac[2][]{

\begin{scope}[prefix node name=#1]

\tikzset{npc/.style={#2}}

\begin{scope}[every node/.append style={thick,draw=black,fill=white,circle,minimum size=12, inner sep=2pt, font=\scriptsize}]
\node[root]  (1) at (18.68,-1.64) {r};
\node  (2) at (16.80,-5.50) {};
\node  (3) at (19.08,-5.16) {};
\node  (4) at (16.66,-9.60) {};
\node  (5) at (18.56,-7.16) {};
\node  (6) at (18.24,-8.74) {};
\node  (7) at (20.44,-6.76) {};
\node  (8) at (22.28,-5.68) {};
\node  (9) at (21.00,-3.58) {};
\node (10) at (20.40,-9.36) {};
\node (11) at (22.36,-9.32) {};

\begin{scope}[npc]
\node (12) at (12.24,-4.60) {};
\node (13) at (14.20,-4.64) {};
\node (14) at (11.24,-6.32) {};
\node (15) at (12.68,-7.72) {};
\node (16) at (14.80,-7.42) {};
\node (17) at (24.88,-4.72) {};
\node (18) at (27.60,-4.64) {};
\node (19) at (24.44,-7.64) {};
\node (20) at (26.80,-7.76) {};
\node (21) at (8.04,-4.56) {};
\node (22) at (6.16,-5.36) {};
\node (23) at (6.08,-7.28) {};
\node (24) at (8.00,-6.56) {};
\end{scope}
\end{scope}

\begin{scope}[every node/.append style={font=\scriptsize}]
\foreach \i in {2,3,4,5,6,7,8,9,10,11,12,13,14,15,16,17,18,19,20,21,22,23,24} {
\pgfmathparse{int(\i-1)}
\node at (\i) {${\pgfmathresult}$};
}
\end{scope}

\begin{scope}[grayout, very thick]

\draw  (1) --  (2);
\draw  (2) --  (3);
\draw  (3) --  (1);
\draw  (9) --  (3);
\draw  (2) to[bend left=15] (4);
\draw  (4) to[bend left=15] (2);
\draw  (3) to[bend left=25] (5);
\draw  (5) to[bend left=25] (3);
\draw  (5) to[bend left=25] (6);
\draw  (6) to[bend left=25] (5);
\draw  (3) --  (7);
\draw  (7) --  (8);
\draw  (8) --  (9);
\draw  (7) -- (10);
\draw (10) -- (11);
\draw (11) --  (7);

\begin{scope}[npc]
\draw  (1) to[bend left=2] (21);
\draw  (1) -- (12);
\draw  (1) -- (17);
\draw (21) to[bend left=2]  (1);
\draw (18) --  (1);
\draw (13) --  (1);
\draw (12) to[bend left=15] (14);
\draw (14) to[bend left=15] (12);
\draw (16) -- (13);
\draw (12) -- (13);
\draw (13) -- (15);
\draw (15) -- (16);
\draw (17) -- (18);
\draw (17) to[bend left=13] (19);
\draw (19) to[bend left=13] (17);
\draw (17) to[bend left=13] (20);
\draw (20) to[bend left=13] (17);
\draw (24) -- (21);
\draw (21) -- (22);
\draw (22) -- (23);
\draw (23) -- (24);
\end{scope}

\end{scope}

\end{scope}

}%

\begin{scope} %

\begin{scope}[xshift=-1cm]
\cac[]{}
\end{scope}

\begin{scope}

\begin{scope}[crlink]
\draw (2) to (14);
\draw (6) to (15);
\draw (9) to[bend right=20] (17);
\draw (10) to[bend left=25] (23);

\draw (21) to[bend right=20] (4);

\draw (14) to (21);
\draw (13) to (9);

\draw (17) to[bend right=5] (9);
\end{scope}

\begin{scope}[inlink]

\draw (2) to (4);
\draw (8) to (11);

\draw (22) to (24);
\draw (21) to[out=170, in =115,looseness=1.7] (23);

\draw (12) to (15);
\draw (15) to[bend left] (16);

\draw (17) to (19);
\draw (18) to (20);

\end{scope}

\end{scope}
\end{scope}%

\end{tikzpicture}
 \end{center}

\caption{A possible solution obtained by taking the union of one sampled solution for each principal subcactus. In particular, for the third principal subcactus, the solution corresponding to the last picture in Figure~\ref{fig:bundle_cross_min_ex} was sampled. For illustrative purposes, cross-links are directed toward the principal subcactus which sampled them.}\label{fig:bundle_rounding_sample}
\end{figure}

To improve a sampled solution as shown in Figure~\ref{fig:bundle_rounding_sample}, we focus on sampled cross-links and identify a maximum cardinality set of sampled cross-links $R\subseteq L_{\mathrm{cross}}$ that can be deleted without compromising feasibility. Our algorithm then returns all sampled links except for $R$. The problem of finding a largest removable set $R$ turns out to reduce to a minimum edge cover problem, which can be solved efficiently. The main difficulty lies in finding a good way to analyze the gain through this deletion, i.e., the (expected) cardinality of $R$.

We now sketch a simplified version of how we lower bound the expected cardinality of $R$, which, nevertheless, highlights key aspects. To this end, we exemplify the approach on the instance depicted in Figure~\ref{fig:bundle_cross_min_ex}, where we denote by $x\in P_{\mathrm{bundle}}^{\mathrm{min}}$ the fractional point before rounding.
We first assign each non-terminal to a terminal that is one of its descendants. Figure~\ref{fig:stack_principal_subcact_ex} shows, for one principal subcactus, an example of such an assignment. Consider now one terminal and the vertices assigned to it, say vertex~$5$ in Figure~\ref{fig:stack_principal_subcact_ex}, to which vertices $2$ and $4$ have been assigned. By the way we did the assignment, vertex $5$ together with $2$ and $4$ have a natural total order stemming from the ancestry relationship with terminal~$5$ at the bottom followed by vertex~$4$ and then vertex~$2$.

\begin{figure}[!ht]
\begin{center}
﻿\begin{tikzpicture}[scale=0.5]

\pgfdeclarelayer{bg}
\pgfdeclarelayer{fg}
\pgfsetlayers{bg,main,fg}

\tikzset{
  prefix node name/.style={%
    /tikz/name/.append style={%
      /tikz/alias={#1##1}%
    }%
  }
}

\tikzset{root/.style={fill=white,minimum size=13}}

\tikzset{q1/.style={line width=1pt, dash pattern=on \pgflinewidth off 1pt}}

\tikzset{q2/.style={line width=1.5pt, dash pattern=on 3pt off 2pt on \the\pgflinewidth off 2pt}}

\tikzset{q3/.style={line width=2pt, dash pattern=on 7pt off 2pt}}

\tikzset{q4/.style={line width=2.5pt}}

\tikzset{
term/.style={fill=black!20, rectangle, minimum size=10},
termg/.style={fill=black!20, rectangle, minimum size=10},
tf/.append style={font=\scriptsize\color{black}},
}

\newcommand\cac[2][]{

\begin{scope}[prefix node name=#1]

\tikzset{npc/.style={#2}}

\begin{scope}[every node/.append style={thick,draw=black,fill=white,circle,minimum size=12, inner sep=2pt}]
\node[root]  (1) at (18.68,-1.64) {r};
\node  (2) at (16.80,-5.50) {};
\node  (3) at (19.08,-5.16) {};
\node[term]  (4) at (16.66,-9.60) {};
\node  (5) at (18.56,-7.16) {};
\node[term]  (6) at (18.24,-8.74) {};
\node  (7) at (20.44,-6.76) {};
\node[term]  (8) at (22.28,-5.68) {};
\node[term]  (9) at (21.00,-3.58) {};
\node[term] (10) at (20.40,-9.36) {};
\node[term] (11) at (22.36,-9.32) {};

\begin{scope}[npc]
\node (12) at (12.24,-4.60) {};
\node (13) at (14.20,-4.64) {};
\node[termg] (14) at (11.24,-6.32) {};
\node[termg] (15) at (12.68,-7.72) {};
\node[termg] (16) at (14.80,-7.42) {};
\node (17) at (24.88,-4.72) {};
\node[termg] (18) at (27.60,-4.64) {};
\node[termg] (19) at (24.44,-7.64) {};
\node[termg] (20) at (26.80,-7.76) {};
\node (21) at (8.04,-4.56) {};
\node[termg] (22) at (6.16,-5.36) {};
\node[termg] (23) at (6.08,-7.28) {};
\node[termg] (24) at (8.00,-6.56) {};
\end{scope}
\end{scope}

\begin{scope}[every node/.append style={font=\scriptsize}]

\foreach \i/\t in {2/,3/,4/tf,5/,6/tf,7/,8/tf,9/tf,10/tf,11/tf,12/,13/,14/tf,15/tf,16/tf,17/,18/tf,19/tf,20/tf,21/,22/tf,23/tf,24/tf} {
\pgfmathparse{int(\i-1)}
\node[\t] at (\i) {$\pgfmathresult$};
}
\end{scope}

\begin{scope}[very thick]

\draw  (1) --  (2);
\draw  (2) --  (3);
\draw  (3) --  (1);
\draw  (9) --  (3);
\draw  (2) to[bend left=15] (4);
\draw  (4) to[bend left=15] (2);
\draw  (3) to[bend left=25] (5);
\draw  (5) to[bend left=25] (3);
\draw  (5) to[bend left=25] (6);
\draw  (6) to[bend left=25] (5);
\draw  (3) --  (7);
\draw  (7) --  (8);
\draw  (8) --  (9);
\draw  (7) -- (10);
\draw (10) -- (11);
\draw (11) --  (7);

\begin{scope}[npc]
\draw  (1) to[bend left=2] (21);
\draw  (1) -- (12);
\draw  (1) -- (17);
\draw (21) to[bend left=2]  (1);
\draw (18) --  (1);
\draw (13) --  (1);
\draw (12) to[bend left=15] (14);
\draw (14) to[bend left=15] (12);
\draw (16) -- (13);
\draw (12) -- (13);
\draw (13) -- (15);
\draw (15) -- (16);
\draw (17) -- (18);
\draw (17) to[bend left=13] (19);
\draw (19) to[bend left=13] (17);
\draw (17) to[bend left=13] (20);
\draw (20) to[bend left=13] (17);
\draw (24) -- (21);
\draw (21) -- (22);
\draw (22) -- (23);
\draw (23) -- (24);
\end{scope}

\end{scope}

\end{scope}

}%

\begin{scope}

\tikzset{
grayout/.style={
every node/.append style={draw=black!30}, draw=black!30
},
inlink/.style={q1,blue},
crlink/.style={q1,red}
}

\begin{scope}[xshift=-1cm]
\cac[]{grayout}
\end{scope}

\begin{scope}

\begin{scope}[red]

\begin{scope}[q1]
\draw (2) to (14);
\draw (3) to (13);
\draw (4) to (19);
\draw (4) to[bend left=20] (21);
\draw (5) to (16);
\draw (6) to (15);
\draw (7) to (14);
\draw (8) to (19);
\draw (9) to (13);
\draw (9) to (17);
\draw (9) to (18);
\draw (14) to (21);
\end{scope}

\begin{scope}[q2]
\draw (10) to[bend left=25] (23);
\end{scope}

\end{scope}

\end{scope}
\end{scope}%

\begin{pgfonlayer}{bg}

\def\d{6mm}

\begin{scope}[fill=green!10,draw=darkgreen]

\newcommand\blob[2]{
\pgfmathanglebetweenpoints%
{\pgfpointanchor{#1}{center}}%
{\pgfpointanchor{#2}{center}}

\edef\angle{\pgfmathresult}

\filldraw ($(#1)+(\angle+90:\d)$) arc (\angle+90:\angle+270:\d) --
($(#2)+(\angle+270:\d)$) arc (\angle+270:\angle+450:\d) -- cycle;
}

\blob{2}{4}
\blob{6}{3}
\blob{7}{10}
\blob{11}{11}
\blob{8}{8}
\blob{9}{9}

\end{scope}
\end{pgfonlayer}

\end{tikzpicture}
 \end{center}

\caption{Example of one option to define stacks in the third principal subcactus. The non-terminal vertices in each set are assigned to the single terminal in the same set, which is a descendant of all non-terminals in that set.%
}\label{fig:stack_principal_subcact_ex}
\end{figure}

We assign to each terminal what we call a \emph{stack} of cross-links. For example, the stack of terminal~$5$ contains all cross-links with one endpoint in the set $\{2,4,5\}$, i.e., these are the cross-links $\{14,5\}$, $\{15,4\}$, and $\{12,2\}$, ordered from bottom to top with respect to their endpoint in the vertex set $\{2,4,5\}$. We think of each of these links as participating in the stack in a weighted way, with weight equal to their $x$-load, which is $\sfrac{1}{4}$ for all three cross-links in this case.
We now first analyze each stack independently. To exemplify the approach, consider again the stack of terminal~$5$. Because the vertices $\{2,4,5\}$ in this stack have an ancestry relationship, and we only sample $L_{\mathrm{cross}}$-minimal solutions, the third principal subcactus will sample at most one cross-link with an endpoint in $\{2,4,5\}$. (This corresponds to having at most one incoming arc in this stack when representing sampled solutions as in Figure~\ref{fig:bundle_rounding_sample}.) We are interested in the event that the third principal subcactus samples a cross-link $\ell_1$ with an endpoint $v\in \{2,4,5\}$ in this stack and, moreover, another cross-link $\ell_2$ gets sampled (by a different subcactus) with an endpoint either at $v$ or a vertex further below in the stack of terminal~$5$. An example of such a constellation is obtained if the links $\ell_1=\{15,4\}$ and $\ell_2=\{5,14\}$ get sampled, by the third and second principal subcactus, respectively. In such a situation we say that the link $\ell_1$ is \emph{dominated} (by the link $\ell_2$). One can easily observe that any dominated cross-link can be removed from the sampled solution without compromising feasibility, because the link dominating it covers all $2$-cuts in the subcactus where the dominating link got sampled. However, when removing several dominated links simultaneously, the solution may become infeasible. Indeed, there are situations where only $\sfrac{1}{3}$ of all dominated links can be removed without violating feasibility.

Our analysis to show that, in expectation, many dominated cross-links can be removed while maintaining feasibility, can be broken down into two components.
First, we provide a lower bound for the expected total number of dominated cross-links. This can be done independently per stack and leads to a clean formula. Second, we show that, in expectation, a significant fraction of dominated cross-links can be removed while maintaining feasibility. Actually, a first easy-to-derive lower bound is that one can always remove at least $\sfrac{1}{3}$ of all dominated links in any outcome. This simple bound already leads to an approximation ratio of $1.459$ for CacAP. Later on, we show that, either, in expectation, a significantly larger fraction than $\sfrac{1}{3}$ of all dominated cross-links can be removed, or we can substantially improve the lower bound we use for the number of dominated links. This leads to the approximation factor of our main result, Theorem~\ref{thm:main}. The proof of this stronger bound is significantly more involved, as we need to analyze different events describing interactions between dominated cross-links.
Details of the above discussion can be found in Section~\ref{sec:stack_analysis}.

\section{Reducing to $\bm{k}$-wide instances}\label{sec:reduction}

In this section, we complete the proof of Theorem~\ref{thm:main_reduction}, which we sketched in Section~\ref{sec:overviewKWide}. Again, let $k=\frac{64(8+3\epsilon)}{\epsilon^2}$ be the parameter appearing in the theorem.

This section is organized as follows. We start in Section~\ref{sect:residual_instances} by providing details on $x$-heavy cut coverings. To this end, we discuss residual instances that are obtained by fixing a subset of links in a solution, like the links in an $x$-heavy cut covering. We then identify the underlying problem which later allows us to show that there exists a cheap $x$-heavy cut covering and that we can find one efficiently. This problem reduces to a particular rectangle hitting problem, the discussion of which we postpone to Section~\ref{sec:heavy_cut_covering}.
In Section~\ref{sec:reduction_splitting} we provide details  on how the splitting is performed. In particular, we prove Proposition~\ref{prop:combine_split_sol}, which bounds the number of extra links needed to obtain a feasible solution when combining solutions to two sub-instances obtained from splitting.

The results so far break the problem into $k$-wide sub-instances. Even though we assume that we are given an approximation algorithm $\mathcal{A}$ for $k$-wide instances, we are not done at this point. The reason is that we need to find solutions to the sub-instances with an additional property to be able to later control the extra cost incurred by first splitting and then merging solutions. Section~\ref{sec:reduction_algo_for_unsplittable} presents the underlying result we need to be able to control these extra costs.
Section~\ref{sec:reduction_algo_no_heavy_cuts} combines the above ingredients to show how they can be used to either get a solution of the desired quality for the original instance or to return a separating hyperplane, as needed in our round-or-cut procedure.
Finally, Section~\ref{sec:reduction_ellipsoid} presents details of the round-or-cut-procedure and completes the proof of Theorem~\ref{thm:main_reduction}.

We recall that, given a CacAP instance $\mathcal{I}=(G=(V,E),L)$ and a set $U\subseteq V$, we denote by $\delta_L(U)$ the links in $L$ with precisely one endpoint in $U$, and by $\delta_E(U)$ the edges in $E$ with precisely one endpoint in $U$.
Moreover, we denote by $E[U]$ the set of edges in $E$ that have both endpoints in $U$ and by $G[U]=(U,E[U])$ the subgraph of $G$ that is induced by $U$.

\subsection{Covering heavy cuts}\label{sect:residual_instances}

First, we discuss how we can use a cheap $x$-heavy cut covering $L_H\subseteq L$ to obtain an instance without heavy cuts.
We will fix $L_H$ to be part of the solution and consider the resulting residual instance.
For this we start by formally defining residual instances with respect to some link set $L'\subseteq L$.
This concept has been used in prior work (see~\cite{nutov_2020_approximation}), and we repeat it here and prove some basic properties of it for completeness.
We first define residual instances depending on a certain contraction order and later show that any order leads to the same outcome, which justifies to talk about ``the'' residual instance.
\begin{definition}[residual instance]
 Let $\Iscr=(G,L)$ be a CacAP instance and let $L'\subseteq L$. Let $L'=\{\ell_1, \ldots, \ell_h\}$ be a numbering (ordering) of the links in $L'$. 
 The \textit{residual instance} of $\mathcal{I}$ with respect to $L'$ and this numbering is the instance that arises by performing the following contraction operation sequentially for each link $\ell=\ell_1$ up to $\ell=\ell_h$:
contract all vertices that are on every $u$-$v$ path in the cactus, where $u$ and $v$ are the endpoints of $\ell$, into a single vertex. (See Figure~\ref{fig:residual_instance}.)
 \end{definition}
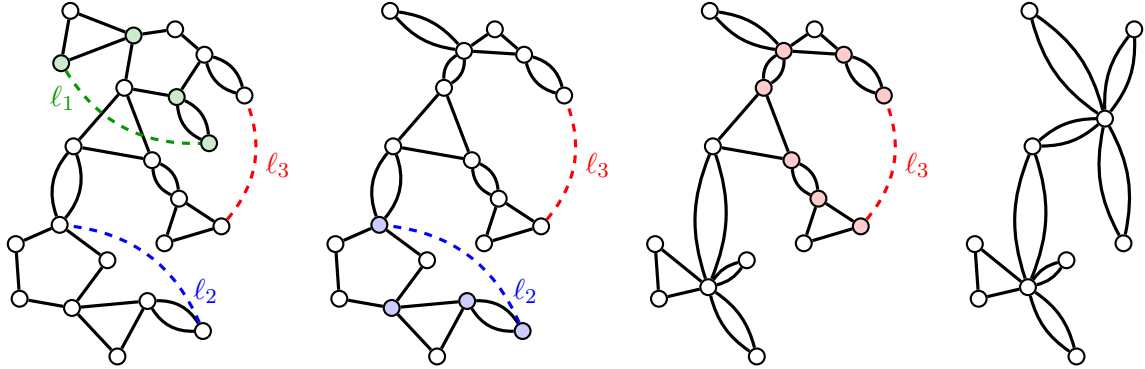
\begin{figure}[!ht]
\begin{center}
\begin{tikzpicture}[scale=0.25]

\tikzset{node/.style={thick,draw=black,fill=white,circle,minimum size=6, inner sep=2pt}}

\begin{scope}[xshift=-51cm,rotate=35]

\begin{scope}[every node/.append style=node]
\node (1) at (13,13) {};
\node (2) at (10,10) {};
\node (3) at (6,8) {};
\node (22) at (7.5,10.5) {};
\node (4) at (11,7) {};
\node (5) at (8,6) {};
\node (6) at (8.5,2.5) {};
\node (7) at (11.5,4) {};
\node (8) at (13,1) {};

\node (9) at (16,10) {};
\node (10) at (16,7.5) {};
\node (11) at (14,6) {};
\node (12) at (17,5) {};

\node (13) at (17,14) {};
\node[fill=green!60!black!20] (14) at (19,16) {};
\node (18) at (21,15) {};
\node[fill=green!60!black!20] (15) at (19,12) {};
\node (16) at (21.5,13) {};
\node (17) at (22,10) {};

\node (19) at (17,19) {};
\node[fill=green!60!black!20] (20) at (15,17) {};
\node[fill=green!60!black!20] (21) at (19,9) {};
\end{scope}

\begin{scope}[very thick]
\draw (1) to [bend right](2);
\draw (1) to [bend left](2);
\draw (2) to (22);
\draw (22) to (3);
\draw (2) to (4);
\draw (4) to (5);
\draw (3) to (5);
\draw (5) to (6);
\draw (5) to (7);
\draw (6) to (7);
\draw (7) to [bend left](8);
\draw (7) to [bend right](8);

\draw (1) to (9);
\draw (9) to[bend right] (10);
\draw (9) to[bend left] (10);
\draw (10) to (11);
\draw (11) to (12);
\draw (10) to (12);

\draw (1) to (13);
\draw (9) to (13);
\draw (13) to (14);
\draw (14) to (18);
\draw (18) to (16);
\draw (13) to (15);
\draw (15) to (16);
\draw (16) to [bend right](17);
\draw (16) to [bend left](17);

\draw (15) to[bend right] (21);
\draw (15) to[bend left] (21);
\draw (14) to (19);
\draw (19) to (20);
\draw (20) to (14);
\end{scope}

\begin{scope}[dashed, very thick]
\draw[blue] (8) to[bend right] node[pos=0.2,right]{$\ell_2$} (2);
\draw[red] (12) to [bend right] node[pos=0.5,right]{$\ell_3$} (17);
\draw[green!60!black] (20) to [bend right] node[pos=0.2,left]{$\ell_1$} (21);
\end{scope}

\end{scope}

\begin{scope}[xshift=-34cm,rotate=35]

\begin{scope}[every node/.append style=node]
\node (1) at (13,13) {};
\node[fill=blue!20] (2) at (10,10) {};
\node (3) at (6,8) {};
\node (22) at (7.5,10.5) {};
\node (4) at (11,7) {};
\node[fill=blue!20] (5) at (8,6) {};
\node (6) at (8.5,2.5) {};
\node[fill=blue!20] (7) at (11.5,4) {};
\node[fill=blue!20] (8) at (13,1) {};
\node (9) at (16,10) {};
\node (10) at (16,7.5) {};
\node (11) at (14,6) {};
\node (12) at (17,5) {};
\node (13) at (17,14) {};
\node (18) at (21,15) {};
\node (16) at (21.5,13) {};
\node (17) at (22,10) {};
\node (19) at (17,19) {};

\node (g) at (19,15) {};
\end{scope}

\begin{scope}[very thick]
\draw (1) to [bend right](2);
\draw (1) to [bend left](2);
\draw (2) to (22);
\draw (22) to (3);
\draw (2) to (4);
\draw (4) to (5);
\draw (3) to (5);
\draw (5) to (6);
\draw (5) to (7);
\draw (6) to (7);
\draw (7) to [bend left](8);
\draw (7) to [bend right](8);

\draw (1) to (9);
\draw (9) to[bend right] (10);
\draw (9) to[bend left] (10);
\draw (10) to (11);
\draw (11) to (12);
\draw (10) to (12);

\draw (1) to (13);
\draw (9) to (13);
\draw (13) to [bend right](g);
\draw (g) to (18);
\draw (18) to (16);
\draw (13) to [bend left](g);
\draw (g) to (16);
\draw (16) to [bend right](17);
\draw (16) to [bend left](17);

\draw (g) to [bend right=20](19);
\draw (19) to [bend right=20](g);

\end{scope}

\begin{scope}[dashed, very thick]
\draw[blue] (8) to[bend right] node[pos=0.2,right]{$\ell_2$} (2);
\draw[red] (12) to [bend right] node[pos=0.5,right]{$\ell_3$} (17);
\end{scope}

\end{scope}

\begin{scope}[xshift=-17cm,rotate=35]

\begin{scope}[every node/.append style=node]
\node (1) at (13,13) {};
\node (3) at (6,8) {};
\node (22) at (7.5,10.5) {};
\node (4) at (11,7) {};
\node (6) at (8.5,2.5) {};
\node[fill=red!20] (9) at (16,10) {};
\node[fill=red!20] (10) at (16,7.5) {};
\node (11) at (14,6) {};
\node[fill=red!20] (12) at (17,5) {};
\node[fill=red!20] (13) at (17,14) {};
\node (18) at (21,15) {};
\node[fill=red!20] (16) at (21.5,13) {};
\node[fill=red!20] (17) at (22,10) {};
\node (19) at (17,19) {};

\node[fill=red!20] (g) at (19,15) {};
\node (b) at (8.5,7) {};
\end{scope}

\begin{scope}[very thick]
\draw (1) to [bend right=20](b);
\draw (1) to [bend left=20](b);
\draw (b) to [bend right=20](4);
\draw (4) to [bend right=20](b);
\draw (3) to (b);
\draw (22) to (b);
\draw (3) to (22);
\draw (b) to [bend right=20](6);
\draw (6) to [bend right=20](b);

\draw (1) to (9);
\draw (9) to[bend right] (10);
\draw (9) to[bend left] (10);
\draw (10) to (11);
\draw (11) to (12);
\draw (10) to (12);

\draw (1) to (13);
\draw (9) to (13);
\draw (13) to [bend right](g);
\draw (g) to (18);
\draw (18) to (16);
\draw (13) to [bend left](g);
\draw (g) to (16);
\draw (16) to [bend right](17);
\draw (16) to [bend left](17);

\draw (g) to [bend right=20](19);
\draw (19) to [bend right=20](g);

\end{scope}

\begin{scope}[dashed, very thick]
\draw[red] (12) to [bend right] node[pos=0.5,right]{$\ell_3$} (17);
\end{scope}

\end{scope}

\begin{scope}[xshift=0cm,rotate=35]

\begin{scope}[every node/.append style=node]
\node (1) at (13,13) {};
\node (3) at (6,8) {};
\node (22) at (7.5,10.5) {};
\node (4) at (11,7) {};
\node (6) at (8.5,2.5) {};
\node (11) at (14,6) {};
\node (18) at (21,15) {};
\node (19) at (17,19) {};

\node (b) at (8.5,7) {};
\node (r) at (17,12){};
\end{scope}

\begin{scope}[very thick]
\draw (1) to [bend right=20](b);
\draw (1) to [bend left=20](b);
\draw (b) to [bend right=20](4);
\draw (4) to [bend right=20](b);
\draw (3) to (b);
\draw (22) to (b);
\draw (3) to (22);
\draw (b) to [bend right=20](6);
\draw (6) to [bend right=20](b);

\draw (1) to [bend right=20](r);
\draw (r) to [bend right=20](11);
\draw (11) to [bend right=20](r);

\draw (r) to [bend right=20](1);
\draw (r) to [bend right=20](18);
\draw (18) to [bend right=20](r);

\draw (r) to [bend right=20](19);
\draw (19) to [bend right=20](r);

\end{scope}

\begin{scope}[dashed, very thick]
\end{scope}

\end{scope}

\end{tikzpicture} \end{center}
\caption{On the leftmost graphic, a cactus and three links with a given numbering $\ell_1$, $\ell_2$, and $\ell_3$ are shown. In each of the first three graphics, the vertices that are contained in every path between the  endpoints of the link to be contracted are colored. The rightmost graphic depicts the cactus of the residual instance with respect to the three links. Notice that the ordering of the three links does not change the final result.}\label{fig:residual_instance}
\end{figure}
Notice that the graph arising from $G$ by the contractions described above is indeed again a cactus.
This justifies using the name \emph{instance} in the notion of residual instance, as we again obtain a CacAP problem after contraction. 

As mentioned, we now provide a formal proof that the contraction order is irrelevant.
\begin{lemma}\label{lem:resIndOrder}
Let $\Iscr=(G,L)$ be a CacAP instance and let $L'\subseteq L$. Then any order of the links in $L'$ leads to the same residual instance of $\Iscr$ with respect to $L'$.
\end{lemma}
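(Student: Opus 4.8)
The plan is to give a description of the residual instance that makes no reference to the numbering, and then to check that the contraction process really produces it. For a cactus $H$ and vertices $a,b$ of $H$, write $U_H(a,b)$ for the set of vertices lying on every $a$--$b$ path in $H$; this is exactly the set contracted when the link $\{a,b\}$ is processed, and contracting it merely identifies all of its vertices into one vertex (discarding the resulting loops). Hence processing $\ell_1,\dots,\ell_h$ in order produces the graph obtained from $G$ by identifying, within each class, the classes of the partition $\mathcal P$ of $V(G)$ generated by the contracted sets pulled back to $V(G)$, together with the induced images of the links of $L$. It thus suffices to show that $\mathcal P$ equals the partition $\mathcal Q$ of $V(G)$ into the classes of the relation ``$w\sim w'$ iff $w$ and $w'$ lie on the same side of every $C\in\mathcal C_G$ with $\delta_{L'}(C)=\emptyset$'', because $\mathcal Q$ depends only on the set $L'$.

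I would use two facts about an arbitrary cactus $H$: (F1) if $C$ is a $2$-cut of $H$ with $a$ and $b$ on the same side, then $U_H(a,b)$ lies entirely on that side --- indeed both sides of a $2$-cut of a cactus induce connected subgraphs, so there is an $a$--$b$ path inside the side of $C$ containing $a$ and $b$, and such a path meets every vertex of $U_H(a,b)$; and (F2) any two distinct vertices of $H$ are separated by some $2$-cut of $H$, which follows from the block--cut structure of a cactus. Combined with the elementary fact that, for any graph $H$ and vertex set $S$, the $2$-cuts of $H/S$ are precisely the images of the $2$-cuts of $H$ that have $S$ on one side, an induction along the contraction sequence $G=G_0,G_1,\dots,G_h$ (with contraction maps $\pi_i\colon V(G)\to V(G_i)$) gives: every $C\in\mathcal C_G$ with $\delta_{L'}(C)=\emptyset$ is split by none of the contracted sets, so $\pi_h(C)$ and $\pi_h(V\setminus C)$ are disjoint; and conversely $\mathcal C_{G_h}=\{\pi_h(C):C\in\mathcal C_G,\ \delta_{L'}(C)=\emptyset\}$.

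With this in hand, $\mathcal P=\mathcal Q$ follows. For the inclusion that every $\mathcal P$-class is contained in a $\mathcal Q$-class it suffices, by transitivity of $\sim$, to check that each contracted set lies within one $\sim$-class: if $w,w'$ both lie in the set $U_{G_{i-1}}(\bar u,\bar v)$ contracted at step $i$ --- where $\{\bar u,\bar v\}$ is the image of $\ell_i$ --- and some $C$ with $\delta_{L'}(C)=\emptyset$ separated $w$ from $w'$, then $\pi_{i-1}(C)$ would be a $2$-cut of $G_{i-1}$ separating $\pi_{i-1}(w)$ from $\pi_{i-1}(w')$ while, since $\ell_i\in L'$, not separating $\bar u$ from $\bar v$, contradicting (F1) in the cactus $G_{i-1}$. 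For the reverse inclusion, suppose $\pi_h(w)\ne\pi_h(w')$; by (F2) some $2$-cut of $G_h$ separates $\pi_h(w)$ and $\pi_h(w')$, this $2$-cut is $\pi_h(C)$ for some $C\in\mathcal C_G$ with $\delta_{L'}(C)=\emptyset$, and since $\pi_h$ keeps the two sides of such a $C$ apart, $C$ separates $w$ from $w'$. Hence $\mathcal P=\mathcal Q$, so the residual instance depends only on $L'$.

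I expect the main obstacle to be (F1): showing that a $2$-cut of a cactus that does not separate $a$ from $b$ cannot ``reach around'' and capture part of $U_H(a,b)$. The natural route is the structural fact that the two edges of any $2$-edge cut of a cactus lie on a common cycle, and that deleting them splits that cycle into two arcs, each carrying one of the two (connected) sides of the cut; this makes the connectedness assertion in (F1) immediate and also yields (F2) after a short case analysis. The remaining ingredients --- that a single contraction step is a well-defined operation producing again a cactus (asserted in the paper right after the definition), the behaviour of $2$-cuts under contracting a vertex subset, and the bookkeeping of the induction --- are routine.
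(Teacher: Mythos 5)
Your proof is correct, but it follows a different route from the paper's. The paper reduces the statement to the case $|L'|=2$ (any two orderings differ by adjacent transpositions) and then shows that, for two links, whether the two contracted sets end up as one vertex or two is governed by the existence of a $2$-cut separating the two endpoint pairs -- a condition independent of the order. You instead exhibit an order-free description of the \emph{entire} final vertex partition, namely the partition induced by the relation ``on the same side of every $2$-cut not covered by $L'$'', and verify by induction along the contraction sequence that the process always produces it. Both arguments ultimately rest on the same two structural facts -- that both sides of a $2$-cut of a cactus induce connected subgraphs (your (F1), which appears verbatim in the paper's proof and again in Lemma~\ref{lemma:justify_shadow_notion}), and that distinct vertices of a cactus are separated by some $2$-cut (your (F2), which the paper also invokes implicitly and which is cleanest to derive from Lemma~\ref{lem:descendants_define_2_cut} rather than from scratch). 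What your approach buys is generality and a byproduct: it handles arbitrary $|L'|$ in one pass without the transposition trick, and the correspondence $\mathcal{C}_{G_h}=\{\pi_h(C): C\in\mathcal{C}_G,\ \delta_{L'}(C)=\emptyset\}$ that you establish along the way is exactly the paper's subsequent Lemma~\ref{lem:residualCutCorr}, so you prove both results at once. The cost is more bookkeeping (the maps $\pi_i$, the generated partition $\mathcal{P}$, and the behaviour of $2$-cuts under contraction), whereas the paper's two-link analysis is shorter for the statement as isolated. The only places where you lean on unproved claims -- that each contraction yields a cactus again, and (F2) -- are either asserted in the paper or provable by the sketch you give, so I see no genuine gap.
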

\begin{proof}
Note that it suffices to prove the statement for $|L'|=2$ because we can transform a given order of $L'$ into any other order by a sequence of swaps of consecutive links in the order.
Hence, we may assume $L'=\{\{v_1,w_1\},\{v_2,w_2\}\}$.
We first observe that any residual instance of $\Iscr$ with respect to $L'$ arises from $\Iscr$ either by contracting all vertices that are contained in any $v_1$-$w_1$ path in $G$ into one vertex $\widetilde{v}_1$ and contracting all vertices that are contained in any $v_2$-$w_2$ path in $G$ into another vertex $\widetilde{v}_2$, or it arises by contracting all these vertices into a single vertex.
We will show that the first case applies if and only if there is a 2-cut $C\subseteq V$ in $G$ that separates $\{v_1,w_1\}$ from $\{v_2,w_2\}$.
This implies that the resulting residual instance is independent of the order.

Suppose the two contraction operations for the links $\{v_1,w_1\}$ and $\{v_2,w_2\}$ lead to two distinct vertices $\widetilde{v}_1$ and $\widetilde{v}_2$ in the residual instance of $\Iscr$ with respect to $L'$.
Then there is a 2-cut in the residual instance that separates $\widetilde{v}_1$ and $\widetilde{v}_2$.
This 2-cut $\widetilde{C}$ corresponds to a 2-cut $C$ in $\Iscr$ that separates $\{v_1,w_1\}$ and $\{v_2,w_2\}$.
(Here, $C\subseteq V$ is the set of vertices that are contained in $\widetilde{C}$ or were contracted into some vertex in $\widetilde{C}$.)

Now suppose there is a 2-cut $C$ in $G$ that separates $\{v_1,w_1\}$ from $\{v_2,w_2\}$, say $\{v_1,w_1\} \subseteq C \subseteq V\setminus \{v_2,w_2\}$.
Because $G$ is 2-edge-connected and $C$ is a 2-cut in $G$, the induced subgraphs $G[C]$ and $G[V\setminus C]$ are connected.
Hence, all the vertices that are contained in every $v_1$-$w_1$ path in $G$, are part of $C$.
Moreover, $G[V\setminus C]$ contains a $v_2$-$w_2$ path.
Because this path is still present after we contracted the vertices that are contained in every $v_1$-$w_1$ path in $G$, the vertex resulting from this contraction is not part of every $v_2$-$w_2$ path in the resulting cactus.
\end{proof}

We now show that a residual instance with respect to some link set $L'$ is indeed a CacAP instance whose $2$-cuts correspond precisely to the $2$-cuts in $G$ that have not been covered by $L'$.
\begin{lemma}\label{lem:residualCutCorr}
Let $\Iscr=(G=(V,E),L)$ be a CacAP instance, let $L'\subseteq L$, and let $\widetilde{\Iscr}=(\widetilde{G}=(\widetilde{V},\widetilde{E}),\widetilde{L})$ be the residual instance of $G$ with respect to $L'$.
Consider the correspondence that assigns to each $2$-cut $\widetilde{W}\subseteq \widetilde{V}$ of $\widetilde{G}$ the set of vertices $W\subseteq V$ that got contracted into some vertex of $\widetilde{W}$.
Then this correspondence is a bijection between the $2$-cuts in $\widetilde{G}$ and the $2$-cuts in $G$ that are not covered by a link in $L'$.
\end{lemma}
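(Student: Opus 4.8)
The plan is to prove the correspondence is a bijection by induction on $|L'|$, reducing everything to the single-link case $L'=\{\ell\}$ via Lemma~\ref{lem:resIndOrder} (which lets us contract the links one at a time in any order). So assume first that $L'=\{\ell\}$ with $\ell=\{u,v\}$, let $P$ be the set of vertices lying on every $u$-$v$ path in $G$, and let $\widetilde{G}$ be the cactus obtained by contracting $P$ into a single vertex $p$. I would first record the elementary structural fact about cacti that drives the argument: since $G$ is $2$-edge-connected, a $2$-cut $C\subseteq V$ corresponds to deleting the two edges of a unique cycle of $G$, both $G[C]$ and $G[V\setminus C]$ are connected, and $C$ is crossed by $\ell=\{u,v\}$ precisely when $u$ and $v$ lie on opposite sides of $C$, which in turn happens precisely when $C$ "cuts through" the cycle structure separating $u$ from $v$ — equivalently, when $P\cap C\neq\emptyset$ and $P\setminus C\neq\emptyset$, i.e., $C$ splits the vertex set $P$. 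Thus the $2$-cuts of $G$ not covered by $\ell$ are exactly those $C$ with $P\subseteq C$ or $P\subseteq V\setminus C$.

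Next I would set up the two directions of the bijection. Given a $2$-cut $\widetilde{W}$ of $\widetilde{G}$, define $W\subseteq V$ to be the union of all vertices contracted into vertices of $\widetilde{W}$: concretely $W=\widetilde{W}$ if $p\notin\widetilde{W}$, and $W=(\widetilde{W}\setminus\{p\})\cup P$ if $p\in\widetilde{W}$. I claim $W$ is a $2$-cut of $G$ not covered by $\ell$. Since contraction does not change which cycles the non-contracted edges lie on, the two edges of $\delta_{\widetilde{E}}(\widetilde{W})$ correspond to two edges of $E$ forming $\delta_E(W)$, so $W$ is indeed a $2$-cut; and by construction either $P\subseteq W$ or $P\cap W=\emptyset$, so $\ell$ does not cross $W$. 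Conversely, given a $2$-cut $C$ of $G$ not covered by $\ell$, assume WLOG $P\subseteq V\setminus C$ (the other case is symmetric by replacing $C$ with $V\setminus C$, noting that $\mathcal{C}_G$ is defined up to complementation via the choice of root, or by handling both explicitly). Since $G[C]$ is connected and disjoint from $P$, it maps isomorphically to an induced connected subgraph of $\widetilde{G}$ on vertex set $\widetilde{W}:=C$, and the two cut edges survive the contraction, so $\widetilde{W}$ is a $2$-cut of $\widetilde{G}$; clearly the preimage-union of $\widetilde{W}$ is $C$ again. This gives mutually inverse maps, hence a bijection, for $|L'|=1$.

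For general $L'$, I would apply Lemma~\ref{lem:resIndOrder}: order $L'=\{\ell_1,\dots,\ell_h\}$, let $G_0=G$ and let $G_i$ be the cactus after contracting $\ell_1,\dots,\ell_i$, so $G_h=\widetilde{G}$. Applying the single-link result at each step, the composed correspondence is a bijection between $2$-cuts of $G_i$ and $2$-cuts of $G_{i-1}$ not covered by $\ell_i$; composing all $h$ steps and tracking which cuts survive, a $2$-cut of $G_h$ corresponds to a $2$-cut of $G$ that is not covered by any of $\ell_1,\dots,\ell_h$, i.e., not covered by $L'$, and the vertex-contraction bookkeeping composes exactly to "the set of vertices of $V$ contracted into a vertex of $\widetilde{W}$." The one point requiring a little care — and the place I expect the main (minor) obstacle — is verifying that "not covered by $\ell_i$ in $G_{i-1}$" correctly accumulates to "not covered by any link of $L'$ in $G$": a link $\ell_j$ with $j>i$ covering a cut $C$ of $G$ could, after contracting $\ell_i$, become non-crossing in $G_{i-1}$, but this never happens because contracting $\ell_i$ only merges vertices that were already on the same side of every cut $\ell_j$ crosses — or more cleanly, one shows directly that a $2$-cut of $G$ is uncovered by $L'$ iff it is uncovered by $\{\ell_1,\dots,\ell_i\}$ and its image in $G_i$ is uncovered by $\{\ell_{i+1},\dots,\ell_h\}$, using that the cut-edges and the "which side is each endpoint on" data are preserved under contraction. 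Everything else is the routine observation that contraction of a vertex set spanning a connected subgraph preserves the cactus structure and the edge-cut structure on the remaining edges.
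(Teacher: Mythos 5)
Your proposal is correct and follows essentially the same route as the paper's proof: reduce to the single-link case via the order-independence of contractions, use the $2$-edge-connectivity of $G$ (connectedness of $G[W]$ and $G[V\setminus W]$) to show that a $2$-cut is uncovered by $\ell$ exactly when the contracted set lies entirely on one side, and then check that $\delta_E(W)=\delta_{\widetilde E}(\widetilde W)$ in both directions of the correspondence. Your extra care in the induction step (that coverage by later links is preserved under earlier contractions) is a point the paper leaves implicit, but your resolution of it is sound.
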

\begin{proof}
Note that it suffices to prove the statement for $|L'|=1$.
The general case then follows by contracting one link at a time and repeatedly invoking the result for the $|L'|=1$ case.
Hence, let $L'=\{\ell\}$ and we denote the endpoints of $\ell$ by $u$ and $v$.

Let $\widetilde{W}\subseteq \widetilde{V}$ be a $2$-cut of $\widetilde{G}$ and $W\subseteq V$ the vertices in $G$ that correspond to $\widetilde{W}$, i.e., all vertices in $V$ that got contracted into some vertex of $\widetilde{W}$.
Clearly, $\delta_E(W) = \delta_{\widetilde{E}}(\widetilde{W})$, and hence $W$ is a $2$-cut of $G$.
Moreover, both $u$ and $v$ lie on the same side of $W$ because, in $\widetilde{G}$, they have been contracted into the same vertex, which is either in $\widetilde{W}$, in which case $u,v\in W$, or in $\widetilde{V}\setminus \widetilde{W}$, in which case $u,v\not\in W$.
In either case, the cut defined by $W$ is not covered by $\ell$.

Conversely, let $W\subseteq V$ be a $2$-cut in the graph $G$ that is not covered by $\ell=\{u,v\}$, and let $Q\subseteq V$ be all vertices of $G$ that lie on each $u$-$v$ path in $G$.
Hence, this is the set of vertices that will be contracted to obtain the residual instance $\widetilde{G}$.
Because $G$ is $2$-edge-connected, the graph obtained from $G$ by removing the two edges in $\delta_E(W)$ consists of precisely two connected components, one with vertex set $W$ and one with vertex set $V\setminus W$.
Because $u$ and $v$ are either both in $W$ or $V\setminus W$, as $\ell$ does not cover the cut $W$, we either have $Q\subseteq W$ or $Q\subseteq V\setminus W$.
If $Q\subseteq W$, then the set $\widetilde{W}\subseteq\widetilde{V}$ consisting of all vertices of $W\setminus Q$ together with the contracted vertex corresponding to $Q$ is the $2$-cut $\widetilde{W}$ in $\widetilde{G}$ that corresponds to the $2$-cut $W$ in $G$.
Otherwise, if $Q\subseteq V\setminus W$, we choose $\widetilde{W} = W$ and again obtain that the $2$-cut $W$ corresponds to $\widetilde{W}$, as desired.
\end{proof}

Lemma~\ref{lem:residualCutCorr} immediately implies that the residual instance allows for precisely describing the link sets that lead to a feasible solution together with the links that have been fixed to obtain the residual instance.
\begin{corollary}\label{cor:residual_instance}
 Let $\mathcal{I}=(G=(V,E),L)$ be a CacAP instance and let $L' \subseteq L$.
 Then $F \subseteq L$ is a feasible solution to the residual instance of $\Iscr$ with respect to $L'$
 if and only if $L' \cup F$ is a feasible solution of $\Iscr$.
\end{corollary}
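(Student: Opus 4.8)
The statement is an immediate consequence of Lemma~\ref{lem:residualCutCorr}, and the plan is simply to unfold both feasibility notions and match them up through the bijection it provides. Recall that, by definition of CacAP, a link set $L'\cup S$ (with $S\subseteq L$) makes $G$ $3$-edge-connected exactly when every $2$-cut of $G$ is crossed by some link in $L'\cup S$; likewise $F\subseteq L$ is feasible for the residual instance $\widetilde{\mathcal I}=(\widetilde G,\widetilde L)$ exactly when every $2$-cut of $\widetilde G$ is crossed by some link in $F$, where each link of $F$ is considered with its (possibly contracted) endpoints in $\widetilde G$. So it suffices to show that ``$F$ crosses every $2$-cut of $\widetilde G$'' is equivalent to ``$L'\cup F$ crosses every $2$-cut of $G$''.

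First I would rewrite the condition on $G$. The $2$-cuts of $G$ split into those crossed by some link of $L'$, which are then trivially crossed by $L'\cup F$ as well, and those crossed by no link of $L'$. For the latter family, no link of $L'$ helps, so $L'\cup F$ crosses every $2$-cut of $G$ if and only if $F$ crosses every $2$-cut of $G$ that is not covered by $L'$.

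It then remains to connect the $2$-cuts of $\widetilde G$ with the $2$-cuts of $G$ not covered by $L'$. Lemma~\ref{lem:residualCutCorr} gives a bijection $\widetilde W\mapsto W$ between these two families, where $W\subseteq V$ is the set of vertices of $G$ contracted into some vertex of $\widetilde W$. The key point is that this bijection preserves which links cross the cut: for any link $\ell=\{a,b\}\in L$, its endpoints in $\widetilde G$ are the images $\widetilde a,\widetilde b$ of $a,b$ under the contraction, and $\widetilde a\in\widetilde W \iff a\in W$, and similarly for $b$; hence $\ell$ has exactly one endpoint in $\widetilde W$ if and only if it has exactly one endpoint in $W$. (Links of $L'$ become loops in $\widetilde G$ and therefore cross no $2$-cut of $\widetilde G$, which is consistent with the fact that they cross none of the $2$-cuts of $G$ in the relevant family either.) Consequently $F$ crosses every $2$-cut of $\widetilde G$ precisely when $F$ crosses every $2$-cut of $G$ not covered by $L'$, which by the previous paragraph is precisely when $L'\cup F$ is feasible for $\mathcal I$.

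There is essentially no serious obstacle here; the only step requiring a little care is the endpoint-tracking claim that the bijection of Lemma~\ref{lem:residualCutCorr} respects the ``link crosses cut'' relation, but this is immediate from the way the contraction operation is defined, since a link's endpoints in the residual instance are exactly the contracted images of its endpoints in $G$.
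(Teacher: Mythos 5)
Your proof is correct and follows exactly the route the paper intends: the paper states the corollary as an immediate consequence of Lemma~\ref{lem:residualCutCorr} without writing out the details, and your argument simply fills in those details (splitting the $2$-cuts of $G$ by whether $L'$ covers them, and checking that the bijection of the lemma preserves the ``link crosses cut'' relation). Nothing is missing.
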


For convenience, we recall some terminology that we introduced in Section~\ref{sec:overviewKWide}. Namely, we fixed an arbitrary vertex $r\in V$, called the \emph{root}, and denote by 
\begin{equation*}
 \mathcal{C}_G=\{ C \subseteq V\setminus \{r\} : |\delta_E(C)| = 2\}
\end{equation*}
the set of $2$-cuts in $G$ not containing $r$.
Note that a set $F\subseteq L$ is a feasible solution if and only if $|\delta_{F}(C)| \geq 1$ for all $C\in \mathcal{C}_{G}$. We say that a link $\ell\in L$ covers a cut $C\in \mathcal{C}_G$ if $\ell \in \delta_L(C)$.
Furthermore, we recall the notion of $x$-light and $x$-heavy cuts.
\begin{definition}[light and heavy cuts]
 Let $x \in [0,1]^L$. A cut $C\in \mathcal{C}_G$ and the set $\delta_L(C)$ of links in the cut are called \emph{$x$-light} if $x(\delta_{L}(C)) \le \frac{16}{\epsilon}$. Otherwise, $C$ and $\delta_L(C)$ are called \emph{$x$-heavy}.
\end{definition}

Recall that we want to use a cheap $x$-heavy cut covering to get rid of $x$-heavy cuts.
We now show that any link set covering all $x$-heavy cuts indeed leads to a residual instance without $x$-heavy cuts.
\begin{lemma}\label{lem:all_light_after_heavy_cut_covering}
Let $L_H\subseteq L$ be a set of links such that $|\delta_{L_H}(C)| \ge 1$ for every $x$-heavy cut $C\in \mathcal{C}_G$.
Then the residual instance with respect to $L_H$ has no $x$-heavy cuts.
\end{lemma}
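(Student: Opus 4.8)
The plan is to reduce the statement entirely to Lemma~\ref{lem:residualCutCorr}, which identifies the $2$-cuts of the residual instance $\widetilde{\Iscr}=(\widetilde{G},\widetilde{L})$ of $\Iscr$ with respect to $L_H$ with exactly the $2$-cuts of $G$ that are \emph{not} covered by any link of $L_H$; here a $2$-cut $\widetilde{W}\subseteq\widetilde{V}$ of $\widetilde{G}$ corresponds to the vertex set $W\subseteq V$ consisting of all vertices of $G$ that got contracted into some vertex of $\widetilde{W}$. So I would fix an arbitrary $2$-cut $\widetilde{W}$ of $\widetilde{G}$, let $W$ be the corresponding $2$-cut of $G$, and show that $\widetilde{W}$ is $x$-light by transferring the $x$-lightness of $W$.

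First, since $W$ is not covered by $L_H$ and, by hypothesis, $L_H$ meets $\delta_L(C)$ for every $x$-heavy cut $C\in\mathcal{C}_G$, the cut $W$ cannot be $x$-heavy; that is, $x(\delta_L(W))\le \tfrac{16}{\varepsilon}$. It then suffices to prove $x(\delta_{\widetilde{L}}(\widetilde{W}))\le x(\delta_L(W))$. For this I would argue that the contractions turning $G$ into $\widetilde{G}$ never merge a vertex of $W$ with a vertex of $V\setminus W$: processing the links of $L_H$ one at a time and repeatedly invoking the single-link case inside the proof of Lemma~\ref{lem:residualCutCorr}, at each step the current image of $W$ is a $2$-cut of the current partial residual instance that is not covered by the link being contracted, so the set of vertices being identified in that step (those lying on every path between that link's two endpoints) lies entirely inside the current image of $W$ or entirely outside it. Hence the induced vertex quotient sends $W$ onto $\widetilde{W}$ and $V\setminus W$ onto $\widetilde{V}\setminus\widetilde{W}$, so every link of $\widetilde{L}$ crossing $\widetilde{W}$ is the image of a (distinct) link of $L$ crossing $W$, and any link that became a loop is not in $\delta_{\widetilde{L}}(\widetilde{W})$ at all. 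Since $x\ge 0$, this yields $x(\delta_{\widetilde{L}}(\widetilde{W}))\le x(\delta_L(W))\le \tfrac{16}{\varepsilon}$, so $\widetilde{W}$ is $x$-light; as $\widetilde{W}$ was arbitrary, $\widetilde{\Iscr}$ has no $x$-heavy cuts.

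The only step that needs any care — and the one I would expect to be the main obstacle — is the claim that the contractions respect the partition $(W,V\setminus W)$ and therefore do not create new links across $\widetilde{W}$; everything else is immediate once $W$ is known to be $x$-light. But this claim is essentially already contained in the ``conversely'' direction of the proof of Lemma~\ref{lem:residualCutCorr} (using that $G[W]$ and $G[V\setminus W]$ are connected because $W$ is a $2$-cut of a $2$-edge-connected graph), applied inductively link by link, so the whole argument is short and amounts to bookkeeping on top of Lemma~\ref{lem:residualCutCorr}.
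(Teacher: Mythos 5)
Your proposal is correct and follows essentially the same route as the paper: pass to the corresponding uncovered $2$-cut of $G$ via Lemma~\ref{lem:residualCutCorr}, conclude it is $x$-light because $L_H$ hits every heavy cut, and transfer the bound back. The paper simply asserts $\delta_L(C)=\delta_L(C')$ where you spell out why the contractions respect the partition; that is a presentational difference, not a different argument.
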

\begin{proof}
Consider a 2-cut $C'$ in the residual instance.
Let $C$ be the set of vertices from which $C'$ arose by contractions in the construction of the residual instance.
Then $C\in \Cscr_G$ and  $\delta_L(C) = \delta_L(C')$. 
Moreover, for every link $\ell \in L_H$, the endpoints of $\ell$ have been contracted and thus $\delta_L(C) \cap L_H = \delta_L(C')\cap L_H = \emptyset$. 
Because $L_H$ covers all heavy cuts, this implies that the cut $C$ is not heavy in the original instance.
Hence, $C'$ is not heavy in the residual instance.
\end{proof}

Finally, the following theorem implies the existence of a cheap $x$-heavy cut 
covering as introduced in Definition~\ref{def:x_heavy_cut_covering}, and also implies that it can be found efficiently. It will be proven in Section~\ref{sec:heavy_cut_covering} through a reduction to a rectangle hitting problem.
\begin{restatable}{theorem}{theoremcoveringheavy}\label{thm:covering_heavy}
Let $\Iscr = (G=(V,E), L)$ be a CacAP instance, let $r\in V$, and let $\mathcal{W}\subseteq \mathcal{C}_G$.
Then the LP
\begin{equation}\label{eq:heavyCutHitLP}
\renewcommand\arraystretch{1.5}
\begin{array}{r>{\displaystyle}rcll}
\min & x(L)  & & & \\
& x(\delta_L(W)) & \ge &1 &\forall\; W\in \Wscr \\
& x & \in & \mathbb{R}^L_{\geq 0}\makebox[0pt][l]{}  &
\end{array}
\end{equation}
has integrality gap at most 8.
Moreover, given a solution $x$ to~\eqref{eq:heavyCutHitLP}, we can compute an integral solution with objective value at most $8 \cdot x(L)$ in polynomial time.
\end{restatable}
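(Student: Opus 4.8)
The plan is to recast \eqref{eq:heavyCutHitLP} as the canonical LP relaxation of a hitting problem for axis‑parallel rectangles and then to round any feasible point of that relaxation within a factor of $8$ by an explicit sweepline argument; the reduction to that rectangle problem is exactly what Section~\ref{sec:heavy_cut_covering} is devoted to.

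\emph{Linearizing the $2$-cuts.} Root $G$ at $r$. Since every edge of the cactus $G$ lies on a unique cycle, deleting two edges from two \emph{distinct} cycles cannot disconnect $G$; hence a set $C\subseteq V\setminus\{r\}$ satisfies $|\delta_E(C)|=2$ precisely when it arises from a single cycle $O$ of $G$ by deleting two of its edges and taking, on the side of $O$ not containing $r$, the resulting sub‑path of $O$ together with all subcacti hanging off it. A depth‑first search from $r$ that handles each cycle by traversing its two directions away from the root consecutively produces a linear order $\prec$ of $V\setminus\{r\}$ in which the descendants of any vertex $v$ form a contiguous block starting at $v$; combined with the structure of $2$-cuts just described, this makes every $C\in\mathcal{C}_G$ an interval with respect to $\prec$ (length‑$2$ cycles, i.e.\ parallel edges, fit the same picture). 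Identifying each vertex of $G$ with its position in $\prec$ and writing $C=[i,j]$ for the corresponding interval, a link $\ell=\{a,b\}$ covers $C$ exactly when precisely one of the two positions of $a,b$ lies in $[i,j]$.

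\emph{From crossing to rectangles.} Map each link $\ell$ to the point $p_\ell=(s_\ell,t_\ell)$, where $s_\ell<t_\ell$ are the positions of the endpoints of $\ell$. A short case analysis turns ``$\ell$ covers $C=[i,j]$'' into ``$p_\ell\in R^1_C\cup R^2_C$'', where $R^1_C=[i,j]\times(j,+\infty)$ and $R^2_C=(-\infty,i)\times[i,j]$ are two disjoint axis‑parallel rectangles, each with a corner on the diagonal $\{(s,t):s=t\}$ and unbounded in two fixed axis directions. Collapsing all coordinates to the at most $|V|$ relevant values keeps the instance of polynomial size, and covering every cut of $\mathcal{W}$ by links becomes: choose a minimum‑cardinality subset of the points $\{p_\ell:\ell\in L\}$ that meets $R^1_C\cup R^2_C$ for every $C\in\mathcal{W}$. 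Since $x(\delta_L(C))=\sum_{\ell:\,p_\ell\in R^1_C\cup R^2_C}x_\ell$, the LP \eqref{eq:heavyCutHitLP} is exactly the canonical fractional relaxation of this rectangle hitting problem.

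\emph{Rounding.} Given a feasible $x$, for each $C\in\mathcal{W}$ at least one of $\sum_{\ell:\,p_\ell\in R^1_C}x_\ell\ge\tfrac12$ or $\sum_{\ell:\,p_\ell\in R^2_C}x_\ell\ge\tfrac12$ holds, as the two rectangles are disjoint; split $\mathcal{W}=\mathcal{W}_1\cupp\mathcal{W}_2$ accordingly, at the cost of a factor $2$. It then remains to round, within a constant, a rectangle hitting instance all of whose rectangles have the anchored shape $[i,j]\times(j,+\infty)$ (for $\mathcal{W}_1$; symmetrically for $\mathcal{W}_2$), with $2x$ restricted to the points inside those rectangles fractionally feasible. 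Because such rectangles are pinned to the diagonal and unbounded in two fixed directions, in the index space of the rectangles each available point covers a box that is unbounded in one axis direction and has a corner on the diagonal---an essentially one‑dimensional, interval‑like structure---so a sweep that processes the requirements in order along the diagonal and greedily commits to points rounds the fractional solution within a small constant. Combining the two halves yields, in polynomial time, a link set covering all of $\mathcal{W}$ whose size is at most $8\cdot x(L)$, which proves both assertions of the theorem.

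\emph{Main obstacle.} The crux is the last step: squeezing an honest constant (rather than a logarithmic bound) out of the rectangle hitting relaxation. This depends on the fact that the rectangles produced by the reduction are far from arbitrary---each is anchored on the diagonal and unbounded in two fixed directions---which is exactly what collapses each of the two halves to an (essentially) one‑dimensional covering system that a sweepline/primal‑dual argument rounds with only a small constant loss; making this precise and pinning down the constant is the technical heart. A more routine but still delicate point is the cactus combinatorics behind the linearization: one must verify that the depth‑first order $\prec$ realizes \emph{every} member of $\mathcal{C}_G$---and nothing else---as an interval, with the correct treatment of degenerate $2$-cycles, and check that the resulting rectangle instance, and hence the entire reduction and rounding, has polynomial size.
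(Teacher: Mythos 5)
Your reduction is the same as the paper's: linearize the vertices so that every cut in $\Wscr$ becomes an interval, map each link to a point $(s_\ell,t_\ell)$, observe that covering an interval cut corresponds to hitting one of two disjoint anchored rectangles, and split $\Wscr$ into two halves according to which rectangle carries $x$-weight at least $\sfrac12$. (The paper gets the linearization from the classical fact, Lemma~\ref{lem:replace_cactus_by_cycle}, that a cactus can be replaced by a cycle on the same vertex set preserving the relevant $2$-cuts, and then works with the cycle directly in Theorem~\ref{thm:heavy_cut_covering_for_cycles}; your DFS-order argument is a plausible alternative route to the same interval structure, though the consecutiveness claim would need the verification you yourself flag.)

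The genuine gap is in the step you label the ``crux'': you never actually round the anchored rectangle hitting relaxation. Saying that the rectangles are ``pinned to the diagonal and unbounded in two fixed directions'' and therefore ``essentially one-dimensional, interval-like'', so that ``a sweep \ldots greedily commits to points'' within ``a small constant'', is not an argument --- it neither specifies which point the sweep commits to nor establishes any constant, and consequently the factor $8$ at the end is unsupported. Note that a naive greedy sweep can fail: the difficulty is precisely that a rectangle $[i,j]\times(j,+\infty)$ constrains \emph{both} coordinates of the points it accepts, so when the sweep decides to place a point in some horizontal range it must choose its second coordinate correctly for rectangles it has not yet seen. The paper's Lemma~\ref{lem:rectangle_hitting} resolves this concretely: partition the points by first coordinate into consecutive stripes of $x$-mass exactly $\sfrac12$ (splitting a point if necessary) and pick the \emph{topmost} point of each full stripe. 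Since all rectangles in one half share a common topmost coordinate and no point lies above them, any rectangle of total mass at least $1$ either contains an entire stripe (whose topmost point then hits it) or properly straddles some intermediate stripe, whose topmost point cannot be to the left, to the right, or below the rectangle and hence lies inside it. This gives $|H|\le 2\sum_p x_p$ per half; applied with weights $2x$ to each of the two halves it yields $4\cdot x(L)+4\cdot x(L)=8\cdot x(L)$. Without an argument of this kind your proof establishes a reduction but not the theorem.
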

To obtain a cheap $x$-heavy cut covering, we apply the above theorem with $\Wscr = \{ C \in \mathcal{C}_G\colon C\text{ is heavy}\}$. Then, by definition of heavy cuts, $\frac{\epsilon}{16}\cdot x$ is a feasible solution to LP~\eqref{eq:heavyCutHitLP}. Thus we get an integral solution whose support $L_H\subseteq L$ is an $x$-heavy cut covering satisfying $|L_H| \le \frac{\epsilon}{2} \cdot x(L)$, as desired.

\subsection{Splitting at light cuts}\label{sec:reduction_splitting}

As discussed, we now move to the splitting procedure. We start with a basic structural lemma about crossing $2$-cuts which we use later to prove Proposition~\ref{prop:combine_split_sol}. This statement follows in a straightforward way from known results on minimum cuts and very basic additional observations. We provide a proof just for completeness.
\begin{lemma}\label{lem:crossing_2_edge_cuts}
Let $A,B \subsetneq V$ be 2-cuts of $G$.
If $A$ and $B$ cross, i.e., $A\cap B, A\setminus B, B\setminus A$, and $V\setminus (A\cup B)$ are nonempty,
then the following holds:
\begin{enumerate}
\item $G - (\delta_E(A) \cup \delta_E(B))$ has exactly four connected components.
        The vertex sets of these connected components are $A\cap B, A\setminus B, B\setminus A$, and $V\setminus (A\cup B)$. Each of them is a 2-cut in $G$.
        \label{item:four_connected_comp}
\item $\delta_E(A)$ contains an edge in $E[B]$ and an edge in $E[V\setminus B]$.
        \label{item:cut_edges_separated}
\end{enumerate}
\end{lemma}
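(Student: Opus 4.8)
The statement is a standard structural fact about crossing minimum cuts (here, $2$-cuts in a cactus / $2$-edge-connected graph), so the proof is largely bookkeeping with edge counts. I would prove it directly from submodularity of the cut function $|\delta_E(\cdot)|$ together with the hypothesis that both $A$ and $B$ are $2$-cuts and $G$ is $2$-edge-connected (so no cut has fewer than two edges).

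\textbf{Step 1: Submodularity to pin down the four pieces.} Write $d(S)\coloneqq|\delta_E(S)|$. Submodularity gives $d(A\cap B)+d(A\cup B)\le d(A)+d(B)=4$ and $d(A\setminus B)+d(B\setminus A)\le d(A)+d(B)=4$. Since $G$ is $2$-edge-connected and all four sets $A\cap B$, $A\setminus B$, $B\setminus A$, $V\setminus(A\cup B)$ are nonempty proper subsets (the last because $A\cup B\neq V$, equivalently $d(A\cup B)=d(V\setminus(A\cup B))$), each of these four sets has $d(\cdot)\ge 2$. Combining with the two submodularity inequalities forces $d(A\cap B)=d(A\cup B)=d(A\setminus B)=d(B\setminus A)=2$, and in particular all four are $2$-cuts, and both submodularity inequalities hold with equality. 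Equality in submodularity means there are no edges between $A\cap B$ and $V\setminus(A\cup B)$, and no edges between $A\setminus B$ and $B\setminus A$.

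\textbf{Step 2: Counting edges to get exactly four components.} I would now argue that $G-(\delta_E(A)\cup\delta_E(B))$ has exactly the connected components $A\cap B$, $A\setminus B$, $B\setminus A$, $V\setminus(A\cup B)$. Removing $\delta_E(A)\cup\delta_E(B)$ certainly disconnects these four vertex sets from one another (every edge leaving any one of them lies in $\delta_E(A)\cup\delta_E(B)$). Conversely each of the four induced subgraphs is connected: since $d(A\cap B)=2$ and $G$ is $2$-edge-connected, $G[A\cap B]$ is connected (removing the two boundary edges leaves a connected graph on $A\cap B$ on the near side), and likewise for the other three sets. This gives part~\ref{item:four_connected_comp}.

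\textbf{Step 3: Deriving part~\ref{item:cut_edges_separated}.} The two edges of $\delta_E(A)$ each go from $A$ to $V\setminus A$, i.e.\ from $A\cap B$ or $A\setminus B$ to $B\setminus A$ or $V\setminus(A\cup B)$. By Step~1 there are no edges between $A\cap B$ and $V\setminus(A\cup B)$ and none between $A\setminus B$ and $B\setminus A$, so each edge of $\delta_E(A)$ is either an $(A\cap B, B\setminus A)$-edge or an $(A\setminus B, V\setminus(A\cup B))$-edge; the former lies in $E[B]$, the latter in $E[V\setminus B]$. Since $d(B\setminus A)=2$ and $\delta_E(B\setminus A)$ consists of the $(B\setminus A, A\cap B)$-edges plus the $(B\setminus A, V\setminus(A\cup B))$-edges (no $(B\setminus A, A\setminus B)$-edges), and $d(A\cap B)=d(B\cap A)=2$, a short count shows $\delta_E(A)$ cannot consist of two edges of the same type: if both edges of $\delta_E(A)$ were $(A\cap B, B\setminus A)$-edges then $\delta_E(A\cap B)$ would have both its edges going to $B\setminus A$, so there would be no edge from $A\cap B$ to $A\setminus B$; but then $\delta_E(A\setminus B)$ (which has exactly $2$ edges, all to $V\setminus(A\cup B)$) together with connectivity of $G$ forces $A\setminus B$ to be separated from the rest except through $V\setminus(A\cup B)$, while symmetrically $B\setminus A$ is joined only to $A\cap B$ — this makes $\{A\cap B, B\setminus A\}$ and $\{A\setminus B, V\setminus(A\cup B)\}$ each a union giving a contradiction with $d(A\cup B)=2$ or with $2$-edge-connectedness. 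Hence exactly one edge of $\delta_E(A)$ lies in $E[B]$ and exactly one in $E[V\setminus B]$, which is part~\ref{item:cut_edges_separated} (and slightly more).

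\textbf{Main obstacle.} The only delicate point is Step~3: ruling out the degenerate distribution of the two edges of $\delta_E(A)$, which requires carefully combining the equality-in-submodularity facts with the exact degree-$2$ constraints on all four pieces rather than just a crude count. Everything else is routine submodular uncrossing.
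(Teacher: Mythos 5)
Your proof is correct, and for part~\ref{item:four_connected_comp} it is essentially the paper's argument: both use submodularity (plus $2$-edge-connectivity forcing every nonempty proper vertex set to have boundary at least $2$) to conclude that all four pieces are $2$-cuts, and then identify the four pieces as the connected components. (The paper bounds the number of components by $4$ via $\sum_C|\delta_E(C)|\le 8$ and $|\delta_E(C)|\ge 2$; you instead show each induced subgraph $G[A\cap B]$, etc., is connected because it is a $2$-cut in a $2$-edge-connected graph. Both are fine.)

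For part~\ref{item:cut_edges_separated} you take a genuinely different and noticeably heavier route. The paper simply observes that $G[B]$ is connected (again: $2$-cut plus $2$-edge-connectivity), so since $A\cap B$ and $B\setminus A$ are both nonempty there must be an edge of $E[B]$ joining them, and any such edge lies in $\delta_E(A)$; the symmetric argument in $G[V\setminus B]$ finishes the proof in two lines. You instead extract the equality conditions from the two submodular inequalities (no $(A\setminus B,B\setminus A)$-edges and no $(A\cap B,V\setminus(A\cup B))$-edges), classify the two edges of $\delta_E(A)$ into the two remaining types, and rule out the degenerate case by a counting contradiction. This works, and as you note it yields the slightly stronger conclusion that $\delta_E(A)$ has \emph{exactly} one edge of each type (which also follows immediately from the paper's version since $|\delta_E(A)|=2$). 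One stylistic remark on your Step~3: the cleanest way to land the contradiction is to note that under the bad assumption $e(A\cap B,A\setminus B)=e(A\setminus B,B\setminus A)=e(A\setminus B,V\setminus(A\cup B))=0$, so $\delta_E(A\setminus B)=\emptyset$, directly contradicting $2$-edge-connectivity; your parenthetical claim that $\delta_E(A\setminus B)$ ``has exactly $2$ edges, all to $V\setminus(A\cup B)$'' is itself incompatible with the assumption (those edges would lie in $\delta_E(A)$), and the subsequent appeal to $d(A\cup B)=2$ is an unnecessary detour. The mathematics is sound; only the exposition of that final contradiction should be tightened.
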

\begin{proof}
\begin{enumerate}
\item Because $G$ is 2-edge-connected and $|\delta_E(A) \cup \delta_E(B)| \le 4$, the graph  $G - (\delta_E(A) \cup \delta_E(B))$ has at most four connected components. Each of these connected components is a subset of  $A\cap B, A\setminus B, B\setminus A$, or $V\setminus (A\cup B)$.
If $A$ and $B$ cross, these four sets are all nonempty and hence $G - (\delta_E(A) \cup \delta_E(B))$ has exactly four connected components. Then the vertex sets of these connected components are  $A\cap B, A\setminus B, B\setminus A$, and $V\setminus (A\cup B)$.
Moreover, by submodularity and symmetry of the cut function $S \mapsto |\delta_E(S)|$ for $S\subseteq V$, we have
\[
2+2 \le |\delta_E(A\cap B)| + | \delta_E(V\setminus(A\cup B))| \le |\delta_E(A)| + |\delta_E(B)| = 4\enspace,
\]
implying $ |\delta_E(A\cap B)| = | \delta_E(V\setminus(A\cup B))| =2$.
Similarly,
\[
2+2\le |\delta_E(A \setminus B)| + |\delta_E(B\setminus A)|  \le |\delta_E(A)| + |\delta_E(B)| = 4\enspace,
\]
implying $ |\delta_E(A \setminus B)| = |\delta_E(B\setminus A)| =2$.
\item Because $G$ is 2-edge-connected and $|\delta_E(B)|=2$, the induced subgraphs $G[B]$ and $G[V\setminus B]$ are connected.
Therefore, if $A$ and $B$ cross, there is an edge from $A\cap B$ to $B\setminus A$.
Similarly, there is an edge from $V\setminus (A\cup B)$ to $A\setminus B$.
\end{enumerate}
\end{proof}

We are now ready to prove Proposition~\ref{prop:combine_split_sol}, which bounds the number of extra links needed to complete two solutions of split problems into one for the original instance. The statement is repeated below. We recall that $\mathcal{I}_C$ and $\mathcal{I}_{V\setminus C}$ are the sub-instances of $\mathcal{I}$ obtained by contracting $V\setminus C$ and $C$, respectively.
\propcombinesplitsol*
\begin{proof}
Let $\delta_E(C)=\{e_1,e_2\}$. We are interested in $2$-cuts $W\subseteq V$ of $G$ that are not covered by any link in $F_C \cup F_{V\setminus C}$. Notice that any such $2$-cut $W$ must cross $C$; for otherwise, either $W$ or $V\setminus W$ is fully contained in $C$ or $V\setminus C$, and is thus also a $2$-cut in one of the sub-instances $\mathcal{I}_C$ or $\mathcal{I}_{V\setminus C}$, which implies that it must be covered by at least one link in $F_C \cup F_{V\setminus C}$.
By applying Lemma~\ref{lem:crossing_2_edge_cuts}~\ref{item:cut_edges_separated} to $C$ and $W$, we obtain that one of the edges among $e_1, e_2$ has both endpoints in $W$ and the other one has no endpoint in $W$.
Let $\mathcal{W}\subseteq 2^V$ be the family of all $2$-cuts $W\subseteq V$ that are not covered by any link in $F_C \cup F_{V\setminus C}$ and contain both endpoints of $e_1$, i.e.,
\begin{equation*}
\mathcal{W} = \left\{W \subseteq V: e_1\in E[W], |\delta_E(W)| =2, \delta_L(W) \cap \left( F_C \cup F_{V\setminus C}\right) = \emptyset \right\}\enspace,
\end{equation*}
and, consequently, each set in $\mathcal{W}$ contains none of the endpoints of $e_2$.
Then for every 2-cut $W$ that is not covered by any link in $F_C \cup F_{V\setminus C}$, either $W$ or $V\setminus W$ is contained in $\Wscr$.
Hence, our goal is to show that we can efficiently compute a set $F\subseteq L$ of at most $|\delta_L(C)\cap F_C|-1$ links covering all cuts in $\mathcal{W}$, which implies that $F_C\cup F_{V\setminus C}\cup F$ is a CacAP solution for $\mathcal{I}$. We prove this by showing first that
\begin{enumerate}
\item $\mathcal{W}$ is a chain, which we then use to show
\item\label{item:W_small_card} $|\mathcal{W}|\leq |\delta_L(C)\cap F_C|-1$.
\end{enumerate}
Note that this indeed implies Proposition~\ref{prop:combine_split_sol} because we can choose for each cut $W\in \mathcal{W}$ a link $\ell_W\in L$ that covers $W$ and return $F = \{\ell_W : W\in \mathcal{W}\}$. 
Because $\mathcal{I}$ is a feasible CacAP instance, such links $\ell_W$ for $W\in \mathcal{W}$ exist.
Moreover, $F$ clearly covers all cuts in $\mathcal{W}$ and fulfills $|F|\leq |\mathcal{W}| \leq |\delta_L(C)\cap F_C|-1$ due to~\ref{item:W_small_card}.
Finally, $F$ can be computed efficiently, for example by starting with $F=\emptyset$ and successively adding a link to $F$ that covers a $2$-cut in $G$ not covered so far by $F_C\cup F_{V\setminus C}\cup F$.

We start by showing that $\mathcal{W}$ is a chain. Because each set $W\in \mathcal{W}$ contains both endpoints of $e_1$ and none of $e_2$ (and thus they have a common vertex and a vertex not contained in any of them), this is equivalent to showing that no two sets in $\mathcal{W}$ cross. By sake of contradiction, assume that $W,W'\in \mathcal{W}$ are crossing sets.
Then by Lemma~\ref{lem:crossing_2_edge_cuts}~\ref{item:four_connected_comp}, $G- (\delta_E(W)\cup\delta_E(W'))$ has four connected components and the vertex set of each of these components is a $2$-cut in $G$.
At least two of these connected components contain neither both endpoint of $e_1$ nor both endpoints of $e_2$.
Let $Z$ be the vertex set of such a component.
Then $Z\subseteq C$ or $Z\subseteq V\setminus C$.
Therefore, 
\begin{equation*}
\emptyset \ne (F_C \cup F_{V\setminus C}) \cap \delta_L(Z) \subseteq 
 (F_C \cup F_{V\setminus C}) \cap (\delta_L(W) \cup \delta_L(W'))\enspace,
\end{equation*}
contradicting $W, W' \in \mathcal{W}$.
Hence, $\mathcal{W}$ is a chain as claimed.

We now proceed with bounding the cardinality of $\mathcal{W}$. For this we first show a further structural result about the sets in $\mathcal{W}$. Let $\mathcal{W}=\{W_1, W_2, \ldots, W_m\}$ with $W_1 \subsetneq W_2 \subsetneq \dots \subsetneq W_m$. Moreover, we define $W_0 \coloneqq \emptyset$ and $W_{m+1} \coloneqq V$. An illustration of cuts $W_i$ is in Figure~\ref{fig:chain_cut_not_covered}.
\begin{figure}[!ht]
\begin{center}
\begin{tikzpicture}[scale=0.35]

\tikzset{
  prefix node name/.style={%
    /tikz/name/.append style={%
      /tikz/alias={#1##1}%
    }%
  }
}

\tikzset{link/.style={line width=1.5pt}}

\tikzset{node/.style={thick,draw=black,fill=white,circle,minimum size=6, inner sep=2pt}}

\newcommand\leftpart[2][]{
\begin{scope}[prefix node name=#1]

\begin{scope}[every node/.append style=node]
\node (1) at (12,19) {};
\node (12) at (9,13) {};
\node (13) at (10,16) {};
\node  (2) at (10,10) {};
\node  (3) at (12,6) {};
\node  (4) at (9,7) {};
\node  (5) at (7,4.5) {};
\node  (6) at (10.5,4) {};
\end{scope}

\begin{scope}[very thick]
\draw (1) --(13) -- (12) -- (2) --(3);
\draw (3) --(4) -- (5) -- (6) --(3);
\end{scope}

\begin{scope}[orange, link]
\draw (4) -- (6);
\end{scope}

\end{scope}
}%

\newcommand\rightpart[2][]{
\begin{scope}[prefix node name=#1]

\begin{scope}[every node/.append style=node]
\node  (7) at (18,18) {};
\node  (8) at (18,7) {};
\node  (9) at (17,21) {};
\node (10) at (20,3.5) {};
\node (11) at (17,4) {};
\node (14) at (19,10) {};
\node (15) at (19,14) {};
\end{scope}

\begin{scope}[very thick]
\draw (7) -- (15) -- (14) -- (8);
\draw  (7) to[bend left=15] (9);
\draw  (9) to[bend left=15] (7);
\draw (8) -- (11) -- (10) -- (8);
\end{scope}

\begin{scope}[purple, link]
\draw[bend left] (8) to (10);
\end{scope}

\end{scope}
}%

\begin{scope}
\leftpart[o-]{}
\rightpart[o-]{}
\end{scope}

\draw[bend right=10,line width=3, gray, opacity=0.5] (15,1) to (15,23);
\node[left] (o-C) at (15,1.5) {$C$};
\draw[bend left=10,line width=3, blue, opacity=0.5] (5,12) to (22.5,6.5);
\draw[bend left=10,line width=3, blue, opacity=0.5] (6,15) to (22,11);
\draw[bend left=10,line width=3, blue, opacity=0.5] (7,18) to (21.5,15);
\node[right] (o-W1) at (20,14) {$W_3$};
\node[right] (o-W2) at (20.5,10) {$W_2$};
\node[right] (o-W3) at (21,5.5) {$W_1$};

\begin{scope}[very thick]
\draw (o-7) -- (o-1);
\draw (o-8) -- (o-3);
\node[right] (e1) at (15,19) {$e_2$};
\node[right] (e2) at (13,7) {$e_1$};
\end{scope}

\begin{scope}[link]
\draw[orange] (o-2) -- (o-8);
\draw[bend left=20, purple] (o-1) to (o-9);
\draw[bend right=20, orange] (o-1) to (o-9);
\draw[bend left=12, purple] (o-13) to (o-15);
\draw[bend right=12, orange] (o-12) to (o-14);
\draw[bend left=10, purple] (o-12) to (o-14);
\draw[bend right=10, orange] (o-13) to (o-15);
\draw[bend left=45,orange] (o-8) to (o-5);
\draw[bend left=25, purple] (o-11) to (o-6);
\end{scope}

\begin{scope}[shift={(-28,0)}]
\leftpart[s-]{}
\node[node,fill=black]  (cl) at (15,12) {};
\node[left] (o-C) at (14.5,2) {$\mathcal{I}_{C}$};
\end{scope}

\begin{scope}[shift={(-23,0)}]
\rightpart[s-]{}
\node[node,fill=black]  (cr) at (15,12) {};
\node[right] (o-V-C) at (16,2) {$\mathcal{I}_{V\setminus C}$};
\end{scope}

\begin{scope}[very thick]
\draw (s-1) --(cl);
\draw (s-3) --(cl);
\draw (s-7) --(cr);
\draw (s-8) --(cr);
\end{scope}

\begin{scope}[link,orange]
\draw (s-2) -- (cl);
\draw (s-13) -- (cl);
\draw (s-12) -- (cl);
\draw[bend left=35] (s-1) to (cl);
\draw[bend left=100] (cl) to (s-5);
\end{scope}

\begin{scope}[link,purple]
\draw (s-14) -- (cr);
\draw (s-15) -- (cr);
\draw[bend left=20] (cr) to (s-9);
\draw[bend left=45] (s-11) to (cr);
\end{scope}

\end{tikzpicture} \end{center}
\caption{Feasible solutions of the sub-instances $\mathcal{I}_C$ and $\mathcal{I}_{V\setminus C}$, respectively. Their union (on the right) does not cover the cuts $W_1$, $W_2$, and $W_3$ (shown in blue).
These cuts form a chain. Moreover, each of the sets $L[W_1]$, $L[W_2\setminus W_1]$, $L[W_3 \setminus W_2]$, and $L[V\setminus W_3]$ contains a link in $F_C \cap \delta_L(C)$ (and a link in $F_{V\setminus C} \cap \delta_L(C)$).}\label{fig:chain_cut_not_covered}
\end{figure}
\enlargethispage{-2\baselineskip}
\begin{claim}\label{claim:sets_in_between}
Let $i\in \{0,\dots,m\}$.
\begin{itemize}
\item $(W_{i+1} \setminus W_i) \cap C$ is either empty or a 2-cut in $G$.
\item $(W_{i+1} \setminus W_i) \cap (V\setminus C)$ is either empty or a 2-cut in $G$.
\end{itemize}
\end{claim}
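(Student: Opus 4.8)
The plan is to prove the two bulleted statements symmetrically, so I focus on the first one: $(W_{i+1}\setminus W_i)\cap C$ is either empty or a $2$-cut in $G$. The key idea is to express this set as an intersection/difference of $2$-cuts and to use the uncrossing structure of Lemma~\ref{lem:crossing_2_edge_cuts}. Recall that $W_0=\emptyset$, $W_{m+1}=V$, and $W_1\subsetneq W_2\subsetneq\cdots\subsetneq W_m$ is the chain $\mathcal{W}$; moreover each $W_j$ with $1\le j\le m$ is a $2$-cut of $G$ containing both endpoints of $e_1$ and neither endpoint of $e_2$, while $C$ is a $2$-cut with $\delta_E(C)=\{e_1,e_2\}$. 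The boundary cases $i=0$ and $i=m$ should be handled first: for $i=0$ the set is $W_1\cap C$, and for $i=m$ it is $(V\setminus W_m)\cap C = C\setminus W_m$; these are again intersections/differences of two $2$-cuts and are covered by the same argument, using that $W_1$ (resp.\ $W_m$) crosses $C$ or is nested with $C$ in a controlled way.

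First I would fix $1\le i\le m-1$ (the generic case) and analyze the relationship between $W_i$, $W_{i+1}$, and $C$. The set $S \coloneqq (W_{i+1}\setminus W_i)\cap C$ has edge boundary $\delta_E(S)\subseteq \delta_E(W_{i+1})\cup\delta_E(W_i)\cup\delta_E(C)$, a set of at most six edges, but I want to show $|\delta_E(S)|=2$ whenever $S\neq\emptyset$. The cleaner route is: consider $W_i$ and $C$. If they do not cross, then (since both contain $e_1$'s endpoints) either $W_i\subseteq C$ or $C\subseteq W_i$ — but $C\subseteq W_i$ is impossible because $W_i$ avoids $e_2$'s endpoints while $C$ contains exactly one of them (namely one endpoint of $e_2$ lies in $C$). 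Wait — I need to be careful here: $e_1\in E[C]$ means both endpoints of $e_1$ are in $C$, and $e_2\in\delta_E(C)$ means exactly one endpoint of $e_2$ is in $C$. So $C$ contains one endpoint of $e_2$; since $W_i$ contains none, $C\not\subseteq W_i$. Hence either $W_i$ and $C$ cross, or $W_i\subsetneq C$. The same dichotomy holds for $W_{i+1}$ and $C$. The main work is to enumerate these cases and, in each, identify $S$ as one of the four connected components appearing in Lemma~\ref{lem:crossing_2_edge_cuts}~\ref{item:four_connected_comp} (applied to a suitable crossing pair), or as a set whose complement-side is handled by that lemma.

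Concretely, the cleanest decomposition is to write $S = (W_{i+1}\cap C)\setminus W_i$. Apply Lemma~\ref{lem:crossing_2_edge_cuts} to the pair $W_{i+1}, C$: if they cross, $W_{i+1}\cap C$ is a $2$-cut; if $W_{i+1}\subsetneq C$, then $W_{i+1}\cap C=W_{i+1}$ is a $2$-cut. Either way $A\coloneqq W_{i+1}\cap C$ is a $2$-cut (when nonempty, which it is since it contains $e_1$). Now $S=A\setminus W_i$, and $W_i\subseteq W_{i+1}$; also $W_i$ contains $e_1$, so $W_i\cap A = W_i\cap C \supseteq \{$endpoints of $e_1\}\neq\emptyset$, and $A\setminus W_i$ is what we want. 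If $W_i\subsetneq A$ then $S$ could fail to be a $2$-cut only if $W_i$ and $A$ are not nested — but $W_i\subseteq A$ here, so $A\setminus W_i$ has $\delta_E(A\setminus W_i)\subseteq\delta_E(A)\cup\delta_E(W_i)$; to pin it to exactly $2$ I apply the submodularity bound as in the proof of Lemma~\ref{lem:crossing_2_edge_cuts}: $|\delta_E(W_i)|+|\delta_E(A\setminus W_i)| \le$ hmm, this needs the right uncrossing inequality. Actually the clean statement is: if $W_i\subseteq A$ are both $2$-cuts and $A\setminus W_i\neq\emptyset$, then by submodularity $|\delta_E(W_i)|+|\delta_E(A)| \ge |\delta_E(W_i\cap A)| + |\delta_E(W_i\cup A)|$, wait that gives $|\delta_E(A\setminus W_i)|$ via the other pairing — I should instead use that $A\setminus W_i = A\cap (V\setminus W_i)$ and apply Lemma~\ref{lem:crossing_2_edge_cuts}~\ref{item:four_connected_comp} to the (possibly crossing) pair $A$ and $V\setminus W_i$, which is legitimate since $V\setminus W_i$ is a $2$-cut. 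If $A$ and $V\setminus W_i$ cross, $A\cap(V\setminus W_i)=S$ is a $2$-cut; if they are nested, since $S\neq\emptyset$ the only consistent nesting is $A\subseteq V\setminus W_i$ (impossible as $W_i\subseteq A$ and both contain $e_1$) or $V\setminus W_i\subseteq A$, i.e.\ $V\setminus A\subseteq W_i\subseteq A$, forcing $V\setminus A=\emptyset$ hence $A=V$, contradicting $e_2\notin E[A]$. So the crossing case holds and $S$ is a $2$-cut.

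I expect the main obstacle to be the bookkeeping in this case analysis — in particular, cleanly ruling out the degenerate nestings (e.g.\ when $W_i\cap C$ or $W_{i+1}\cap C$ collapses to $\emptyset$ or $V$) using the placement of the endpoints of $e_1$ and $e_2$, and making sure the argument is uniform across $i\in\{0,\dots,m\}$ including the two extreme indices. The second bullet, $(W_{i+1}\setminus W_i)\cap(V\setminus C)$, then follows by the identical argument with $C$ replaced by $V\setminus C$ (also a $2$-cut, with $e_2\in E[V\setminus C]$ and $e_1\in\delta_E(V\setminus C)$), swapping the roles of $e_1$ and $e_2$ throughout. I would present the first bullet in full and remark that the second is symmetric.
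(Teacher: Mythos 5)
Your strategy for the generic indices --- write $(W_{i+1}\setminus W_i)\cap C$ as $\bigl(W_{i+1}\cap C\bigr)\cap\bigl(V\setminus W_i\bigr)$ and apply the uncrossing statement Lemma~\ref{lem:crossing_2_edge_cuts}~\ref{item:four_connected_comp} twice --- is genuinely different from the paper's argument and is viable in principle. The paper instead handles $1\le i\le m-1$ by a direct edge count: it observes $\delta_E(C)\cap E[W_{i+1}\setminus W_i]=\emptyset$, bounds $|\delta_E((W_{i+1}\setminus W_i)\cap C)|$ by $|\delta_E(W_{i+1})\setminus E[V\setminus C]|+|\delta_E(W_{i})\setminus E[V\setminus C]|$, and uses part~\ref{item:cut_edges_separated} of the lemma (each $\delta_E(W_j)$ contains an edge of $E[V\setminus C]$) to bound each summand by $1$; $2$-edge-connectivity then finishes. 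Your route trades that counting for two applications of part~\ref{item:four_connected_comp}, at the cost of verifying four nonemptiness conditions each time.

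However, your execution rests on a misreading of the configuration that infects several of those verifications. You assert that ``$e_1\in E[C]$ means both endpoints of $e_1$ are in $C$.'' In the paper's setup $e_1$ and $e_2$ are the two edges of $\delta_E(C)$, so $C$ contains exactly \emph{one} endpoint of each; it is the sets $W\in\mathcal{W}$ that satisfy $e_1\in E[W]$ (both endpoints of $e_1$ in $W$, neither endpoint of $e_2$). Consequences in your write-up: (i) ``$W_i\cap A=W_i\cap C\supseteq\{\text{endpoints of }e_1\}$'' is false --- the intersection contains only the one endpoint of $e_1$ lying in $C$, which still gives nonemptiness but for a different reason; (ii) ``$W_i\subseteq A$'' is false, since $W_i$ contains the endpoint of $e_1$ in $V\setminus C$ and hence $W_i\not\subseteq C$, yet you invoke it to rule out both nesting cases; and (iii) your analysis of ``$A$ and $V\setminus W_i$ do not cross'' omits the case $W_i\setminus A=\emptyset$ altogether, because you take $W_i\subseteq A$ as a hypothesis rather than a case to exclude. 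All of this is repairable: with the correct endpoint placements one checks that $W_{i+1}$ and $C$ cross (so $A=W_{i+1}\cap C$ is a $2$-cut), and that $A\cap W_i$, $W_i\setminus A$, and $V\setminus(W_i\cup A)$ contain, respectively, the endpoint of $e_1$ in $C$, the endpoint of $e_1$ in $V\setminus C$, and the endpoint of $e_2$ in $C$ --- so $A$ and $V\setminus W_i$ cross whenever $S\ne\emptyset$. But as written, several steps are justified by false statements and the proof does not go through without this correction.
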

\begin{proof}
Note that the claim clearly holds for $i=0$ and $i=m$, in which case we have $W_1\setminus W_0 = W_1$ and $W_{m+1} \setminus W_m = V\setminus W_m$, and the result immediately follows from Lemma~\ref{lem:crossing_2_edge_cuts}~\ref{item:four_connected_comp} and the fact that $C$ crosses both $W_1$ and $W_m$.

Hence, let $i\in \{1,\ldots, m-1\}$.
Observe that $\delta_E(C) \cap E[W_{i+1}\setminus W_i] = \emptyset$, because $\delta_E(C)=\{e_1, e_2\}$ and each $W\in \mathcal{W}$ contains both endpoints of $e_1$ and none of the endpoints of $e_2$.
Thus,
\begin{equation*}
\left|\delta_E\big((W_{i+1}\setminus W_i)\cap C\big)\right|\ \le\ 
\big|\delta_E(W_{i+1})\setminus E[V\setminus C]\big| + \big|\delta_E(W_{i})\setminus E[V\setminus C]\big|\enspace.
\end{equation*}
Moreover, by Lemma~\ref{lem:crossing_2_edge_cuts}~\ref{item:cut_edges_separated} applied to $A=W_i$ and $B=C$, we have $|\delta_E(W_{i})\cap E[V\setminus C]| \ge 1$.
Thus,
\[
  \big|\delta_E(W_{i})\setminus E[V\setminus C]\big|\ = \big|\delta_E(W_{i})\big| -  \big|\delta_E(W_{i})\cap E[V\setminus C]\big| 
  \ \le\ 2 -1\ =\ 1\enspace.
\]
Analogously, we get $|\delta_E(W_{i+1})\setminus E[V\setminus C]| \le 1$.
Hence, $|\delta_E\left((W_{i+1}\setminus W_i)\cap C\right)| \le 2$.
By symmetry, also $|\delta_E\left((W_{i+1}\setminus W_i)\cap (V\setminus C)\right)| \le 2$.
Because $G$ is 2-edge-connected, this completes the proof.
\end{proof}

We now refine the above claim by showing that $(W_{i+1}\setminus W_i)\cap C \neq \emptyset$ for all $i\in \{0,\ldots, m\}$. (This holds analogously for $(W_{i+1}\setminus W_i)\cap (V\setminus C)$, though we do not need this fact.)
Let $i\in \{0,\ldots, m\}$ and assume for the sake of deriving a contradiction that $(W_{i+1}\setminus W_i)\subseteq V\setminus C$.
Then by Claim~\ref{claim:sets_in_between}, $W_{i+1}\setminus W_i$ induces a $2$-cut in $G$.
Therefore,
\[
\emptyset \ne F_{V\setminus C} \cap \delta_L(W_{i+1}\setminus W_i) \subseteq F_{V\setminus C} \cap ( \delta_L(W_i) \cup \delta_L(W_{i+1}))\enspace,
\]
which contradicts the fact that $\delta_L(W_j) \cap (F_C \cup F_{V\setminus C})=\emptyset$ for all $j\in \{0,\ldots, m\}$.
We thus obtain $(W_{i+1}\setminus W_i)\cap C \ne \emptyset$ as desired.

We now show that for all $i\in \{0,\dots,m\}$, the set $F_C$ contains a link in $\delta_L(C) \cap L[W_{i+1}\setminus W_i]$.
By Claim~\ref{claim:sets_in_between}, $(W_{i+1}\setminus W_i)\cap C$ induces a $2$-cut in $G$ and hence
$F_C \cap \delta_L( (W_{i+1}\setminus W_i)\cap C )$ contains a link $\ell$.
Because $\delta_L(W_i)$ and  $\delta_L(W_{i+1})$ do not contain any link of $F_C$, the link $\ell$ must be contained in
$L[W_{i+1}\setminus W_i] \cap \delta_L(C)$.
For different $i\in  \{0,\dots,m\}$ these links are distinct, implying $|F_C \cap \delta_L(C)| \ge m+1 = |\mathcal{W}| + 1$, and thus finishing the proof.
\end{proof}

As sketched in Section~\ref{sec:overviewKWide}, to control the extra cost incurred by splitting a CacAP problem, and merging independent solutions for each of the two sub-instances, we have to carefully choose the cuts $C\in \mathcal{C}_G$ on which we split. On the one hand we want to have $x$-light cuts because links crossing the cut appear in both sub-instances. Moreover, to amortize the cost incurred by doubling the links crossing $C$, we only perform splittings on big cuts, which, as we recall for convenience, are defined as follows, where $T=\{v\in V: |\delta_E(v)|=2\}$ are all terminals of $G$.
\begin{definition}[small and big sets]
 We call a set $C\subseteq V$ \emph{small} if $|C\cap T| \le \frac{k}{2}$.
 Otherwise $C$ is called \emph{big}.
\end{definition}
The idea is that any solution to a CacAP instance with $k$-many terminals needs at least $\sfrac{k}{2}$-many links, because each terminal must be the endpoint of at least one link. 
Hence, when splitting on a big cut $C\in \mathcal{C}_G$, the lower bound of $\frac{1}{2}|C\cap T|$ on any solution to $\mathcal{I}_C$ allows for amortizing the doubling of $x$-value on links crossing $C$ due to the splitting. Indeed, we have $x(\delta_L(C)) \le \sfrac{16}{\varepsilon} = O(\varepsilon |C\cap T|)$.

Before we discuss how to handle instances that cannot be split anymore, we formally define the notions of being \emph{splittable} at a certain $2$-cut and of an \emph{unsplittable instance}.

\begin{definition}[splittable instances]
Let $\mathcal{I}=(G=(V,E),\mathcal{I})$ be a CacAP instance with root $r\in V$ and $x\in [0,1]^L$. 
Then $\mathcal{I}$ is \emph{splittable} at $C\in \mathcal{C}_G$ if $C$ is $x$-light and big and $C \ne V\setminus \{r\}$.
\end{definition}

The reason why we disallow splitting at $C = V\setminus \{r\}$, even if  $V\setminus\{r\}$ is a big $x$-light 2-cut, is that
for $C=V\setminus\{r\}$, the instance $\Iscr_C$ is identical to $\Iscr$  (while $\Iscr_{V\setminus C}$ is a trivial instance where the cactus has two vertices only).
Hence, we do not make any progress by splitting at $C=V\setminus\{r\}$.

\begin{definition}
An instance $\mathcal{I}=(G=(V,E),L)$ of CacAP with root $r\in V$ is \emph{unsplittable} if all $2$-cuts in $\mathcal{C}_G$ are light and there is no $C\in \mathcal{C}_G$ such that $\mathcal{I}$ is splittable at $C$.
\end{definition}

\subsection{An algorithm for unsplittable instances}\label{sec:reduction_algo_for_unsplittable}

So far, we discussed key ingredients and results to split instances. We now focus on instances that are unsplittable.
In the next subsection we will then explain how we can use a splitting procedure to extend the procedure we discuss here to general instances.

To later prove Theorem~\ref{thm:main_reduction}, we will use the technique of round-or-cut. 
Recall that
\begin{equation*}
P_{\mathrm{CacAP}}(\mathcal{I}) \coloneqq \conv(\{\chi^F : F\subseteq L, (V,E\cup F) \text{ is $3$-edge-connected}\})\enspace
\end{equation*}
denotes the convex hull of incidence vectors of feasible solutions of the instance $\Iscr=(G=(V,E),L)$.
We will show that given an instance $\Iscr=(G,L)$ of CacAP and a vector $x\in [0,1]^L$, we can either compute a feasible solution $F$ with $|F|\le (\alpha + \epsilon) \cdot x(L)$ or
a vector $w\in \mathbb{R}^L$ such that $w^T x < w^T x'$ for all $x' \in P_{\mathrm{CacAP}}(\mathcal{I})$.
In this section we describe such an algorithm for unsplittable instances.

We start by proving Lemma~\ref{lem:unsplittable}, which we repeat below for convenience.
\lemunsplittable*
\begin{proof}
Let $\Iscr=(G=(V,E),L)$ with root $r\in V$ be an unsplittable CacAP instance.
Because $G$ is a cactus, either the vertex set of each connected component of $G-r$ is contained in $\Cscr_G \setminus \{V\setminus\{r\}\}$, or $G-r$ has exactly one connected component with vertex set $V\setminus \{r\}$.
In the first case, all the vertex sets of connected components of $G-r$ are small, because $\Iscr$ is unsplittable.
Thus, in this case $G$ is $k$-wide.

Now consider the case when $G-r$ is connected.
Then $r$ has degree $2$ in $G$ and, consequently, is contained in only one cycle of the cactus $G$.
We again distinguish two cases. 
Let us first consider the case where the cycle containing $r$ is of length at least three. Let $v_1$ and $v_2$ denote the two neighbors of $r$ in the cycle containing $r$ and let
\begin{align*}
V_1 :=& \{v\in V\setminus\{r,v_2\}: v\text{ is reachable from }v_1\text{ in }G- v_2\}\enspace, \text{ and}\\
V_2 :=& \{v\in V\setminus\{r,v_1\}: v\text{ is reachable from }v_2\text{ in }G- v_1\}\enspace.
\end{align*}
We claim that $|\delta_E(V_1)| = |\delta_E(V_2)|=2$ and $V_1 \cup V_2 = V\setminus \{r\}$.
Then $V_1$ and $V_2$ must be small and thus $|T \cap(V\setminus \{r\})| \le |T\cap V_1| + |T\cap V_2| \le k$.

To prove  $V_1 \cup V_2 = V\setminus \{r\}$, we consider a vertex $v\in V\setminus \{r\}$ and a $v$-$r$ path in $G$. 
Then $P$ visits $v_1$ or $v_2$ because these are the only neighbors of $r$.
If $v_1$ is the first vertex in $\{v_1,v_2\}$ visited by $P$,
the subpath of $P$ from $v$ to $v_1$ is contained in $G- v_2$ and hence $v\in V_1$.
Otherwise, an analogous argument shows $v\in V_2$.

Now we show $|\delta_E(V_1)|=2$, which, by symmetry, implies $|\delta_E(V_2)|=2$. 
Suppose $\delta_E(V_1)$ contains two distinct edges $\{a,v_2\},\{b,v_2\}$ incident with $v_2$.
Then $a,b\in V_1$ and $G-v_2$  contains a $v_1$-$a$ path $P_a$ and a $v_1$-$b$ path $P_b$.
This implies that the path $P_a$ together with $\{a,v_2\},\{v_2,r\},\{r,v_1\}$ and the path
$P_b$ together with $\{b,v_2\},\{v_2,r\},\{r,v_1\}$ form two distinct cycles in the cactus $G$, that both contain the edge $\{v_2,r\}$. This is a contradiction, implying $|\delta_E(V_1) \cap \delta_E(v_2)|\le 1$.
By the definition of $V_1$, all edges in $\delta_E(V_1)$ are incident to either $r$ or $v_2$.
Now $r$ has only degree two and one of its incident edges is $\{r,v_2\}\notin \delta_E(V_1)$.
Thus, $|\delta_E(V_1)|\le |\delta_E(V_1) \cap \delta_E(r)| + |\delta_E(V_1) \cap \delta_E(v_2)| \le 1 +1 =2$.
This completes the proof in the case where the cycle containing $r$ is of length at least three.

It remains to consider the case where the cycle containing $r$ is of length two.
Then $r$ has a unique neighbor $r'$. 
Because $G$ is a cactus, the vertex sets of the connected components of $G-r'$  are 2-cuts of $G$.
One of these vertex sets is $\{r\}$ and the others are contained in $\mathcal{C}_G \setminus \{V\setminus \{r\}\}$.
Hence, they are all small and $\Iscr$ is $k$-wide (with $r'$ as center).
\end{proof}

To be able to later combine solutions of $k$-wide sub-instances obtained through splitting, while controlling the cost of the merged solution, we cannot directly call the algorithm $\mathcal{A}$ for $k$-wide instances on each sub-instance and combine what we get. The reason why this approach does not work becomes apparent when considering how we merge two sub-instances stemming from a single split, as captured by Proposition~\ref{prop:combine_split_sol}. Indeed, to obtain a CacAP solution to the original instance from solutions $F_C$ and $F_{V\setminus C}$ to the sub-instances $\mathcal{I}_C$ and $\mathcal{I}_{V\setminus C}$, respectively, we need to add a link set $F$ of cardinality $|\delta_L(C)\cap F_C|-1$. To make sure that this is small, we are interested in a solution $F_C$ to the sub-instance $\mathcal{I}_C$ using only few links in $\delta_L(C)$. To achieve this, we will first enumerate over small link sets $S\subseteq \delta_L(C)$, and complement the links $S$ to a solution of the sub-instance by calling $\mathcal{A}$ on a residual instance obtained by fixing $S$. To be able to do so, we need to show that such a residual instance remains $k$-wide.
In order to prove this, we need the following observations.
(We recall that a vertex is a descendant of itself.)
\begin{lemma}\label{lem:descendants_define_2_cut}
Let $v\in V \setminus\{r\}$. Then the set of descendants of $v$ is a 2-cut of $G$.
\end{lemma}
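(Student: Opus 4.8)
The statement claims that for any $v \in V \setminus \{r\}$, the set $D_v$ of descendants of $v$ (including $v$ itself) forms a $2$-cut of $G$, i.e., $|\delta_E(D_v)| = 2$. The natural approach is to use the structure of the cactus together with the definition of ancestry. Recall $u$ is an ancestor of $v$ if $u$ lies on every $v$-$r$ path in $G$; equivalently, $v$ is a descendant of $u$. So $D_v = \{w \in V : v \text{ lies on every } w\text{-}r \text{ path}\}$.

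First I would observe that $D_v$ is exactly the set of vertices $w$ such that every $w$-$r$ path passes through $v$; in particular $r \notin D_v$, so $D_v \subseteq V \setminus \{r\}$ and $D_v \neq \emptyset$ (it contains $v$) and $D_v \neq V$. The key structural fact to exploit is that in a cactus $G$, removing the two edges of any cycle through $v$ that "point toward $r$" disconnects $D_v$ from the rest. More carefully: since $G$ is a cactus, every edge lies in a unique cycle. Consider the unique cycle $K$ containing $v$ on the (unique) path-of-cycles from $v$ towards $r$ — equivalently, among all cycles through $v$, the one that separates $v$ from $r$. Let $e_1, e_2$ be the two edges of $K$ incident to $v$. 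I would argue that $\delta_E(D_v) = \{e_1, e_2\}$: deleting $e_1$ and $e_2$ separates $v$ (and hence all of $D_v$) from $r$, because any $w$-$r$ path with $w \in D_v$ must pass through $v$ and then must leave $v$ along $e_1$ or $e_2$ (any other edge at $v$ belongs to a cycle "below" $v$ and leads only to descendants of $v$, from which $r$ is unreachable without returning through $v$). Conversely, no edge of $\delta_E(D_v)$ other than $e_1, e_2$ can exist: an edge $\{a,b\}$ with $a \in D_v$, $b \notin D_v$ would give a $b$-$r$ path avoiding $v$ (since $b \notin D_v$) concatenated with the edge to $a$, contradicting $a \in D_v$ unless that path already visits $v$ — which it does not. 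The cleanest way to make the "exactly two" count rigorous is to note $|\delta_E(D_v)| \geq 2$ since $G$ is $2$-edge-connected, and $|\delta_E(D_v)| \leq 2$ since all of $\delta_E(D_v)$ is "funneled" through $v$: every edge in $\delta_E(D_v)$ must be incident to $v$ itself (an edge leaving $D_v$ from a proper descendant $a \neq v$ to outside would contradict that $a$'s only route to $r$ is through $v$), so $\delta_E(D_v) \subseteq \delta_E(v)$; and at most two edges at $v$ can leave $D_v$ — indeed, if three edges at $v$ lay in distinct cycles, two of those cycles would both have to reach $r$, creating two distinct cycles through a shared edge near $r$, contradicting the cactus property (this is exactly the argument used in the proof of Lemma~\ref{lem:unsplittable}).

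The main obstacle, I expect, is the careful case analysis around $v$ when $v$ lies in multiple cycles, ensuring that exactly two of the edges at $v$ go "upward" (toward $r$) and cleanly arguing that all edges in $\delta_E(D_v)$ are incident to $v$. The argument that $\delta_E(D_v) \subseteq \delta_E(v)$ deserves a short proof: if $a \in D_v \setminus \{v\}$ and $\{a,b\} \in \delta_E(D_v)$, then $b \notin D_v$ means there is a $b$-$r$ path $P$ avoiding $v$; but $a \in D_v$ forces every $a$-$r$ path through $v$, and $\{a,b\} \cup P$ is an $a$-$r$ walk (hence contains an $a$-$r$ path) that can be taken to avoid $v$ — contradiction. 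Once that reduction is in place, the "$\leq 2$" bound is the cactus-cycle argument and the "$\geq 2$" bound is $2$-edge-connectivity, completing the proof. I would also remark, for use later in the paper, that this shows the descendants of $v$ form a cut in $\mathcal{C}_G$ whenever $v \neq r$, so ancestry in the cactus behaves just like ancestry in a rooted tree with respect to $2$-cuts.
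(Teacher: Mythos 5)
Your proposal is correct and takes essentially the same route as the paper's proof: you first show that every edge of $\delta_E(D_v)$ must be incident to $v$ itself (via the same contradiction obtained by prepending the edge $\{a,b\}$ to a $b$-$r$ path avoiding $v$), and then combine the lower bound $|\delta_E(D_v)|\ge 2$ from $2$-edge-connectivity with the upper bound $|\delta_E(D_v)\cap\delta_E(v)|\le 2$ from the cactus property. The only imprecision is in your sketch of the latter bound: the two distinct cycles one constructs (from $r$-$u_j$ paths avoiding $v$, each extended by the edge $\{u_j,v\}$) share an edge incident to $v$, not an edge ``near $r$'', but this is exactly the argument the paper carries out.
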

\begin{proof}
For the sake of deriving a contradiction, suppose this is not the case. Then there are three distinct vertices $u_1,u_2,u_3$ in $G$ that are all neighbors of descendants of $v$ but not descendants of $v$ themselves. First observe that they must all be neighbors of $v$. Indeed, if, say $u_1$, were a neighbor of a descendant $w$ of $v$ with $w\neq v$, then there is a $u_1$-$r$ path not containing $v$ because one can first go from $u_1$ to $w$ and then to $r$ along an $w$-$r$ path not containing $v$, which exists because $w$ is not a descendant of $v$. However, this would contradict that $u_1$ is a descendant of $v$. Hence, assume that $u_1,u_2,u_3$ are all neighbors of $v$.
Because none of them is a descendant of $v$, there is, for each $j\in \{1,2,3\}$, an $r$-$u_j$ path $P_j$ in $G$ that does not contain $v$, which implies that we can extend it by the edge $\{u_j,v\}$ and maintain a path.
But then every pair $P_j,P_h$ of these three paths contains a cycle through the vertex $v$.
This cycle contains the edges $\{u_j,v\}$ and $\{u_h,v\}$ for the corresponding indices $j,h\in \{1,2,3\}$.
In particular, we obtain two distinct cycles containing the edge $\{u_1,v\}$, a contradiction to $G$ being a cactus.
\end{proof}

\begin{lemma}\label{lem:terminal_descendant_exists}
For every vertex $v\in V$, there is a descendant $t$ of $v$ in $G$ such that $t$ is a terminal.
\end{lemma}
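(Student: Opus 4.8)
The plan is to find a terminal among the descendants of $v$ by descending to a ``deepest'' one. First I would note that the descendant relation is a partial order on $V$. It is reflexive by definition, and transitive: if $w$ is a descendant of $u$ and $u$ of $v$, then every $w$-$r$ path $P$ contains $u$, and the $u$-$r$ sub-path of $P$ (which is again a $u$-$r$ path) contains $v$, so $P$ contains $v$. Writing $D_u$ for the set of descendants of $u$, transitivity gives $D_w\subseteq D_u$ whenever $w\in D_u$; and if in addition $w\ne u$, then $u\notin D_w$, for otherwise a simple $w$-$r$ path would pass through $u$ and its $u$-$r$ tail would have to pass through $w$ again. Hence, if $w\in D_u$ and $w\ne u$, then $D_w\subsetneq D_u$.

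The crux is the following observation: any descendant $u$ of $v$ with $D_u=\{u\}$ is a terminal. We may assume $|V|\ge 2$, since the single-vertex cactus is degenerate; then $u\ne r$, because every vertex of $V$ is a descendant of $r$, so $D_r=\{r\}$ would force $V=\{r\}$. Thus Lemma~\ref{lem:descendants_define_2_cut} applies to $u$ and shows that $D_u=\{u\}$ is a $2$-cut of $G$, i.e.\ $|\delta_E(\{u\})|=2$. Since $G$ is a loopless cactus, $\delta_E(\{u\})$ is exactly the set of edges incident to $u$, so $u$ has degree $2$ in $G$ and is therefore a terminal.

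It remains to exhibit, for an arbitrary $v\in V$, a descendant of $v$ with no proper descendant. I would argue by induction on $|D_v|$. If $|D_v|=1$, then $D_v=\{v\}$ and $v$ itself is a terminal by the previous paragraph. If $|D_v|\ge 2$, choose any $w\in D_v$ with $w\ne v$; by the first paragraph $D_w\subsetneq D_v$, so $|D_w|<|D_v|$, and the induction hypothesis yields a terminal $t\in D_w$. By transitivity $t$ is a descendant of $v$, completing the induction and the proof.

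I do not anticipate a real obstacle: given Lemma~\ref{lem:descendants_define_2_cut}, the only substantive point is recognising that a vertex whose sole descendant is itself is forced to have degree two, and the only spot needing a moment's care is ruling out $u=r$, which is precisely where the harmless assumption $|V|\ge 2$ is used.
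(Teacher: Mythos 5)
Your proof is correct and follows essentially the same route as the paper's: descend to a descendant of $v$ whose only descendant is itself (the paper picks one minimizing $|U_t|$, you induct on $|D_v|$ — the same well-founded descent) and then invoke Lemma~\ref{lem:descendants_define_2_cut} to conclude it has degree two. Your extra care in ruling out $u=r$ before applying that lemma is a detail the paper's proof glosses over, but it is handled exactly as you do it.
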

\begin{proof}
For $w\in V$, let $U_w$ denote the set of descendants of $w$.
Let $t$ be a descendant of $v$ such that $|U_t|$ is minimal.
If $U_t =\{t\}$, the vertex $t$ is a terminal of $G$ by Lemma~\ref{lem:descendants_define_2_cut}.
Otherwise, there is a vertex $u\in U_t \setminus \{t\}$.
Then $u$ is a descendant of $t$ and hence also of $v$.
Moreover, every descendant of $u$ is also a descendant of $t$ and hence $U_u \subseteq U_t \setminus \{t\}$, a contradiction to the minimality of $|U_t|$.
\end{proof}

Now we are ready to show that residual instances of $k$-wide instances are again $k$-wide.

\begin{lemma}\label{lem:residual_instance_of_k_wide}
Let $\Iscr=(G,L)$ be a $k$-wide instance of the cactus augmentation problem and let $L'\subseteq L$.
Then the residual instance of $\Iscr$ with respect to $L'$ is $k$-wide.
\end{lemma}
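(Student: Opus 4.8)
The plan is to exhibit a $k$-wide center for the residual instance. Let $r\in V$ be a $k$-wide center of $\mathcal{I}$ (one exists by assumption), and recall from Lemma~\ref{lem:resIndOrder} that the residual instance $\widetilde{\mathcal{I}}=(\widetilde{G}=(\widetilde{V},\widetilde{E}),\widetilde{L})$ of $\mathcal{I}$ with respect to $L'$ is well defined independently of the contraction order. Let $\pi\colon V\to\widetilde{V}$ be the contraction map and set $\tilde{r}\coloneqq\pi(r)$. I would prove that $\tilde{r}$ is a $k$-wide center of $\widetilde{\mathcal{I}}$. First I would record two facts. (i)~The vertex set $\widetilde{W}$ of every connected component of $\widetilde{G}-\tilde{r}$ is a $2$-cut of $\widetilde{G}$ not containing $\tilde{r}$: all edges of $\widetilde{E}$ leaving $\widetilde{W}$ are incident to $\tilde{r}$, there are at least two of them since $\widetilde{G}$ is $2$-edge-connected (contraction preserves $2$-edge-connectivity), and three such edges would yield two distinct cycles sharing an edge at $\tilde{r}$, contradicting that $\widetilde{G}$ is a cactus (the same type of argument as in the proofs of Lemmas~\ref{lem:descendants_define_2_cut} and~\ref{lem:unsplittable}). (ii)~By Lemma~\ref{lem:residualCutCorr} applied to the $2$-cut $\widetilde{W}$, the set $W\coloneqq\pi^{-1}(\widetilde{W})$ is a $2$-cut of $G$ that is not covered by $L'$, and $r\notin W$ since $\tilde{r}\notin\widetilde{W}$.

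Next I would reduce the statement to the following claim, which is the crux: \emph{every nonempty $2$-cut $S\subsetneq V$ of $G$ contains a terminal of $G$.} Granting the claim, fix a component $\widetilde{W}$ of $\widetilde{G}-\tilde{r}$ and its preimage $W$. Since $G$ is $2$-edge-connected, $G[W]$ is connected, and because $r\notin W$ this forces $W$ to be contained in a single connected component $\Gamma$ of $G-r$. For every terminal $\tilde{v}$ of $\widetilde{G}$ with $\tilde{v}\in\widetilde{W}$, the singleton $\{\tilde{v}\}$ is a $2$-cut of $\widetilde{G}$ avoiding $\tilde{r}$, so by Lemma~\ref{lem:residualCutCorr} the set $\pi^{-1}(\tilde{v})$ is a nonempty $2$-cut of $G$ contained in $W$ and avoiding $r$; by the claim it contains a terminal of $G$, which I select as $\phi(\tilde{v})$. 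As the sets $\pi^{-1}(\tilde{v})$ are pairwise disjoint for distinct $\tilde{v}$, the map $\phi$ injects the terminals of $\widetilde{G}$ in $\widetilde{W}$ into the terminals of $G$ in $W\subseteq\Gamma$. Hence the number of terminals of $\widetilde{G}$ in $\widetilde{W}$ is at most the number of terminals of $G$ in $\Gamma$, which is at most $k$ because $r$ is a $k$-wide center of $\mathcal{I}$. Since $\widetilde{W}$ was an arbitrary component, this shows that $\tilde{r}$ is a $k$-wide center of $\widetilde{\mathcal{I}}$. (Degenerate cases need only a remark: if $\widetilde{G}$ is a single vertex the conclusion is vacuous, and the fact that $r$ may itself be a terminal causes no problem since a center is allowed to be a terminal.)

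It remains to prove the claim, which I expect to be the main obstacle. Given a nonempty $2$-cut $S\subsetneq V$, I would contract $G[V\setminus S]$ to a single vertex $\rho$, obtaining a graph $G'$ with vertex set $S\cup\{\rho\}$, so $|V(G')|=|S|+1\ge 2$. The two edges of $\delta_E(S)$ lie on a common cycle of $G$ (a cycle crosses any cut an even number of times, and $G$ has no bridges), and $G[V\setminus S]$ is connected; using this one checks that $G'$ is again a connected cactus in which every edge lies on a unique cycle, that $\deg_{G'}(v)=\deg_G(v)$ for all $v\in S$, and that $\deg_{G'}(\rho)=|\delta_E(S)|=2$. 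In such a cactus every vertex has even degree (its edges partition into pairs, one pair per cycle through it) and $|E(G')|=|V(G')|-1+c$, where $c$ is the number of cycles; since the cycles are edge-disjoint and each has at least two edges, $c\le|E(G')|/2$, hence $|E(G')|\le 2\bigl(|V(G')|-1\bigr)$. Consequently $\rho$ cannot be the only degree-$2$ vertex of $G'$: otherwise all other vertices would have (even) degree at least $4$, giving $2|E(G')|\ge 4\bigl(|V(G')|-1\bigr)+2$, which contradicts the bound just derived. Therefore some $v\in S$ has $\deg_{G'}(v)=2$, whence $\deg_G(v)=2$, so $v$ is a terminal of $G$ contained in $S$. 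The only steps that require genuine care are the verification that $G'$ inherits the cactus structure and the even-degree/edge-count bookkeeping; everything else follows routinely from Lemmas~\ref{lem:resIndOrder} and~\ref{lem:residualCutCorr}.
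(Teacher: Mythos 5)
Your proof is correct, and its skeleton matches the paper's: both arguments show (a) that each connected component of the residual cactus minus the image of the center lives inside a single component of $G-r$, and (b) that the terminals of the residual instance correspond, via Lemma~\ref{lem:residualCutCorr}, to pairwise disjoint $2$-cuts of $G$, each of which contains a terminal of $G$. Where you genuinely diverge is in the proof of the central sub-claim that every $2$-cut contains a terminal. The paper gets this from its root/descendant machinery: any $v\in C$ has a terminal descendant $t$ by Lemma~\ref{lem:terminal_descendant_exists}, and the two edge-disjoint $t$-$r$ paths both pass through $v$, forcing $t\in C$. You instead contract $V\setminus S$ and run a self-contained parity and cyclomatic-number count on the resulting cactus ($|E|=|V|-1+c$, all degrees even, each cycle has at least two edges), concluding that $\rho$ cannot be the only degree-$2$ vertex. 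Your route is more elementary and root-free, at the cost of verifying that the contraction preserves the cactus structure --- which you rightly flag and which does hold, since contracting a connected subgraph preserves $2$-edge-connectivity and the property that no two cycles share an edge. Two small remarks: your step (a) also differs slightly (you deduce $W\subseteq\Gamma$ from $W$ being a $2$-cut avoiding $r$, rather than from tracking which links get contracted into the center, as the paper does --- both work); and your justification that each component of $\widetilde G-\tilde r$ is a $2$-cut is terse as stated (three edges from $\tilde r$ into one component do not \emph{directly} give two cycles sharing an edge; one first pairs the edges at $\tilde r$ into cycles and then uses connectivity within the component), but the fact is true and is used in the same form in the paper's proof of Lemma~\ref{lem:unsplittable}.
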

\begin{proof}
Let $G=(V,E)$ and let $r\in V$ be a a $k$-wide center of $G$, i.e., every connected component of $G-r$ contains at most $k$ terminals.
Let $G'=(V',E')$ be the cactus in the residual instance of $\Iscr$ with respect to $L'$ and let $r'\in V'$ be the vertex that arose from the contraction of a vertex set containing $r$.
We claim that $G'$ is $k$-wide with center $r'$, i.e., every connected component of $G'-r'$ contains at most $k$ elements of $T'=\{v \in V': |\delta_{E'}(v)| = 2\}$.

If a link $\{v,w\}\in L'$ has endpoints in distinct connected components of $G-r$, then $r$ is contained in every $v$-$w$ path in $G$, and hence $v$, $w$, $r$, and possibly further vertices are contracted, resulting in the vertex $r'\in V'$.
Thus, every connected component $C'$ of $G'-r'$ corresponds to a connected component $C$ of $G-r$ in the sense that every vertex of $C'$ arose from the contraction of a subset of vertices of $C$.
It remains to show that such contraction operations cannot increase the number of vertices of degree~$2$. 
Every degree~$2$ vertex in $G'$ arose by contraction from a $2$-cut $C\subseteq V$ in $G$ (where it is possible that $C$ contains only a single vertex).
Thus, the terminals of $G'$ correspond to disjoint elements of $\Cscr_G$.
We claim that each of these 2-cuts contains a terminal, which implies that the number of terminals in $G'$ cannot be higher than in $G$. 

To see that each 2-cut $C\in \Cscr_G$ must contain a terminal, consider an arbitrary vertex $v\in C$.
Then by Lemma~\ref{lem:residual_instance_of_k_wide}, there is a terminal $t$ of $G$ that is a descendant of $v$.
This terminal $t$ must be contained in $C$ because $G$ contains two edge-disjoint $t$-$r$ paths in $G$ (because $G$ is 2-edge-connected) and each of them must visit $v$
because $t$ is a descendant of $v$. Using that $C$ is a 2-cut, we conclude $t\in C$.
\end{proof}

As discussed before the statement of Lemma~\ref{lem:descendants_define_2_cut}, we are now interested in finding solutions to $k$-wide sub-instances that allow for cheap merging. As stated in Proposition~\ref{prop:combine_split_sol}, we can keep the merging cost low by finding a solution $F\subseteq L$ to $\mathcal{I}_C$ using only few links of $\delta_L(s)$, where $s$ is the vertex that arose by contracting $V\setminus C$. This is why we want to minimize the objective $|F|+|\delta_{F}(s)|$ instead of simply the number of links $|F|$. The lemma below shows that approximately minimizing this objective is possible given an oracle $\mathcal{A}$ returning approximate solutions to $k$-wide CacAP instances, in the sense that one can either find a good solution efficiently or return a separating hyperplane that can be used in a round-or-cut framework.
\begin{lemma}\label{lem:algo_unsplittable}
 Let  $\epsilon' = \frac{\epsilon}{4}$.
 Suppose there is an $\alpha$-approximation algorithm $\Ascr$ for $k$-wide CacAP instances.
 Then there is a polynomial-time algorithm that, given an unsplittable CacAP instance $\mathcal{I}=(G=(V,E),L)$, a vector $x\in [0,1]^L$, and a vertex $s$ of $G$ with  $|\delta_E(s)|=2$ and $x(\delta_L(s))\le \frac{16}{\epsilon}$, either returns
 \begin{itemize}
  \item a CacAP solution $F\subseteq L$ with $|F| + |\delta_{F}(s)| \le (1+\epsilon')\cdot \alpha \cdot\left( x(L) + x(\delta_L(s))\right)$, or
  \item a vector $w\in \mathbb{R}^L$ such that $w^T x < w^T x'$ for all $x' \in P_{\mathrm{CacAP}}(\Iscr)$.
 \end{itemize}
\end{lemma}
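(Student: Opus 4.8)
The plan is to reduce the problem of approximately minimizing $|F| + |\delta_F(s)|$ to a single application of the oracle $\Ascr$ on a residual instance, after fixing a well-chosen small subset of links incident to $s$. First I would observe that, because $|\delta_E(s)| = 2$, the link-degree of $s$ in an optimal solution $\OPT_s$ minimizing $|F| + |\delta_F(s)|$ can be assumed to be small: indeed, any CacAP solution containing three or more links from $\delta_L(s)$ can have all but two of them removed while remaining feasible (the $2$-cut $\{s\}$, or any cut separated only by edges at $s$, needs at most two links to be covered — more carefully, since $s$ has exactly two incident edges, the cuts that $\delta_L(s)$-links are needed for form a chain, so a bounded number suffice). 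So we may write $\OPT_s = S^* \cup F^*$ where $S^* \subseteq \delta_L(s)$ with $|S^*| \le 2$ and $F^*$ covers the residual instance $\Iscr_{S^*}$ obtained by fixing $S^*$.

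The algorithm then enumerates over all subsets $S \subseteq \delta_L(s)$ with $|S| \le 2$ (polynomially many). For each such $S$, form the residual instance $\Iscr_S$, which by Lemma~\ref{lem:residual_instance_of_k_wide} is again $k$-wide (note $\Iscr$ is unsplittable, hence $k$-wide by Lemma~\ref{lem:unsplittable}), and run $\Ascr$ on it to get a solution $F_S$ with $|F_S| \le \alpha \cdot \mathrm{opt}(\Iscr_S)$. By Corollary~\ref{cor:residual_instance}, $S \cup F_S$ is a feasible solution of $\Iscr$. Among all $S$, keep the one minimizing $|S \cup F_S| + |\delta_{S\cup F_S}(s)|$; since $S^*$ is one of the enumerated candidates and all links in $F^*$ avoid $\delta_L(s)$ after contraction (the endpoints of $s$ having been merged away), for that choice we get $|S^* \cup F_{S^*}| + |\delta_{S^* \cup F_{S^*}}(s)| \le |S^*| + |\delta(s) \cap S^*| + \alpha\,|F^*| \le 2|S^*| + \alpha\,|\OPT_s|$, and absorbing the $O(1)$ additive term. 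Here the key quantitative point: if $x(L) + x(\delta_L(s))$ is large compared to $\alpha$, this additive $O(1)$ is negligible and we are done with multiplicative factor $(1+\epsilon')\alpha$; if $x(L) + x(\delta_L(s))$ is small (bounded by a constant depending on $\alpha, \epsilon'$), then the whole instance has bounded LP value and hence bounded optimum, so we can solve it optimally, e.g.\ via Lemma~\ref{lem:fpt_terminals} or by the integrality of LP~\eqref{eq:dir_cut_lp}.

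The remaining case is when the value returned is \emph{not} within the desired bound for \emph{any} $S$: then, since $S^*$ was among the candidates and $S^* \cup F^*$ is an optimal solution of the objective $|F| + |\delta_F(s)|$, we must have $\mathrm{opt}_s(\Iscr) > (1+\epsilon') \cdot (x(L) + x(\delta_L(s)))$ — in other words, the fractional point $x$ augmented by $\delta_L(s)$-weights costs strictly less than any integral solution, so $x$ lies outside the relevant polytope and we can produce a separating hyperplane. Concretely, I would take the weight vector $w$ that assigns $1 + \mathbbm{1}[\ell \in \delta_L(s)]$ to each link $\ell$; then $w^T x = x(L) + x(\delta_L(s))$, while $w^T x' = |F| + |\delta_F(s)| \ge \mathrm{opt}_s(\Iscr) > (1+\epsilon')\,w^T x$ for every integral $x' = \chi^F \in P_{\mathrm{CacAP}}(\Iscr)$, and by convexity the same strict inequality (after rescaling) holds for all of $P_{\mathrm{CacAP}}(\Iscr)$; scaling $w$ by $(1+\epsilon')$ gives $w^T x < w^T x'$ for all $x' \in P_{\mathrm{CacAP}}(\Iscr)$, as required.

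\textbf{Main obstacle.} The delicate part is the case analysis on the size of $x(L) + x(\delta_L(s))$ and making the additive-constant bookkeeping clean: one needs that enumerating over $S$ with $|S|\le 2$ genuinely captures an optimal solution to the $|F|+|\delta_F(s)|$ objective (using that $s$ has degree two, so we never need more than a constant number of links at $s$), and that $\Ascr$ applied to $\Iscr_S$ really lower-bounds against $|F^*|$ — this rests on Corollary~\ref{cor:residual_instance} together with the observation that $F^*$, viewed in the residual instance, no longer touches $s$. Everything else is routine LP-duality and a finite enumeration.
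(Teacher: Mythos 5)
Your proof has a genuine gap at its foundation: the claim that an optimal solution for the objective $|F|+|\delta_F(s)|$ can be assumed to satisfy $|\delta_F(s)|\le 2$ is false. A link $\{s,v\}$ does not merely cover cuts ``at $s$''; it covers every $2$-cut separating $s$ from $v$, and these links may be the only way to cover many pairwise incomparable cuts scattered over the cactus. For instance, if $G$ is a cycle $s,v_1,\dots,v_n,s$ and $L=\{\{s,v_i\}:i\in[n]\}$, the singleton cut $\{v_i\}$ is covered only by $\{s,v_i\}$, so \emph{every} feasible solution has $n$ links incident to $s$; the cuts requiring $\delta_L(s)$-links do not form a chain. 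Consequently your enumeration over $|S|\le 2$ need not capture any near-optimal solution, the comparison against $\Ascr$'s output breaks, and the fallback separation argument (``if no enumerated $S$ works then $\mathrm{opt}_s>(1+\epsilon')(x(L)+x(\delta_L(s)))$'') is unsound: the enumeration can fail simply because all good solutions have large $\delta(s)$-degree, even when $x\in P_{\mathrm{CacAP}}(\Iscr)$. A telltale sign is that your argument never uses the hypothesis $x(\delta_L(s))\le\frac{16}{\epsilon}$. The paper's proof resolves exactly this point differently: it does \emph{not} bound $|\delta_{\OPT}(s)|$. Instead it enumerates all $S\subseteq\delta_L(s)$ with $|S|\le\frac{1+\epsilon'}{\epsilon'}\cdot\frac{16}{\epsilon}$ and, upon failure, outputs the hyperplane $x'(L)+\mu\, x'(\delta_L(s))$ with $\mu=1+\frac{\epsilon'(x(L)+x(\delta_L(s)))}{x(\delta_L(s))}$. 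The weight $\mu$ is chosen so that any integral solution $F^*$ violating this hyperplane automatically satisfies $|\delta_{F^*}(s)|\le\frac{1+\epsilon'}{\epsilon'}\,x(\delta_L(s))\le\frac{1+\epsilon'}{\epsilon'}\cdot\frac{16}{\epsilon}$ and hence was enumerated — this is where the hypothesis on $x(\delta_L(s))$ is consumed, and it also removes the additive-constant bookkeeping that forces you into a separate ``small LP value'' case.

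A second, smaller but still real, problem: you apply $\Ascr$ to the residual instance of $(G,L)$ with respect to $S$, keeping all of $\delta_L(s)$ available. Then $F_S$ may contain additional links of $\delta_L(s)\setminus S$ (the oracle minimizes cardinality, not your weighted objective), so $|\delta_{S\cup F_S}(s)|$ can greatly exceed $|S|$ and your accounting $|S\cup F_S|+|\delta_{S\cup F_S}(s)|\le 2|S|+\alpha|F^*|$ does not hold. The paper avoids this by running $\Ascr$ on the residual instance of $(G,L\setminus\delta_L(s))$ with respect to $S$, so the returned solution meets $s$ exactly in $S$ and the objective is exactly $|F'|+2|S|$. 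Your remark that ``$F^*$ avoids $\delta_L(s)$ after contraction'' does not substitute for this, since it constrains the comparison solution but not what the oracle may return.
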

\begin{proof}
By Lemma~\ref{lem:unsplittable}, the given unsplittable instance $\mathcal{I}=(G=(V,E),L)$ is $k$-wide.
If $x(\delta_L(s))<1$, we have $x(\delta_L(s)) < x'(\delta_L(s))$ for all $x'\in P_{\mathrm{CacAP}}(\Iscr)$ and can thus return $w=\chi^{\delta_L(s)}$.
Otherwise, we proceed as follows.
For every set $S \subseteq \delta_L(s)$ with $|S| \le \frac{1+\epsilon'}{\epsilon'}\cdot \frac{16}{\epsilon}$, we apply the given algorithm $\Ascr$ for $k$-wide instances to the residual instance of $(G, L\setminus \delta_L(s))$ with respect to $S$. This instance is $k$-wide by Lemma~\ref{lem:residual_instance_of_k_wide}.

If for some such set $S$, the algorithm $\Ascr$ returns a solution $F'$ with $|F'| + 2 |S| \le  (1+\epsilon')\cdot \alpha \cdot\left( x(L) + x(\delta_L(s))\right)$, we return $F=F' \cup S$.
Otherwise, we define $\mu:= 1 +  \frac{\epsilon' ( x(L) + x(\delta_L(s)) ) }{ x(\delta_L(s)) }$ and claim that
\begin{equation}
x'(L)+ \mu \cdot x'(\delta_L(s)) > x(L) + \mu \cdot x(\delta_L(s)) \quad \forall\; x'\in P_{\mathrm{CacAP}}(\Iscr)\enspace,\label{eq:sep_hyper_when_A_fails}
\end{equation}
which again leads to a vector $w\in \mathbb{R}^L$ as desired.
Suppose that~\eqref{eq:sep_hyper_when_A_fails} does not hold. 
Then there exists a solution $F^*$ of $\Iscr$ with
\begin{equation}\label{eq:solution_violating_cut}
|F^*| + \mu \cdot | \delta_{F^*}(s) | 
 \le x(L) + \mu \cdot x(\delta_L(s))  = (1+\epsilon') \cdot (x(L) + x(\delta_L(s))\enspace.
\end{equation}
For $S = \delta_{F^*}(s)$ this implies
\begin{equation*}
\frac{\epsilon' \left( x(L) + x(\delta_L(s)) \right) }{ x(\delta_L(s)) } \cdot |S|\ \le\ \mu \cdot  |S|\ \le\  (1+\epsilon') \cdot \left( x(L) + x(\delta_L(s)) \right)
\end{equation*}
and hence
\begin{equation*}
|S| \ \le\ \frac{1+\epsilon'}{\epsilon'} x(\delta_L(s))\ \le\ \frac{1+\epsilon'}{\epsilon'}\cdot \frac{16}{\epsilon}\enspace.
\end{equation*}
Thus, we considered the set $S$ in our algorithm described above.
Let $F'$ be the output of the $\alpha$-approximation algorithm $\Ascr$ applied to the residual instance of $(G, L\setminus \delta_L(s))$ with respect to $S$.
Then $|F'| \le \alpha \cdot |F^* \setminus S|$ because by Corollary~\ref{cor:residual_instance}, the set $F^*\setminus S$ is a feasible solution of this residual instance. 
Therefore, using $\alpha \ge 1$,  $S=\delta_{F^*}(s)$, and \eqref{eq:solution_violating_cut}, we get
\begin{align*}
|F' | + 2 |S| &\le \alpha \cdot |F^*\setminus S| + 2 |S| \le \alpha \cdot (|F^*| + |S|) \leq \alpha \cdot (|F^*| + \mu |\delta_{F^*}(s)|)\\
&\le (1+\epsilon')\cdot \alpha \cdot (x(L) + x(\delta_L(s))\enspace,
\end{align*}
contradicting the fact that we did not return $F' \cup S$.
\end{proof}

\subsection{An algorithm for instances without heavy cuts}\label{sec:reduction_algo_no_heavy_cuts}

In this section we present an algorithm for general CacAP instances without heavy cuts.
To this end we combine the results on splitting and on how to deal with unsplittable instances.
In particular, we use the splitting result given by Proposition~\ref{prop:combine_split_sol} together with Lemma~\ref{lem:algo_unsplittable}.
\begin{lemma}\label{lem:round_and_cut_without_heavy}
Suppose that there is an $\alpha$-approximation algorithm $\mathcal{A}$ for $k$-wide CacAP instances.
Then, for any CacAP instance $\mathcal{I}=(G=(V,E),L)$ and $x\in [0,1]^L$ such that no cut is $x$-heavy, there is a polynomial-time algorithm that computes either
\begin{itemize}
\item a CacAP solution $L'$ with $|L'|\le \alpha \cdot (1+\frac{\epsilon}{2}) \cdot x(L)$, or
\item a vector $w\in \mathbb{R}^L$ such that $w^T x < w^T x'$ for all $x' \in P_{\mathrm{CacAP}}(\mathcal{I})$.
\end{itemize}
\end{lemma}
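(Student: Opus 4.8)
The plan is to carry out the successive‑splitting scheme outlined in Section~\ref{sec:overviewKWide}, feeding the ingredients into a round‑or‑cut argument. We may assume $\Iscr$ is feasible, since feasibility of a CacAP instance (i.e.\ $\delta_L(C)\neq\emptyset$ for all $C\in\Cscr_G$) is checkable in polynomial time, and if $\Iscr$ is infeasible then $P_{\mathrm{CacAP}}(\Iscr)=\emptyset$ and any $w$ works. Recall that by hypothesis every cut of $\Iscr$ is $x$-light, and this property is preserved under the contractions below (cuts of a contracted instance are cuts of $\Iscr$), so ``splittable'' simply means ``big and $\neq V\setminus\{r\}$''.

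First I would split. Set $\Iscr^{(0)}:=\Iscr$, and while $\Iscr^{(j-1)}$ is splittable, pick an inclusion‑wise minimal $2$‑cut $C_j$ at which $\Iscr^{(j-1)}$ is splittable, and replace $\Iscr^{(j-1)}$ by the two sub‑instances $\Iscr_j:=(\Iscr^{(j-1)})_{C_j}$, rooted at the vertex $s_j$ obtained by contracting the complement of $C_j$, and $\Iscr^{(j)}:=(\Iscr^{(j-1)})_{V\setminus C_j}$; I stop with $\Iscr^{(q-1)}=:\Iscr_q$ unsplittable. Minimality of $C_j$ together with the exclusion of $V\setminus\{r\}$ from the definition of splittability shows that $\Iscr_j$ is unsplittable, and each step contracts more than $k/2\ge 1$ vertices into one, so $q\le|V|$. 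By Lemma~\ref{lem:unsplittable} every $\Iscr_i$ is $k$-wide. Let $x_i$ be the natural restriction of $x$ to the link set $L_i$ of $\Iscr_i$ (push links forward along the contractions, dropping self‑loops). The contraction recursion yields $\sum_{i=1}^q x_i(L_i)=x(L)+S$ and $\sum_{i=1}^{q-1}x_i(\delta_{L_i}(s_i))=S$, where $S:=\sum_{i=1}^{q-1}x(\delta_L(C_i))$, and $x_i(\delta_{L_i}(s_i))=x(\delta_L(C_i))\le\tfrac{16}{\epsilon}$ because no cut of $\Iscr$ is $x$-heavy.

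Next I would process each sub‑instance, with $\epsilon'=\tfrac{\epsilon}{4}$. Summing the valid inequalities $x'(\delta_{L_i}(\{t\}))\ge 1$ over the terminals $t\neq s_i$ of $\Iscr_i$ (each singleton $\{t\}$ is a $2$-cut) gives $2\,x'(L_i)\ge |T_{\Iscr_i}|-1$ for all $x'\in P_{\mathrm{CacAP}}(\Iscr_i)$, where $T_{\Iscr_i}$ is the terminal set of $\Iscr_i$; so if $x_i(L_i)<(|T_{\Iscr_i}|-1)/2$, the all‑ones vector separates $x_i$ from $P_{\mathrm{CacAP}}(\Iscr_i)$. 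Otherwise, for $i<q$ I apply Lemma~\ref{lem:algo_unsplittable} to $\Iscr_i$ with the vertex $s_i$ and the vector $x_i$, and for $i=q$ I run the oracle $\Ascr$ on $\Iscr_q$ (legitimate since $\Iscr_q$ is $k$-wide) and declare ``failure'' iff the returned $F_q$ has $|F_q|>(1+\epsilon')\alpha\,x_q(L_q)$, in which case a minimum‑cardinality solution of $\Iscr_q$ has size $\ge|F_q|/\alpha>x_q(L_q)$ and again the all‑ones vector separates $x_q$. Thus every invocation returns either a separating hyperplane $w_i$ for $P_{\mathrm{CacAP}}(\Iscr_i)$, or a solution $F_i$ with $|F_i|+|\delta_{F_i}(s_i)|\le(1+\epsilon')\alpha\,(x_i(L_i)+x_i(\delta_{L_i}(s_i)))$ for $i<q$, respectively $|F_q|\le(1+\epsilon')\alpha\,x_q(L_q)$. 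A hyperplane $w_i$ lifts to one for $\Iscr$: the restriction map $\rho_i:\mathbb{R}^L\to\mathbb{R}^{L_i}$ is linear with $\rho_i(x)=x_i$, and, because a contraction never separates the endpoints of a link and every $2$-cut of $\Iscr_i$ is a $2$-cut of $\Iscr$, $\rho_i$ sends incidence vectors of feasible solutions of $\Iscr$ to feasible solutions of $\Iscr_i$; hence $\rho_i(P_{\mathrm{CacAP}}(\Iscr))\subseteq P_{\mathrm{CacAP}}(\Iscr_i)$ and $w:=\rho_i^{\top}w_i$ satisfies $w^{\top}x=w_i^{\top}x_i<w_i^{\top}\rho_i(x')=w^{\top}x'$ for all $x'\in P_{\mathrm{CacAP}}(\Iscr)$. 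If any sub‑instance produced a hyperplane, I return the lifted $w$.

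Otherwise all $F_i$ exist, and I would merge them: applying Proposition~\ref{prop:combine_split_sol} to the splits $\Iscr^{(i-1)}\to(\Iscr_i,\Iscr^{(i)})$ for $i=q-1,\dots,1$ and using $|\delta_L(C_i)\cap F_i|=|\delta_{F_i}(s_i)|$ gives a feasible $F\subseteq L$ for $\Iscr$ with
\[
|F|\ \le\ \sum_{i=1}^q|F_i|+\sum_{i=1}^{q-1}\bigl(|\delta_{F_i}(s_i)|-1\bigr)\ \le\ \sum_{i=1}^{q-1}\bigl(|F_i|+|\delta_{F_i}(s_i)|\bigr)+|F_q|\ \le\ (1+\epsilon')\alpha\,(x(L)+2S),
\]
where the last step uses the per‑instance guarantees together with $\sum_i x_i(L_i)=x(L)+S$ and $\sum_{i<q}x_i(\delta_{L_i}(s_i))=S$. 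It remains to bound $S$, and this is the step I expect to be the main obstacle — it is where the bigness of the cuts and the exact value $k=\tfrac{64(8+3\epsilon)}{\epsilon^2}$ are needed, and where one must use the escape hatch to turn ``big'' into a genuine lower bound on $x_i(L_i)$ even though $x$ is not assumed feasible. Concretely: for $i<q$ the cut $C_i$ is big, so $\Iscr_i$ has more than $k/2$ terminals other than $s_i$, and since we passed the escape‑hatch check this forces $x_i(L_i)\ge k/4$; together with $x(\delta_L(C_i))\le\tfrac{16}{\epsilon}=\tfrac{\epsilon}{8+3\epsilon}\cdot\tfrac{k}{4}$ this gives $x(\delta_L(C_i))\le\tfrac{\epsilon}{8+3\epsilon}\,x_i(L_i)$. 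Summing over $i<q$ and using $\sum_{i<q}x_i(L_i)\le x(L)+S$ yields $S\le\tfrac{\epsilon}{8+3\epsilon}(x(L)+S)$, i.e.\ $S\le\tfrac{\epsilon}{8+2\epsilon}x(L)$, whence $(1+\epsilon')\alpha(x(L)+2S)\le(1+\tfrac{\epsilon}{4})\alpha\cdot\tfrac{4+2\epsilon}{4+\epsilon}\,x(L)=(1+\tfrac{\epsilon}{2})\alpha\,x(L)$, as required. All operations are polynomial time and $\Ascr$ is invoked polynomially often; besides the amortization, the only remaining care is routine bookkeeping to propagate ``no $x$-heavy cut'', ``$k$-wide'', and ``unsplittable'' through the contraction and restriction operations so that Lemma~\ref{lem:algo_unsplittable} applies and hyperplanes lift. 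This also completes the proof of Theorem~\ref{thm:main_reduction}.
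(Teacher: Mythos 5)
Your proof is correct and follows essentially the same route as the paper's: split at inclusion-wise minimal big $x$-light cuts, handle each resulting unsplittable (hence $k$-wide) piece via Lemma~\ref{lem:algo_unsplittable}, merge with Proposition~\ref{prop:combine_split_sol}, and amortize the doubled weight $x(\delta_L(C_i))\le 16/\epsilon$ against $x_i(L_i)\ge k/4$ — the paper organizes the splitting as an induction on $|V|$ with a local cancellation at each step rather than your global bound on $S$, but that is cosmetic. One point where you are more careful than the paper: the paper asserts $x(L_C)\ge k/4$ directly from $C$ being big, which implicitly requires $x$ to satisfy the terminal-degree constraints even though $x$ is an arbitrary point of $[0,1]^L$; your explicit escape-hatch check (separating with the all-ones vector on $L_i$ whenever $x_i(L_i)<(|T_{\Iscr_i}|-1)/2$) supplies exactly the justification this step needs within the round-or-cut framework.
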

\begin{proof}
We fix an arbitrary root $r\in V$.
We use induction on the number of vertices of $G$.
If $\Iscr$ is unsplittable, it is also $k$-wide by Lemma~\ref{lem:unsplittable}.
Then we apply the given $\alpha$-approximation algorithm $\Ascr$ to $\Iscr$ to obtain a feasible solution $L'$.
If $|L'| \le \alpha \cdot x(L)$, we return $L'$. 
Otherwise, we return $w = \chi^{L}$, which fulfills $w^T x < w^T x'$ for all $x' \in P_{\mathrm{CacAP}}(\mathcal{I})$ because $\Ascr$ is an $\alpha$-approximation algorithm.

Hence, we may assume that $ \Iscr$ is splittable.
Let $C\in \mathcal{C}_G$ be a minimal set such that $\Iscr$ is  splittable at $C$.
We apply our splitting result, Proposition~\ref{prop:combine_split_sol}, to $C$ and $\mathcal{I}$.

As in Proposition~\ref{prop:combine_split_sol}, we denote by $\mathcal{I}_C=(G_C, L_C)$ the instance arising from $\mathcal{I}$ by contracting $V\setminus C$ and we choose as new root the vertex $s$ arising from the contraction of $V\setminus C$. (Note that $\{r\} \subsetneq V\setminus C$.)
Because $C$ was chosen minimally, the instance $\mathcal{I}_C$ with root $s$ is unsplittable, and thus, by Lemma~\ref{lem:unsplittable}, it is $k$-wide.
We apply Lemma~\ref{lem:algo_unsplittable} to $\mathcal{I}_C$ to either obtain a CacAP solution $F_C$ for $\mathcal{I}_C$ with
\begin{equation}\label{eq:good_subsol_C}
|F_C| + |\delta_{F_C}(s)| \leq \left(1+\tfrac{\varepsilon}{4}\right)\cdot \alpha\cdot (x(L_C) + x(\delta_L(s)))\end{equation}
or a vector $w\in \mathbb{R}^{L_C}$ satisfying $w^T x < w^T x'$ for all $x' \in P_{\mathrm{CacAP}}(\mathcal{I}_C)$.

Moreover, we denote by $\mathcal{I}_{V\setminus C}=(G_{V\setminus C}, L_{V\setminus C})$ the CacAP instance arising from $\mathcal{I}$ by contracting $C$.
Because $\mathcal{I}$ is splittable at $C$, the set $C$ is big and therefore $G_{V\setminus C}$ has strictly fewer vertices than $G$.
By the induction hypothesis we either obtain a CacAP solution $F_{V\setminus C}$ for $\mathcal{I}_{V\setminus C}$ with
\begin{equation}\label{eq:good_subsoc_comp_C}
|F_{V\setminus C}|\leq \alpha \cdot\left(1+\tfrac{\varepsilon}{2}\right) \cdot x(L_{V\setminus C})
\end{equation}
 or a vector $\bar w\in \mathbb{R}^{L_{V\setminus C}}$ such that $\bar w^T x < \bar w^T x'$ for all $x' \in P_{\mathrm{CacAP}}(\mathcal{I}_{V\setminus C})$.

If we obtained a vector $w\in \mathbb{R}^{L_C}$ such that $w^T x < w^T x'$ for all $x' \in P_{\mathrm{CacAP}}(\mathcal{I}_C)$, we can extend it to a vector $w\in \mathbb{R}^L$ such that $w^T x < w^T x'$ for all $x' \in P_{\mathrm{CacAP}}(\mathcal{I})$ by setting $w_\ell =0$ for all $\ell \in L \setminus L_C$.
Here we use that restricting a CacAP solution $F$ of $\mathcal{I}$ to $F\cap L_C$ yields a CacAP solution for $\mathcal{I}_C$.
We can proceed analogously if we have a vector $\bar w\in \mathbb{R}^{L_{V\setminus C}}$ such that $\bar w^T x < \bar w^T x'$ for all $x' \in P_{\mathrm{CacAP}}(\Iscr_{V\setminus C})$.

Otherwise, we obtained solutions $F_C$ and $F_{V\setminus C}$ as discussed above and apply Proposition~\ref{prop:combine_split_sol}.
This yields a set $F\subseteq L$ such that $F \cup F_C \cup F_{V\setminus C}$ is a CacAP solution for $\Iscr$ and $|F| \le |F_C \cap \delta_L(C)| = |\delta_{F_C}(s)|$.
Thus, combining~\eqref{eq:good_subsol_C} and~\eqref{eq:good_subsoc_comp_C}, we obtain
\begin{equation}\label{eq:upper_bound_combining_subinstances}
\begin{aligned}
|F \cup F_C \cup F_{V\setminus C}|
 &\le  |F_{V\setminus C}| + |F_C| +  |\delta_{F_C}(s)|\\
 &\le  \alpha \left(1+\tfrac{\epsilon}{2}\right) x(L_{V\setminus C}) + \alpha \left(1+\tfrac{\epsilon}{4}\right) \left(x(L_C) + x(\delta_L(C))\right)\\
 &=   \alpha \left(1+\tfrac{\epsilon}{2}\right) x(L_{V\setminus C}) +\alpha \left(1+\tfrac{\epsilon}{2}\right) x(L_C) - \alpha \tfrac{\epsilon}{4} \cdot x(L_C) + \alpha \left(1+\tfrac{\epsilon}{4}\right) \cdot x(\delta_L(C)) \\
 &= \alpha \left(1+\tfrac{\epsilon}{2}\right) \cdot x(L) + \alpha \left(2+\tfrac{3\epsilon}{4}\right) \cdot x(\delta_L(C)) - \alpha \tfrac{\epsilon}{4} \cdot x(L_C)\enspace,
\end{aligned}
\end{equation}
where we used $L_C \cup L_{V\setminus C} = L$ and $L_C \cap L_{V\setminus C} = \delta_L(C)$.
Because $C$ is big, we have $x(L_C) \ge \frac{k}{4} = 16\cdot\frac{8+3\epsilon}{\epsilon^2}$.
Moreover, $x(\delta_L(C))\le \frac{16}{\epsilon}$ because $\delta_L(C)$ is $x$-light.
This implies
\begin{equation*}
 \left(2+\tfrac{3\epsilon}{4}\right) \cdot x(\delta_L(C))\ \le\  \left(2+\tfrac{3\epsilon}{4}\right) \cdot  \tfrac{16}{\epsilon}\  =\ \tfrac{\epsilon}{4} \cdot 16\cdot\tfrac{8+3\epsilon}{\epsilon^2}\ \le\ \tfrac{\epsilon}{4}\cdot x(L_C)\enspace.
\end{equation*}
Together with \eqref{eq:upper_bound_combining_subinstances}, we obtain
\begin{equation*}
|F \cup F_C \cup F_{V\setminus C}|\ \le\ \alpha \cdot \left(1+ \tfrac{\epsilon}{2}\right) \cdot x(L)\enspace,
\end{equation*}
as desired.
\end{proof}

\subsection{Proof of Theorem~\ref{thm:main_reduction}}\label{sec:reduction_ellipsoid}

We are now ready to put everything together to prove Theorem~\ref{thm:main_reduction} through the round-or-cut framework.
Given an arbitrary CacAP instance $\mathcal{I}=(G=(V,E),L)$, we first guess the cardinality $|\OPT|$ of an optimal solution $\OPT\subseteq L$ to instance $\mathcal{I}$.
(More precisely, we run the algorithm described below for all possible values of $|\OPT|$ and return the best of the resulting solutions.)
Note that we can check upfront whether the instance $\Iscr$ is feasible, so me may assume that this is the case.

We run the Ellipsoid Method to find a feasible point in the convex hull 
$P_{\mathrm{CacAP}}(\mathcal{I})\cap \{y\in [0,1]^L: y(L)=|\OPT|\}$ of all incidence vectors of optimal solutions using the following separation oracle.
Given a vector $x\in [0,1]^L$, we first check if $x(L) = |\OPT|$ and otherwise return $w= \chi^L$ or $w=-\chi^L$ as a separating hyperplane.
If $x(L) = |\OPT|$, we apply Theorem~\ref{thm:covering_heavy} with $\Wscr = \{C\in \mathcal{C}_G : C\text{ is $x$-heavy}\}$ to obtain a cheap heavy cut covering, i.e., a set $L_H\subseteq L$ of links covering all heavy cuts with $|L_H|\leq \frac{\epsilon}{2}\cdot x(L)$. (See the discussion right after Theorem~\ref{thm:covering_heavy} for more details.)
Let  $\bar{\mathcal{I}}=(\bar G, \bar L)$ denote the residual instance of $\mathcal{I}$ with respect to $L_H$ and let $\bar x$ be the restriction of $x$ to $\bar L$.
By Lemma~\ref{lem:all_light_after_heavy_cut_covering}, the residual instance has no heavy cuts and hence we can apply Lemma~\ref{lem:round_and_cut_without_heavy} to $\Iscr'$.
This yields either a solution $F$ of $\bar{\mathcal{I}}$ or 
 a vector $w \in \mathbb{R}^{\bar L}$ with $w^T x< w^T y$ for all $y \in P_{\mathrm{CacAP}}(\bar{\mathcal{I}})$.

In the latter case, we extend the vector $w\in \mathbb{R}^{\bar L}$ to a vector in $\mathbb{R}^L$ by setting $w(\ell)\coloneqq 0$ for $\ell\in L \setminus \bar L$.
Then we indeed have $w^T x < w^T x'$ for all $x' \in  P_{\mathrm{CacAP}}(\Iscr)$, because every feasible solution for $\Iscr$ is also a feasible solution for $\bar \Iscr$ (after dropping links in $L\setminus \bar L$).

Otherwise, we obtain a solution $F$ of the residual instance $\bar{\mathcal{I}}$.
Then the set $L_H \cup F$ is a feasible solution of $\mathcal{I}$. Moreover, due to the bound on the cardinality of $F$ guaranteed by Lemma~\ref{lem:round_and_cut_without_heavy}, we have
\begin{align*}
 |L_H \cup F| & \leq \frac{\epsilon}{2} \cdot x(L) + \alpha \cdot \left(1+\frac{\epsilon}{2}\right) \cdot x(L) \le \alpha \cdot (1+\epsilon) \cdot x(L)\enspace.
\end{align*}
Because $x(L) = |\OPT|$, the link set $L_H\cup F$ is a CacAP solution of cardinality at most $ (1+\epsilon) \cdot \alpha \cdot |\OPT|$. 
In this case, we can thus stop running the Ellipsoid Method because we have found the desired solution.

This completes the description of our separation oracle.
In some iteration of the Ellipsoid Method the vector $x$ given to the separation oracle must be contained in $P_{\mathrm{CacAP}}(\mathcal{I})\cap \{y\in [0,1]^L: y(L)=|\OPT|\}$.
In this iteration, the algorithm described above does not find any separating hyperplane, but the desired solution $L_H\cup F$. 
Polynomial-time termination of the Ellipsoid Method follows by classical results (see~\cite[Theorem~(6.4.1)]{groetschel_1993_geometric}) because 
the polytope $P_{\mathrm{CacAP}}(\mathcal{I})\cap \{y\in [0,1]^L: y(L)=|\OPT|\}$ over which we run the Ellipsoid Method, has facet complexity that is bounded polynomially in $|L|$, because it is a $\{0,1\}$-polytope in $[0,1]^L$.

\section{Covering heavy cuts}\label{sec:heavy_cut_covering}

In this section we show Theorem~\ref{thm:covering_heavy}.
We first prove it for the special case where the cactus $G$ is a single cycle and then reduce the general case to this case.

\begin{theorem}\label{thm:heavy_cut_covering_for_cycles}
Let $\mathcal{I} = (G=(V,E), L)$ be a CacAP instance with $G$ being a cycle, let $r\in V$, and let $\mathcal{W}\subseteq \mathcal{C}_G$. 
Then the LP
\begin{equation}\label{eq:heavyCutHitLPCycle}
\renewcommand\arraystretch{1.5}
\begin{array}{rrcll}
\min & x(L) & & & \\
& x(\delta_L(W)) & \ge & 1 & \forall\; W\in \Wscr \\
& x & \in & \mathbb{R}_{\geq 0}^L \makebox[0pt][l]{\enspace .} &
\end{array}
\end{equation}
has integrality gap at most 8. 
Moreover, given a solution $x$ to~\eqref{eq:heavyCutHitLPCycle}, we can compute an integral solution with objective value at most $8 \cdot x(L)$ in polynomial time.
\end{theorem}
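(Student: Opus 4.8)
The plan is to translate the cycle case into a combinatorial rectangle‑hitting problem and then round its LP relaxation with a bounded constant loss. First I would fix coordinates: write the cycle as $v_0,v_1,\dots,v_{n-1}$ with $r=v_0$ and edges $e_j=\{v_j,v_{j+1}\}$ (indices modulo $n$). Every $2$‑cut $C\in\mathcal{C}_G$ is a vertex interval $\{v_{s+1},\dots,v_t\}$, namely the arc obtained by deleting $e_s$ and $e_t$, for a unique pair $(s,t)$ with $0\le s<t\le n-1$; this puts $\mathcal{C}_G$ in bijection with the triangular grid $\Delta\coloneqq\{(s,t)\in\mathbb{Z}^2: 0\le s<t\le n-1\}$. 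Writing the endpoints of a link $\ell$ as $v_{a_\ell},v_{b_\ell}$ with $a_\ell<b_\ell$, a direct case analysis shows that $\ell$ covers the cut $(s,t)$ --- i.e.\ exactly one of $v_{a_\ell},v_{b_\ell}$ lies in $\{v_{s+1},\dots,v_t\}$ --- if and only if $(s,t)\in R^-(\ell)\cup R^+(\ell)$, where $R^-(\ell)\coloneqq[0,a_\ell-1]\times[a_\ell,b_\ell-1]$ and $R^+(\ell)\coloneqq[a_\ell,b_\ell-1]\times[b_\ell,n-1]$ are axis‑parallel boxes in $\Delta$ (with $R^-(\ell)=\emptyset$ when $a_\ell=0$, i.e.\ when $v_{a_\ell}=r$). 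So covering $\mathcal{W}$ with links is exactly the problem of choosing links to hit every point of $\mathcal{W}\subseteq\Delta$, where each link offers the union of at most two structured boxes, and LP~\eqref{eq:heavyCutHitLPCycle} is its natural covering relaxation.

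Given a feasible fractional solution $x$, I would split the demands: for each $W\in\mathcal{W}$ the constraint $\sum_{\ell:\,W\in R^-(\ell)\cup R^+(\ell)}x_\ell\ge1$ forces $\sum_{\ell:\,W\in R^-(\ell)}x_\ell\ge\tfrac12$ or $\sum_{\ell:\,W\in R^+(\ell)}x_\ell\ge\tfrac12$; put the demands of the first type into $\mathcal{W}^-$ and the rest into $\mathcal{W}^+$. Then $2x$ is a feasible fractional solution of the ``one‑box'' covering problem ``hit every $W\in\mathcal{W}^+$ by some link $\ell$ with $W\in R^+(\ell)$'', and symmetrically for $\mathcal{W}^-$. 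Hence it suffices to prove the following lemma: given a feasible fractional solution $y$ of an $R^+$‑covering problem (a covering problem for a subset of $\Delta$ using boxes of the special form $[a,b-1]\times[b,n-1]$, one per link) one can compute in polynomial time an integral solution using at most $2\cdot y(L)$ links; the symmetric $R^-$‑statement then follows by reflecting $\Delta$ across its anti‑diagonal, $(s,t)\mapsto(n-1-t,\,n-1-s)$, which carries $R^-$‑boxes to $R^+$‑boxes. Applying the lemma with $y=2x$ produces link sets $F^+,F^-$ with $|F^+|,|F^-|\le 4\,x(L)$, and $F^+\cup F^-$ covers all of $\mathcal{W}$ and has size at most $8\,x(L)$; taking $x$ optimal then also yields the integrality‑gap bound.

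To prove the lemma I would first normalize. Within a column of $\Delta$ (fixed first coordinate $s$) it is enough to cover the demanded point with the smallest second coordinate, since any link whose box contains $(s,t_0)$ also contains $(s,t)$ for every $t\ge t_0$; and among all links with a fixed value $b_\ell=v$ the one with smallest left endpoint $\alpha_v\coloneqq\min\{a_\ell:b_\ell=v\}$ dominates the others, as its box contains theirs. Shifting the fractional mass $y$ onto these dominators (which does not change $y(L)$) we may assume $y$ is supported on at most one link $\ell_v$ per value $v$, where $\ell_v$ covers $(s,t)$ iff $v\in[s+1,t]$ and $\alpha_v\le s$. What remains is to choose few values $v$ so that every surviving bottom demand $(s,t)$ gets a chosen $v$ with $v\in[s+1,t]$ and $\alpha_v\le s$, using at most $2\cdot y(L)$ of them. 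I expect this last step to be the main obstacle: the admissible set of a demand is only the ``thresholded interval'' $[s+1,t]\cap\{v:\alpha_v\le s\}$, which need not be an interval, so the constraint matrix is not an interval matrix and the LP is genuinely non‑integral --- one can already exhibit a triangle of pairwise‑overlapping $R^+$‑boxes with LP optimum $\tfrac32$ and integral optimum $2$, and longer odd cycles drive the ratio to $2$, which is why the final bound is $8$ and not smaller. The rounding must therefore exploit the staircase structure directly, e.g.\ via a left‑to‑right sweep over the bottom demands that, whenever one becomes uncovered, buys a single admissible link for it and charges the purchase to one unit of $y$‑mass on boxes through that demand, using the Helly property of the intervals involved to argue that no unit of mass is charged more than twice. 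All the objects constructed here --- the grid $\Delta$, the boxes $R^\pm(\ell)$, the dominators $\ell_v$, and the sweep --- have polynomial size and are built from the given $x$, so the whole procedure runs in polynomial time.
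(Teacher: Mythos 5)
Your reduction of the cycle case to a planar incidence problem, and the halving trick that splits $\Wscr$ into two one-sided families each fractionally covered by $2x$, are correct and match the first half of the paper's argument. But you have chosen the opposite geometric representation from the paper: you turn each \emph{cut} into a point of $\Delta$ and each \emph{link} into a union of two boxes, so that selecting links becomes a geometric \emph{set cover} (cover points by boxes). The paper instead maps each link to the point $(i,j)$ and each cut to two rectangles (one family with a common topmost coordinate, one with a common leftmost coordinate), so that selecting links becomes a \emph{stabbing} problem: choose points so that every rectangle contains one. With that representation the factor-$2$ rounding is a short stripe argument: partition the (weighted) points into vertical stripes of weight $\sfrac{1}{2}$ and take a topmost point of each stripe; every rectangle either contains a whole stripe or is crossed by an interior stripe, whose topmost point must lie in it because the rectangles share their top edge.

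The genuine gap in your write-up is the rounding lemma for your one-sided box-cover problem, which you yourself flag as ``the main obstacle'' and only sketch. The sweep-and-charge argument is not carried out, and its key claim --- that no unit of $y$-mass is charged more than twice --- does not follow from the stated ``Helly property'': after your normalization a link $v$ covers a bottom demand $(s,t_s)$ iff $\alpha_v \le s < v \le t_s$, and one can arrange three triggering demands $s_1<s_2<s_3$ all admitting the same link $v$ while the links actually purchased for $s_1$ and $s_2$ fail to cover the later demands (e.g.\ because their right endpoints overshoot $t_2$ or $t_3$); then $v$'s mass is charged three times under the natural charging scheme. So the factor $2$ for your formulation is unproven, and with it the factor $8$ of the theorem. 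To close the gap you would either need a correct potential/charging argument for the staircase cover problem, or --- more simply --- switch to the paper's dual picture, where the one-sided families become rectangles with a common top (respectively, a common left side after reflection) and Lemma~\ref{lem:rectangle_hitting} applies verbatim.
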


To prove Theorem~\ref{thm:heavy_cut_covering_for_cycles}, we will reduce the problem to a certain rectangle hitting problem.
Let $v_0 = r,  v_1,\dots, v_n$ be the vertices numbered in a consecutive order with respect to how they appear on the cycle $G$, i.e., such that
$E= \{ \{v_i,v_{i+1}\} : i=0,\dots, n-1\} \cup \{ \{v_n,v_0\}\}$.
Then every set $W\in \Wscr$ contains a consecutive set of vertices, i.e., it is of the form $\{ v_i : a \le i \le b\}$ for some $a,b \in [n]$.
For vertices $v_i, v_j$ with $i< j$ we say that $v_i$ is \emph{left} of $v_j$ and $v_j$ is \emph{right} of $v_i$. Moreover, we say that a vertex $v$ is \emph{right/left of a set $W\in \Wscr$} if it is right/left of all vertices in $W$.

We associate every link $\ell=\{v_i,v_j\}$ with $i<j$ with the point $p_\ell=(i,j)$ in the Euclidean plane.
Let us now consider a set $W\in \Wscr$ with $W=\{ v_k : a \le k \le b\}$.
Then the link $\ell=\{v_i,v_j\}$ with $i<j$ covers the cut $W$ if and only if either
\begin{itemize}
    \item $v_i$ is contained in $W$ and $v_j$ is right of $W$, i.e.,
         \begin{equation*}
            p_\ell \in R_W^{\uparrow} \coloneqq \big\{ (i,j) : a \le i \le b,\ b +1 \le  j \le n\big\}\enspace, \text{ or}
         \end{equation*}
    \item $v_j$ is contained in $W$ and $v_i$ is left of $W$, i.e.,
          \begin{equation*}
            p_\ell \in R_W^{\leftarrow} \coloneqq \big\{ (i,j) : 0 \le i \le a-1,\ a\le j \le b \big\}\enspace.
         \end{equation*}
\end{itemize}
Viewing $R_W^{\uparrow}$ and $R_W^{\leftarrow}$ as subsets of the Euclidean plane, $R_W^{\uparrow}$ is a rectangle with topmost coordinate being $n$ and $R_W^{\leftarrow}$ is a rectangle with leftmost coordinate being $0$. An illustration is given in Figure~\ref{fig:rectangle_hitting}.
\begin{figure}[!ht]
\begin{center}
\begin{tikzpicture}[scale=0.23]

\pgfdeclarelayer{fg}
\pgfdeclarelayer{bg}
\pgfsetlayers{bg,main,fg}

\tikzset{
  prefix node name/.style={%
    /tikz/name/.append style={%
      /tikz/alias={#1##1}%
    }%
  }
}

\tikzset{link/.style={line width=1.5pt}}

\tikzset{node/.style={thick,draw=black,fill=white,circle,minimum size=6, inner sep=2pt}}

\colorlet{rwl}{orange!50!yellow}

\newcommand\cyc[2][]{
\begin{scope}[prefix node name=#1]

\begin{scope}[every node/.append style=node,shift={(2,10)}]
\tikzset{cv/.style={fill=blue!20}}
\foreach \i/\s in {0/,1/,2/,3/cv,4/cv,5/,6/} {
\node[\s] (\i) at (\i*360/7:10cm) {$v_{\i}$};
}
\end{scope}

\begin{scope}[very thick]
\draw (0) -- (1) -- (2) -- (3) -- (4) -- (5) -- (6) -- (0);
\draw[dashed,red] (2) to[bend right] node[above left] {$\ell_1$} (3); %
\draw[dashed,red] (1) to node[left] {$\ell_2$} (5); %
\draw[dashed,red] (4) to[out=-90-90/7,in=-90-90/7] node[below] {$\ell_3$} (6); %
\end{scope}

\end{scope}
}%

\newcommand\rect[2][]{
\begin{scope}[prefix node name=#1]

\tikzset{wb/.style={font=\bfseries,text=blue,draw=none}}

\begin{scope}
\coordinate[black, label={below left: 0}] (1) at (0,0) {};
\coordinate[black, label={below: 1}] (2) at (5,0) {};
\coordinate[black, label={below: 2}] (3) at (10,0) {};
\coordinate (4) at (15,0);
\node[wb] at (4)[below] {3};
\coordinate (5) at (20,0);
\node[wb] at (5)[below] {4};
\coordinate[black, label={below: 5}] (6) at (25,0) {};
\coordinate[black, label={below: 6}] (7) at (30,0) {};

\coordinate (8) at (0,5) {};
\node at (8)[left] {1};
\coordinate[black] (9) at (5,5) {};
\coordinate[black] (10) at (10,5) {};
\coordinate[black] (11) at (15,5) {};
\coordinate[black] (12) at (20,5) {};
\coordinate[black] (13) at (25,5) {};
\coordinate[black] (14) at (30,5) {};

\coordinate (15) at (0,10) {};
\node at (15)[left] {2};
\coordinate[black] (16) at (5,10) {};
\coordinate[black] (17) at (10,10) {};
\coordinate[black] (18) at (15,10) {};
\coordinate[black] (19) at (20,10) {};
\coordinate[black] (20) at (25,10) {};
\coordinate[black] (21) at (30,10) {};

\coordinate (22) at (0,15) {};
\node[wb] at (22)[left] {3};
\coordinate[black] (23) at (5,15) {};
\coordinate[black] (24) at (10,15) {};
\coordinate[black] (25) at (15,15) {};
\coordinate[black] (26) at (20,15) {};
\coordinate[black] (27) at (25,15) {};
\coordinate[black] (28) at (30,15) {};

\coordinate (29) at (0,20) {};
\node[wb] at (29)[left] {4};
\coordinate[black] (30) at (5,20) {};
\coordinate[black] (31) at (10,20) {};
\coordinate[black] (32) at (15,20) {};
\coordinate[black] (33) at (20,20) {};
\coordinate[black] (34) at (25,20) {};
\coordinate[black] (35) at (30,20) {};

\coordinate (36) at (0,25) {};
\node at (36)[left] {5};
\coordinate[black] (37) at (5,25) {};
\coordinate[black] (38) at (10,25) {};
\coordinate[black] (39) at (15,25) {};
\coordinate[black] (40) at (20,25) {};
\coordinate[black] (41) at (25,25) {};
\coordinate[black] (42) at (30,25) {};

\coordinate (43) at (0,30) {};
\node at (43)[left] {6};
\coordinate[black] (44) at (5,30) {};
\coordinate[black] (45) at (10,30) {};
\coordinate[black] (46) at (15,30) {};
\coordinate[black] (47) at (20,30) {};
\coordinate[black] (48) at (25,30) {};
\coordinate[black] (49) at (30,30) {};
\end{scope}

\begin{scope}[black!50]
\draw (1) -- (2) -- (3) -- (4) -- (5) -- (6) -- (7);
\draw (8) -- (9) -- (10) -- (11) -- (12) -- (13) -- (14);
\draw (15) -- (16) -- (17) -- (18) -- (19) -- (20) -- (21);
\draw (22) -- (23) -- (24) --(25) -- (26) -- (27) -- (28);
\draw (29) -- (30) -- (31) -- (32) -- (33) -- (34) -- (35);
\draw (36) -- (37) -- (38) -- (39) -- (40) -- (41) -- (42);
\draw (43) -- (44) -- (45) -- (46) -- (47) -- (48) -- (49);

\draw (1) -- (8) -- (15) -- (22) -- (29) -- (36) -- (43);
\draw (2) -- (9) -- (16) -- (23) -- (30) -- (37) -- (44);
\draw (3) -- (10) -- (17) -- (24) -- (31) -- (38) -- (45);
\draw (4) -- (11) -- (18) -- (25) -- (32) -- (39) -- (46);
\draw (5) -- (12) -- (19) -- (26) -- (33) -- (40) -- (47);
\draw (6) -- (13) -- (20) -- (27) -- (34) -- (41) -- (48);
\draw (7) -- (14) -- (21) -- (28) -- (35) -- (42) -- (49);
\end{scope}

\draw[fill, draw, ultra thick, green!20] (0,15) rectangle (10,20);
\draw[fill, draw, ultra thick, orange!20] (15,25) rectangle (20,30);

\begin{scope}[every node/.style={fill, circle, minimum size =4pt, inner sep=0.2pt, red}]
\node (p1) at (10,15) {};
\node (p2) at (5,25) {};
\node (p3) at (20,30) {};
\end{scope}
\begin{scope}[red]
\node[below right] (lp1) at (p1) {$p_1$};
\node[below right] (lp2) at (p2) {$p_2$};
\node[below right] (lp3) at (p3) {$p_3$};
\end{scope}

\end{scope}
}%

\begin{scope}[scale=0.8, shift={(-16,2)}]
\cyc[c-]{}
\end{scope}

\begin{scope}[scale=0.7,shift={(8,0)}]
\rect[r-]{}
\end{scope}

\begin{scope}[shift={(32,14)}]
\def\ll{60mm} %
\def\vs{48mm} %

\begin{scope}[scale=0.5]

\node[fill=blue!20,circle,draw=black, inner sep = 3pt] (b) at (0,0) {};
\node at (b.east)[right=2pt] {vertices in $W$};

\node[fill=orange!20, draw= black, inner sep = 4pt] (b) at (0,-\vs) {};
\node at (b.east)[right=2pt] {$R_W^{\uparrow}$};

\node[fill=green!20,draw=black,inner sep = 4pt] (b) at (0,-2*\vs) {};
\node at (b.east)[right=2pt] {$R_W^{\leftarrow}$};

\node[fill=red,circle,draw=black,inner sep = 3pt] (b) at (0,-3*\vs) {};
\node[align=left,yshift=-1.5ex] at (b.east)[right=2pt] {links and\\corresponding points};

\end{scope}

\end{scope}

\end{tikzpicture}
 \end{center}
\caption{Constructing the rectangle hitting problem for a cycle with $7$ vertices. Here, the set $W$ is given by $\{v_3,v_4\}$. The sets $R^{\uparrow}_W$ and $R^{\leftarrow}_W$ are shown in orange and green, respectively. Note that a link is contained in $\delta_L(W)$ if and only if its corresponding point is contained in one of the rectangles $R^{\uparrow}_W$ and $R^{\leftarrow}_W$.}\label{fig:rectangle_hitting}
\end{figure}

For every $W\in \Wscr$ we have $x(\delta_L(W)) \ge 1$ and thus $\sum_{\ell: p_\ell \in R_W^{\uparrow}} x_\ell \ge \frac{1}{2}$ or $\sum_{\ell: p_\ell \in R_W^{\leftarrow}} x_\ell \ge \frac{1}{2}$.
Hence we can partition $\Wscr$ into sets $\Wscr^{\uparrow}$ and $\Wscr^{\leftarrow}$ such that
\begin{align*}
    \sum_{\ell: p_\ell\in R^{\uparrow}_W} 2 \cdot x_\ell \ge&\ 1 \quad \forall\; W\in \Wscr^{\uparrow} \\
\intertext{and}
    \sum_{\ell: p_\ell \in R^\leftarrow_W} 2 \cdot x_\ell \ge&\ 1 \quad\forall\; W\in \Wscr^{\leftarrow}\enspace.
\end{align*}
Therefore, the below lemma applied to the point set $\{p_\ell : \ell\in L\}$, the vector with $x_{p_\ell} := 2x_\ell$, and the rectangle sets $\{ R^{\uparrow}_W : W \in \Wscr^{\uparrow}\}$
implies that we can find a set of at most $4 \cdot x(L)$ links covering all cuts in $\Wscr^{\uparrow}$.
Applying the lemma also to the  rectangle set $\{ R^{\leftarrow}_W: W \in \Wscr^{\leftarrow}\}$ (rotated such that the rectangles have a common topmost coordinate) yields Theorem~\ref{thm:heavy_cut_covering_for_cycles}.

\begin{lemma}\label{lem:rectangle_hitting}
Let $\Rscr$ be a set of closed rectangles in the Euclidean plane with identical topmost coordinate.
Let $P\subseteq \mathbb{R}^2$ be a finite set of points and $x: P \to [0,1]$ such that
\begin{equation*}
      \sum_{p\in R} x_p \ge 1  \quad \forall\; R\in \Rscr\enspace.
\end{equation*}
Then one can efficiently compute a set $H \subseteq P$ such that $R\cap H \ne \emptyset$ for all $R\in \Rscr$ and
$|H| \le 2 \cdot \sum_{p\in P} x_p$.
\end{lemma}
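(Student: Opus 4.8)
The plan is to exploit the special structure that all rectangles in $\Rscr$ share a common topmost coordinate, which makes them ``upward-closed'' in the vertical direction once we fix their horizontal extent; morally each $R\in\Rscr$ is determined by an interval $[a_R,b_R]$ of $x$-coordinates and a bottom value $c_R$, and $R$ contains exactly the points $p=(p^{(1)},p^{(2)})$ with $a_R\le p^{(1)}\le b_R$ and $c_R\le p^{(2)}\le n$ (where $n$ is the common top). First I would discretize: since $P$ is finite, only finitely many distinct horizontal coordinates occur among points, so I may assume the rectangles' horizontal boundaries are aligned to these. Then I would reduce to the case of \emph{columns}: group the rectangles by something, or better, observe that for a \emph{fixed} horizontal interval the family of rectangles sharing that interval is totally ordered by inclusion (they differ only in their bottom coordinate), so hitting all of them is equivalent to hitting the \emph{lowest} one, i.e.\ the one with the largest bottom value $c_R$.

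Concretely, the key step is an LP-rounding / interval-stabbing argument done in two passes. Pass one handles the horizontal direction. For each maximal horizontal strip, consider the rectangles whose horizontal interval is contained in that strip — this is where I'd want to peel off a laminar or interval structure. The cleanest route: process rectangles greedily from the bottom up. Sort $\Rscr$ by bottom coordinate $c_R$ in decreasing order (lowest rectangles first). Maintain the set $\Rscr'$ of not-yet-hit rectangles. Take the lowest uncovered rectangle $R^\*$; among the points inside $R^\*$ we have total $x$-weight $\ge 1$. Now here is the trick that uses the common top: any rectangle $R$ with $c_R\le c_{R^\*}$ and horizontal interval $[a_R,b_R]$ overlapping $[a_{R^\*},b_{R^\*}]$ — once we add \emph{all} of $P\cap R^\*$ to $H$ we might hit many such $R$, but not necessarily those whose horizontal interval meets $[a_{R^\*},b_{R^\*}]$ only near an endpoint. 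So instead I would recurse separately on the ``left of $R^\*$'' and ``right of $R^\*$'' subproblems. This gives a recursion tree; the charging argument must show the total number of selected points is at most $2\sum_p x_p$.

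The honest way to get the constant $2$: run the standard primal-dual / LP argument for a \emph{laminar} hitting instance. The rectangles, restricted appropriately, can be arranged so that the ``hit at least one point'' constraints form an interval hypergraph (intervals on a line), whose covering LP is integral, giving factor $1$ in one direction; the factor $2$ comes from the two independent passes (here, the single vertical direction, plus the fact that we already paid a factor $2$ in weights $x_{p_\ell}:=2x_\ell$ upstream, but inside this lemma the statement is self-contained so the $2$ must come from splitting $\Rscr$ into two interval-hypergraph subfamilies). So the detailed plan is: (i) by the common-top property, replace each rectangle by the pair $(\text{horizontal interval}, \text{bottom})$; (ii) show that the constraint system ``$\sum_{p\in R}x_p\ge 1$'' can be covered by selecting, for a carefully chosen set of ``witness'' points, using an exchange/uncrossing argument that the relevant set system is an interval system; (iii) apply integrality of interval covering LPs (or a direct greedy stabbing that matches the LP bound) to extract $H$ with $|H|\le\sum_p x_p$ on each of two pieces, hence $\le 2\sum_p x_p$ total; (iv) note everything is polynomial since $|P|$ and $|\Rscr|$ are polynomial and the LP/greedy steps are.

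\textbf{Main obstacle.} The hard part is step (ii)–(iii): extracting the right \emph{laminar or interval} substructure from an arbitrary finite set of rectangles sharing only their top coordinate. Unlike axis-aligned rectangles in general (whose hitting problem has unbounded integrality gap), the common-top constraint should force enough structure, but making the reduction to an interval hypergraph rigorous — and verifying that the covering LP of that hypergraph really is bounded by $\sum_p x_p$ (rather than losing a logarithmic or other factor) — is the crux. I would look for an argument that sweeps a horizontal line downward and, each time an uncovered rectangle's bottom edge is reached, charges a single newly added point to the LP mass $\ge 1$ that that rectangle certifies, ensuring disjoint charging across the two passes; proving the charging is disjoint (no point or LP-mass double-counted) is where the care is needed.
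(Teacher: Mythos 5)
Your proposal is a plan rather than a proof, and the gap you yourself flag at the end --- making the charging argument rigorous and disjoint --- is exactly the part that is missing. Neither of the two routes you sketch closes it. The greedy recursion on the ``left of $R^*$'' and ``right of $R^*$'' subproblems has no disjoint charging: a rectangle in the left subproblem may still overlap $R^*$ horizontally, so the LP mass of weight $\ge 1$ inside $R^*$ that you charge for $R^*$ may be the same mass that certifies feasibility for rectangles in the recursive calls, and the total count can blow up with the depth of the recursion. The alternative route via interval-hypergraph LP integrality is also not substantiated: you never exhibit the interval system, never prove the set system uncrosses into one, and never identify where the factor $2$ enters (your own text wavers on whether the $2$ comes from ``two passes'' or was ``already paid upstream''). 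Since the lemma is self-contained, that factor must be produced inside the proof, and your proposal does not produce it.

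For comparison, the paper's argument is short and does not need LP integrality or recursion. Sort the points of $P$ by first coordinate and partition them into consecutive vertical stripes $S_1,\dots,S_k$ with $x(S_i)=\tfrac12$ for $i<k$ (splitting a point's weight if necessary) and $x(S_k)\le\tfrac12$; put into $H$ a \emph{topmost} point of each stripe $S_1,\dots,S_{k-1}$, so $|H|\le k-1\le 2\,x(P)$. For any rectangle $R$, let $S_a,S_b$ be the leftmost and rightmost stripes meeting $R$; since $x(R)\ge 1$ and each stripe carries weight at most $\tfrac12$, either $S_a\subseteq R$ or some strictly intermediate stripe $S_i$ meets $R$. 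In the first case the topmost point of $S_a$ lies in $R$. In the second, the topmost point of $S_i$ cannot be left of $R$ (because $R$ meets $S_a$), cannot be right of $R$ (because $R$ meets $S_b$), cannot be above $R$ (this is where the common topmost coordinate is used: no point lies above any rectangle), and cannot be below $R$ (it is topmost in a stripe meeting $R$); hence it is in $R$. This single observation --- pick \emph{topmost} points of weight-$\tfrac12$ stripes --- is the idea your proposal is missing.
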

\begin{proof}
We assume $x_p > 0$ for all $p\in P$ as we can simply delete any point with an $x$-value of $0$. Moreover, we may also assume that no point lies above any rectangle. (If it does, it is not contained in any of the rectangles because they have a common topmost coordinate.)
First, we partition the point set $P$ into ``vertical stripes'' $S_1, \dots, S_k$ with $x(S_i)=\frac{1}{2}$ for $i\in [k-1]$ and $x(S_k) \le \frac{1}{2}$.
To this end we sort the points by their first coordinate.
Then the first stripe $S_1$ consists of the first points in this order such that $x(S_1)=\frac{1}{2}$, where we possibly split one point into two copies with appropriate $x$-values to make this possible.
Then we iteratively define the stripe $S_i$ to be the first points in the order that are not contained in $S_1 \cup \dots \cup S_{i-1}$ and such that $x(S_i)=\frac{1}{2}$.
Again, we might split one point to make this possible.
If the total $x$-value of the points not contained in any stripe is at most $\frac{1}{2}$, we define the last stripe $S_k$ to be the set of these points.

For the stripes $S_1, \dots, S_{k-1}$ (but not for $S_k$) we choose a topmost point in the stripe and include it in the set $H$.
Then $|H| \le k-1 = \sum_{i=1}^{k-1} 2 \cdot x(S_i) \le 2\cdot x(P)$.
We claim that $H$ hits every rectangle, i.e., $H\cap R\ne \emptyset$ for all $R\in \Rscr$.
To prove this, consider a fixed rectangle $R\in \Rscr$.
Let $S_a$ and $S_b$ be the leftmost and rightmost stripe intersecting $R$.
Then $1 \le x(R) =\sum_{j=a}^b x(S_j \cap R)$.
Using $x(S_j) \le \frac{1}{2}$ for all $j\in [k]$ and $x_p > 0$ for all $p\in P$, this implies that either
$S_a \subseteq R$ or $S_i \cap R \ne \emptyset$ for some $a<i<b$.
If $S_a \subseteq R$, the topmost point in $S_a$ hits $R$.
Otherwise, consider the topmost point in $S_i$ that we added to $H$.
Suppose this point $p$ is not contained in $R$.
Because $R$ contains at least one point in $S_a$, the point $p$ cannot be left of $R$
(meaning that its first coordinate is strictly smaller than the first coordinate of any point in $R$).
Similarly, $R$ contains at least one point in $S_b$ and hence it cannot be right of $R$.
Because no point is above any of the rectangles, $p$ must be below $R$.
But this contradicts $S_i \cap R \ne \emptyset$ since $p$ is a topmost point in $S_i$.
\end{proof}

Having completed the proof of Theorem~\ref{thm:heavy_cut_covering_for_cycles}, we now consider the general case where $G$ is not necessarily a cycle. We rely on the well-known observation (see~\cite{dinitz_1976_structure}) that for each cactus, there is a cycle on the same vertex set whose $2$-cuts contain the $2$-cuts of the cactus. This observation readily allows for reducing to the cycle case. We provide a proof for completeness.

\begin{lemma}[\cite{dinitz_1976_structure}]\label{lem:replace_cactus_by_cycle}
Let $G=(V,E)$ be a cactus and $r\in V$. Moreover, let
\begin{equation*}
\mathcal{W}\subseteq \mathcal{C}_G\coloneqq \{C\subseteq V\setminus \{r\}: |\delta_E(C)| = 2\}\enspace.
\end{equation*}
Then there exists a cycle $(V,\bar E)$ such that $|\delta_{\bar E}(W)|= 2$ for every $W\in \Wscr$.
\end{lemma}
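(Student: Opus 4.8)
Lemma~\ref{lem:replace_cactus_by_cycle} asserts that any cactus $G=(V,E)$ with a chosen root $r$ admits a Hamiltonian cycle $(V,\bar E)$ on the same vertex set such that every $2$-cut of $G$ avoiding $r$ is also a $2$-cut of the cycle. The plan is to build the cycle by induction on the number of cycles of the cactus $G$ (equivalently, on $|E| - |V| + 1$, the cyclomatic number). The base case is when $G$ itself is a single cycle, where we simply take $\bar E = E$; every $2$-cut of a cycle on vertex set $V$ is an interval of consecutive vertices, and any $W \in \mathcal W \subseteq \mathcal C_G$ is such an interval, so $|\delta_{\bar E}(W)| = 2$ trivially.

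For the inductive step, I would pick a ``leaf cycle'' $Z$ of the cactus, i.e.\ a cycle of $G$ that shares exactly one vertex, say $u$, with the rest of the cactus (such a cycle exists as long as $G$ is not a single cycle, since the ``block tree'' of a cactus whose blocks are its cycles has a leaf; one should pick a leaf cycle not containing $r$ if possible, which is always possible when there is more than one cycle — and if $G$ is a single cycle we are in the base case). Let $Z$ have vertices $u = z_0, z_1, \dots, z_{m}$ in cyclic order, with $z_1, \dots, z_m \notin V(G')$ where $G'$ is the cactus obtained from $G$ by deleting $z_1, \dots, z_m$. Apply the induction hypothesis to $G'$ with the same root $r$ (note $r \in V(G')$) to get a Hamiltonian cycle $\bar E'$ on $V(G')$ such that every $W' \in \mathcal C_{G'}$ with $W' \subseteq V(G') \setminus \{r\}$ satisfies $|\delta_{\bar E'}(W')| = 2$. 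Now splice the path $z_1 - z_2 - \dots - z_m$ into $\bar E'$ at the vertex $u$: replace one of the two edges of $\bar E'$ incident to $u$, say $\{u, w\}$, by the path $u - z_1 - z_2 - \cdots - z_m - w$. This yields a Hamiltonian cycle $\bar E$ on all of $V$.

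It remains to check that every $W \in \mathcal W$ is a $2$-cut of $(V, \bar E)$. The key structural fact about cacti is that any $2$-cut $W$ of $G$ with $r \notin W$ interacts with the leaf cycle $Z$ in a restricted way: since $Z$ is $2$-edge-connected and attached to the rest only through $u$, either $V(Z) \setminus \{u\} \subseteq W$, or $W \cap (V(Z)\setminus\{u\}) = \emptyset$, or $W \subseteq V(Z) \setminus \{u\}$ (because otherwise $\delta_E(W)$ would have to cut through $Z$ in two places plus use the attachment edge, forcing $|\delta_E(W)| \ge 3$; more carefully, restricting to $Z$, the set $W \cap V(Z)$ must be an arc of the cycle $Z$, and the arc together with the attachment at $u$ constrains the options). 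In the first two cases, $W$ corresponds to a $2$-cut $W'$ of $G'$ (either $W' = W \setminus V(Z)$ together with $u$ or not, depending on which side $u$ lies, or $W' = W$), and since the splicing only locally modifies $\bar E'$ near $u$, the interval structure is preserved: $W$ is a union of a consecutive block in $\bar E'$ with possibly the whole spliced segment $z_1, \dots, z_m$, hence still an interval in $\bar E$. In the third case, $W$ is a consecutive arc among $z_1, \dots, z_m$ along $Z$, and by construction these appear consecutively in $\bar E$, so again $|\delta_{\bar E}(W)| = 2$. One should verify each case gives exactly two boundary edges, using that $\bar E'$ already had this property for the corresponding $W'$.

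The main obstacle I anticipate is the bookkeeping in the case analysis for how an arbitrary $2$-cut $W \in \mathcal W$ meets the leaf cycle $Z$ and how this translates into an interval in the new cycle $\bar E$; one must be careful about which side of $W$ the attachment vertex $u$ and the root $r$ lie on, and about the fact that the deleted vertices $z_1,\dots,z_m$ must be inserted as a contiguous block between the two former neighbors of $u$ in $\bar E'$. A cleaner alternative that avoids some of this is to observe directly (as in~\cite{dinitz_1976_structure}) that a cactus has a natural cyclic ordering of its vertices obtained from an Euler-tour-like traversal of its block tree, and that every $2$-cut of the cactus is an interval in this ordering; I would present whichever of these two routes — explicit induction with splicing, or the cyclic-traversal ordering — turns out to require less case-checking, but in either case the heart of the argument is the interval characterization of $2$-cuts on a cycle combined with the tree-like global structure of a cactus.
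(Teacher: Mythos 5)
Your proof is correct, but it takes a genuinely different route from the paper's. The paper also inducts on the number of cycles of the cactus, but its inductive step keeps the vertex set fixed: it picks two cycles $Q_1,Q_2$ sharing a vertex $z$, with neighbors $v_1\in Q_1$ and $v_2\in Q_2$ of $z$, deletes the edges $\{v_1,z\},\{v_2,z\}$ and adds $\{v_1,v_2\}$, thereby merging the two cycles into one. The entire correctness argument is then the one-line observation that this swap satisfies $|\delta_{G'}(U)|\le|\delta_G(U)|$ for every $U\subseteq V$ (the new edge crosses a cut only if at least one deleted edge did), so every $W\in\Wscr$ still has $|\delta_{G'}(W)|\le 2$, and equality follows from $2$-edge-connectivity of the new cactus; no case analysis on how cuts meet a particular cycle is needed. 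Your construction instead deletes the non-attachment vertices of a leaf cycle, applies the induction hypothesis to the smaller cactus, and splices the deleted path back in, which forces the trichotomy on how $W$ meets $V(Z)\setminus\{u\}$ and the verification that intervals are preserved under splicing. Those checks do go through (the trichotomy follows from the fact that every $2$-cut of a cactus is determined by two edges of a single cycle, and each of the three cases reduces cleanly either to a consecutive arc of the spliced path or to a $2$-cut of $G'$ handled by the induction hypothesis), and your handling of the root --- ensuring $r$ is never among the deleted vertices --- is sound since a leaf block has a unique cut vertex. What the paper's edge-swap buys is the elimination of all this bookkeeping; what your version buys is a more explicit picture of the final cyclic order (essentially the Euler-tour ordering of the block tree that you mention as an alternative), at the cost of the case analysis you correctly anticipate as the main burden.
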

\begin{proof}
We prove the theorem by induction on the number of cycles in the cactus $G$.
If $G$ contains only a single cycle, the statement holds for $\bar E=E$.
Otherwise, let $Q_1,Q_2$ be two distinct cycles in $G$ that share a common vertex $z$.
Let $v_1$ be a neighbor of $z$ in $Q_1$ and let $v_2$ be a neighbor of $z$ in $Q_2$.
We remove the edges $\{v_1,z\}$ and $\{v_2,z\}$ from $G$ and add a new edge $\{v_1,v_2\}$.
Let $G'$ be the resulting graph.
Then $G'$ is a cactus, where the two cycles $Q_1$ and $Q_2$ are replaced by a single cycle.
All other cycles of $G$ remain unchanged.

We have $|\delta_G(U)| \ge |\delta_{G'}(U)|$ for every set $U\subseteq V$.
In particular, $|\delta_{G'}(W)| \le 2$ for every $W\in \Wscr$.
Because $G'$ is 2-edge-connected, we must have $|\delta_{G'}(W)| = 2$ for every $W\in \Wscr$.
The statement now follows from the induction hypothesis applied to $G'$.
\end{proof}

As discussed, Theorem~\ref{thm:covering_heavy}, which we restate below for convenience, now readily follows by combining the above statements.

\theoremcoveringheavy*

\begin{proof}
We apply Lemma~\ref{lem:replace_cactus_by_cycle} to obtain a cycle $(V,\bar E)$ such that each element of $\Wscr$ is a 2-cut in $(V,\bar E)$.
Theorem~\ref{thm:heavy_cut_covering_for_cycles} applied to $((V,\bar E),L)$ and $\Wscr$ completes the proof.
\end{proof}

We remark that for our purpose the precise upper bound on the integrality gap is not important and any constant bound suffices. We therefore did not attempt to optimize the constant but rather strived for a clean analysis.

\section{Cross-link rounding using Chv\'atal-Gomory cuts}\label{sec:cg}

In this section we prove Lemma~\ref{lem:cross-link_rounding}.

\lemcrosslinkrounding*

This implies a $(1.5 + \epsilon)$-approximation for CAP as we have shown in Section~\ref{sec:overview}.
In Section~\ref{sec:stack_analysis} we will also use Lemma~\ref{lem:cross-link_rounding} to  prove our main result (Theorem~\ref{thm:main}).
Throughout this section we consider instances of weighted CacAP.
While considering the weighted version is not necessary in order to achieve our overall result for unweighted cactus augmentation,
 all proofs extend immediately to this setting.
 
 Recall that the high-level plan to prove Lemma~\ref{lem:cross-link_rounding} is as follows.
 We will choose $P_{\mathrm{cross}}$ to be  the polytope resulting from
 \begin{equation*}
P_{\mathrm{cut}}\coloneqq \left\{x\in \mathbb{R}_{\geq 0}^L\colon x(\delta_L(C)) \geq 1 \; \forall C\in \mathcal{C}_G\right\}\enspace,
\end{equation*}
by adding the $\{0,\sfrac{1}{2}\}$-Chv\'atal-Gomory (CG) cuts that can be obtained from a carefully chosen laminar family $\Lscr$.
We will obtain the laminar family $\Lscr$ from a particular dual solution of an integral linear program whose optimum solutions yield a 2-approximation for CacAP.
Finally, we develop a new dual-improvement argument and leverage results by Fiorini, Gro\ss, K\"onemann, and Sanit\`a~\cite{fiorini_2018_approximating} to prove that the laminar family $\Lscr$ has the desired properties.

\subsection{A simple 2-approximation by bidirecting links}\label{sec:bidirected_lp}

Let us start by recalling the integral linear program whose dual we use to obtain the laminar family $\Lscr$.
As we explained in Section~\ref{sec:overviewRounding}, this LP relaxation is inspired by a simple 2-approximation algorithm for TAP.
This 2-approximation algorithm is based on the following observations.
First, in an instance of TAP, we can replace a link $\ell = \{v,w\}$   that is not an up-link itself
by two up-links $\ell_1 =\{v,u\}$ and $\ell_2=\{w,u\}$ such that $\ell_1$ and $\ell_2$ together cover exactly the same cuts in $\Cscr_G$ that are covered by $\ell$.
(Here we choose $u$ to be the common ancestor of $v$ and $w$ that is farthest away from the center $r$.)
Second, if all links are up-links, then the polytope $P_{\mathrm{cut}}$ is integral.
Hence, replacing every link $\ell\in L \setminus L_{\mathrm{up}}$ by two up-links that cover the same $2$-cuts and computing a minimum cost extreme point of  the polytope $P_{\mathrm{cut}}$ for the resulting instance yields a 2-approximation for TAP.
We highlight that for CacAP, it is not possible to replace every link $L \setminus L_{\mathrm{up}}$ by two up-links that together cover exactly the same cuts in $\Cscr_G$.

In order to generalize the above 2-approximation from TAP to CacAP, we therefore provide a different view on the same algorithm.
Instead of splitting every link $\ell=\{v,w\}\in L \setminus L_{\mathrm{up}}$ into two up-links $\ell_1$ and $\ell_2$, 
we replace $\ell$ by two directed links $\vec{\ell}_1 = (w,v)$ and $\vec{\ell_2}=(v,w)$.
Clearly a link $\ell=\{v,w\}$ covers a cut $C\in \mathcal{C}_G$ if and only if either $(v,w)$ or $(w,v)$ is entering $C$. Moreover, if $\ell=\{v,w\}$ is an up-link, say with $v$ being an ancestor of $w$, then a cut $C\in\mathcal{C}_G$ is covered by $\ell$ if and only if $(v,w)$ enters the cut $C$.
This naturally leads to the use of the directed link set
\begin{equation*}
\vec{L}\coloneqq \bigcup_{\{v,w\}\in L}\left\{(v,w),(w,v)\right\}\enspace,
\end{equation*}
to define the linear program~\eqref{eq:dir_cut_lp}, which we recall here:
\begin{equation}\tag{\ref{eq:dir_cut_lp}}
\renewcommand\arraystretch{1.5}
\begin{array}{r>{\displaystyle}rcll}
\min & \sum_{\ell \in \vec{L}} c_\ell x_\ell & & & \\
 &x\big(\delta^-_{\vec{L}}(C)\big)  &\geq &1 &\forall\; C\in \mathcal{C}_G\\
 &x &\in &\mathbb{R}^{\vec{L}}_{\geq 0}\makebox[0pt][l]{\enspace .} &
\end{array}
\end{equation}

Every vector $x\in P_{\mathrm{cut}}$  can be turned into a feasible solution to~\eqref{eq:dir_cut_lp} of cost 
$c^T x + \sum_{\ell\in L \setminus L_{\mathrm{up}} } c_\ell x_\ell$ by setting $x_{(v,w)}= x_{(w,v)} = x_{\{v,w\}}$ for every link $\{v,w\}\in L \setminus L_{\mathrm{up}}$
and $x_{(v,w)} = x_{\{v,w\}}$ for every up-link $\{v,w\}\in L_{\mathrm{up}}$, where $v$ is the endpoint of $\{v,w\}$ that is closer to the center $r$.
Therefore, the below lemma yields a simple 2-approximation algorithm for CacAP.
But more importantly, we will use the integrality of the linear program~\eqref{eq:dir_cut_lp} in our proof of Lemma~\ref{lem:cross-link_rounding}.
To prove Lemma~\ref{lem:bidirected_integral}, we use standard combinatorial uncrossing arguments.

\lembidirectedintergal*
\begin{proof}
Let $x^*$ be an extreme point solution of~\eqref{eq:dir_cut_lp}. We will show that $x^*$ is integral.
Because $x^*$ is an extreme point solution of~\eqref{eq:dir_cut_lp}, there is a family $\mathcal{F} \subseteq \{C\in \mathcal{C}_G: x^*(\delta^-_{\vec{L}}(C)) = 1\}$ of $2$-cuts corresponding to $x^*$-tight constraints of~\eqref{eq:dir_cut_lp} such that $x^*$ is the unique solution to
\begin{equation}\label{eq:system_extreme_point}
\begin{aligned}
 x\big(\delta^-_{\vec{L}}(C)\big) =&\ 1 &  &\text{ for all }C\in \Fscr \\
 x_{\ell} =&\ 0 && \text{ for all } \ell\in \vec{L}\setminus \supp(x^*)\enspace.
 \end{aligned}
\end{equation}
We show that we can choose $\Fscr$ to be a laminar family, which will lead to a system~\eqref{eq:system_extreme_point} whose constraint matrix is totally unimodular. This in turn implies the desired integrality of $x^*$. Hence, it remains to show that $\mathcal{F}$ can be chosen to be laminar such that $x^*$ is the unique solution to~\eqref{eq:system_extreme_point}.

To this end, we consider a maximal laminar family $\Wscr \subseteq \Cscr_G$ of tight sets.
Let $\bar L := \supp(x^*) \subseteq \vec{L}$.
We have to show that the system~\eqref{eq:system_extreme_point} with $\mathcal{F}=\mathcal{W}$ has full column rank, which will imply that $x^*$ is its unique solution. Full column rank of~\eqref{eq:system_extreme_point} with $\mathcal{F}=\mathcal{W}$ is equivalent to the property that the vectors $\{\chi^{\delta^-_{\bar L}(C)} : C\in \Wscr\}$ span $Q=\{x\in \mathbb{R}^{\vec{L}}: \supp(x)\subseteq \supp(x^*)\}$. Suppose for the sake of deriving a contradiction that $\{\chi^{\delta^-_{\bar L}(C)} : C\in \Wscr\}$ does not span $Q$.
Then there is a tight set $U \in \Cscr_G \setminus \Wscr$ such that $\chi^{\delta^-_{\bar{L}}(U)}$ is not in the span of
$\{\chi^{\delta^-_{\bar{L}}(C)} : C\in \Wscr\}$. 
Among all such sets let $U$ be one that minimizes the number of sets $W\in \Wscr$ that are crossed by $U$,
i.e., for which $U\cap W$, $W\setminus U$, and $U\setminus W$ are nonempty. (Note that $V\setminus (U\cup W)$ is always nonempty because $r\in V\setminus (U\cup W)$.)

By the maximality of $\Wscr$, there is at least one set $W\in \Wscr$ that crosses $U$.
Then $U\cap W$ and $U\cup W$ are 2-cuts of $G$ by Lemma~\ref{lem:crossing_2_edge_cuts}~\ref{item:four_connected_comp}.
Moreover,
\[
2\ \le\ x^*\big(\delta^-_{\bar{L}}(U\cap W)\big) + x^*\big(\delta^-_{\bar{L}}(U\cup W)\big)\ \le\ x^*\big(\delta^-_{\bar{L}}(U)\big) + x^*\big(\delta^-_{\bar{L}}(W)\big)\  =\ 2\enspace,
\]
which implies that both $U\cup W$ and $U\cap W$ are tight sets and
\[
 \chi^{\delta^-_{\bar{L}}(U)}\ =\  \chi^{\delta^-_{\bar{L}}(U\cap W)} +  \chi^{\delta^-_{\bar{L}}(U\cup W)} -  \chi^{\delta^-_{\bar{L}}(W)}\enspace.
\]
Because $U\cap W$ and $U\cup W$ cross fewer cuts of $\Wscr$ than $U$, our choice of the set $U$ implies that $U\cap W$ and $U\cup W$ are both contained in $\Wscr$.
Hence $\chi^{\delta^-_{\bar{L}}(U)}$ is contained in the span of  $\big\{\chi^{\delta^-_{\bar{L}}(C)} : C\in \Wscr\big\}$, which is a contradiction, and thus shows that $\mathcal{F}=\mathcal{W}$ is a laminar family with the desired properties.
\end{proof}

\subsection{Obtaining the laminar family $\Lscr$}

Next, we define the laminar family $\Lscr$ for which we will impose Chv\'atal-Gomory constraints.
We obtain $\Lscr$ from the dual
\begin{equation}\tag{\ref{eq:dir_cut_dual}}
\renewcommand\arraystretch{1.5}
\begin{array}{r>{\displaystyle}rcll}
\max &\sum_{C\in \mathcal{C}_G} y_C & & & \\
&\sum_{\substack{{C\in \mathcal{C}_G}:\\ \ell \in \delta^-_{\vec{L}}(C)}} y_C &\leq &c_\ell &\forall \ell\in \vec{L}\\[-1.5ex]
&y &\in &\mathbb{R}_{\geq 0}^{\mathcal{C}_G}&
\end{array}
\end{equation}
of the bidirected LP~\eqref{eq:dir_cut_lp}.
More precisely, we choose $\Lscr$ to be the support $\supp(y^*)$ of an optimum dual solution to~\eqref{eq:dir_cut_dual} that is minimal and has laminar support.
Recall that $y\in \mathbb{R}^{\mathcal{C}_G}_{\geq 0}$ is \emph{minimal} if for any $\epsilon > 0$ and any two sets $C_1,C_2 \in \mathcal{C}_G$ with $C_1\subsetneq C_2$, the point $y-\epsilon \cdot \chi^{\{C_2\}} + \epsilon \cdot \chi^{\{C_1\}}$ is not a feasible solution to the dual LP. (See Definition~\ref{def:minimal}.)
An optimum dual solution $y^*$ that is minimal and has laminar support always exists and it can be computed in polynomial time, as we show next. (Actually, even though we do not need to use this fact, our proof techniques show that $y^*$ is unique; we therefore also talk about \emph{the} optimal solution of~\eqref{eq:dir_cut_dual} that is minimal and has laminar support.)

\begin{lemma}\label{lem:solving_simple_dual}
We can compute in polynomial time an optimum solution $y^*$ of the dual LP~\eqref{eq:dir_cut_dual} such that
$y^*$ is minimal and has laminar support.
\end{lemma}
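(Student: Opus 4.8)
\textbf{Proof plan for Lemma~\ref{lem:solving_simple_dual}.}
The plan is to start from an arbitrary optimal solution of the dual LP~\eqref{eq:dir_cut_dual} and massage it into one that is minimal and has laminar support, both in a polynomial number of steps. First I would argue existence via two uncrossing operations. To get laminar support, I would use the standard uncrossing argument for dual solutions of covering LPs over an intersecting family: if $y$ has two crossing sets $C_1,C_2$ in its support with $C_1\cap C_2, C_1\setminus C_2, C_2\setminus C_1$ and $V\setminus(C_1\cup C_2)$ all nonempty (the last is automatic as $r\in V\setminus(C_1\cup C_2)$), then by Lemma~\ref{lem:crossing_2_edge_cuts}~\ref{item:four_connected_comp} both $C_1\cap C_2$ and $C_1\cup C_2$ lie in $\mathcal{C}_G$, and since $\mathbbm{1}_{\delta^-_{\vec L}(C_1)}+\mathbbm{1}_{\delta^-_{\vec L}(C_2)} \ge \mathbbm{1}_{\delta^-_{\vec L}(C_1\cap C_2)}+\mathbbm{1}_{\delta^-_{\vec L}(C_1\cup C_2)}$ coordinatewise (an incoming arc at $C_1$ and at $C_2$ supplies incoming arcs at the intersection and the union), the shift $y \mapsto y - \epsilon\chi^{\{C_1\}} - \epsilon\chi^{\{C_2\}} + \epsilon\chi^{\{C_1\cap C_2\}} + \epsilon\chi^{\{C_1\cup C_2\}}$ with $\epsilon=\min\{y_{C_1},y_{C_2}\}$ stays feasible, preserves the objective (two sets in, two sets out), and strictly decreases a suitable potential such as $\sum_{C\in\supp(y)}|C|^2$ (or the number of crossing pairs). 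Finitely many such steps yield an optimal $y$ with laminar support.

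Next I would make the solution minimal while keeping laminarity. For a fixed laminar-supported optimal $y$, repeatedly look for $C_1\subsetneq C_2$ with $y_{C_2}>0$ such that $y-\epsilon\chi^{\{C_2\}}+\epsilon\chi^{\{C_1\}}$ is feasible for small $\epsilon>0$; such a move keeps the objective fixed, keeps the support laminar (we only possibly shrink $\supp(y)$ and the pair $C_1\subsetneq C_2$ is nested), and strictly decreases $\sum_{C}|C|\,y_C$. Since the dual values lie in a polytope, one can realize each move to push $\epsilon$ to the maximal value for which feasibility holds, which either zeros out $y_{C_2}$ or makes some dual constraint tight; a standard argument bounds the number of such moves polynomially. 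When no such move exists, $y$ is minimal by Definition~\ref{def:minimal}. This proves existence of an optimal $y^*$ that is minimal with laminar support.

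For the algorithmic claim I would set this up as a linear program. Optimality is one linear constraint ($\sum_C y_C = \mathrm{OPT}$, where $\mathrm{OPT}$ is the common optimal value of~\eqref{eq:dir_cut_lp} and~\eqref{eq:dir_cut_dual}, computable via Lemma~\ref{lem:bidirected_integral} and LP duality in polynomial time). Among optimal duals I would first minimize $\sum_{C\in\mathcal{C}_G}|C|\cdot y_C$ by solving an LP over the (exponentially many, but intersecting-family-indexed) variables $y_C$; here I need a polynomial-size handle on $\mathcal{C}_G$, which is available because $\mathcal{C}_G$ is the set of $2$-cuts of a cactus and hence has polynomially many members that can be enumerated. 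Minimizing $\sum_C |C| y_C$ over the face of optimal duals automatically produces a minimal solution (any nonminimality move strictly decreases this objective). To additionally enforce laminar support I would minimize a second, lexicographically lower-priority objective that penalizes crossing pairs---e.g.\ after obtaining the $|C|$-weighted minimizer, run the uncrossing procedure above, noting each uncrossing step preserves both optimality and the value $\sum_C|C|y_C$ (since $|C_1\cap C_2|+|C_1\cup C_2| = |C_1|+|C_2|$), so the result is still minimal and now laminar. All steps run in polynomial time. The main obstacle I anticipate is the bookkeeping around the potential functions and the termination/polynomiality of the uncrossing and minimization steps---in particular verifying that uncrossing never breaks minimality (this uses $|C_1\cap C_2|+|C_1\cup C_2|=|C_1|+|C_2|$) and that the combined procedure terminates in polynomially many iterations rather than merely finitely many; everything else is routine uncrossing.
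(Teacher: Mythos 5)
Your overall strategy---minimize a secondary objective over the face of optimal dual solutions and invoke uncrossing---is the right one, and your minimality argument is sound: a minimizer of $\sum_{C}|C|\,y_C$ over the optimal face cannot admit a move $y-\epsilon\chi^{\{C_2\}}+\epsilon\chi^{\{C_1\}}$ with $C_1\subsetneq C_2$, since such a move preserves $\sum_C y_C$ and strictly decreases $\sum_C|C|\,y_C$. The gap is exactly where you suspect it: because your weight $|C|$ is \emph{linear}, it is invariant under uncrossing ($|A\cap B|+|A\cup B|=|A|+|B|$), so laminarity has to come from a separate iterative uncrossing phase, and you do not establish that this phase terminates in polynomially many steps. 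The potentials you suggest do not work as stated: $\sum_{C\in\supp(y)}|C|^2$ can \emph{increase} under an uncrossing step (by strict convexity, $|A\cap B|^2+|A\cup B|^2>|A|^2+|B|^2$, and with $\epsilon=\min\{y_A,y_B\}$ you may remove only one set from the support while adding two), and the number of crossing pairs in the support is not monotone either, since $A\cap B$ and $A\cup B$ may cross other members of the support. Even the existence part of your argument, phrased as an iterative procedure with a strictly decreasing real-valued potential, does not by itself guarantee finite termination; one should instead take a minimizer of the potential over the (compact) optimal face and derive a contradiction.

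The paper avoids all of this with one extra idea: it minimizes $\sum_{C\in\mathcal{C}_G}\sqrt{|C|}\cdot y_C$ over the optimal face (a single polynomial-size LP, since $\mathcal{C}_G$ has at most one cut per pair of edges). Because $\sqrt{\cdot}$ is strictly increasing, any minimality-violating move decreases the objective, and because $\sqrt{\cdot}$ is strictly concave and $|A\cap B|+|A\cup B|=|A|+|B|$, any uncrossing move applied to two crossing support sets also strictly decreases the objective (feasibility of the move follows from the coordinatewise inequality $\chi^{\delta^-_{\vec L}(A\cup B)}+\chi^{\delta^-_{\vec L}(A\cap B)}\le\chi^{\delta^-_{\vec L}(A)}+\chi^{\delta^-_{\vec L}(B)}$ and Lemma~\ref{lem:crossing_2_edge_cuts}). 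Hence the LP minimizer is simultaneously minimal and laminar-supported by contradiction---no iterative uncrossing and no termination analysis are needed. If you prefer to keep your two-stage structure, a correct fix is a lexicographic second LP: over the face of minimizers of $\sum_C|C|\,y_C$, minimize a strictly crossing-submodular weight such as $\sum_C|C|\cdot|V\setminus C|\cdot y_C$; but some such strictly concave or strictly submodular ingredient is genuinely needed and is the missing piece in your write-up.
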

\begin{proof}
First, we compute an optimum solution $\bar y$ of LP~\eqref{eq:dir_cut_dual}.
This is possible in polynomial time because $\Cscr_G$ contains at most one cut for every pair of edges in $G$ and hence the LP has only a polynomial number of constraints.
Then we compute a vector $y^*$ minimizing the linear objective $\sum_{C\in \Cscr_G} \sqrt{|C|} \cdot y_C$ over the face of optimum solutions of~\eqref{eq:dir_cut_dual}, i.e., over the set of feasible solutions $y$ of \eqref{eq:dir_cut_dual} with $\sum_{C\in \mathcal{C}_G}y_C = \sum_{C\in \mathcal{C}_G}\bar y_C$.
We claim that $y^*$ has the desired properties. 

First, $y^*$ is minimal because otherwise there were sets $C_1,C_2 \in \mathcal{C}_G$ with $C_1 \subsetneq C_2$ such that $y' := y^* - \epsilon \cdot \chi^{\{C_2\}} + \epsilon \cdot \chi^{\{C_1\}}$ is a feasible dual solution for some $\epsilon > 0$. 
But then $y'$ is also contained in the face of optimum solutions of~\eqref{eq:dir_cut_dual} and 
$\sum_{C\in \Cscr_G} \sqrt{|C|} \cdot y'_C < \sum_{C\in \Cscr_G}\sqrt{|C|} \cdot y^*_C$, a contradiction.
Second, $y^*$ has laminar support for the following reason.
Suppose there are sets $A,B \in \supp(y^*)$ such that $A$ and $B$ cross, i.e., $A\cap B$, $A\setminus B$, and $B\setminus A$ are nonempty. Then $A\cap B$ and $A\cup B$ are contained in $\Cscr_G$ (by Lemma~\ref{lem:crossing_2_edge_cuts}~\ref{item:four_connected_comp})
 and
\[
\chi^{\delta^-_{\vec{L}}(A\cup B)} +\chi^{\delta^-_{\vec{L}}(A\cap B)} \le \chi^{\delta^-_{\vec{L}}(A)} + \chi^{\delta^-_{\vec{L}}(B)}\enspace.
\]
Hence decreasing $y^*_{A}$ and $y^*_{B}$ by $\epsilon := \min\{ y^*_A, y^*_B\}$ while increasing $y^*_{A \cup B}$ and $y^*_{A \cap B}$ by $\epsilon$ maintains an optimum dual solution, but decreases $\sum_{C\in \Cscr_G}\sqrt{|C|} \cdot y^*_C$, a contradiction.
Therefore, $y^*$ has laminar support.
\end{proof}

We can use a dual solution $y^*$ as in Lemma~\ref{lem:solving_simple_dual} in order to bound the cost for completing a set $R$ of cross-links to a feasible CacAP solution.
This is formally stated by Lemma~\ref{lem:complete_cross_links}, which we recall below.
We emphasize that it is crucial that $y^*$ is minimal.

\lemcompletecrosslinks*
\begin{proof}
Let $y$ be the vector resulting from $y^*$ by restricting to sets $C \in \mathcal{C}_G$ with $R \cap \delta_L(C)=\emptyset$.
Then $y$ is a feasible solution to the LP~\eqref{eq:dir_cut_dual} for the residual instance $\Iscr'=(G',L')$ of $(G,L)$ with respect to $R$.
We claim that $y$ is an optimum solution. 
Before showing this, we observe that it readily implies the desired result, due to the following.
The point $y$ being an optimal solution implies that the value of~\eqref{eq:dir_cut_dual} for the residual instance $\Iscr'$ equals 
$\sum_{C\in \mathcal{C}_G:\\\delta_L(C)\cap R=\emptyset} y^*_C$. By strong duality, it is also equal to the optimum value of~\eqref{eq:dir_cut_lp}.
We compute an optimum extreme point solution to~\eqref{eq:dir_cut_lp}, which is integral by Lemma~\ref{lem:bidirected_integral}.
Dropping the orientation of the links in this integral solution yields the desired set $F\subseteq L$ of cost 
\[
c(F)\le \sum_{C\in \Cscr_G: R\cap \delta_L(C)=\emptyset} y^*_C \enspace.
\]

Hence it remains to prove that $y$ is an optimum solution of the LP~\eqref{eq:dir_cut_dual} for the residual instance $\Iscr'$.
Suppose this is not the case. 
By Lemma~\ref{lem:solving_simple_dual}, the LP~\eqref{eq:dir_cut_dual} for the residual instance $\Iscr'$  has an optimum
solution $z$ that is minimal and has laminar support.
Because we assumed that $y$ is not optimal, we have
\[
\sum_{C \in \Cscr_{G'}} z_C\ >\ \sum_{C\in \Cscr_{G'}} y_C
\]
and hence there exists a set $W\in \Cscr_{G'}$ with $z_W > y_W$.
Let $W$ be a minimal set with this property.
Then we have $y_C \ge z_C$ for all $C\in \Cscr_{G'}$ with $C\subsetneq W$.

Suppose we have $y_{U} > z_{U}$ for some $U\in \Cscr_{G'}$ with $U\subsetneq W$.
Let $\epsilon := \min\{z_W - y_W,\  y_{U} - z_{U}\} > 0$.
We show that decreasing $z_W$ by $\epsilon$ and increasing $z_{U}$ by $\epsilon$ maintains feasibility of $z$, contradicting the minimality of $z$.
First, we observe that we chose $\epsilon$ such that the variable $z_W$ remains non-negative.
The only constraints that could potentially get violated by the change of $z$ correspond to directed links $\ell\in \delta_{\vec{L}}^-(U)\setminus \delta_{\vec{L}}^-(W)$.
Because $U\subsetneq W$, this implies $\ell=(v,w)$ with $v \in W\setminus U$ and $u\in U$.
We have chosen $\epsilon$ small enough such that we maintain $z_C \le y_C$ for all $C\in \Cscr_{G'}$ with $C\subsetneq W$.
But then
\[
\sum_{C\in \Cscr_{G'}: \ell\in \delta_{\vec{L}}^-(C)} z_C\ \le\ \sum_{C\in \Cscr_{G'}: \ell\in \delta_{\vec{L}}^-(C)} y_C\ \le\ c_{\ell}
\] 
by the feasibility of $y$.
This shows that decreasing $z_W$ by $\epsilon$ and increasing $z_{U}$ by $\epsilon$ maintains feasibility of~$z$, a contradiction.
Therefore, we have shown that $y^*_C = y_C = z_C$ for all sets $C\in \Cscr_{G'}$ with $C\subsetneq W$.

Because $R\subseteq L_{\mathrm{cross}}$ is a set of cross-links, we have
\[
\Cscr_{G'} = \{ C \in \Cscr_G : R \cap \delta_L(C) =\emptyset \} = \{ C \in \Cscr_G : C \text{ does not contain an endpoint of a link in }R \} 
\]
and hence $W\in \Cscr_{G'}$ implies that $C\in \Cscr_{G'}$ for all $C\in \Cscr_G$ with $C\subsetneq W$.
Thus,
\begin{align}\label{eq:z_equals_y_inside_W}
y^*_C\ =&\ z_C \qquad \text{ for all } C\in \Cscr_G\text{ with }C\subsetneq W.
\end{align}
Now we distinguish two cases.

\vspace*{3mm}
\noindent \textbf{Case 1:} There exists a set $U\in \supp(y^*)$ that crosses $W$, i.e., $U\setminus W$, $W\setminus U$, and $W\cap U$ are nonempty.
\vspace*{3mm}

Let $U$ be a minimal set with this property. 
First, we observe that $U\cap W$ is a 2-cut of $G$, i.e., $U\cap W \in \Cscr_G$,  by Lemma~\ref{lem:crossing_2_edge_cuts}~\ref{item:four_connected_comp}.
Let $\epsilon =\min\{z_W - y^*_W,\ y^*_{U}\} > 0$ and let $\tilde y$ result from $y^*$ by decreasing $y^*_{U}$ by $\epsilon$ and increasing $y^*_{U\cap W}$ by $\epsilon$. 
We claim that $\tilde y$ is a feasible solution of~\eqref{eq:dir_cut_dual} for $\Iscr$.
Then it is clearly also an optimum solution, contradicting the minimality of $y^*$.

Let us now prove that $\tilde y$ is indeed a feasible solution of~\eqref{eq:dir_cut_dual} for $\Iscr$.
We chose $\epsilon$ small enough such that $\tilde y_{U}$ is non-negative.
Therefore, the only constraints of~\eqref{eq:dir_cut_dual} that could potentially be violated correspond to directed links $\ell\in \delta_{\vec{L}}^-(U\cap W) \setminus \delta_{\vec{L}}^-(U)$.
Fix such a link $\ell=(v,w)$. Then $v\in U\setminus W$ and $w\in U \cap W$.
Now consider a set $C\in \Cscr_G \setminus \{W\}$ with $\ell\in \delta_{\vec{L}}^-(C)$ and $y^*_C > 0$. 
Then $w \in C\cap W$ and hence $C \cap W \ne \emptyset$. (See the first picture in Figure~\ref{fig:intersecting_sets}.)
Moreover, because $v\in U \setminus C$ and the support of $y^*$, which contains both $C$ and $U$, is laminar, we have $C\subsetneq U$.
By the minimality of $U$, the set $C$ does not cross $W$ and thus $C\subseteq W\cap U$.
This implies $z_C =y^*_C$ by \eqref{eq:z_equals_y_inside_W}.
Using  $\epsilon \le z_W - y^*_W$ and the feasibility of $z$, we conclude
\begin{align*}
\sum_{C \in \Cscr_G: \ell \in \delta_{\vec{L}}^-(C)} \tilde y_{C} =&\ \tilde y_W+\sum_{C\in \Cscr_G\setminus\{W\}: \ell \in \delta_{\vec{L}}^-(C)} \tilde y_C \\
=&\  y^*_W   +\epsilon + \sum_{ C \in \Cscr_G\setminus \{W\}: \ell \in \delta_{\vec{L}}^-(C) } y^*_C \\
\le&\  z_W + \sum_{C \in \Cscr_G\setminus\{W\}: \ell \in \delta_{\vec{L}}^-(C)} z_C \\
\le&\ c_{\ell}\enspace.
\end{align*}
This shows that $\tilde y$ is indeed a feasible solution, which contradicts the minimality of $y^*$ as we have observed above.
\begin{figure}[t]
\begin{center}
\begin{tikzpicture}[scale=0.23]
\usetikzlibrary{fit}

\tikzset{
  prefix node name/.style={%
    /tikz/name/.append style={%
      /tikz/alias={#1##1}%
    }%
  }
}

\tikzset{link/.style={line width=1.5pt}}

\tikzset{node/.style={thick,draw=black,fill=black,circle,minimum size=4, inner sep=1pt}}

\newcommand\set[2][]{
\begin{scope}[prefix node name=#1]

\begin{scope}
\node(1) at (7,7) {};
\node(2) at (15,7) {};
\node(3) at (10,2){};
\node(4) at (23,7){};
\node(5) at (18,2){};
\node(6) at (15,4) {};

\node[fit={(1)(2)(3)},draw, ellipse,minimum width=3cm,label={above left:$U$}](left){};
\node[fit={(2)(4)(5)},draw, ellipse,minimum width=3cm,label={above right:$W$}](left){};
\node[fit={(6)},draw, ellipse,minimum width=1cm,label={above:$C$}](left){};
\end{scope}

\begin{scope}[every node/.append style=node]
\node[label={right:$w$}] (7) at (14.5,4) {};
\node[label={left:$v$}] (8) at (7.5,4) {};
\end{scope}

\begin{scope}[very thick,dashed,stealth-]
\draw (7) to[bend right] node[pos=0.7,above] {$\ell$} (8);
\end{scope}

\end{scope}
}%

\newcommand\cas[2][]{
\begin{scope}[prefix node name=#1]

\begin{scope}
\node(9) at (15,4.5){};
\node(10) at (15,5){};
\node(11) at (18,5){};
\node(12) at (15,7){};
\node(13) at (13,4){};
\node(14) at (10,3){};
\node(15) at (23,9){};

\node[fit={(9)(10)},draw, ellipse,minimum width=0.5cm,label={above:$C$}](right){};
\node[fit={(9)(10)(11)(12)(13)},draw, ellipse,minimum width=1.5cm,label={above right:$W$}](right){};
\node[fit={(14)(15)(11)(12)(13)},draw, ellipse,minimum width=2.5cm,label={above right:$U$}](right){};
\end{scope}

\begin{scope}[every node/.append style=node]
\node(16) at (15,5){};
\node(17) at (23,5){};

\end{scope}

\begin{scope}[very thick,dashed,stealth-]
\draw (16) to [bend left] (17);
\node() at (18.5,7.5) {$\ell$};
\end{scope}

\end{scope}
}%

\begin{scope}[scale=0.8, shift={(-25,4)}]
\set[c1-]{}
\end{scope}

\begin{scope}[scale=0.8,shift={(8,4)}]
\cas[c2-]{}
\end{scope}

\end{tikzpicture} \end{center}
\caption{Illustration of sets $C$, $U$, and $W$ for the cases~$1$ and~$2$ in the proof of Lemma~\ref{lem:complete_cross_links}.}\label{fig:intersecting_sets}
\end{figure}

\vspace*{3mm}
\noindent\textbf{Case 2:} There is no set $U\in \supp(y^*)$ that crosses $W$. 
\vspace*{3mm}

If $W$ is a maximal set in the support of $y^*$, then, for every directed link $\ell\in \delta_{\vec{L}}^-(W)$, we have
\[
\sum_{C\in\Cscr_G: \ell\in \delta_{\vec{L}}^-(C)} y^*_C\ =\  y^*_W - z_W +\sum_{C\in\Cscr_G: \ell\in \delta_{\vec{L}}^-(C)} z_C\ \le\ c_{\ell} - (z_W -y^*_W)\enspace,
\]
because $y^*_C = z_C$ for all sets $C\subsetneq W$ by \eqref{eq:z_equals_y_inside_W} and because $z$ is a feasible solution of \eqref{eq:dir_cut_dual}.
Therefore, increasing $y^*_W$ by $z_W -y^*_W >0$ maintains feasibility, contradicting the optimality of $y^*$.

Now consider the remaining case where $W$ is not a maximal element of the support of $y^*$.
Then there exists a set $U\in \supp(y^*)$ with $W\subsetneq U$.
Let $U$ be a minimal set with this property.
For every link $\ell\in \delta_{\vec{L}}^-(W)\setminus \delta_{\vec{L}}^-(U)$, we have that all sets $C\in \supp(y^*)$ with $\ell\in \delta_{\vec{L}}^-(C)$ are (not necessarily strict) subsets of $W$ because we chose $U$ minimal (and we are in Case~2). See the second picture in Figure~\ref{fig:intersecting_sets}.
Recall that by  \eqref{eq:z_equals_y_inside_W}  we have $y^*_C = z_C$ for all sets $C\subsetneq W$.
Thus, when $\epsilon:= \min\{y^*_{U}, z_W - y^*_W\} > 0$, we have
\begin{equation*}
c_{\ell}\ \ge\ \sum_{C\in \Cscr_G: \ell\in \delta_{\vec{L}}^-(C)} z_C\ \ge\ \epsilon + \sum_{C\in \Cscr_G: \ell\in \delta_{\vec{L}}^-(C)} y^*_C \qquad \qquad \forall\; \ell\in \delta_{\vec{L}}^-(W)\setminus \delta_{\vec{L}}^-(U)\enspace.
\end{equation*}
Hence, increasing $y^*_W$ by $\epsilon$ and decreasing $y^*_{U}$ by $\epsilon$ maintains a feasible solution of~\eqref{eq:dir_cut_dual}.
This contradicts the minimality of $y^*$.

\vspace*{5mm}
We obtained a contradiction in all cases.
This shows that $y$ is indeed an optimum solution of~\eqref{eq:dir_cut_dual} for $\Iscr'$ and completes the proof.
\end{proof}

\subsection{Saving on cross-links via Chv\'atal-Gomory cuts}

Finally, recall that for a laminar family $\Lscr$, the polytope $P_{\mathrm{CG}}^{\mathcal{L}}$ arises from
\begin{equation*}
P_{\mathrm{cut}}^{\mathcal{L}}\coloneqq
 \left\{x\in \mathbb{R}_{\geq 0}^L \colon x(\delta_L(C))\geq 1 \;\forall C\in \mathcal{L}\right\}
\end{equation*}
by adding all its $\{0,\sfrac{1}{2}\}$-CG cuts.
We now prove that if we choose $\Lscr:= \supp(y^*)$ to be the support of the minimal optimum solution $y^*$ of \eqref{eq:dir_cut_dual} with laminar support,
the polytope $P_{\mathrm{cross}}\coloneqq P_{\mathrm{cut}}\cap P_{\mathrm{CG}}^{\mathcal{L}}$ has the desired properties.
The proof of Proposition~\ref{prop:overview_good_laminar}, which we restate below, completes the proof of Lemma~\ref{lem:cross-link_rounding}.
 
\propgoodlaminar*
\begin{proof}
We define $\Lscr:= \supp(y^*)$, where $y^*$ is the minimal optimum solution $y^*$ of \eqref{eq:dir_cut_dual} with laminar support.
This laminar family $\Lscr$ can be computed efficiently by Lemma~\ref{lem:solving_simple_dual}. 
Moreover, we can efficiently separate over $P_{\mathrm{cut}}$ because it has only a polynomial number of variables and constraints.
Using that $P_{\mathrm{cut}}^{\mathcal{L}}$ is the cut-relaxation of a TAP problem, we can conclude from the work of Fiorini et al.~\cite{fiorini_2018_approximating} that
we can also efficiently separate over $P_{\mathrm{CG}}^{\mathcal{L}}$.
Hence, because $P_{\mathrm{cross}}= P_{\mathrm{cut}}\cap P_{\mathrm{CG}}^{\mathcal{L}}$ has facet complexity bounded by $\poly(|V|)$, we can efficiently optimize over $P_{\mathrm{cross}}$.

Let $x\in P_{\mathrm{cross}}$.
To show that $P_{\mathrm{cross}}$ fulfills the properties stated in Lemma~\ref{lem:cross-link_rounding},
we need to show that we can efficiently obtain a solution $F\subseteq L$ satisfying $c(F) \leq c^T x + \sum_{\ell\in L_{\mathrm{in}}\setminus L_{\mathrm{up}}} c_\ell x_\ell$.
First notice that we can interpret $x$ as a fractional solution to the TAP instance over the family $\mathcal{L}$, i.e., it fulfills
\begin{equation}\label{eq:x_as_TAP_sol}
x(\delta_L(C)) \geq 1 \qquad \forall\; C\in \mathcal{L}\enspace.
\end{equation}
To round $x$, we first map $x$ to a solution $z$ where we replace in-links by directed links as we did in Section~\ref{sec:bidirected_lp}. More precisely, each in-link that is not an up-link will be replaced by two anti-parallel directed links, and all up-links are directed downwards. Formally, let
\begin{equation*}
\vec{L}_{\mathrm{in}} \coloneqq \bigcup_{\{u,w\}\in L_{\mathrm{in}}\setminus L_{\mathrm{up}}}\left\{ (u,w), (w,u) \right\} \cup \bigcup_{\ell \in L_{\mathrm{up}}} \left\{ (\topv(\ell), \bottomv(\ell) ) \right\}\enspace,
\end{equation*}
where for $\ell\in L_{\mathrm{up}}$, we denote by $\topv(\ell)$ the endpoint of $\ell$ closer to the center and by $\bottomv(\ell)$ the one farther away from the center. Let $z\in \mathbb{R}_{\geq 0}^{L_{\mathrm{cross}}\cup \vec{L}_{\mathrm{in}}}$ be defined by $z_\ell = x_\ell$ for $\ell\in L_{\mathrm{cross}}$ and $z_{(u,w)} = x_{\{u,w\}}$ for $(u,w)\in \vec{L}_{\mathrm{in}}$. We now interpret the links $L_{\mathrm{cross}}\cup \vec{L}_{\mathrm{in}}$ as links in the TAP problem that seeks to cover the cuts $\mathcal{L}$. More precisely, a cut $C\in \mathcal{L}$ is covered by a link $\ell\in L_{\mathrm{cross}}\cup \vec{L}_{\mathrm{in}}$ if either $\ell \in L_{\mathrm{cross}}$ and $\ell$ crosses $C$, or $\ell\in \vec{L}_{\mathrm{in}}$ enters $C$. As discussed in Section~\ref{sec:bidirected_lp}, for an instance of TAP this is analogous to splitting in-links that are not up-links into two up-links covering the same cuts. In particular, we have by~\eqref{eq:x_as_TAP_sol} that
\begin{equation}\label{eq:z_as_TAP_sol}
z(\delta_{L_{\mathrm{cross}}}(C)) + z(\delta_{\vec{L}_{\mathrm{in}}}^-(C)) \geq 1 \quad \forall C\in \mathcal{L}\enspace.
\end{equation}
In short, $z$ corresponds to a solution for the TAP problem over the cuts $\mathcal{L}$ without in-links that are not up-links.
Moreover, because $x\in P_{\mathrm{CG}}^{\mathcal{L}}$, also $z$ fulfills the $\{0,\sfrac{1}{2}\}$-CG constraints for the TAP problem over $\mathcal{L}$. This allows us to invoke a result by Fiorini et al.~\cite{fiorini_2018_approximating}, which shows that such TAP solutions can be rounded losslessly because when all in-links are up-links then $P^{\mathcal{L}}_{\mathrm{CG}}$ is integral. Formally speaking, one can efficiently obtain a set $F^{\Lscr}\subseteq L_{\mathrm{cross}}\cup \vec{L}_{\mathrm{in}}$ such that
\begin{enumerate}\itemsep3pt
\item\label{item:mixed_F_is_TAP_sol} $F^{\Lscr} \cap \left( \delta_{L_{\mathrm{cross}}}(C) \cup \delta_{\vec{L}_{\mathrm{in}}}^-(C) \right) \neq \emptyset \;\;\forall C\in \mathcal{L}$, and

\item\label{item:mixed_F_is_cheap} $c(F^{\Lscr}) \leq c^T z = c^T x + \sum_{\ell\in L_{\mathrm{in}}\setminus L_{\mathrm{up}}} c_\ell x_\ell$.
\end{enumerate}
\smallskip
Notice that although~\ref{item:mixed_F_is_TAP_sol} implies that $F^{\Lscr}$ covers every cut in $\Lscr$, the link set $F^{\Lscr}$ does not necessarily correspond to a feasible CacAP solution.
However, as we show next, we can obtain a CacAP solution from $F^{\Lscr}$ by replacing $\vec{F}_{\mathrm{in}}\coloneqq F^{\Lscr}\cap \vec{L}_{\mathrm{in}}$ by a suitable link set $S$ of cost no more than $c(\vec{F}_{\mathrm{in}})$.

We invoke Lemma~\ref{lem:complete_cross_links} with the set $R\coloneqq F^{\Lscr}\cap L_{\mathrm{cross}}$ to obtain $S\subseteq L$ satisfying
\begin{enumerate}[label=(\alph*)]
\item $R \cup S$ is a CacAP solution, and
\item\label{item:cost_U_dual_bound} $c(S) \leq \sum_{\substack{C\in \mathcal{C}_G}:\\ \delta_L(C)\cap R = \emptyset} y_C^*$ where $y^*$ is the optimal solution of~\eqref{eq:dir_cut_dual} that is minimal and has laminar support.
\end{enumerate}
\smallskip
The link set $R\cup S$ is the solution we return. To bound $c(R\cup S)$, we first observe that
\begin{equation}\label{eq:cost_bound_U}
c(S) \ \leq \sum_{\substack{C\in \mathcal{C}_G:\\ \delta_L(C)\cap R =\emptyset}} y_C^*
\ \leq\ \sum_{\vec{\ell}\in \vec{F}_{\mathrm{in}}}
     \sum_{\substack{C \in \mathcal{C}_G:\\ \vec{\ell}\in \delta_{\vec{L}}^-(C)}} y_C^*
\ \leq\ \sum_{\vec{\ell}\in \vec{F}_{\mathrm{in}}} c_{\vec{\ell}}
\ =\ c(\vec{F}_{\mathrm{in}})\enspace,
\end{equation}
where the first inequality follows from~\ref{item:cost_U_dual_bound}, the second one from~\ref{item:mixed_F_is_TAP_sol}, and the third one from $y^*$ being a solution to the dual LP~\eqref{eq:dir_cut_dual}. Finally, the desired cost bound on $R\cup S$ now follows from
\begin{align*}
c(R\cup S) &\ =\ c(R) + c(S)
\ \leq\ c(R) + c(\vec{F}_{\mathrm{in}})
\ =\ c(F^{\Lscr})
\ \leq\ c^T x + \sum_{\ell\in L_{\mathrm{in}}\setminus L_{\mathrm{up}}} c_\ell x_\ell \enspace,
\end{align*}
where the first inequality follows from~\eqref{eq:cost_bound_U} and the second one from~\ref{item:mixed_F_is_cheap}.

\end{proof}

\section{Going below $\bm{1.5}$ through stack analysis}\label{sec:stack_analysis}

In this section we describe a strengthening of the simple bundle rounding approach for $k$-wide instances.
Combining this with the cross-link rounding from the previous section and the black-box reduction to $k$-wide instances will yield the claimed \apxfac-approximation for CAP.
In contrast to the previous section, we now restrict ourselves again to the unweighted Cactus Augmentation Problem.
The algorithm we present is similar to the rewiring algorithm used in \cite{grandoni_2018_improved} for TAP.
We propose a novel analysis leading to a better approximation ratio.

\subsection{$L_{\mathrm{cross}}$-minimal solutions and an LP relaxation}

Our algorithm for $k$-wide instances that improves on the bundle rounding (Lemma~\ref{lem:bundle_rounding}) will first solve a linear program.
The goal of this subsection is to introduce this LP. 

We denote the vertex sets of the connected components of $G-r$ by $V_1,\dots,V_q$.
Then $G[V_i \cup \{r\}]$ for $i\in [q]$ are the principal subcacti of $G$.
Moreover, we define $L_i\subseteq L$ to be the set of links that have at least one endpoint in $V_i$.
Then every in-link is contained in precisely one of the sets $L_i$ and every cross-link is contained in precisely two sets $L_i$.
Now we consider partial solutions covering the 2-cuts of only one of the principal subcacti of $G$.
More precisely, we say that a set $F\subseteq L_i$ of links is a feasible solution for the principal subcactus $G_i$ if $F\cap \delta_L(C)\ne  \emptyset$ for every 2-cut $C$ with $C\subseteq V_i$.
Note that a set $F\subseteq L$ is a feasible solution for the instance $(G,L)$ if and only if $F\cap L_i$ is a feasible solution for the principal subcactus $G_i$ for all $i\in [q]$.

Our LP will have the property that, for every feasible solution $x\in [0,1]^L$, the restriction $x|_{L_i}$ of the vector $x$ to $L_i$
 is a convex combination of incidence vectors of feasible solutions for $G_i$ for all $i\in[q]$.
Moreover, we will ensure that $x|_{L_i}$  is not only a convex combination of incidence vectors of arbitrary  feasible solutions for $G_i$, but these solutions will have a property that we call \emph{$L_{\mathrm{cross}}$-minimal}.
The notion of $L_{\mathrm{cross}}$-minimality is inspired by the notion of shadow-minimality for TAP  used in \cite{grandoni_2018_improved}.

In order to define $L_{\mathrm{cross}}$-minimality, we need some preparation.
A link $\{\bar v, \bar w\}\ne \{v,w\}$ is a \emph{shadow} of a link $\{v,w\}$ if $\bar v$ and $\bar w$ are contained in every $v$-$w$ path in $G$.
(Note that $v=\bar v$ or $w=\bar w$ is allowed but $\{\bar v,\bar w\} =\{v,w\}$ is not.)
We say that an instance $(G,L)$ is \emph{shadow-complete} if for every link $\ell\in L$ all shadows of $\ell$ are contained in $L$.
\begin{lemma}\label{lemma:justify_shadow_notion}
Let $\bar \ell$ be a shadow of $\ell\in L$. Then every 2-cut  of $G$ that is covered by $\bar \ell$ is also covered by $\ell$.
\end{lemma}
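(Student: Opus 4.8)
The plan is to prove the statement by a short case analysis on which side of the $2$-cut the two endpoints of $\ell$ lie, using only the connectivity of the two sides of a $2$-cut. First I would set up notation: write $\ell=\{v,w\}$ and $\bar\ell=\{\bar v,\bar w\}$, and recall that $\bar\ell$ being a shadow of $\ell$ means that both $\bar v$ and $\bar w$ lie on \emph{every} $v$-$w$ path in $G$. Let $C$ be a $2$-cut covered by $\bar\ell$; after possibly swapping the names of $\bar v$ and $\bar w$ we may assume $\bar v\in C$ and $\bar w\notin C$. The goal is then to show that exactly one of $v,w$ lies in $C$, i.e.\ $\ell\in\delta_L(C)$.

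The key preliminary fact I would invoke is that, since $G$ is $2$-edge-connected and $|\delta_E(C)|=2$, removing the two edges of $\delta_E(C)$ leaves exactly two connected components, with vertex sets $C$ and $V\setminus C$; in particular both $G[C]$ and $G[V\setminus C]$ are connected. (This is the same elementary observation already used in the proof of Lemma~\ref{lem:residualCutCorr} and it also follows from Lemma~\ref{lem:crossing_2_edge_cuts}.) With this in hand, suppose for contradiction that $v$ and $w$ lie on the same side of $C$. If $v,w\in C$, then connectivity of $G[C]$ gives a $v$-$w$ path entirely inside $G[C]$; such a path avoids $\bar w\notin C$, contradicting that $\bar w$ lies on every $v$-$w$ path. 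Symmetrically, if $v,w\in V\setminus C$, connectivity of $G[V\setminus C]$ gives a $v$-$w$ path inside $G[V\setminus C]$ that avoids $\bar v\in C$, again a contradiction. Hence exactly one of $v,w$ lies in $C$, so $\ell$ covers $C$, as claimed.

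I do not expect a real obstacle here: the statement is essentially immediate from the definitions once one records that the two sides of a $2$-cut in a $2$-edge-connected graph are connected. The only point that deserves an explicit sentence is that connectivity argument, and it is entirely standard. The proof is therefore a few lines long.
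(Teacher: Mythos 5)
Your proof is correct and follows essentially the same argument as the paper: both assume $\ell$ does not cover $C$, use that $G[C]$ and $G[V\setminus C]$ are connected (since $G$ is $2$-edge-connected and $|\delta_E(C)|=2$) to produce a $v$-$w$ path staying on one side of the cut, and derive a contradiction with the shadow property. The only cosmetic difference is that you split into the two symmetric cases explicitly, while the paper handles them in one sentence.
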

\begin{proof}
Suppose there is a 2-cut $C$ of $G$ that is covered by $\bar\ell$ but not by $\ell=\{v,w\}$.
The induced subgraphs $G[C]$ and $G[V\setminus C]$ are connected because the cactus $G$ is 2-edge-connected and $C$ is a 2-cut in $G$.
Moreover, because $\ell$ does not cover the cut $C$, we have either $v,w\in C$ or $v,w\in V\setminus C$.
We conclude that there either is a $v$-$w$ path in $G$ that visits only vertices in $C$, or a $v$-$w$ path that visits only vertices in $V\setminus C$.
Because $\bar\ell$ is a shadow of $\ell=\{v,w\}$ this path must visit both endpoints of $\bar \ell$, contradicting the fact that $\bar\ell$ covers $C$.
\end{proof}

If $(G,L)$ is not shadow-complete, we add the shadows of all $\ell\in L$ to $L$ and solve the resulting instance.
Then, whenever we include a shadow of $\ell$ in the solution, we replace it by $\ell$.
This replacement maintains a feasible solution by Lemma~\ref{lemma:justify_shadow_notion}.
Therefore, we now assume w.l.o.g.~that $(G=(V,E),L)$ is shadow-complete.

Any shadow-complete instance has an optimal \emph{shadow-minimal} solution, i.e., a solution where none of the links can be omitted or replaced by one of its shadows. (See Lemma~\ref{lem:existence_optimum_minimal_solution} below.)
While this notion of minimality seems natural, we will instead work with a weaker form of minimality which we call $L_{\mathrm{cross}}$-minimality.
This will suffice for our purpose and it is easier to ensure that $x|_{L_i}$ is a convex combination of $L_{\mathrm{cross}}$-minimal solutions
than to ensure that it is a convex combination of shadow-minimal solutions.

Let us now define $L_{\mathrm{cross}}$-minimality or more generally $L'$-minimality for any set $L'$ of links. 
We say that a link $\ell_1\in L$ is \emph{minimal with respect to $\ell_2\in L$} if for any shadow $\bar{\ell_1}$ of $\ell_1$, the $2$-cuts covered by $\{\bar{\ell_1},\ell_2\}$ are a strict subset of those covered by $\{\ell_1,\ell_2\}$ and the 2-cuts covered by $\{\ell_2\}$ are a strict subset of those covered by $\{\ell_1,\ell_2\}$; or formally, for any shadow $\bar{\ell_1}$ of $\ell_1$,
\begin{align*}
\{ C\in \mathcal{C}_G : \{\bar\ell_1,\ell_2\} \cap \delta_L(C) \ne \emptyset\} \subsetneq&\ \{ C  \in \mathcal{C}_G : \{\ell_1, \ell_2\} \cap \delta_L(C) \ne \emptyset\}\enspace,\text{ and} \\
\{ C\in \mathcal{C}_G : \{\ell_2\} \cap \delta_L(C) \ne \emptyset\} \subsetneq&\ \{ C  \in \mathcal{C}_G : \{\ell_1, \ell_2\} \cap \delta_L(C) \ne \emptyset\}
\enspace.
\end{align*}
Figure~\ref{fig:shortening_example} illustrates the concept of minimality of a link $\ell_1$ with respect to another link $\ell_2$.
\begin{figure}[!ht]
\begin{center}
\begin{tikzpicture}[scale=0.45]

\pgfdeclarelayer{bg}
\pgfdeclarelayer{fg}
\pgfsetlayers{bg,main,fg}

\tikzset{
  prefix node name/.style={%
    /tikz/name/.append style={%
      /tikz/alias={#1##1}%
    }%
  }
}

\tikzset{root/.style={fill=white,minimum size=13}}

\tikzset{lks/.style={line width=1pt, blue, densely dashed}}

\tikzset{chosen/.append style={red,line width=1.5pt,very thick}}

\newcommand\cac[2][]{

\begin{scope}[prefix node name=#1]

\tikzset{npc/.style={#2}}

\begin{scope}[every node/.append style={thick,draw=black,fill=white,circle,minimum size=5, inner sep=2pt}]
\node[root]  (1) at (18.68,-1.64) {r};
\node  (2) at (18.80,-5.50) {};
\node  (3) at (21.08,-5.16) {};
\node  (4) at (18.66,-7.60) {};
\node  (5) at (20.56,-7.16) {};
\node  (7) at (22.44,-6.76) {};
\node  (8) at (24.28,-5.68) {};
\node  (9) at (23.00,-3.58) {};
\node (10) at (22.40,-9.36) {};
\node (11) at (24.36,-9.32) {};

\begin{scope}[npc]
\node (12) at (12.24,-4.60) {};
\node (13) at (14.20,-4.64) {};
\node (14) at (11.24,-6.32) {};
\node (15) at (12.68,-7.72) {};
\node (16) at (14.80,-7.42) {};
\node (6) at (18.66,-9.60) {};

\end{scope}
\end{scope}

\begin{scope}[every node/.append style={font=\scriptsize}]

\foreach \i/\t in {2/,3/,4/,5/,6/,7/,8/,9/,10/,11/,12/,13/,14/,15/,16/} {
\pgfmathparse{int(\i-1)}
}
\end{scope}

\begin{scope}[very thick]

\draw  (1) --  (2);
\draw  (2) --  (3);
\draw  (3) --  (1);
\draw  (9) --  (3);
\draw  (2) to[bend left=15] (4);
\draw  (4) to[bend left=15] (2);
\draw  (4) to[bend left=15] (6);
\draw  (6) to[bend left=15] (4);
\draw  (3) to[bend left=25] (5);
\draw  (5) to[bend left=25] (3);
\draw  (3) --  (7);
\draw  (7) --  (8);
\draw  (8) --  (9);
\draw  (7) -- (10);
\draw (10) -- (11);
\draw (11) --  (7);

\begin{scope}[npc]
\draw  (1) -- (12);
\draw (13) --  (1);
\draw (12) to[bend left=15] (14);
\draw (14) to[bend left=15] (12);
\draw (16) -- (13);
\draw (12) -- (13);
\draw (13) -- (15);
\draw (15) -- (16);
\end{scope}

\end{scope}

\end{scope}

}%

\begin{scope}[shift={(-10,-12)}]

\begin{scope}[shift={(0,0)}]
\cac[c1-]{}
\end{scope}

\begin{scope}[chosen]
\draw (16) --node[pos=0.5, below]{$\ell_1$} (4);
\draw (2)  to[bend left] node[pos=0.7,below right]{$\ell_2$} (6);
\end{scope}

\begin{scope}[shift={(19,0)}]
\cac[c2-]{}
\end{scope}

\begin{scope}[chosen]
\draw (16) --node[pos=0.5, below]{$\bar \ell_1$} (2);
\draw (2)  to[bend left] node[pos=0.7,below right]{$\ell_2$} (6);
\end{scope}

\end{scope}

\end{tikzpicture}
 \end{center}
\caption{In the picture on the left, $\ell_1$ is not minimal with respect to $\ell_2$. Indeed, the shadow $\bar{\ell}_1$ of $\ell_1$, as shown on the right, is such that  $\{\bar{\ell}_1,\ell_2\}$ and $\{\ell_1,\ell_2\}$ cover the same set of $2$-cuts.
Moreover, in our example, the shadow $\bar \ell_1$ of $\ell_1$ is minimal with respect to $\ell_2$.}\label{fig:shortening_example}
\end{figure}

\begin{definition}
Let $L'\subseteq L$.
Then a set $F\subseteq L$ is $L'$-minimal if for every two distinct links $\ell'\in F\cap L'$ and $\ell \in F$, the link $\ell'$ is minimal with respect to $\ell$.
\end{definition}
In particular, one can observe that if a set $F\subseteq L$ is $L_{\mathrm{cross}}$-minimal, then no two endpoints of cross-links in $F$ have an ancestry relationship.

Next, we show that every shadow-complete instance has an $L$-minimal optimum solution, which is therefore also $L'$-minimal for every $L'\subseteq L$.
\begin{lemma}\label{lem:existence_optimum_minimal_solution}
Let $(G,L)$ be a shadow-complete instance of cactus augmentation.
Then there is an optimum solution that is shadow-minimal and, hence, also $L$-minimal.
\end{lemma}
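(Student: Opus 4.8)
The plan is to prove the existence of an optimum shadow-minimal solution by a standard extremal-choice argument. First I would fix an optimum solution $F^*$ of the shadow-complete instance $(G,L)$ among all optimum solutions, and then among those, pick one minimizing a suitable ``total size'' potential, for instance $\sum_{\ell=\{v,w\}\in F^*} |P_\ell|$, where $P_\ell\subseteq V$ denotes the set of vertices lying on every $v$-$w$ path in $G$ (equivalently, the union of $\{v,w\}$ with the vertices contracted when taking a residual instance with respect to $\ell$ only). Note $|P_\ell|$ strictly decreases when a link is replaced by one of its proper shadows, and the empty ``replacement'' (deletion) decreases the number of links; so any violation of shadow-minimality gives a strictly smaller potential (using a lexicographic order on (number of links, total size) if one wants to handle deletion and shortening uniformly).

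The key step is then to show that this extremal $F^*$ is shadow-minimal. Suppose not. Then either (i) some $\ell\in F^*$ can be deleted while keeping feasibility, in which case $F^*\setminus\{\ell\}$ is a feasible solution of strictly smaller cardinality, contradicting optimality of $F^*$; or (ii) some $\ell\in F^*$ can be replaced by a shadow $\bar\ell$ of $\ell$ such that $(F^*\setminus\{\ell\})\cup\{\bar\ell\}$ is still feasible. Since $(G,L)$ is shadow-complete, $\bar\ell\in L$, so this replacement yields a valid solution with the same cardinality (hence still optimum, as the problem is unweighted) but strictly smaller potential, contradicting the extremal choice. This establishes that $F^*$ is shadow-minimal.

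Finally I would observe that shadow-minimality implies $L$-minimality. By Lemma~\ref{lemma:justify_shadow_notion}, for any link $\ell_1$ and any shadow $\bar\ell_1$ of $\ell_1$, every $2$-cut covered by $\bar\ell_1$ is also covered by $\ell_1$; hence for any other link $\ell_2$, the $2$-cuts covered by $\{\bar\ell_1,\ell_2\}$ are a (not necessarily strict) subset of those covered by $\{\ell_1,\ell_2\}$, and similarly the $2$-cuts covered by $\{\ell_2\}$ alone are a subset of those covered by $\{\ell_1,\ell_2\}$. In a shadow-minimal solution $F^*$, for $\ell_1,\ell_2\in F^*$ distinct, neither replacing $\ell_1$ by a shadow nor deleting $\ell_1$ preserves feasibility of $F^*$, and since $F^*\setminus\{\ell_1\}$ already covers every $2$-cut that is not covered by $\ell_1$, this forces the two subset relations to be \emph{strict}: there is a $2$-cut covered by $\{\ell_1,\ell_2\}$ but not by $\{\bar\ell_1,\ell_2\}$ (otherwise $(F^*\setminus\{\ell_1\})\cup\{\bar\ell_1\}$ would be feasible), and one covered by $\{\ell_1,\ell_2\}$ but not by $\{\ell_2\}$ (otherwise $F^*\setminus\{\ell_1\}$ would be feasible). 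Thus $\ell_1$ is minimal with respect to $\ell_2$ for all such pairs, i.e.\ $F^*$ is $L$-minimal, and therefore $L'$-minimal for every $L'\subseteq L$.

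The main obstacle I anticipate is purely a matter of bookkeeping in choosing the potential function so that a \emph{single} monotone quantity decreases under both deletions and shadow-shortenings; a lexicographic order, or separately first minimizing cardinality and then total shadow-size among minimum-cardinality optima, cleanly sidesteps this. Everything else is a direct consequence of Lemma~\ref{lemma:justify_shadow_notion} and the definition of minimality.
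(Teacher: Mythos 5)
Your proof is correct and takes essentially the same approach as the paper: the paper starts from an optimum solution, iteratively omits or shortens links, and argues termination via exactly the potential $\sum_{\{v,w\}\in F}\bigl|\{u\in V : u \text{ lies on every } v\text{-}w \text{ path}\}\bigr|$, which is equivalent to your extremal-choice formulation. Your explicit verification that shadow-minimality implies $L$-minimality (which the paper leaves implicit in the word ``hence'') is also correct.
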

\begin{proof}
Let $F$ be an optimum solution. If $F$ is not shadow-minimal, there is a link in $F$ that can be omitted or replaced by one of its shadows while maintaining feasibility.
Iterating this yields a shadow-minimal solution.
The procedure terminates because this replacement operation decreases the following integral and non-negative potential function:
\[
    \sum_{\{v,w\}\in F} |\{u\in V: u\text{ is contained in every $v$-$w$ path in } G\}|\enspace,
\]
which finishes the proof.
\end{proof}

We define
\[
P^{\min}(G_i,L_i) \coloneqq \mathrm{conv}\left(\{ \chi^F : F \subseteq L_i\text{ is an $L_{\mathrm{cross}}$-minimal feasible solution for }G_i\}\right)\enspace
\]
to be the convex hull of incidence vectors of $L_{\mathrm{cross}}$-minimal feasible CacAP solutions for $G_i$.
Next, we prove that we can optimize over $P^{\min}(G_i)$ in polynomial time when $k$ is constant.
We need the following observation.

\begin{lemma}\label{lem:simple_observation_cross_link_domination}
Let $i\in [q]$ and let $\ell_1=\{v_1,w_1\}, \ell_2=\{v_2,w_2\} \in L_{\mathrm{cross}} \cap L_i$ be cross-links with $v_1,v_2\in V_i$.
If $v_1$ is an ancestor of $v_2$, then every 2-cut $C\in \mathcal{C}_{G_i}$ of $G_i$ that is covered by $\ell_1$ is also covered by $\ell_2$.
\end{lemma}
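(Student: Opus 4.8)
\textbf{Proof plan for Lemma~\ref{lem:simple_observation_cross_link_domination}.}

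The plan is to use the basic structural fact about cactus graphs, Lemma~\ref{lem:descendants_define_2_cut}, together with the characterization of when a link covers a $2$-cut. First I would fix a $2$-cut $C\in\mathcal{C}_{G_i}$ of $G_i$ that is covered by $\ell_1=\{v_1,w_1\}$, where $v_1,v_2\in V_i$ and $v_1$ is an ancestor of $v_2$. I want to show $\ell_2=\{v_2,w_2\}$ also covers $C$, i.e.\ that exactly one of $v_2,w_2$ lies in $C$. Note that $C$ is a $2$-cut of the principal subcactus $G_i$, which is itself a cactus with center/root $r$; hence the ancestry relations inside $G_i$ are inherited from $G$, and the notion of ``$C$ contains the descendants of a vertex'' behaves as in Lemma~\ref{lem:descendants_define_2_cut}.

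The key step is the following observation: since $\ell_1$ covers $C$ and $r\notin C$, exactly one endpoint of $\ell_1$ is in $C$, and because $r\notin C$ while $G[C]$ and $G[V_i\cup\{r\}\setminus C]$ are both connected (as $|\delta_E(C)|=2$ in the $2$-edge-connected cactus $G_i$), the side $C$ together with $r$ cannot both be a set of descendants\,--\,more precisely, $C$ must be a ``downward closed'' set: if $u\in C$ then every descendant of $u$ lies in $C$ as well. This is exactly Lemma~\ref{lem:descendants_define_2_cut} combined with the fact that the set of descendants of a vertex is a $2$-cut not containing $r$, and that any $2$-cut of a cactus that is covered by an ``upward'' path behaves this way; I would prove the downward-closedness directly: if $u\in C$ and $u'$ is a descendant of $u$ with $u'\notin C$, then taking a $u'$-$r$ path (which must pass through $u$ since $u'$ is a descendant of $u$) gives a path leaving and re-entering $C$, forcing $|\delta_E(C)|\ge 3$, a contradiction. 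Now distinguish cases on which endpoint of $\ell_1$ is in $C$. If $v_1\in C$, then since $v_2$ is a descendant of $v_1$, downward-closedness gives $v_2\in C$. If instead $w_1\in C$ and $v_1\notin C$: here I would use that $v_1$ is an ancestor of $v_2$, so $v_2$ is a descendant of $v_1$; since $v_1\notin C$ and $C$ is downward closed ``only going down from vertices inside $C$'', I actually need the complementary fact, namely that $V_i\cup\{r\}\setminus C$ is ``upward closed'': if $a\notin C$ and $a$ is an ancestor of $b\notin\{a\}$... — wait, that is not what I want either. Let me instead argue: $C$ downward-closed means its complement (within $V_i\cup\{r\}$) is upward-closed, i.e.\ closed under taking ancestors. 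Since $v_1\notin C$ and $v_1$ is an ancestor of $v_2$, this tells us nothing about $v_2$ directly; but $v_2$ could be in $C$ or not. So in the case $v_1\notin C$, I would argue that $w_2$ must then lie in $C$ exactly when $v_2$ does not, using that $\ell_2$ is a link and $C$ is a $2$-cut — but this requires knowing something about $w_2$. The cleaner route: since $v_1\notin C$, all descendants of $v_1$ outside $C$ — hmm.

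Let me restructure. The correct and clean argument: $C$ is a $2$-cut of $G_i$ not containing $r$, hence by the downward-closedness established above, $C$ is a union of ``subtrees hanging below'' certain vertices; equivalently, $C = \{\text{descendants of } u : u \in M\}$ for a suitable antichain $M$ — actually for a cactus this is not literally a union of descendant-sets, but the key property I really use is just: \emph{for any vertex $u$, if $u\in C$ then all descendants of $u$ are in $C$, and if $u\notin C$ then all ancestors of $u$ are not in $C$} (these two are equivalent by complementation). Now: $\ell_1$ covers $C$, so exactly one of $v_1,w_1$ is in $C$. Case (a): $v_1\in C$. Then $v_2$, a descendant of $v_1$, is in $C$ by downward-closedness. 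To conclude $\ell_2$ covers $C$, I must rule out $w_2\in C$ — but that is NOT automatic, so the statement as I'm reading it must be using a different mechanism. Re-examining: the lemma only claims every $2$-cut covered by $\ell_1$ is covered by $\ell_2$, and in Case (a) I get $v_2\in C$; I then need $w_2\notin C$. The missing ingredient is that $w_2\notin V_i$ (it is the endpoint in another principal subcactus, or $w_2 = r$): indeed $\ell_2$ is a cross-link with $v_2\in V_i$, so $w_2\notin V_i$, hence $w_2\notin C\subseteq V_i$. Therefore $v_2\in C$ and $w_2\notin C$, so $\ell_2$ covers $C$. Case (b): $w_1\in C$, $v_1\notin C$. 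Then $w_1\notin V_i$ would contradict $w_1\in C\subseteq V_i$ unless $w_1\in V_i$, which is fine; but $v_1\notin C$ and $v_1$ is an ancestor of $v_2$. Hmm, but we assumed $v_1$ is an ancestor of $v_2$, and actually $v_1,v_2\in V_i$; if $\ell_1$ is a cross-link with $v_1\in V_i$, then $w_1\notin V_i$, so $w_1\notin C$ — contradiction with $w_1\in C$! So Case (b) is impossible, and only Case (a) occurs. Thus the main obstacle I anticipated (ruling out $w_2\in C$) dissolves because $w_2\notin V_i\supseteq C$ for a cross-link, and similarly the ``wrong'' subcase for $\ell_1$ is vacuous. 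So the proof is: establish downward-closedness of $2$-cuts not containing $r$ (one-line argument via Lemma~\ref{lem:descendants_define_2_cut} / $2$-edge-connectivity), observe cross-links have their ``$w$'' endpoint outside $V_i\supseteq C$, conclude $v_1\in C$, then $v_2\in C$ and $w_2\notin C$. The only mild subtlety to get right in writing is the downward-closedness claim and the fact that $2$-cuts of $G_i$ are the same as $2$-cuts of $G$ contained in $V_i$, so that Lemma~\ref{lem:descendants_define_2_cut} applies verbatim.
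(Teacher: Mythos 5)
Your overall structure is the same as the paper's: you correctly identify that for a cross-link $\ell=\{v,w\}$ with $v\in V_i$ the other endpoint $w$ lies outside $V_i\supseteq C$, so $\ell$ covers $C$ exactly when its $V_i$-endpoint is in $C$; hence $v_1\in C$, and the whole lemma reduces to showing $v_2\in C$. That reduction is exactly what the paper does, and your observation that the case $w_1\in C$ is vacuous is the same point.

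However, the one load-bearing step — that $v_1\in C$ together with $v_1$ being an ancestor of $v_2$ forces $v_2\in C$ (your ``downward-closedness'' of $C$) — is justified by an argument that does not work. You take a $v_2$-$r$ path, note it must pass through $v_1\in C$ while $v_2,r\notin C$, and conclude $|\delta_E(C)|\ge 3$. But such a path starts outside $C$, enters $C$ once to reach $v_1$, and leaves $C$ once to reach $r$: that is only two crossings of $\delta_E(C)$, perfectly consistent with $|\delta_E(C)|=2$, so no contradiction arises. The correct argument — the one the paper uses, and whose ingredient you already mention earlier in your write-up — is to exploit that $G$ is $2$-edge-connected and $|\delta_E(C)|=2$, so $G[V\setminus C]$ is \emph{connected}; if $v_2\notin C$ this yields a $v_2$-$r$ path lying entirely in $V\setminus C$ and hence avoiding $v_1$, contradicting that $v_1$ lies on every $v_2$-$r$ path. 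With that substitution your proof is complete and coincides with the paper's.
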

\begin{proof}
Let $C\in \mathcal{C}_{G_i}$ be a 2-cut of $G_i$ covered by $\ell_1$. 
Then $v_1\in C$ because $\ell_1$ is a cross-link and hence its other endpoint $w_1$ is not contained in $V_i\supseteq C$.
We need to show that $v_2 \in C$ because the other endpoint $w_2$ of $\ell_2$ is not contained in  $V_i\supseteq C$.
Suppose $v_2 \notin C$.
Because $G$ is 2-edge-connected and $C$ is a 2-cut of $G$, the subgraph $G[V\setminus C]$ is connected.
Therefore there is an $r$-$v_2$ path in $G[V\setminus C]$.
This contradicts $v_1 \in C$ because $v_1$ is an ancestor of $v_2$.
\end{proof}

\begin{lemma}\label{lem:optimize_single_subcactus}
Let $k$ be a constant.
For any principal subcactus $G_i$ of a $k$-wide instance $(G,L)$ we can optimize any linear objective $c^T x$ with $c\in \mathbb{R}^{L_i}$ over $P^{\min}(G_i,L_i)$ in polynomial time.
\end{lemma}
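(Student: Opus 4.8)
The plan is to reduce the optimization over $P^{\min}(G_i,L_i)$ to a purely combinatorial problem and then solve that by brute-force enumeration of the (few) cross-links, combined with the FPT algorithm of Lemma~\ref{lem:fpt_terminals}. Since $P^{\min}(G_i,L_i)$ is the convex hull of the finitely many points $\chi^F$ with $F$ an $L_{\mathrm{cross}}$-minimal feasible solution for $G_i$, optimizing a linear function $c^Tx$ over it amounts to finding an $L_{\mathrm{cross}}$-minimal feasible solution $F$ for $G_i$ of minimum value $c^T\chi^F$ (maximization being the same with $-c$); such $F$ exist because the instance is shadow-complete (Lemma~\ref{lem:existence_optimum_minimal_solution}). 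So the goal becomes to compute a minimum-cost $L_{\mathrm{cross}}$-minimal feasible solution $F\subseteq L_i$ for $G_i$.

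First I would establish that any such $F$ contains at most $|T\cap V_i|\le k$ cross-links. On the one hand, the $V_i$-endpoints of cross-links of $F$ are pairwise non-comparable in the ancestry order (the observation recorded after the definition of $L_{\mathrm{cross}}$-minimality), and two non-comparable vertices of $V_i$ have disjoint nonempty sets of terminal descendants — by Lemma~\ref{lem:terminal_descendant_exists} together with the fact that a common terminal descendant of $v$ and $v'$ forces $v$ and $v'$ to be comparable, a standard cactus argument — so there are at most $|T\cap V_i|\le k$ such endpoints. On the other hand, no two distinct cross-links of $F$ share their $V_i$-endpoint $v$: writing one of them as $\ell_1=\{v,w\}$ with $w\notin V_i$, the link $\{r,w\}$ is a shadow of $\ell_1$ that covers exactly the cuts in $\mathcal{C}_G$ containing $w$, so the only cuts covered by $\ell_1$ but not by $\{r,w\}$ are those containing $v$; every such $2$-cut lies inside $V_i$ and hence is covered by any other cross-link of $F$ with endpoint $v$, so $\ell_1$ is not minimal with respect to that cross-link, contradicting $L_{\mathrm{cross}}$-minimality. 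Thus $|F\cap L_{\mathrm{cross}}|\le k$.

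The algorithm then enumerates all sets $R\subseteq L_{\mathrm{cross}}\cap L_i$ with $|R|\le k$; there are only $\poly(|V|)$ of them as $k$ is constant. For each $R$, I would discard it unless every $\ell'\in R$ is minimal with respect to every other link of $R$ (necessary for $R$ to be extendable to an $L_{\mathrm{cross}}$-minimal solution, and checkable in polynomial time since a link has polynomially many shadows and $\mathcal{C}_G$ has polynomially many cuts). For the surviving $R$, let $I_{\mathrm{allowed}}\subseteq L_{\mathrm{in}}\cap L_i$ be the set of in-links $\ell$ such that every $\ell'\in R$ is minimal with respect to $\ell$; this is again computable in polynomial time. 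The key point is that $R\cup I$ with $I\subseteq L_{\mathrm{in}}\cap L_i$ is $L_{\mathrm{cross}}$-minimal if and only if the within-$R$ conditions hold and $I\subseteq I_{\mathrm{allowed}}$, because $L_{\mathrm{cross}}$-minimality is a conjunction of conditions each involving a single cross-link of the solution together with one further link. It then remains to compute a minimum-cost $I\subseteq I_{\mathrm{allowed}}$ such that $R\cup I$ covers all $2$-cuts contained in $V_i$. Since a cross-link of $R$ with $V_i$-endpoint $v$ covers exactly the $2$-cuts $C\subseteq V_i$ with $v\in C$ — equivalently, the up-link $\{v,r\}$ covers precisely those cuts inside $G_i$ — this is, by Corollary~\ref{cor:residual_instance}, a weighted CacAP instance on the residual cactus of $G_i$ obtained by fixing the up-links $\{v,r\}$ over the $V_i$-endpoints $v$ of $R$ and restricting the links to the image of $I_{\mathrm{allowed}}$; this residual cactus has at most $|T\cap V_i|\le k$ terminals (a residual instance has no more terminals than the original; cf.\ Lemma~\ref{lem:residual_instance_of_k_wide}), so Lemma~\ref{lem:fpt_terminals} solves it in time $3^k\poly(|V|)$. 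Outputting, over all enumerated $R$, the set $R\cup I$ of minimum total cost yields an optimal $L_{\mathrm{cross}}$-minimal feasible solution: every $L_{\mathrm{cross}}$-minimal feasible $F$ arises from $R=F\cap L_{\mathrm{cross}}$ (of size $\le k$, satisfying the within-$R$ conditions) and $I=F\setminus R\subseteq I_{\mathrm{allowed}}$, while conversely every set produced by the algorithm is $L_{\mathrm{cross}}$-minimal and feasible.

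I expect the main obstacle to be the structural claim $|F\cap L_{\mathrm{cross}}|\le k$, and in particular its ``no shared $V_i$-endpoint'' part, which is where one genuinely has to use the shadow appearing in the definition of minimality rather than a purely pairwise argument. The second delicate point is making the \emph{localization} of $L_{\mathrm{cross}}$-minimality fully precise — namely that $R\cup I$ is $L_{\mathrm{cross}}$-minimal exactly when the within-$R$ conditions hold and $I\subseteq I_{\mathrm{allowed}}$, so that no residual global minimality constraint survives and the FPT routine of Lemma~\ref{lem:fpt_terminals} can be invoked as a black box; the remaining steps are routine bookkeeping about residual instances and running times.
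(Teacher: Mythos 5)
Your proposal is correct and follows essentially the same route as the paper's proof: bound the number of cross-links in an $L_{\mathrm{cross}}$-minimal solution by $k$, enumerate all candidate cross-link sets of size at most $k$, prune links that violate minimality with respect to the guessed cross-links, and solve the resulting residual instance via the FPT algorithm of Lemma~\ref{lem:fpt_terminals}. The only differences are cosmetic: the paper obtains the cardinality bound in one step by pigeonholing two cross-link endpoints onto a common terminal descendant and invoking Lemma~\ref{lem:simple_observation_cross_link_domination}, while your explicit restriction of the completion step to in-links is a slightly more careful way of guaranteeing that the returned set is indeed $L_{\mathrm{cross}}$-minimal.
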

\begin{proof}
Let $F\subseteq L_i$ be an $L_{\mathrm{cross}}$-minimal feasible solution for $G_i$ with $c(F)$ minimum.
Then $\min \{ c^T x : x\in P^{\min}(G_i,L_i)\}$ is attained for $x = \chi^F$.

First, we show that $F$ contains at most $k$ cross-links.
For the sake of deriving a contradiction, suppose $F$ contains $k+1$ cross links $\ell_1,\dots,\ell_{k+1}$. 
Each of these cross-links $\ell_j$ with $j\in[k+1]$ has exactly one endpoint $v_j$ in $V_i$ and this endpoint is an ancestor of some terminal in $V_i$ (by Lemma~\ref{lem:terminal_descendant_exists}).
Because $G_i$ is $k$-wide, the set $V_i$ contains at most $k$ terminals.
Therefore, by the pigeonhole principal there are distinct indices $j_1,j_2\in [k+1]$ such that $v_{j_1}$  and $v_{j_2}$ are ancestors of the same terminal $t\in V_i$.
This means that both $v_{j_1}$ and $v_{j_2}$ are contained in every $r$-$t$ path in $G$ and therefore $v_{j_1}$ is an ancestor of $v_{j_2}$ or the other way around.
Without loss of generality, assume that $v_{j_1}$ is an ancestor of $v_{j_2}$.
But then by Lemma~\ref{lem:simple_observation_cross_link_domination} we can replace $\ell_{j_1}=\{v_{j_1},w_{j_1}\}$ by its shadow $\{r,w_{j_1}\}$ without changing the set of 2-cuts covered by $\{\ell_{j_1}, \ell_{j_2}\}$.
This contradicts the $L_{\mathrm{cross}}$-minimality of $F$.
Hence, $F$ contains at most $k$ cross-links.

Because $k$ is a constant, there are only polynomially many sets of at most $k$ cross-links.
Hence we can enumerate over all such sets $X\subseteq L_{\mathrm{cross}}$ with $|X|\le k$ in order to ``guess'' $F\cap L_{\mathrm{cross}}$.
If $X$ is not $L_{\mathrm{cross}}$-minimal, then $X$ cannot be $F\cap L_{\mathrm{cross}}$.
Otherwise, we delete all links $\ell'\in L_i \setminus X$ from $L_i$ for which there exist some $\ell\in X$ such that $\ell$ is not minimal with respect to $\ell'$.
If $X=F\cap L_{\mathrm{cross}}$, these links cannot be contained in $F$.
Then, we compute an optimum solution $F_X$ of the residual instance of $(G_i,L_i, c)$ with respect to $X$ (if this instance is feasible).
This can be done in polynomial time by Lemma~\ref{lem:fpt_terminals}. 
By Corollary~\ref{cor:residual_instance}, $X \cup F_X$ is a feasible solution of $G_i$.
Moreover, $X \cup F_X$ is $L_{\mathrm{cross}}$-minimal because we deleted all links that could lead to a violation of this condition.
The cheapest of the resulting solutions $X \cup F_X$ of $G_i$ (for all enumerated sets $X$) has cost at most $c(F)$
because we considered $X=F\cap L_{\mathrm{cross}}$ at some point.
Hence, the incidence vector of this solution $X \cup F_X$ is an optimum solution of $\min \{ c^T x : x\in P^{\min}(G_i,L_i)\}$.
\end{proof}

\begin{corollary}\label{cor:decompose_convex_combination}
Suppose that $k$ is a constant and let $x\in P^{\min}(G_i,L_i)$ for some $i\in [q]$.
Then we can in polynomial time decompose $x$ into $L_{\mathrm{cross}}$-minimal feasible solutions for $G_i$, i.e., we can find $L_{\mathrm{cross}}$-minimal feasible solutions $F^1, \dots, F^h$ for $G_i$ and coefficients $p_1,\dots,p_h >0$  with $\sum_{j=1}^h p_j=1$ such that
$x = \sum_{j=1}^h p_j \cdot \chi^{F^j}$.
\end{corollary}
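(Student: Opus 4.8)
The plan is to derive Corollary~\ref{cor:decompose_convex_combination} from Lemma~\ref{lem:optimize_single_subcactus} via a standard application of the constructive version of Carathéodory's theorem, which turns efficient linear optimization over a polytope into an efficient procedure for writing any point of the polytope as an explicit convex combination of (a polynomial number of) vertices. Since $P^{\min}(G_i,L_i)$ is by definition the convex hull of the incidence vectors of $L_{\mathrm{cross}}$-minimal feasible solutions for $G_i$, its vertices are exactly (a subset of) these incidence vectors, so expressing $x$ as a convex combination of vertices yields precisely the desired decomposition into $L_{\mathrm{cross}}$-minimal feasible solutions $F^1,\dots,F^h$ with positive coefficients summing to $1$.

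First I would recall the constructive Carathéodory procedure and verify that its two prerequisites are met for $P^{\min}(G_i,L_i)$: (i) we can optimize any linear objective over the polytope in polynomial time, which is exactly Lemma~\ref{lem:optimize_single_subcactus} (using that $k$ is constant); and (ii) the polytope has facet complexity bounded polynomially in $|V|$, which holds because it lives in $[0,1]^{L_i}$ and is the convex hull of $\{0,1\}$-vectors, hence is a $\{0,1\}$-polytope. Then, given $x\in P^{\min}(G_i,L_i)$, one iteratively reduces the dimension: if $x$ is a vertex we are done; otherwise, find a direction $d$ with $x\pm d \in P^{\min}(G_i,L_i)$ (obtainable, e.g., from the inequality description produced via the optimization oracle, or more directly by the dimension-reduction argument in \cite[Theorem~6.5.11]{groetschel_1993_geometric}), move along $d$ until hitting a face, and recurse on that lower-dimensional face. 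After at most $|L_i|+1$ steps one obtains $x$ as a convex combination of at most $|L_i|+1$ vertices $\chi^{F^1},\dots,\chi^{F^h}$ with strictly positive coefficients $p_1,\dots,p_h$ summing to $1$. Each $F^j$ is a vertex of $P^{\min}(G_i,L_i)$, hence the incidence vector of an $L_{\mathrm{cross}}$-minimal feasible solution for $G_i$, as required. Each step runs in polynomial time by the optimization oracle, and the number of steps is polynomial, so the whole procedure is polynomial-time.

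Concretely, I would invoke \cite[Theorem~6.5.11]{groetschel_1993_geometric} (or the analogous statement), which says that given a well-described polytope $P$ together with an efficient optimization (equivalently separation) oracle, and a point $x\in P$, one can in polynomial time compute affinely independent vertices $v_0,\dots,v_m$ of $P$ and positive coefficients $p_0,\dots,p_m$ with $\sum p_j = 1$ and $x=\sum p_j v_j$. Applying this with $P = P^{\min}(G_i,L_i)$, whose optimization oracle is Lemma~\ref{lem:optimize_single_subcactus}, immediately gives the statement: set $F^j$ to be the $L_{\mathrm{cross}}$-minimal feasible solution with $\chi^{F^j}=v_{j-1}$.

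The main obstacle—really the only nontrivial point—is making sure the hypotheses of the Carathéodory/decomposition machinery are genuinely satisfied: that $P^{\min}(G_i,L_i)$ qualifies as a \emph{well-described} polytope in the sense required (polynomially bounded facet complexity, which follows from it being a $0/1$-polytope in $[0,1]^{L_i}$), and that Lemma~\ref{lem:optimize_single_subcactus} supplies optimization over the \emph{exact} polytope $P^{\min}(G_i,L_i)$ rather than some relaxation (it does, since the lemma optimizes over $P^{\min}(G_i,L_i)$ directly). Everything else is a black-box citation. A minor bookkeeping point to handle cleanly is that the vertices returned are $\{0,1\}$-vectors and must be read back as the corresponding $L_{\mathrm{cross}}$-minimal feasible solutions; this is immediate since the vertices of a $0/1$-polytope are among the $0/1$-points it is the hull of, and by construction those $0/1$-points are exactly the incidence vectors of $L_{\mathrm{cross}}$-minimal feasible solutions for $G_i$.
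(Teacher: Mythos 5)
Your proposal is correct and follows essentially the same route as the paper: both invoke the algorithmic Carath\'eodory theorem \cite[Theorem 6.5.11]{groetschel_1993_geometric}, using Lemma~\ref{lem:optimize_single_subcactus} as the optimization oracle and the fact that $P^{\min}(G_i,L_i)$ is a $\{0,1\}$-polytope to bound its facet complexity. Nothing is missing.
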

\begin{proof}
This is an immediate consequence of the fact that we can efficiently optimize over $P^{\min}(G_i,L_i)$, as guaranteed by Lemma~\ref{lem:optimize_single_subcactus}, and a classic algorithmic version of Carath\'eodory's theorem (\cite[Theorem 6.5.11]{groetschel_1993_geometric}). More precisely, this latter statement guarantees that one can efficiently decompose $x$ into a convex combination of at most $|L|+1$ vertices of $P^{\min}(G_i, L_i)$. Because these vertices correspond to feasible solutions $F^j$ for $G_i$, the statement follows. The efficiency of the procedure follows by the fact that the facet-complexity of $P^{\min}(G_i,L_i)$ is bounded by $\poly(|L|)$ because $P^{\min}(G_i,L_i)$ has all vertices in $\{0,1\}^L$. 
\end{proof}

Now we are ready to formulate the relaxation that we will use.
We define 
\begin{equation*}
P^{\mathrm{min}}_{\mathrm{bundle}}(G,L) \coloneqq \left\{ x\in [0,1]^L : x|_{L_i} \in P^{\min}(G_i,L_i) \text{ for all }i\in [q] \right\}\enspace.
\end{equation*}
This is a relaxation of the convex hull
\begin{equation*}
P^{\mathrm{min}}_{\mathrm{CacAP}}(G,L)\coloneqq \conv(\{\chi^F : F\subseteq L, (V,E\cup F) \text{ is $3$-edge-connected and }F\text{ is }L_{\mathrm{cross}}\text{-minimal}\})\enspace
\end{equation*}
of $L_{\mathrm{cross}}$-minimal solutions. 
\begin{lemma}\label{lem:optimize_efficiently_min_bundle}
For any $O(1)$-wide CacAP instance, we can separate and optimize over $P^{\mathrm{min}}_{\mathrm{bundle}}(G,L)$ in polynomial time.
\end{lemma}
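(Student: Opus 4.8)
The statement to prove is Lemma~\ref{lem:optimize_efficiently_min_bundle}: for an $O(1)$-wide CacAP instance, one can separate and optimize over the polytope
\[
P^{\mathrm{min}}_{\mathrm{bundle}}(G,L) = \left\{ x\in [0,1]^L : x|_{L_i} \in P^{\min}(G_i,L_i) \text{ for all }i\in [q] \right\}
\]
in polynomial time. The plan is to reduce separation over $P^{\mathrm{min}}_{\mathrm{bundle}}(G,L)$ to separation over the individual polytopes $P^{\min}(G_i,L_i)$, and to obtain a separation oracle for each $P^{\min}(G_i,L_i)$ from the ability, guaranteed by Lemma~\ref{lem:optimize_single_subcactus}, to optimize any linear objective over it; the equivalence of separation and optimization (via the Ellipsoid Method, see~\cite{groetschel_1993_geometric}) then gives everything we need, provided we control the encoding lengths.

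First, I would handle the decomposition into principal subcacti. Given $x\in[0,1]^L$, note that the $L_i$ need not be disjoint — a cross-link lies in exactly two of them — but membership $x|_{L_i}\in P^{\min}(G_i,L_i)$ only constrains the coordinates of $x$ indexed by $L_i$, so the constraint ``$x\in P^{\mathrm{min}}_{\mathrm{bundle}}(G,L)$'' is literally the conjunction of the $q$ constraints ``$x|_{L_i}\in P^{\min}(G_i,L_i)$''. Hence a separating hyperplane for $P^{\mathrm{min}}_{\mathrm{bundle}}(G,L)$ can be found by checking, for each $i\in[q]$ in turn, whether $x|_{L_i}\in P^{\min}(G_i,L_i)$; if some $i$ fails, a separating hyperplane $a^T y \le b$ for $x|_{L_i}$ in $\mathbb{R}^{L_i}$ extends to $\mathbb{R}^L$ by padding with zeros on $L\setminus L_i$, and this is valid for all of $P^{\mathrm{min}}_{\mathrm{bundle}}(G,L)$ since every point of that polytope restricts into $P^{\min}(G_i,L_i)$. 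Since $q\le|V|$, this is a polynomial number of subcactus-level separation calls.

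Second, I would produce the subcactus-level separation oracle. By Lemma~\ref{lem:optimize_single_subcactus}, since $k=O(1)$ is constant, we can optimize any linear function $c^T x$ over $P^{\min}(G_i,L_i)$ in polynomial time — this is exactly a (strong) optimization oracle. The polytope $P^{\min}(G_i,L_i)$ is the convex hull of $\{0,1\}$-vectors in $\mathbb{R}^{L_i}$ (incidence vectors of $L_{\mathrm{cross}}$-minimal feasible solutions for $G_i$), so it is a $\{0,1\}$-polytope and therefore has facet complexity bounded by $\poly(|L_i|)\le\poly(|V|)$; it is also full-dimensional or, more safely, we only ever need well-described polytopes, and the vertex-description certifies a polynomially bounded description. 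Invoking the polynomial equivalence of optimization and separation for well-described polyhedra (\cite[Theorem~6.4.9]{groetschel_1993_geometric}), the optimization oracle yields a polynomial-time separation oracle for $P^{\min}(G_i,L_i)$. Combining with the previous paragraph gives a polynomial-time separation oracle for $P^{\mathrm{min}}_{\mathrm{bundle}}(G,L)$, and then, invoking the equivalence once more (now for $P^{\mathrm{min}}_{\mathrm{bundle}}(G,L)$, which is again a $\{0,1\}$-polytope in $[0,1]^L$ and hence well-described), we can optimize any linear objective over $P^{\mathrm{min}}_{\mathrm{bundle}}(G,L)$ in polynomial time.

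The only real obstacle is bookkeeping of encoding lengths: one must verify that the polytopes in question are ``well-described'' in the sense of~\cite{groetschel_1993_geometric} so that separation–optimization equivalence applies. This is straightforward here because each polytope is spanned by $\{0,1\}$-vectors, so its facet complexity is polynomially bounded, but it should be stated explicitly. A secondary minor point is that the optimization guarantee from Lemma~\ref{lem:optimize_single_subcactus} is an \emph{exact} optimization oracle over a rational polytope, which is precisely what the equivalence theorem needs; no approximation issues arise. I do not expect any genuinely hard step — the content is entirely in Lemma~\ref{lem:optimize_single_subcactus}, which we may assume, and the present lemma is its routine packaging via Ellipsoid-method machinery.
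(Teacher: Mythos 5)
Your proposal is correct and follows essentially the same route as the paper's proof: reduce separation over $P^{\mathrm{min}}_{\mathrm{bundle}}(G,L)$ to separation over each $P^{\min}(G_i,L_i)$, obtain the latter from the optimization oracle of Lemma~\ref{lem:optimize_single_subcactus} via the equivalence of separation and optimization, and then apply that equivalence once more in the other direction to optimize over $P^{\mathrm{min}}_{\mathrm{bundle}}(G,L)$. Your additional remarks on padding separating hyperplanes with zeros and on the facet complexity of $\{0,1\}$-polytopes are correct details that the paper leaves implicit.
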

\begin{proof}
In order to efficiently separate over $P^{\mathrm{min}}_{\mathrm{bundle}}(G,L)$, it suffices to be able to efficiently separate over each $P^{\mathrm{min}}(G_i, L_i)$ for $i\in [q]$. This is possible because Lemma~\ref{lem:optimize_single_subcactus} guarantees that we can efficiently optimize over $P^{\mathrm{min}}(G_i, L_i)$. Hence, one can also efficiently separate over $P^{\mathrm{min}}(G_i, L_i)$ due to the equivalence of separation and optimization (see~\cite[Theorem 6.4.9]{groetschel_1993_geometric}). Applying the same equivalence between separation and optimization in the other direction to $P^{\mathrm{min}}_{\mathrm{bundle}}(G,L)$, implies that efficient optimization over $P^{\mathrm{min}}_{\mathrm{bundle}}(G,L)$ is possible.
\end{proof}

\subsection{Rounding algorithm}\label{sec:rounding_algo_stacks}

In this section we describe a randomized algorithm for obtaining an integral solution from a vector in the polytope $P^{\mathrm{min}}_{\mathrm{bundle}}(G,L)$.
Whereas our algorithm is similar to \cite{grandoni_2018_improved}, our contribution lies in a significantly stronger analysis approach.

For each $i\in[q]$ we first apply Corollary~\ref{cor:decompose_convex_combination}
to write $x|_{L_i} = \sum_{j=1}^h p_j \cdot \chi^{F_i^j}$ as a convex combination of $L_{\mathrm{cross}}$-minimal solutions $F_i^1, \dots, F_i^h$ for $G_i$.
Then we randomly choose one of the solutions $F_i\in \{ F_i^1, \dots, F_i^h\}$ where we choose $F_i^j$ with probability $p_j$.
We do this random sampling independently for different $i\in [q]$.
The union of the $F_i$ for $i\in [q]$ is a feasible solution for the CacAP instance $(G,L)$.
(This is analogous to the simple bundle rounding we used to prove Lemma~\ref{lem:bundle_rounding}.)

\begin{figure}[!ht]
\begin{center}
\begin{tikzpicture}[scale=0.34]

\tikzset{
  prefix node name/.style={%
    /tikz/name/.append style={%
      /tikz/alias={#1##1}%
    }%
  }
}

\tikzset{
grayout/.style={
every node/.append style={draw=black!30}, draw=black!30
}
}

\tikzset{root/.style={fill=white,minimum size=13, font=\normalsize}}

\tikzset{crlink/.style={red!60,stealth-,line width=1pt}}
\tikzset{inlink/.style={blue,line width=1pt}}

\tikzset{delcrlink/.style={red,stealth-,line width=2pt}}
\tikzset{addcrlink/.style={red,line width=2pt}}

\newcommand\cac[2][]{

\begin{scope}[prefix node name=#1]

\tikzset{npc/.style={#2}}

\begin{scope}[every node/.append style={thick,draw=black,fill=white,circle,minimum size=8, inner sep=0pt, font=\scriptsize}]
\node[root]  (1) at (18.68,-1.64) {r};
\node  (2) at (16.80,-5.50) {};
\node  (3) at (19.08,-4.66) {};
\node  (4) at (16.66,-9.60) {};
\node[minimum size=13pt]  (5) at (18.56,-6.46) {$v_3$};
\node[minimum size=13pt]  (6) at (18.24,-8.24) {$u_3$};
\node  (7) at (20.44,-6.76) {};
\node  (8) at (22.28,-5.68) {};
\node[minimum size=13pt]  (9) at (21.00,-3.58) {$w_2$};
\node (10) at (20.40,-9.36) {};
\node (11) at (22.36,-9.32) {};

\node (12) at (12.24,-4.60) {};
\node[minimum size=13pt] (13) at (14.20,-4.64) {$v_2$};
\node (14) at (11.24,-6.32) {};
\node[minimum size=13pt] (15) at (12.68,-7.72) {$u_2$};
\node[minimum size=13pt] (16) at (14.80,-7.22) {$w_3$};
\node (17) at (24.88,-4.72) {};
\node (18) at (27.60,-4.64) {};
\node (19) at (24.44,-7.64) {};
\node (20) at (26.80,-7.76) {};
\node (21) at (8.04,-4.56) {};
\node (22) at (6.16,-5.36) {};
\node (23) at (6.08,-7.28) {};
\node (24) at (8.00,-6.56) {};

\end{scope}

\begin{scope}[grayout, very thick]

\draw  (1) --  (2);
\draw  (2) --  (3);
\draw  (3) --  (1);
\draw  (9) --  (3);
\draw  (2) to[bend left=15] (4);
\draw  (4) to[bend left=15] (2);
\draw  (3) to[bend left=25] (5);
\draw  (5) to[bend left=25] (3);
\draw  (5) to[bend left=25] (6);
\draw  (6) to[bend left=25] (5);
\draw  (3) --  (7);
\draw  (7) --  (8);
\draw  (8) --  (9);
\draw  (7) -- (10);
\draw (10) -- (11);
\draw (11) --  (7);

\begin{scope}
\draw  (1) to[bend left=2] (21);
\draw (21) to[bend left=2]  (1);
\draw  (1) -- (12);
\draw  (1) -- (17);
\draw (18) --  (1);
\draw (13) --  (1);
\draw (12) to[bend left=15] (14);
\draw (14) to[bend left=15] (12);
\draw (16) -- (13);
\draw (12) -- (13);
\draw (13) -- (15);
\draw (15) -- (16);
\draw (17) -- (18);
\draw (17) to[bend left=13] (19);
\draw (19) to[bend left=13] (17);
\draw (17) to[bend left=13] (20);
\draw (20) to[bend left=13] (17);
\draw (24) -- (21);
\draw (21) -- (22);
\draw (22) -- (23);
\draw (23) -- (24);
\end{scope}

\end{scope}

\end{scope}

\begin{scope} %

\begin{scope}

\begin{scope}[crlink]

\draw (9) to [bend left](18);
\draw (8) to (19);
\draw (10) to [bend left](23);

\draw (17) to[bend right=5] (9);
\draw (21) to[bend right=30] (4);

\draw (14) to (21);

\end{scope}

\begin{scope}[inlink]

\draw (5) to (6);
\draw (7) to [bend left](11);
\draw (1) to [bend right=40](4);

\draw (22) to (24);
\draw (21) to[out=170, in =115,looseness=1.7] (23);

\draw (12) to (15);
\draw (15) to[bend left] (16);

\draw (17) to (19);
\draw (18) to (20);

\end{scope}

\end{scope}
\end{scope}%
}

\begin{scope}[shift={(-10,-12)}]

\begin{scope}[shift={(0,0)}]
\cac[c1-]{}
\end{scope}

\begin{scope}[delcrlink]
\draw (13) to (9);
\draw (5) to (16);

\end{scope}

\begin{scope}[shift={(24.5,0)}]
\cac[c2-]{}
\end{scope}

\begin{scope}[addcrlink]
\draw[bend left=10] (6) to (15);
\end{scope}

\end{scope}

\end{tikzpicture}
 \end{center}
\caption{In the picture on the left, a solution for a CacAP instance is shown with cross-links drawn in red and in-links in blue. It is given by the union of one solution for each of the four subcacti. The links $\{v_2,w_2\}$ and $\{v_3,w_3\}$, highlighted in bold in the left picture, appear in the solution for the second and the third subcactus, respectively. Thus we can replace those two links by $\{u_2,u_3\}$, highlighted in bold in the right picture, while maintaining feasibility, because $\{v_2,v_3\}$ is a shadow of $\{u_2,u_3\}$.}\label{fig:deleting_dominated_link}
\end{figure}

Then we improve this solution using the following observation which was used for TAP in~\cite{grandoni_2018_improved}.
If there are cross-links $\ell_i=\{v_i,w_i\}\in F_i \cap L_{\mathrm{cross}}$ and  $\ell_j=\{v_j,w_j\}\in  F_j \cap L_{\mathrm{cross}}$ with $v_i\in V_i$ and $v_j\in V_j$, we can replace those two cross-links by the link $\{v_i,v_j\}$ and maintain feasibility. 
More generally, by Lemma~\ref{lem:simple_observation_cross_link_domination} we can also replace $\ell_i$ and $\ell_j$ by
any link $\{u_i,u_j\}$ where $u_i$ is a descendant of $v_i$ and $u_j$ is a descendant of $v_j$.
Then $\{v_i,v_j\}$ is a shadow of $\{u_i,u_j\}$ (or we have $\{v_i,v_j\}=\{u_i,u_j\}$).
See Figure~\ref{fig:deleting_dominated_link} for an illustration of such a replacement.

Next, we show how we can compute in polynomial time an optimal set of such replacement operations that can be performed simultaneously. 
This reduces to an edge cover problem as we show next.
We apply the below lemma to the sets $F^i_{\mathrm{cross}} := F_i \cap L_{\mathrm{cross}}$.
Then replacing $F^1_{\mathrm{cross}} \cup \dots \cup F^q_{\mathrm{cross}}$ by the resulting set $F$ maintains a feasible solution.

\begin{lemma}\label{lem:edge_cover_reduction}
Given sets $F^i_{\mathrm{cross}}\subseteq L_{\mathrm{cross}}$ of cross links for all $i \in [q]$,
we can  compute in polynomial time a minimum cardinality set $F\subseteq L_{\mathrm{cross}}$ of cross links such that for all $i\in [q]$ the set 
 $F$ covers every 2-cut $C\in \mathcal{C}_{G_i}$ that is covered by $F^i_{\mathrm{cross}}$.
\end{lemma}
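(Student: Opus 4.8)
The plan is to show that the covering requirements decouple into elementary ``descendant demands'', one per relevant vertex, and that meeting these demands with a minimum number of cross-links is exactly a minimum edge cover problem, which is solvable in polynomial time.

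First I would analyse, for a fixed principal subcactus $G_i$, which $2$-cuts of $G_i$ are covered by $F^i_{\mathrm{cross}}$. A cross-link has exactly one endpoint in $V_i$, and for $v\in V_i$ a $2$-cut $C\in\mathcal{C}_{G_i}$ is covered by a cross-link with $V_i$-endpoint $v$ if and only if $v\in C$ (the other endpoint lies in $V_j\supseteq\!\!\!\!/\;C$ for some $j\neq i$); write $\mathrm{cov}_i(v):=\{C\in\mathcal{C}_{G_i}:v\in C\}$. Hence, letting $A_i\subseteq V_i$ be the set of $V_i$-endpoints of the links in $F^i_{\mathrm{cross}}$, the family of $2$-cuts of $G_i$ covered by $F^i_{\mathrm{cross}}$ is $\mathcal{D}_i:=\bigcup_{v\in A_i}\mathrm{cov}_i(v)$. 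By Lemma~\ref{lem:simple_observation_cross_link_domination}, if $v$ is an ancestor of $v'$ then $\mathrm{cov}_i(v)\subseteq\mathrm{cov}_i(v')$, so $\mathcal{D}_i=\bigcup_{v\in\widehat A_i}\mathrm{cov}_i(v)$, where $\widehat A_i$ is the set of minimal elements of $A_i$ in the ancestry order; this is an antichain, and it equals $A_i$ whenever $F^i_{\mathrm{cross}}$ is $L_{\mathrm{cross}}$-minimal, which handles the application. For $v\in\widehat A_i$, let $D_v$ be the set of descendants of $v$; by Lemma~\ref{lem:descendants_define_2_cut} this is a $2$-cut of $G$, and it lies in $V_i$, so $D_v\in\mathcal{C}_{G_i}$ and $D_v\in\mathrm{cov}_i(v)$. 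I would then check that $D_v\subseteq C$ for every $C\in\mathrm{cov}_i(v)$: if $v\in C$ and $u$ is a descendant of $v$, then $u\in C$, since otherwise the connected subgraph $G[V\setminus C]$ would contain a $u$-$r$ path avoiding $v\in C$. Consequently a set $F$ of cross-links covers $\mathcal{D}_i$ if and only if for every $v\in\widehat A_i$ it covers $D_v$, which in turn holds if and only if $F$ contains a cross-link whose $V_i$-endpoint is a descendant of $v$. Call the pair $(i,v)$ with $v\in\widehat A_i$ a \emph{demand}, and say a cross-link $\ell=\{a,b\}$ with $a\in V_i$ and $b\in V_j$ \emph{serves} the demand $(i,v)$ if $a$ is a descendant of $v$ (and analogously on the $V_j$-side).

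The key combinatorial observation I would record next is that $\ell$ serves at most one demand in each subcactus: if $a$ were a descendant of two distinct $v,v'\in\widehat A_i$, both $v$ and $v'$ would be ancestors of $a$, hence comparable in the ancestry order (the ancestors of any vertex form a chain, which one can derive from Lemma~\ref{lem:descendants_define_2_cut} by an uncrossing argument), contradicting that $\widehat A_i$ is an antichain. Thus every cross-link serves at most two demands, one per side. I would then build an auxiliary graph $H$ whose vertices are the demands: for each cross-link serving two demands $(i,v)$ and $(j,v')$ add an edge between them, and for each demand served by some cross-link only on one side add a loop at that demand vertex (keeping a single representative cross-link per edge/loop, and ignoring cross-links serving no demand). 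Since the demand $(i,v)$ is served by the link of $F^i_{\mathrm{cross}}$ whose $V_i$-endpoint is $v$, the graph $H$ has no isolated vertices, so it has a minimum edge cover (allowing loops), computable in polynomial time via maximum matching. Taking one cross-link per edge of a minimum edge cover yields a set $F\subseteq L_{\mathrm{cross}}$ that serves all demands, hence covers every required $2$-cut; conversely, from any feasible $F^\star$ one obtains an edge cover of $H$ of size at most $|F^\star|$ by picking, for each demand, a serving link in $F^\star$ and mapping it to its edge or loop. Therefore $|F|\le|F^\star|$, so $F$ has minimum cardinality, and all intermediate objects ($A_i$, $\widehat A_i$, the demands, $H$) are of polynomial size and computable in polynomial time.

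The step I expect to carry the most weight is the structural reduction in the second paragraph — in particular identifying $D_v$ as a single ``witness'' $2$-cut that exactly captures the demand $\mathrm{cov}_i(v)$, together with the ``at most one demand per subcactus'' property; these rely on Lemmas~\ref{lem:descendants_define_2_cut} and~\ref{lem:simple_observation_cross_link_domination} and the chain/antichain structure of ancestry. Once that is in place, the edge-cover modelling and its polynomial-time solvability are routine. A minor point to get right is the bookkeeping for cross-links serving only one demand (the loops in $H$) and the reduction from $A_i$ to its minimal elements $\widehat A_i$ in case the given $F^i_{\mathrm{cross}}$ is not $L_{\mathrm{cross}}$-minimal.
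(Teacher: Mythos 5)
Your proof is correct and follows essentially the same route as the paper's: reduce the requirement to ``each endpoint-demand must have a descendant that is an endpoint of a chosen cross-link'' (using Lemmas~\ref{lem:descendants_define_2_cut} and~\ref{lem:simple_observation_cross_link_domination}), prune the demands to an antichain so that each cross-link serves at most one demand per principal subcactus, and solve a minimum edge cover instance on the demand set. The only cosmetic difference is that you model cross-links serving a single demand by loops, whereas the paper sends their second endpoint to an auxiliary vertex $\bar r$ that is covered by a forced extra edge $\{\bar r, v\}$.
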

\begin{proof}
Let 
\[
A= \{ a\in V : a\in V_i\emph{ is an endpoint of a cross-link }\ell\in F^i_{\mathrm{cross}} \text{ for some }i \in [q] \}\enspace.
\]
Because every link $\ell \in F^i_{\mathrm{cross}}$ is a cross-link, such a link $\ell$ covers a 2-cut $C\in \mathcal{C}_{G_i}$
if and only if one of the endpoints of $\ell$ is contained in $C$. (Indeed, because $\ell$ is a cross link, the other endpoint must lie in a different principal subcactus than $G_i$ and is thus in $V\setminus C$.)
Hence, we are looking for a minimum-cardinality set $F$ of cross-links that cover all 2-cuts $C\in \mathcal{C}_G$ with $A\cap C\ne \emptyset$.

We claim that a set $F \subseteq L_{\mathrm{cross}}$ of cross-links has this property if and only if 
for every $a\in A$ there is a descendant of $a$ that is an endpoint of a link $\ell\in F$.
By Lemma~\ref{lem:simple_observation_cross_link_domination} this condition is sufficient.

To see that it is also necessary, we fix a vertex $a\in A$ and consider the set $U_a$ of descendants of $a$.
Let $i\in [q]$ be the index with $a\in V_i$.
By Lemma~\ref{lem:descendants_define_2_cut}, the set $U_a$ is a 2-cut, i.e., $U_a\in \mathcal{C}_{G_i}$.
By the definition of the set $A$, the vertex $a$ is the endpoint of some link $\ell \in F^i_{\mathrm{cross}}$.
Because this link $\ell$ covers $U_a$, also $F$ must cover $U_a$.
Hence, $F$ must contain a link with an endpoint in $U_a$, i.e., a link with an endpoint that is a descendant of $a$.

Therefore, it remains to show that we can in polynomial time compute a minimum cardinality set $F$ of cross-links 
such that, for every $a\in A$, some descendant of $a$ is an endpoint of a link in $F$.
If for two distinct vertices $a,a'\in A$, we have that $a$ is a descendant of $a'$, 
then every descendant of $a$ is also a descendant of $a'$.
Hence, we can remove $a'$ from $A$ without weakening our condition on $F$.
Thus, we may assume without loss of generality that $A$ does not contain distinct vertices $a,a'$, where one is a descendant of the other.
Then the sets $U_a$ of descendant of $a$ for different $a\in A$ are disjoint.

\begin{figure}[!ht]
\begin{center}
\begin{tikzpicture}[scale=0.5]

\tikzset{
  prefix node name/.style={%
    /tikz/name/.append style={%
      /tikz/alias={#1##1}%
    }%
  }
}

\tikzset{
grayout/.style={
every node/.append style={draw=black!30}, draw=black!30
}
}

\tikzset{root/.style={fill=white,minimum size=13, font=\normalsize}}

\tikzset{setA/.style={fill=orange!40,minimum size=13, font=\normalsize}}

\tikzset{crlink/.style={red,stealth-,line width=1pt}}
\tikzset{inlink/.style={blue!50,line width=1pt}}
\tikzset{crlinknotopt/.style={red,line width=1pt, dashed}}
\tikzset{crlinkopt/.style={red,line width=1pt}}

\newcommand\cac[2][]{

\begin{scope}[prefix node name=#1]

\tikzset{npc/.style={#2}}

\begin{scope}[every node/.append style={thick,draw=black,fill=white,circle,minimum size=12, inner sep=2pt, font=\scriptsize}]
\node[root]  (1) at (18.68,-1.64) {r};

\node  (2) at (16.80,-5.50) {};
\node  (3) at (19.08,-5.16) {};
\node  (4)[setA] at (16.66,-9.60) {};
\node  (5) at (18.56,-7.10) {};
\node  (6) at (18.24,-8.6) {};
\node  (7)[setA] at (20.44,-6.76) {};
\node  (8) at (22.28,-5.68) {};
\node  (9)[setA] at (21.00,-3.58) {};
\node (10) at (20.40,-9.36) {};
\node (11) at (22.36,-9.32) {};
\node (12) at (12.24,-4.60) {};
\node (13)[setA] at (14.20,-4.64) {};
\node (14)[setA] at (11.24,-6.32) {};
\node (15) at (12.68,-7.72) {};
\node (16) at (14.80,-7.42) {};
\node (17)[setA] at (24.88,-4.72) {};
\node (18) at (27.60,-4.64) {};
\node (19) at (24.44,-7.64) {};
\node (20) at (26.80,-7.76) {};
\node (21)[setA] at (8.04,-4.56) {};
\node (22) at (6.16,-5.36) {};
\node (23) at (6.08,-7.28) {};
\node (24) at (8.00,-6.56) {};

\end{scope}

\begin{scope}[every node/.append style={font=\scriptsize}]
\foreach \i in {2,3,4,5,6,7,8,9,10,11,12,13,14,15,16,17,18,19,20,21,22,23,24} {
\pgfmathparse{int(\i-1)}
\node at (\i) {${\pgfmathresult}$};
}
\end{scope}

\begin{scope}[grayout, very thick]

\draw  (1) --  (2);
\draw  (2) --  (3);
\draw  (3) --  (1);
\draw  (9) --  (3);
\draw  (2) to[bend left=15] (4);
\draw  (4) to[bend left=15] (2);
\draw  (3) to[bend left=25] (5);
\draw  (5) to[bend left=25] (3);
\draw  (5) to[bend left=25] (6);
\draw  (6) to[bend left=25] (5);
\draw  (3) --  (7);
\draw  (7) --  (8);
\draw  (8) --  (9);
\draw  (7) -- (10);
\draw (10) -- (11);
\draw (11) --  (7);

\begin{scope}[npc]
\draw  (1) to[bend left=2] (21);
\draw (21) to[bend left=2]  (1);
\draw  (1) -- (12);
\draw  (1) -- (17);
\draw (18) --  (1);
\draw (13) --  (1);
\draw (12) to[bend left=15] (14);
\draw (14) to[bend left=15] (12);
\draw (16) -- (13);
\draw (12) -- (13);
\draw (13) -- (15);
\draw (15) -- (16);
\draw (17) -- (18);
\draw (17) to[bend left=13] (19);
\draw (19) to[bend left=13] (17);
\draw (17) to[bend left=13] (20);
\draw (20) to[bend left=13] (17);
\draw (24) -- (21);
\draw (21) -- (22);
\draw (22) -- (23);
\draw (23) -- (24);
\end{scope}

\end{scope}

\end{scope}

}%

\newcommand\edgecover[2][]{

\begin{scope}[prefix node name=#1]

\tikzset{npc/.style={#2}}

\begin{scope}[every node/.append style={thick,draw=black,fill=white,circle,minimum size=12, inner sep=2pt, font=\scriptsize}]
\node  (4)[setA] at (16.66,-9.60) {};
\node  (7) [setA] at (20.44,-6.76) {};
\node  (9)[setA] at (21.00,-3.58) {};
\node (13)[setA] at (14.20,-4.64) {};
\node (14)[setA] at (11.24,-6.32) {};
\node (17)[setA] at (24.88,-4.72) {};
\node (21)[setA] at (8.04,-4.56) {};

\node (25) at (17.00,-12.00) {$\bar{r}$};
\node (26) at (21.00,-12.00) {v};

\end{scope}

\begin{scope}[every node/.append style={font=\scriptsize}]
\foreach \i in {4,7,9,13,14,17,21} {
\pgfmathparse{int(\i-1)}
\node at (\i) {${\pgfmathresult}$};
}
\end{scope}

\begin{scope}[gray, thick]

\draw (25) to (14);
\draw (25) to [bend left=10](13);
\draw (9) to (17);
\draw (4) to [bend left] (21);
\draw (25) to [bend right](17);
\draw (9) to [bend right=10](25);
\draw (14) to (21);

\begin{scope}[purple, very thick]
\draw (25) to (26);
\draw (7) to (14);
\draw (7) to (21);
\draw (9) to (13);
\draw (4) to[bend right=10] (17);

\end{scope}

\end{scope}

\end{scope}

}%

\begin{scope}[shift={(-10,-12)}]

\begin{scope}[shift={(0,13)},scale=0.8]
\edgecover[e-]{}
\end{scope}

\begin{scope}[shift={(0,22)},scale=0.8]
\cac[c1-]{}

\begin{scope}

\begin{scope}[crlink]
\draw (9) to [bend left=8](13);
\draw (7) to[bend right=8] (14);
\draw (4) to (19);

\draw (21) to[bend right=20] (4);

\draw (14) to (21);
\draw (13) to [bend left=8](9);

\draw (17) to (9);
\end{scope}

\begin{scope}[crlinknotopt]
\draw (2) to (14);
\draw (6) to (15);
\draw (10) to[bend left=30] (23);
\draw (3) to (13);

\draw (5) to (16);
\draw (8) to (19);
\draw (9) to (18);
\end{scope}

\begin{scope}[inlink]

\draw (3) to [bend left](6);
\draw (7) to [bend right](10);
\draw (8) to (11);

\draw (22) to (24);
\draw (21) to[out=170, in =115,looseness=1.7] (23);

\draw (12) to (15);
\draw (15) to[bend left] (16);

\draw (17) to (19);
\draw (18) to (20);

\end{scope}

\end{scope}

\end{scope}

\begin{scope}[shift={(0,0)},scale=0.8]
\cac[c2-]{}

\begin{scope}

\begin{scope}[crlinkopt]
\draw (4) to (19);
\draw (7) to[bend right=8] (14);
\draw (13) to [bend left=8](9);
\draw (10) to[bend left=30] (23);
\end{scope}

\begin{scope}[inlink]
\draw (3) to [bend left](6);
\draw (8) to (11);
\draw (22) to (24);
\draw (12) to (15);
\draw (15) to[bend left] (16);
\draw (18) to (20);
\end{scope}

\end{scope}

\end{scope}

\begin{scope}[shift={(25,20)}]%
\def\ll{30mm} %
\def\vs{18mm} %

\begin{scope}[scale=0.5]
\draw[black,line width=1pt] (0,0) -- +(\ll,0) node[right] {in-links}[blue!50];
\draw[black,line width=1pt,yshift=-\vs,dashed] (0,0) -- +(\ll,0) node[right] {cross-links}[red];
\draw[black,line width=1pt,yshift=-2*\vs,-stealth] (0,0) -- +(\ll,0) node[right] {sampled cross-links}[red];
\end{scope}

\begin{scope}[scale=0.5]

\node[fill=orange!40,circle,draw=black,inner sep=3pt] (b) at (0.45,-4*\vs) {};
\node at (b.east)[right=2pt] {vertices in $A$};

\end{scope}

\end{scope}%

\begin{scope}[shift={(25,10)}]%
\def\ll{30mm} %
\def\vs{18mm} %

\begin{scope}
\node[fill=orange!40,circle,draw=black,inner sep=3pt] (b) at (0.25,-1.5*\vs) {};
\node at (b.east)[right=2pt] {vertices in $A$};
\end{scope}

\begin{scope}[scale=0.5]
\draw[black,very thick, yshift=-\vs] (0,0) -- +(\ll,0) node[right] {optimal edge cover}[purple];
\end{scope}

\end{scope}%

\begin{scope}[shift={(25,-2)}]%
\def\ll{30mm} %
\def\vs{18mm} %

\begin{scope}
\node[fill=orange!40,circle,draw=black,inner sep=3pt] (b) at (0.3,-2*\vs) {};
\node at (b.east)[right=2pt] {vertices in $A$};
\end{scope}

\begin{scope}[scale=0.5]
\draw[black,line width=1pt, yshift=-\vs] (0,0) -- +(\ll,0) node[right] {in-links}[blue!50];
\draw[black,line width=1pt,yshift=-2*\vs] (0,0) -- +(\ll,0) node[right] {cross-links}[red];
\end{scope}

\end{scope}%

\end{scope}

\end{tikzpicture}
 \end{center}
\caption{The top picture shows a possible solution obtained by taking the union of one solution for each principal subcactus. As previously, the heads of the cross-links highlight from which principal subcactus a cross-link has been sampled. Cross-links not appearing in the solution are dashed. The vertices in the set $A$ as defined in the proof of Lemma~\ref{lem:edge_cover_reduction}, which is the vertex set of all arrowheads of cross-links, are drawn as orange vertices. The graph in the middle shows the edge cover instance we built to compute an optimal replacement of the sampled cross-links. A solution for the edge cover problem is highlighted with bold purple edges. In the cactus below, an optimal set of cross-links is shown, which corresponds to the highlighted edge cover solution in the middle picture.}\label{fig:edge_cover_construction}
\end{figure}

We now construct an instance of edge cover as follows.
The vertex set is $A \cupp \{\bar r, v\}$, where $\bar r$ and $v$ are two auxiliary vertices.
For a vertex $u\in V$ we define $a_u$ to be the unique vertex $a\in A$ with $u\in U_a$ if such a vertex $a$ exists, and we define $a_u:=\bar r$ otherwise.
The edge set of our edge cover instance consists of the edge $\{\bar r,v\}$ and 
the edges $\{a_u,a_w\}$ for all cross-links $\{u,w\}\in L_{\mathrm{cross}}$.
The vertex $v$ has only one incident edge and hence the edge $\{v,\bar r\}$ is contained in every feasible edge cover.
Moreover, a set $F$ of cross-links has an endpoint in each of the sets $U_a$ if and only if 
$\{\{a_u,a_w\} : \{u,w\}\in F\} \cupp \{\{v,\bar r\}\}$ is a feasible edge cover.
Therefore, we can compute an optimum set $F$ using a polynomial-time algorithm for the edge cover problem.
See Figure~\ref{fig:edge_cover_construction} for an illustration of this reduction.
 \end{proof}

Algorithm~\ref{algo:stack_rounding} summarizes the randomized algorithm that we analyze in the following.
It rounds a given vector $x\in P^{\mathrm{min}}_{\mathrm{bundle}}(G,L)$ to a feasible CacAP solution.

\smallskip
\begin{algorithm2e}[H]
\begin{enumerate}[label=(\arabic*)]\itemsep2pt
\item For each $i\in[q]$, apply Corollary~\ref{cor:decompose_convex_combination} to write $x|_{L_i} = \sum_{j=1}^h p_j \cdot \chi^{F_i^j}$ as a convex combination of $L_{\mathrm{cross}}$-minimal solutions $F_i^1, \dots, F_i^h$ for $G_i$.
\item For each $i\in[q]$ independently, choose one of the solutions $F_i\in \{ F_i^1, \dots, F_i^h\}$ at random such that $F_i^j$ is sampled with probability $p_j$. \label{item:algo_sampling}
\item Apply Lemma~\ref{lem:edge_cover_reduction} to the sets $F^i_{\mathrm{cross}} := F_i \cap L_{\mathrm{cross}}$ to obtain a set $F_{\mathrm{cross}}\subseteq L_{\mathrm{cross}}$.\label{item:compute_removable_set} 
\item Let $F^i_{\mathrm{in}} := F_i \cap L_{\mathrm{in}}$ for all $i\in[q]$ and return $F_{\mathrm{cross}} \cup F^1_{\mathrm{in}} \cup \dots \cup F^q_{\mathrm{in}}$.
\end{enumerate}
\caption{Randomized algorithm to round a vector $x\in P^{\mathrm{min}}_{\mathrm{bundle}}(G,L)$ to a solution of $(G,L)$}\label{algo:stack_rounding}
\end{algorithm2e}
\smallskip

\begin{lemma}
For any vector $x\in P^{\mathrm{min}}_{\mathrm{bundle}}(G,L)$, Algorithm~\ref{algo:stack_rounding} returns a feasible solution to the CacAP instance $(G,L)$.
\end{lemma}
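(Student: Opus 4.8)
The plan is to verify feasibility of the returned set cut‑by‑cut, using the characterization recalled in the text: a link set $F\subseteq L$ is a feasible solution for $(G,L)$ if and only if $F\cap L_i$ is a feasible solution for the principal subcactus $G_i$ for every $i\in[q]$. Accordingly, I would write $S := F_{\mathrm{cross}}\cup F^1_{\mathrm{in}}\cup\dots\cup F^q_{\mathrm{in}}$ for the set returned by the algorithm, fix an index $i\in[q]$, and show that $S\cap L_i$ covers every $2$-cut $C$ with $C\subseteq V_i$.

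The first step is two elementary bookkeeping observations about membership in the sets $L_j$. Since every in-link lies in precisely one $L_j$ and $F^j_{\mathrm{in}}\subseteq F_j\subseteq L_j$, we get $F^j_{\mathrm{in}}\cap L_i=\emptyset$ whenever $j\neq i$, hence
\[ S\cap L_i = F^i_{\mathrm{in}}\cup(F_{\mathrm{cross}}\cap L_i). \]
Second, any link $\ell\in L$ covering a $2$-cut $C\subseteq V_i$ has exactly one endpoint in $C$, which lies in $V_i$, so $\ell\in L_i$; in particular, a cross-link in $F_{\mathrm{cross}}$ covers $C$ if and only if some cross-link in $F_{\mathrm{cross}}\cap L_i$ does.

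Next, by the definition of $P^{\mathrm{min}}_{\mathrm{bundle}}(G,L)$ we have $x|_{L_i}\in P^{\min}(G_i,L_i)$, so Corollary~\ref{cor:decompose_convex_combination} applies in step~(1) and each $F_i^j$ — hence the sampled set $F_i$, which decomposes as $F_i = F^i_{\mathrm{in}}\cup F^i_{\mathrm{cross}}$ since $L=L_{\mathrm{in}}\cup L_{\mathrm{cross}}$ — is a feasible solution for $G_i$. Now fix a $2$-cut $C$ with $C\subseteq V_i$. Feasibility of $F_i$ for $G_i$ gives $\delta_L(C)\cap(F^i_{\mathrm{in}}\cup F^i_{\mathrm{cross}})\neq\emptyset$. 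If $\delta_L(C)\cap F^i_{\mathrm{in}}\neq\emptyset$, then $C$ is covered by $S\cap L_i$ and we are done. Otherwise $F^i_{\mathrm{cross}}$ covers $C$, and Lemma~\ref{lem:edge_cover_reduction} — applied in step~\ref{item:compute_removable_set} of the algorithm to the sets $F^i_{\mathrm{cross}}$ — guarantees that $F_{\mathrm{cross}}$ also covers $C$; by the second observation above, so does $F_{\mathrm{cross}}\cap L_i$, and hence $S\cap L_i$. Thus $S\cap L_i$ covers every $2$-cut contained in $V_i$, i.e.\ it is a feasible solution for $G_i$, and since $i\in[q]$ was arbitrary, $S$ is a feasible CacAP solution for $(G,L)$.

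There is no genuinely hard step here: the argument is essentially bookkeeping about which links belong to which $L_i$, combined with a direct invocation of Lemma~\ref{lem:edge_cover_reduction}. The only point that deserves care is the identity $S\cap L_i = F^i_{\mathrm{in}}\cup(F_{\mathrm{cross}}\cap L_i)$ together with the observation that a cross-link covering a $2$-cut contained in $V_i$ must itself lie in $L_i$; once these are in place the proof closes immediately.
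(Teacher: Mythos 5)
Your proof is correct and follows essentially the same route as the paper's: feasibility of each sampled $F_i$ for its principal subcactus, combined with the guarantee of Lemma~\ref{lem:edge_cover_reduction} that $F_{\mathrm{cross}}$ covers every $2$-cut of $G_i$ covered by $F^i_{\mathrm{cross}}$. The paper states this in two sentences; you have merely spelled out the cut-by-cut bookkeeping that those two sentences compress.
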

\begin{proof}
Because $F_i$ is a feasible solution of $G_i$ for all $i\in [q]$, the union $F_1 \cup \dots \cup F_q$ is a feasible CacAP solution.
Lemma~\ref{lem:edge_cover_reduction} guarantees that then also the returned link set $F_{\mathrm{cross}} \cup F^1_{\mathrm{in}} \cup \dots \cup F^q_{\mathrm{in}}$ is a feasible solution.
\end{proof}

In our analysis of Step~\ref{item:compute_removable_set} 
we will restrict ourselves to sets $F_{\mathrm{cross}} \subseteq F^1_{\mathrm{cross}} \cup \dots \cup F^q_{\mathrm{cross}}$.
In other words, we analyze how many cross-links we can remove  from $F_1 \cupp \dots \cupp F_q$ while maintaining a feasible solution.
This yields an upper bound on the size of the set $F_{\mathrm{cross}}$ chosen in Step~\ref{item:compute_removable_set} of Algorithm~\ref{algo:stack_rounding}
and hence leads to an upper bound on the cardinality of the solution returned by Algorithm~\ref{algo:stack_rounding}.

Our overall $\apxfac$-approximation algorithm for $k$-wide CacAP will combine Lemma~\ref{lem:cross-link_rounding} and Algorithm~\ref{algo:stack_rounding}.
It starts by computing an optimum solution to $\min\{ 1^Tx : x \in P_{\mathrm{cross}} \cap P^{\mathrm{min}}_{\mathrm{bundle}}(G,L)\}$.
Then it obtains two CacAP solutions by applying Lemma~\ref{lem:cross-link_rounding} and Algorithm~\ref{algo:stack_rounding}.
Finally, it simply returns the better of these two solutions. We recall that we later discuss, in Section~\ref{sec:derandomizing}, a way to derandomize the algorithm.

\begin{figure}[!ht]
\begin{center}
\begin{tikzpicture}[scale=0.5]

\tikzset{
  prefix node name/.style={%
    /tikz/name/.append style={%
      /tikz/alias={#1##1}%
    }%
  }
}

\tikzset{root/.style={fill=white,minimum size=6}}

\tikzset{q1/.style={line width=1pt, dash pattern=on \pgflinewidth off 1pt}}

\tikzset{q2/.style={line width=1.5pt, dash pattern=on 3pt off 2pt on \the\pgflinewidth off 2pt}}

\tikzset{q3/.style={line width=2pt, dash pattern=on 7pt off 2pt}}

\tikzset{q4/.style={line width=2.5pt}}

\tikzset{node/.style={thick,draw=black,fill=white,circle,minimum size=6, inner sep=2pt}}

\newcommand\cac[2][]{
\begin{scope}[prefix node name=#1]

\begin{scope}[every node/.append style=node]
\node (1) at (13,13) {r};
\node (2) at (11.5,11) {};
\node (3) at (15,11) {\small v};
\node (4) at (10,9) {};
\node (5) at (8.5,7) {};
\node (6) at (7,5) {};
\node (7) at (9,12) {};
\node (8) at (12.5,9.5) {};
\node (9) at (11,7.5) {};
\node (10) at (9.5,5.5) {};
\end{scope}

\begin{scope}[very thick, bend left=10]
\draw (1) to (2);
\draw (2) to (1);
\draw (2) to (4);
\draw (4) to (2);
\draw (4) to (5);
\draw (5) to (4);
\draw (5) to (6);
\draw (6) to (5);
\draw (2) to (7);
\draw (7) to (2);
\draw (1) to (3);
\draw (3) to (1);
\draw (2) to (8);
\draw (8) to (2);
\draw (4) to (9);
\draw (9) to (4);
\draw (5) to (10);
\draw (10) to (5);
\end{scope}

\begin{scope}[q4,orange]
\draw (1) -- (7);
\end{scope}

\begin{scope}[q1,green!50!black]
\draw (1) to [bend left] (3);
\draw (3) to [bend left] (8);
\draw (3) to [bend left] (9);
\draw (3) to [bend left] (10);
\draw (3) to [bend left=50] (6);
\end{scope}

\begin{scope}[q3,red]
\draw (4) -- (8);
\draw (5) -- (9);
\draw (6) -- (10);
\end{scope}

\end{scope}
}%

\begin{scope}
\cac[m-]{}

\end{scope}

\begin{scope}[shift={(20,15)}]%
\def\ll{30mm} %
\def\vs{12mm} %

\begin{scope}[yshift=-4cm]
\draw[q1] (0,0) -- +(\ll,0) node[right] {$0.25$}[green!50!black];
\draw[q3,yshift=-\vs] (0,0) -- +(\ll,0) node[right] {$0.75$}[red];
\draw[q4,yshift=-2*\vs] (0,0) -- +(\ll,0) node[right] {$1$}[orange];
\end{scope}

\end{scope}

\end{tikzpicture}
 \end{center}
\caption{Example showing that requiring $L_{\mathrm{cross}}$-minimality strengthens our LP (even for instances of TAP). The set $L$ consists of the links shown in the figure together with all their shadows.
The figure shows a vector $x\in [0,1]^L$ that would be feasible if we dropped the minimality requirement, i.e., we have $x\in P_{\mathrm{cross}}$ and for both principal subcacti $G_i$ we have that $x|_{L_i}$ is a convex combination of solutions for $G_i$. For the left principal subcactus $G_1$ such a convex combination of $x|_{L_1}$ takes the solution consisting of the orange and the red links with weight $\sfrac{3}{4}$ and the solution consisting of the orange and the green links in $L_1$ with weight $\sfrac{1}{4}$.
The latter of these solutions for $G_1$ is not $L_{\mathrm{cross}}$-minimal. For the shown vector $x$ we have $x(L)=4.5$.
\newline
In contrast, any vector $x \in P^{\mathrm{min}}_{\mathrm{bundle}}(G,L)$ fulfills $x(L) = 5 = |\OPT|$. 
To see this, we observe that any $L_{\mathrm{cross}}$-minimal solution of the left principal subcactus $G_1$ contains at most one link incident to $v$.
Moreover, every solution for $G_1$ has cardinality at least $4$ and every solution for $G_1$ that contains a link incident to $v$ has cardinality at least $5$.
Thus, we have $x(L_1) \ge 4 + x(\delta_{L_1}(v))$. Because $x|_{L_2}$ is a convex combination of solutions for the right principal subcactus $G_2$,
we have $x(\delta_L(v)) \ge 1$ and hence $x(L \setminus L_1) \ge x(\delta_{L\setminus L_1}(v)) \ge 1 - x(\delta_{L_1}(v))$.
This implies $x(L) = x(L_1) + x(L \setminus L_1) \ge 5$. }\label{fig:minimality_example}
\end{figure}

We remark that requiring that $x|_{L_i}$ is a convex combination of incidence vectors of $L_{\mathrm{cross}}$-minimal solutions (and not just of any solutions) for $G_i$
strengthens our LP $\min\{ 1^Tx : x \in P_{\mathrm{cross}} \cap P^{\mathrm{min}}_{\mathrm{bundle}}(G,L)\}$, i.e.,
it is not only useful in our analysis but it can lead to a larger LP value as the example in Figure~\ref{fig:minimality_example} shows.

\subsection{Simple stack analysis}\label{sec:simple_stack_problem}

Next, we present a simplified version of our analysis, which we will refine later.
This simplified analysis already leads to an approximation ratio below $1.5$ for connectivity augmentation, and allows us to highlight key ideas behind our approach in a simpler setting.

\subsubsection*{Stacks}
We assign every cross-link $\ell=\{v,w\}$ to two different terminals, corresponding to its two endpoints $v$ and $w$.
Recall that for every vertex $u\in V$, there is at least one terminal $t_u\in T$ that is a descendant of $u$.
We assign the link $\ell=\{v,w\}$ to the terminals $t_v$ and $t_w$.
If these terminals are not unique, we make an arbitrary choice.
For a terminal $t\in T$, we call the set $\Sscr_t$ of links assigned to it a \emph{stack}.

\begin{figure}[!ht]
\begin{center}
\begin{tikzpicture}[scale=0.5]

\pgfdeclarelayer{bg}
\pgfdeclarelayer{fg}
\pgfsetlayers{bg,main,fg}

\tikzset{
  prefix node name/.style={%
    /tikz/name/.append style={%
      /tikz/alias={#1##1}%
    }%
  }
}

\tikzset{root/.style={fill=white,minimum size=13}}

\tikzset{q1/.style={line width=1pt, dash pattern=on \pgflinewidth off 1pt}}

\tikzset{q2/.style={line width=1.5pt, dash pattern=on 3pt off 2pt on \the\pgflinewidth off 2pt}}

\tikzset{q3/.style={line width=2pt, dash pattern=on 7pt off 2pt}}

\tikzset{q4/.style={line width=2.5pt}}

\tikzset{
term/.style={fill=black!20, rectangle, minimum size=10},
termg/.style={fill=black!20, rectangle, minimum size=10},
tf/.append style={font=\scriptsize\color{black}},
}

\newcommand\cac[2][]{

\begin{scope}[prefix node name=#1]

\tikzset{npc/.style={#2}}

\begin{scope}[every node/.append style={thick,draw=black,fill=white,circle,minimum size=12, inner sep=2pt}]
\node[root]  (1) at (18.68,-1.64) {r};
\node  (2) at (16.80,-5.50) {};
\node  (3) at (19.08,-5.16) {};
\node[term]  (4) at (16.66,-9.60) {};
\node  (5) at (18.56,-7.16) {};
\node[term]  (6) at (18.24,-8.74) {};
\node  (7) at (20.44,-6.76) {};
\node[term]  (8) at (22.28,-5.68) {};
\node[term]  (9) at (21.00,-3.58) {};
\node[term] (10) at (20.40,-9.36) {};
\node[term] (11) at (22.36,-9.32) {};

\begin{scope}[npc]
\node (12) at (12.24,-4.60) {};
\node (13) at (14.20,-4.64) {};
\node[termg] (14) at (11.24,-6.32) {};
\node[termg] (15) at (12.68,-7.72) {};
\node[termg] (16) at (14.80,-7.42) {};
\node (17) at (24.88,-4.72) {};
\node[termg] (18) at (27.60,-4.64) {};
\node[termg] (19) at (24.44,-7.64) {};
\node[termg] (20) at (26.80,-7.76) {};
\node (21) at (8.04,-4.56) {};
\node[termg] (22) at (6.16,-5.36) {};
\node[termg] (23) at (6.08,-7.28) {};
\node[termg] (24) at (8.00,-6.56) {};
\end{scope}
\end{scope}

\begin{scope}[every node/.append style={font=\scriptsize}]

\foreach \i/\t in {2/,3/,4/tf,5/,6/tf,7/,8/tf,9/tf,10/tf,11/tf,12/,13/,14/tf,15/tf,16/tf,17/,18/tf,19/tf,20/tf,21/,22/tf,23/tf,24/tf} {
\pgfmathparse{int(\i-1)}
\node[\t] at (\i) {$\pgfmathresult$};
}
\end{scope}

\begin{scope}[very thick]

\draw  (1) --  (2);
\draw  (2) --  (3);
\draw  (3) --  (1);
\draw  (9) --  (3);
\draw  (2) to[bend left=15] (4);
\draw  (4) to[bend left=15] (2);
\draw  (3) to[bend left=25] (5);
\draw  (5) to[bend left=25] (3);
\draw  (5) to[bend left=25] (6);
\draw  (6) to[bend left=25] (5);
\draw  (3) --  (7);
\draw  (7) --  (8);
\draw  (8) --  (9);
\draw  (7) -- (10);
\draw (10) -- (11);
\draw (11) --  (7);

\begin{scope}[npc]
\draw  (1) to[bend left=2] (21);
\draw  (1) -- (12);
\draw  (1) -- (17);
\draw (21) to[bend left=2]  (1);
\draw (18) --  (1);
\draw (13) --  (1);
\draw (12) to[bend left=15] (14);
\draw (14) to[bend left=15] (12);
\draw (16) -- (13);
\draw (12) -- (13);
\draw (13) -- (15);
\draw (15) -- (16);
\draw (17) -- (18);
\draw (17) to[bend left=13] (19);
\draw (19) to[bend left=13] (17);
\draw (17) to[bend left=13] (20);
\draw (20) to[bend left=13] (17);
\draw (24) -- (21);
\draw (21) -- (22);
\draw (22) -- (23);
\draw (23) -- (24);
\end{scope}

\end{scope}

\end{scope}

}%

\begin{scope}

\tikzset{
grayout/.style={
every node/.append style={draw=black!30}, draw=black!30
},
inlink/.style={q1,blue},
crlink/.style={q1,red}
}

\begin{scope}[xshift=-1cm]
\cac[]{}
\end{scope}

\begin{scope}

\begin{scope}[red]

\begin{scope}[q1]
\draw (2) to (14);
\draw (3) to (13);
\draw (4) to (19);
\draw (4) to[bend left=20] (21);
\draw (5) to (16);
\draw (6) to (15);
\draw (7) to (14);
\draw (8) to (19);
\draw (9) to (13);
\draw (9) to (17);
\draw (9) to (18);
\draw (14) to (21);
\end{scope}

\begin{scope}[q2]
\draw (10) to[bend left=25] (23);
\end{scope}

\end{scope}

\end{scope}
\end{scope}%

\begin{pgfonlayer}{bg}

\def\d{6mm}

\begin{scope}[fill=green!10,draw=darkgreen]

\newcommand\blob[2]{
\pgfmathanglebetweenpoints%
{\pgfpointanchor{#1}{center}}%
{\pgfpointanchor{#2}{center}}

\edef\angle{\pgfmathresult}

\filldraw ($(#1)+(\angle+90:\d)$) arc (\angle+90:\angle+270:\d) --
($(#2)+(\angle+270:\d)$) arc (\angle+270:\angle+450:\d) -- cycle;
}

\blob{6}{3}
\blob{9}{9}
\blob{13}{15}
\blob{17}{19}

\end{scope}
\end{pgfonlayer}

\end{tikzpicture}
 \end{center}
\caption{Four different stacks are highlighted in green: $\Sscr_{14}, \Sscr_5, \Sscr_8$, and $\Sscr_{18}$. For example, stack $\Sscr_5$ contains the links $\{14,5\}$, $\{15,4\}$, and $\{12,2\}$. The total preorder $\preceq_{5}$ on the links in stack $\mathcal{S}_5$ fulfills the following relation for this example: $\{14,5\} \preceq_{5}\{15,4\} \preceq_{5} \{12,2\}$. Notice that in stack $\Sscr_{14}$ there exist distinct links $\{12,8\}$ and $\{12,2\}$ for which both $\{12,8\}\preceq_{14} \{12,2\}$ and $\{12,2\}\preceq_{14} \{12,8\}$ hold. Moreover, they are both above $\{14,5\}$.}\label{fig:example_stacks_and_order}
\end{figure}

For a stack $\Sscr_t$ we define a total preorder $\preceq_t$ on $\Sscr_t$ as follows.
(Recall that preorder means that $\preceq_t$ is transitive and each pair of links is comparable, but there might be distinct links $\ell,\ell'\in \Sscr_t$ for which both $\ell \preceq_t \ell'$ and $\ell' \preceq_t \ell$.)
First, we observe that there is a natural order on the ancestors of the terminal $t$.
The ancestors of $t$ appear in the same order on every $t$-$r$ path in $G$.
Defining $u\preceq v$ if and only if $v$ is an ancestor of $u$ yields  a total order on the ancestors of $t$.
This imposes a natural total preorder on the links in the stack $S_t$.
For cross-links $\ell_u, \ell_v \in \Sscr_t$ incident to the ancestors $u$ and $v$ of terminal $t$, respectively, 
we define $\ell_u \preceq_t \ell_v$ if and only if $u\preceq v$, i.e., if and only if $v$ is an ancestor of $u$.
See Figure~\ref{fig:example_stacks_and_order} for an illustration of stacks and the total preorder that we associate with links contained in a stack.
If $\ell_u \preceq_t \ell_v$, we also say that $\ell_u$ is \emph{below} $\ell_v$  and $\ell_v$ is \emph{above} $\ell_u$ in the stack $\mathcal{S}_t$.

\subsubsection*{Sampling and domination}

We now consider Algorithm~\ref{algo:stack_rounding}.
Recall that for every principle subcactus $G_i$ of $G$, we sample a feasible solution $F_i$ for $G_i$.
Because $F_i$ is $L_{\mathrm{cross}}$-minimal,  by Lemma~\ref{lem:simple_observation_cross_link_domination} it contains at most one link from every stack $\Sscr_t$ with $t\in T$.
We highlight that this holds not only for the stacks of terminals $t\in V_i \cap T$, but also for all other stacks.
This will become important in our analysis later on.

To lower bound the gain achieved by Algorithm~\ref{algo:stack_rounding} when reoptimizing cross-links in Step~\ref{item:compute_removable_set}, we define what we call \emph{removable} link sets, which have the property that the current solution $F_1 \cupp \dots \cupp F_q$ remains feasible even after removing such a link set.
To this end we first introduce the notion of domination, which describes situations where one single link can be removed due to another one.
\begin{definition}[domination]
Let $i,j\in [q]$ with $i\neq j$, and let $\ell_i\in F_i$ and $\ell_j\in F_j$.
We say that $\ell_i$ dominates $\ell_j$ if there is a terminal $t\in T\cap V_j$ satisfying
\begin{itemize}
    \item $\ell_i,\ell_j \in \Sscr_t$, and
    \item $\ell_i \preceq_t \ell_j$.
\end{itemize}
\end{definition}

\begin{definition}
A set $R \subseteq\ \bigcupp_{h=1}^q F^h_{\mathrm{cross}}$ is called \emph{removable} if
for every link $\ell \in R$ there is a link $\bar \ell \in  \left( \bigcupp_{h=1}^q F^h_{\mathrm{cross}}\right)\setminus R$ that dominates $\ell$.
\end{definition}

The next lemma shows that for any set $R$ of removable links, the expression $\sum_{h=1}^q |F^i_{\mathrm{cross}}| -|R|$ is an upper bound on the cardinality of the link set $F_{\mathrm{cross}}$ chosen in Step~\ref{item:compute_removable_set}
of Algorithm~\ref{algo:stack_rounding}.
It also implies that removing $R$ from the disjoint union $\cupp_{h=1}^q F_h$ maintains a feasible CacAP solution.

\begin{lemma}\label{lem:removing_maintains_feasible_solution}
Let $R \subseteq\ \bigcupp_{h=1}^q F^h_{\mathrm{cross}}$ be a removable set of cross-links.
Then $\bigcupp_{h=1}^q F^h_{\mathrm{cross}} \setminus R$ covers every 2-cut $C\in \mathcal{C}_{G_i}$ that is covered by $F^i_{\mathrm{cross}}$.
\end{lemma}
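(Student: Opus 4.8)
The plan is to show that for every $i\in[q]$ and every $2$-cut $C\in\mathcal{C}_{G_i}$ covered by $F^i_{\mathrm{cross}}$, some link in $\bigcupp_{h=1}^q F^h_{\mathrm{cross}}\setminus R$ still covers $C$. So fix such an $i$ and $C$, and pick a link $\ell=\{v,w\}\in F^i_{\mathrm{cross}}$ covering $C$, say with $v\in V_i$; since $\ell$ is a cross-link its other endpoint $w$ lies outside $V_i$, hence outside $C$, so $v\in C$. By Lemma~\ref{lem:terminal_descendant_exists} there is a terminal $t\in V_i\cap T$ that is a descendant of $v$, and by the definition of stacks we may take $\ell\in\Sscr_t$ (this is the terminal $t_v$ to which $\ell$ was assigned, or at worst we reassign the argument to whichever descendant-terminal of $v$ witnesses membership in $C$; the key point, established via Lemma~\ref{lem:descendants_define_2_cut} and $2$-edge-connectivity as in the proof of Lemma~\ref{lem:edge_cover_reduction}, is that $t\in C$ as well, since the two edge-disjoint $t$--$r$ paths both pass through $v\in C$).

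Now I would consider the set of all links in $\bigcupp_{h=1}^q F^h_{\mathrm{cross}}$ that lie in the stack $\Sscr_t$ and are $\preceq_t$-below-or-equal $\ell$; equivalently, links whose endpoint on the ancestor-chain of $t$ is a descendant of the corresponding endpoint of $\ell$. By Lemma~\ref{lem:simple_observation_cross_link_domination}, every such link covers $C$ (being "lower" than a covering link, its endpoint is still in $C$). Among all links of $\bigcupp_{h=1}^q F^h_{\mathrm{cross}}$ in $\Sscr_t$ that are $\preceq_t\ell$, take a $\preceq_t$-minimal one, call it $\ell^*$; it covers $C$. The claim is that $\ell^*\notin R$. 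Indeed, if $\ell^*\in R$, then since $R$ is removable there is $\bar\ell\in\bigl(\bigcupp_{h=1}^q F^h_{\mathrm{cross}}\bigr)\setminus R$ dominating $\ell^*$, which by the definition of domination means there is a terminal $t'\in V_{j}\cap T$ (where $\ell^*\in F^{j}_{\mathrm{cross}}$) with $\bar\ell,\ell^*\in\Sscr_{t'}$ and $\bar\ell\preceq_{t'}\ell^*$. The subtlety is that $t'$ need not be $t$; so I need a short argument that $\bar\ell$ nonetheless covers $C$ and is $\preceq_t$-comparable to $\ell^*$ in $\Sscr_t$, so as to contradict minimality of $\ell^*$.

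This is where the main obstacle lies, and I expect it to be handled exactly as in \cite{grandoni_2018_improved}: the relevant endpoint of $\ell^*$ that sits in $C$ is its endpoint in $V_{j}$ — since $\ell^*$ is a cross-link covering $C\subseteq V_i$ only via an endpoint in $V_i$... wait, one must be careful here, because if $i\neq j$ then $C\subseteq V_i$ and $\ell^*$'s endpoint in $C$ is in $V_i$, not $V_j$; but domination of $\ell^*$ involves $\ell^*$'s endpoint in $V_j$. The resolution is that $\ell^*$ was chosen minimal among stack-$\Sscr_t$ links, so in particular $\ell^*\in\Sscr_t$, meaning $t$ (a descendant of $\ell^*$'s $V_i$-endpoint, lying in $V_i$) and also $\ell^*$ is assigned to $t$ because $t=t_v$ for $v$ the $V_i$-endpoint of $\ell^*$; then $j$ must equal $i$ (since $\ell^*\in F^j_{\mathrm{cross}}$ and its $V_i$-endpoint being a descendant of a $V_i$ vertex forces consistency) — actually the cleanest route is: the definition of domination only requires $t'\in V_j\cap T$ with $\ell^*\in\Sscr_{t'}$, and since $\ell^*\in\Sscr_t$ exactly for its two assigned terminals, $t'$ is one of those two; if $t'=t$ then $\bar\ell\preceq_t\ell^*\preceq_t\ell$ so $\bar\ell$ is a stack-$\Sscr_t$ link below $\ell$ and hence covers $C$ by Lemma~\ref{lem:simple_observation_cross_link_domination}, and $\bar\ell\preceq_t\ell^*$ contradicts minimality of $\ell^*$; if $t'$ is the \emph{other} terminal to which $\ell^*$ is assigned (the one on $\ell^*$'s $V_i$-side), then $t'\in V_i$, $j=i$, and $\bar\ell\preceq_{t'}\ell^*$ with $t'$ a descendant of $\ell^*$'s endpoint in $C$ gives, again by Lemma~\ref{lem:simple_observation_cross_link_domination}, that $\bar\ell$ covers $C$, and one checks $\bar\ell\in\Sscr_t$ with $\bar\ell\preceq_t\ell^*$, contradicting minimality. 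Either way we reach a contradiction, so $\ell^*\notin R$, and $\ell^*$ is the desired covering link in $\bigcupp_{h=1}^q F^h_{\mathrm{cross}}\setminus R$. I would write this up by first isolating the "domination preserves covering of $C$" observation as the crux and then doing the minimal-element argument, keeping the two-terminal bookkeeping explicit but brief.
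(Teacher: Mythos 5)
There is a genuine gap in your minimal-element argument, and the detour itself is unnecessary. The paper's proof is a single step: let $\ell\in F^i_{\mathrm{cross}}$ cover $C$; if $\ell\in R$, removability gives a dominating link $\bar\ell\notin R$, and — this is the point you miss — the definition of domination applied to the copy of $\ell$ lying in $F_i$ \emph{forces} the witness terminal $t$ to lie in $T\cap V_i$ (domination of a link of $F_j$ is always witnessed on the $V_j$-side). Hence $\bar\ell\in\Sscr_t$ with $t\in V_i$ and $\bar\ell\preceq_t\ell$, so $\bar\ell$'s endpoint in $V_i$ is a descendant of $\ell$'s endpoint in $C$, and Lemma~\ref{lem:simple_observation_cross_link_domination} immediately gives that $\bar\ell$ covers $C$. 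No iteration or minimality is needed, because the dominating link is by definition outside $R$.

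Your claim that the $\preceq_t$-minimal link $\ell^*$ of $\bigcupp_h F^h_{\mathrm{cross}}$ in $\Sscr_t$ below $\ell$ satisfies $\ell^*\notin R$ is not established, and is in fact false in general. The copy $\ell^*$ may belong to $F^m_{\mathrm{cross}}$ for the \emph{other} principal subcactus $m\neq i$ containing an endpoint of $\ell^*$; then the domination guaranteed by removability is witnessed by a terminal $t''\in T\cap V_m$, i.e., on the stack $\Sscr_{t''}$ attached to $\ell^*$'s endpoint in $V_m$. The resulting dominator need not lie in $\Sscr_t$, need not be $\preceq_t\ell^*$, and need not cover $C\subseteq V_i$, so no contradiction with minimality arises. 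Your case analysis tries to handle this but mislabels the second assigned terminal of $\ell^*$ as being ``on $\ell^*$'s $V_i$-side'': a cross-link is assigned to one terminal per endpoint, and these lie in \emph{different} principal subcacti, so the other terminal is precisely the one in $V_m$, which is the problematic case. The fix is simply to abandon the minimal element and argue directly about $\ell$ as above.
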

\begin{proof}
Let $C\in \mathcal{C}_{G_i}$ be a 2-cut of $G_i$ for some $i\in [q]$ such that
$F^i_{\mathrm{cross}}$ contains a link $\ell$ covering $C$.
If $\ell \in R$, there must be a link $\bar \ell \in\ \bigcupp_{h=1}^q F_h \setminus R $ that dominates $\ell$.
By the definition of domination and by Lemma~\ref{lem:simple_observation_cross_link_domination}, the link $\bar \ell$ covers the cut $C$.
\end{proof}

In the following we will give a lower bound on the expected maximum cardinality of  a removable set of cross links.

Consider a cross-link $\ell \in F_i$. 
Then $\ell$ is contained in two stacks $\Sscr_{t}$ and $\Sscr_{t'}$, and precisely one of the two terminals $t$ and $ t'$ is contained in $V_i$. 
Without loss of generality, assume that $t\in V_i$ and $ t' \notin V_i$.
Then we say that \emph{the terminal $t$ sampled the cross-link $\ell$}.
Note that every terminal samples at most one cross-link because $F_i$ contains at most one link from every stack.
Moreover, the sampling of the cross-links by terminals in the same principal subcactus is correlated, while the sampling of cross-links by terminals in different principal subcacti is independent.
We emphasize that for the notion of domination between sampled cross-links, it is important which terminals sampled the links.

To analyze the sampling process we consider the more abstract stack problem described below.
We call this problem the \emph{simple stack problem} because we will later describe a refined version.

\subsubsection*{Simple stack problem}

An instance of the simple stack problem consists of
\begin{itemize}
\item an undirected graph $H$ with vertex set $T$,
\item a total preorder $\preceq_t$ on the stack $S_t$, which is the set of edges incident to $t\in T$,
\item a coloring $c : T \to [q]$ such that for every link $\ell$ in $H$, the endpoints of $\ell$ have distinct colors, and
\item non-negative edge weights $x$ satisfying $x(S_t) \le 1$ for every stack $S_t$.
\end{itemize}
The edges of $H$ will correspond to cross-links and we will therefore refer to the edges of $H$ as \emph{links}. 
If $\ell \preceq_t \ell'$ for some $\ell,\ell' \in S_t$, we say that $\ell$ is \emph{below} $\ell'$ on the stack $S_t$ and that $\ell'$ is \emph{above} $\ell$ on $S_t$.
The colors of the terminals $t\in T$ correspond to the principal subcacti of $G$, i.e., a terminal $t\in V_i$ will have color $i$.\footnote{In case the center of the $k$-wide instance is also a terminal, one can assign to it an arbitrary color. The color of the center is irrelevant because the center will have an empty stack.}

For an instance of the simple stack problem, we consider a sampling procedure, where every terminal samples at most one link from its stack (but several links of the same stack can be sampled by different terminals). The sampling procedure has the following properties:
\begin{enumerate}
    \item Each terminal samples at most one link, where link $\ell\in S_t$ is sampled with probability $x_\ell$;\label{item:sampling_probability_link}
    \item if $\{t,v\}$ and $\{t',v'\}$ are sampled by two terminals $t$ and $t'$, respectively, with $t$ and $t'$ being of the same color, then $v\ne v'$;
    \label{item:correlation_of_colors}
    \item the sampling of links by terminals of different colors is independent. 
    \label{item:different_corlors_independent}
\end{enumerate}
We emphasize that the sampling of links by different terminals of the same color do not need to be independent.
(Some correlation might even be necessary for property~\ref{item:correlation_of_colors} to hold.)

We define a (random) directed graph $D=(T,F)$ where $T$ is the set of terminals and $F$ is defined as follows.
For each terminal $t$ and link $\{t,u\}$ sampled by $t$, the set $F$ contains an arc $(u,t)$.
Notice that every vertex in $D$ has at most one incoming link by~\ref{item:sampling_probability_link}.
Let $(u,v),(v,w)$ be two links in $F$. We say that $(v,w)$ \emph{dominates} $(u,v)$ if and only if $\{v,w\}$ is below $\{u,v\}$ in the stack $S_v$.
We also call the graph $D=(T,F)$ the \emph{domination graph} even though, as just explained, to know whether a link leaving a vertex dominates one that enters it, we also need to know their order on the stack corresponding to the vertex.

See Figure~\ref{fig:example_H_and_D} for an example of the graphs $H$ and $D$. (See second and third graph of the figure; the first graph highlights the link from CAP to the stack problem, which we will formally discuss later on.)

\begin{figure}[!ht]
\begin{center}
\begin{tikzpicture}[scale=0.5]

\pgfdeclarelayer{bg}
\pgfdeclarelayer{fg}
\pgfsetlayers{bg,main,fg}

\tikzset{
  prefix node name/.style={%
    /tikz/name/.append style={%
      /tikz/alias={#1##1}%
    }%
  }
}

\tikzset{root/.style={fill=white,minimum size=13}}

\tikzset{q1/.style={line width=1pt, dash pattern=on \pgflinewidth off 1pt}}

\tikzset{q2/.style={line width=1.5pt, dash pattern=on 3pt off 2pt on \the\pgflinewidth off 2pt}}

\tikzset{q3/.style={line width=2pt, dash pattern=on 7pt off 2pt}}

\tikzset{q4/.style={line width=2.5pt}}

\tikzset{s/.style={stealth-, line width=1pt, red}}

\tikzset{
term1/.style={fill=blue!20, rectangle, minimum size=10},
term2/.style={fill=orange!20, rectangle, minimum size=10},
term3/.style={fill=violet!20, rectangle, minimum size=10},
term4/.style={fill=yellow!20, rectangle, minimum size=10},
tf/.append style={font=\scriptsize\color{black}},
}

\newcommand\blob[2]{
\pgfmathanglebetweenpoints%
{\pgfpointanchor{#1}{center}}%
{\pgfpointanchor{#2}{center}}

\edef\angle{\pgfmathresult}

\filldraw ($(#1)+(\angle+90:\d)$) arc (\angle+90:\angle+270:\d) --
($(#2)+(\angle+270:\d)$) arc (\angle+270:\angle+450:\d) -- cycle;
}

\newcommand\cac[2][]{

\begin{scope}[prefix node name=#1]

\tikzset{npc/.style={#2}}

\begin{scope}[every node/.append style={thick,draw=black,fill=white,circle,minimum size=12, inner sep=2pt}]

\begin{scope} %
\node[term1] (22) at (6.16,-5.36) {};
\node[term1] (23) at (6.08,-7.28) {};
\node[term1] (24) at (8.00,-6.56) {};
\end{scope}

\begin{scope} %
\node[term2] (14) at (11.24,-6.32) {};
\node[term2] (15) at (12.68,-7.72) {};
\node[term2] (16) at (14.80,-7.42) {};
\end{scope}

\begin{scope} %
\node[term3]  (4) at (16.66,-9.60) {};
\node[term3]  (6) at (18.24,-8.74) {};
\node[term3]  (8) at (22.28,-5.68) {};
\node[term3]  (9) at (21.00,-3.58) {};
\node[term3] (10) at (20.40,-9.36) {};
\node[term3] (11) at (22.36,-9.32) {};
\end{scope}

\begin{scope} %
\node[term4] (18) at (27.60,-4.64) {};
\node[term4] (19) at (24.44,-7.64) {};
\node[term4] (20) at (26.80,-7.76) {};
\end{scope}

\node[root]  (1) at (18.68,-1.64) {r};
\node  (2) at (16.80,-5.50) {};
\node  (3) at (19.08,-5.16) {};
\node  (5) at (18.56,-7.16) {};
\node  (7) at (20.44,-6.76) {};
\node (12) at (12.24,-4.60) {};
\node (13) at (14.20,-4.64) {};
\node (17) at (24.88,-4.72) {};
\node (21) at (8.04,-4.56) {};

\end{scope}%

\begin{scope}[every node/.append style={font=\scriptsize}]

\foreach \i/\t in {2/,3/,4/tf,5/,6/tf,7/,8/tf,9/tf,10/tf,11/tf,12/,13/,14/tf,15/tf,16/tf,17/,18/tf,19/tf,20/tf,21/,22/tf,23/tf,24/tf} {
\pgfmathparse{int(\i-1)}
\node[\t] at (\i) {$\pgfmathresult$};
}
\end{scope}

\begin{scope}[very thick]

\draw  (1) --  (2);
\draw  (2) --  (3);
\draw  (3) --  (1);
\draw  (9) --  (3);
\draw  (2) to[bend left=15] (4);
\draw  (4) to[bend left=15] (2);
\draw  (3) to[bend left=25] (5);
\draw  (5) to[bend left=25] (3);
\draw  (5) to[bend left=25] (6);
\draw  (6) to[bend left=25] (5);
\draw  (3) --  (7);
\draw  (7) --  (8);
\draw  (8) --  (9);
\draw  (7) -- (10);
\draw (10) -- (11);
\draw (11) --  (7);

\draw  (1) to[bend left=2] (21);
\draw  (1) -- (12);
\draw  (1) -- (17);
\draw (21) to[bend left=2]  (1);
\draw (18) --  (1);
\draw (13) --  (1);
\draw (12) to[bend left=15] (14);
\draw (14) to[bend left=15] (12);
\draw (16) -- (13);
\draw (12) -- (13);
\draw (13) -- (15);
\draw (15) -- (16);
\draw (17) -- (18);
\draw (17) to[bend left=13] (19);
\draw (19) to[bend left=13] (17);
\draw (17) to[bend left=13] (20);
\draw (20) to[bend left=13] (17);
\draw (24) -- (21);
\draw (21) -- (22);
\draw (22) -- (23);
\draw (23) -- (24);

\end{scope}%

\begin{scope}

\begin{scope}[red]

\begin{scope}[q1]
\draw (2) to (14);
\draw (3) to (13);
\draw (4) to (19);
\draw (4) to[bend left=20] (21);
\draw (5) to (16);
\draw (6) to (15);
\draw (7) to (14);
\draw (8) to (19);
\draw (9) to (13);
\draw (9) to (17);
\draw (9) to (18);
\draw (14) to (21);
\end{scope}

\begin{scope}[q2]
\draw (10) to[bend left=25] (23);
\end{scope}

\end{scope}%
\end{scope}%

\begin{pgfonlayer}{bg}

\def\d{6mm}

\begin{scope}[fill=green!10,draw=darkgreen]

\blob{2}{4}
\blob{6}{3}
\blob{7}{10}
\blob{11}{11}
\blob{8}{8}
\blob{9}{9}
\blob{17}{19}
\blob{18}{18}
\blob{20}{20}
\blob{12}{14}
\blob{13}{15}
\blob{16}{16}
\blob{21}{24}
\blob{22}{22}
\blob{23}{23}

\end{scope}
\end{pgfonlayer}

\end{scope}%

}%

\newcommand\h[2][]{

\begin{scope}[prefix node name=#1]

\tikzset{npc/.style={#2}}

\begin{scope}[every node/.append style={thick,draw=black,fill=white,circle,minimum size=12, inner sep=2pt}]

\begin{scope} %
\node[term1] (22) at (6.16,-5.36) {};
\node[term1] (23) at (6.08,-7.28) {};
\node[term1] (24) at (8.00,-6.56) {};
\end{scope}

\begin{scope} %
\node[term2] (14) at (11.24,-6.32) {};
\node[term2] (15) at (12.68,-7.72) {};
\node[term2] (16) at (14.80,-7.42) {};
\end{scope}

\begin{scope} %
\node[term3]  (4) at (16.66,-9.60) {};
\node[term3]  (6) at (18.24,-8.74) {};
\node[term3]  (8) at (22.28,-5.68) {};
\node[term3]  (9) at (21.00,-3.58) {};
\node[term3] (10) at (20.40,-9.36) {};
\node[term3] (11) at (22.36,-9.32) {};
\end{scope}

\begin{scope} %
\node[term4] (18) at (27.60,-4.64) {};
\node[term4] (19) at (24.44,-7.64) {};
\node[term4] (20) at (26.80,-7.76) {};
\end{scope}

\end{scope}%

\begin{scope}[every node/.append style={font=\scriptsize}]

\foreach \i/\t in {2/,3/,4/tf,5/,6/tf,7/,8/tf,9/tf,10/tf,11/tf,12/,13/,14/tf,15/tf,16/tf,17/,18/tf,19/tf,20/tf,21/,22/tf,23/tf,24/tf} {
\pgfmathparse{int(\i-1)}
\node[\t] at (\i) {$\pgfmathresult$};
}
\end{scope}

\begin{scope}

\begin{scope}[q1]

\draw (4) to [bend left=40](14);
\draw (4) to (19);
\draw (4) to [bend left](24);
\draw (6) to [bend left=5] (15);
\draw (6) to [bend right=5] (15);
\draw (6) to [bend right](16);
\draw (8) to (19);
\draw (9) to [bend right](15);
\draw (9) to (18);
\draw (9) to [bend left](19);
\draw (14) to (24);

\end{scope}

\begin{scope}[q2]

\draw (10) to [bend left](23);

\end{scope}

\end{scope}

\end{scope}%

}%

\newcommand\D[2][]{

\begin{scope}[prefix node name=#1]

\tikzset{npc/.style={#2}}

\begin{scope}[every node/.append style={thick,draw=black,fill=white,circle,minimum size=12, inner sep=2pt}]

\begin{scope} %
\node[term1] (22) at (6.16,-5.36) {};
\node[term1] (23) at (6.08,-7.28) {};
\node[term1] (24) at (8.00,-6.56) {};
\end{scope}

\begin{scope} %
\node[term2] (14) at (11.24,-6.32) {};
\node[term2] (15) at (12.68,-7.72) {};
\node[term2] (16) at (14.80,-7.42) {};
\end{scope}

\begin{scope} %
\node[term3]  (4) at (16.66,-9.60) {};
\node[term3]  (6) at (18.24,-8.74) {};
\node[term3]  (8) at (22.28,-5.68) {};
\node[term3]  (9) at (21.00,-3.58) {};
\node[term3] (10) at (20.40,-9.36) {};
\node[term3] (11) at (22.36,-9.32) {};
\end{scope}

\begin{scope} %
\node[term4] (18) at (27.60,-4.64) {};
\node[term4] (19) at (24.44,-7.64) {};
\node[term4] (20) at (26.80,-7.76) {};
\end{scope}

\end{scope}%

\begin{scope}[every node/.append style={font=\scriptsize}]

\foreach \i/\t in {2/,3/,4/tf,5/,6/tf,7/,8/tf,9/tf,10/tf,11/tf,12/,13/,14/tf,15/tf,16/tf,17/,18/tf,19/tf,20/tf,21/,22/tf,23/tf,24/tf} {
\pgfmathparse{int(\i-1)}
\node[\t] at (\i) {$\pgfmathresult$};
}
\end{scope}

\begin{scope}

\begin{scope}[s]

\draw (9) to [bend left](19);
\draw (19) to [bend left=40](9);
\draw (4) to [out=170,in=-80](14);
\draw (15) to [bend left](9);
\draw (6) to [bend left=5] (15);
\draw (24) to [bend right](4);
\draw (10) to [bend left](23);
\draw (14) to (24);

\end{scope}

\end{scope}

\end{scope}%

}%

\begin{scope}

\begin{scope}[shift={(0,20)}]
\cac[]{}
\end{scope}

\begin{scope}[shift={(0,10)}]
\h[]{}
\end{scope}

\begin{scope}[shift={(0,0)}]
\D[]{}
\end{scope}

\end{scope}

\begin{scope}[shift={(30,18.5)}]%
\def\ll{30mm} %
\def\vs{40mm} %

\begin{scope}[scale=0.5]
\draw[q1,black] (0,0) -- +(\ll,0) node[right, align=left,yshift=-1.2ex] {cross-links with\\ $x$-value $0.25$}[red];
\draw[q2,black, yshift=-\vs] (0,0) -- +(\ll,0) node[right, align=left, yshift=-1.2ex] {cross-links with\\ $x$-value $0.5$}[red];
\end{scope}

\begin{scope}[scale=0.5]

\begin{pgfonlayer}{bg}
\def\d{6mm}
\end{pgfonlayer}

\end{scope}

\end{scope}%

\begin{scope}[shift={(30,6)}]%
\def\ll{30mm} %
\def\vs{40mm} %

\begin{scope}
\node (c) at (0,0) {};
\node at (c.west)[right=0pt] {Graph $H$};
\end{scope}

\begin{scope}[scale=0.5]
\draw[q1,black, yshift=-\vs] (0,0) -- +(\ll,0) node[right, align=left, yshift=-1.2ex] {cross-links with\\ $x$-value $0.25$};
\draw[q2,black, yshift=-2*\vs] (0,0) -- +(\ll,0) node[right, align=left, yshift=-1.2ex] {cross-links with\\ $x$-value $0.5$};
\end{scope}

\end{scope}%

\begin{scope}[shift={(30,-4)}]%
\def\ll{30mm} %
\def\vs{40mm} %

\begin{scope}
\node (d) at (0,0) {};
\node at (d.west)[right=0pt] {Graph $D=(T,F)$};
\end{scope}

\begin{scope}[scale=0.5]
\node (e) at (0,-\vs) {};
\node (f) at (\ll,-\vs) {};
\draw[s,red] (f) to (e);
\node at (f.east)[right=0pt] {sampled links};
\end{scope}

\end{scope}%

\end{tikzpicture}
 \end{center}
\caption{The first picture shows an instance in which we partition the set of cross-links into stacks, as highlighted in green. The second picture shows the graph $H$ that we can construct from the first instance. The third picture shows the graph $D$ and a possible solution of sampled cross-links. Notice that in all pictures, terminals in the same principal subcactus have the same color.}\label{fig:example_H_and_D}
\end{figure}

We call a set $R\subseteq F$ \emph{removable} if for every link $\ell \in R$ there exists a link in $F\setminus R$ that dominates $\ell$.
The task in the simple stack problem is to compute a maximum cardinality removable subset of $F$.

We will prove the following lower bound on the expected size of an optimum solution $R$ of the simple stack problem.

\begin{lemma}\label{lem:bound_simple_stack_problem}
For any instance of the simple stack problem and any sampling process fulfilling \ref{item:sampling_probability_link}, \ref{item:correlation_of_colors}, and \ref{item:different_corlors_independent},
we have
\[
 \mathbb{E}[|R|] \ge \frac{1}{3} \cdot \sum_{t\in T} f(x(S_t))\enspace,
\]
where $R$ denotes a maximum cardinality set of removable edges and $f(z) \coloneqq z + e^{-z} -1$.
\end{lemma}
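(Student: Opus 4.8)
\textbf{Plan for the proof of Lemma~\ref{lem:bound_simple_stack_problem}.}
The plan is to argue per stack, obtaining for each terminal $t\in T$ a lower bound of the form $\tfrac{1}{3}f(x(S_t))$ on an appropriately charged contribution to $\mathbb{E}[|R|]$. The guiding idea is the ``$\sfrac13$'' observation already sketched in the overview: given the random domination graph $D=(T,F)$, one can always keep at least a third of the dominated links and still have a removable set. So I would first set up the combinatorial side, then compute the expectation.

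First I would make precise the notion of a \emph{dominated} link in a fixed outcome of the sampling: a link $(u,t)\in F$ (sampled by $t$, incident to ancestor $u$ of $t$) is dominated if $F$ contains some link $(t,w)$ leaving $t$ that lies below $(u,t)$ in the stack $S_t$, i.e.\ $\{t,w\}\preceq_t\{u,t\}$. Every dominated link can individually be removed (the dominating link covers everything in its principal subcactus that the removed link covered); the only subtlety is that we cannot remove all of them simultaneously, because a dominating link may itself be dominated. To handle this I would consider the auxiliary ``domination digraph'' on the set of dominated links, where we put an arc from a dominated link $\ell$ to whichever dominated link dominates it (if there is a choice, pick one). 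Since every vertex of $D$ has at most one incoming link, this auxiliary digraph has out-degree at most one at every dominated link, hence it is a functional graph whose underlying structure is a disjoint union of ``rho-shaped'' components (trees feeding into a single cycle). A standard argument shows that in any such digraph one can select a removable set (in the sense: every selected link has an out-neighbor that is \emph{not} selected) of size at least a third of the number of dominated links --- e.g.\ process each component and greedily keep links so that no two consecutive links along a directed path are both kept, wasting at most a factor $3$. This yields, deterministically in each outcome, $|R|\ge \tfrac13\cdot(\text{number of dominated links})$.

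Next I would lower-bound the expected number of dominated links, decomposed over the terminals $t$ whose outgoing links do the dominating. Fix a terminal $t$ with stack $S_t=\{\ell_1,\dots,\ell_m\}$ ordered so that $\ell_1\preceq_t\cdots\preceq_t\ell_m$ (with ties broken arbitrarily); write $z_j=x_{\ell_j}$ and $z=x(S_t)=\sum_j z_j\le 1$. A link $\ell_j$ contributes a dominated link ``at $t$'' precisely when $\ell_j$ is sampled by the vertex of $\ell_j$ other than $t$ (so it enters $t$ in $D$) \emph{and} some $\ell_i$ with $i\le j$ is sampled by $t$ (so it leaves $t$ below $\ell_j$). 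Here I would use properties~\ref{item:sampling_probability_link}--\ref{item:different_corlors_independent}: the event ``$t$ samples some $\ell_i$ with $i\le j$'' has probability exactly $\sum_{i\le j}z_i$ (each terminal samples at most one link), and this event is independent of whichever principal subcactus contains the other endpoint of $\ell_j$, since that is a different color from $t$; moreover $\ell_j$ enters $t$ with probability $z_j$. Hence the expected number of dominated links charged to $t$ is at least $\sum_{j=1}^m z_j\cdot\sum_{i\le j}z_i$. Wait --- I should be careful that the term $i=j$ forces the \emph{same} link to be sampled by both its endpoints, which is allowed and even desirable (it still corresponds to a dominated link). A short computation gives $\sum_{j}z_j\sum_{i\le j}z_i\ge \tfrac12(z^2+\sum_j z_j^2)\ge\tfrac12 z^2$, but I actually want the sharper bound matching $f(z)=z+e^{-z}-1$. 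The right way is to lower-bound $\sum_j z_j(1-\prod_{i\le j}(1-\cdots))$-type expressions; since $\sum_{i\le j}z_i\ge 1-e^{-\sum_{i\le j}z_i}$ when the partial sums are small, and summing $z_j(1-e^{-s_j})$ over a refining partition $s_j$ of $[0,z]$ is a Riemann lower sum for $\int_0^z(1-e^{-s})\,ds=z+e^{-z}-1=f(z)$, convexity/monotonicity of $1-e^{-s}$ delivers $\sum_j z_j\sum_{i\le j}z_i\ \ge\ f(z)$ (using $z\le 1$). Combining this with the deterministic $\tfrac13$ bound and summing over all $t\in T$ gives $\mathbb{E}[|R|]\ge\tfrac13\sum_{t}f(x(S_t))$, as claimed. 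One bookkeeping point to verify: each dominated link is charged to exactly one terminal (the one whose stack is used for the domination, namely the head of the incoming link and tail of the dominating link), so there is no double counting across the sum over $t$.

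\textbf{Main obstacle.} The genuinely delicate step is the deterministic ``keep a third'' claim together with getting the \emph{exact} constant $f(z)=z+e^{-z}-1$ rather than a weaker quadratic lower bound. The factor-$\sfrac13$ combinatorial lemma is easy in spirit but needs care about directed cycles in the functional domination graph (a pure cycle of length $3k$ still allows a removable set of size $k$, and short cycles of length $2$ are the worst case --- though a length-$2$ cycle here would mean two links mutually dominating, which one must check is consistent and handled correctly). And the passage from the discrete sum $\sum_j z_j\sum_{i\le j}z_i$ to the integral $f(z)$ requires the observation that one may assume each $z_j$ is arbitrarily small (subdividing a link into parallel copies only decreases the sum, or rather one must check the inequality in the direction that makes the bound valid), so that the sum becomes a lower Riemann sum of a concave increasing integrand. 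Everything else --- feasibility of removing a removable set (Lemma~\ref{lem:removing_maintains_feasible_solution}), independence across colors, the at-most-one-incoming-arc property --- is already available from the earlier development.
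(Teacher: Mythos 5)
Your overall architecture matches the paper's: a deterministic claim that at least a third of the dominated links form a removable set, combined with a lower bound of $\sum_{t\in T}f(x(S_t))$ on the expected number of dominated links. Your factor-$\sfrac13$ argument via the functional domination digraph is a valid (if sketchier) alternative to the paper's, which instead takes a \emph{maximal} removable set $R$ and charges each dominated link outside $R$ to a link of $R$ in one of two ways, using that every vertex of $D$ has at most one incoming arc.

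The genuine gap is in the expectation computation, and it stems from inverting the roles of the two endpoints of a link. In the correct picture, the \emph{dominated} link of a stack $S_t$ is the one sampled \emph{by $t$} (the unique incoming arc of $t$ in $D$), and it is dominated by an \emph{outgoing} arc of $t$, i.e.\ a link of $S_t$ sampled by a neighbor, lying \emph{below} it. You state this correctly at first but then reverse it, and as a consequence you compute $\sum_j z_j\sum_{i\le j}z_i$ as "the expected number of dominated links charged to $t$". This identity is false in the needed direction: the true probability that $t$'s incoming link is dominated equals $\sum_{\ell\in S_t}x_\ell\cdot\bigl(1-\prod_{c}(1-\gamma^c(\ell))\bigr)$, where $\gamma^c(\ell)$ is the $x$-weight of links of $B(\ell)$ whose other endpoint has color $c$; here property (ii) gives that the events for neighbors of the \emph{same} color are disjoint (so each color contributes exactly $\gamma^c(\ell)$) and property (iii) gives independence \emph{across} colors. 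By the union bound this probability is at most $\sum_{\ell}x_\ell\, x(B(\ell))=\sum_j z_j\sum_{i\le j}z_i$, so your formula is an \emph{upper} bound, not a lower bound; indeed the quantity you compute (pairs of an outgoing arc together with an incoming arc below it) can exceed $1$ per terminal, whereas the indicator "$t$'s incoming link is dominated" cannot. The chain "$\ge \tfrac12 z^2\ge f(z)$" therefore does not establish the lemma. The correct route — which you mention only in passing and do not carry out — is $1-\prod_c(1-\gamma^c(\ell))\ge 1-e^{-x(B(\ell))}$ followed by the lower Riemann-sum comparison $\sum_\ell x_\ell\bigl(1-e^{-x(B(\ell))}\bigr)\ge\int_0^{x(S_t)}(1-e^{-s})\,ds=f(x(S_t))$, which uses monotonicity of $1-e^{-s}$ and $x(B(\ell_z))\ge z$; this is exactly where the specific function $f$ comes from, and it is the step your writeup leaves unproven.
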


We will apply this to the simple stack problem defined from an instance $(G,L)$ of connectivity augmentation as follows. 
Let $x \in P^{\min}_{\mathrm{bundle}}(G,L)\subseteq [0,1]^L$.
For every cross-link $\ell \in L_{\mathrm{cross}}$ contained in the stacks $\Sscr_{t_1}$ and $\Sscr_{t_2}$, the graph $H$ contains a link $\{t_1,t_2\}$ with $x_{\{t_1,t_2\}} \coloneqq x_{\ell}$.
The order $\preceq_t$ on a stack $S_t$ is the same as the order $\preceq_t$ on the corresponding links in $\mathcal{S}_t \subseteq L_{\mathrm{cross}}$.
Moreover, a terminal $t\in T$ has color $i$ if and only if $t\in V_i$.

We then consider the sampling process that arises from the randomized rounding procedure in Step~\ref{item:algo_sampling} of Algorithm~\ref{algo:stack_rounding}
 (sampling $F_1, \dots F_q$ from $x \in P^{\min}_{\mathrm{bundle}}(G,L)$)
by identifying the links of $H$ with the corresponding cross-links in $L_{\mathrm{cross}}\subseteq L$.
We have already observed that $L_{\mathrm{cross}}$-minimality of the sets $F_i$ implies that every terminal $t$ samples at most one link.
Because every link $\ell$ is sampled with probability $x_{\ell}$, this implies in particular that $x(S_t) \le 1$ for all $t\in T$.
Moreover, also property~\ref{item:correlation_of_colors} follows from the $L_{\mathrm{cross}}$-minimality of the sets $F_i$.
Here, we use that $|F_i \cap \mathcal{S}_t| \le 1$ holds also for all terminals $t\in T$ that are not contained in $V_i$.
Property~\ref{item:different_corlors_independent} holds because the colors correspond to the different principal subcacti $G_i$ and the sampling of the sets $F_i$ for different $i\in [q]$ is independent.

Let $R$ be an optimum solution to the simple stack problem, i.e., $R\subseteq F$ is a maximum cardinality removable set of links in the domination graph $D$.
 Note that the definition of \emph{domination} in the stack problem is such that a set $R\subseteq F$ of links in $D$ is removable if and only if its corresponding subset of $\bigcupp_{i=1}^q F_i$ is removable.
By Lemma~\ref{lem:removing_maintains_feasible_solution}, we have $|F_{\mathrm{cross}}| \le \sum_{h=1}^q |F^i_{\mathrm{cross}}| -|R|$ 
 and hence Algorithm~\ref{algo:stack_rounding} yields a solution with at most $\sum_{i=1}^q |F_i| - |R|$ many links.
Because every in-link is contained in precisely one of the sets $L_i$ and every cross-link is contained in two sets $L_i$, we have
\begin{equation}\label{eq:bound_in_terms_of_R_simple}
\sum_{i=1}^q \mathbb{E}[|F_i|] - \mathbb{E}[|R|] = \sum_{i=1}^q \mathbb{E}[x(L_i)] - \mathbb{E}[|R|] =
 x(L_{\mathrm{in}}) + 2 \cdot x(L_{\mathrm{cross}}) - \mathbb{E}[|R|]\enspace .
\end{equation}

\subsubsection*{Analyzing the simple stack problem}

Next, we prove Lemma~\ref{lem:bound_simple_stack_problem}.
To this end we will show that the expected number of dominated links in $F$ is at least $\sum_{t\in T} f(x(S_t))$ and
 that it is always possible to remove at least a third of the dominated links.

\begin{lemma}\label{lem:delte_a_third}
Let $F$ be a fixed outcome of the sampling in the simple stack problem and let $R \subseteq F$ be a maximum cardinality set of removable links.
Then
\[
|R| \ge \frac{1}{3} \cdot |\{ \ell \in F: \ell \text{ is dominated}\}|\enspace.
\]
\end{lemma}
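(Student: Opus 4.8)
The plan rests on one structural observation about the domination relation: a link $(v,w)\in F$ can only dominate a link of the form $(u,v)$ that \emph{enters} the vertex $v$, and by property~\ref{item:sampling_probability_link} of the sampling process every vertex of the domination graph $D=(T,F)$ has at most one incoming arc. Hence \emph{every link in $F$ dominates at most one other link}. I would record this as the first step. Since arcs of $D$ always join two distinct vertices, no link dominates itself.

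Given this, here is how I would finish. Let $M\subseteq F$ denote the set of dominated links, and for each $\ell\in M$ fix a single dominator $d(\ell)\in F$. The observation above makes the map $d\colon M\to F$ injective. Now form the auxiliary digraph $\Gamma$ on vertex set $F$ with arc set $\{(d(\ell),\ell):\ell\in M\}$. Injectivity of $d$ gives every vertex out-degree at most $1$ in $\Gamma$, and a vertex $\ell$ has an incoming $\Gamma$-arc (namely from $d(\ell)$) precisely when $\ell\in M$, so every vertex has in-degree at most $1$; together with the absence of loops this means $\Gamma$ is a vertex-disjoint union of simple directed paths and simple directed cycles, and $M$ is exactly the set of vertices of $\Gamma$ of positive in-degree. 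I would then pick a removable set $R$ component by component: on a directed path $v_1\to v_2\to\cdots\to v_k$ (which meets $M$ in $v_2,\dots,v_k$) take $R:=\{v_i:2\le i\le k,\ i\text{ even}\}$, of size $\lfloor k/2\rfloor\ge\tfrac12(k-1)$; on a directed cycle $v_1\to v_2\to\cdots\to v_k\to v_1$ (all $k$ vertices in $M$) take $R:=\{v_i:i\text{ odd},\ i\le k-1\}$, of size $\lfloor k/2\rfloor\ge\tfrac13 k$ (with $k\ge2$; triangles are the tight case). In both cases, for each chosen $\ell=v_i\in R$ its unique $\Gamma$-predecessor is exactly $d(\ell)$, which lies in the same component but not in $R$ (for the cycle case one checks separately that $v_k\notin R$, which is why the index bound $i\le k-1$ is imposed). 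Thus $d(\ell)\in F\setminus R$ dominates $\ell$, so the union of these per-component sets is removable. Summing $\tfrac12|M_{\text{path}}|$ over paths and $\tfrac13|M_{\text{cycle}}|$ over cycles gives a removable set of size at least $\tfrac13|M|$, and since a maximum-cardinality removable set $R$ is at least that large, $|R|\ge\tfrac13\,|\{\ell\in F:\ell\text{ dominated}\}|$.

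\textbf{Where the care lies.} There is no serious obstacle, but two points need attention. First, the out-degree bound on domination must be derived precisely from the ``at most one incoming arc'' structure of $D$ — this is the only place the sampling structure enters. Second, odd directed cycles in $\Gamma$ must be handled so that the selected independent set does not wrap around, which is exactly what the restriction $i\le k-1$ achieves; this is the case that makes the factor $\tfrac13$ tight. A minor subtlety worth flagging is that a dominated link may have several dominators; the argument deliberately commits to one dominator $d(\ell)$ so that $d$ is injective, and the feasibility of $R$ only ever invokes that chosen dominator, so the possible presence of other dominators inside $R$ is harmless.
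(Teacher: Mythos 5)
Your proof is correct, but it takes a different route from the paper's. The paper takes an arbitrary inclusionwise \emph{maximal} removable set $R$ and runs a charging argument: for each dominated link $\ell\in F\setminus R$, the fact that $R\cup\{\ell\}$ is not removable forces either that all dominators of $\ell$ lie in $R$, or that some $\bar\ell\in R$ has $\ell$ as its unique dominator outside $R$; since every vertex of $D$ has at most one incoming link, each $\bar\ell\in R$ can be charged at most once by each of these two conditions, giving $|\{\ell \text{ dominated}\}| \le |R| + 2|R|$. You instead build an explicit removable set: you use the same structural fact (at most one incoming link per vertex, hence each link dominates at most one link) to show the chosen-dominator digraph $\Gamma$ decomposes into simple paths and cycles, and then select alternate vertices on each component. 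Both arguments hinge on exactly the same in-degree bound, and your handling of multiple dominators (committing to one injective choice $d$) is sound. What your version buys is constructiveness and an explicit identification of the tight configurations (odd cycles, in particular triangles), which the paper only remarks on after its proof and which foreshadows its refined analysis via maximally dominating components and walk decompositions in Lemmas~\ref{lem:simple_bound_deletable}--\ref{lem:num_deletable_for_components}; what the paper's version buys is brevity and the fact that it applies to any maximal removable set without constructing one.
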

\begin{proof}
Let $R\subseteq F$ be an (inclusionwise) maximal removable set of links.
Then for every dominated link $\ell \in F\setminus R$, the set $R\cup \{\ell\}$ is not removable.
Therefore, one of the following conditions must hold for each dominated link $\ell\in F \setminus R$:
\begin{enumerate}
\item $\ell$ is dominated only by links $\bar \ell\in R$; \label{item:dominating_link_deleted}
\item there is a link $\bar \ell \in R$ such that $\ell$ is the only link in $F\setminus R$ that dominates $\bar \ell$.
        \label{item:only_dominating_link}
\end{enumerate}
Consider a link $\bar \ell =(v,w) \in R$. 
Because $v$ has at most one incoming link in $F$, \ref{item:dominating_link_deleted} holds for at most one dominated link $\ell \in F\setminus R$.
Moreover, \ref{item:only_dominating_link} holds for at most one dominated link $\ell \in F\setminus R$.
This shows that $R$ contains at least a third of the dominated links in $F$.
\end{proof}

We remark that the bound in Lemma~\ref{lem:delte_a_third} is tight when $D=(T,F)$ is a triangle in which every link is dominated.

We now provide a lower bound on the number of dominated links in $F$, which corresponds to the left-hand side of the inequality in the lemma below.
\begin{lemma}\label{lem:simple_domination_prob}
\[
\sum_{t\in T} \mathbb{P}[t\text{ has an incoming dominated link in }F]\geq \sum_{t\in T} f(x(S_t))\enspace,
\]
where $f(z) \coloneqq z + e^{-z} -1$.
\end{lemma}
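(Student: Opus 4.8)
The plan is to fix a terminal $t\in T$ and lower bound the probability that $t$ receives an incoming dominated link in $F$, then sum over all terminals. Recall that $t$ receives an incoming link $(u,t)$ precisely when the link $\{t,u\}\in S_t$ is sampled by $t$ (which happens with probability $x_{\{t,u\}}$, and the events for distinct links of $S_t$ are disjoint). So with probability $x(S_t)$ the terminal $t$ samples some link $\ell=\{t,u\}$, and conditioned on this, $\ell$ (which in $D$ is the arc $(u,t)$) is dominated iff some other link leaving $t$, namely an arc $(t,w)\in F$ with $\{t,w\}$ below $\{u,t\}$ on the stack $S_t$, is sampled by a terminal $w$. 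Here I would carefully use that $t$ is an endpoint of every link in $S_t$, so every link in $S_t$ is incident to $t$ and hence comparable under $\preceq_t$; the links of $S_t$ \emph{strictly below} $\ell$ form a well-defined sub-stack.

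The key technical step is to bound, conditioned on $t$ sampling $\ell$, the probability that \emph{none} of the links of $S_t$ strictly below $\ell$ is sampled by its other endpoint. Let $\ell^{(1)}\preceq_t\ell^{(2)}\preceq_t\cdots\preceq_t\ell^{(m)}$ be the links of $S_t$ in stack order (breaking ties arbitrarily), with $x$-values $x_1,\dots,x_m$, so $\sum_i x_i = x(S_t)\le 1$. If $t$ samples $\ell^{(j)}$, the links below it are $\ell^{(1)},\dots,\ell^{(j-1)}$ (or those tied below), and each $\ell^{(i)}$, $i<j$, is independently \emph{not} sampled by its far endpoint with probability at most $1-x_i$ — here I would invoke property~\ref{item:different_corlors_independent}: the far endpoints of links incident to $t$ all have colors different from $c(t)$, but they may coincide with each other; nonetheless a union-bound / FKG-type argument (or simply: the probability that link $\ell^{(i)}$ is sampled from its far endpoint is $\ge x_i$ and these are "positively usable" via a union bound on the complement) gives that the probability that at least one of $\ell^{(1)},\dots,\ell^{(j-1)}$ is sampled from below is at least $1-\prod_{i<j}(1-x_i)$. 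Thus
\[
\mathbb{P}[t\text{ has an incoming dominated link}]\ \ge\ \sum_{j=1}^{m} x_j\Bigl(1-\prod_{i<j}(1-x_i)\Bigr).
\]

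The final step is an analytic one: show $\sum_{j=1}^m x_j\bigl(1-\prod_{i<j}(1-x_i)\bigr)\ge f\bigl(\sum_j x_j\bigr)$ where $f(z)=z+e^{-z}-1$. I would rewrite the left side via a telescoping identity: letting $P_j=\prod_{i\le j}(1-x_i)$ with $P_0=1$, one has $\sum_j x_j = 1-P_m$ and $\sum_j x_j\prod_{i<j}(1-x_i) = \sum_j (P_{j-1}-P_j)=1-P_m$ as well, so actually the left side equals $(1-P_m)-\bigl(1-P_m-\sum_j x_j P_{j-1}\bigr)$... more cleanly: $\text{LHS}=\sum_j x_j - \sum_j x_j P_{j-1}$, and since $P_{j-1}\le e^{-\sum_{i<j}x_i}$, we get $\sum_j x_j P_{j-1}\le \sum_j x_j e^{-\sum_{i<j}x_i}\le \int_0^{z} e^{-s}\,ds = 1-e^{-z}$ with $z=\sum_j x_j\le 1$ (comparing the left Riemann sum of the decreasing function $e^{-s}$ to its integral). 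Hence $\text{LHS}\ge z-(1-e^{-z}) = f(z)=f(x(S_t))$, and summing over $t$ finishes the proof.

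The main obstacle I anticipate is handling the correlations correctly in the second step: the far endpoints of distinct links in $S_t$ need not have distinct colors, so "sampled from below" events for different links $\ell^{(i)}$ are not independent, and property~\ref{item:correlation_of_colors} only constrains same-color samplers to pick distinct \emph{vertices}. I would resolve this by bounding the probability of the complement ("no link below $\ell^{(j)}$ is sampled from below") through the following clean argument: group the links $\ell^{(1)},\dots,\ell^{(j-1)}$ by the color of their far endpoint; within a color class, property~\ref{item:correlation_of_colors} together with property~\ref{item:sampling_probability_link} implies the "not sampled from below" events on that class are negatively correlated, so the probability that none is sampled from below is at most $\prod(1-x_i)$ over that class; across color classes, property~\ref{item:different_corlors_independent} gives genuine independence, so the overall bound $\prod_{i<j}(1-x_i)$ holds. (If even the within-class negative correlation is awkward, a cruder union bound $\mathbb{P}[\text{none below}]\le 1-\max_i x_i$ would still suffice for a weaker but nonzero $f$; but the paper's $f(z)=z+e^{-z}-1$ requires the product bound, so I would push through the negative-correlation argument.)
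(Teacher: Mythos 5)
Your probabilistic decomposition matches the paper's: you condition on which link $\ell\in S_t$ the terminal $t$ samples (disjoint events of probability $x_\ell$, independent of the outgoing links of $t$ since those are sampled by neighbours, all of a different colour), and you bound the probability that some link below $\ell$ is sampled from its far endpoint by grouping those endpoints by colour. Within a colour class at most one such link can be sampled by property~\ref{item:correlation_of_colors}, so the events are disjoint and the ``none sampled'' probability is $1-\gamma^c\le\prod_{i}(1-x_i)$; across colours property~\ref{item:different_corlors_independent} gives independence. The paper does the same thing, computing the per-colour probability exactly as $\gamma^c$ and then using $1-\gamma^c\le e^{-\gamma^c}$. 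Up to this point your argument is sound.

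The gap is in the final analytic step, and it originates in reading ``below'' as \emph{strict}. In the paper $\preceq_t$ is a non-strict preorder and $(v,w)$ dominates $(u,v)$ whenever $\{v,w\}\preceq_v\{u,v\}$; in particular the reverse arc of the sampled incoming link, and any link tied with it, count as dominating. The paper therefore works with $B(\ell)=\{\bar\ell\in S_t:\bar\ell\preceq_t\ell\}$, which \emph{contains} $\ell$ itself, so $x(B(\ell^{(j)}))\ge\sum_{i\le j}x_i$ and one obtains the \emph{right} Riemann sum $\sum_j x_j\bigl(1-e^{-\sum_{i\le j}x_i}\bigr)\ \ge\ \int_0^{z}(1-e^{-s})\,ds=f(z)$ of the increasing integrand. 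Your strict version produces the left Riemann sum instead, and the inequality you need, $\sum_j x_j\bigl(1-\prod_{i<j}(1-x_i)\bigr)\ge f\bigl(\sum_j x_j\bigr)$, is false: for a stack with a single link of weight $z$ the left-hand side is $z\cdot(1-1)=0$ while $f(z)>0$. (Indeed, under strict domination the lemma itself would fail for a one-link stack, since the incoming link could never be strictly dominated; the event that rescues it is that the same undirected link is also sampled by its other endpoint, which occurs with probability $z^2\ge f(z)$.) Correspondingly, your Riemann-sum comparison is reversed: the left Riemann sum of the decreasing function $e^{-s}$ is $\ge$ its integral, not $\le$, so $\sum_j x_j e^{-\sum_{i<j}x_i}\le 1-e^{-z}$ does not hold. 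The repair is to include $\ell^{(j)}$ itself (and its ties) among the potential dominators, i.e.\ replace $\sum_{i<j}$ by $\sum_{i\le j}$ throughout; the rest of your argument then goes through.
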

\begin{proof}
Consider a fixed terminal $t\in T$. 
We observe that the sampling of outgoing links of $t$ in $D$ is independent of the sampling of an incoming link.
The reason is that because of the way we defined the orientation of the links, the outgoing links of the terminal $t$ are not sampled by $t$ but by its neighbors.
All neighbors have a different color than $t$, implying by property~\ref{item:different_corlors_independent} independence with respect to the sampling of incoming links at $t$.
For a link $\ell \in S_t$ we define $B(\ell) = \{ \bar \ell \in S_t : \bar \ell \preceq_t \ell \}$ to be the set of links that are below $\ell$ on the stack $S_t$.
Then, using property~\ref{item:sampling_probability_link}, we conclude
\[
 \mathbb{P} [t\text{ has an incoming dominated link in }F] = \sum_{\ell \in S_t} x_{\ell} \cdot \mathbb{P}[t\text{ has an outgoing link in }F \cap B(\ell)]\enspace. 
\]
For every fixed color $c$, the probability that $F$ contains a link $\{t,v\}\in B(\ell)$ such that $v$ has color $c$,
is \[ \gamma^c(\ell) \coloneqq \sum_{\substack{\{t,v\}\in B(\ell):\\ v\text{ has color }c}} x_{\{t,v\}}\enspace.\]
This follows from properties~\ref{item:sampling_probability_link} and~\ref{item:correlation_of_colors} of the sampling process.
The sampling of outgoing links of $t$ that end in vertices of different colors is independent by~\ref{item:different_corlors_independent}.
Hence,
\begin{align*}
\mathbb{P}[t\text{ has an outgoing link in }F \cap B(\ell)] =&\ 1 - \prod_{\text{color }c} \left( 1- \gamma^c(\ell) \right) \notag\\
\ge&\ 1 - e^{-\sum_{\text{color }c} \gamma^c(\ell)} \\
=&\ 1- e^{- x(B(\ell))}\enspace, \notag
\end{align*}
where we used that $e^y \ge 1 +y $ for all $y \in \mathbb{R}$.
Therefore,  by the definition of $B(\ell)$, we get
\begin{align*}
\mathbb{P} [t\text{ has an incoming dominated link in }F] \ge&\ \sum_{\ell \in S_t} x_{\ell} \cdot \left(1- e^{- x(B(\ell))}\right) \\
\ge&\ \int_{0}^{x(S_t)} 1- e^{- z}\  dz \\
=&\ x(S_t) + e^{- x(S_t)} - 1 \\
=&\ f(x(S_t))\enspace. \qedhere
\end{align*}
\end{proof}

Combining Lemma~\ref{lem:delte_a_third} and Lemma~\ref{lem:simple_domination_prob} yields Lemma~\ref{lem:bound_simple_stack_problem}.

\subsubsection*{Obtaining an approximation ratio below $1.5$}

Next, we use Lemma~\ref{lem:bound_simple_stack_problem} to give an approximation algorithm for $k$-wide CacAP with approximation ratio less than $1.5$.
By Theorem~\ref{thm:main_reduction} this implies a better than $1.5$ approximation ratio for CAP.

First, we compute an optimum solution $x$ to $\min\{ 1^Ty : y \in P_{\mathrm{cross}} \cap P^{\mathrm{min}}_{\mathrm{bundle}}(G,L)\}$.
This is possible in polynomial time by Lemma~\ref{lem:cross-link_rounding} and Lemma~\ref{lem:optimize_efficiently_min_bundle}.
Lemma~\ref{lem:existence_optimum_minimal_solution} implies that $x(L)$ is a lower bound on the size $|\OPT|$ of an optimum solution.

Let $\alpha := \frac{x(L_{\mathrm{cross}})}{x(L)}$.
Then we have $2\alpha \cdot x(L) =  \sum_{t\in T} x(\mathcal{S}_t)$ because every cross link is contained in precisely two stacks.
Moreover, by the definition of terminals, the set $\{t\}$ is a 2-cut for every $t\in T$.
Hence, $x(\delta_L(t)) \ge 1 $ for all $t\in T$ and we have $2 \cdot x(L) \ge |T|$, implying
\begin{equation}\label{eq:simple_lower_bound_stack_loads}
\sum_{t\in T} x(\mathcal{S}_t) = 2 \cdot \alpha\cdot x(L) \ge \alpha \cdot  |T|\enspace.
\end{equation}

The rounding procedure discussed in Section~\ref{sec:cg} (using Chv\'atal-Gomory cuts) yields a solution with at most $(2-\alpha)\cdot x(L)$ many links 
by Lemma~\ref{lem:cross-link_rounding}.
Combining the bound \eqref{eq:bound_in_terms_of_R_simple} and Lemma~\ref{lem:bound_simple_stack_problem},  
yields that the solution returned by Algorithm~\ref{algo:stack_rounding}
has at most
\begin{align*}
x(L) + x(L_{\mathrm{cross}}) - \mathbb{E}[|R|] \le&\ \left(1 + \alpha - \frac{1}{3 \cdot x(L)} \sum_{t\in T} f(x(\mathcal{S}_t))  \right) \cdot x(L)
\end{align*}
many links in expectation.
Using  \eqref{eq:simple_lower_bound_stack_loads} we conclude that the algorithm that returns the better of the two solutions has an approximation ratio of at most
\begin{align*}
\max \Big\{ \min\Big\{ 2-\alpha,\  1 + \alpha - \frac{2 \alpha}{3 \cdot   \sum_{t\in T} y_t} \sum_{t\in T} f(y_t) \Big\} :&\  0\le y_t \le 1 \text{ for } t\in T,\  \sum_{t\in T} y_t \ge \alpha\cdot |T|,  \\ &\ \  0\le \alpha \le 1 \Big\}\enspace.
\end{align*}
Because the function $f$ defined by $f(z) \coloneqq z + e^{-z} -1$ is convex, the (inner) minimum is achieved when all $y_t$ are identical, which leads to 
\begin{align*}
\max \Big\{& \min\big\{ 2-\alpha,\  1 + \alpha - \frac{2 \alpha}{3 |T|\cdot y} \sum_{t\in T} f(y) \big\}  : 
\alpha\le y \le 1,\ 0\le \alpha \le 1 \Big\} \\
=\ \max \Big\{& \min\big\{ 2-\alpha,\  1 + \alpha - \frac{2\alpha}{3y} f(y)\big\} :\ 0\le \alpha \le y \le 1 \Big\} \\
<\ 1.459 \enspace.
\end{align*}
This approximation ratio is slightly higher than the previously best known approximation ratio of $1.458$ for TAP \cite{grandoni_2018_improved},
but it already is a significant improvement for CAP.
In the following we present a refinement of our simple stack analysis, leading to an approximation ratio below $1.4$.

\subsection{Types of links on stacks}\label{sec:link_types_stacks}

In this section we outline some of the observations that we use to improve on the simple stack analysis.
First, we observe that our lower bound~\eqref{eq:simple_lower_bound_stack_loads} on the average weight of the links contained in a stack, can only be tight if all links $\ell$ with $x_{\ell}>0$ are incident to terminals. 
More precisely, we can strengthen the bound~\eqref{eq:simple_lower_bound_stack_loads} as follows.
For a terminal $t\in T$, let $\Uplambda_t^0 \subseteq \Sscr_t$ be the set of cross-links that are incident to the terminal $t$ in~$G$ and
let $\uplambda^0_t := x(\Uplambda^0_t) = \sum_{\ell\in \Uplambda^0_t} x_{\ell}$ be the total weight of these cross-links.
Because $x(\delta_L(t))\ge 1$ for all $t\in T$, we have
\begin{equation*}
2 \cdot x(L) \ge \sum_{t\in T} (x(\delta_L(t)) + x(\Sscr_t \setminus \Uplambda^0_t)) \ge \sum_{t\in T} (1+ x(\Sscr_t \setminus \Uplambda^0_t)) = \sum_{t\in T} (1+ x(\Sscr_t) - \uplambda^0_t)\enspace.
\end{equation*}
It is possible that $\uplambda^0_t = x(\Sscr_t)$ for all terminals $t$, in which case our new bound is not stronger than~\eqref{eq:simple_lower_bound_stack_loads}.
In this case however, we can obtain a better lower bound on the number of dominated links.
The reason is that we then have $\ell \preceq_t \bar\ell$ for all $\ell, \bar \ell \in \mathcal{S}_t$.
Therefore, whenever  the terminal $t$ in the stack problem has both an incoming and an outgoing link in $D=(T,F)$,
the incoming link will be dominated. 
This allows for strengthening Lemma~\ref{lem:simple_domination_prob}.

It turns out that when exploiting the above observations, in the worst case there will be quite many cross-links that are not incident to terminals. 
In order to improve our approximation ratio further, we will show how we can profit from having such links, i.e., links in $\Sscr_t\setminus \Uplambda^0_t$.
We will partition $\Sscr_t\setminus \Uplambda^0_t$ into three sets.
For two of these sets we will show that a high $x$-weight of these sets allows us to improve our analysis further.
For the third set of links, we don't gain anything but we can give an upper bound on the total weight these links can have.

\begin{figure}[!ht]
\begin{center}
\begin{tikzpicture}[scale=0.5]

\tikzset{
  prefix node name/.style={%
    /tikz/name/.append style={%
      /tikz/alias={#1##1}%
    }%
  }
}

\tikzset{root/.style={fill=white,minimum size=13}}

\tikzset{q1/.style={line width=1pt, dash pattern=on \pgflinewidth off 1pt}}

\tikzset{q2/.style={line width=1.5pt, dash pattern=on 3pt off 2pt on \the\pgflinewidth off 2pt}}

\tikzset{q3/.style={line width=2pt, dash pattern=on 7pt off 2pt}}

\tikzset{q4/.style={line width=2.5pt}}

\tikzset{s/.style={line width=1pt}}

\tikzset{
term/.style={fill=black!20, rectangle, minimum size=10},
term1/.style={fill=blue!20, rectangle, minimum size=10},
term2/.style={fill=orange!20, rectangle, minimum size=10},
term3/.style={fill=violet!20, rectangle, minimum size=10},
term4/.style={fill=yellow!20, rectangle, minimum size=10},
tf/.append style={font=\scriptsize\color{black}},
}

\newcommand\cac[2][]{

\begin{scope}[prefix node name=#1]

\tikzset{npc/.style={#2}}

\begin{scope}[every node/.append style={thick,draw=black,fill=white,circle,minimum size=12, inner sep=2pt}]

\node[root]  (1) at (8,-2.5) {r};

\node[term] (2) at (10,-1) {};
\node (3) at (12,-2.5) {};
\node[term] (4) at (10,-4) {};
\node[term] (5) at (15,-1) {};
\node (6) at (16,-4) {};
\node (7) at (20,-2.5) {};
\node[term] (8) at (24,-2.5) {t};

\end{scope}%

\begin{scope}
\node (9) at (12,-8) {$\mathcal{M}_t^1$};
\node (10) at (16,-8) {$\Lambda_t^1$};
\node (11) at (20,-8) {$\mathcal{M}_t^0$};
\node (12) at (24,-8) {$\Lambda_t^0$};
\end{scope}

\begin{scope}[every node/.append style={font=\scriptsize}]

\foreach \i/\t in {2/,3/,4/,5/tf,6/,7/,8/tf,9/,10/,11/,12/} {
\pgfmathparse{int(\i-1)}
}
\end{scope}

\begin{scope}[very thick]

\draw (1) to (2);
\draw (1) to (4);
\draw (2) to (3);
\draw (4) to (3);
\draw (3) to (5);
\draw (3) to (6);
\draw (5) to (6);
\draw (6) to [bend right](7);
\draw (6) to [bend left](7);
\draw (7) to [bend right](8);
\draw (7) to [bend left](8);

\end{scope}%

\begin{scope}[s]
\draw[violet] (3) to (9);
\draw[blue] (6) to (10);
\draw[darkgreen] (7) to (11);
\draw[orange] (8) to (12);

\end{scope}%

\end{scope}%

}%

\begin{scope}

\begin{scope}[shift={(0,20)}]
\cac[]{}
\end{scope}

\end{scope}

\begin{scope}[shift={(30,18.5)}]%
\def\ll{30mm} %
\def\vs{18mm} %

\begin{scope}[scale=0.5]
\draw[s,black] (0,0) -- +(\ll,0) node[right] {cross-links in $\mathcal{M}_t^1$}[violet];
\draw[s,black, yshift=-\vs] (0,0) -- +(\ll,0) node[right] {cross-links in $\Lambda_t^1$}[blue];
\draw[s,black, yshift=-2*\vs] (0,0) -- +(\ll,0) node[right] {cross-links in $\mathcal{M}_t^0$}[darkgreen];
\draw[s,black, yshift=-3*\vs] (0,0) -- +(\ll,0) node[right] {cross-links in $\Lambda_t^0$}[orange];
\end{scope}

\end{scope}%

\end{tikzpicture}
 \end{center}
\caption{Different types of cross-links.}\label{fig:mu_lambda_types}
\end{figure}

We partition $\Sscr_t\setminus \Uplambda^0_t$ into $\Mscr^0_t$, $\Uplambda^1_t$, and $\Mscr^1_t$ as follows.
Consider a link $\ell=\{v,w\}$ of $\Sscr_t\setminus \Uplambda^0_t$, where $v$ is an ancestor of $t$.
See Figure~\ref{fig:mu_lambda_types}.
Then
\begin{itemize}
\item $\ell$ is contained in  $\Mscr^0_t$ if $v$ is not contained in any cycle of length at least 3 in $G$ 
         and $t$ is the only terminal that is a descendant of $v$; 
\item $\ell$ is contained in $\Uplambda^1_t$ if $v$ is contained in a cycle of length at least 3 in $G$
         and $t$ is the only terminal that is a descendant of $v$; (As we will see later, for any terminal $t$ there can be at most one vertex $v$ with these properties.)
\item $\ell$ is contained in $\Mscr^1_t$, otherwise; 
         then $t$ is not the only terminal that is a descendant of $v$.
\end{itemize}
We have $\mathcal{S}_t = \Uplambda^0_t \cupp \Mscr^0_t \cupp \Uplambda^1_t \cupp \Mscr^1_t$.
We define
\begin{align*}
\upmu_t^0     &\coloneqq x(\Mscr_t^0)\enspace,\\
\uplambda_t^1 &\coloneqq x(\Uplambda_t^1)\enspace,\\
\upmu^1_t     &\coloneqq x(\Mscr^1_t)\enspace, \text{ and we recall that}\\
\uplambda_t^0 &\coloneqq x(\Uplambda_t^0)\enspace.
\end{align*}
Then $x(\mathcal{S}_t) = \uplambda^0_t + \upmu^0_t + \uplambda^1_t + \upmu^1_t$.

Let us now explain how we can gain from links in $\Mscr^0_t$.
The lemma below states that a high LP value on  these links implies a high LP value on up-links. 
This allows us to gain in the rounding procedure from Section~\ref{sec:cg}, which is lossless on up-links. (See Lemma~\ref{lem:cross-link_rounding}.)
\begin{lemma}\label{lem:generating_up_links}
Let $x\in P^{\mathrm{min}}_{\mathrm{cross}}(G,L)$.
Then \[ x(L_\mathrm{up})\ \ge\ \sum_{t\in T} \upmu^0_t \enspace. \]
\end{lemma}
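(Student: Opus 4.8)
The plan is to exhibit, for each terminal $t\in T$, a collection of up-links whose total $x$-weight is at least $\upmu^0_t$, and whose "footprints" on the set of relevant $2$-cuts are essentially disjoint across terminals, so that summing over $t$ yields the claim. The key structural observation is the definition of $\Mscr^0_t$: a link $\ell=\{v,w\}\in\Mscr^0_t$ has its upper endpoint $v$ lying on a cycle of length $2$ only (i.e.\ $v$ is not on any cycle of length $\ge 3$), and $t$ is the unique terminal descendant of $v$. The first step is to understand what this forces about $v$ and about the $2$-cuts covered by such links.

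First I would argue that for any $\ell=\{v,w\}\in\Mscr^0_t$, the set of descendants of $v$, call it $U_v$, is a $2$-cut of $G$ (by Lemma~\ref{lem:descendants_define_2_cut}), and moreover, since $v$ lies on no cycle of length $\ge 3$, the two edges of $\delta_E(U_v)$ are exactly the two edges of the length-$2$ cycle through $v$ that "go up" toward $r$; concretely, if $v'$ denotes the parent direction, then $U_v$ is a $2$-cut whose boundary is the double edge above $v$. Because $t$ is the only terminal descendant of $v$, every vertex strictly between $v$ and $t$ on the $t$-$r$ path shares the same property (unique terminal descendant $t$, and lying only on $2$-cycles up to the point where a branching or a long cycle occurs), so the descendant-sets of these vertices form a chain of $2$-cuts, all containing $t$ and all "private" to $t$. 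The second step is then: for each link $\ell=\{v,w\}\in\Mscr^0_t$, the up-link $\bar\ell=\{v, t\}$ — more precisely the up-link from $v$ down to $t$ — is a shadow of $\ell$, hence lies in $L$ by shadow-completeness, and it is an up-link since $v$ is an ancestor of $t$. Actually the cleaner route is to observe that $\ell$ itself, after replacing its lower endpoint $w$ (which, since $\ell\notin\Uplambda^0_t$, is not $t$) by $t$, becomes an up-link $\bar\ell\in L_{\mathrm{up}}$ which is a shadow of $\ell$; and since the rounding can exchange links for shadows freely (Lemma~\ref{lemma:justify_shadow_notion}) this is legitimate. The heart of the matter, though, is not a single link but the $x$-mass: I would use that $x\in P^{\mathrm{min}}_{\mathrm{cross}}(G,L)$ (equivalently $x\in P_{\mathrm{cross}}\cap P^{\mathrm{min}}_{\mathrm{bundle}}(G,L)$), so in particular $x\in P_{\mathrm{cut}}$, meaning every $2$-cut in $\mathcal{C}_G$ is covered to value at least $1$.

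The main argument I would then run is a charging/flow argument on the chain of "private" $2$-cuts of $t$. Let $v_0$ be the topmost vertex whose unique terminal descendant is $t$ and which lies only on $2$-cycles (equivalently, the top of the maximal such chain). Every link in $\Mscr^0_t$ has its upper endpoint on the path from $v_0$ down to $t$. Walking down from $v_0$ to $t$ through the successive $2$-cuts $U_{v_0}\supsetneq U_{v_1}\supsetneq\cdots\supsetneq \{t\}$, each of these cuts must be covered to value $\ge 1$ by $x$; but the only links that can cover $U_{v_i}$ are (a) links with exactly one endpoint in $U_{v_i}$, and by the chain structure and the fact that $t$ is $U_{v_i}$'s only terminal, the endpoint inside $U_{v_i}$ is an ancestor of $t$ — so these are links incident to the path-vertices $v_i, v_{i+1},\dots$ or to $t$. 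The contribution of links in $\Uplambda^0_t$ and $\Mscr^0_t$ incident at or below $v_i$ is bounded, and the links coming from $\Uplambda^1_t$ or $\Mscr^1_t$ are incident strictly above $v_0$ and hence cover $U_{v_0}$ but help "transport" weight downward. Comparing the coverage constraint at the top of the chain with the coverage at $\{t\}$, and using that the non-up-link mass in $\Mscr^0_t$ can be "rerouted" to up-links covering the same cuts, yields $x(L_{\mathrm{up}}\cap \Sscr_t\text{-relevant links})\ge \upmu^0_t$. Then, because for different terminals $t$ the chains of private $2$-cuts $U_{v_0^{(t)}}\supsetneq\cdots\supsetneq\{t\}$ are vertex-disjoint (a vertex with unique terminal descendant $t$ cannot also have unique terminal descendant $t'\ne t$), the up-links charged to different terminals are distinct, and the bounds add up to give $x(L_{\mathrm{up}})\ge\sum_{t}\upmu^0_t$.

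The step I expect to be the main obstacle is making the "rerouting to up-links" precise while keeping the accounting disjoint across terminals: a link $\ell=\{v,w\}\in\Mscr^0_t$ with $v$ an ancestor of $t$ but $w$ neither an ancestor nor a descendant of $t$ needs its $x_\ell$-weight to be realized on an up-link, and the natural candidate is the shadow $\{v, t'\}$ where $t'$ is a suitable terminal descendant of... $v$ — but $t$ is the only terminal descendant of $v$, so $w$'s subtree contains no terminal below... wait, $w$ is a descendant of $v$ too only if $\ell$ is an up-link, which it is not by assumption $\ell\in\Sscr_t\setminus\Uplambda^0_t$ and $\ell\notin L_{\mathrm{up}}$... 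I would need to carefully reconcile which endpoint is which: since $v$ is the ancestor-of-$t$ endpoint and $\ell$ is not an up-link, $w$ is not a descendant of $v$, so the shadow $\{v',w\}$ obtained by pushing $v$ down toward $t$ to the least common ancestor of $w$ and $t$... but that LCA is $v$ itself since $w$ is outside $U_v$. So $\ell$ has no shorter up-link shadow; instead the correct move is the one used in Section~\ref{sec:bidirected_lp}: a non-up-link can be \emph{split} into two up-links covering the same cuts, one of which has top endpoint the LCA and is directed down. Concretely $\ell$ covers exactly the $2$-cuts separating $v$ from $w$, and among those the ones inside the private chain of $t$ are precisely $U_v, U_{v_1},\dots,\{t\}$, which are covered by the up-link from $v$ down to $t$. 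So the clean statement is: the "down-half" of $\ell$ is an up-link covering all the private cuts of $t$ that $\ell$ covers. Fitting this into $P_{\mathrm{cut}}$'s constraints at $U_{v_0}$ versus $\{t\}$, and checking that the up-link mass so produced is not being double-counted between $t$ and some other terminal (it cannot be, because such an up-link's lower endpoint is $t$, and each up-link has a unique lower endpoint), is the delicate bookkeeping. I would isolate this as a self-contained sub-lemma about a single terminal's private chain and then sum.
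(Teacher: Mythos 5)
Your proposal has a genuine gap: the main argument tries to derive the bound from the cut constraints $x(\delta_L(C))\ge 1$ alone (you explicitly reduce the hypothesis to ``$x\in P_{\mathrm{cut}}$'') plus a ``rerouting/splitting'' of the mass on $\Mscr^0_t$-links onto up-links. Neither piece can work. First, splitting a non-up-link into two directed/up pieces changes the vector; it does not create mass on $L_{\mathrm{up}}$ for the \emph{given} $x$, and the lemma is a statement about $x(L_{\mathrm{up}})$ itself. Second, the cut constraints by themselves place no mass on up-links: every cut in the ``private chain'' of $t$, including $U_w$ for $w$ the unique child of $v$, can be covered entirely by cross-links leaving the principal subcactus, so a point in $P_{\mathrm{cut}}$ can have $x(L_{\mathrm{up}})=0$ while $\upmu^0_t>0$. (Also note that every link in a stack $\Sscr_t$ is a \emph{cross-link}, so for $\ell=\{v,z\}\in\Mscr^0_t$ the other endpoint $z$ lies in a different principal subcactus; the set $\{v,t\}$ is therefore not a shadow of $\ell$, and the least common ancestor of $z$ and $t$ is the center $r$ --- your last paragraph starts to notice this but does not resolve it.)

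The ingredient you are missing is the $L_{\mathrm{cross}}$-minimality built into $P^{\min}_{\mathrm{bundle}}$, which is where the paper's proof gets its up-links from. Writing $x|_{L_i}$ as a convex combination of $L_{\mathrm{cross}}$-minimal solutions $F_i$ for the subcactus containing $t$: if $F_i$ contains $\ell=\{v,z\}\in\Mscr^0_t$, then (as you correctly deduced) $v$ lies only on $2$-cycles, has a unique child $w$, and the $2$-cut $U_w$ of descendants of $w$ is \emph{not} covered by $\ell$. So $F_i$ must contain another link $\{\bar v,\bar w\}$ with $\bar w\in U_w$. If $\bar v\ne v$, that link already covers $U_v$, so $\ell$ could be replaced by its shadow $\{u,z\}$ (pushing $v$ up to its parent $u$) without losing any covered cut --- contradicting $L_{\mathrm{cross}}$-minimality. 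Hence $\bar v=v$ and $\{v,\bar w\}$ is an up-link both of whose endpoints are ancestors of $t$ and of no other terminal; these up-links are therefore distinct across terminals and across the (at most one) $\Mscr^0_t$-link per stack in $F_i$. Summing over the convex combination gives $x(L_{\mathrm{up}})\ge\sum_t\upmu^0_t$. Your chain-of-private-cuts picture is compatible with this, but without invoking minimality of the bundle decomposition the inequality simply is not true for general points of $P_{\mathrm{cut}}$.
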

\begin{proof}
Let $t\in T$ and let $\ell\in \Mscr_t^0$.
We denote the endpoint of $\ell$ that is an ancestor of $t$ by $v$.
Then $v$ is not contained in any cycle of length at least three in $G$ and $t$ is the only terminal that is a descendant of $v$.

Because $\ell \notin \Uplambda^0_t$, the vertex $v$ is not a terminal.
Thus $v$ has degree at least $4$, i.e., it is contained in at least two cycles in $G$.
In fact, $v$ is contained in precisely two cycles in $G$ because $v$ is an ancestor of only one terminal.
Using that $v$ is not contained in any cycle of length at least three, we conclude that $v$ has exactly 2 neighbors $u,w$ in $G$ and is connected to each of them by a pair of parallel edges.
One of the two neighbors, say $u$, is an ancestor of $v$. The other neighbor $w$ is a descendant of $v$.
See Figure~\ref{fig:up_link_generation}.

\begin{figure}[!ht]
\begin{center}
\begin{tikzpicture}[scale=0.5]

\pgfdeclarelayer{bg}
\pgfdeclarelayer{fg}
\pgfsetlayers{bg,main,fg}

\tikzset{root/.style={fill=white,minimum size=13}}

\tikzset{s/.style={dashed,red,line width=1pt}}

\tikzset{
term/.style={fill=black!20, rectangle, minimum size=10},
tf/.append style={font=\scriptsize\color{black}},
}

\newcommand\blob[2]{
\pgfmathanglebetweenpoints%
{\pgfpointanchor{#1}{center}}%
{\pgfpointanchor{#2}{center}}

\edef\angle{\pgfmathresult}

\filldraw ($(#1)+(\angle+90:\d)$) arc (\angle+90:\angle+270:\d) --
($(#2)+(\angle+270:\d)$) arc (\angle+270:\angle+450:\d) -- cycle;
}

\newcommand\cac[2][]{

\begin{scope}

\begin{scope}[every node/.append style={thick,draw=black,fill=white,circle,minimum size=12, inner sep=2pt}]

\node[root]  (1) at (8,-2.5) {r};

\node[term] (2) at (10,-1) {};
\node (3) at (12,-2.5) {u};
\node[term] (4) at (10,-4) {};
\node (5) at (16,-2.5) {v};
\node (6) at (20,-2.5) {w};
\node (7) at (24,-2.5) {$\bar{w}$};
\node[term] (8) at (28,-2.5) {t};
\node (9) at (16,-7) {x};

\end{scope}%

\begin{scope}[very thick]

\draw (1) to (2);
\draw (1) to (4);
\draw (2) to (3);
\draw (4) to (3);
\draw (3) to [bend right](5);
\draw (3) to [bend left](5);
\draw (5) to [bend right](6);
\draw (5) to [bend left](6);
\draw (6) to [bend right](7);
\draw (6) to [bend left](7);
\draw (7) to [bend right](8);
\draw (7) to [bend left](8);

\end{scope}%

\begin{scope}[s]
\draw (5) to [bend left] node[right]{$\ell$} (9);
\draw (5) to [bend left=50](7);
\end{scope}%

\begin{pgfonlayer}{bg}

\def\d{10mm}

\begin{scope}[fill=blue!10,draw=blue]

\blob{6}{8}

\end{scope}
\end{pgfonlayer}

\node () at (8.north)[right=0.5,blue]{$U_w$};

\end{scope}%

}%

\begin{scope}

\begin{scope}[shift={(0,20)}]
\cac[]{}
\end{scope}

\end{scope}

\end{tikzpicture} \end{center}
\caption{Illustration of the proof of Lemma~\ref{lem:generating_up_links}.}\label{fig:up_link_generation}
\end{figure}

Let $i\in[q]$ such that $t\in V_i$.  Consider any $L_{\mathrm{cross}}$-minimal solution $F_i$ for $G_i$.
By Lemma~\ref{lem:descendants_define_2_cut}, the set $U_w$ of descendants of $w$ (including $w$ itself) is a 2-cut of $G$.
Since this 2-cut is not covered by $\ell$, there must be a link $\{\bar v,\bar w\}\in F_i$ covering it.
Without loss of generality we have $\bar w\in U_w$, i.e., $\bar w$ is a descendant of $w$, and $\bar v\notin U_w$.
Now suppose $\bar v\ne v$.
Then the link $\{\bar v, \bar w\}$ covers the 2-cut $U_v$ of descendants of $v$.
But then, replacing $\ell=\{v,z\}$ by its shadow $\{u,z\}$ does not change the set of 2-cuts covered by $\ell$ and $\{\bar v, \bar w\}$,
 contradicting the $L_{\mathrm{cross}}$-minimality of $F_i$.
This shows $\bar v=v$.
Because $v$ is an ancestor of $w$ and $w$ is an ancestor of $\bar w$, the link $\{v,\bar w\} =\{\bar v,\bar w\}$ is an up-link.
Both endpoints of this up-link are ancestors of the terminal $t$, but not of any other terminal.

Recall that any $L_{\mathrm{cross}}$-minimal solution $F_i$ for $G_i$ contains at most one link $\ell \in \mathcal{S}_t$.
We have shown that if such an $F_i$  contains a link $\ell \in \Mscr^0_t$ for some terminal $t\in V_i$,
 then it also contains an up-link such that both endpoints of this up-link are ancestors of $t$, but not of any other terminal.
This implies that $F_i$ contains a distinct up-link for every cross link $\ell \in \bigcup_{t\in T\cap V_i} F_i \cap \Mscr^0_t$.
Because $x|_{L_i}$ is a convex combination of incidence vectors of $L_{\mathrm{cross}}$-minimal solutions for $G_i$, this concludes the proof.
\end{proof}

Next, we consider $\Uplambda^1_t$. We show that all links in $\Uplambda^1_t \subseteq L_{\mathrm{cross}}$ have a common endpoint~$v$.
Using this one can give a better bound on the expected number of dominated links, similar to the improvement we explained for $\Uplambda^0_t$ before.
\begin{lemma}\label{lem:unique_endpoint_lambda_1}
Let $t\in T$ be a terminal of the cactus $G$. Then there is at most one vertex $v\in V$ such that
\begin{enumerate}
\item $t$ is a descendant of $v$, \label{item:property_1_of_lambda_1}
\item $v$ is contained in a cycle of length at least 3 in $G$, and  \label{item:property_2_of_lambda_1}
\item $t$ is the only terminal below $v$. \label{item:property_3_of_lambda_1}
\end{enumerate}
If such a vertex $v$ exists, every ancestor $a$ of $v$ with $a\ne v$ is an ancestor of at least two terminals.
\end{lemma}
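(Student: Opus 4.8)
\textbf{Proof plan for Lemma~\ref{lem:unique_endpoint_lambda_1}.}
The plan is to argue about the tree-like ancestry structure of the cactus and use the fact that a vertex $v$ lying on a cycle of length at least three has at least two descendants ``branching off'' in that cycle. First I would show uniqueness. Suppose $v_1$ and $v_2$ both satisfy properties~\ref{item:property_1_of_lambda_1}--\ref{item:property_3_of_lambda_1}. Since $t$ is a descendant of both, $v_1$ and $v_2$ lie on every $t$-$r$ path, so one is an ancestor of the other; without loss of generality $v_1$ is an ancestor of $v_2$ with $v_1 \ne v_2$. I would then exhibit a terminal $t' \ne t$ that is a descendant of $v_1$, contradicting property~\ref{item:property_3_of_lambda_1} for $v_1$. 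To do this, consider the cycle $Q$ of length at least three containing $v_1$ (property~\ref{item:property_2_of_lambda_1}). The vertex $v_1$ has two neighbors $a_1, a_2$ in $Q$. One of the two ``sides'' of $Q$ leaving $v_1$ leads toward $r$ (and hence does not consist of descendants of $v_1$); the other neighbor, say $w$, is a descendant of $v_1$, and moreover $w \ne v_2$ is possible or $w$ lies on a $v_1$-$v_2$ path. The key point is that since $|Q| \ge 3$ there is a vertex $w'$ on $Q$ adjacent to $v_1$ on the side \emph{not} containing $v_2$ on a $v_1$-$v_2$ path inside the component structure; this $w'$ is a descendant of $v_1$ but $v_2$ is \emph{not} a descendant of $w'$ (the subcactus hanging below $w'$ is separated from $v_2$). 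By Lemma~\ref{lem:terminal_descendant_exists}, $w'$ has a terminal descendant $t'$, and $t'$ is a descendant of $v_1$. Since $t'$ is a descendant of $w'$ while $t$ is a descendant of $v_2$, and $w'$ is not an ancestor of $v_2$, we get $t' \ne t$, contradicting that $t$ is the only terminal below $v_1$.

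For the last sentence I would argue similarly: let $v$ be the (unique) vertex with the three properties and let $a$ be a strict ancestor of $v$. I want to produce two distinct terminals below $a$, namely $t$ (which is below $v$, hence below $a$) and some other terminal $t'' \ne t$ below $a$ but not below $v$. Since $a$ is a strict ancestor of $v$, the edge(s) from $a$ toward the rest of the cactus away from $v$ are not on any $v$-$r$ path, so $a$ has a neighbor $b$ that is a descendant of $a$ but \emph{not} a descendant of $v$ — unless $v$ is the only descendant ``direction'' from $a$, which cannot happen because $a$, being a proper ancestor of a vertex $v$ that lies on a length-$\ge 3$ cycle, has further structure. More carefully: if every descendant of $a$ were also a descendant of $v$, then $a = v$ (since the descendant sets form a chain and $a$'s descendant set would equal $v$'s descendant set only if $a=v$ by Lemma~\ref{lem:descendants_define_2_cut} and 2-edge-connectivity). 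Hence there is a descendant $b$ of $a$ that is not a descendant of $v$; by Lemma~\ref{lem:terminal_descendant_exists}, $b$ has a terminal descendant $t''$, which is below $a$ but not below $v$, so $t'' \ne t$. Thus $a$ has at least two terminals below it.

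The main obstacle I anticipate is making the ``branching'' argument in the uniqueness part fully rigorous: I must carefully use the cactus structure to guarantee that when $|Q| \ge 3$, the vertex $v_1$ genuinely has a descendant direction (inside $Q$) that is disjoint from $v_2$. The cleanest route is probably to use Lemma~\ref{lem:descendants_define_2_cut} together with the characterization of $2$-cuts in a cactus: the set of descendants of each vertex on $Q$ adjacent to $v_1$ gives a $2$-cut, and because $Q$ has at least three vertices, at least one such neighbor $w'$ on the side away from the $v_1$-$r$ direction satisfies that its descendant-$2$-cut $U_{w'}$ excludes $v_2$ (otherwise $v_2$ and all of $Q$ would be forced into a single $2$-cut, contradicting $|Q|\ge 3$ or the ancestry of $v_2$). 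Once $U_{w'} \not\ni v_2$ is established, Lemma~\ref{lem:terminal_descendant_exists} finishes the step. I would also double-check the degenerate cases where $v_1$ or $a$ equals $r$, or where the relevant cycle has length exactly two, to ensure the claimed implications still hold (property~\ref{item:property_2_of_lambda_1} rules out the length-two case for $v$ itself, which is why the statement is phrased for cycles of length at least three).
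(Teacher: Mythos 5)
Your high-level plan (uniqueness via exhibiting a second terminal below the higher vertex, plus a separate argument for strict ancestors) is reasonable, but both parts as written have genuine gaps, and the second one would prove a false statement.

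\textbf{Uniqueness.} You work with the length-$\ge 3$ cycle $Q$ containing $v_1$ and assert that one neighbor of $v_1$ on $Q$ is a descendant of $v_1$. This is the wrong cycle and the wrong picture. In a rooted cactus, each cycle $Q$ has a unique ``top'' vertex (the one that is an ancestor of all others on $Q$); either $v_1$ is that top vertex, in which case \emph{both} of its neighbors on $Q$ are descendants of $v_1$ — and then property~\ref{item:property_3_of_lambda_1} already fails for $v_1$, since these two neighbors are incomparable, their descendant sets are disjoint, and each contains a terminal by Lemma~\ref{lem:terminal_descendant_exists} — or $v_1$ is not the top, in which case \emph{no} vertex of $Q\setminus\{v_1\}$ is a descendant of $v_1$ (each reaches $r$ by going around $Q$ the other way, avoiding $v_1$). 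So in the only configuration consistent with property~\ref{item:property_3_of_lambda_1}, the vertex $w'$ you need does not exist, and the cycle through $v_1$ contributes nothing to $v_1$'s descendant set. The repair is to look at the length-$\ge 3$ cycle $Q_2$ containing the \emph{descendant} $v_2$: because $v_2$ is a strict descendant of $v_1$ and $v_2\in Q_2$, every vertex of $Q_2$ is a descendant of $v_1$; since $|Q_2|\ge 3$ and at most one vertex of a cycle can be an ancestor of another vertex of that cycle, $Q_2$ contains two incomparable vertices, whose disjoint descendant sets each contain a terminal — two distinct terminals below $v_1$, contradicting property~\ref{item:property_3_of_lambda_1}.

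\textbf{Last sentence.} Your argument only uses $U_a \not\subseteq U_v$ and never invokes property~\ref{item:property_2_of_lambda_1}; it therefore cannot be correct, because the claim is false without that property. (Take a chain of parallel-edge pairs $r - u_1 - \dots - u_k = t$: each $u_i$ with $i<k$ is a strict ancestor of $u_k$, which satisfies~\ref{item:property_1_of_lambda_1} and~\ref{item:property_3_of_lambda_1}, yet $u_i$ is an ancestor of only one terminal.) Concretely, the vertex $b \in U_a\setminus U_v$ you select (e.g.\ $b=a$ itself, or any vertex strictly between $a$ and $v$) may be an \emph{ancestor} of $v$, in which case the terminal descendant of $b$ supplied by Lemma~\ref{lem:terminal_descendant_exists} can be $t$ itself. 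The paper instead proves the last sentence first and directly: all of $v$'s length-$\ge 3$ cycle $Q$ lies below $a$, two vertices of $Q$ are incomparable, and their terminal descendants are distinct and both below $a$; uniqueness is then an immediate corollary. You should restructure your proof along these lines.
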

\begin{proof}
Let $v$ be a vertex satisfying \ref{item:property_1_of_lambda_1} -- \ref{item:property_3_of_lambda_1} and let $a\ne v$ be an ancestor of $v$.
By \ref{item:property_2_of_lambda_1}, the vertex $v$ is contained in a cycle $Q$ of length at least $3$ in $G$.
Because $v$ is descendant of $a$, also all other vertices of the cycle $Q$ are descendants of $a$.

Note that for every cycle in $G$, there can be at most one vertex that is an ancestor of some other vertex of the cycle.
(In fact, there is always exactly on such vertex, but we will not need this.) 
Because the cycle $Q$ has length at least $3$, there exist two distinct vertices $u,w$ in $Q$ such that neither $u$ is an ancestor of $w$ nor $w$ is an ancestor of $u$.
Since by Lemma~\ref{lem:terminal_descendant_exists} every vertex of $G$ is an ancestor of some terminal, there are terminals $t_u$ and $t_w$ such that $u$ is an ancestor of $t_u$ and $w$ is an ancestor of $t_w$.
The terminals $t_u$ and $t_w$ must be distinct because otherwise both $u$ and $w$ were ancestors of $t_u=t_w$, implying that $u$ is an ancestor of $w$ or vice versa.
Now, recall that the vertex $a$ is an ancestor of all vertices of the cycle $Q$.
In particular, $a$ is an ancestor of $u$ and $w$.
This implies that $a$ is also an ancestor of the two distinct terminals $t_u$ and $t_w$.

It remains to show that there is at most one vertex $v$ with \ref{item:property_1_of_lambda_1} -- \ref{item:property_3_of_lambda_1}.
Suppose this is not the case and there exist two distinct vertices $v$ and $\bar v$ with \ref{item:property_1_of_lambda_1} -- \ref{item:property_3_of_lambda_1}. 
Then by \ref{item:property_1_of_lambda_1}, one of these vertices is an ancestor of the other, say $\bar v$ is an ancestor of $v$.
But then $\bar v$ is an ancestor of at least two different terminals as we have shown above and this contradicts~\ref{item:property_3_of_lambda_1}.
Hence, there can be at most one vertex $v$ with \ref{item:property_1_of_lambda_1} -- \ref{item:property_3_of_lambda_1}.
\end{proof}

Finally, we give an upper bound on the total weight of links contained in sets $\Mscr^1_t$ for $t\in T$. 
Recall that $x(\mathcal{S}_t) \le 1$ for all $t\in T$ because every $L_{\mathrm{cross}}$-minimal solution $F_i$ for a principal subcatus $G_i$ contains at most one cross-link from $\mathcal{S}_t$.
This implies 
\[
\upmu^1_t\ =\ x(\Mscr^1_t) \ \le\ 1 - x(\Sscr_t \setminus \Mscr^1_t)\ =\ 1 - \uplambda^0_t - \upmu^0_t - \uplambda^1_t \enspace .
\]
We can strengthen the resulting bound on $\sum_{t\in T} \upmu^1_t$ as follows.
\begin{lemma}\label{lem:upper_bound_mu1_links}
We have
\begin{equation*}
\sum_{t\in T} \upmu^1_t\ \le\  \frac{1}{2} \cdot  \sum_{t\in T}\left( 1 - \uplambda^0_t - \upmu^0_t - \uplambda^1_t\right) \enspace.
\end{equation*}
\end{lemma}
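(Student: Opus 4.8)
The plan is to recast the claimed inequality as a single ``budget'' inequality over the vertices of $G$, and then to derive that budget inequality from one strengthened bundle constraint summed over all terminals. Throughout I take $x\in P^{\mathrm{min}}_{\mathrm{bundle}}(G,L)$, so that $x|_{L_i}$ is a convex combination of incidence vectors of $L_{\mathrm{cross}}$-minimal feasible solutions for each $G_i$.

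First I would reduce the statement. Since $x(\Sscr_t)=\uplambda^0_t+\upmu^0_t+\uplambda^1_t+\upmu^1_t$ and $x(\Sscr_t)\le 1$, we have $1-\uplambda^0_t-\upmu^0_t-\uplambda^1_t=(1-x(\Sscr_t))+\upmu^1_t$. Substituting this into the right-hand side and cancelling $\tfrac12\sum_{t}\upmu^1_t$ from both sides shows that the claimed inequality is equivalent to
\[
\sum_{t\in T}\upmu^1_t\ \le\ \sum_{t\in T}\bigl(1-x(\Sscr_t)\bigr)\ =\ |T|-\sum_{t\in T}x(\Sscr_t)\ =\ |T|-2\,x(L_{\mathrm{cross}}),
\]
where the last equality holds because every cross-link lies in exactly two stacks (its endpoints $v,w$ are in distinct principal subcacti, hence $t_v\neq t_w$). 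So it suffices to prove $\sum_{t\in T}\upmu^1_t\le |T|-2\,x(L_{\mathrm{cross}})$. To this end I would pass to sums over vertices: for $v\in V$ set $y_v:=x(\delta_{L_{\mathrm{cross}}}(v))$ and $\tau(v):=|\{t\in T:t\text{ is a descendant of }v\}|$; then $\sum_{v\in V}y_v=2\,x(L_{\mathrm{cross}})$ since each cross-link has two endpoints. I would next observe that $\Mscr^1_t$ is precisely the set of cross-links incident to some vertex $v$ with $t_v=t$ and $\tau(v)\ge 2$: such a link is assigned to $t$ through its ancestor-endpoint $v$, which is not a terminal (as $\tau(v)\ge 2$), so it is not in $\Uplambda^0_t$, while $\tau(v)\ge 2$ excludes $\Mscr^0_t$ and $\Uplambda^1_t$; the converse is immediate from the definitions. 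The vertices $v$ with $t_v=t$ are ancestors of $t$, hence pairwise comparable, and no cross-link joins two comparable vertices (if $u\neq w$ and $u$ is an ancestor of $w$, then $u$ lies in the principal subcactus of $w$, since a $w$–$r$ path inside that subcactus passes through $u$). Therefore the sets $\delta_{L_{\mathrm{cross}}}(v)$ over these $v$ are pairwise disjoint, giving $\upmu^1_t=\sum_{v:\,t_v=t,\,\tau(v)\ge 2}y_v$ and, summing over $t$, $\sum_{t\in T}\upmu^1_t=\sum_{v:\,\tau(v)\ge 2}y_v$. Plugging these identities in, the inequality to prove becomes $\sum_{v:\tau(v)=1}y_v+2\sum_{v:\tau(v)\ge 2}y_v\le |T|$, and since any $v$ with $y_v>0$ has $\tau(v)\ge 1$ by Lemma~\ref{lem:terminal_descendant_exists}, this is implied by $\sum_{v\in V}y_v\,\tau(v)\le |T|$.

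The core of the argument is this last inequality, which I would obtain by proving, for every terminal $t\in T$, the strengthened bundle constraint $\sum_{v:\,v\text{ ancestor of }t}y_v\le 1$, and then summing over $T$: $\sum_{t\in T}\sum_{v\text{ anc.\ }t}y_v=\sum_{v}y_v\cdot|\{t\in T:v\text{ anc.\ }t\}|=\sum_{v}y_v\tau(v)$. To prove the bundle constraint, fix $t$, let $i$ be the index with $t\in V_i$, and let $\Sscr^{\mathrm{all}}_t$ be the set of cross-links incident to some ancestor of $t$. Every such link lies in $L_i$; and since the ancestors of $t$ form a chain under the ancestor relation (see the discussion in Section~\ref{sec:simple_stack_problem}) and no cross-link joins two comparable vertices, each link of $\Sscr^{\mathrm{all}}_t$ is incident to exactly one ancestor of $t$, so $x(\Sscr^{\mathrm{all}}_t)=\sum_{v\text{ anc.\ }t}y_v$. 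Finally, any $L_{\mathrm{cross}}$-minimal feasible solution $F_i$ for $G_i$ contains at most one link of $\Sscr^{\mathrm{all}}_t$: two distinct such links would exhibit two endpoints of cross-links in $F_i$ with one an ancestor of the other (possibly equal), contradicting $L_{\mathrm{cross}}$-minimality. Writing $x|_{L_i}\in P^{\mathrm{min}}(G_i,L_i)$ as a convex combination of incidence vectors of such solutions then gives $x(\Sscr^{\mathrm{all}}_t)\le 1$, as required.

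The main obstacle I anticipate is isolating the bundle constraint in exactly this strengthened form. The weaker inequality $x(\Sscr_t)\le 1$ already used in this section only bounds the cross-links actually assigned to $t$, and summing it over $T$ merely reproduces $2\,x(L_{\mathrm{cross}})\le |T|$, which is too weak; one genuinely needs the bound on \emph{all} cross-links incident to ancestors of $t$, together with the verification that this is a legitimate bundle constraint of an $L_{\mathrm{cross}}$-minimal solution irrespective of the arbitrary choice of the terminal assignment $v\mapsto t_v$. Once that constraint is established, the remaining pieces — the identities $\sum_{t}\upmu^1_t=\sum_{\tau(v)\ge 2}y_v$ and $\sum_v y_v=2\,x(L_{\mathrm{cross}})$, and the termwise estimate $\sum_{\tau(v)=1}y_v+2\sum_{\tau(v)\ge 2}y_v\le\sum_v y_v\tau(v)$ — are routine.
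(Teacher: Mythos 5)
Your proposal is correct and follows essentially the same route as the paper: the heart of both arguments is the strengthened bundle constraint $\sum_{v\text{ ancestor of }t}x(\delta_{L_{\mathrm{cross}}}(v))\le 1$, derived from $L_{\mathrm{cross}}$-minimality exactly as you do, after which the claim reduces to bookkeeping. Your reformulation via $y_v$ and $\tau(v)$ is just a reorganization of the paper's direct estimate $\sum_{t}(x(\Sscr_t)+\upmu^1_t)\le\sum_{t}x(\{\ell\in L_{\mathrm{cross}}:\ell\text{ incident to an ancestor of }t\})\le|T|$.
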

\begin{proof}
We claim that for every terminal $t\in T$, the total $x$-weight of cross-links that are incident to an ancestor of $t$ is at most $1$.
Let $t\in T \cap V_i$ with $i\in [q]$ and consider an $L_{\mathrm{cross}}$-minimal solution $F_i$ of $G_i$.
Then by Lemma~\ref{lem:simple_observation_cross_link_domination}, the $L_{\mathrm{cross}}$-minimality of $F_i$ implies that 
$F_i$ contains at most one cross-link incident to an ancestor of $t$.
Because $x\in P^{\min}_{\mathrm{bundle}}$, this implies that indeed 
\[
  x\big(\big\{\{v,w\} \in L_{\mathrm{cross}} : v\text{ is an ancestor of }t \big\}\big)\ \le\ 1
\]
for every terminal $t\in T$.
Now recall that every link in $\mathcal{S}_t$ is a cross-link with an endpoint that is an ancestor of $t$.
Moreover, every cross-link $\ell \in \Mscr^1_t$ has an endpoint that is an ancestor of both the terminal $t$ and another terminal $\bar t\in T \setminus\{t\}$
with $\ell\notin \mathcal{S}_{\bar t}$.
Therefore, we get 
\[
 \sum_{t\in T}\left(x(\mathcal{S}_t) + \upmu^1_t\right) \le \sum_{t\in T}  x\big(\big\{\{v,w\} \in L_{\mathrm{cross}} : v\text{ is an ancestor of }t \big\}\big) \le |T|\enspace,
\]
implying
\[
\sum_{t\in T} \upmu^1_t\ \le\ \frac{1}{2}\cdot \Big(|T| - \sum_{t\in T}( x(\mathcal{S}_t) - \upmu^1_t)\Big)
\ =\  \frac{1}{2}\cdot \sum_{t\in T} \left( 1 - \uplambda^0_t - \upmu^0_t - \uplambda^1_t\right) \enspace.
\]
\end{proof}

In order to obtain an approximation ratio below $1.4$, we will use the improvements we described in this section together with a 
more careful analysis of the stack problem.
This analysis exploits that our previous lower bounds on the domination probabilities and on the fraction of dominated cross-links that we can remove
cannot be tight simultaneously.
Before explaining this in more detail, let us first define the \emph{stack problem}, which is a refined version of the simple stack problem we considered before.

\subsection{The stack problem}

An instance of the stack problem consists of
\begin{itemize}
\item an undirected graph $H$ with vertex set $T$,
\item a total preorder $\preceq_t$ on the stack $S_t$, which is the set of edges incident to $t\in T$,
\item non-negative edge weights $x$ satisfying $x(S_t)\leq 1$ for every stack $S_t$,
\item a partition of $S_t$ into $\Lambda^0_t$, $M^0_t$, $\Lambda^1_t$, and $M^1_t$ for every $t\in T$ such that
\begin{align*}
   \ell\ &\preceq_t\ \bar \ell & &\text{ for all }\ell, \bar \ell\in \Lambda_t^0 \\
\ell\ &\preceq_t\ \bar \ell & &\text{ for all }\ell, \bar \ell\in \Lambda_t^1 \\
\intertext{and}
\ell^0\ &\prec_t \ell & &\text{ for all }\ell^0 \in \Lambda_t^0,\ \ell \in S_t \setminus \Lambda_t^0\\
 m^0\ &\prec_t \ell^1 & &\text{ for all }m^0 \in M^0_t,\ \ell^1 \in \Lambda_t^1\\
\ell\ &\prec_t m^1 & &\text{ for all }\ell \in S_t \setminus M^1_t,\ m^1 \in  M_t^1\enspace,
\end{align*}
where we write $\ell_1 \prec \ell_2$ if $\ell_1 \preceq \ell_2$ and $\ell_2 \not\preceq \ell_1$, and
\item a coloring $c : T \to [q]$ such that for every edge $\ell$ in $H$, the endpoints of $\ell$ have distinct colors.
\end{itemize}
We define for every $t\in T$:
\begin{equation*}
   \lambda_t^0 := \sum_{l \in \Lambda_t^0} x_l\enspace, \qquad 
   \mu_t^0 := \sum_{l \in M_t^0} x_l\enspace, \qquad 
    \lambda_t^1 := \sum_{l \in \Lambda_t^1} x_l\enspace, \qquad
    \mu_t^1 := \sum_{l \in M_t^1} x_l\enspace.
\end{equation*}

As for the simple stack problem, we consider a sampling procedure where every terminal samples at most one link from its stack and
properties \ref{item:sampling_probability_link}, \ref{item:correlation_of_colors}, and \ref{item:different_corlors_independent} hold.
We consider the resulting (random) directed domination graph $D=(T,F)$ where $F$ is the set of sampled links and the orientation of the links is chosen 
such that $(u,v) \in F$ if and only if the link $\{u,v\}$ is sampled by the terminal $v$. 
The task in the stack problem is to compute a maximum cardinality removable subset of $F$.
In particular, the graph $H$ of the stack problem is the domination graph $D=(T,F)$. Due to this we also refer to the edges of $H$ as \emph{links}.

Before going into further technical details, we briefly sketch our goal for what comes next.
We will define a function $g : [0,1]^4 \to \mathbb{R}_{\ge 0}$ and fix $b=0.42$.
(See Section~\ref{sec:compute_domination_prob} for the precise definition of $g$.)
Then we will prove the following lower bound on the expected size of an optimum solution $R$ of the stack problem.

\begin{lemma}\label{lem:bound_stack_problem}
For any instance of the stack problem and any sampling process fulfilling \ref{item:sampling_probability_link}, \ref{item:correlation_of_colors}, and \ref{item:different_corlors_independent},
we have
\[
 \mathbb{E}[|R|] \ge b \cdot \sum_{t\in T} g(\lambda^0_t, \mu^0_t, \lambda^1_t, \mu^1_t)\enspace,
\]
where $R$ denotes a removable set of edges of maximum cardinality.
\end{lemma}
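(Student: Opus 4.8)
The plan is to mirror the structure of the simple stack analysis (Lemmas~\ref{lem:delte_a_third} and~\ref{lem:simple_domination_prob}), but refine both halves so that they cannot be simultaneously tight. Concretely, I would (i) define $g$ so that $b\cdot g(\lambda^0_t,\mu^0_t,\lambda^1_t,\mu^1_t)$ is a valid lower bound on the expected number of links of $F$ that terminal $t$ ``contributes'' to a maximum removable set $R$, and then (ii) prove the bound by a charging/accounting argument that assigns each such expected contribution to the terminal $t$ owning the relevant stack. The key conceptual point, as foreshadowed in Section~\ref{sec:link_types_stacks}, is a trade-off: the bound $|R|\geq \tfrac13\,|\{\ell\in F:\ell\text{ dominated}\}|$ from Lemma~\ref{lem:delte_a_third} is tight only on triangles where every link is dominated, and in that extremal situation the domination probability at each terminal is actually much larger than the generic $f(x(S_t))$ estimate. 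So I would run a two-regime argument: either many dominated links are ``easy'' to remove (a significantly better fraction than $\tfrac13$ can be deleted), or the domination events are so dense that the lower bound on the expected number of dominated links improves well beyond $\sum_t f(x(S_t))$.

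\textbf{Step 1: refined domination probabilities.} First I would compute, per terminal $t$ and conditioned on nothing outside $S_t$ and its incident arcs, the probability that $t$ receives an incoming \emph{dominated} link, now using the finer stack structure. Because of the partition $S_t=\Lambda^0_t\,\cupp\,M^0_t\,\cupp\,\Lambda^1_t\,\cupp\,M^1_t$ and the prescribed preorder ($\Lambda^0_t$ links are all at the bottom and mutually comparable, $M^1_t$ links are all at the top, etc.), an incoming link in $\Lambda^0_t$ is dominated whenever $t$ has \emph{any} outgoing link, an incoming link in $M^1_t$ is dominated only if $t$ has an outgoing link strictly below it, and so on. Integrating $1-e^{-x(B(\ell))}$ as in Lemma~\ref{lem:simple_domination_prob} but now block-by-block, with the $\Lambda^0_t$ and $\Lambda^1_t$ blocks contributing more than a generic block of the same weight, yields a function of $(\lambda^0_t,\mu^0_t,\lambda^1_t,\mu^1_t)$ that dominates $f(x(S_t))$. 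This part is essentially the refinement promised right after Lemma~\ref{lem:simple_domination_prob} and should be a (careful but routine) calculus computation. I would package the output as the domination-probability contribution inside $g$.

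\textbf{Step 2: refined removable fraction, and the trade-off.} Next I would strengthen Lemma~\ref{lem:delte_a_third}. Take a maximal removable set $R$; each dominated $\ell\notin R$ fails to be addable because of reason~\ref{item:dominating_link_deleted} (all its dominators are in $R$) or reason~\ref{item:only_dominating_link} ($\ell$ is the unique remaining dominator of some $\bar\ell\in R$). The $\tfrac13$ loss is tight only when these two reasons ``cover'' $R$ with multiplicity essentially $2$, which forces a very rigid local picture (short directed cycles with all links dominated). I would analyze the graph $D=(T,F)$ structurally: classify the arcs into those lying on such bad configurations versus the rest, show that on the ``good'' part one removes a fraction strictly larger than $\tfrac13$, and show that the ``bad'' part forces many terminals to have both an incoming and an outgoing link with the incoming one dominated — which, combined with Step~1, feeds back into a larger expected count of dominated links. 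Formally this is a case distinction: define an event (per terminal or per small subgraph) that is ``bad for the removable-fraction argument'', bound its probability, and show that whenever it is likely the domination-probability term is correspondingly larger; then take the worst case over the trade-off parameter and choose $b=0.42$ and $g$ so that both branches beat $b\sum_t g(\cdot)$. I expect this is where most of the real work lies.

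\textbf{Step 3: assembling the bound.} With the per-terminal expected contributions in hand, I would take expectations and sum over $t\in T$, using property~\ref{item:different_corlors_independent} to justify that the outgoing arcs at $t$ are independent of the incoming one (the outgoing arcs are sampled by $t$'s neighbors, all of different color), exactly as in the proof of Lemma~\ref{lem:simple_domination_prob}. Linearity of expectation turns $\mathbb E[|R|]$ into a sum of per-terminal terms, each of which is at least $b\cdot g(\lambda^0_t,\mu^0_t,\lambda^1_t,\mu^1_t)$ by Steps~1--2, giving $\mathbb E[|R|]\geq b\sum_{t\in T} g(\lambda^0_t,\mu^0_t,\lambda^1_t,\mu^1_t)$ as claimed. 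The main obstacle, as noted, is Step~2: making the ``the two estimates can't both be tight'' intuition quantitative requires a delicate local analysis of $D=(T,F)$ and a clean enough definition of $g$ (convex in each argument, matching the blockwise integrals, and compatible with the upper bound on $\sum_t\mu^1_t$ from Lemma~\ref{lem:upper_bound_mu1_links} and the up-link gain from Lemma~\ref{lem:generating_up_links} that will be used downstream) so that the final LP over $(\lambda^0_t,\mu^0_t,\lambda^1_t,\mu^1_t,\alpha)$ closes below $\apxfac$.
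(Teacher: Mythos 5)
Your high-level plan coincides with the paper's: define $g$ by integrating the domination probability block-by-block over the partition $\Lambda^0_t, M^0_t, \Lambda^1_t, M^1_t$ (this is exactly how the paper obtains $g$ in the proof of Lemma~\ref{lem:bound_probability_using_gain}), and then run a two-regime argument in which short odd cycles (the tight case of Lemma~\ref{lem:delte_a_third}) are either unlikely or compensated by a surplus in the domination count. But the quantitative core --- your Step~2 --- is missing, and it is not a routine ``delicate local analysis''. The paper needs two concrete devices you do not supply. First, a per-terminal accounting scheme: it sets $X_v = b\cdot\mathbbm{1}[v\text{ has an entering dominated link}] + Z_v$, where $Z_v$ takes one of four carefully tuned constants depending on which of four subevents $\Delta^{vw}_1,\dots,\Delta^{vw}_4$ of a path event $\Delta^{vw}$ occurs, and then proves combinatorially (Lemma~\ref{lem:deletion_componentwise}) that $\sum_v X_v$ never exceeds the size of a maximum removable set. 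This requires classifying the components of $F$ (arborescences, unicyclic components, maximally dominating odd cycles, and the even root-to-leaf path case of Lemma~\ref{lem:num_deletable_for_components}) and checking the constants against odd cycles of each length separately; none of this follows from the maximality argument behind the $\sfrac{1}{3}$ bound.

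Second, your description of where the compensating surplus comes from is misplaced. The improvement is not that the domination-probability term at the threatened terminal $v$ itself becomes larger. Rather, the paper exploits the correlation property~\ref{item:correlation_of_colors}: if the path $P_v$ ends at $w$, the odd-cycle-closing arc out of $w$ must enter a vertex of $v$'s color, so the same-color mass $\eta^{c_v}_w$ on the stack $S_w$ simultaneously upper-bounds the conditional probabilities of the bad subevents (Lemma~\ref{lem:bound_conditional_prob}) and, when large, yields an improved domination bound at $w$ --- not at $v$ --- via the $\mathrm{gain}$ term in \eqref{eq:distributing_eta_profit}. That surplus is then transferred back to the threatened terminals using the disjointness of the events $\Delta^{vw}$ over same-colored $v$ (Lemma~\ref{lem:amount_of_receivers_of_profit}), and the resulting inequality \eqref{eq:worst_case_no_short_odd_cycles} is only verified numerically for $b=0.42$. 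Without this transfer mechanism (or an equivalent one) your two regimes do not connect, so the proposal as written does not establish the lemma.
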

This lemma will be a strengthening of Lemma~\ref{lem:bound_simple_stack_problem} because we have $b=0.42 > \frac{1}{3}$ and we choose $g$ such that
\[
g(\lambda^0_t, \mu^0_t, \lambda^1_t, \mu^1_t)\ \ge\ f(\lambda^0_t + \mu^0_t + \lambda^1_t + \mu^1_t)\enspace .
\]
In what follows, we set $b=0.42$ and use $g$ for this fixed function which we formally define in Section~\ref{sec:compute_domination_prob}.

\subsection{Applying the stack problem}\label{sec:application_stack_game}

Next, we explain how we apply Lemma~\ref{lem:bound_stack_problem}.
First, we define an instance of the stack problem.
The graph $H$ is defined as for the simple stack problem, i.e., for every cross-link $\ell$ that is contained in the stacks $\Sscr_{t_1}$ and $\Sscr_{t_2}$, 
the graph $H$ contains a link $\{t_1,t_2\}$ with $x_{\{t_1,t_2\}} \coloneqq x_{\ell}$.
Also the order $\preceq_t$ on a stack $S_t$ is the same as for the simple stack problem, i.e., 
it is given by the order $\preceq_t$ on the corresponding links in $\mathcal{S}_t \subseteq L_{\mathrm{cross}}$.
Then we define $\Lambda^0_t$, $M^0_t$, $\Lambda^1_t$, and $M^1_t$ to be the subsets of $S_t$ that correspond to the 
subsets $\Uplambda^0_t$, $\Mscr^0_t$, $\Uplambda^1_t$, and $\Mscr^1_t$ of $\mathcal{S}_t$, respectively.
This partition of $S_t$ has indeed the properties required  in the definition of the stack problem as the following lemmas show.
\begin{lemma}
We have
\begin{itemize}
    \item $\ell \preceq_t \bar \ell$ for all $\ell, \bar \ell \in \Lambda_t^0$; 
    \item $\ell \preceq_t \bar \ell$ for all $\ell, \bar \ell \in \Lambda_t^1$.
\end{itemize}
\end{lemma}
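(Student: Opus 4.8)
The plan is to unwind the definition of the preorder $\preceq_t$ and to combine it with the structural facts already recorded about $\Uplambda^0_t$ and $\Uplambda^1_t$. Recall that for cross-links $\ell_u, \ell_v \in \Sscr_t$ incident to ancestors $u$ and $v$ of $t$, respectively, we defined $\ell_u \preceq_t \ell_v$ if and only if $v$ is an ancestor of $u$. Since the order $\preceq_t$ on the stack $S_t$ of the stack-problem instance was defined to coincide with the order $\preceq_t$ on the corresponding cross-links in $\Sscr_t$, and $\Lambda^0_t$, $\Lambda^1_t$ were defined to correspond to $\Uplambda^0_t$, $\Uplambda^1_t$, it suffices to argue on the cross-link stack $\Sscr_t$.

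First I would handle $\Uplambda^0_t$. By definition these are exactly the cross-links of $\Sscr_t$ that are incident to the terminal $t$ itself; hence for each $\ell \in \Uplambda^0_t$, the endpoint of $\ell$ that is an ancestor of $t$ is $t$ itself. Thus any two links $\ell, \bar\ell \in \Uplambda^0_t$ are both ``positioned'' at the vertex $t$, and since $t$ is an ancestor of itself, both $\ell \preceq_t \bar\ell$ and $\bar\ell \preceq_t \ell$ hold, which is exactly the claim for $\Lambda^0_t$.

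Next I would handle $\Uplambda^1_t$. By definition, a link $\{v,w\} \in \Uplambda^1_t$ has its endpoint $v$ (the one that is an ancestor of $t$) satisfy: $t$ is a descendant of $v$, the vertex $v$ lies on a cycle of length at least $3$ in $G$, and $t$ is the only terminal that is a descendant of $v$. These are precisely properties \ref{item:property_1_of_lambda_1}--\ref{item:property_3_of_lambda_1} of Lemma~\ref{lem:unique_endpoint_lambda_1}, which guarantees that there is at most one such vertex $v$. Consequently, every link in $\Uplambda^1_t$ shares the same ancestor endpoint $v$, and exactly as in the previous paragraph this gives $\ell \preceq_t \bar\ell$ and $\bar\ell \preceq_t \ell$ for all $\ell, \bar\ell \in \Uplambda^1_t$, establishing the claim for $\Lambda^1_t$.

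There is essentially no obstacle here: the statement follows directly from the definition of $\preceq_t$ together with the uniqueness result of Lemma~\ref{lem:unique_endpoint_lambda_1}. The only point requiring mild care is the bookkeeping identification between the stack-problem sets $\Lambda^0_t, \Lambda^1_t$ and the CacAP sets $\Uplambda^0_t, \Uplambda^1_t$ (and the observation that the preorder is transported verbatim), which is part of the construction of the stack-problem instance described immediately before the lemma.
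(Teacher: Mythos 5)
Your proof is correct and follows the same route as the paper: for $\Lambda^0_t$ all links share the endpoint $t$, and for $\Lambda^1_t$ Lemma~\ref{lem:unique_endpoint_lambda_1} gives a unique common ancestor endpoint $v$, so in each case the links are mutually comparable under $\preceq_t$. No issues.
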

\begin{proof}
By the definition of $\Uplambda_t^0$, all links contained in this set are incident to the terminal $t$ in the cactus $G$.
Hence,  $\ell \preceq_t \bar \ell$ for all $\ell,\bar \ell\in \Uplambda_t^0$.
Moreover, by Lemma~\ref{lem:unique_endpoint_lambda_1}, there is a vertex $v\in V$ such that all links in $\Uplambda_t^1$ are incident to $v$ in $G$.
Therefore, $\ell \preceq_t \bar \ell$ for all $\ell, \bar \ell \in \Uplambda_t^1$.
\end{proof}

\begin{lemma}
We have 
\begin{itemize}
    \item $\ell^0 \prec_t \ell$ for all  $\ell^0 \in \Lambda_t^0$  and $\ell \in S_t \setminus \Lambda_t^0$;
    \item $ m^0 \prec_t \ell^1$ for all  $m^0 \in M^0_t$ and  $\ell^1 \in \Lambda_t^1$;
    \item $\ell \prec_t m^1$ for all $\ell \in S_t \setminus M^1_t$ and  $m^1 \in  M_t^1$.
\end{itemize}
\end{lemma}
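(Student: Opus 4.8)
The plan is to translate the three required relations into comparisons along the chain of ancestors of $t$, and then to read them off from the defining properties of the four link types. For a link $\ell\in S_t$ let $\tilde\ell\in\mathcal{S}_t\subseteq L_{\mathrm{cross}}$ be the corresponding cross-link, and write $a(\ell)$ for the endpoint of $\tilde\ell$ that is an ancestor of $t$ in the cactus $G$. This is well defined: the other endpoint of $\tilde\ell$ lies in a principal subcactus different from the one containing $t$ (cross-links are not incident to the center $r$), hence it is not an ancestor of $t$. By the definition of $\preceq_t$, for $\ell,\ell'\in S_t$ we have $\ell\preceq_t\ell'$ iff $a(\ell')$ is an ancestor of $a(\ell)$; since $a(\ell)$ and $a(\ell')$ both lie on the (totally ordered) chain of ancestors of $t$, this yields $\ell\prec_t\ell'$ iff $a(\ell')$ is a \emph{strict} ancestor of $a(\ell)$. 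I will also use the elementary fact that strict ancestry propagates descendants: if $b$ is a strict ancestor of $a$, then every descendant of $a$ is a descendant of $b$ — given an $x$-$r$ path through $a$, its subpath from $a$ to $r$ passes through $b$, so $b$ lies on every $x$-$r$ path.

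Next I would record the consequences of the type definitions of Section~\ref{sec:link_types_stacks}. For $\ell\in\Lambda^0_t$ (i.e.\ $\tilde\ell\in\Uplambda^0_t$), the link $\tilde\ell$ is incident to $t$, so $a(\ell)=t$; for $\ell\in S_t\setminus\Lambda^0_t$ the link $\tilde\ell$ is not incident to $t$, so $a(\ell)\ne t$. For $\ell\in M^0_t\cup\Lambda^1_t$ the terminal $t$ is the unique terminal descendant of $a(\ell)$, while for $\ell\in M^1_t$ the vertex $a(\ell)$ has at least two terminal descendants (it has $t$ as a descendant, plus a further terminal). Moreover $a(\ell)$ lies on no cycle of length at least $3$ when $\ell\in M^0_t$, whereas $a(\ell^1)$ lies on such a cycle when $\ell^1\in\Lambda^1_t$, and Lemma~\ref{lem:unique_endpoint_lambda_1} applied with $v=a(\ell^1)$ shows that every strict ancestor of $a(\ell^1)$ is an ancestor of at least two terminals.

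With these facts the three claims reduce to short case distinctions on the chain of ancestors of $t$. First, if $\ell^0\in\Lambda^0_t$ and $\ell\in S_t\setminus\Lambda^0_t$, then $a(\ell)\ne t=a(\ell^0)$ forces $a(\ell)$ to be a strict ancestor of $a(\ell^0)$, so $\ell^0\prec_t\ell$. Second, if $m^0\in M^0_t$ and $\ell^1\in\Lambda^1_t$, then $a(m^0)\ne a(\ell^1)$ because only the latter lies on a cycle of length at least $3$; if $a(m^0)$ were a strict ancestor of $a(\ell^1)$, then by Lemma~\ref{lem:unique_endpoint_lambda_1} it would be an ancestor of at least two terminals, contradicting that $t$ is the unique terminal descendant of $a(m^0)$; hence $a(\ell^1)$ is a strict ancestor of $a(m^0)$ and $m^0\prec_t\ell^1$. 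Third, if $\ell\in S_t\setminus M^1_t$ and $m^1\in M^1_t$, then $a(m^1)\ne t$ (since $m^1\notin\Lambda^0_t$) and $a(m^1)$ has at least two terminal descendants: if $\ell\in\Lambda^0_t$ then $a(\ell)=t$, and since $a(m^1)$ is an ancestor of $t$ distinct from $t$ it is a strict ancestor of $a(\ell)$; if instead $\ell\in M^0_t\cup\Lambda^1_t$ then $t$ is the unique terminal descendant of $a(\ell)$, so $a(\ell)\ne a(m^1)$, and were $a(\ell)$ a strict ancestor of $a(m^1)$ the propagation fact would give $a(\ell)$ at least two terminal descendants, a contradiction — so $a(m^1)$ is a strict ancestor of $a(\ell)$; in both cases $\ell\prec_t m^1$. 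I do not expect a genuine obstacle here; the only points that need care are the well-definedness of $a(\ell)$ and keeping straight which hypothesis forces $a(\ell)\ne t$, which is exactly where the disjointness of the partition $S_t=\Lambda^0_t\cupp M^0_t\cupp\Lambda^1_t\cupp M^1_t$ (equivalently, the ``$\tilde\ell\notin\Uplambda^0_t$'' clauses in the type definitions) enters.
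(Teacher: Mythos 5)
Your proof is correct and follows essentially the same route as the paper's: both arguments reduce each relation to a comparison of the $t$-ancestor endpoints of the two links on the totally ordered chain of ancestors of $t$, and then invoke the defining properties of the four link classes together with Lemma~\ref{lem:unique_endpoint_lambda_1} (for the second bullet) and the fact that ancestry propagates terminal descendants (for the third). The only cosmetic difference is that in the third bullet, for $\ell\in\Lambda^0_t$, you conclude directly from $a(m^1)\neq t$ being an ancestor of $t$, whereas the paper instead observes that a terminal cannot be an ancestor of another terminal; both are valid.
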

\begin{proof}
Let $\ell^0 \in \Uplambda_t^0$ and $\ell \in \mathcal{S}_t \setminus \Lambda_t^0$.
Then $\ell^0 \prec_t \ell$ because $\ell^0$ is incident to the terminal $t$ in the cactus $G$ but $\ell$ is not.

Next we consider links $m^0\in \Mscr^0_t$ and  $\ell^1\in \Uplambda^1_t $.
Suppose for the sake of deriving a contradiction that $\ell^1 \preceq_t m^0$.
By the definition of $\Uplambda^1_t$, the link $\ell^1$  is incident to the unique vertex $v$ defined in Lemma~\ref{lem:unique_endpoint_lambda_1}.
Moreover, $\ell^1 \preceq_t m^0$ implies that $m^0$  is incident to an ancestor $a$ of $v$ in $G$.
By Lemma~\ref{lem:unique_endpoint_lambda_1}, we either have $a=v$ or the vertex $a$ is an ancestor of at least two distinct terminals.
This contradicts $m^0 \in \Mscr^0_t$.
Therefore, we must have $m^0 \prec_t \ell^1$.

Finally, we consider links $\ell\in \mathcal{S}_t\setminus \Mscr_t^1$ and $m^1\in  \Mscr_t^1$.
Suppose for the sake of deriving a contradiction that $m^1 \preceq_t \ell$.
By the definition of $\Mscr^1_t$, the link $m^1$  is incident to a vertex $u$ that is an ancestor of both $t$ and some other terminal $\bar t \ne t$.
The relation $m^1 \preceq_t \ell$ implies that $\ell$ is incident to an ancestor $a$ of $u$ in $G$.
Then $a$ is an ancestor of both $t$ and $\bar t$.
This contradicts $\ell \in \mathcal{S}_t\setminus \Mscr_t^1 = \Uplambda^0_t \cup \Mscr^0_t \cup \Uplambda^1_t$.
(Here, we used that the terminal $t$ cannot be an ancestor of the terminal $\bar t$ because otherwise every $\bar t$-$r$ path in $G$ would visit $t$ and thus use both edges incident to $t$,
which contradicts the 2-edge-connectivity of $G$.)
Hence, $\ell \prec_t m^1$.
\end{proof}

We complete the definition of our instance of the stack problem by
defining that  $t\in T$ has color $i$ if and only if $t\in V_i$.

Then we consider the same sampling procedure as for the simple stack problem.
We have shown in Section~\ref{sec:simple_stack_problem} that this procedure fulfills the properties  \ref{item:sampling_probability_link}, \ref{item:correlation_of_colors}, and \ref{item:different_corlors_independent}.

Let now $R$ be an optimum solution to the stack problem, i.e., $R\subseteq F$ is a maximum cardinality removable set of links in the random domination graph $D$.
As we have argued already for the simple stack problem in Section~\ref{sec:simple_stack_problem},  Algorithm~\ref{algo:stack_rounding} yields a solution with at most
\begin{equation}\label{eq:bound_in_terms_of_R}
\sum_{i=1}^q \mathbb{E}[|F_i|] - \mathbb{E}[|R|] = x(L_{\mathrm{in}}) + 2 \cdot x(L_{\mathrm{cross}}) - \mathbb{E}[|R|]
\end{equation}
many links in expectation.
Together with Lemma~\ref{lem:bound_stack_problem}, this implies the following.

\begin{lemma}\label{lem:stack_algo}
Algorithm~\ref{algo:stack_rounding} returns a solution with at most
\[
   x(L_{\mathrm{in}}) + 2 \cdot x(L_{\mathrm{cross}}) - b \cdot \sum_{t\in T} g(\uplambda^0_t, \upmu^0_t, \uplambda^1_t, \upmu^1_t)
\]
many links in expectation.
\end{lemma}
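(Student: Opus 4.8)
\textbf{Proof plan for Lemma~\ref{lem:stack_algo}.}
The plan is to combine the bound from equation~\eqref{eq:bound_in_terms_of_R} with the lower bound on $\mathbb{E}[|R|]$ provided by Lemma~\ref{lem:bound_stack_problem}, after checking that the instance of the stack problem we built from the CacAP instance is legitimate and that its parameters agree with the quantities $\uplambda^0_t,\upmu^0_t,\uplambda^1_t,\upmu^1_t$ appearing in the claim. Concretely, I would first recall that Algorithm~\ref{algo:stack_rounding}, applied to $x\in P^{\mathrm{min}}_{\mathrm{bundle}}(G,L)$, samples $L_{\mathrm{cross}}$-minimal solutions $F_i$ for each principal subcactus $G_i$, and that the argument already carried out in Section~\ref{sec:simple_stack_problem} shows the returned solution has size at most $\sum_{i=1}^q |F_i| - |R|$, where $R$ is a maximum-cardinality removable subset of the sampled cross-links. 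Taking expectations and using that every in-link lies in exactly one $L_i$ while every cross-link lies in exactly two, we get
\[
\sum_{i=1}^q \mathbb{E}[|F_i|] - \mathbb{E}[|R|] = x(L_{\mathrm{in}}) + 2\cdot x(L_{\mathrm{cross}}) - \mathbb{E}[|R|],
\]
which is exactly~\eqref{eq:bound_in_terms_of_R}.

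Next I would verify the reduction to the stack problem. The graph $H$, the preorders $\preceq_t$, the weights $x$, the partition of each $S_t$ into $\Lambda^0_t, M^0_t, \Lambda^1_t, M^1_t$, and the coloring $c$ are all defined in Section~\ref{sec:application_stack_game}; the two unnamed lemmas there establish that this partition satisfies the ordering conditions required in the definition of the stack problem, and the discussion recalls that the sampling procedure coming from Step~\ref{item:algo_sampling} of Algorithm~\ref{algo:stack_rounding} satisfies properties~\ref{item:sampling_probability_link}, \ref{item:correlation_of_colors}, and~\ref{item:different_corlors_independent} (this is inherited from Section~\ref{sec:simple_stack_problem}, using $L_{\mathrm{cross}}$-minimality, in particular that $|F_i\cap \mathcal{S}_t|\le 1$ even for terminals $t\notin V_i$). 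Crucially, by construction the parameters $\lambda^0_t,\mu^0_t,\lambda^1_t,\mu^1_t$ of this stack instance coincide with $\uplambda^0_t,\upmu^0_t,\uplambda^1_t,\upmu^1_t$ as defined in Section~\ref{sec:link_types_stacks}, and the notion of domination in $D=(T,F)$ matches the notion of domination between sampled cross-links (by Lemma~\ref{lem:simple_observation_cross_link_domination} and the definition of $\preceq_t$). Hence a removable set of links in $D$ corresponds precisely to a removable set of cross-links in $\bigcupp_{i=1}^q F_i$, so by Lemma~\ref{lem:removing_maintains_feasible_solution} the quantity $\mathbb{E}[|R|]$ appearing here is a valid lower bound on the expected number of cross-links removed in Step~\ref{item:compute_removable_set}.

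Finally, I would apply Lemma~\ref{lem:bound_stack_problem} to this stack instance, yielding
\[
\mathbb{E}[|R|]\ \ge\ b\cdot \sum_{t\in T} g(\uplambda^0_t,\upmu^0_t,\uplambda^1_t,\upmu^1_t),
\]
and substitute this into~\eqref{eq:bound_in_terms_of_R} to conclude that Algorithm~\ref{algo:stack_rounding} returns a solution of expected size at most
\[
x(L_{\mathrm{in}}) + 2\cdot x(L_{\mathrm{cross}}) - b\cdot \sum_{t\in T} g(\uplambda^0_t,\upmu^0_t,\uplambda^1_t,\upmu^1_t),
\]
as desired. This lemma itself is essentially a bookkeeping statement: all the genuine content has been relegated to Lemma~\ref{lem:bound_stack_problem} (whose proof, analyzing the refined stack problem, is the real technical obstacle and is deferred to the subsequent subsections). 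The only point requiring a little care in writing the proof is making the identification between the CacAP quantities and the abstract stack-problem quantities fully explicit, so that the hypotheses of Lemma~\ref{lem:bound_stack_problem} are unambiguously met; I expect no difficulty beyond citing the relevant earlier lemmas.
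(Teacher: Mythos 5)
Your proposal is correct and follows exactly the paper's own route: the paper likewise obtains this lemma by combining the bound \eqref{eq:bound_in_terms_of_R} (inherited from the simple-stack analysis and Lemma~\ref{lem:removing_maintains_feasible_solution}) with Lemma~\ref{lem:bound_stack_problem}, after verifying via the two lemmas of Section~\ref{sec:application_stack_game} that the constructed stack instance and sampling procedure satisfy the required properties. No gaps.
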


\subsection{Bounding the number of removable links for a fixed outcome}

In the following we give a lower bound on the maximum cardinality of a removable set of links in the stack problem.
We start by analyzing the number of removable links for a fixed outcome $F$ of the sampling procedure.
First, we consider a simple special case, which will be a useful building block in the following.

\begin{definition}
Let $D'=(T',F')$ be a subgraph of $D$.
We say that $D'$ is \emph{maximally dominating} if for every $(v,w)\in F'$, every outgoing link of $w$ in $F'$ dominates the link $(v,w)$.
\end{definition}

\begin{lemma}\label{lem:simple_bound_deletable}
If $D=(T,F)$ is maximally dominating and has an Eulerian walk, then there is a removable set $R\subseteq F$ containing
\begin{itemize}\itemsep2pt
\item  at least $\frac{1}{2} \cdot |F|$ many links if $|F|$ is even, and
\item  at least $\frac{1}{2} \cdot (|F|-1)$  many links if $|F|$ is odd.
\end{itemize}
\end{lemma}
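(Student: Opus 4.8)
The plan is to use the Eulerian walk to pair up consecutive links and put every second link of the walk into $R$. Write the Eulerian walk as a sequence of links $e_1, e_2, \ldots, e_{|F|}$, where consecutive links $e_j = (v,w)$ and $e_{j+1} = (w, u)$ share the head of $e_j$ as the tail of $e_{j+1}$. Since $D$ is maximally dominating, $e_{j+1}$ dominates $e_j$: indeed $e_{j+1}$ is an outgoing link of $w$ and $e_j = (v,w)$ is an incoming link of $w$, so by the definition of maximally dominating, $e_{j+1}$ dominates $e_j$. This means that if we place $e_j$ into $R$ and keep $e_{j+1}$ outside of $R$, the domination requirement for $e_j$ is satisfied.

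The key step is the selection rule. First I would handle the even case: set $R = \{ e_1, e_3, e_5, \ldots, e_{|F|-1}\}$, i.e.\ all odd-indexed links. Then $|R| = \frac{1}{2}|F|$. Every link $e_{2i-1} \in R$ is dominated by $e_{2i} \notin R$ (using that $e_{2i}$ follows $e_{2i-1}$ in the walk and maximal domination), so $R$ is removable. For the odd case, the walk has links $e_1, \ldots, e_{|F|}$ with $|F|$ odd; take $R = \{e_1, e_3, \ldots, e_{|F|-2}\}$, which has $\frac{1}{2}(|F|-1)$ links, and again each $e_{2i-1}$ is dominated by $e_{2i} \notin R$. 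In both cases one must double-check that the dominating links $e_{2i}$ are indeed not in $R$ — which holds because $R$ contains only odd-indexed links — so $R$ is a valid removable set of the claimed size.

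A subtle point to verify is that an Eulerian walk of a directed graph genuinely visits the links in such a way that the head of one link equals the tail of the next; this is exactly the definition of an Eulerian walk (a walk traversing every arc exactly once, respecting orientation), so consecutive arcs $e_j, e_{j+1}$ do share the vertex in the required incidence pattern. I would also note that the walk need not be closed; the argument only uses consecutivity within the walk, so an open Eulerian walk is fine. I do not anticipate a serious obstacle here — the lemma is a clean combinatorial building block — but the part requiring the most care is making sure the indexing is stated correctly in the odd case so that the last link $e_{|F|}$ (which has no successor in the walk) is simply left out of $R$, costing us one link and explaining the $-1$ term.

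\begin{proof}
Because $D=(T,F)$ has an Eulerian walk, we can list the links of $F$ as $e_1, e_2, \ldots, e_m$ with $m = |F|$ such that for every $j\in\{1,\ldots,m-1\}$, writing $e_j=(v_j,w_j)$, the tail of $e_{j+1}$ equals $w_j$, i.e.\ $e_{j+1}=(w_j, u)$ for some $u\in T$. In particular, $e_{j+1}$ is an outgoing link of $w_j$ in $F$ and $e_j$ is an incoming link of $w_j$ in $F$. Since $D$ is maximally dominating, $e_{j+1}$ dominates $e_j$ for every $j\in\{1,\ldots,m-1\}$.

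If $m$ is even, let $R := \{ e_{2i-1} : 1\le i \le m/2 \}$, so $|R| = m/2 = \frac{1}{2}|F|$. For each $e_{2i-1}\in R$, the link $e_{2i}$ dominates $e_{2i-1}$, and $e_{2i}\notin R$ since $R$ contains only odd-indexed links. Hence $R$ is removable.

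If $m$ is odd, let $R := \{ e_{2i-1} : 1\le i \le (m-1)/2 \}$, so $|R| = (m-1)/2 = \frac{1}{2}(|F|-1)$. Again, for each $e_{2i-1}\in R$ with $2i-1 \le m-2$, the link $e_{2i}$ dominates $e_{2i-1}$ and $e_{2i}\notin R$, so $R$ is removable. (The link $e_m$ is simply not included in $R$.)

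In both cases $R\subseteq F$ is a removable set of the claimed cardinality.
\end{proof}
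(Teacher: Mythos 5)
Your proof is correct and follows exactly the paper's argument: order the links along the Eulerian walk, place the odd-indexed links (excluding the last one when $|F|$ is odd) into $R$, and use maximal domination to conclude that each such link is dominated by its successor in the walk, which is even-indexed and hence not in $R$. No gaps.
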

\begin{proof}
Let $\ell_1, \dots, \ell_{|F|}$ be the links in $F$ in the order in which they appear in an Eulerian walk.
We can remove all $\ell_i$ with odd index $i < |F|$. Then $\ell_i$ is dominated by $\ell_{i+1}$, which is not removed.
\end{proof}

In order to reduce a more general case to the special case in the above lemma, we need the following simple and purely graph-theoretical observation.

\begin{lemma}\label{lem:partition_into_walks}
Let $G'=(V',E')$ be a connected directed graph where every vertex has at most one incoming arc.
Then $E'$ can be partitioned into the arc sets of $\max\{1,\ |\{v\in V' : \delta^+(v) =\emptyset\}| \}$ many walks in $G'$.
\end{lemma}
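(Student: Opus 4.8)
The plan is to prove, by induction on $|E'|$, a slightly stronger statement that carries enough information for the induction to close: if $G'=(V',E')$ is (weakly) connected with $\delta^-(v)\le 1$ for every $v$, and $S:=\{v\in V':\delta^+(v)=\emptyset\}$, then either $S=\emptyset$ and $E'$ is the arc set of a single closed walk, or $S\ne\emptyset$ and $E'$ is the disjoint union of the arc sets of walks $(W_s)_{s\in S}$ where $W_s$ ends at the vertex $s$. In both cases this is a partition into $\max\{1,|S|\}$ walks, so the lemma follows. Tracking the last vertex of each walk is the crucial extra bookkeeping: in the induction step a deleted arc must be appended to one specific walk, and the naive hypothesis does not say which.

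For the induction step assume $|E'|\ge 1$; the base case $|E'|=0$ forces $|V'|=1$, $S=\{v\}$, and one trivial walk at $v$. Since $\sum_v\delta^-(v)=|E'|=\sum_v\delta^+(v)$ and $\delta^-\le 1$, we get $|E'|\le|V'|$, and by connectivity $|E'|\in\{|V'|-1,|V'|\}$. If $S=\emptyset$, every out-degree is at least $1$, which forces $|E'|=|V'|$ and all in- and out-degrees equal to $1$; hence $G'$ is a disjoint union of directed cycles, and by connectivity a single cycle, i.e.\ one closed walk. If $S\ne\emptyset$, pick any $v\in S$; since $\delta^+(v)=\emptyset$ and $\delta^-(v)\le1$, the vertex $v$ has undirected degree at most $1$, and degree $0$ is impossible as $|E'|\ge1$. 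So $v$ has a unique incoming arc $(u,v)$ with $u\ne v$, the vertex $v$ is an undirected leaf, and $G'':=G'-v$ is again connected with all in-degrees at most $1$; we apply the induction hypothesis to $G''$ and then reinsert the arc $(u,v)$.

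The only delicate point is which walk of $G''$ should absorb $(u,v)$. Deleting $(u,v)$ only decreases $\delta^+(u)$, so the sink set of $G''$ equals $S\setminus\{v\}$ if $\delta^+_{G'}(u)\ge 2$, and $(S\setminus\{v\})\cup\{u\}$ if $\delta^+_{G'}(u)=1$. In the latter case $|S(G'')|=|S|$, the induction hypothesis gives walks ending at the sinks of $G''$, in particular a walk $W_u$ ending at $u$; replacing $W_u$ by $W_u$ followed by $(u,v)$ — a valid walk, $(u,v)$ being a fresh arc out of $u$ — produces a walk ending at $v$ and hence the desired family indexed by $S$. In the former case, if $S\setminus\{v\}\ne\emptyset$ we simply adjoin the single-arc walk $(u,v)$, regarded as a walk ending at $v$, to the $|S|-1$ walks for $G''$; and if $S=\{v\}$, then $G''$ has no sink, so by the $S=\emptyset$ analysis it is a single directed cycle through $u$, and traversing this cycle once from $u$ back to $u$ and then the arc $(u,v)$ gives a single walk whose arc set is all of $E'$. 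In every case we obtain exactly $\max\{1,|S|\}$ walks partitioning $E'$; note the extended walks may repeat vertices, which is harmless since walks are only forbidden to repeat arcs. I expect the main (and fairly mild) obstacle to be precisely this: realizing one must strengthen the statement to control the endpoints of the walks before the induction goes through, and then splitting correctly according to whether deleting the leaf $v$ turns $u$ into a new sink; the remaining steps are degree counting and bookkeeping.
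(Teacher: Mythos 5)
Your proof is correct, but it takes a different route from the paper's. The paper argues in one step via Eulerian circuits: it observes that the total in-degree excess $\sum_v \max\{|\delta^-(v)|-|\delta^+(v)|,0\}$ equals $h=|\{v:\delta^+(v)=\emptyset\}|$, adds $h$ auxiliary arcs to balance all degrees, takes a closed Eulerian walk of the resulting connected Eulerian graph, and deletes the auxiliary arcs again, which cuts the circuit into $\max\{1,h\}$ walks partitioning $E'$. You instead give a self-contained induction on $|E'|$ that peels off a sink (which, by the in-degree bound, is a leaf of the underlying undirected graph) and reattaches its unique incoming arc to the appropriate walk of the smaller graph; the price is that you must strengthen the induction hypothesis to record where each walk ends (at the sinks, or a single closed walk when there are no sinks), and split cases according to whether deleting the leaf creates a new sink at its in-neighbour. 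Your case analysis is complete and the degree-counting in the $S=\emptyset$ case (forcing a single directed cycle) is right. The trade-off: the paper's argument is shorter but leans on the Eulerian circuit theorem and is slightly cavalier about whether consecutive auxiliary arcs could produce empty segments; your induction is longer but fully elementary and additionally pins down that each walk terminates at a distinct sink, which is more information than the lemma asks for.
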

\begin{proof}
Because every vertex has in-degree at most one, the vertices $v$ with $|\delta^-(v)| > |\delta^+(v)|$ are precisely those with out-degree zero and in-degree one. 
Thus the total excess in terms of in-degree is equal to  
$\sum_{v\in V'} \max\{|\delta^-(v)|- |\delta^+(v)|,0\} = h := |\{v\in V' : \delta^+(v) =\emptyset\}|$.
Therefore, we can add a set $X$ of $h$ auxiliary arcs to obtain an Eulerian graph $(V, E'\cupp X)$, i.e., a connected graph where $|\delta^-(v)|= |\delta^+(v)|$ for all $v\in V'$.
This graph has a closed Eulerian walk. Removing the $h$ arcs in $X$ results in $\max\{1,h\}$ many walks, whose arc sets partition $E'$.
\end{proof}

\begin{lemma}\label{lem:delete_half}
Suppose $D=(T,F)$ is connected and maximally dominating, but is not a cycle of odd length.
Then there is a removable set $R\subseteq F$ containing  at least half of the dominated links in $F$.
\end{lemma}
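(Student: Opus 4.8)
The plan is to reduce the general connected, maximally dominating graph to the simple case handled by Lemma~\ref{lem:simple_bound_deletable}. Since every vertex of $D=(T,F)$ has at most one incoming arc (by property~\ref{item:sampling_probability_link}), I would first apply Lemma~\ref{lem:partition_into_walks} to $D$. Let $h := |\{v\in T : \delta^+_F(v)=\emptyset\}|$ be the number of ``sinks'' of $D$. The lemma partitions $F$ into the arc sets of $\max\{1,h\}$ walks $W_1,\dots,W_m$ in $D$.

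The key point is that every arc lying \emph{inside} one of these walks is dominated. Indeed, if $(u,v)$ and $(v,w)$ are consecutive arcs of some walk $W_j$, then $(v,w)$ is an outgoing link of $v$ and $(u,v)$ is the (unique) incoming link of $v$; since $D$ is maximally dominating, $(v,w)$ dominates $(u,v)$. Hence for each walk $W_j$ with $|W_j|$ arcs, Lemma~\ref{lem:simple_bound_deletable} (applied to the subgraph formed by $W_j$, which is maximally dominating and has an Eulerian walk, namely $W_j$ itself) yields a removable subset of $W_j$ of size at least $\lfloor |W_j|/2\rfloor$. Crucially, the arcs removed in $W_j$ are dominated by arcs of $W_j$ that are \emph{not} removed, so taking the union $R$ over all $j$ keeps $R$ removable: no dominator of a removed arc is itself removed. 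This gives
\[
|R| \ \ge\ \sum_{j=1}^m \left\lfloor \tfrac{|W_j|}{2}\right\rfloor \ \ge\ \tfrac{1}{2}\Big(\sum_{j=1}^m |W_j|\Big) - \tfrac{m}{2}\ =\ \tfrac{1}{2}|F| - \tfrac{m}{2}\ =\ \tfrac{1}{2}|F| - \tfrac{1}{2}\max\{1,h\}.
\]

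To conclude I need to compare $\max\{1,h\}$ with the number of \emph{non}-dominated arcs, which I will call $p$. An arc $(v,w)\in F$ is non-dominated precisely when $w$ has no outgoing link in $F$ that dominates $(v,w)$; in particular every arc into a sink is non-dominated, so $p \ge h$. If $h\ge 1$, then $\max\{1,h\}=h\le p$, and the bound above gives $|R|\ge \tfrac12(|F|-p) = \tfrac12\cdot|\{\text{dominated arcs}\}|$, as desired. The remaining case is $h=0$: then every vertex has out-degree at least one, and combined with in-degree at most one and connectedness this forces $D$ to be a single directed cycle. If additionally no arc of this cycle is non-dominated, the cycle is maximally dominating and closed, so I can apply Lemma~\ref{lem:simple_bound_deletable} directly: if $|F|$ is even I remove half the arcs; if $|F|$ is odd, the hypothesis rules out an odd cycle only when the cycle is not maximally dominating — wait, here it \emph{is} maximally dominating, so this sub-subcase is exactly the excluded ``odd cycle'' and does not arise. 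If instead the cycle has at least one non-dominated arc, say $(v,w)$, I cut the cycle at $w$ to obtain a single walk on all $|F|$ arcs with the non-dominated arc as its first arc; all other $|F|-1$ arcs are internal to this walk hence dominated, and Lemma~\ref{lem:simple_bound_deletable} applied to this walk (Eulerian, maximally dominating) removes at least $\lfloor |F|/2\rfloor \ge \tfrac12(|F|-1)$ arcs, which is at least half of the $|F|-1 \ge |\{\text{dominated arcs}\}|$ dominated arcs — in fact exactly $\tfrac12|\{\text{dominated arcs}\}|$ when $|F|-1$ equals the number of dominated arcs. I would double-check the parity bookkeeping here, but the counting $p\ge\max\{1,h\}$ is what drives everything.

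The main obstacle I anticipate is the careful parity/edge-case analysis when $h=0$: one must rule out the odd-cycle case (legitimately excluded by hypothesis) while still handling the even cycle and the cycle-with-a-non-dominated-arc cases so that the removable sets patched together across walks remain genuinely removable — i.e., verifying that arcs chosen to dominate removed arcs are never themselves removed, which is automatic within a single walk by the construction in Lemma~\ref{lem:simple_bound_deletable} but should be stated explicitly. Everything else is routine once the reduction via Lemma~\ref{lem:partition_into_walks} and the inequality $p \ge \max\{1,h\}$ are in place.
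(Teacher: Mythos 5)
Your proof is correct and follows essentially the same route as the paper's: partition $F$ into $\max\{1,h\}$ walks via Lemma~\ref{lem:partition_into_walks}, apply Lemma~\ref{lem:simple_bound_deletable} to each walk, and absorb the loss of $\tfrac{1}{2}\max\{1,h\}$ against the non-dominated arcs entering sinks, handling $h=0$ (where $D$ is forced to be an even cycle) separately. The only wrinkle is your second sub-case for $h=0$ (a cycle containing a non-dominated arc): it is vacuous, since the lemma's hypothesis already makes $D$ maximally dominating, so in a cycle every arc is dominated by the outgoing arc of its head and the cutting argument is never needed.
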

\begin{proof}
Let us first consider the case where $|\{v\in T : \delta^+(v) =\emptyset\}| =0$.
Then every vertex has exactly one incoming and one outgoing link 
because $|T| \le \sum_{t\in T} |\delta^+_F(t)| =\sum_{t\in T} |\delta^-_F(t)| $ and every terminal has in-degree at most one in $F$.
Using that $D$ is connected, we conclude that $D$ is a cycle and hence by assumption it is an even cycle.
Therefore, Lemma~\ref{lem:simple_bound_deletable} implies that there is a removable set containing at least half of the dominated links.

Now we consider the remaining case where $h:=|\{v\in T : \delta^+(v) =\emptyset\}| \ge 1$.
We may assume $|F|>0$, otherwise the statement is trivial.
Then by Lemma~\ref{lem:partition_into_walks}, we can partition $F$ into the edge sets of $h$ walks.
Applying Lemma~\ref{lem:simple_bound_deletable} to each of these walks (or more precisely to the graph consisting of the vertex set and the arc set of the walk), implies that there is a removable set containing at least $\frac{1}{2}(|F| -h)$ many links.
Because $D$ is connected, each of the vertices with out-degree zero, has an incoming link.
This link is not dominated. Hence, $F$ contains at most $|F|-h$ many dominated links.
\end{proof}

If $D$ is an odd cycle, Lemma~\ref{lem:simple_bound_deletable} yields a removable set containing at least
\[ \frac{1}{2} \cdot |\{ \ell\in F : \ell\text{ is dominated}\}| -  \frac{1}{2}\] many links and this bound is tight.
Because we cannot delete at least half of the dominated links in this case, odd cycles will play a special role in our analysis.
In some other cases, we are able to obtain a stronger bound than the one from Lemma~\ref{lem:delete_half}.

\begin{lemma}\label{lem:special_bound_deletable}
Suppose $D=(T,F)$ is connected and maximally dominating, but is not a cycle of odd length.
If there is an even length $s$-$t$ path in $D$ such that $s$ has in-degree zero and $t$ has out-degree zero and $F\ne \emptyset$,
then there is a removable set $R\subseteq F$ containing at least
$ \frac{1}{2} \cdot |\{ \ell\in F : \ell\text{ is dominated}\}| +  \frac{1}{2}$ many links.
\end{lemma}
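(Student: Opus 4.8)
The plan is to reduce this strengthened bound to the already-established Lemma~\ref{lem:delete_half} by exhibiting a single ``extra'' link that can be removed without spending a new protector. First I would invoke Lemma~\ref{lem:partition_into_walks} to partition $F$ into the arc sets of $h := |\{v\in T : \delta^+(v)=\emptyset\}|$ walks (recall $h\ge 1$ here, since the hypothesis gives a vertex $t$ of out-degree zero, so $D$ is not a single cycle). Crucially, I want one of these walks to be exactly the promised even-length $s$-$t$ path $P$: since $s$ has in-degree zero it is a natural start of a walk, and $t$ has out-degree zero so $P$ cannot be extended past $t$; a careful choice in the decomposition of Lemma~\ref{lem:partition_into_walks} (start the greedy walk-extraction at $s$) makes $P$ one of the $h$ walks. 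Write $P = \ell_1,\ell_2,\dots,\ell_{2m}$ for its $2m$ links (even length, $m\ge 1$).

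Next I would apply Lemma~\ref{lem:simple_bound_deletable} to each of the $h$ walks and count more carefully than in the proof of Lemma~\ref{lem:delete_half}. For the walk $P$, which has an \emph{even} number of links, Lemma~\ref{lem:simple_bound_deletable} gives a removable set inside $P$ of size $\frac{1}{2}|P| = m$, i.e.\ it removes \emph{all} the odd-indexed links $\ell_1,\ell_3,\dots,\ell_{2m-1}$; note that since $D$ is maximally dominating, $\ell_i$ is dominated by $\ell_{i+1}$ for every $i<2m$, so $\ell_{2m}$ is the only link of $P$ that may fail to be dominated, and we removed $m = \frac12(|P|-1) + \frac12$ links, i.e.\ half of the at most $|P|-1$ dominated links of $P$ \emph{plus an extra $\frac12$}. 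For each of the other $h-1$ walks $W$, Lemma~\ref{lem:simple_bound_deletable} removes at least $\frac12(|W|-1)$ links when $|W|$ is odd and $\frac12|W|$ when $|W|$ is even, and in all cases at least $\frac12$ of the dominated links of $W$ (each walk has at most $|W|-1$ dominated links, and we delete at least $\lfloor |W|/2\rfloor$). Summing over all $h$ walks, the removable set $R$ obtained as the union of the per-walk removable sets has size at least $\frac12(|F|-h) + \tfrac12$; and since $D$ is connected, each of the $h$ out-degree-zero vertices carries an incoming link that is not dominated, so $F$ has at most $|F|-h$ dominated links, giving $|R| \ge \frac12|\{\ell\in F : \ell\text{ dominated}\}| + \frac12$ as claimed.

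The disjointness of the per-walk removable sets is automatic because the walks partition $F$; the only thing to double-check is that the union is still removable in $D$, i.e.\ that a removed link $\ell_i$ of one walk is dominated by a link of $F\setminus R$ — this holds because $\ell_i$'s dominator $\ell_{i+1}$ lies in the same walk and was kept (even-indexed, or the last link), so it is not in $R$. This uses that ``dominated/dominating'' is determined by the stack order at the shared vertex, which is exactly how Lemma~\ref{lem:simple_bound_deletable} is phrased, and it only refers to links within a single walk, so there is no interference between walks.

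The main obstacle is making the decomposition of Lemma~\ref{lem:partition_into_walks} cooperate so that the given even $s$-$t$ path $P$ is one of the $h$ walks in the partition. The cleanest fix is to not rely on an arbitrary output of Lemma~\ref{lem:partition_into_walks} but to rerun its proof with the auxiliary-arc trick adapted: add the $h$ auxiliary arcs so that one of them, together with $P$, closes a cycle and the Eulerian walk can be started so that traversing the closed tour and then deleting auxiliary arcs yields $P$ as one piece. Alternatively, and perhaps more robustly, I would argue directly: remove the links of $P$ from $F$ first, observe the remainder $D - E(P)$ still has in-degree at most one everywhere and (if nonempty) decomposes into walks by Lemma~\ref{lem:partition_into_walks}, handle those walks as above, and handle $P$ separately by the even-length case of Lemma~\ref{lem:simple_bound_deletable}; then reassemble. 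A minor subtlety to verify in this direct version is the count of dominated links: deleting $E(P)$ can turn a previously-dominated link of $D-E(P)$ non-dominated only if its dominator lay in $P$, but a dominator of a link $(u,v)$ is an outgoing link of $v$, and $v$ having out-degree zero in $D$ would make such an outgoing link impossible, so in fact $D-E(P)$ has exactly its own dominated links, and the bookkeeping closes.
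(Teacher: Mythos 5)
Your ``more robust'' fallback is essentially the paper's own proof: the paper deletes $E(P)$, applies Lemma~\ref{lem:simple_bound_deletable} to the even path $P$ (gaining the extra $\sfrac{1}{2}$), applies Lemma~\ref{lem:delete_half} to each connected component of $(T,F\setminus E(P))$, and observes that no arc of $P$ dominates any link of $F\setminus E(P)$, so the domination counts add up. Your primary route --- forcing $P$ to be one of the $h$ walks produced by Lemma~\ref{lem:partition_into_walks} --- is shakier than you suggest: starting the Eulerian walk at $s$ does not make it follow $P$, since at an internal vertex of $P$ with several outgoing arcs the walk may diverge, and the piece beginning at $s$ need not end at $t$. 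You were right to distrust it and to supply the direct version instead.

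The one genuine flaw is your justification of the ``minor subtlety'' in the direct version. You claim that a link $(u,v)\in F\setminus E(P)$ cannot lose its dominator because ``$v$ having out-degree zero in $D$ would make such an outgoing link impossible''; but $v$ is not assumed to have out-degree zero (a dominated link's head has positive out-degree by definition), so this sentence proves nothing. The correct argument uses the in-degree bound, not out-degrees: an arc $(v,w)\in E(P)$ can only dominate an arc entering $v$; if $v=s$ there is no such arc, and otherwise the unique arc entering $v$ is the arc of $P$ entering $v$, hence arcs of $P$ dominate only arcs of $P$, and in particular the dominator of any dominated link of $F\setminus E(P)$ survives the deletion. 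With that repair (and noting that, since $s$ has in-degree zero and $D$ is connected with all in-degrees at most one, $D$ is acyclic, so no component of $F\setminus E(P)$ is an odd cycle and Lemma~\ref{lem:delete_half} indeed applies), your argument closes and matches the paper's.
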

\begin{proof}
Let $P$ be an even length $s$-$t$ path in $D$ such that $s$ has in-degree zero in $D$ and $t$ has out-degree zero in $D$.
Because every vertex has in-degree at most one in $D$, none of the arcs of $P$ dominates any link in $F\setminus E(P)$.
(Here, $E(P)$ denotes the arc set of $P$.)
Hence, the number of dominated links in $(T,F\setminus E(P))$ is exactly $|\{\ell \in F\setminus E(P) : \ell\text{ is dominated in }(T,F)\}|$.
Therefore, applying Lemma~\ref{lem:delete_half} to every connected component of $(T, F\setminus E(P))$ and applying Lemma~\ref{lem:simple_bound_deletable} to $P$ implies that there is a removable set $R\subseteq F$ containing at least
\[
 \frac{1}{2} \cdot |\{\ell \in F\setminus E(P) : \ell\text{ is dominated in }(T,F)\}| + \frac{1}{2} \cdot |E(P)|
\]
many links. 
This completes the proof because $ |\{\ell \in E(P) : \ell\text{ is dominated in }(T,F)\}|= |E(P)| -1$.
\end{proof}

In order to analyze the maximum size of a removable set of links in the general case, where the domination graph $(T,F)$ is not necessarily maximally dominating, we partition $F$ into several \emph{components}.
\begin{definition}
A \emph{component} of $F$ is a minimal nonempty set $\bar F \subseteq F$ such that
\begin{itemize}
\item if $\bar \ell\in \bar F$ dominates $\ell \in F$, then $\ell\in \bar F$, and
\item if $\bar \ell \in \bar F$ is dominated by $\ell \in F$, then $\ell \in \bar F$.
\end{itemize}
\end{definition}
In the following, we denote by $\Fscr$ the set of all components, which is a partition of $F$.
When analyzing the maximum size of a removable set $R\subseteq F$, we can consider the different components of $F$ independently.
We call a connected component of a graph \emph{trivial} if it consists only of a single vertex.
For a component $\bar F\in \Fscr$, the graph $(T,\bar F)$ has only one non-trivial connected component by the minimality of the component $\bar F$.

\begin{lemma}\label{lem:num_deletable_for_components}
Let $F'\in \Fscr$ be a component of $F$ and let $(T',F')$ be the unique non-trivial connected component of $(T,F')$.
Then there is a removable set $R'\subseteq F'$ containing at least
\begin{itemize}\itemsep3pt
\item $\frac{1}{2}\cdot |\{\ell\in F' : \ell\text{ dominated}\}| - \frac{1}{2}$ links if $(T',F')$ is a maximally dominating odd cycle;
\item $\frac{1}{2}\cdot |\{\ell\in F' : \ell\text{ dominated}\}| + \frac{1}{2}$ links if $(T',F')$ is an arborescence containing an even length path from the root to one of its leaves; and
\item $\frac{1}{2}\cdot |\{\ell\in F' : \ell\text{ dominated}\}|$ links, otherwise.
\end{itemize}
\end{lemma}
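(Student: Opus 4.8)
\textbf{Proof plan for Lemma~\ref{lem:num_deletable_for_components}.}
The plan is to reduce the statement about a general component to the maximally-dominating case already handled by Lemmas~\ref{lem:simple_bound_deletable}, \ref{lem:delete_half}, and~\ref{lem:special_bound_deletable}. The key observation is that within a single component $F'$, domination is an ``almost transitive'' relation that induces additional structure. First I would describe the non-trivial connected component $(T',F')$ of $(T,F')$: since every vertex has in-degree at most one in $F$, the underlying undirected graph on $T'$ has at most one cycle, so $(T',F')$ is either an arborescence (directed away from a root that has in-degree zero), or it has exactly one cycle $C$ with arborescences hanging off its vertices; in the latter case, by minimality of the component, the cycle together with every link dominated along it forces the whole structure, and one checks that either the component is precisely a maximally dominating cycle, or the cycle is ``broken'' for domination purposes at some vertex.

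Next I would split into the three cases of the statement. If $(T',F')$ is a maximally dominating odd cycle, apply Lemma~\ref{lem:simple_bound_deletable} directly (an odd cycle has an Eulerian walk with an odd number of edges), yielding $\tfrac12(|F'|-1)$ removable links, and note $|F'|=|\{\ell\in F':\ell\text{ dominated}\}|$ since every link of the cycle is dominated. If $(T',F')$ is an arborescence with an even-length root-to-leaf path, I would pass to the maximally dominating ``closure'': form the auxiliary graph $(T',F'')$ where $F''$ contains, for each arc $(v,w)$, all outgoing arcs of $w$ that dominate $(v,w)$ — but in a component that is an arborescence this is already maximally dominating on the arcs present, so one can directly invoke Lemma~\ref{lem:special_bound_deletable} once we verify the arborescence is maximally dominating (which follows because an arborescence rooted at an in-degree-zero vertex has no vertex with both an incoming link and a non-dominating outgoing link within the component, again by minimality). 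This gives the $+\tfrac12$ bound. In all remaining cases — arborescence without an even root-to-leaf path, or a component containing a cycle that is not a maximally dominating odd cycle — I would argue that $(T',F')$ is still maximally dominating (by the minimality of components, an arc $\bar\ell$ in $F'$ has all the links it dominates in $F'$, and within the single non-trivial connected component these are precisely the outgoing links at its head that dominate it), so Lemma~\ref{lem:delete_half} applies and yields at least $\tfrac12\cdot|\{\ell\in F':\ell\text{ dominated}\}|$ removable links. The trivial connected components of $(T,F')$ contribute no links and can be ignored.

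The main obstacle is making precise the claim that every component $(T',F')$ (other than an odd maximally dominating cycle) is itself maximally dominating. This is where the minimality condition in the definition of \emph{component} must be used carefully: I need to rule out the configuration where $\bar\ell=(v,w)\in F'$ has an outgoing arc $(w,z)\in F'$ that does \emph{not} dominate $\bar\ell$. If such a configuration existed, $(w,z)$ would have entered $F'$ only by being dominated by something, or by dominating something in $F'$; tracing this chain and using that each vertex has a unique incoming arc, one shows the chain either closes into the (unique) cycle or terminates, and in either case the arc $(w,z)$ would either force $(v,w)$ out of minimality or reveal that we are in the odd-cycle case. I expect this bookkeeping — essentially a case analysis on how the ``domination chains'' weave through the single cycle of $(T',F')$ — to be the delicate part; once it is settled, the three bounds follow mechanically from the three previously proven lemmas about maximally dominating graphs.
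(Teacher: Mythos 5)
Your overall strategy --- reduce every component to the maximally dominating case and then invoke Lemmas~\ref{lem:simple_bound_deletable}, \ref{lem:delete_half}, and~\ref{lem:special_bound_deletable} --- matches the paper's first step, and your treatment of arborescence components is sound in spirit: minimality of the component does force every arc whose tail has an incoming arc and which does not lie on a cycle to dominate that incoming arc, so arborescence components are indeed maximally dominating. The genuine gap is precisely the claim you flag as your ``main obstacle'': that every component other than a maximally dominating odd cycle is itself maximally dominating. This is false. Consider a component whose non-trivial part is the directed triangle $v_1\to v_2\to v_3\to v_1$ in which $(v_1,v_2)$ dominates $(v_3,v_1)$ and $(v_2,v_3)$ dominates $(v_1,v_2)$, but $(v_3,v_1)$ does not dominate $(v_2,v_3)$. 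These three arcs are linked by the two domination relations, so they form a single minimal component; the graph is unicyclic, not an arborescence, and not maximally dominating. The same phenomenon occurs for cycles of either parity, possibly with arborescences attached. Your proposed ``chain tracing'' cannot rule this out because there is nothing to rule out: minimality only yields that \emph{at most one} arc whose tail has an incoming arc fails to dominate it, and that this arc lies on the unique cycle --- it does not force that arc to be dominating, nor does its existence contradict minimality or place you in the odd-cycle case.

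The paper closes exactly this case with a small trick absent from your plan: having established that there is exactly one such bad arc $(v,w)$ and that it lies on the cycle, it introduces a fresh auxiliary vertex $a$ and replaces $(v,w)$ by $(a,w)$. This breaks the cycle, leaves all domination relations among the arcs of $F'$ unchanged (domination depends only on the head of an arc and the stack at that head), and yields a connected, maximally dominating graph that is not a cycle, so Lemma~\ref{lem:delete_half} gives a removable set containing at least half of the dominated links --- precisely the ``otherwise'' bound. Without this step (or an equivalent direct argument for a cycle carrying one non-dominating arc), your case analysis does not cover all components.
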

\begin{proof}
If $(T',F')$ is maximally dominating, the statement follows from
 Lemmas~\ref{lem:simple_bound_deletable},~\ref{lem:special_bound_deletable}, and~\ref{lem:delete_half}.
Let us therefore assume that $(T',F')$ is not maximally dominating.

Because every vertex has at most one incoming link in $F'$, we have $|F'| \le |T'|$.
Hence, $(T',F')$ contains at most one cycle.
By the minimality of $F'$, every link $(v,w)\in F'$ with $\delta_F^-(v)\ne \emptyset$ that is not part of this cycle must 
dominate the incoming link of $v$.
Moreover, there is at most one link $(v,w)\in F'$ in a cycle that does not dominate the incoming link of the vertex $v$.
Because $(T',F')$ is not maximally dominating, there is exactly one such link $(v,w)$.
We add a new auxiliary vertex $a$ to $T'$ and replace the link $(v,w)$ by a link $(a,w)$.
This modification does not change whether a link $\ell\in F'$ dominates another link $\ell'\in F'$.
The resulting component is maximally dominating and hence Lemma~\ref{lem:delete_half} implies that 
there is a removable set $R'\subseteq F'$ containing at least half of the dominated links.
\end{proof}

\subsection{Dangerous events: correlation vs.~independence}

As mentioned before, components that are maximally dominating odd cycles will play a special role in our analysis because 
these are the only reason why we cannot always remove at least half of the dominated links.
Especially short odd cycles only allow us to delete a small fraction of the dominated links.
Let us start with an informal explanation of how we avoid the worst-case assumption that we can always remove only a third of the dominating links.

The idea is to prove that one of the following is true:
\begin{itemize}
\item It is unlikely that there are many short odd cycles.
\item We can improve our previously used lower bound on the number of dominated links.
\end{itemize}
To this end we define the value $\eta^c_t$  for every terminal $t\in T$ and every color $c$ to be the total weight
\begin{equation*}
\eta^c_t := \sum_{\{v,t\}\in S_t : v\text{ has color }c} x_{\{v,t\}}
\end{equation*}
of links between the vertex $t$ and vertices of the color $c$ in the graph $H$.
If these values are close to $0$, it is unlikely that there are many short odd cycles.
If these values are large, we can improve our lower bound on the number of dominated links.
The reason why we can improve in the latter case is that the sampling of the links entering vertices of color $c$ is correlated. 
More precisely, by property~\ref{item:correlation_of_colors} of the sampling procedure, there is always at most one link $(t,v)\in F$ for which $v$ has color $c$.
We will use this to prove a better lower bound on the probability that an incoming link of $t$ is dominated.

Let us now outline how we obtain a good lower bound on the maximum size of a removable set of links when the numbers $\eta^c_t$ are small, i.e., close to zero.
We will not directly bound the probability of odd cycle components.
Instead, we will consider certain events $\Delta^{vw}$ for pairs $v,w$ of terminals.
These events have two important properties:
\begin{itemize}
\item Whenever there is a short odd cycle component, some of the $\Delta^{vw}$ events must happen. (See Lemma~\ref{lem:odd_cycles_imply_delta_events} for the precise statement.)
\item We can give a simple bound on the  probability of short odd cycle components, conditioned on $\Delta^{vw}$, and this bound is good for small $\eta^c_t$.
(See Lemma~\ref{lem:bound_conditional_prob}.)
\end{itemize}
These two properties allow us to show that for small values of $\eta^c_t$, the probability of short odd cycle components is small,
no matter whether the probability of the events $\Delta^{vw}$ is small or large.

Next, we formally define the events $\Delta^{vw}$.
First we will define a path $P_v$ in the graph $(T,F)$ for each vertex $v\in T$. 
Let $c_v$ be the color of $v$.
Then we consider the graph
$(T, F \setminus F_{c_v})$, where $F_{c_v} := \delta_F^-(\{u\in T: u\text{ has color }c_v\})$ is the set of links entering a vertex of color $c_v$.
In terms of our random sampling procedure, i.e., Algorithm~\ref{algo:stack_rounding}, which acts independently on each subcactus, $F_{c_v}$ are all cross-links sampled by the principal subcactus containing $v$.
If $v$ has no outgoing link in $F \setminus F_{c_v}$, the path $P_v$ has length 0. 
(In this case $v$ has no outgoing link in $F$ at all, because there can be no link from $v$ to a vertex of same color.
Later, it will become important that we consider $F\setminus F_{c_v}$.)
Otherwise, $v$ has an outgoing link in $F \setminus F_{c_v}$ and we follow an outgoing link in $F \setminus F_{c_v}$ which is lowest on the stack $S_v$. 
We break ties in an arbitrary deterministic way, independent of the sampled links.
For example by fixing a total order on the stack upfront and using this to break ties. 
(Hence, the only randomness involved in the events $\Delta^{vw}$ is the sampling of the links.)
Then we iteratively, do the following.
Let $u$ be the current vertex and let $\ell$ be the unique incoming link of $u$ in $F$.
If $u$ has at least one outgoing link in $F \setminus F_{c_v}$ that dominates $\ell$, we follow a link with that property.
We choose that link to be lowest possible on the stack $S_v$.
Again, we break ties in an arbitrary deterministic way.
Otherwise, i.e., if $u$ has no outgoing link in $F\setminus F_{c_v}$ dominating $\ell$, then the path $P_v$ ends in $u$.
This procedure terminates and produces a path because $v$ has no incoming link in $F \setminus F_{c_v}$ and all other vertices have at most one incoming link.

We recall that $b\coloneqq 0.42$. (This is the constant $b$ appearing in Lemma~\ref{lem:bound_stack_problem}.)
\begin{definition}
For $v,w\in T$ the event $\Delta^{vw}$ is the event in which the path $P_v$ ends in $w$ and the length of $P_v$ is strictly positive, even, and at most $\frac{1}{1-2b}$.
\end{definition}
Notice that the event $\Delta^{vw}$ is empty if $v$ and $w$ have the same color $c_v=c_w$ because in that case $w$ has no incoming link in $F \setminus F_{c_v}$.
Moreover, the events $\Delta^{vw}$ and $\Delta^{vu}$ for $w\ne u$ are disjoint.
We remark that the upper bound $\frac{1}{1-2b}$ on the length of $P_v$ would not be needed for our randomized algorithm, but it simplifies derandomization. (See Section~\ref{sec:derandomizing}.)

The next lemma provides a connection between odd cycles and the events $\Delta^{vw}$ for $v,w\in T$.
\begin{lemma}\label{lem:odd_cycles_imply_delta_events}
Let $F'$ be a component of $F$ such that $F'$ is the arc set of a maximally dominating odd cycle $(T',F')$ and $|F'| \le \frac{1}{1-2b}$.
Then for every color $c$, the following holds.
If $T'$ contains any vertex of color $c$, then there is a vertex $v\in T'$ of color $c$ such that $F\in \Delta^{vw}$ for some $w\in T'$.
\end{lemma}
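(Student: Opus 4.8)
The plan is to argue that a maximally dominating odd cycle component $F'$ with $|F'|\le \tfrac{1}{1-2b}$ must "unwind" into one of the paths $P_v$ that defines a $\Delta$-event. First I would fix a color $c$ occurring on $T'$ and pick any vertex $v\in T'$ of color $c$. The key observation is that removing from $F'$ all links entering vertices of color $c$ (i.e.\ the links in $F'\cap F_{c}$, equivalently $F'\cap F_{c_v}$) breaks the cycle $(T',F')$ into a disjoint union of (possibly trivial) directed paths, and $v$ is a starting vertex of one of these paths because its incoming link in the cycle enters a vertex of color $c$ and is thus deleted. So in $(T',F'\setminus F_{c_v})$ the vertex $v$ has in-degree zero.

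Next I would trace the path $P_v$ as defined right before the lemma and argue it coincides, as long as it stays inside $F'\setminus F_{c_v}$, with the path obtained by walking along the cycle $(T',F')$ starting from $v$ until the first link that would enter a color-$c$ vertex. Here I use that $F'$ is \emph{maximally dominating}: at every intermediate vertex $u$ on this cycle walk, the unique outgoing cycle-link dominates the unique incoming cycle-link, so the greedy rule defining $P_v$ (follow an outgoing link in $F\setminus F_{c_v}$ dominating the incoming link, lowest on the stack) will indeed follow that cycle-link — unless that cycle-link lies in $F_{c_v}$, in which case $P_v$ stops at $u$. Moreover $P_v$ cannot leave the cycle: any outgoing link of $u$ in $F\setminus F_{c_v}$ that the procedure might pick would have to be in $F'$ (since $F'$ is a component, links dominating/dominated-by links of $F'$ are in $F'$ — I'd need to be a little careful, but the relevant direction is that an outgoing link $(u,\cdot)$ that dominates the incoming link $(\cdot,u)\in F'$ is in $F'$ by the component definition), and in $F'$ the only outgoing link of $u$ is the cycle-link. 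Hence $P_v$ is exactly the arc of the cycle from $v$ up to (but not including) the first cycle-link entering a color-$c$ vertex; call the endpoint $w$, which has color $c$.

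Then I would verify the three numerical requirements of $\Delta^{vw}$: (1) strictly positive length — $P_v$ has at least one arc because $v$'s outgoing cycle-link does not enter $v$ itself (the cycle has length $\ge 3$, being odd), and $v$'s outgoing link is not in $F_{c_v}$ either, wait, I must check this — actually if $v$'s own outgoing link entered a color-$c$ vertex then $P_v$ would have length $0$; but $T'$ might have several color-$c$ vertices, so I should instead \emph{choose} $v$ to be a color-$c$ vertex whose \emph{outgoing} cycle-link does not enter a color-$c$ vertex. Such a $v$ exists: walking around the cycle, between two consecutive color-$c$ vertices (in cyclic order) there is at least one arc, and the color-$c$ vertex at the \emph{start} of such a maximal run of arcs has an outgoing link to a non-color-$c$ vertex; if all $|T'\cap\{c\}|$ color-$c$ vertices were consecutive this still works as long as not \emph{all} of $T'$ has color $c$, which holds since adjacent vertices on $F'$ have distinct colors and $|T'|\ge 3$. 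With this choice $P_v$ has length $\ge 1$. (2) length at most $\tfrac{1}{1-2b}$ — clear since $P_v$ is contained in the cycle and $|F'|\le\tfrac{1}{1-2b}$. (3) even length — here I would argue by parity: the cycle has odd length $|F'|$ and the color-$c$ vertices cut it into runs of arcs whose lengths sum to $|F'|$; I need the specific run starting at $v$ to have even length. This is where I'd have to be most careful and it is the main obstacle: in general an odd cycle need not decompose into even runs. I suspect the resolution is that one does \emph{not} need it for \emph{every} color-$c$ vertex but only for \emph{some} vertex $v$ of color $c$, and that the statement of the lemma indeed only asks for the existence of one such $v$; so I would argue that since the run-lengths sum to the odd number $|F'|$, not all of them can be even, but crucially I only need \emph{one} even one — so actually I need the opposite: I need to show at least one run has even length, which again fails if all runs have odd length (possible when there is an odd number of color-$c$ vertices). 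Resolving this likely requires using a \emph{different} color or the global structure across all colors, or re-reading what "$P_v$ ends in $w$" allows (perhaps $P_v$ may pass \emph{through} some color-$c$ vertices via links not in $F_{c_v}$? no — links into color-$c$ vertices are exactly $F_{c_v}$).

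Given these subtleties, the concrete plan is: (i) set up the deletion $F'\setminus F_{c_v}$ and the path decomposition of the cycle; (ii) use maximal domination to identify $P_v$ with a cycle-arc; (iii) handle the parity by choosing $v$ appropriately — and if a single color does not suffice, observe that $\sum_{\text{colors }c'} (\text{number of color-}c'\text{ vertices}) = |T'| = |F'|$ is odd, so \emph{some} color $c'$ occurs an even number of times, wait that is not obviously what's needed either. I expect the actual argument in the paper picks, for the given color $c$, the color-$c$ vertex $v$ for which the outgoing run of non-color-$c$ arcs has even length, and proves such a vertex exists by a parity/counting argument over the runs (e.g.\ because consecutive runs alternate in some controlled way, or because the total is constrained), and that pinning this down is the one genuinely delicate step; everything else (maximal domination forcing $P_v$ onto the cycle, the length bound, positivity) is routine. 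I would therefore present steps (i)–(ii) in full and flag step (iii) as the crux, completing it with the run-length parity argument once the precise combinatorial claim is isolated.
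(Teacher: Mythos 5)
Your steps (i) and (ii) --- deleting $F_{c_v}$ to break the cycle into paths, and using maximal domination together with the component structure to force $P_v$ to run along the cycle --- match the paper's argument. But the parity step, which you correctly flag as the crux, is left unresolved, and the reason you could not close it is that you have mislocated the endpoint of $P_v$ and therefore inverted the required parity. The path $P_v$ does not end at the next color-$c$ vertex $v'$ encountered on the cycle; it ends at the vertex $w$ that is the \emph{tail} of the cycle-link $(w,v')$ entering $v'$, since that link lies in $F_{c_v}$ and is the one the greedy procedure refuses to follow. In particular $w$ does \emph{not} have color $c$ (contrary to what you wrote), and the length of $P_v$ equals the length of the cycle-walk from $v$ to $v'$ \emph{minus one}. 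So you need that walk to have \emph{odd} length, not even length.

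Once this is corrected the argument closes immediately, and it is exactly what the paper does: the walks from each color-$c$ vertex of $T'$ to the next color-$c$ vertex (following the cycle) partition $F'$, whose cardinality is odd, so at least one of these walks has odd length; choose $v$ to be its starting vertex. Because the endpoints of every link have distinct colors, no such walk has length $1$, so an odd-length walk has length at least $3$, whence $P_v$ has even, strictly positive length, bounded by $|F'|\le\frac{1}{1-2b}$; thus $F\in\Delta^{vw}$. Your speculations about needing a different color or a global count over all colors are not needed --- a single application of ``runs summing to an odd number must contain an odd run'' suffices.
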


\begin{proof}
Let $c$ be a color such that at least one vertex in $T'$ has color $c$. 
For every vertex $v\in T'$ of color $c$, we consider the walk
from $v$ to the next vertex on $(T',F')$ of color $c$ (following the cycle). 
These walks partition $F'$, which has odd cardinality, and hence at least one of these walks must have odd length.
Let $v$ be a vertex for which this walk has odd length.
Let $(w,v')$ be the last link of this walk.
Then $v$ and $v'$ have color $c$, but none of the other vertices in the $v$-$v'$ walk in $(T',F')$ has color $c$.
Because we always chose a lowest outgoing arc in the construction of the path $P_v$, the arc set of this $P_v$ must be contained in the component $F$.
This shows that $P_v$ is the even length $v$-$w$ path in $(T',F')$. 
Because $P_v$ has length at most $|F'| \le \frac{1}{1-2b}$, this implies $F\in \Delta^{vw}$.
\end{proof}

\begin{definition}
We say that a link $\ell=(u,v)$ is \emph{dominating} if $u$ has an incoming link that is dominated by~$\ell$.
(We recall that each vertex has at most one incoming link.)
\end{definition}

\begin{figure}[!hb]
\begin{center}
\begin{tikzpicture}[scale=0.5]

\pgfdeclarelayer{bg}
\pgfdeclarelayer{fg}
\pgfsetlayers{bg,main,fg}

\tikzset{
  prefix node name/.style={%
    /tikz/name/.append style={%
      /tikz/alias={#1##1}%
    }
  }
}

\tikzset{ 
    cross/.pic = {
    \draw[rotate = 45] (-#1,0) -- (#1,0);
    \draw[rotate = 45] (0,-#1) -- (0, #1);
    }
}

\tikzset{root/.style={fill=white,minimum size=13}}

\tikzset{s/.style={dashed,red,line width=1pt}}

\tikzset{
term/.style={fill=black!20, rectangle, minimum size=10},
tf/.append style={font=\scriptsize\color{black}},
}

\newcommand\blob[2]{
\pgfmathanglebetweenpoints%
{\pgfpointanchor{#1}{center}}%
{\pgfpointanchor{#2}{center}}

\edef\angle{\pgfmathresult}

\filldraw ($(#1)+(\angle+90:\d)$) arc (\angle+90:\angle+270:\d) --
($(#2)+(\angle+270:\d)$) arc (\angle+270:\angle+450:\d) -- cycle;
}

\newcommand\cac[2][]{

\begin{scope}

\begin{scope}[every node/.append style={thick,draw=black,fill=white,circle,minimum size=12, inner sep=2pt}]

\node[fill=red!20] (1) at (6,-6) {v};
\node[fill=blue!20] (2) at (4,-3) {};
\node[fill=green!20] (3) at (7,-0) {};
\node[fill=blue!20] (4) at (11,-1.5) {};
\node[fill=yellow!20] (5) at (10,-5) {w};

\end{scope}%

\begin{scope}[-stealth,very thick]
\draw[dashed] (1) to (2);
\draw (2) to (3);
\draw (3) to (4);
\draw (4) to (5);

\end{scope}%

\begin{scope}[s]

\end{scope}%

\end{scope}%

}%

\begin{scope}

\begin{scope}[shift={(0,20)}]
\node() at (3,0) {$\Delta^{vw}$};
\cac[delta0-]{}
\end{scope}

\begin{scope}[shift={(15,20)}]%
\def\ll{30mm} %
\def\vs{18mm} %

\begin{scope}[scale=0.5]
\draw[-stealth,black, very thick, yshift=-\vs] (0,0) -- +(\ll,0) node[right] {dominating links};
\draw[-stealth,black,dashed, yshift=-2*\vs] (0,0) -- +(\ll,0) node[right] {not known if dominating or not};
\end{scope}

\begin{scope}[scale=0.5,yshift=-5*\vs]

\begin{scope}[every node/.style={circle,inner sep=0pt,minimum size=5pt}]
\node[fill=red] (rd) at (0,0) {};
\node[fill=black] (bd) at (0,-\vs) {};

\node at (rd.east)[right=4pt] {links sampled by vertices of color $c_v$};
\node at (bd.east)[right=4pt] {other sampled links};
\end{scope}

\end{scope}

\end{scope}%

\begin{scope}[shift={(0,10)}]%
\node() at (3,0) {$\Delta^{vw}_1$};
\cac[delta1-]{}

\begin{scope}[-stealth,very thick]%
\draw[red] (5) to (1);
\end{scope}

\end{scope}%

\begin{scope}[shift={(18,10)}]%
\node() at (3,0) {$\Delta^{vw}_2$};
\cac[delta2-]{}
\node[fill=red!20,thick,draw=black,circle,minimum size=12, inner sep=2pt] (6) at (14,-6) {};
\begin{scope}[-stealth,very thick]%
\draw[red] (5) to (6);
\end{scope}
\end{scope}%

\begin{scope}[shift={(0,0)}]%
\node() at (3,0) {$\Delta^{vw}_3$};
\cac[delta3-]{}
\node[fill=red!20,thick,draw=black,circle,minimum size=12, inner sep=2pt] (6) at (14,-6) {};
\node[fill=violet!20,thick,draw=black,circle,minimum size=12, inner sep=2pt] (7) at (3,-8) {};
\begin{scope}[-stealth,very thick]
\draw[red] (5) to (6);
\end{scope}
\begin{scope}[-stealth,dashed,very thick]%
\draw[red] (7) to (1);
\end{scope}

\path (12,-5.5) pic[rotate = 90,very thick] {cross=20pt};

\end{scope}%

\begin{scope}[shift={(18,0)}]%
\node() at (3,0) {$\Delta^{vw}_4$};
\cac[delta4-]{}

\node[fill=red!20,thick,draw=black,circle,minimum size=12, inner sep=2pt] (6) at (14,-6) {};
\node[fill=violet!20,thick,draw=black,circle,minimum size=12, inner sep=2pt] (7) at (3,-8) {};
\begin{scope}[-stealth,very thick]
\draw[red] (5) to (6);
\end{scope}
\begin{scope}[-stealth,dashed,very thick]%
\draw[red] (7) to (1);
\end{scope}

\path (12,-5.5) pic[rotate = 90,very thick] {cross=20pt};
\path (4.5,-7) pic[rotate = 15,very thick] {cross=20pt};

\end{scope}%

\end{scope}

\end{tikzpicture} \end{center}
\caption{Illustration of different types of events.}\label{fig:different_events}
\end{figure}

We consider the following four disjoint subevents of $\Delta^{vw}$:
\begin{itemize}
    \item $\Delta^{vw}_1$ is the subevent of $\Delta^{vw}$ in which a dominating link from $w$ to $v$ is sampled, i.e., such a link is contained in $F$;
    \item $\Delta^{vw}_2$ is the subevent of $\Delta^{vw}$ in which a dominating link from $w$ to a vertex $v'\neq v$ sampled and no dominating link from $w$ to $v$ is sampled. Notice that we must have $c_{v'}=c_v$ in this case;
    \item $\Delta^{vw}_3$ is the subevent of $\Delta^{vw}$ in which no dominating link in $F$ leaves $w$, but there exists a link in $F$ entering $v$;
    \item $\Delta^{vw}_4$ is the subevent of $\Delta^{vw}$ in which no dominating link in $F$ leaves $w$, and no link in $F$ enters $v$.
\end{itemize}
See Figure~\ref{fig:different_events}.
In the event $\Delta^{vw}_1$, the vertex $v$ is contained in an odd cycle component and $v$ is the only vertex of its color in this cycle.
In the event $\Delta^{vw}_2$, the vertex $v$ might be contained in an odd cycle component, but this does not have to be the case. If it is, then $v$ is not the only vertex of its color in this cycle.
The other two events are good events: in these cases $v$ is not contained in an odd cycle component.
We now prove bounds on the probabilities of the four events, conditioned on $\Delta^{vw}$.
For the events $\Delta^{vw}_1$ and $\Delta^{vw}_2$, where $v$ might be part of an odd cycle, we will give upper bounds on their respective probabilities. For the other two events we will give lower bounds.
Given $v,w \in T$, we define 
\[ 
    s_{vw}:= \sum_{\ell\in S_v \cap S_w} x_{\ell}\enspace.
\]
Moreover, for a terminal $v\in T$, we denote its color by $c_v$. Then
\begin{equation*}
 \eta_w^{c_v}= \sum_{\substack{u\in T:\\ c_u=c_v}} \sum_{\ell\in S_u\cap S_w} x_{\ell}\enspace.
\end{equation*}

\begin{lemma}\label{lem:bound_conditional_prob}
Let $v,w\in T$ such that $\Delta^{vw}$ is nonempty. Then
\begin{itemize}
\item $\mathbb{P}[\Delta_1^{vw} | \Delta^{vw} ] \le s_{vw}$\enspace.
\item $\mathbb{P}[\Delta_2^{vw} | \Delta^{vw} ] \le \eta_w^{c_v} - s_{vw}$\enspace.
\item $\mathbb{P}[\Delta_3^{vw} | \Delta^{vw} ] \ge x(S_v) - \eta_w^{c_v}$\enspace.
\item $\mathbb{P}[\Delta_4^{vw} | \Delta^{vw} ] \ge 1 - x(S_v) -\eta_w^{c_v} +s_{vw}$\enspace.
\end{itemize}
\end{lemma}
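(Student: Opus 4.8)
\textbf{Proof plan for Lemma~\ref{lem:bound_conditional_prob}.}
The plan is to analyze the conditional probability space obtained by fixing everything that determines whether $\Delta^{vw}$ occurs, and then computing the conditional probabilities of the four subevents by using only the remaining, still-unrevealed randomness. Observe first that the event $\Delta^{vw}$ is determined entirely by the sampled links \emph{not} entering vertices of color $c_v$: the path $P_v$ is constructed within $(T,F\setminus F_{c_v})$, following lowest outgoing links and ties broken deterministically. Thus $\Delta^{vw}$ is measurable with respect to the sampling done by all principal subcacti other than the one containing $v$. Conditioning on $\Delta^{vw}$ therefore fixes the path $P_v$ (in particular fixes $w$ as its endpoint) and fixes all links in $F\setminus F_{c_v}$, but leaves the sampling by the subcactus of color $c_v$ — i.e.\ the sampling of links in $F_{c_v}$ — conditionally distributed, and importantly still governed by properties~\ref{item:sampling_probability_link}, \ref{item:correlation_of_colors}, \ref{item:different_corlors_independent} with respect to $c_v$. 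By property~\ref{item:different_corlors_independent}, the sampling by the color-$c_v$ subcactus is independent of everything we conditioned on, so the conditional marginals of the color-$c_v$ links are unchanged: link $\ell$ with an endpoint of color $c_v$ is sampled with probability $x_\ell$, and by property~\ref{item:correlation_of_colors} at most one link entering any fixed vertex of color $c_v$ is sampled. The only color-$c_v$ vertex whose incoming link matters for the four subevents is $v$ itself (the endpoint of $P_v$ where domination ``runs out'' and an incoming link could close an odd cycle).

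Now I would treat the four bounds in turn. Conditioned on $\Delta^{vw}$, $w$ is fixed and the link $\ell'$ into $w$ (the last arc of $P_v$) is fixed; we need to understand the outgoing links of $w$ and the incoming link of $v$. The event $\Delta^{vw}_1$ (a dominating link from $w$ to $v$ is sampled) requires sampling some link $\{w,v\}\in S_w\cap S_v$; this link is a candidate only if it lies in $F_{c_v}$, i.e.\ it is sampled by $v$'s subcactus, so its conditional probability is $x_{\{w,v\}}$, and these events for different such links are mutually exclusive by property~\ref{item:correlation_of_colors} (at most one link enters $v$). Summing, $\mathbb{P}[\Delta^{vw}_1\mid\Delta^{vw}]\le\sum_{\ell\in S_w\cap S_v}x_\ell=s_{vw}$. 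For $\Delta^{vw}_2$ (a dominating link from $w$ to some $v'\ne v$ with $c_{v'}=c_v$, and no such link to $v$): the relevant links lie in $(S_w\cap\bigcup_{u:c_u=c_v}S_u)\setminus(S_w\cap S_v)$, again each sampled with conditional probability $x_\ell$ and mutually exclusive among those entering the same vertex; bounding crudely by the total weight gives $\mathbb{P}[\Delta^{vw}_2\mid\Delta^{vw}]\le\eta_w^{c_v}-s_{vw}$. For $\Delta^{vw}_3$ (no dominating link leaves $w$ in $F$, but some link enters $v$): I would instead bound the complementary quantities — the probability that \emph{some} link enters $v$ is exactly $x(S_v)$ (by~\ref{item:sampling_probability_link} and~\ref{item:correlation_of_colors}: the events ``$\ell$ is sampled'' for $\ell\in S_v$ are mutually exclusive since all such links enter $v$), and the event that the link into $v$ is a dominating link from $w$ falls inside $\Delta^{vw}_1\cup\Delta^{vw}_2$, which has probability at most $\eta_w^{c_v}$; hence $\mathbb{P}[\Delta^{vw}_3\mid\Delta^{vw}]\ge x(S_v)-\eta_w^{c_v}$. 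Finally $\Delta^{vw}_4$ is the complement within $\Delta^{vw}$ of the first three, and I would write $\mathbb{P}[\Delta^{vw}_4\mid\Delta^{vw}]=1-\mathbb{P}[\Delta^{vw}_1\mid\Delta^{vw}]-\mathbb{P}[\Delta^{vw}_2\mid\Delta^{vw}]-\mathbb{P}[\Delta^{vw}_3\mid\Delta^{vw}]$ and combine with the upper bounds $\mathbb{P}[\Delta^{vw}_1\mid\Delta^{vw}]+\mathbb{P}[\Delta^{vw}_2\mid\Delta^{vw}]\le\eta_w^{c_v}$ (using the exact probability $x(S_v)$ for ``some link enters $v$'' rather than the lower bound in the third item, so the subtractions line up) to get $\mathbb{P}[\Delta^{vw}_4\mid\Delta^{vw}]\ge 1-x(S_v)-\eta_w^{c_v}+s_{vw}$.

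The delicate point — and the one I would be most careful about — is the precise bookkeeping of which links can ``enter $v$'' versus ``leave $w$'' and ensuring each bound uses only the genuinely-free randomness. Concretely: the key structural fact is that whether a dominating link leaves $w$ and whether a link enters $v$ are \emph{both} controlled by the color-$c_v$ sampling (a link $\{w,v'\}$ leaving $w$ to a color-$c_v$ vertex $v'$ is sampled by $v'$'s subcactus $=v$'s subcactus; a link into $v$ is sampled by $v$'s subcactus), so they are \emph{not} independent — but I only ever need them jointly through the mutual-exclusivity constraint ``at most one link enters each fixed color-$c_v$ vertex.'' I would also need to check that conditioning on $\Delta^{vw}$ does not disturb the marginals of $F_{c_v}$, which is exactly property~\ref{item:different_corlors_independent} since $\Delta^{vw}$ is $\sigma(F\setminus F_{c_v})$-measurable; the deterministic tie-breaking in the construction of $P_v$ is important here, as it guarantees no hidden dependence on $F_{c_v}$. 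One small subtlety: a dominating link from $w$ could in principle go to a vertex of a color \emph{other} than $c_v$; such a link cannot be the incoming link of $v$, but it still means ``a dominating link leaves $w$,'' so for the $\Delta^{vw}_3,\Delta^{vw}_4$ lower bounds I must be sure the argument only \emph{adds} such possibilities to the ``bad'' side, which it does, keeping the stated lower bounds valid. Once these measurability and mutual-exclusivity facts are pinned down, the four inequalities are immediate.
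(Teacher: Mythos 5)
Your overall route is the same as the paper's: condition on $\Delta^{vw}$, observe that this event is determined by $F\setminus F_{c_v}$ (thanks to the deterministic tie-breaking in the construction of $P_v$), note that by the independence of the sampling across colors the conditional law of the color-$c_v$ sampling is unchanged, and then read off the four bounds from the marginals $x_\ell$ and the mutual exclusivity of links entering a fixed color-$c_v$ vertex. The one place where your argument goes wrong is precisely the ``small subtlety'' you flag at the end, and your resolution of it is a non sequitur. All four bounds hinge on the inequality $\mathbb{P}[\text{some dominating link leaves } w \mid \Delta^{vw}] \le \eta_w^{c_v}$, i.e., on the claim that under $\Delta^{vw}$ every dominating link leaving $w$ lies in $F_{c_v}$. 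If a dominating link from $w$ to a vertex of a color other than $c_v$ were compatible with $\Delta^{vw}$, this conditional probability could exceed $\eta_w^{c_v}$ (indeed such a link could be present with conditional probability $1$, since it is part of the conditioning), and the lower bounds on $\Delta^{vw}_3$ and $\Delta^{vw}_4$ --- which are obtained exactly by subtracting this ``bad'' probability --- would fail. Saying that such links ``only add to the bad side'' therefore does not preserve the bounds; enlarging the bad side beyond $\eta_w^{c_v}$ is exactly what would break them.

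The missing (and correct) justification is a structural consequence of the definition of $P_v$: the path $P_v$ terminates at $w$ precisely because $w$ has no outgoing link in $F\setminus F_{c_v}$ that dominates its incoming link. Hence, conditioned on $\Delta^{vw}$, any dominating link leaving $w$ must belong to $F_{c_v}$, i.e., it is sampled by a terminal of color $c_v$, so its conditional probability is at most $s_{vw}$ for links ending at $v$ and at most $\eta_w^{c_v}-s_{vw}$ for the rest. This is the one sentence the paper states up front (``every dominating link that leaves $w$ is a link sampled by a terminal with color $c_v$'') and that your write-up needs in place of the ``adds to the bad side'' remark. With it inserted, your derivations of all four bounds go through; for $\Delta^{vw}_4$ the cleanest way to organize your partition argument is $\mathbb{P}[\Delta^{vw}_1\mid\Delta^{vw}]+\mathbb{P}[\Delta^{vw}_3\mid\Delta^{vw}]\le \mathbb{P}[\text{some link enters }v\mid\Delta^{vw}]= x(S_v)$ together with $\mathbb{P}[\Delta^{vw}_2\mid\Delta^{vw}]\le \eta_w^{c_v}-s_{vw}$, which yields the stated bound and coincides with the paper's proof.
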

\begin{proof}
We observe that the event $\Delta^{vw}$ depends only on the links sampled by terminals that have a different color than $v$. 
If we are in the event $\Delta^{vw}$, every dominating link that leaves $w$ is a link sampled by a terminal with color $c_v$ (and every link entering $v$ is sampled by $v$).
Recall that the sampling of the links by terminals of color $c_v$ is independent of the sampling of links by terminals of other colors.
Thus, the first two inequalities follow directly from the definitions of $s_{vw}$ and $\eta^{c_v}_w$ because every terminal samples an incoming link $\ell$ 
from its stack with probability $x_{\ell}$.

Moreover, we have
\begin{align*}
    \mathbb{P}[\Delta_3^{vw} | \Delta^{vw} ] \ge&\   \mathbb{P}[F\text{ contains a link entering }v\ | \Delta^{vw}] \\
    &\ - \mathbb{P}[F\text{ contains a dominating link leaving }w\ | \Delta^{vw}] \\
    \ge&\ x(S_v) - \eta_w^{c_v}\enspace.
\intertext{and}
    \mathbb{P}[\Delta_4^{vw} | \Delta^{vw} ] \ge&\ \mathbb{P}[F\text{ contains no link entering }v\ | \Delta^{vw}] \\ &\ - \mathbb{P}[F\text{ contains a link leaving }w\ | \Delta^{vw}] \\ &\ + \mathbb{P}[F\text{ contains a link leaving }w\text{ and entering }v\ | \Delta^{vw}]\\
    \ge&\ (1 - x(S_v)) -\eta_w^{c_v} +s_{vw}\enspace.\qedhere
\end{align*}
\end{proof}

\subsection{Bounding the expected size of a removable link set}

We now prove a lower bound on the expected size of a maximum cardinality removable set of links.
To this end, we will define a random variable $X_v$ for every $v\in T$. 
We will show that there is a removable set of size at least $\sum_{v\in T} X_v$ and then compute a lower bound on the expected value of $\sum_{v\in T} X_v$.

Let
\begin{equation*}
    X_v \coloneqq b \cdot \mathbbm{1}[v\text{ has an entering dominated link in }F] + Z_v\enspace.
\end{equation*}
where the random variable $Z_v$ is defined as follows.
Recall that the events $\Delta^{vw}_i$ for $w\in T$ and $i\in \{1,\dots,4\}$ are disjoint.
If we are not in any of these events, we set $Z_v :=0$.
Otherwise, let $w\in T$ and $i\in\{1,\dots,4\}$ such that we are in the event $\Delta_i^{vw}$. Then we define
\begin{equation}
    Z_v := \begin{cases}
             z_1 \coloneqq  -\left(b-\frac{1}{3}\right)  & \text{ if }i=1 \\
             z_2 \coloneqq -2\left(b-\frac{2}{5}\right)+\frac{1}{30} & \text{ if }i=2 \\
             z_3 \coloneqq \frac{1}{2}-b  & \text{ if }i=3 \\
             z_4 \coloneqq 1-b   & \text{ if }i=4\enspace. \\
           \end{cases}
\end{equation}
Notice that $z_1 \le z_2 \le 0$ and $z_4 \ge z_3 \ge 0$.

In order to prove that there is a removable set $R\subseteq F$ with $|R| \ge \sum_{v\in T} X_v$, we need the following lemma.
\begin{lemma}\label{lem:structure_of_components}
Let $F'\in \Fscr$ be a component of $F$ and let $(T',F')$ be the unique non-trivial connected component of $(T,F')$.
Then $F'$ contains at most one link that is not dominating.
If $F'$ contains such a nondominating link $\ell$, then either
\begin{itemize}
\item  $(T',F')$ is an arborescence with root $r$ and $\delta^+(r)=\{\ell\}$,  or
\item  $(T',F'\setminus\{\ell\})$ is an arborescence.
\end{itemize}
\end{lemma}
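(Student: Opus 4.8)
The claim is a structural statement about a single component $F'\in\Fscr$ and its unique non-trivial connected component $(T',F')$. Recall the defining property of a component: it is closed both under ``being dominated by'' and ``dominating'' links in $F$, and it is minimal nonempty with this property. The plan is to first exploit the basic degree bound, namely that every vertex has at most one incoming link in $F$ (hence in $F'$), so $|F'|\le |T'|$ and $(T',F')$ contains at most one cycle. This is the same observation already used in the proof of Lemma~\ref{lem:num_deletable_for_components}, and it gives us a clean dichotomy: either $(T',F')$ is a tree (in the underlying undirected sense) with $|F'|=|T'|-1$, or it has exactly one cycle and $|F'|=|T'|$.

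Next I would argue that a link $\ell=(v,w)\in F'$ can fail to be dominating only in very restricted circumstances. By definition, $\ell$ is dominating exactly when $v$ has an incoming link dominated by $\ell$. So $\ell$ is \emph{not} dominating iff either $v$ has no incoming link in $F$ (i.e. $\delta^-_F(v)=\emptyset$), or $v$ has an incoming link $\ell'$ that is \emph{not} dominated by $\ell$. The second case is where minimality of the component does the work: if $\ell'$ is not dominated by $\ell$, then consider removing $\ell$ from $F'$ and checking the closure conditions. The key point is that in a component every non-cycle link must dominate the incoming link of its tail vertex — this is precisely the argument already spelled out inside the proof of Lemma~\ref{lem:num_deletable_for_components} (``By the minimality of $F'$, every link $(v,w)\in F'$ with $\delta_F^-(v)\ne\emptyset$ that is not part of this cycle must dominate the incoming link of $v$''). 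So the only potentially-non-dominating links are: (i) a link whose tail has in-degree $0$, and (ii) a single link lying on the cycle (since at most one cycle link can fail to dominate, again by the degree-one-incoming property combined with the cycle structure). Because $(T',F')$ is connected and every vertex has in-degree $\le 1$, there is at most one vertex of in-degree $0$ in $T'$, so case (i) contributes at most one link; and case (ii) contributes at most one link; but I need to rule out both happening at once, which follows because if there is a cycle then every vertex on the cycle has in-degree $1$, and if additionally there were a vertex of in-degree $0$ with an outgoing non-dominating link, the component would not be minimal — one could peel it off. This is the step I expect to require the most care: pinning down exactly why ``at most one'' holds globally rather than ``at most one of each kind'', and handling the interaction between the in-degree-$0$ vertex and the cycle.

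Finally, assuming there is a unique non-dominating link $\ell$, I would distinguish the two structural conclusions. If $(T',F')$ has a cycle, then $\ell$ must be the cycle link that fails to dominate; but then $(T',F'\setminus\{\ell\})$ has one fewer edge than vertices and is still connected (removing one edge of the unique cycle keeps connectivity), every vertex has in-degree $\le 1$, and now also the tail of every remaining link dominates the incoming link of that tail — so $(T',F'\setminus\{\ell\})$ is an in-tree, i.e. an arborescence (oriented towards a unique source / away from a unique sink depending on convention; here ``arborescence'' means every vertex except the root has exactly one incoming link, which matches the in-degree structure once the cycle is broken). If instead $(T',F')$ is acyclic, then $|F'|=|T'|-1$ and connectivity makes it already an arborescence-shaped object; the unique non-dominating link must be the one leaving the root $r$ (the unique in-degree-$0$ vertex), and $\delta^+(r)=\{\ell\}$ because any second outgoing link of $r$ would itself be non-dominating (its tail $r$ has in-degree $0$), contradicting uniqueness. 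I would then note that in both displayed cases the orientation conventions line up: ``arborescence with root $r$, $\delta^+(r)=\{\ell\}$'' is the acyclic case, and ``$(T',F'\setminus\{\ell\})$ is an arborescence'' is the cyclic case. Throughout, the only non-trivial inputs are the in-degree bound and the minimality of components; no results beyond what is already in the excerpt are needed.
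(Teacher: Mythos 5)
Your overall route matches the paper's: bound in-degrees by one, split into the acyclic case and the unique-cycle case, and use minimality of the component to control which links can fail to dominate. However, there is a genuine gap in the step ``there is at most one vertex of in-degree $0$ in $T'$, so case (i) contributes at most one link.'' A single vertex $r$ of in-degree $0$ may a priori have \emph{several} outgoing links in $F'$, and every one of them is non-dominating (there is no incoming link at $r$ to dominate), so the count of in-degree-zero vertices does not bound the number of non-dominating links. Your later attempt to recover $\delta^+(r)=\{\ell\}$ by saying that a second outgoing link of $r$ ``would contradict uniqueness'' is circular, since uniqueness of the non-dominating link was itself derived from the unjustified ``at most one link'' claim.

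The missing ingredient is exactly where the paper invokes minimality of the component: if $r$ has in-degree $0$ and two outgoing links $\ell_1,\ell_2\in F'$, these two links cannot lie in the same component. Two links are placed in the same component only via a chain of domination relations, and each step of such a chain connects a link entering some vertex $v$ to a link leaving $v$; hence any chain joining $\ell_1$ to $\ell_2$ would have to pass through an incoming link of $r$, which does not exist (and $\ell_1,\ell_2$ do not dominate each other, as domination only relates an outgoing link of a vertex to the incoming link of that vertex). So minimality of $F'$ forces $|\delta^+_{F'}(r)|=1$, and only then does uniqueness of the non-dominating link, and the conclusion $\delta^+(r)=\{\ell\}$, follow. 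A smaller remark of the same flavour: your justification that the cycle carries at most one non-dominating link (``by the degree-one-incoming property combined with the cycle structure'') also really rests on minimality --- two non-dominating cycle links would disconnect the component in the domination relation --- not on degree counting alone; the paper asserts this equally tersely, but the stated reason in your write-up is not the right one.
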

\begin{proof}
Recall that $|\delta_F^-(v)| \le 1$ for all $v\in T$.
Because $F'$ is a component of $F$, either $(T',F')$ is an arborescence or it contains precisely one cycle.

If $(T',F')$ is an arborescence, its root is the only vertex that has an outgoing link which is not dominating.
This follows from the minimality of the component $F'$.
This minimality also implies that the root has only one outgoing link in $F'$ in this case.

Now suppose $(T',F')$ contains a cycle. 
Then, by the minimality of the component $F'$, all links that are not contained in this cycle must be dominating.
Moreover, the cycle can contain at most one nondominating link.
If such a link $\ell$ exists, $(T',F'\setminus\{\ell\})$ is an arborescence because $|\delta_F^-(v)| \le 1$ for all $v\in T$.
\end{proof}

\begin{lemma}\label{lem:deletion_componentwise}
Let $F'\in \Fscr$ be a component of $F$ and let $(T',F')$ be the unique non-trivial connected component of $(T,F')$.
Then there is a removable set $R'\subseteq F'$ with $|R'| \ge \sum_{v\in T'} \max\{ X_v, 0\}$.
\end{lemma}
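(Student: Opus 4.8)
The plan is to prove Lemma~\ref{lem:deletion_componentwise} by a case analysis based on the structure of the non-trivial connected component $(T',F')$ guaranteed by Lemma~\ref{lem:structure_of_components}, combining the deterministic bounds of Lemma~\ref{lem:num_deletable_for_components} with a careful accounting of the contributions $X_v = b\cdot\mathbbm{1}[v\text{ has an entering dominated link in }F] + Z_v$. The key identity to keep in mind is that $\sum_{v\in T'} b\cdot\mathbbm{1}[v\text{ has an entering dominated link in }F]$ equals $b$ times the number of dominated links in $F'$; so the target bound $|R'| \ge \sum_{v\in T'}\max\{X_v,0\}$ will follow once we show that the removable set $R'$ produced by Lemma~\ref{lem:num_deletable_for_components} has size at least $b\cdot|\{\ell\in F':\ell\text{ dominated}\}| + \sum_{v\in T'} Z_v^+$, where $Z_v^+$ is the positive part contributed in the three cases (noting $z_1,z_2\le 0\le z_3\le z_4$, so negative $Z_v$ only help us).

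First I would dispose of the case that $(T',F')$ is \emph{not} a maximally dominating odd cycle. Here Lemma~\ref{lem:num_deletable_for_components} gives a removable set of size at least $\tfrac12|\{\ell\in F':\ell\text{ dominated}\}|$ (or even $+\tfrac12$ more if $(T',F')$ is an arborescence with an even root-to-leaf path), and since $b=0.42 < \tfrac12$, the "base" $\tfrac12$-fraction already dominates the $b$-fraction; the remaining slack $(\tfrac12 - b)\cdot|\{\text{dominated}\}|$ together with the possible extra $\tfrac12$ must be shown to cover $\sum_{v\in T'}Z_v^+$. The crucial observation is that in a component which is not a short odd cycle, the only vertices $v\in T'$ with $Z_v>0$ are those in events $\Delta^{vw}_3$ or $\Delta^{vw}_4$; and when $v$ lies in such a good event the path $P_v$ is an even-length path ending at a vertex $w$ with no dominating outgoing link in $F$ — which forces $P_v$ to witness exactly the kind of even-length path-to-a-sink (or arborescence structure) that triggers the $+\tfrac12$ improvement in Lemmas~\ref{lem:special_bound_deletable} and~\ref{lem:num_deletable_for_components}. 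I would make this precise by arguing that at most one such good event can occur per component when it is not an odd cycle (since $P_v$ is determined and two vertices cannot simultaneously be sources of disjoint even paths to sinks within the same small acyclic-ish component), so $\sum_{v\in T'}Z_v^+ \le \max\{z_3,z_4\} = z_4 = 1-b$, while the extra term provided by the structural lemmas combined with the $(\tfrac12-b)$-slack per dominated link suffices; the presence of at least one non-dominating link (Lemma~\ref{lem:structure_of_components}) in precisely the good-event configurations is what makes the extra $\tfrac12$ available.

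Second I would handle the case that $(T',F')$ \emph{is} a maximally dominating odd cycle of length $s = |F'| = |T'|$, in which every link is dominated. Lemma~\ref{lem:num_deletable_for_components} gives only $|R'|\ge \tfrac12(s-1)$, and we must show $\tfrac12(s-1) \ge b\cdot s + \sum_{v\in T'}Z_v^+$. Rearranging, this reads $\sum_{v\in T'}Z_v^+ \le (\tfrac12 - b)s - \tfrac12$. Here every vertex of the cycle has an entering dominated link, and for each color $c$ present on the cycle, Lemma~\ref{lem:odd_cycles_imply_delta_events} produces a vertex $v\in T'$ of that color with $F\in\Delta^{vw}$ for some $w\in T'$ — and I would argue that such a $v$ must lie in event $\Delta^{vw}_1$ (a dominating link from $w$ back to $v$ is exactly the cycle-closing link, since we are in the odd cycle and $w$'s outgoing link in $F\setminus F_{c_v}$ is forced) or possibly $\Delta^{vw}_2$, hence $Z_v \le 0$, never positive. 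So the only vertices that could contribute $Z_v^+>0$ are those outside any such $\Delta^{vw}$ constellation, i.e.~vertices not picked out by Lemma~\ref{lem:odd_cycles_imply_delta_events}; for those, the path $P_v$ (if it triggers a good event $\Delta^{vw}_3$ or $\Delta^{vw}_4$) would have to leave the cycle via a link of $F\setminus F_{c_v}$, contradicting that $(T',F')$ is exactly a cycle with no other links — so in fact $Z_v = 0$ for \emph{all} $v\in T'$, and the bound reduces to $0 \le (\tfrac12-b)s - \tfrac12$, i.e.~$s\ge \tfrac{1}{1-2b}$. For $b=0.42$ this is $s\ge 12.5$, so it holds for all odd $s\ge 13$; the short odd cycles $s\le \tfrac{1}{1-2b}$ are precisely the ones excluded from contributing negatively via the $\Delta$-event bookkeeping elsewhere, and here I would need to check directly that for small odd $s$ the vertices of the cycle all lie in events $\Delta^{vw}_1$ or $\Delta^{vw}_2$ so that $\sum_v Z_v = \sum_v z_{i(v)} \le 0 \le (\tfrac12-b)s - \tfrac12$ fails to be the relevant inequality — instead one uses $\sum_v X_v = bs + \sum_v z_{i(v)}$ and shows $\sum_v z_{i(v)} \le -bs + \tfrac12(s-1)$ directly from the counts of color classes.

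The main obstacle I expect is the bookkeeping in the short-odd-cycle case: showing that the negative contributions $z_1, z_2$ collected from the color-representative vertices supplied by Lemma~\ref{lem:odd_cycles_imply_delta_events} are large enough in absolute value to compensate the deficit $bs - \tfrac12(s-1)$ that Lemma~\ref{lem:num_deletable_for_components} leaves us with. This requires relating the number of distinct colors on the cycle to $s$ (each color class on an odd cycle contributes at least one $\Delta$-representative, and the walks between consecutive same-color vertices partition the cycle, so an odd cycle has at least one odd such walk — but we may need more), and it requires confirming that the specific numeric values $z_1 = -(b-\tfrac13)$, $z_2 = -2(b-\tfrac25)+\tfrac1{30}$ were chosen exactly so the inequality closes for the worst small odd $s$ (presumably $s=3$ and $s=5$). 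I would therefore organize the proof so that the componentwise inequality is verified separately for each odd $s < \tfrac1{1-2b}$ by a short finite computation, and the generic argument handles $s \ge \tfrac1{1-2b}$ and all non-odd-cycle components uniformly.
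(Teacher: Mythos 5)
Your high-level plan (case split on whether $(T',F')$ is a maximally dominating odd cycle, Lemma~\ref{lem:num_deletable_for_components} for the deterministic deletion bound, Lemma~\ref{lem:odd_cycles_imply_delta_events} plus a finite check over small odd cycles) is the same as the paper's, and you correctly identify that the numerics of $z_1,z_2$ must close the deficit $bs-\tfrac12(s-1)$ for $s=3,5$. But two steps as written would fail. First, in the non-odd-cycle case your central claim that at most one good event occurs per component, hence $\sum_{v\in T'}Z_v^+\le z_4=1-b$, is false: every vertex $v$ with an entering dominated link can simultaneously be in a $\Delta^{vw}_3$ event (e.g.\ along a directed path of dominating links, several interior vertices can each have an even-length $P_v$ ending at the same sink), so $\sum_v Z_v^+$ is not bounded by a constant. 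The accounting that actually works is per-vertex: for every $v$ with an entering dominated link one has $F\notin\Delta^{vw}_4$, so $X_v=b+Z_v\le b+z_3=\tfrac12$, which is exactly the per-dominated-link budget; the only vertex that can exceed its budget is the tail $t$ of the unique nondominating link (Lemma~\ref{lem:structure_of_components} forces $X_v=0$ for every other vertex without an entering dominated link, since such vertices have out-degree zero). Closing that case further requires exhibiting a vertex $u$ with an entering dominated link and $Z_u=0$ (via a maximum-length-path argument) to absorb $X_t$ when $Z_t\le\tfrac12-b$, and showing that $Z_t=1-b$ forces $(T',F')$ to be an arborescence with an even root-to-leaf path so that the $+\tfrac12$ bonus of Lemma~\ref{lem:num_deletable_for_components} is actually available. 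None of this is in your sketch.

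Second, in the odd-cycle case you assert that $Z_v=0$ for \emph{all} $v\in T'$, reducing the bound to $s\ge\tfrac1{1-2b}$ (which, incidentally, equals $6.25$ for $b=0.42$, not $12.5$). If that were true the lemma would be \emph{false} for $s=3$ and $s=5$, since $b\cdot s>\tfrac12(s-1)$ there. What is true is only $Z_v\le 0$ (the endpoint of every $P_v$ inside the cycle has an outgoing dominating link, so $\Delta^{vw}_3,\Delta^{vw}_4$ cannot occur); the strictly negative values $z_1,z_2$ at the color-representative vertices of Lemma~\ref{lem:odd_cycles_imply_delta_events} are precisely what saves the short cycles, and one needs that an odd cycle carries at least three colors, that at least three representatives land in $\Delta_1\cup\Delta_2$, and (for $s=5$) that at least one lands in $\Delta_1$. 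You eventually gesture at this computation, but the proof as proposed rests on the contradictory and incorrect intermediate claim.
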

\begin{proof}
We distinguish between two cases.
\smallskip

\noindent\textbf{Case 1:} $(T',F')$ is a maximally dominating odd cycle. \\[2mm]
Note that for every vertex $v\in T'$, the path $P_v$ is contained
in $(T',F')$ and thus the endpoint of $P_v$ has an outgoing dominating link.
Therefore, the events $\Delta^{vw}_3$ and $\Delta^{vw}_4$ cannot happen for any $v\in T'$.
Hence, $Z_v \le 0$.
Because every link in $F'$ is dominated, we have $X_v =b + Z_v \ge 0$ for all $v\in T'$.
By Lemma~\ref{lem:num_deletable_for_components}, there is a removable set containing at least $\frac{1}{2}(|F'|-1)$ many links.

If the odd cycle $(T',F')$ has length at least $\frac{1}{1-2b}$, we have
\[
\sum_{v\in T'} X_v\ \le\ b \cdot |F'|\ =\ \frac{1}{2} |F'| - \left(\frac{1}{2} - b\right)\cdot |F'|\ \le\ \frac{1}{2}\left(|F'|-1\right)\enspace.
\]
Hence, we may assume $|F'| \le \frac{1}{1-2b}$ now.

Because $(T',F')$ is an odd cycle, there are at least three different colors appearing as a color of a vertex in $T'$.
Now consider such a color $c$.
By Lemma~\ref{lem:odd_cycles_imply_delta_events}, there are vertices $v,w\in T'$ such that
$v$ has color $c$ and $F\in \Delta^{vw}$.
Because $(F',T')$ is a maximally dominating cycle, the vertex $w$ has an outgoing dominating link $(w,v')$.
By the definition of the path $P_v$, the vertex $v'$ must have color $c$.
(Otherwise we would have extended $P_v$ by the edge $(w,v')$.)
Therefore, $F\in \Delta^{vw}_1 \cup \Delta^{vw}_2$.
Moreover, we have $F\in \Delta^{vw}_1$ if and only if $v$ is the only vertex of color $c$ in $T'$.

Let $2\cdot l+1 = |F'|= |T'|$ be the length of the odd cycle $(T',F')$. 
We show that $\sum_{v\in T'} X_v \le l = \frac{1}{2}(|F'|-1)$.
We distinguish the cases $l=1$, $l=2$, and $l\ge 3$.

 First, suppose $l=1$. 
    Then $|T'|=3$ and every vertex in $T'$ has a different color. Thus, $F\in\Delta_1^{vw}$ for every vertex $v\in T'$ (for some $w$) and
    \begin{equation*}
        \sum_{v\in T'} X_v = 3 \cdot b + \sum_{v\in T'} Z_v = 3b - 3 \left(b-\frac{1}{3}\right) = 1 = l\enspace.
    \end{equation*}

Now consider the case $l=2$.
    Then $|T'|=5$ and $F\in \Delta_1^{vw} \cup \Delta_2^{vw}$ for at least three vertices $v\in T'$ (for some $w$). 
    Moreover, for at least one vertex $v$ we have $F\in \Delta_1^{vw}$ for some $v\in T'$ (because there are at least three colors and hence one vertex must be the only one of its color).
    Thus,
    \begin{align*}
        \sum_{v\in T'} X_v &= 5 b + \sum_{v\in T'} Z_v \\ &\le 5b - \left(b-\frac{1}{3}\right) +2 \cdot \left(- 2\left(b-\frac{2}{5}\right)+\frac{1}{30} \right) \\
        &= 5b - b + \frac{1}{3} - 4b + \frac{8}{5}+ \frac{1}{15} = \frac{30}{15} = 2 = l\enspace.
    \end{align*}

 Finally, we consider the remaining case $l\ge 3$.
   Again, $F\in \Delta_1^{vw} \cup \Delta_2^{vw}$ for at least three vertices $v\in T'$ (for some $w$). Hence,
    \begin{align*}
        \sum_{v\in T'} X_v &= (2l+1) \cdot b + \sum_{v\in T'} Z_v \\
        &\le (2l+1)\cdot b + 3 \cdot \left( -2\left(b-\frac{2}{5}\right)+\frac{1}{30} \right) \\
        &= (2l+1)\cdot b - 6b + \frac{12}{5} + \frac{1}{10} \\
        &= (2l-5)\cdot b + \frac{5}{2} \\
        &= 2lb +\frac{5}{2} (1-2b ) \\
        &< 2lb + l \cdot (1-2b)\ =\ l\enspace.
    \end{align*}
Actually, for our choice of $b=0.42$, the case $l\geq 3$ cannot happen because we would have $|F'|\leq \frac{1}{1-2b} = 6.25 < 2\cdot l + 1 = |F'|$ for any $l\geq 3$. Nevertheless, we also included this case because it may be useful for potential future improvements of the factor $b$.
\smallskip

\noindent\textbf{Case 2:} $(T,F')$ is not a maximally dominating odd cycle.\\[2mm]
First, we observe that for every vertex $v$ with an incoming dominated link, 
we have $F\notin \Delta^{vw}_4$ and hence $X_v = b + Z_v  \le b + \max\{z_1,z_2,z_3\} = \frac{1}{2}$.

If all links in $F'$ are dominating, then every vertex that does not have an incoming dominated link has out-degree zero in $F'$.
This implies that  for every such vertex $v$, the path $P_v$ has length zero and hence $X_v=Z_v=0$.
Therefore, in this case $\sum_{v\in T'} \max\{X_v,0\}$ is at most half of the number of dominated links.
Lemma~\ref{lem:num_deletable_for_components} implies that there indeed exists a removable set of this cardinality.

Thus, we may assume that there is a link $(t,s)\in F'$ that is not dominating.
By Lemma~\ref{lem:structure_of_components}, this is the only nondominating link in $F'$.
Hence, every vertex $v \ne t$ that does not have an incoming dominated link, has out-degree zero, implying $X_v=Z_v=0$.
If $t$ has an incoming dominated link or if $X_t = Z_t = 0$,  we again have that
$\sum_{v\in T'} \max\{X_v,0\}$ is at most half of the number of dominated links and by
Lemma~\ref{lem:num_deletable_for_components} there exists a removable set of this cardinality.

Hence, it remains to consider the case where $t$ has no incoming dominated link and $X_t = Z_t \ne 0$.
By Lemma~\ref{lem:structure_of_components}, $(T',F'\setminus \{(t,s)\})$ does not contain any cycle.
See Figure~\ref{fig:proof_vertex_construction}.
Moreover, Lemma~\ref{lem:structure_of_components} implies that either $(T',F')$ is an arborescence with root $t$ and $(t,s)$ is the only outgoing link of $t$, or
$t$ has an incoming link. Also in the latter case, $(t,s)$ must be the only outgoing link of $t$ in $F'$
because any link in $\delta_{F'}^+(t)\setminus\{(t,s)\}$ would dominate the incoming link of $t$.
(Here, we used that all links in $F' \setminus \{(t,s)\}$ are dominating.)

We claim that there is a vertex $u\in T'$ such that $u$ has an incoming dominated link and $Z_u=0$.
Because $Z_t \ne 0$, the path $P_t$ has length at least two.
The first arc of this path must be the link $(t,s)$.
Let $P$ be a maximum length directed path in $(T',F' \setminus \{(t,s)\})$ that starts in $s$ and ends in some vertex $v$.
Because $P_t$ has length at least two, the path $P$ has length at least one.
Let $(u,v)$ be the last arc of $P$.
Then $(u,v) \ne (t,s)$ and hence $u\ne t$.

\begin{figure}[!ht]
\begin{center}
\begin{tikzpicture}[scale=0.5]

\pgfdeclarelayer{bg}
\pgfdeclarelayer{fg}
\pgfsetlayers{bg,main,fg}

\tikzset{
  prefix node name/.style={%
    /tikz/name/.append style={%
      /tikz/alias={#1##1}%
    }
  }
}

\tikzset{ 
    cross/.pic = {
    \draw[rotate = 45] (-#1,0) -- (#1,0);
    \draw[rotate = 45] (0,-#1) -- (0, #1);
    }
}

\tikzset{root/.style={fill=white,minimum size=13}}

\tikzset{s/.style={dashed,red,line width=1pt}}

\tikzset{
term/.style={fill=black!20, rectangle, minimum size=10},
tf/.append style={font=\scriptsize\color{black}},
}

\newcommand\blob[2]{
\pgfmathanglebetweenpoints%
{\pgfpointanchor{#1}{center}}%
{\pgfpointanchor{#2}{center}}

\edef\angle{\pgfmathresult}

\filldraw ($(#1)+(\angle+90:\d)$) arc (\angle+90:\angle+270:\d) --
($(#2)+(\angle+270:\d)$) arc (\angle+270:\angle+450:\d) -- cycle;
}

\newcommand\gra[2][]{

\begin{scope}

\begin{scope}[every node/.append style={thick,draw=black,fill=white,circle,minimum size=12, inner sep=2pt}]

\node (1) at (1,1) {t};
\node (2) at (4,-1) {s};
\node (3) at (7.5,0) {};
\node (4) at (9.5,2) {};
\node (5) at (11,0.75) {};
\node (6) at (10.5,-1) {};
\node (7) at (6,-4) {};
\node (8) at (9,-3) {u};
\node (9) at (12,-4) {v};
\node (10) at (3,-5) {};
\node (11) at (5.5,-7) {};
\node (12) at (8.5,-6) {};

\end{scope}%

\begin{scope}[-stealth,very thick]

\draw[dotted,black!75] (1) to (2);
\draw (2) to (3);
\draw (3) to (4);
\draw (3) to (5);
\draw (3) to (6);
\draw[red] (2) to (7);
\draw[red] (7) to (8);
\draw[red] (8) to (9);
\draw (7) to (10);
\draw (7) to (11);
\draw (7) to (12);

\end{scope}%

\end{scope}%

}%

\newcommand\cac[2][]{

\begin{scope}

\begin{scope}[every node/.append style={thick,draw=black,fill=white,circle,minimum size=12, inner sep=2pt}]

\node (1) at (1,-6) {t};
\node (2) at (4,-8) {s};
\node (3) at (8,-7) {};
\node (4) at (7,-3) {};
\node (5) at (3,-2) {};
\node (6) at (3,1) {u};
\node (7) at (6.5,1.5) {v};
\node (8) at (8,0) {};

\end{scope}%

\begin{scope}[-stealth,very thick]

\draw[dotted,black!75] (1) to (2);
\draw[red] (2) to (3);
\draw[red] (3) to (4);
\draw[red] (4) to (5);
\draw (5) to (1);
\draw[red] (5) to (6);
\draw[red] (6) to (7);
\draw (4) to (8);

\end{scope}%

\end{scope}%

}%

\begin{scope}

\begin{scope}[shift={(12,0)},scale=0.8]
\cac[]{}
\end{scope}

\begin{scope}[shift={(0,0)},scale=0.8]
\gra[]{}
\end{scope}

\begin{scope}[shift={(22,0)}]%
\def\ll{30mm} %
\def\vs{18mm} %

\begin{scope}[scale=0.5]
\draw[-stealth,black, very thick] (0,0) -- +(\ll,0) node[right] {dominating links};
\draw[-stealth,black, very thick, dotted, yshift=-\vs] (0,0) -- +(\ll,0) node[right] {nondominating links}[black!75];
\end{scope}

\begin{scope}[scale=0.5,yshift=-4*\vs]

\begin{scope}[every node/.style={circle,inner sep=0pt,minimum size=5pt}]
\node[fill=red] (rd) at (0,0) {};

\node at (rd.east)[right=4pt] {links in path $P$};
\end{scope}

\end{scope}

\end{scope}%

\end{scope}

\end{tikzpicture} \end{center}
\caption{Illustration of the proof of case~2 of Lemma \ref{lem:deletion_componentwise}.}\label{fig:proof_vertex_construction}
\end{figure}

We consider the path $P_u$. By the definition of $P_u$, all its arcs except for possibly the first one are dominating links.
Because $u\ne t$ and because $(t,s)$ is not dominating, the path $P_u$ is a path in $(T',F' \setminus\{(t,s)\})$.
Suppose $Z_u \ne 0$. Then $P_u$ has length at least two and hence extending the $s$-$u$ subpath of $P$ by the path $P_u$ yields a directed walk in $(T',F' \setminus \{(t,s)\})$
that is strictly longer than $P$.
Because $(T',F'\setminus  \{(t,s)\})$ does not contain a cycle, this walk must be a path, contradicting the choice of $P$.
Hence $Z_u = 0$. 
Moreover, because all links in $F'\setminus \{(t,s)\}$ are dominating, the link $(u,v)$ is dominating.
We conclude that
\begin{itemize}
\item $u$ has an incoming dominated link and hence $X_u = b + Z_u = b$;
\item $X_t = Z_t$;
\item $X_v \le \frac{1}{2}$ for all vertices $v\ne u$ with an incoming dominated link; and
\item $X_v = 0$ for all other vertices $v$.
\end{itemize}
Therefore, if $Z_t \le \frac{1}{2}-b$, the sum $\sum_{v\in T'} \max\{X_v, 0\}$ is at most half of the number of dominated links.
By Lemma~\ref{lem:num_deletable_for_components} there exists a removable set of links of this cardinality.

The only remaining case is when $Z_t=1-b$. Then 
\[
\sum_{v\in T'} \max\{0,X_v\}\le \tfrac{1}{2}\cdot|\{\ell\in F': \ell\text{ dominated}\}| + \tfrac{1}{2}
\] 
 and we have $F \in \Delta^{tw}_4$ for some $w\in T'$.
By the definition of the event $\Delta^{tw}_4$, the vertex $t$ has in-degree zero and hence $(T',F')$ is a arborescence. Moreover, there is an even length path from $t$ to a vertex $w$ that does not have an outgoing dominating link.
Then either $w$ does not have any outgoing link at all or this link is contained in a different component, implying that $w$ is a leaf of $(T',F')$.
We conclude that $(T',F')$ is an arborescence containing an even length path from its root $t$ to the leaf $w$.
By Lemma~\ref{lem:num_deletable_for_components}, there is a removable set containing at least $ \frac{1}{2}\cdot|\{\ell\in F': \ell\text{ dominated}\}| + \frac{1}{2}$  many links.
\end{proof}

\begin{lemma}
There exists a removable subset of $F$ that has cardinality at least $\sum_{v\in T} X_v$.
\end{lemma}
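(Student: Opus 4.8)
The plan is to aggregate the componentwise bounds of Lemma~\ref{lem:deletion_componentwise} over all components of $F$. Recall that $\Fscr$ denotes the set of components of $F$, which form a partition of $F$, and that for each component $F'\in\Fscr$ the graph $(T,F')$ has exactly one non-trivial connected component $(T',F')$. First I would observe that every vertex $v\in T$ that lies in no non-trivial connected component of any component of $F$ satisfies $X_v = 0$: such a vertex has no incoming link in $F$ at all, so $\mathbbm{1}[v\text{ has an entering dominated link in }F]=0$, and moreover $v$ has no outgoing link in $F$, so in the construction of $P_v$ the path has length zero, which forces $Z_v = 0$. (One must check here that $P_v$ uses only links of $F$, which is immediate from its definition, and that $v$ having no outgoing link in $F$ implies none in $F\setminus F_{c_v}$.) Consequently $\sum_{v\in T} X_v = \sum_{F'\in\Fscr}\sum_{v\in T'} X_v$, where for each component we write $(T',F')$ for its unique non-trivial connected component.

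Next I would apply Lemma~\ref{lem:deletion_componentwise} to each component $F'\in\Fscr$ separately. This yields, for each $F'$, a removable set $R'\subseteq F'$ with $|R'|\ge \sum_{v\in T'}\max\{X_v,0\}\ge \sum_{v\in T'} X_v$. Setting $R \coloneqq \bigcupp_{F'\in\Fscr} R'$, the cardinality bound $|R| = \sum_{F'\in\Fscr}|R'| \ge \sum_{F'\in\Fscr}\sum_{v\in T'} X_v = \sum_{v\in T} X_v$ follows from the fact that the components partition $F$ and hence the sets $R'$ are pairwise disjoint. The only genuine point to verify is that this union $R$ is itself removable, not merely each $R'$ individually.

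The main obstacle is therefore this last point: showing that removability is preserved under taking the union over components. The key structural fact to invoke is that domination respects components — if $\bar\ell\in F'$ dominates $\ell\in F$ then $\ell\in F'$ as well, by the definition of a component. So for any link $\ell\in R\subseteq F$, say $\ell\in F'$ with $F'\in\Fscr$, the link $\bar\ell\in F'\setminus R'$ that dominates $\ell$ (guaranteed by removability of $R'$ in $F'$) lies in $F'$, and since the $R''$ for other components $F''\ne F'$ are subsets of $F''$, which is disjoint from $F'$, we have $\bar\ell\notin R''$ for all $F''\ne F'$; hence $\bar\ell\in F\setminus R$. Thus every $\ell\in R$ is dominated by some $\bar\ell\in F\setminus R$, which is exactly the definition of $R$ being removable. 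This completes the argument.

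\begin{proof}
Recall that $\Fscr$ denotes the set of components of $F$; these form a partition of $F$, and for every $F'\in\Fscr$ the graph $(T,F')$ has exactly one non-trivial connected component, which we denote by $(T',F')$.

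First, we claim that $\sum_{v\in T} X_v = \sum_{F'\in\Fscr}\sum_{v\in T'} X_v$. Indeed, consider a vertex $v\in T$ that is not contained in $T'$ for any component $F'\in\Fscr$. Such a vertex has no incoming link in $F$, so $\mathbbm{1}[v\text{ has an entering dominated link in }F]=0$, and it has no outgoing link in $F$, so in particular it has no outgoing link in $F\setminus F_{c_v}$; hence the path $P_v$ has length zero and $Z_v=0$. Therefore $X_v=0$ for all such $v$, and since the sets $T'$ for $F'\in\Fscr$ are pairwise disjoint (the components partition $F$ and each $(T',F')$ is connected), the claim follows.

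Now, for each component $F'\in\Fscr$, Lemma~\ref{lem:deletion_componentwise} applied to $F'$ and its unique non-trivial connected component $(T',F')$ gives a removable set $R'\subseteq F'$ with
\[
 |R'|\ \ge\ \sum_{v\in T'}\max\{X_v,0\}\ \ge\ \sum_{v\in T'} X_v\enspace.
\]
Set $R\coloneqq \bigcupp_{F'\in\Fscr} R'$. Because the components partition $F$ and $R'\subseteq F'$ for each $F'$, the sets $R'$ are pairwise disjoint and thus
\[
 |R|\ =\ \sum_{F'\in\Fscr}|R'|\ \ge\ \sum_{F'\in\Fscr}\sum_{v\in T'} X_v\ =\ \sum_{v\in T} X_v\enspace.
\]

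It remains to check that $R$ is removable. Let $\ell\in R$, and let $F'\in\Fscr$ be the component with $\ell\in R'\subseteq F'$. Since $R'$ is removable, there is a link $\bar\ell\in F'\setminus R'$ that dominates $\ell$. By the definition of a component, any link of $F$ dominated by a link of $F'$ lies in $F'$, so in particular $\bar\ell$ and all links dominated by $\bar\ell$ stay within $F'$; more to the point, $\bar\ell\in F'$. For every other component $F''\in\Fscr$ with $F''\ne F'$ we have $R''\subseteq F''$ and $F''\cap F'=\emptyset$, hence $\bar\ell\notin R''$. Therefore $\bar\ell\notin R$, i.e., $\bar\ell\in F\setminus R$. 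Since $\bar\ell$ dominates $\ell$, this shows that every link of $R$ is dominated by a link of $F\setminus R$, so $R$ is removable. This completes the proof.
\end{proof}
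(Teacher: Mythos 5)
Your overall strategy is the same as the paper's: apply Lemma~\ref{lem:deletion_componentwise} componentwise, take the disjoint union of the resulting sets $R'$, and use the fact that domination never crosses component boundaries to conclude removability of the union. Your verification of that last point is correct and is exactly what the paper leaves implicit.

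However, there is a genuine error in your aggregation step. You claim that the vertex sets $T'$ of the non-trivial connected components of different components $F'\in\Fscr$ are pairwise disjoint, "since the components partition $F$ and each $(T',F')$ is connected," and you use this to assert the equality $\sum_{v\in T} X_v = \sum_{F'\in\Fscr}\sum_{v\in T'} X_v$. This is false: the components partition the \emph{links} $F$, not the vertices. A vertex $v$ can have its (unique) incoming link $(u,v)$ in one component and an outgoing link $(v,w)$ in a different component — this happens precisely when $(v,w)$ does not dominate $(u,v)$ and the two links are not related through a chain of dominations. Such a $v$ lies in $T'$ for two distinct components. Since $X_v$ can be negative (e.g., $X_v = b + z_1$ is fine, but $X_v = Z_v = z_1 < 0$ is possible when $v$ has no entering dominated link yet $\Delta^{vw}_1$ occurs), double-counting a negative $X_v$ makes $\sum_{F'}\sum_{v\in T'} X_v$ strictly \emph{smaller} than $\sum_{v\in T} X_v$, and your chain of inequalities $|R|\ge \sum_{F'}\sum_{v\in T'} X_v = \sum_{v\in T} X_v$ breaks at the last step.

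The fix is short and is the route the paper takes: do not drop the $\max$ inside the componentwise sums. From Lemma~\ref{lem:deletion_componentwise} you get $|R|\ge \sum_{F'\in\Fscr}\sum_{v\in T'}\max\{X_v,0\}$. Every vertex $v$ with $X_v>0$ either has an entering dominated link or is the start of a positive-length path $P_v$, hence is an endpoint of some link of $F$ and therefore lies in $T'$ for at least one component; since all summands are non-negative, over-counting only helps, giving $\sum_{F'}\sum_{v\in T'}\max\{X_v,0\}\ \ge\ \sum_{v\in T}\max\{X_v,0\}\ \ge\ \sum_{v\in T} X_v$.
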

\begin{proof}
Recall that $\Fscr$ is a partition of $F$.
By the definition of the components of $F$, the union of removable sets $R'\subseteq F'$ for all components $F'\in \Fscr$ is removable. 
For every $F'\in \Fscr$ we denote by  $(T_{F'},F')$ the unique non-trivial connected component of $(T,F')$.
We apply Lemma~\ref{lem:deletion_componentwise} to every component $F'\in \Fscr$ and conclude that we can delete at least
\[\sum_{F'\in \Fscr} \sum_{v\in T_{F'}} \max\{ X_v, 0\}\] links.
Every vertex $v\in T$ with $X_v > 0$ has an incoming dominated link or is the starting point of a postive length path $P_v$ in $(T,F)$.
Therefore, every vertex $v\in T$ with $X_v >0$ is contained in $T_{F'}$ for some $F'\in \Fscr$.
Hence, we have
\[ \sum_{F'\in \Fscr} \sum_{v\in T_{F'}} \max\{ X_v, 0\}\ \ge\ \sum_{v\in T}  \max\{ X_v, 0\} \ \ge\  \sum_{v\in T}  X_v\enspace. \]
\end{proof}

In the following we give a lower bound on $\sum_{v\in T} \mathbb{E}[X_v]$. 
 We need the following observation.
\begin{lemma}\label{lem:amount_of_receivers_of_profit}
For every color $c$ and every vertex $w$
\begin{equation*}
 \sum_{v\text{ with color }c} \mathbb{P}[\Delta^{vw}] \le x(S_w) -\eta_w^c\enspace.
\end{equation*}
\end{lemma}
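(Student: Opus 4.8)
\textbf{Proof plan for Lemma~\ref{lem:amount_of_receivers_of_profit}.}

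The plan is to bound the sum of the probabilities of the events $\Delta^{vw}$ over all $v$ of a fixed color $c$ by exhibiting, for each such event, a distinct link in $\delta^-_F(w)$ that is \emph{not} sampled by a vertex of color $c$, and then using that each such link is sampled with probability equal to its $x$-weight while at most one link enters $w$ at a time.

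First I would recall the structure of the event $\Delta^{vw}$: it requires that the path $P_v$ ends in $w$ and has strictly positive even length. The key point is that $P_v$ is constructed inside the graph $(T, F\setminus F_{c_v})$, where $F_{c_v}$ is the set of links sampled by terminals of color $c_v = c$. Since $P_v$ has positive length and ends at $w$, its last arc is a link entering $w$ that lies in $F\setminus F_c$; in particular this link is sampled by a terminal of a color different from $c$, i.e.\ it is the (unique) incoming link of $w$ in $F$ and it is \emph{not} in $F_c$. Thus whenever $\Delta^{vw}$ occurs, $w$ has an incoming link $\ell_w \in \delta^-_F(w) \cap S_w$ with $\ell_w \notin S_u$ for any $u$ of color $c$ — more precisely, the other endpoint of $\ell_w$ does not have color $c$. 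Moreover the events $\Delta^{vw}$ for different $v$ (of the same color $c$) are determined by disjoint parts of the randomness in the sense that at most one of them can occur in any single outcome: they all assert that the \emph{same} link $\ell_w$ (the unique incoming link of $w$) is the terminal edge of the respective path $P_v$, and a given $\ell_w = \{u,w\}$ can be the last arc of $P_v$ for only one choice of $v$ because $P_v$'s construction from $v$ is deterministic once $F$ is fixed — actually I should argue that $\sum_v \mathbb{P}[\Delta^{vw}] = \mathbb{P}[\bigcup_v \Delta^{vw}]$ since the events are pairwise disjoint for fixed $w$ and varying $v$ of the same color (this disjointness was already noted after the definition of $\Delta^{vw}$: ``the events $\Delta^{vw}$ and $\Delta^{vu}$ for $w\neq u$ are disjoint'' — but here I need disjointness in $v$, which follows because $P_v$ is a deterministic function of $F$ and of $v$, so $F$ cannot simultaneously lie in $\Delta^{vw}$ and $\Delta^{v'w}$ with $v\neq v'$ unless $P_v = P_{v'}$, impossible since paths have distinct start vertices).

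Given the disjointness, $\sum_{v \text{ color } c}\mathbb{P}[\Delta^{vw}]$ equals the probability that $w$ receives an incoming link in $F$ whose other endpoint does not have color $c$ \emph{and} which is the end of some valid path. I would then bound this from above simply by $\mathbb{P}[w \text{ has an incoming link in } F \text{ whose other endpoint is not of color } c]$. By property~\ref{item:sampling_probability_link} of the sampling process, $w$ samples link $\ell=\{u,w\}\in S_w$ with probability $x_\ell$, and at most one link is sampled, so this probability is exactly $\sum_{\{u,w\}\in S_w:\, c_u \neq c} x_{\{u,w\}} = x(S_w) - \sum_{\{u,w\}\in S_w:\, c_u = c} x_{\{u,w\}} = x(S_w) - \eta_w^c$, using the definition $\eta_w^c := \sum_{\{v,w\}\in S_w:\, v \text{ has color } c} x_{\{v,w\}}$. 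This gives the claimed inequality.

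The main obstacle I anticipate is the disjointness-in-$v$ bookkeeping: I must be careful to justify that the events $\{\Delta^{vw}\}_{v \text{ of color } c}$ are pairwise disjoint (so that their probabilities sum, rather than merely being upper-bounded by the union bound, which would lose nothing here but needs the right framing) and, crucially, that in each such event the terminal arc of $P_v$ is a link of $S_w$ whose far endpoint is \emph{not} of color $c$ — this is exactly where the choice to build $P_v$ inside $F\setminus F_{c_v}$ rather than inside $F$ is used, and it is the subtle point flagged parenthetically in the construction (``it will become important that we consider $F\setminus F_{c_v}$''). Once that is pinned down, the rest is the short counting argument above.
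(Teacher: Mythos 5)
Your overall strategy is the same as the paper's: show the events $\Delta^{vw}$ for $v$ of color $c$ are pairwise disjoint, then bound the probability of their union by the probability that $w$ has an incoming link in $F$ whose other endpoint is not of color $c$, which equals $x(S_w)-\eta_w^c$ by property~(i) of the sampling. The final counting is fine, but the justification of the central claim has a genuine gap. You write that the last arc of $P_v$ lies in $F\setminus F_c$ and \emph{therefore} ``the other endpoint of $\ell_w$ does not have color $c$.'' That inference is invalid: $F_c=\delta_F^-(\{u: u\text{ has color }c\})$ consists of the links \emph{entering} color-$c$ vertices, i.e.\ the links \emph{sampled by} color-$c$ terminals. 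So $(u,w)\notin F_c$ tells you only that the head $w$ does not have color $c$; a link $(u,w)$ with $u$ of color $c$ and $w$ of another color is perfectly well contained in $F\setminus F_c$. The claim you need is about the tail $u$, and it is true, but for a different reason: $\Delta^{vw}$ forces $P_v$ to have length at least $2$, so the second-to-last vertex $u$ of $P_v$ must itself have an incoming arc lying in $F\setminus F_c$; since every arc entering a color-$c$ vertex is removed in $F\setminus F_c$, this is only possible if $u$ does not have color $c$. Equivalently (as the paper phrases it): if $w$'s incoming link comes from a color-$c$ vertex, then $(T,F\setminus F_c)$ contains no path of length at least two ending in $w$, so no $\Delta^{vw}$ can occur. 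Without this length-$\ge 2$ argument the proof does not go through.

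Two smaller points on the disjointness step. First, ``impossible since paths have distinct start vertices'' is not by itself a proof that $P_v$ and $P_{v'}$ cannot both end in $w$; you need the structure of $(T,F\setminus F_c)$, namely that every vertex has in-degree at most one and every color-$c$ vertex has in-degree zero, so that at most one color-$c$ vertex $v$ admits a $v$-$w$ path in that graph. Second, your parenthetical that the union bound ``would lose nothing here'' is backwards: the union bound gives $\mathbb{P}[\bigcup_v\Delta^{vw}]\le\sum_v\mathbb{P}[\Delta^{vw}]$, which is the wrong direction for this lemma. You are bounding the \emph{sum} from above, so the disjointness (or at least the ``at most one $v$'' property) is essential, not a dispensable refinement.
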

\begin{proof}
We fix a vertex $w$ and a color $c$.
Recall that for every vertex $v$ of color $c$, the path $P_v$ is a path in the graph
$(T, F \setminus F_{c})$, where $F_{c} := \delta_F^-(\{u\in T: u\text{ has color }c\})$ is the set of links entering a vertex of color $c$.
In this graph, every vertex has in-degree at most one and every vertex of color $c$ has in-degree zero.
Therefore, there can be at most one vertex $v$ of color $c$ such that $(T, F \setminus F_{c})$ contains a $v$-$w$ path.
This implies that the events $\Delta^{vw}$ for different vertices $v$ of color $c$ are disjoint.
We conclude $\sum_{v\text{ with color }c} \mathbb{P}[\Delta^{vw}] = \mathbb{P}[\,\bigcup_{v\text{ with color }c} \Delta^{vw}] $.

To prove that this probability is at most $x(S_w) -\eta_w^c$, we observe that in any event $\Delta^{vw}$ the vertex $w$ has an incoming link $(u,w)$ in $F$. 
This happens with probability $x(S_w)$. 
With probability $\eta^c_w$, the vertex $w$ has an incoming link $(u,w)$ where $u$ has color~$c$.
In this case, the graph $(T, F \setminus F_{c})$ does not contains any path of length at least two that ends in $w$.
This implies that in this case the event $\Delta^{vw}$ cannot happen for any vertex $v$ of color~$c$.
\end{proof}

We will fix functions $g\colon [0,1]^{4} \to \mathbb{R}$ and $\mathrm{gain}\colon [0,1]^{5} \to \mathbb{R}$ such that for all $v\in T$:
\begin{align}\label{eq:distributing_eta_profit}
\mathbb{P}[v\text{ has an entering dominated link in }F] \ge g(\lambda_v^0, \mu^0_v, \lambda_v^1, \mu^1_v) 
+ \sum_{c=1}^q \mathrm{gain}(\lambda_v^0, \mu^0_v, \lambda_v^1, \mu^1_v, \eta^c_v)\enspace.
\end{align}
Both $g(\lambda_v^0, \mu^0_v, \lambda_v^1, \mu^1_v) $ and $ \mathrm{gain}(\lambda_v^0, \mu^0_v, \lambda_v^1, \mu^1_v, \eta^c_v)$ will be non-negative for every terminal $v\in T$.
(We fix the precise definition of these functions and prove the above inequality in Section~\ref{sec:compute_domination_prob}.)
The function $g$ gives a lower bound on the probability of an incoming dominated link in $S_v$ and this lower bound is independent of the values of $\eta^c_v$.
The function $\mathrm{gain}$ reflects the increase of this lower bound when $\eta^c_v$ is large.
We will prove that the expected size of the maximum removable set $R\subseteq F$ is at least $ b \cdot \sum_{v\in T} g(\lambda_v^0, \mu^0_v,\lambda_v^1, \mu^1_v)$.

Now we give a lower bound on $\sum_{v\in T} \mathbb{E}[X_v]$, and hence on the 
expected size of $R$.
Using \eqref{eq:distributing_eta_profit} and Lemma~\ref{lem:amount_of_receivers_of_profit}, we have
\begin{align*}
    \mathbb{E}\left[\sum_{v\in T} X_v\right] =&\ b \cdot \sum_{v\in T}\mathbb{P}[v\text{ has an entering dominated link in }F]
 + \sum_{v\in T}\sum_{w\in T} \mathbb{P}[\Delta^{vw}] \cdot \sum_{i=1}^4 \mathbb{P}[\Delta^{vw}_i | \Delta^{vw}] \cdot z_i
 \\[4mm]
\ge&\  b \cdot \sum_{v\in T} \left(g(\lambda_v^0, \mu^0_v, \lambda_v^1, \mu^1_v)
        + \sum_{c=1}^q \mathrm{gain}(\lambda_v^0, \mu^0_v, \lambda_v^1, \mu^1_v, \eta^c_v)\right)
 \\
 &\ + \sum_{v\in T}\sum_{w\in T} \mathbb{P}[\Delta^{vw}] \cdot \sum_{i=1}^4 \mathbb{P}[\Delta^{vw}_i | \Delta^{vw}] \cdot z_i 
 \\[4mm]
\ge&\  b \cdot \sum_{v\in T} g(\lambda_v^0, \mu^0_v, \lambda_v^1, \mu^1_v) \\
&\  +  \sum_{w\in T} \sum_{c=1}^q  b \cdot \mathrm{gain}(\lambda_w^0, \mu^0_w, \lambda_w^1, \mu^1_w, \eta^c_w) \cdot  \frac{1}{ x(S_w) -\eta_w^c} \cdot \sum_{v\text{ with color }c} \mathbb{P}[\Delta^{vw}]
 \\
 &\ + \sum_{v\in T}\sum_{w\in T} \mathbb{P}[\Delta^{vw}] \cdot \sum_{i=1}^4 \mathbb{P}[\Delta^{vw}_i | \Delta^{vw}] \cdot z_i 
\\[4mm]
\ge&\  b \cdot \sum_{v\in T} g(\lambda_v^0, \mu^0_v, \lambda_v^1, \mu^1_v) \\
&\ + \sum_{v\in T} \sum_{w\in T} \mathbb{P}[\Delta^{vw}]  \left(  \frac{b}{x(S_w) -\eta_w^{c_v}} \cdot  \mathrm{gain}(\lambda_w^0,\mu^0_w, \lambda_w^1, \mu^1_w, \eta^{c_v}_w)  + \sum_{i=1}^4 \mathbb{P}[\Delta^{vw}_i | \Delta^{vw}] \cdot z_i  \right)\enspace,
\end{align*}
where $c_v$ denotes the color of the vertex $v\in T$.

We will choose the function $\mathrm{gain}$ such that
\begin{equation}\label{eq:worst_case_no_short_odd_cycles}
  \frac{b}{x(S_w) -\eta_w^{c_v}} \cdot  \mathrm{gain}(\lambda_w^0,\mu^0_w, \lambda_w^1, \mu^1_w, \eta^{c_v}_w)  + \sum_{i=1}^4 \mathbb{P}[\Delta^{vw}_i | \Delta^{vw}] \cdot z_i  \ge 0
\end{equation}
for all $v,w\in T$.
Once we have defined the functions $g$ and $\mathrm{gain}$ and proved that they indeed fulfill \eqref{eq:distributing_eta_profit} and \eqref{eq:worst_case_no_short_odd_cycles},
we can conclude that the expected size of the maximum removable set $R\subseteq F$ is 
\begin{equation*}
\mathbb{E}[|R|]\ \ge\ b \cdot \sum_{v\in T} g(\lambda_v^0, \mu^0_v, \lambda_v^1, \mu^1_v)\enspace,
\end{equation*}
completing the proof of Lemma~\ref{lem:bound_stack_problem} and thus of Lemma~\ref{lem:stack_algo}.

\subsection{Bounding the expected number of dominated links}\label{sec:compute_domination_prob}

We now define the functions $g$ and  $\mathrm{gain}$.
We explain the arguments leading to this particular choice below, in the proof of  Lemma~\ref{lem:bound_probability_using_gain}.
We define
\begin{equation}\label{eq:defg}
\begin{aligned}
g(\lambda^0, \mu^0, \lambda^1, \mu^1) \coloneqq&\  e^{-\lambda^0 - \mu^0 -\lambda^1 - \mu^1} \cdot
                                                               \Big( 1+ e^{\mu^1 + \lambda^1} - e^{\mu^1 + \lambda^1 + \mu^0} \cdot (1+\lambda^0) \\
                                                                   &\        - e^{\mu^1} \cdot (1 + \lambda^1) 
                                                                      + e^{\lambda^0 +\mu^0 +\lambda^1 + \mu^1} \cdot (\lambda^0 + \mu^0 +\lambda^1 + \mu^1)\Big)\enspace.
\end{aligned}
\end{equation}
Moreover, we set
\begin{equation*}
 \mathrm{gain}(\lambda^0, \mu^0, \lambda^1, \mu^1, \eta) \coloneqq h_1(\lambda^0, \mu^0, \lambda^1, \mu^1, \eta) \cdot h_2(\lambda^0, \mu^0, \lambda^1, \mu^1, \eta)\enspace,
\end{equation*}
where
\begin{align*}
 h_1(\lambda^0, \mu^0, \lambda^1, \mu^1, \eta) \coloneqq&
 \begin{cases}
   e^{- (\lambda^0 + \mu^0 + \lambda^1 + \mu^1) + \eta} &\text{ if }\eta > \frac{1}{2}\left(\lambda^0 + \mu^0 + \lambda^1 + \mu^1\right) \\ 
   1- (\lambda^0 + \mu^0 + \lambda^1 + \mu^1) + \eta & \text{ otherwise}
\end{cases} 
\\[3mm]
 h_2(\lambda^0, \mu^0, \lambda^1, \mu^1, \eta) \coloneqq&
 \begin{cases}
  h^a(\lambda^0, \mu^0, \lambda^1, \mu^1, \eta) & \text{ for } \eta \le \mu^1  \\
  h^b(\lambda^0, \mu^0, \lambda^1, \mu^1, \eta) &\text{ for }  \mu^1  < \eta \le \mu^1 + \lambda^1   \\
  h^c(\lambda^0, \mu^0, \lambda^1, \mu^1, \eta) & \text{ for }   \mu^1 + \lambda^1 < \eta \le \mu^1 + \lambda^1 + \mu^0  \\
  h^d(\lambda^0, \mu^0, \lambda^1, \mu^1, \eta) &  \text{ for }   \mu^1 + \lambda^1 + \mu^0  < \eta \\
\end{cases}
\end{align*}
and the functions $h^a$, $h^b$, $h^c$, and $h^d$ are defined as follows:
\begin{align*}
 h^a(\lambda^0, \mu^0, \lambda^1, \mu^1, \eta) \coloneqq\ &    1 - \eta + \frac{\eta \cdot \eta}{2} - e^{-\eta} \\
  h^b(\lambda^0, \mu^0, \lambda^1, \mu^1, \eta)\coloneqq\  &   - e^{-\eta} +  e^{\mu^1 -\eta} \cdot \left(1 +\lambda^1\right) - \mu^1 \cdot \left(1 - \eta + \tfrac{\mu^1}{2} \right)
                                                                                          -\lambda^1 \cdot \left(1 + \mu^1 - \eta \right)\\
  h^c(\lambda^0, \mu^0, \lambda^1, \mu^1, \eta)\coloneqq\  &  - e^{-\eta} - e^{\lambda^1 + \mu^1 - \eta} + e^{\mu^1 - \eta} \cdot \left(1 + \lambda^1\right)
                                                                                          + 1  + \frac{\lambda^1 \cdot \lambda^1}{2} - \eta + \frac{\eta \cdot \eta}{2}   \\
  h^d(\lambda^0, \mu^0, \lambda^1, \mu^1, \eta) \coloneqq\ &    - e^{-\eta} - e^{\lambda^1 + \mu^1 - \eta}    + e^{\mu^1 + \lambda^1 + \mu^0 - \eta} \cdot \left( 1 + \lambda^0\right)
                                                                                           + e^{\mu^1 - \eta} \cdot \left( 1 + \lambda^1\right) \\
                                                                                       &    - \left( \mu^0 + \mu^1\right) \cdot \left( 1 + \tfrac{\mu^0}{2} + \tfrac{\mu^1}{2} - \eta \right) 
                                                                                          - \lambda^1 \cdot \left( 1 + \mu^0 + \mu^1 - \eta\right) \\
                                                                                        &  - \lambda^0 \cdot \left( 1 + \mu^1 + \lambda^1 + \mu^0 - \eta\right)\enspace.
\end{align*}

Next, we show that with these definitions inequality~\eqref{eq:distributing_eta_profit} is indeed fulfilled.
\begin{lemma}\label{lem:bound_probability_using_gain}
For every terminal $t\in T$, we have
\begin{align*}
\mathbb{P}[t\text{ has an entering dominated link in }F] \ge g(\lambda_t^0, \mu^0_t, \lambda_t^1, \mu^1_t) 
+ \sum_{c=1}^q \mathrm{gain}(\lambda_t^0, \mu^0_t, \lambda_t^1, \mu^1_t, \eta^c_t)\enspace.
\end{align*}
Moreover, $g(\lambda_t^0, \mu^0_t, \lambda_t^1, \mu^1_t) $ and $ \mathrm{gain}(\lambda_t^0, \mu^0_t, \lambda_t^1, \mu^1_t, \eta^c_t)$ 
are non-negative for every terminal $t\in T$.
\end{lemma}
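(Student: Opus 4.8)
The plan is to fix a terminal $t$ and directly estimate the probability that $t$ receives a dominated incoming link in $F$, by conditioning on which link of the stack $S_t$ is sampled by $t$ (if any) and which outgoing links of $t$ are sampled by its neighbours. First I would recall, as in the proof of Lemma~\ref{lem:simple_domination_prob}, that the sampling of the incoming link at $t$ is independent of the sampling of all outgoing links of $t$, since the outgoing links are sampled by neighbours, all of which have colours different from $c_t$, so property~\ref{item:different_corlors_independent} applies. Hence, writing $B(\ell)=\{\bar\ell\in S_t:\bar\ell\preceq_t\ell\}$ for the down-set of $\ell$ in the preorder $\preceq_t$, we have
\[
\mathbb{P}[t\text{ has an entering dominated link}] \;=\; \sum_{\ell\in S_t} x_\ell \cdot \mathbb{P}[F\text{ contains an outgoing link of }t\text{ in }B(\ell)]\enspace.
\]
The next step is to lower bound the inner probability $1-\prod_{c}(1-\gamma^c(\ell))$, where $\gamma^c(\ell)$ is the $x$-mass of links in $B(\ell)$ going to colour-$c$ vertices; here the colour-wise products are genuinely independent by~\ref{item:different_corlors_independent}, while within a colour property~\ref{item:correlation_of_colors} tells us the mass of $B(\ell)$ of that colour is sampled with probability exactly $\min\{1,\gamma^c(\ell)\}=\gamma^c(\ell)$. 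Using $1-\prod_c(1-\gamma^c)\ge 1-e^{-\sum_c\gamma^c}$, and being slightly more careful for the colour that contains the bulk of $B(\ell)$ — this is the source of the $\mathrm{gain}$ term — we obtain two contributions: a colour-oblivious part that integrates, over the partition $\Lambda^0_t\cupp M^0_t\cupp\Lambda^1_t\cupp M^1_t$ of $S_t$ with the prescribed ordering, to exactly $g(\lambda^0_t,\mu^0_t,\lambda^1_t,\mu^1_t)$ as defined in \eqref{eq:defg}; and an extra part that, for each colour $c$, contributes $\mathrm{gain}(\lambda^0_t,\mu^0_t,\lambda^1_t,\mu^1_t,\eta^c_t)$.

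The bookkeeping that produces $g$ is: on $\Lambda^0_t$ (all links below everything, pairwise comparable) the analysis is the same as the simple stack problem; on $M^0_t$ (above $\Lambda^0_t$, below $\Lambda^1_t$) likewise; on $\Lambda^1_t$ (pairwise comparable, above $M^0_t$, below $M^1_t$, all incident to the single vertex of Lemma~\ref{lem:unique_endpoint_lambda_1}) the comparability lets us use $1-e^{-z}$ replaced by the stronger "if incoming and any outgoing of $\Lambda^1_t$ both sampled, incoming is dominated"; on $M^1_t$ (topmost) the standard bound. Summing the corresponding integrals — a routine but lengthy computation of $\int$'s of $1-e^{-z}$ and linear terms over the four consecutive intervals — yields the closed form \eqref{eq:defg}. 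The definition of $\mathrm{gain}$ via $h_1\cdot h_2$ with the four cases $h^a,\dots,h^d$ then records precisely the improvement in the inner probability when one colour carries mass $\eta$ into $B(\ell)$: the case split $\eta\le\mu^1$, $\mu^1<\eta\le\mu^1+\lambda^1$, etc.\ mirrors where in the ordered stack the colour-$c$ mass sits, and $h_1$ captures whether $\eta$ exceeds half the total stack mass (the threshold at which $e^{-(\cdots)+\eta}$ beats the linear bound $1-(\cdots)+\eta$). I would verify \eqref{eq:distributing_eta_profit} by checking the inequality on each of the (at most four) regimes for $\eta$, using convexity of $z\mapsto 1-e^{-z}$ to pass between the exact colour-wise product and the claimed bound.

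Finally, non-negativity of $g$ and of $\mathrm{gain}$: both follow because each is a probability-like quantity — $g$ is a lower bound for a probability that is itself obtained as a sum of non-negative integrals, and $\mathrm{gain}$ is a product $h_1\cdot h_2$ where $h_1>0$ always and $h_2$ is, on each sub-case, the integral of a non-negative integrand over an interval of non-negative length (for instance $h^a(\cdots,\eta)=\int_0^\eta (1-e^{-z}+z)\,dz/\text{(normalisation)}$-type expressions, which are manifestly $\ge 0$ on $[0,1]$). I would spell this out sub-case by sub-case, possibly with a short monotonicity argument in $\eta$ for the boundary cases. The main obstacle I anticipate is not conceptual but the sheer volume of the piecewise integration needed to match the inner-probability bound against the explicit formulas for $g$ and the $h$'s; the delicate point is to set up the colour-aware refinement of $1-\prod_c(1-\gamma^c)$ correctly so that exactly one colour's contribution is upgraded and the upgrade integrates to $\mathrm{gain}(\cdots,\eta^c_t)$ summed over $c$, with the colour-oblivious remainder integrating cleanly to $g$.
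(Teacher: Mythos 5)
Your proposal follows essentially the same route as the paper's proof: the same independence-based identity $\sum_{\ell\in S_t} x_\ell\cdot\mathbb{P}[\text{outgoing link in }F\cap B(\ell)]$, the same split of $1-\prod_c(1-\gamma^c(\ell))$ into a colour-oblivious baseline $1-e^{-x(B(\ell))}$ that integrates to $g$ using the lower bounds on $x(B(\ell_z))$ coming from the ordered partition $\Lambda^0_t\cupp M^0_t\cupp\Lambda^1_t\cupp M^1_t$, and the same one-colour-at-a-time upgrade (the paper's telescoping $p^0_t\le p^1_t\le\dots\le p^q_t$) whose increments are bounded below by the $\mathrm{gain}$ terms, with non-negativity read off from the integral representations. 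The plan is correct; the only parts left implicit are routine (the explicit interpolation and the two-case bound behind $h_1$), and you have correctly identified where each case split originates.
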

\begin{proof}
Consider a fixed terminal $t\in T$. 
For a link $\ell \in S_t$ we define $B(\ell) = \{ \bar \ell \in S_t : \bar \ell \preceq_t \ell \}$ to be the set of links that are below $\ell$ on the stack $S_t$.
Because the  outgoing links of $t$ in $D=(T,F)$ are sampled independently from the incoming link and because $t$ samples a link $\ell\in S_t$ with probability $x_{\ell}$,
we have
\begin{equation}\label{eq:simple_formula_domination_prob}
 \mathbb{P} [t\text{ has an incoming dominated link in }F] = \sum_{\ell \in S_t} x_{\ell} \cdot \mathbb{P}[t\text{ has an outgoing link in }F \cap B(\ell)]\enspace. 
\end{equation}
By properties~\ref{item:sampling_probability_link} and~\ref{item:correlation_of_colors},
 the probability that $F$ contains a link $(t,v)\in B(\ell)$ such that $v$ has color $c$,
is \[ \gamma^c(\ell)\ \coloneqq \sum_{\substack{\{t,v\}\in B(\ell):\\ v\text{ has color }c}} x_{\{t,v\}}\ \le \sum_{\substack{\{t,v\}\in S_t:\\ v\text{ has color }c}} x_{\{t,v\}} \ =\ \eta^c_t\enspace \]
for every color $c\in [q]$.
Because the sampling of outgoing links of $t$ that end in vertices of different colors is independent by~\ref{item:different_corlors_independent}, we have
\begin{align*}
\mathbb{P}[t\text{ has an outgoing link in }F \cap B(\ell)]\ =\ 1 - \prod_{c=1}^q \left( 1- \gamma^c(\ell) \right)  \enspace.
\end{align*}
Let us assume without loss of generality that $\eta^{1}_t \ge \eta^{2}_t \ge \dots \ge \eta^{q}_t$.
For $i\in \{0,1,\dots,q\}$, we define 
\[
p_t^i (\ell)\ :=\ 1 - \prod_{c=1}^i \left( 1- \gamma^{c}(\ell) \right) \cdot \prod_{c=i+1}^q  e^{- \gamma^{c}(\ell)}\enspace.
\]
Because  $e^{-\gamma^{c}} \ge 1 - \gamma^{c}$, we have
\[
p_t^0(\ell)\ \le\ p_t^1(\ell)\ \le\ \dots\ \le\ p_t^q(\ell)\ =\ \mathbb{P}[t\text{ has an outgoing link in }F \cap B(\ell)] \enspace.
\]
We will show that
\begin{equation}\label{eq:bound_to_show_g}
 \sum_{\ell \in S_t} x_{\ell} \cdot p_t^0(\ell)  \ \ge\  g(\lambda_t^0, \mu^0_t, \lambda_t^1, \mu^1_t) 
\end{equation}
and
\begin{equation}\label{eq:bound_to_show_profit}
  \sum_{\ell \in S_t} x_{\ell} \cdot  \left(  p_t^i(\ell) - p_t^{i-1}(\ell)   \right) \ \ge\ \mathrm{gain}(\lambda_t^0, \mu^0_t, \lambda_t^1, \mu^1_t, \eta^i_t)
\end{equation}
for all $i\in [q]$.
By \eqref{eq:simple_formula_domination_prob}, summing up these inequalities will complete the proof.

To prove \eqref{eq:bound_to_show_g} and \eqref{eq:bound_to_show_profit}, it will be useful to index the links in $S_t$ as follows.
First, we number the links in $S_t=\{\ell_1,\dots,\ell_l\}$ according to the order $\preceq_t$, i.e., such that we have
$\ell_1 \preceq_t \ell_2 \preceq_t \dots \preceq_t \ell_l$.
For a value $z$ in $[0,x(S_t)]$, we define $\ell_z$ to be the link $\ell_j$ for which $\sum_{i=1}^{j-1} x_{\ell_i} < z \le \sum_{i=1}^{j} x_{\ell_i}$.
By the properties of the partition $S_t = \Lambda^0_t \cup M^0_t \cup \Lambda^1_t \cup M^1_t$ in the definition of the stack problem, we have
\begin{equation}\label{eq:stack_partition}
\begin{aligned}
\ell_z \in&\ \Lambda^0_t & &\text{ if  }\ 0 \le z \le \lambda^0_t\enspace, \\
\ell_z \in&\ M^0_t & &\text{ if }\lambda^0_t < z \le \lambda^0_t + \mu^0_t\enspace, \\
\ell_z \in&\ \Lambda^1_t & &\text{ if }\lambda^0_t + \mu^0_t < z \le \lambda^0_t + \mu^0_t + \lambda^1_t\enspace,\text{ and }\\
\ell_z \in&\ M^1_t & &\text{  if }\lambda^0_t + \mu^0_t + \lambda^1_t < z \le \lambda^0_t + \mu^0_t + \lambda^1_t + \mu^1_t = x(S_t)\enspace.
\end{aligned}
\end{equation}
Moreover,
\begin{equation}\label{eq:size_below_sets}
x(B(\ell_z))\ \ge\ \beta(\ell_z) \coloneqq \begin{cases}
                              \lambda^0_t &\text{ if }\ell_z \in \Lambda^0_t \\
                              z &\text{ if } \ell_z \in M^0_t \\
                              \lambda^0_t + \mu^0_t + \lambda^1_t &\text{ if }\ell_z \in \Lambda^1_t \\
                              z &\text{ if } \ell_z \in M^1_t 
                           \end{cases}
\end{equation}
for all $z\in [0, x(S_t)]$.

Now we prove \eqref{eq:bound_to_show_g}. Because $x(B(\ell)) = \sum_{c=1}^q \gamma^c(\ell)$ for all $\ell \in S_t$, we have
\begin{align*}
 \sum_{\ell \in S_t} x_{\ell} \cdot p_t^0(\ell)\ =\ \int_{0}^{x(S_t)} p_t^0(\ell_z)\ dz\
                                                                    =\ \int_{0}^{x(S_t)} 1- \prod_{c=1}^q  e^{- \gamma^{c}(\ell_z)}\ dz \
                                                                   =\ \int_{0}^{x(S_t)} 1 -  e^{-x(B(\ell_z))}\ dz\enspace.
\end{align*}
Using \eqref{eq:stack_partition} and \eqref{eq:size_below_sets}, this implies
\begin{align*}
 \sum_{\ell \in S_t} x_{\ell} \cdot p_t^0(\ell)\ \ge&\ 
         \int_{0}^{\lambda^0_t}  1 - e^{-\lambda^0_t}\ dz 
         + \int_{\lambda^0_t}^{\lambda^0_t + \mu^0_t}  1 - e^{-z}\ dz  \\
         &+ \int_{\lambda^0_t + \mu^0_t}^{\lambda^0_t + \mu^0_t + \lambda^1_t}  1 - e^{-(\lambda^0_t + \mu^0_t + \lambda^1_t)}\ dz 
         + \int_{\lambda^0_t + \mu^0_t + \lambda^1_t}^{x(S_t)}  1 - e^{-z}\ dz \\[4mm]
     =&\ g(\lambda_t^0, \mu^0_t, \lambda_t^1, \mu^1_t) \enspace .
\end{align*}
This completes the proof of \eqref{eq:bound_to_show_g}. 
Moreover, the above different way of  expressing $g(\lambda_t^0, \mu^0_t, \lambda_t^1, \mu^1_t)$ readily implies that it is non-negative.
It remains to prove \eqref{eq:bound_to_show_profit} and prove non-negativity of $\mathrm{gain}(\lambda_t^0, \mu^0_t, \lambda_t^1, \mu^1_t, \eta^i_t)$ for all $t\in T$ and $i\in [q]$.

In order to prove \eqref{eq:bound_to_show_profit}, we first consider the case where $\eta^i_t > \frac{1}{2} \cdot x(S_t)$. 
Then we have $i=1$ because $\eta^{1}_t \ge \eta^{2}_t \ge \dots \ge \eta^{q}_t$ and $\sum_{c=1}^q \eta^c_t = x(S_t)$.
Therefore,
\begin{align*}
   p_t^1(\ell) - p_t^{0}(\ell)  \ =&\  \left(  e^{-\gamma^1(\ell)} - 1+ \gamma^{1}(\ell) \right) \cdot \prod_{c=2}^q  e^{- \gamma^{c}(\ell)} \\
 \ =&\  \left( e^{-\gamma^1(\ell)} -1 + \gamma^{1}(\ell) \right) \cdot   e^{- \sum_{c=2}^q \gamma^{c}(\ell)} \\
 =&\ \left(e^{-\gamma^1(\ell)} - 1 + \gamma^{1}(\ell) \right) \cdot   e^{- x(B(\ell)) + \gamma^1(\ell)}\enspace.
\end{align*}
Because $\gamma^1(\ell) \ge 0$ and the right-hand side is monotonically increasing in $\gamma^1(\ell)$ for $\gamma^1(\ell)\ge 0$, it is minimized for $\gamma^1(\ell) \ge 0$ being as small as possible.
In particular, this implies $p_t^1(\ell) - p_t^{0}(\ell) \ge 0$.
Together with 
\[
\gamma^1(\ell)\ =\  x(B(\ell)) - \sum_{c=2}^q \gamma^{c}(\ell) \ \ge\   x(B(\ell)) - \sum_{c=2}^q \eta^c_t \ =\ x(B(\ell)) - (x(S_t) - \eta^1_t)\enspace,
\]
we thus obtain
\begin{align*}
   p_t^1(\ell) - p_t^{0}(\ell)  \ \ge&\ \left(e^{-x(B(\ell))+x(S_t)- \eta^1_t } - 1 + x(B(\ell)) - x(S_t) + \eta^1_t \right)\cdot \mathds{1}\left\{ x(B(\ell)) \ge x(S_t) - \eta^1_t\right\} \\
   & \ \ \cdot   e^{- (x(S_t) - \eta^1_t)} \\
             \ =&\ \left(e^{-x(B(\ell))+x(S_t)- \eta^1_t } - 1 + x(B(\ell)) - x(S_t) + \eta^1_t \right)\cdot \mathds{1}\left\{ x(B(\ell)) \ge x(S_t) - \eta^1_t\right\} \\
             & \ \  \cdot h_1(\lambda_t^0, \mu^0_t, \lambda_t^1, \mu^1_t, \eta^1_t) \enspace,
\end{align*}
where, given $x_1,x_2 \in \mathbb{R}$, the indicator function $\mathds{1}$ is defined as follows.
\begin{equation*}
    \mathds{1}\{x_1 \ge x_2\} = \begin{cases}
        1 \quad \text{ if } x_1 \ge x_2 \\
        0 \quad \text{ otherwise} \enspace .
    \end{cases}
\end{equation*}
Before we show that this implies \eqref{eq:bound_to_show_profit},
we prove that the above lower bound on $ p_t^i(\ell) - p_t^{i-1}(\ell) $ holds also in the case $\eta^i_t \leq \frac{1}{2} \cdot x(S_t)$.
Using $e^{-\gamma^c } \ge 1 - \gamma^c$, we have
\begin{align*}
   p_t^i(\ell) - p_t^{i-1}(\ell)  \ =&\   \left(e^{- \gamma^{i}(\ell)}  - 1 + \gamma^{i}(\ell) \right) \cdot \prod_{c=1}^{i-1} \left( 1- \gamma^{c}(\ell) \right) \cdot \prod_{c=i+1}^q  e^{- \gamma^{c}(\ell)} \\
\ge&\ \left(e^{- \gamma^{i}(\ell)}  - 1 + \gamma^{i}(\ell) \right)  \cdot \prod_{c\in \{1,\dots,q\}\setminus\{i\}} (1-\gamma^{c}(\ell)) \\
\ge&\ \left(e^{- \gamma^{i}(\ell)}  - 1 + \gamma^{i}(\ell) \right)  \cdot \Big(1-\sum_{c\in \{1,\dots,q\}\setminus\{i\}} \gamma^{c}(\ell)\Big) \\
=&\ \left(e^{- \gamma^{i}(\ell)}  - 1 + \gamma^{i}(\ell) \right)  \cdot \left(1-x(B(\ell)) +  \gamma^{i}(\ell) \right) \enspace.
\end{align*}
Because  $\gamma^i(\ell) \ge 0$ and the right-hand side is monotonically increasing in $\gamma^i(\ell)$ for  $\gamma^i(\ell) \ge 0$, it is minimized for $\gamma^i(\ell)\ge 0$ being as small as possible.
In particular, this implies $p_t^i(\ell) - p_t^{i-1}(\ell) \ge 0$.
Together with $\gamma^i(\ell) \ge   x(B(\ell)) - \sum_{c\in \{1,\dots,q\} \setminus \{i\}} \eta^c_t  = x(B(\ell)) - (x(S_t) - \eta^i_t)$,
we thus obtain
\begin{align*}
   p_t^i(\ell) - p_t^{i-1}(\ell)  \ \ge&\ \left( e^{-x(B(\ell))+x(S_t)- \eta^i_t } - 1+  x(B(\ell))- x(S_t) +  \eta^i_t  \right) \cdot \mathds{1}\left\{ x(B(\ell)) \ge x(S_t) - \eta^i_t\right\} \\
   & \ \ \cdot   \left(1 - (x(S_t) - \eta^i_t)\right) \\
      \ \ge&\ \left( e^{-x(B(\ell))+x(S_t)- \eta^i_t } - 1+  x(B(\ell))- x(S_t) +  \eta^i_t  \right) \cdot \mathds{1}\left\{ x(B(\ell)) \ge x(S_t) - \eta^i_t\right\}\\
      & \ \ \cdot   h_1(\lambda_t^0, \mu^0_t, \lambda_t^1, \mu^1_t, \eta^i_t)\enspace.
\end{align*}
Recall that we have shown this lower bound on $p_t^i(\ell) - p_t^{i-1}(\ell)$ also for the case $\eta^i_t > \frac{1}{2} x(S_t)$.
Hence,  in both cases we have
\begin{equation*}
\begin{aligned}
&\sum_{\ell \in S_t} x_{\ell} \cdot \left( p_t^i(\ell) - p_t^{i-1}(\ell) \right)\ =\
\int_{0}^{x(S_t)}  \left( p_t^i(\ell_z) - p_t^{i-1}(\ell_z)\right) \ dz  \\
\ge&\ \int_{0}^{x(S_t)} \left(e^{-x(B(\ell_z))+x(S_t)- \eta^i_t}  - 1 + x(B(\ell_z)) -x(S_t)+  \eta^i_t \right) \cdot \mathds{1}\left\{ x(B(\ell_z)) \ge x(S_t) - \eta^i_t\right\} dz \\
           & \ \  \cdot h_1(\lambda_t^0, \mu^0_t, \lambda_t^1, \mu^1_t, \eta^i_t)\enspace.
\end{aligned}
\end{equation*}
In order to show \eqref{eq:bound_to_show_profit} and complete the proof of the overall lemma, we show the following:
\begin{equation}\label{eq:final_gain_calculation}
\begin{aligned}
&\int_{0}^{x(S_t)} \!\!\left(e^{-x(B(\ell_z))+x(S_t)- \eta^i_t}  - 1 + x(B(\ell_z)) -x(S_t)+  \eta^i_t \right)\cdot \mathds{1}\left\{x(B(\ell_z)) \ge x(S_t)-  \eta^i_t  \right\} \ dz \\
\geq &
\int_{0}^{x(S_t)} \!\!\left(e^{-\beta(\ell_z)+x(S_t)- \eta^i_t}  - 1 + \beta(\ell_z) -x(S_t)+  \eta^i_t \right) \cdot \mathds{1}\left\{ \beta(\ell_z) \ge x(S_t)-  \eta^i_t  \right\} dz \\[1em]
=& \  h_2(\lambda_t^0, \mu^0_t, \lambda_t^1, \mu^1_t, \eta^i_t)\enspace.
\end{aligned}
\end{equation}
The equation of~\eqref{eq:final_gain_calculation} results by computing the integral using the definition of $\beta(\ell_z)$ (see~\eqref{eq:size_below_sets} and \eqref{eq:stack_partition}).
For the inequality in~\eqref{eq:final_gain_calculation}, we observe that replacing $x(B(\ell_z))$ by some lower bound on $x(B(\ell_z))$ yields a lower bound on the left-hand side of this equation
because $( e^{-y}-1+y)\cdot\mathds{1}\{y \ge 0\}$ is increasing in $y$. 
By \eqref{eq:size_below_sets}, we have $\beta(\ell_z)\le x(B(\ell_z))$ and this implies the inequality in~\eqref{eq:final_gain_calculation}.
Because $h_1(\lambda_t^0, \mu^0_t, \lambda_t^1, \mu^1_t, \eta^i_t)$ is clearly non-negative by definition and because the second integral in~\eqref{eq:final_gain_calculation} is non-negative for any values of $\beta(\ell_z)$, $x(S_t)$, and $\eta^i_t$, we can conclude that $\mathrm{gain}(\lambda_t^0, \mu^0_t, \lambda_t^1, \mu^1_t, \eta^i_t) \ge 0$.
\end{proof}

To finish the proof of Lemma~\ref{lem:bound_stack_problem}, it remains to prove that the function $\mathrm{gain}$ fulfills \eqref{eq:worst_case_no_short_odd_cycles}.
Using Lemma~\ref{lem:bound_conditional_prob} this leads to the following condition. 
We need to prove that the expression \eqref{eq:b_condition_checked_numerically} below is non-negative for all $v,w\in T$:

\begin{equation}\label{eq:b_condition_checked_numerically}
\begin{aligned}
  &\ \frac{b}{x(S_w) -\eta_w^{c_v}} \cdot  \mathrm{gain}(\lambda_w^0,\mu^0_w, \lambda_w^1, \mu^1_w, \eta^{c_v}_w)  \\
  & + s_{vw} \cdot z_1  + (\eta_w^{c_v} - s_{vw}) \cdot z_2 \\
 &+ \max\{0,\ x(S_v) - \eta_w^{c_v}\} \cdot z_3 + \max\{0,\ 1 - x(S_v) -\eta_w^{c_v} +s_{vw}\} \cdot z_4 \\[3mm]
=&\ \frac{b}{\lambda^0_w + \mu^0_w + \lambda^1_w + \mu^1_w - \eta_w^{c_v}} 
         \cdot  \mathrm{gain}(\lambda_w^0,\mu^0_w, \lambda_w^1, \mu^1_w, \eta^{c_v}_w)  \\
&\ - s_{vw} \cdot  \left(b-\tfrac{1}{3}\right) - (\eta_w^{c_v} - s_{vw}) \cdot \left(2\left(b-\tfrac{2}{5}\right)-\tfrac{1}{30}\right) \\
&+ \max\{0,\ x(S_v) - \eta_w^{c_v}\} \cdot \left( \tfrac{1}{2}-b \right) + \max\{0,\ 1 - x(S_v) -\eta_w^{c_v} +s_{vw}\} \cdot \left(1-b\right)\enspace.
\end{aligned}
\end{equation}
We verify this condition numerically for $b=0.42$ using a computer program that does the following.
First, we observe that for a fixed value of $\eta_w^{c_v}$ one can easily compute the worst possible value
\begin{align*}
\min \Big\{&   - s_{vw} \cdot  \left(b-\tfrac{1}{3}\right) - (\eta_w^{c_v} - s_{vw}) \cdot \left(2\left(b-\tfrac{2}{5}\right)-\tfrac{1}{30}\right) \\
&+ \max\{0,\ x(S_v) - \eta_w^{c_v}\} \cdot \left( \tfrac{1}{2}-b \right) + \max\{0,\ 1 - x(S_v) -\eta_w^{c_v} +s_{vw}\} \cdot \left(1-b\right)  :\\
&  0 \le s_{vw} \le \eta_w^{c_v},\ s_{vw} \le x(S_v) \le 1 \Big\}
\end{align*}
of the last part of the expression \eqref{eq:b_condition_checked_numerically} because, except for the two maxima, 
the expression is linear in the two variables $s_{vw}$ and $x(S_v)$.
(Here, the conditions on $s_{vw}$ and $x(S_v)$ follow from the definition of $s_{vw}$ and $\eta_w^{c_v}$.)
To prove non-negativity of the overall expression, we use a fine grid to partition the area
\begin{equation*}
\left\{ (\lambda^0_w, \mu^0_w, \lambda_w^1, \mu^1_w, \eta^{c_v}_w) \in [0,1]^5 : 
            \eta^{c_v}_w \le  \lambda^0_w + \mu^0_w +\lambda_w^1 + \mu^1_w \le 1 \right\}
\end{equation*}
into many small subareas. Then we verify non-negativity of \eqref{eq:b_condition_checked_numerically} separately for each of the grid cells.
For this we use pessimistic estimates on the value of the function $\mathrm{gain}$.
When making the grid cells small enough, simple pessimistic estimates are sufficient to verify that \eqref{eq:b_condition_checked_numerically} is non-negative for $b=0.42$.\footnote{
We also did a second and independent calculation using Mathematica to minimize \eqref{eq:b_condition_checked_numerically}.
This confirmed non-negativity.}

\subsection{Combining the different rounding procedures}

In this section we combine the result of our stack analysis with  rounding procedure discussed in Section~\ref{sec:cg} (using Chv\'atal-Gomory cuts) in order to obtain 
a polynomial-time randomized algorithm for $O(1)$-wide cactus augmentation that achieves an approximation ratio of \apxfac in expectation.
(We will explain how to turn the algorithm into a deterministic one in the next section.)
Our procedure first computes an optimal solution $x$ to the linear program 
$\min\{1^T y : y \in P_{\mathrm{bundle}}^{\mathrm{min}} \cap P_{\mathrm{cross}}\}$, where $P_{\mathrm{cross}}$ is the polytope from Lemma~\ref{lem:cross-link_rounding}, which is needed for the other backbone procedure. We then return the better of the results of two rounding procedures, namely Algorithm~\ref{algo:stack_rounding} and the rounding procedure guaranteed by Lemma~\ref{lem:cross-link_rounding}.

\smallskip

To analyze the approximation ratio of this algorithm, we exploit the following property of the function~$g$, which was defined in \eqref{eq:defg}.

\begin{lemma}\label{lem:convexity_of_g}
The function $g : [0,1]^4 \to \mathbb{R}$ is convex in each of its parameters.
\end{lemma}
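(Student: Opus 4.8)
The plan is to verify convexity of $g$ separately in each of its four variables $\lambda^0,\mu^0,\lambda^1,\mu^1$ by direct inspection of the closed-form expression~\eqref{eq:defg}. The key observation is that $g$ has the structure $g(\lambda^0,\mu^0,\lambda^1,\mu^1)=e^{-\lambda^0-\mu^0-\lambda^1-\mu^1}\cdot\big(\text{sum of a few exponential-times-polynomial terms}\big)$, and after distributing the leading exponential factor through the bracket, each summand becomes of the form $e^{a}\cdot p$ or $e^{-\lambda^0-\mu^0-\lambda^1-\mu^1}\cdot e^{b}\cdot p$ where $a,b$ are affine in the four variables and $p$ is a low-degree polynomial (degree at most $1$) in them. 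So I would first rewrite $g$ as an explicit sum of monomials of the shape $c\cdot e^{(\text{affine})}\cdot(\text{affine or constant})$, which makes the partial structure transparent.

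The cleanest route is to recall the elementary fact (which the paper derives implicitly in the proof of Lemma~\ref{lem:bound_probability_using_gain}) that $g(\lambda^0,\mu^0,\lambda^1,\mu^1)$ equals the integral
\[
\int_{0}^{\lambda^0}\!\!\big(1-e^{-\lambda^0}\big)\,dz
+\int_{\lambda^0}^{\lambda^0+\mu^0}\!\!\big(1-e^{-z}\big)\,dz
+\int_{\lambda^0+\mu^0}^{\lambda^0+\mu^0+\lambda^1}\!\!\big(1-e^{-(\lambda^0+\mu^0+\lambda^1)}\big)\,dz
+\int_{\lambda^0+\mu^0+\lambda^1}^{\lambda^0+\mu^0+\lambda^1+\mu^1}\!\!\big(1-e^{-z}\big)\,dz .
\]
From this representation I would differentiate twice in each variable in turn. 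Fix three of the variables and view $g$ as a function of the fourth; the integrand $z\mapsto 1-e^{-z}$ is increasing, and $z\mapsto 1-e^{-s}$ with $s$ the appropriate running upper limit is also increasing in that limit. Using the Leibniz rule, the first derivative with respect to $\mu^1$, say, is $1-e^{-x(S_t)}$ where $x(S_t)=\lambda^0+\mu^0+\lambda^1+\mu^1$, whose further derivative is $e^{-x(S_t)}>0$; the derivative with respect to $\mu^0$ is $\int$-boundary contributions plus $\int_{\lambda^0}^{\lambda^0+\mu^0}e^{-z}\,dz$-type terms, and one checks the second derivative is a nonnegative combination of $e^{-(\cdot)}$ factors. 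The cases for $\lambda^0$ and $\lambda^1$ are analogous but need slightly more care because those variables appear both as limits of integration and inside the constant integrand $1-e^{-(\lambda^0+\mu^0+\lambda^1)}$ of the third integral; I would organize the computation so that the contribution of that integral, namely $\lambda^1\big(1-e^{-(\lambda^0+\mu^0+\lambda^1)}\big)$, is handled by noting it is a product of an increasing nonnegative function with a nonnegative convex-in-the-relevant-variable factor, and that $\lambda^1\cdot e^{-(\lambda^0+\mu^0+\lambda^1)}$ is convex in $\lambda^1$ on $[0,1]$ since $\partial^2/\partial(\lambda^1)^2=(\lambda^1-2)e^{-(\cdots)}\le 0$ — wait, that last sign needs rechecking, so I would instead just differentiate the closed form~\eqref{eq:defg} termwise.

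The main obstacle I anticipate is purely bookkeeping: there is no conceptual difficulty, but the expression~\eqref{eq:defg} has several terms and one must correctly compute $\partial^2 g/\partial\theta^2\ge 0$ for each $\theta\in\{\lambda^0,\mu^0,\lambda^1,\mu^1\}$ without sign errors, keeping track of which terms survive after double differentiation. I would therefore present the argument via the integral representation above, where each of the four second derivatives reduces to a manifestly nonnegative quantity (an exponential, or an integral of $e^{-z}$, or a sum of such), rather than grinding through the algebraic form. Concretely: for $\mu^1$ and for $\mu^0$ the second derivative is a single positive exponential $e^{-(\cdots)}$; for $\lambda^0$ and $\lambda^1$ it is a sum of terms each of the form $e^{-(\cdots)}$ times a nonnegative coefficient, using that $\lambda^0,\lambda^1\in[0,1]$. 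Since convexity in each coordinate separately is exactly what the statement asks (not joint convexity), this coordinatewise check completes the proof, and it is this coordinatewise convexity that will later justify moving the inner minimum in the approximation-ratio optimization to the point where all stack loads are equal.
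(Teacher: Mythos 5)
Your proposal matches the paper's proof, which simply computes the four second partial derivatives of the closed form~\eqref{eq:defg} and observes they are non-negative; the integral representation you invoke is exactly how the paper derives $g$ in the first place, so this is the same argument organized slightly differently. One small caution: the second derivatives with respect to $\mu^0$ and $\lambda^0$ are not ``a single positive exponential'' or a manifestly non-negative combination --- e.g.\ $\partial^2 g/\partial(\mu^0)^2 = e^{-s}+e^{-\lambda^0-\mu^0}-(1+\lambda^1)e^{-(\lambda^0+\mu^0+\lambda^1)}$ with $s=\lambda^0+\mu^0+\lambda^1+\mu^1$ --- so you still need the inequality $1+y\le e^y$ (and $\lambda^0,\lambda^1\le 1$) to conclude, which is precisely the ingredient the paper cites.
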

\begin{proof}
We compute the second partial derivative of $g$ with respect to each of the parameters $\lambda^0$, $\mu^0$, $\lambda^1$, and $\mu^1$.
They can all easily be seen to be non-negative, using the inequality $1+y \le e^y$ for all $y\in \mathbb{R}$.
\end{proof}

Given a $k$-wide instance $(G=(V,E), L)$ of cactus augmentation, we define $\alpha := \frac{x(L{\mathrm{cross}})}{x(L)}$.
Then we have 
\begin{equation}\label{eq:relate_alpha_stack_load}
2\cdot\alpha \cdot x(L) =  \sum_{v\in T} (\uplambda_v^0 + \upmu_v^0+ \uplambda_v^1 + \upmu^1_v)\enspace.
\end{equation}
By Lemma~\ref{lem:stack_algo}, Algorithm~\ref{algo:stack_rounding} yields a solution with at most the following number of links in expectation:
\begin{align}
&x(L_{\mathrm{in}}) + 2 \cdot x(L_{\mathrm{cross}}) - b \cdot \sum_{v\in T} g(\uplambda_v^0, \upmu^0_v,\uplambda_v^1, \upmu^1_v) \notag \\
=&\ \left(1 + \alpha - \frac{b}{x(L)} \sum_{v\in T} g(\uplambda_v^0, \upmu^0_v,\uplambda_v^1, \upmu^1_v)  \right) \cdot x(L) \notag \\
=&\ \left(1 + \alpha - \frac{b \cdot 2\alpha}{\sum_{v\in T} (\uplambda_v^0 + \upmu^0_v + \uplambda_v^1 +\upmu^1_v) } \sum_{v\in T} g(\uplambda_v^0, \upmu^0_v,\uplambda_v^1, \upmu^1_v)  \right) \cdot x(L)\enspace . \label{eq:result_stack_algo}
\end{align}
Now we use the rounding procedure discussed in Section~\ref{sec:cg} (using Chv\'atal-Gomory cuts).
By Lemma~\ref{lem:cross-link_rounding} we obtain a solution with at most $x(L)  + x(L_{\mathrm{in}}) - x(L_{\mathrm{up}})$ many links.
Lemma~\ref{lem:generating_up_links} and \eqref{eq:relate_alpha_stack_load} imply 
\begin{align}
x(L)  + x(L_{\mathrm{in}}) - x(L_{\mathrm{up}}) \leq&\ x(L) + (1- \alpha) \cdot x(L) - \sum_{v\in T} \upmu^0_v  \notag \\
=&\ \left(2-\alpha - \frac{2 \cdot \alpha}{\sum_{v\in T} (\uplambda_v^0 + \upmu_v^0+ \uplambda_v^1 + \upmu^1_v)} \cdot\sum_{v\in T} \upmu^0_v \right) \cdot x(L)\enspace .
\label{eq:result_cg_algorithm}
\end{align}
We output the better of the results of the two rounding procedures and, hence, the expected number of links will be at most the minimum of~\eqref{eq:result_stack_algo} and~\eqref{eq:result_cg_algorithm}.
In order to get an upper bound on the approximation ratio, we consider the worst case of this minimum over all possible values of $\alpha$, $\uplambda_v^0$, $\upmu^0_v$, $\uplambda_v^1$, $\upmu^1_v$. 
In order to restrict the set of ``possible values'', we now derive some conditions that these values need to fulfill.
First, because $x(\delta_L(t)) \ge 1 $ for all $t\in T$ and because all cross-links incident to a terminal $t\in T$ are contained in $\Uplambda^0_t$, we have 
$2\cdot  x(L) \ge |T| + \sum_{v\in T} (\upmu^0_v +\uplambda^1_v +\upmu^1_v)$, implying the inequality below (the equation follows from~\eqref{eq:relate_alpha_stack_load}):
\[
 \sum_{v\in T} \left(\uplambda_v^0 + \upmu^0_v +\uplambda_v^1 +\upmu^1_v \right) = 2 \cdot \alpha \cdot x(L) \ge \alpha\cdot \sum_{v\in T} \left(1 +\upmu^0_v + \uplambda^1_v +\upmu^1_v\right)\enspace.
\]
Moreover, by Lemma~\ref{lem:upper_bound_mu1_links}, we have
\[
 \sum_{v\in T} \left(\uplambda_v^0 + \upmu^0_v +\uplambda_v^1 + 2 \cdot \upmu^1_v\right) \le |T| \enspace.
\]
Therefore, we obtain an approximation ratio of
\begin{align*}
\max \Big\{&\ \min\Big\{\ 2-\alpha - \frac{2 \cdot \alpha}{\sum_{v\in T} (\uplambda_v^0 + \upmu_v^0+ \uplambda_v^1 + \upmu^1_v)} \cdot\sum_{v\in T} \upmu^0_v,\\
               &\ \phantom{\min\Big\{}\  1 + \alpha - \frac{b \cdot 2 \alpha}{\sum_{v\in T} (\uplambda_v^0 + \upmu^0_v +\uplambda_v^1 +\upmu^1_v) }\cdot  \sum_{v\in T} g(\uplambda_v^0, \upmu^0_v, \uplambda_v^1, \upmu^1_v)\ \Big\}\  : \\
&\quad \alpha \in [0,1], \\
&\quad  \uplambda_v^0, \upmu^0_v, \uplambda_v^1, \upmu^1_v \in [0,1] \text{ for  all } v\in T, \\
&\quad \sum_{v\in T} \left(\uplambda_v^0 + \upmu^0_v +\uplambda_v^1 +\mu^1_v \right) \ge \alpha\cdot \sum_{v\in T} \left(1 +\upmu^0_v + \uplambda^1_v +\upmu^1_v\right),\\
&\quad   \sum_{v\in T} \left(\uplambda_v^0 + \upmu^0_v +\uplambda_v^1 + 2 \cdot \mu^1_v\right) \le |T| \\
 \Big\}&\enspace.
\end{align*}
Because, due to Lemma~\ref{lem:convexity_of_g}, the function $g$ is convex in each of its parameters, replacing each of the variables $\uplambda_v^0, \upmu^0_v, \uplambda_v^1, \upmu^1_v$ by their average value over all of the terminals $v\in T$,
does not increase the value of $\sum_{v\in T} g(\uplambda_v^0, \upmu^0_v, \uplambda_v^1, \upmu^1_v)$.
Moreover, this replacement does not change $\sum_{v\in T} (\uplambda_v^0 + \upmu_v^0+ \uplambda_v^1 + \upmu^1_v)$.
We also note that the average values of $\uplambda_v^0, \upmu^0_v, \uplambda_v^1, \upmu^1_v$ will be feasible for the above optimization problem given that $\uplambda_v^0, \upmu^0_v, \uplambda_v^1, \upmu^1_v$ are feasible.
Thus, we can simplify the optimization problem and conclude that we obtain an approximation ratio of 
\begin{equation}\label{eq:final_optimization_problem}
\begin{aligned}
\max \Big\{&\ \min\Big\{\ 2-\alpha - \frac{2 \cdot \alpha}{\uplambda^0 + \upmu^0+ \uplambda^1 + \upmu^1} \cdot  \upmu^0,\\
               &\ \phantom{\min\Big\{}\  1 + \alpha - \frac{b \cdot 2 \alpha}{\uplambda^0 + \upmu^0 +\uplambda^1 +\upmu^1}\cdot  g(\uplambda^0, \upmu^0, \uplambda^1, \upmu^1)\ \Big\}\  : \\
&\quad \alpha \in [0,1], \\
&\quad  \uplambda^0, \upmu^0, \uplambda^1, \upmu^1 \in [0,1], \\
&\quad \uplambda^0 + \upmu^0 +\uplambda^1 +\mu^1 \ge \alpha\cdot \left(1 +\upmu^0 + \uplambda^1 +\upmu^1 \right),\\
&\quad  \uplambda^0 + \upmu^0 +\uplambda^1 + 2 \cdot \mu^1 \le 1 \\
 \Big\}&\enspace.
\end{aligned}
\end{equation}
To determine the value of this optimization problem, we again use a computer program, similar to the one we used to verify \eqref{eq:b_condition_checked_numerically}.
This yields that the optimum value of \eqref{eq:final_optimization_problem} is at most $\apxfac$.

\begin{lemma}
Let $k$ be a constant.
There is a randomized polynomial-time algorithm that, given a $k$-wide instance of CacAP, returns a CacAP solution with at most $\apxfac \cdot |\OPT|$ many links in expectation.
\end{lemma}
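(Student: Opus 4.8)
The plan is to assemble the pieces already developed in this section into a single randomized algorithm and bound its expected output size against $x(L)\le |\OPT|$. First I would compute an optimal solution $x$ to the linear program $\min\{1^Ty : y\in P^{\mathrm{min}}_{\mathrm{bundle}}(G,L)\cap P_{\mathrm{cross}}\}$; this is possible in polynomial time by Lemma~\ref{lem:optimize_efficiently_min_bundle} together with Lemma~\ref{lem:cross-link_rounding}. Since $P^{\mathrm{min}}_{\mathrm{CacAP}}(G,L)\subseteq P^{\mathrm{min}}_{\mathrm{bundle}}(G,L)\cap P_{\mathrm{cross}}$ (because $P_{\mathrm{cross}}\supseteq P_{\mathrm{CacAP}}(G)$ and every $L_{\mathrm{cross}}$-minimal solution lies in $P^{\mathrm{min}}_{\mathrm{bundle}}$), and since by Lemma~\ref{lem:existence_optimum_minimal_solution} there is an optimum CacAP solution that is $L_{\mathrm{cross}}$-minimal, we get $x(L)\le |\OPT|$. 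Then I would run the two rounding procedures on $x$: Algorithm~\ref{algo:stack_rounding}, whose expected output size is bounded by~\eqref{eq:result_stack_algo} via Lemma~\ref{lem:stack_algo}, and the Chv\'atal-Gomory rounding of Lemma~\ref{lem:cross-link_rounding}, whose output size is bounded by~\eqref{eq:result_cg_algorithm} via Lemma~\ref{lem:generating_up_links} and~\eqref{eq:relate_alpha_stack_load}. The algorithm returns the smaller of the two solutions, so its expected size is at most the minimum of~\eqref{eq:result_stack_algo} and~\eqref{eq:result_cg_algorithm}.

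Next I would bound that minimum. Writing $\alpha\coloneqq x(L_{\mathrm{cross}})/x(L)$, I would invoke~\eqref{eq:relate_alpha_stack_load}, the terminal-degree bound $2x(L)\ge |T| + \sum_{v\in T}(\upmu^0_v+\uplambda^1_v+\upmu^1_v)$ (which holds because $x(\delta_L(t))\ge 1$ and all cross-links incident to a terminal $t$ lie in $\Uplambda^0_t$), and Lemma~\ref{lem:upper_bound_mu1_links}. These reduce the worst-case analysis to the optimization problem over $\alpha,(\uplambda^0_v,\upmu^0_v,\uplambda^1_v,\upmu^1_v)_{v\in T}$ that appears right before~\eqref{eq:final_optimization_problem}. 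Using Lemma~\ref{lem:convexity_of_g} (convexity of $g$ in each argument), replacing each variable by its average over all terminals does not increase $\sum_{v}g(\cdot)$, does not change $\sum_v(\uplambda^0_v+\upmu^0_v+\uplambda^1_v+\upmu^1_v)$, and keeps all constraints satisfied. This collapses the problem to the five-variable program~\eqref{eq:final_optimization_problem}, whose optimum value is at most $\apxfac$ by the computer-assisted verification already described (and cross-checked with Mathematica). Hence the expected output is at most $\apxfac\cdot x(L)\le \apxfac\cdot|\OPT|$.

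The main obstacle, conceptually, is that everything hinges on Lemma~\ref{lem:bound_stack_problem} (the stack analysis), whose proof is the most delicate part of the section: it requires the case analysis of Lemma~\ref{lem:deletion_componentwise} over component types, the handling of short odd cycles via the $\Delta^{vw}$ events and their subevents (Lemmas~\ref{lem:odd_cycles_imply_delta_events}, \ref{lem:bound_conditional_prob}, \ref{lem:amount_of_receivers_of_profit}), and the explicit functions $g$ and $\mathrm{gain}$ together with the numerically verified inequality~\eqref{eq:b_condition_checked_numerically}. For the present statement, however, I would treat Lemma~\ref{lem:bound_stack_problem} (hence Lemma~\ref{lem:stack_algo}), Lemma~\ref{lem:cross-link_rounding}, Lemma~\ref{lem:generating_up_links}, Lemma~\ref{lem:upper_bound_mu1_links}, and Lemma~\ref{lem:convexity_of_g} as established, so the only remaining work is the (routine but careful) bookkeeping that turns~\eqref{eq:result_stack_algo} and~\eqref{eq:result_cg_algorithm} into the optimization problem~\eqref{eq:final_optimization_problem} and the appeal to its numerically certified optimum value $\apxfac$. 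The one subtlety worth spelling out explicitly is the chain of inclusions showing $x(L)\le|\OPT|$, since the LP is over $P^{\mathrm{min}}_{\mathrm{bundle}}\cap P_{\mathrm{cross}}$ rather than over $P_{\mathrm{CacAP}}$, and one must use $L_{\mathrm{cross}}$-minimality of an optimum solution to certify feasibility of $\chi^{\OPT}$ for this LP.

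Combining this with Theorem~\ref{thm:main_reduction} (setting $k=\frac{64(8+3\varepsilon)}{\varepsilon^2}$ for a suitably small constant $\varepsilon>0$, and noting that the loss factor $(1+\varepsilon)$ can be absorbed so that the final constant stays below $\apxfac$ after a slight initial slack, or alternatively that the optimization problem~\eqref{eq:final_optimization_problem} already yields a value strictly below $\apxfac$ so that a small multiplicative $(1+\varepsilon)$ still stays at or below $\apxfac$) would then give Theorem~\ref{thm:main}; but for the lemma at hand it suffices to present the randomized $O(1)$-wide algorithm and the expectation bound just described.
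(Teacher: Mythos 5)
Your proposal is correct and follows essentially the same route as the paper's proof: compute an optimum of $\min\{1^Ty : y\in P^{\mathrm{min}}_{\mathrm{bundle}}(G,L)\cap P_{\mathrm{cross}}\}$, return the better of Algorithm~\ref{algo:stack_rounding} and the Chv\'atal--Gomory rounding, bound the minimum of \eqref{eq:result_stack_algo} and \eqref{eq:result_cg_algorithm} via the reduction to \eqref{eq:final_optimization_problem}, and certify $x(L)\le|\OPT|$ through Lemma~\ref{lem:existence_optimum_minimal_solution} and the relaxation inclusions. The subtlety you flag about why $\chi^{\OPT}$ (after passing to an $L_{\mathrm{cross}}$-minimal optimum) is feasible for the LP is exactly the point the paper also makes.
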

\begin{proof}
Lemma~\ref{lem:cross-link_rounding} and Lemma~\ref{lem:optimize_efficiently_min_bundle} imply that we can compute an optimum solution
$x$ to the linear program $\min\{1^T y : y \in P_{\mathrm{bundle}}^{\mathrm{min}} \cap P_{\mathrm{cross}}\}$ in polynomial time. 
We have shown above that there is an efficient randomized algorithm computing a CacAP solution with at most $\apxfac \cdot x(L)$ many links in expectation.
Using that $P_{\mathrm{cross}}$ is a relaxation of the convex hull of incidence vectors of  CacAP solutions and that 
$ P_{\mathrm{bundle}}^{\mathrm{min}}$ is a relaxation of the convex hull of incidence vectors of $L_{\mathrm{cross}}$-minimal CacAP solutions,
Lemma~\ref{lem:existence_optimum_minimal_solution} implies $x(L) \le |\OPT|$.
\end{proof}

\subsection{Derandomizing}\label{sec:derandomizing}

Finally, we explain how we can turn our randomized algorithm into a deterministic algorithm that achieves an approximation ratio of $\apxfac$,
which is the approximation ratio we already proved for our randomized algorithm above.
Recall that the rounding procedure from Section~\ref{sec:cg} is deterministic.
The only randomized part of our algorithm is the sampling of a solution $F_i$ for every principal subcactus $G_i$ in Algorithm~\ref{algo:stack_rounding}.

To analyze Algorithm~\ref{algo:stack_rounding}, we proved that in the corresponding instance of the stack problem there exists a removable set of links of size at least $\sum_{t\in T} X_t$.
From this we concluded that the randomized algorithm outputs a solution with at most $\sum_{c=1}^q |F_c| -\sum_{t\in T} X_t$ many links.
Our proof shows that if we can find solutions $F_i$ for every principal subcactus $G_i$ such that
$\sum_{c=1}^q |F_c| - \sum_{t\in T} X_t$ is at most as large as its expected value, then we achieve the claimed approximation ratio of $\apxfac$.

We use the method of conditional expectations, i.e., the following procedure.
We consider the principal subcacti $G_i$ for $i\in[q]$ sequentially to fix the random link set $F_i$ to one of the possible sets $F_i^j$ with $j\in [h]$.
Therefore, when we consider subcactus $G_i$ we have already fixed the value of the random variable $F_d$ to some value $F^{j_d}_d$  for each $d<i$.
We have
\begin{align*}
 &\mathbb{E}\left[\sum_{c=1}^q |F_c| -\sum_{t\in T} X_t \ \Big|\  F_d = F^{j_d}\text{ for }d< i\right] \\
&=\ \sum_{j=1}^h p_j \cdot  \mathbb{E}\left[\sum_{c=1}^q |F_c| -\sum_{t\in T} X_t \ \Big|\  F_d = F^{j_d}_d\text{ for }d< i \text{ and }F_i = F_i^j \right]\enspace,
\end{align*}
where $x|_{L_i} = \sum_{j=1}^h p_j \cdot \chi^{F_i^j}$, and we have $F_i = F_i^j$ with probability $p_j$.
This implies that for some $j_i\in [h]$ we have 
\begin{equation}\label{eq:condition_needed_for_derandomizing}
\mathbb{E}\left[\sum_{c=1}^q |F_c| -\sum_{t\in T} X_t \ \Big|\  F_d = F^{j_d}_d\text{ for }d< i+1\right]
\ \le\ \mathbb{E}\left[\sum_{c=1}^q |F_c| -\sum_{t\in T} X_t \ \Big|\  F_d = F^{j_d}_d\text{ for }d< i\right]\enspace.
\end{equation}
Then we fix $F_i$ to be $F^{j_i}_i$ for this $j_i\in [h]$.
Doing this for all $i=1,\dots, q$, yields a solution $F_1,\dots,F_q$ such that
\[
\sum_{c=1}^c |F_c| -\sum_{t\in T} X_t\ \le\ \mathbb{E}\left[\sum_{c=1}^q |F_c| -\sum_{t\in T} X_t \right]\enspace.
\]
In order to achieve a polynomial runtime we need to compute the conditional expectations appearing in condition~\eqref{eq:condition_needed_for_derandomizing} in polynomial time.
The only difficulty is compute the conditional expected values of the random variables $X_t = b \cdot \mathbbm{1}[v\text{ has an entering dominated link in }F] + Z_v$.

First, we note that we can compute the probability 
that a terminal $t\in T$ has an incoming dominated link in $F$ in polynomial time. 
Having a vector $x \in P^{\mathrm{min}}_{\mathrm{bundle}}(G,L)$ and fixed decompositions  $x|_{L_d} = \sum_{j=1}^h p_j \cdot \chi^{F_d^j}$ for all $d\in [q]$ we can compute this probability exactly. (This can be seen from the proof of Lemma~\ref{lem:simple_domination_prob}.)

It remains to compute the conditional expectation of the random variables $Z_t$.
To this end we need to compute the conditional probabilities of the (polynomially many) events $\Delta^{vw}_i$ for $i\in \{1,\dots,4\}$ and $v,w\in T$.
Recall that the event $\Delta^{vw}_i$ happens if the path $P_v$ ends in $w$, its length is even, positive and at most $\frac{1}{1-2b}$, and certain conditions (depending on $i \in\{1,\dots,4\}$) on links incident to $v$ and $w$ in $(T,F)$ are fulfilled.
Because $P_v$ has constant length in this case, there are only polynomially many possibilities for the path $P_v$ in the event $\Delta^{vw}_i$ and hence we can enumerate over all of them.
Let us now consider a fixed path $v$-$w$ path $P_v$ of even positive length of at most $\frac{1}{1-2b}$.
In particular, the set $C \subseteq [q]$ of colors of vertices of $P_v$ has constant size.
Therefore, we can enumerate all possible outcomes for $(F_c)_{c\in C}$ in polynomial time.
The only impact that $F_l$ for other colors $l\in [q]\setminus C$ has on the fact of whether the event $\Delta^{vw}_i$ happens for the fixed path $P_v$, is the following.
If for any of the vertices in $P_v$ there is a link $\ell=(v,w) \in F_c$ such that $\ell$ is below the incoming link of $v$ in
$\bigcup_{c\in C} F_c$, then the path $P_v$ is not the $v$-$w$ path we fixed.
(In case of ties, i.e., when there are several outgoing links of $v$ such that each of them is below the others, we fix an arbitrary order on these links to make the choice of $P_v$ unique.)
For each solution $F_c^j$ with $c\in [q]\setminus C$ we can determine whether it contains a link $\ell$ that would lead to a different choice of $P_v$ and we can do this independently for each color.
Therefore, we can compute the probability that any such link $\ell$ gets sampled for any of these colors in polynomial time.

This shows that we can achieve the approximation ratio $\apxfac$ for $k$-wide connectivity augmentation also with a deterministic polynomial-time algorithm.
Together with Theorem~\ref{thm:main_reduction}, this concludes the proof of our main result, i.e., Theorem~\ref{thm:main}.

\section{Extensions to Weighted Connectivity Augmentation}\label{sec:weighted}
In this section, we discuss how our results carry over to weighted CacAP with bounded costs, in a similar spirit as this has been done for TAP (see~\cite{adjiashvili_2018_beating,fiorini_2018_approximating,grandoni_2018_improved,nutov_2017_tree}), to obtain the following result.
\begin{theorem}\label{thm:main_weighted}
For $B > 0$, there is a polynomial-time algorithm that for a weighted CacAP instance $(G=(V,E),L,c)$, where $c:L\to [1,B]$ are the link costs, returns a solution of approximation guarantee $\sfrac{3}{2} - f(B)$, where $f:[1,\infty)\to \mathbb{R}_{>0}$ is a positive function that tends to $0$ as $B$ tends to infinity.
\end{theorem}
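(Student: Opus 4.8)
The plan is to mimic, for the weighted setting with bounded costs $c:L\to[1,B]$, the same two-phase structure used in the unweighted case: first reduce to $O(1)$-wide instances, then run the two backbone procedures (bundle rounding and the Chv\'atal–Gomory cross-link rounding) and return the better solution. The main point is that both the reduction of Theorem~\ref{thm:main_reduction} and the cross-link rounding of Lemma~\ref{lem:cross-link_rounding} already work, or almost work, with weights, so one only needs to re-examine where unweightedness was genuinely used and absorb the resulting loss into the function $f(B)$.

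First I would revisit the reduction to $k$-wide instances. The round-or-cut framework in Section~\ref{sec:reduction} never truly needed unweighted costs except in two places: the definition of $x$-light/$x$-heavy cuts and the notion of big sets, both of which relied on the fact that a terminal forces at least one link into any solution, contributing cost $\ge 1$ to the lower bound. With costs in $[1,B]$, a terminal still forces cost $\ge 1$, so the amortization argument in Lemma~\ref{lem:round_and_cut_without_heavy} still goes through after rescaling the thresholds: use a modified "big" threshold $|C\cap T| > \Theta(B/\epsilon^2)$ and an analogous light-cut threshold $x(\delta_L(C)) c_{\max}\le \Theta(B/\epsilon)$, so that the $B$-factor slack between the per-terminal cost lower bound and the per-terminal link count lower bound is swallowed by choosing $k$ a sufficiently large constant depending on $B$ and $\epsilon$. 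Theorem~\ref{thm:covering_heavy} (heavy cut covering via rectangle hitting) is purely combinatorial and weight-agnostic; one just applies it to the rescaled set of heavy cuts and includes the resulting $L_H$ in the solution, at cost $\le B\cdot|L_H|\le \tfrac{\epsilon}{2}\,c(\OPT)$ after appropriate rescaling. Proposition~\ref{prop:combine_split_sol} bounds the \emph{number} of extra merge links by a constant, so their total cost is $\le$ const$\cdot B$, again amortizable against the $\Theta(B/\epsilon^2)$ lower bound on the big sub-instance. Lemma~\ref{lem:algo_unsplittable} needs the FPT oracle, which by Lemma~\ref{lem:fpt_terminals} already handles weighted CacAP, so no obstacle there. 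The upshot is a weighted analogue of Theorem~\ref{thm:main_reduction}: an $\alpha$-approximation for $O_{B,\epsilon}(1)$-wide weighted CacAP yields an $\alpha(1+\epsilon)$-approximation for weighted CacAP with costs in $[1,B]$.

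Next I would run the two backbone procedures on the $O(1)$-wide weighted instance. Lemma~\ref{lem:bundle_rounding} is already stated and proved for weighted CacAP and gives $c(F_1)\le c(\OPT)+c(\OPT\cap L_{\mathrm{cross}})$; Corollary~\ref{cor:cross-link_rounding} (from the weighted Lemma~\ref{lem:cross-link_rounding}) gives $c(F_2)\le c(\OPT)+c(\OPT\cap(L_{\mathrm{in}}\setminus L_{\mathrm{up}}))$. Returning the cheaper of the two already yields a $1.5$-approximation. To beat $1.5$ by a $B$-dependent margin, I would extend the stack analysis of Section~\ref{sec:stack_analysis} to the weighted case. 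This is where the genuine care is needed: the stack rounding was stated only for unweighted instances. The key structural facts — $L_{\mathrm{cross}}$-minimality forcing at most one sampled link per stack, the domination relation, Lemma~\ref{lem:removing_maintains_feasible_solution}, and the reduction to an edge-cover problem in Lemma~\ref{lem:edge_cover_reduction} — are all purely combinatorial and carry over verbatim. What changes is the accounting: instead of counting removed cross-links one wants to remove cross-links of \emph{large cost}, so the edge cover problem becomes a \emph{minimum weight} edge cover (still polynomial), and the lower bound on the gain must be phrased in terms of $x$-weighted cost. Since all costs lie in $[1,B]$, the expected cost saved is within a factor $B$ of the expected count saved, so one obtains an expected cost $\le \bigl(1+\alpha - \tfrac{b'}{B}\cdot(\text{stack gain})\bigr)c(\OPT)$ for a universal constant $b'$; combined with the $(2-\alpha)c(\OPT)$ bound from the CG-cut procedure and the same LP-based worst-case optimization as in \eqref{eq:final_optimization_problem}, this yields a ratio $\tfrac32 - f(B)$ with $f(B)=\Theta(1/B)$ (or better), which tends to $0$ as $B\to\infty$. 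The derandomization of Section~\ref{sec:derandomizing} is unaffected.

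The main obstacle I expect is making the weighted stack analysis quantitatively tight enough that $f(B)$ stays strictly positive for every fixed $B$: one must check that replacing "cardinality of removable set" by "cost of removable set" everywhere in Lemmas~\ref{lem:bound_simple_stack_problem}–\ref{lem:bound_stack_problem} still yields a positive improvement over $1.5$, since in the worst case a cheap-but-numerous set of cross-links may be dominated while the expensive ones are not. The clean way around this is to observe that for the purpose of beating $1.5$ it suffices to use the simple stack analysis of Section~\ref{sec:simple_stack_problem} (the $f(z)=z+e^{-z}-1$ bound), whose proof only uses properties~\ref{item:sampling_probability_link}–\ref{item:different_corlors_independent} of the sampling and Lemma~\ref{lem:delte_a_third} (remove a third of the dominated links) — both weight-free — together with the crude bound that each removed link has cost $\ge 1$ while each link of $\OPT$ has cost $\le B$. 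This already gives a margin $f(B)=\Omega(1/B)$ below $1.5$, proving Theorem~\ref{thm:main_weighted} without needing the delicate refined stack analysis at all; the refined analysis would only improve the rate at which $f(B)$ decays, which the theorem does not require.
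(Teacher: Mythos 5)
Your proposal is correct and follows essentially the same route as the paper: rescale the error parameter of the $k$-wide reduction by $B$ (so the extra links from heavy-cut covering and merging, though counted combinatorially, cost only an $\bar\epsilon$-fraction of $c(\OPT)$), invoke the already-weighted backbone procedures for the $1.5$ baseline, and observe that the removable cross-links still yield a cost saving at least a $1/B$-fraction of the unweighted count saving, giving $f(B)=\Omega(1/B)$. Your explicit remark that the simple stack analysis suffices matches the paper's (brief) argument exactly.
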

Only minor changes are necessary to our procedure to obtain Theorem~\ref{thm:main_weighted}, which we briefly discuss now. In the following, when using Landau notations like $O$, $\Theta$, and $\Omega$, we will always be explicit about the dependence on $B$ and $\epsilon$, even though $B$ and $\epsilon$ are (independent) constants.

Instead of reducing a general CacAP instance to $O(\sfrac{1}{\epsilon^2})$-wide ones as stated in Theorem~\ref{thm:main_reduction}, we obtain a reduction to $O(\sfrac{B^2}{\epsilon^2})$-wide instances.
This can be achieved with the precise same reduction technique as in the unweighted case, but by using an error parameter of $\epsilon = \sfrac{\bar{\epsilon}}{B}$ if we aim at obtaining an $\alpha (1+\bar{\epsilon})$-approximation for unrestricted CacAP given an $\alpha$-approximation for $64 (8+3 \frac{\bar{\epsilon}}{B})\frac{B^2}{\bar{\epsilon}^2}$-wide CacAP.
Indeed, given a point $x\in [0,1]^L$ and assuming that the procedure does not return a separating hyperplane for $x$, then the extra links incurred through the reduction are due to the $x$-heavy cut covering and the merging of solutions. The reduction shows that the number of these extra links is bounded by $O(\epsilon x(L))$. Because our weighted instance has link costs within $[1,B]$, this could lead to a total extra cost of up to $O(B \epsilon x(L)) = O(\bar{\epsilon} x(L)) = O(\bar{\epsilon}) \sum_{\ell\in L} c_\ell x_\ell$, which is no more than an $\bar{\epsilon}$-fraction of the total cost of $x$, as desired.

Because our backbone procedures work for weighted $O(1)$-wide instances, this already implies a $(1.5+\epsilon)$-approximation algorithm for weighted CacAP with bounded costs.\footnote{We note that analogous to the approach presented by Nutov~\cite{nutov_2017_tree} for TAP, this result can be extended to slightly super-constant values of $B$.} Finally, to obtain factors below $1.5$, we employ the same improvement of the bundle-based algorithm as for the unweighted case. The analysis shows that either the backbone procedures already provide a factor below $1.5$---and this immediately carries over to the weighted case as the backbone procedures work also for the weighted case---or there is a set of cross-links that can be deleted and which comprises a constant fraction of all links in the solution. However, also in the latter case we win a constant factor, which is possibly by a factor of $B$ lower than in the unweighted case because the deleted cross-links may only have a cost of $1$ whereas the remaining links may all have a cost of up to $B$. This leads to the claimed result.
\section{Fixed-parameter tractability of weighted CacAP}\label{sec:fpt}

In this section we show why Lemma~\ref{lem:fpt_terminals} holds, which implies that weighted CacAP is fixed-parameter tractable when parameterized by the number of terminals. This result follows readily by combining a reduction of Basavaraju, Fomin, Golovach, Misra, Ramanujan, and Saurabh~\cite{basavaraju_2014_parameterized} from CacAP to Steiner Tree and a slight generalization of the well-known Dreyfus-Wagner algorithm for Steiner Tree~\cite{dreyfus_1971_steiner} to Node-Weighted Steiner Tree, which was presented by Buchanen, Wang, and Butenko~\cite{buchanan_2017_algorithms}.

We start by discussing the reduction from Cactus Augmentation to Node-Weighted Steiner Tree shown in~\cite{basavaraju_2014_parameterized}, which is based on an elegant characterization of feasible solutions to a CacAP instance.
\begin{lemma}[\cite{basavaraju_2014_parameterized}]\label{lem:red_to_steiner}
Let $(G=(V,E), L)$ be a CacAP instance, and let $T\subseteq V$ be the degree $2$ vertices (terminals) of $G$. Then one can efficiently construct a graph $H=(W,U)$ with vertex set $W=T\cup L$ such that a set $F\subseteq L$ is a CacAP solution if and only if $H[F\cup T]$ is connected.%
\footnote{We recall that $H[F\cup T]$ denotes the subgraph of $H$ induced by the vertices $F\cup T$ of $H$.}
\end{lemma}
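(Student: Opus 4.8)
\textbf{Proof plan for Lemma~\ref{lem:red_to_steiner}.}
The plan is to give a direct combinatorial construction of $H$ from the cactus structure of $G$ and then verify the stated equivalence. First I would recall the characterization of $2$-cuts of a cactus: every $2$-cut $C\in\mathcal{C}_G$ arises either from removing the two edges of a single cycle of the cactus (splitting off a consecutive arc of that cycle) or, in the degenerate case, from a single ``bridge-like'' pair of parallel edges. Crucially, a set $F\subseteq L$ is feasible if and only if it covers all $2$-cuts, and by Lemma~\ref{lem:descendants_define_2_cut} (after rooting) the relevant minimal cuts are exactly the sets of descendants of vertices that are ``tops'' of consecutive runs along cactus cycles. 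The key structural fact I would isolate is that the inclusion-wise minimal $2$-cuts of $G$ correspond one-to-one with the terminals: each minimal $2$-cut contains exactly one terminal (Lemma~\ref{lem:terminal_descendant_exists} gives existence, and minimality gives uniqueness). This lets me think of feasibility as ``connecting up'' the terminals.

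Next I would define the graph $H=(W,U)$ with $W=T\cup L$. The edges of $U$ are of two kinds. First, for each link $\ell=\{v,w\}\in L$, I add an edge in $H$ between the vertex $\ell$ and each terminal $t$ such that $t$ lies ``below'' an endpoint of $\ell$ in the sense relevant to covering — concretely, $t\in T$ is adjacent to $\ell$ in $H$ iff $t$ is the unique terminal in some minimal $2$-cut that $\ell$ covers, or more robustly iff one endpoint of $\ell$ is an ancestor of $t$ (with respect to a fixed root) on one side, in a way symmetric to both endpoints. Second, I add an edge between two link-vertices $\ell,\ell'\in L$ in $H$ iff $\ell$ and $\ell'$ ``overlap'' in their covering patterns, i.e.\ the sets of minimal $2$-cuts they cover, viewed along the cactus cycles, share a cut or are nested/adjacent so that together they cover an interval — this is precisely the domination/shadow relationship used in Section~\ref{sec:stack_analysis} and Lemma~\ref{lem:simple_observation_cross_link_domination}. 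The point of this edge set is that connectivity in $H[F\cup T]$ should encode exactly that the cuts covered by the links of $F$ ``chain together'' to cover all of $\mathcal{C}_G$, and that every terminal is reached by at least one link of $F$.

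The heart of the argument is the two implications. For the forward direction, suppose $F$ is a CacAP solution; I would show $H[F\cup T]$ is connected by induction on the cactus structure (or on $|\mathcal{C}_G|$): every terminal is covered by some link of $F$ (so no terminal is isolated), and if two terminals $t,t'$ were in different components of $H[F\cup T]$, the ``boundary'' between the links reaching $t$ and those reaching $t'$ would expose an uncovered $2$-cut of $G$ separating the corresponding regions, contradicting feasibility — here I'd use Lemma~\ref{lem:crossing_2_edge_cuts} and the chain structure of crossing $2$-cuts (as in the proof of Proposition~\ref{prop:combine_split_sol}, where uncovered cuts form a chain). For the converse, suppose $H[F\cup T]$ is connected; take any $2$-cut $C\in\mathcal{C}_G$ and the (unique) terminal $t\in C\cap T$ it separates from the rest; connectivity gives a path in $H[F\cup T]$ from $t$ to a terminal outside $C$, and walking along this path the first moment it ``leaves'' $C$ must be through a link-vertex $\ell\in F$ that covers $C$ — using that the $H$-adjacencies were defined precisely to track which cuts each link covers. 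The main obstacle I expect is getting the definition of the edge set $U$ exactly right so that both implications go through cleanly: in particular ensuring that the link-to-link edges capture ``covering an interval of cuts together'' without accidentally connecting links that jointly leave some cut uncovered, and handling the degenerate $2$-cuts (parallel-edge / length-two cycles) and the possibility that the root/center is itself a terminal. Once the adjacency definition is pinned down, both directions reduce to the cut-chain lemmas already available in the excerpt, and the construction is clearly polynomial-time since $|\mathcal{C}_G|=O(|E|^2)$ and all adjacencies are checkable by inspecting ancestry and cut membership.
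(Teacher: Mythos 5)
The paper does not prove this lemma at all: it is imported verbatim from Basavaraju et al.~\cite{basavaraju_2014_parameterized} as a black box, so there is no in-paper argument to compare yours against. Judged on its own terms, your proposal has a genuine gap, and it is located exactly where the content of the lemma lies: you never actually define the edge set $U$ of $H$. You offer two non-equivalent candidates for the link--terminal adjacency (``$t$ is the unique terminal in some minimal $2$-cut that $\ell$ covers'', which after your own reduction of minimal $2$-cuts to singleton terminals just means $t$ is an endpoint of $\ell$; versus ``one endpoint of $\ell$ is an ancestor of $t$'', which is a strictly larger edge set and would make the backward implication fail, since extra edges can connect $H[F\cup T]$ even when $F$ leaves some non-minimal $2$-cut uncovered), and your link--link adjacency is described only as ``overlap in covering patterns'' without a checkable condition. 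You then explicitly flag that ``getting the definition of the edge set $U$ exactly right'' is the main obstacle --- but constructing that $U$ \emph{is} the lemma; everything else is routine verification.

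The verification sketches also lean on the undefined construction. In the forward direction you assert that a disconnection of $H[F\cup T]$ ``exposes an uncovered $2$-cut'', but the connected components of $H[F\cup T]$ are sets of links and terminals, not vertex sets of $G$, and producing a $2$-cut of $G$ from such a partition is precisely the nontrivial step of the characterization in~\cite{basavaraju_2014_parameterized}; Lemma~\ref{lem:crossing_2_edge_cuts} and the chain structure from Proposition~\ref{prop:combine_split_sol} do not do this for you without the concrete adjacency rule. (Your structural preliminaries --- that minimal $2$-cuts are singleton terminals, via Lemmas~\ref{lem:descendants_define_2_cut} and~\ref{lem:terminal_descendant_exists} --- are fine.) Since the lemma is cited rather than proved in this paper, the honest fix is either to cite it as the authors do, or to reproduce the actual construction of~\cite{basavaraju_2014_parameterized} (terminal--link adjacency via endpoints, link--link adjacency via a crossing condition on the projections of the links onto the cycles of the cactus) and verify both directions against that fixed definition.
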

Hence, Lemma~\ref{lem:red_to_steiner} reduces a weighted CacAP problem on an instance $(G=(V,E),L,c)$, where $c\in \mathbb{R}_{\geq 0}^L$ are the link costs, to the problem of finding a set $F\subseteq L$ of minimum cost $c(F)$ such that $H[F\cup T]$ is connected. This problem is known as the Node-Weighted Steiner Tree problem, and an adaptation of the well-known Dreyfus-Wagner dynamic programming algorithm~\cite{dreyfus_1971_steiner}, as presented in~\cite{buchanan_2017_algorithms}, allows for solving it in the same time as the Dreyfus-Wagner algorithm needs to solve Steiner Tree, which is $3^{|T|} \poly(|W|) = 3^{|T|} \poly(|V|)$. This implies Lemma~\ref{lem:fpt_terminals}.

\bibliographystyle{alpha}

\begin{thebibliography}{CKKK08}

\bibitem[Adj18]{adjiashvili_2018_beating}
D.~Adjiashvili.
\newblock Beating approximation factor two for weighted tree augmentation with
  bounded costs.
\newblock {\em ACM Transactions on Algorithms}, 15(2):19:1--19:26, 2018.

\bibitem[BFG{\etalchar{+}}14]{basavaraju_2014_parameterized}
M.~Basavaraju, F.~V. Fomin, P.~Golovach, P.~Misra, M.~S. Ramanujan, and
  S.~Saurabh.
\newblock Parameterized algorithms to preserve connectivity.
\newblock In {\em Proceedings of 41st International Colloquium on Automata,
  Languages, and Programming (ICALP)}, pages 800--811, 2014.

\bibitem[BGJA20]{byrka_2020_breaching}
J.~Byrka, F.~Grandoni, and A.~Jabal~Ameli.
\newblock Breaching the $2$-approximation barrier for connectivity
  augmentation: a reduction to {S}teiner tree.
\newblock In {\em Proceedings of 52nd ACM Symposium on Theory of Computing
  (STOC)}, 2020.

\bibitem[BGRS13]{byrka_2013_steiner}
J.~Byrka, F.~Grandoni, T.~Rothvo\ss, and L.~Sanit\`a.
\newblock Steiner tree approximation via iterative randomized rounding.
\newblock {\em Journal of the ACM}, 60(1):6:1--6:33, 2013.

\bibitem[BWB17]{buchanan_2017_algorithms}
A.~Buchanan, Y.~Wang, and S.~Butenko.
\newblock Algorithms for node-weighted {S}teiner tree and maximum-weight
  connected subgraph.
\newblock {\em Networks}, 72(2):238--248, 2017.

\bibitem[CCDZ20]{cheriyan_2020_improved}
J.~Cheriyan, R.~Cummings, J.~Dippel, and J.~Zhu.
\newblock An improved approximation algorithm for the matching augmentation
  problem.
\newblock \url{https://arxiv.org/abs/2007.11559}, 2020.

\bibitem[CDG{\etalchar{+}}20]{cheriyan_2020_matching}
J.~Cheriyan, J.~Dippel, F.~Grandoni, A.~Khan, and V.~V. Narayan.
\newblock The matching augmentation problem: a $\frac{7}{4}$-approximation
  algorithm.
\newblock {\em Mathematical Programming}, 182:315--354, 2020.

\bibitem[CG17]{cheriyan_2018_approximating_b}
J.~Cheriyan and Z.~Gao.
\newblock Approximating (unweighted) tree augmentation via lift-and-project,
  part {II}.
\newblock {\em Algorithmica}, 2017.

\bibitem[CG18]{cheriyan_2018_approximating_a}
J.~Cheriyan and Z.~Gao.
\newblock Approximating (unweighted) tree augmentation via lift-and-project,
  part {I}: Stemless {TAP}.
\newblock {\em Algorithmica}, 2:530--559, 2018.

\bibitem[CJR99]{cheriyan_1992_2-coverings}
J.~Cheriyan, T.~Jord{\'a}n, and R.~Ravi.
\newblock On 2-coverings and 2-packings of laminar families.
\newblock In {\em Proceedings of 7th Annual European Symposium on Algorithms
  (ESA)}, pages 510--520, 1999.

\bibitem[CKKK08]{cheriyan_2008_integrality}
J.~Cheriyan, H.~Karloff, R.~Khandekar, and J.~K{\"o}nemann.
\newblock On the integrality ratio for tree augmentation.
\newblock {\em Operations Research Letters}, 36(4):399--401, 2008.

\bibitem[CN13]{cohen_2013_approximation}
N.~Cohen and Z.~Nutov.
\newblock A $(1+ \ln 2)$-approximation algorithm for minimum-cost
  $2$-edge-connectivity augmentation of trees with constant radius.
\newblock {\em Theoretical Computer Science}, 489:67--74, 2013.

\bibitem[DKL76]{dinitz_1976_structure}
E.~A. Dinitz, A.~V. Karzanov, and M.~V. Lomonosov.
\newblock On the structure of the system of minimum edge cuts of a graph.
\newblock {\em Studies in Discrete Optimization}, pages 290--306, 1976.

\bibitem[DW71]{dreyfus_1971_steiner}
S.~E. Dreyfus and R.~A. Wagner.
\newblock The {S}teiner problem in graphs.
\newblock {\em Networks}, 1(3):195--207, 1971.

\bibitem[EFKN09]{even_2009_approximation}
G.~Even, J.~Feldman, G.~Kortsarz, and Z.~Nutov.
\newblock A 1.8 approximation algorithm for augmenting edge-connectivity of a
  graph from 1 to 2.
\newblock {\em ACM Transactions on Algorithms}, 5(2):21:1--21:17, 2009.

\bibitem[FGKS18]{fiorini_2018_approximating}
S.~Fiorini, M.~Gro\ss, J.~K\"{o}nemann, and L.~Sanit\`{a}.
\newblock Approximating weighted tree augmentation via {C}hv\'{a}tal-{G}omory
  {C}uts.
\newblock In {\em Proceedings of the 29th Annual ACM-SIAM Symposium on Discrete
  Algorithms (SODA)}, pages 817--831, 2018.

\bibitem[FJ81]{frederickson_1981_approximating}
G.~N. Frederickson and J.~J{\'a}J{\'a}.
\newblock Approximation algorithms for several graph augmentation problems.
\newblock {\em SIAM Journal on Computing}, 10(2):270--283, 1981.

\bibitem[GGAS19]{galvez_2019_cycle}
W.~G\'alvez, F.~Grandoni, A.~J. Ameli, and K.~Sornat.
\newblock On the cycle augmentation problem: Hardness and approximation
  algorithms.
\newblock In {\em Proceedings of the 17th International Workshop on
  Approximation and Online Algorithms (WAOA)}, pages 138--153, 2019.

\bibitem[GGP{\etalchar{+}}94]{goemans_1994_improved}
M.~X. Goemans, A.~V. Goldberg, S.~Plotkin, D.~B. Shmoys, \'{E}. Tardos, and
  D.~P. Williamson.
\newblock Improved approximation algorithms for network design problems.
\newblock In {\em Proceedings of the 5th Annual ACM-SIAM Symposium on Discrete
  Algorithms (SODA)}, pages 223--232, 1994.

\bibitem[GKZ18a]{grandoni_2018_improved}
F.~Grandoni, C.~Kalaitzis, and R.~Zenklusen.
\newblock Improved approximation for tree augmentation: Saving by rewiring.
\newblock In {\em Proceedings of 50th ACM Symposium on Theory of Computing
  (STOC)}, pages 632--645, 2018.

\bibitem[GKZ18b]{grandoni_2018_improved_arxiv}
F.~Grandoni, C.~Kalaitzis, and R.~Zenklusen.
\newblock Improved approximations for tree augmentation: Saving by rewiring.
\newblock \url{https://arxiv.org/abs/1804.02242}, 2018.

\bibitem[GLS93]{groetschel_1993_geometric}
M.~Gr{\"o}tschel, L.~Lov{\'a}sz, and A.~Schrijver.
\newblock {\em Geometric Algorithms and Combinatorial Optimization}, volume~2
  of {\em Algorithms and Combinatorics}.
\newblock Springer, second corrected edition, 1993.

\bibitem[GORZ12]{goemans_2012_matroids}
M.~X. Goemans, N.~Olver, T.~Rothvo{\ss}, and R.~Zenklusen.
\newblock Matroids and integrality gaps for hypergraphic steiner tree
  relaxations.
\newblock In {\em Proceedings of 44th ACM Symposium on Theory of Computing
  (STOC)}, pages 1161--1175, 2012.

\bibitem[GW95]{goemans_1995_general}
M.~X. Goemans and D.~P. Williamson.
\newblock A general approximation technique for constrained forest problems.
\newblock {\em SIAM Journal on Computing}, 24(2):296--317, 1995.

\bibitem[HVV19]{hunkenschroder_2019_approximation}
C.~Hunkenschr\"oder, S.~Vempala, and A.~Vetta.
\newblock A $4/3$-approximation algorithm for the minimum $2$-edge connected
  subgraph problem.
\newblock {\em ACM Transactions on Algorithms}, 15(4):55:1--55:28, 2019.

\bibitem[Jai01]{jain_2001_factor}
K.~Jain.
\newblock A factor 2 approximation algorithm for the generalized {S}teiner
  network problem.
\newblock {\em Combinatorica}, 21:39--60, 2001.

\bibitem[KN16]{kortsarz_2016_simplified}
G.~Kortsarz and Z.~Nutov.
\newblock A simplified 1.5-approximation algorithm for augmenting
  edge-connectivity of a graph from 1 to 2.
\newblock {\em ACM Transactions on Algorithms}, 12(2):23:1--23:20, 2016.

\bibitem[KN18]{kortsarz_2018_lp-relaxations}
G.~Kortsarz and Z.~Nutov.
\newblock {LP}-relaxations for tree augmentation.
\newblock {\em Discrete Applied Mathematics}, 239:94--105, 2018.

\bibitem[KT93]{khuller_1993_approximation}
S.~Khuller and R.~Thurimella.
\newblock Approximation algorithms for graph augmentation.
\newblock {\em Journal of Algorithms}, 14(2):214--225, 1993.

\bibitem[KV94]{khuller_1994_biconnectivity}
S.~Khuller and U.~Vishkin.
\newblock Biconnectivity approximations and graph carvings.
\newblock {\em Journal of the ACM}, 41(2):214--235, 1994.

\bibitem[LRS11]{lau_2011_iterative}
L.~C. Lau, R.~Ravi, and M.~Singh.
\newblock {\em Iterative Methods in Combinatorial Optimization}.
\newblock Cambridge University Press, New York, NY, USA, 1st edition, 2011.

\bibitem[Nag03]{nagamochi_2003_approximation}
H.~Nagamochi.
\newblock An approximation for finding a smallest $2$-edge-connected subgraph
  containing a specified spanning tree.
\newblock {\em Discrete Applied Mathematics}, 126(1):83--113, 2003.

\bibitem[Nut17]{nutov_2017_tree}
Z.~Nutov.
\newblock On the tree augmentation problem.
\newblock In {\em Proceedings of 25th Annual Symposium on Algorithms (ESA)},
  pages 61:1--61:14, 2017.

\bibitem[Nut20]{nutov_2020_approximation}
Z.~Nutov.
\newblock Approximation algorithms for connectivity augmentation problems,
  2020.
\newblock \url{https://arxiv.org/abs/2009.13257}.

\bibitem[SV14]{sebo_2014_shorter}
A.~Seb{\H{o}} and J.~Vygen.
\newblock Shorter tours by nicer ears: 7/5-approximation for the graph-{TSP},
  3/2 for the path version, and 4/3 for two-edge-connected subgraphs.
\newblock {\em Combinatorica}, 34(5):597--629, 2014.

\bibitem[TZ21]{traub_2021_better}
V.~Traub and R.~Zenklusen.
\newblock A better-than-2 approximation algorithm for weighted tree
  augmentation, 2021.
\newblock \url{https://arxiv.org/abs/2104.07114}.

\bibitem[WS11]{williamson_2011_design}
D.~P. Williamson and D.~B. Shmoys.
\newblock {\em The Design of Approximation Algorithms}.
\newblock Cambridge University Press, 2011.

\end{thebibliography}
\newcommand{\etalchar}[1]{$^{#1}$}

\end{document}